\title{DeepSec: Deciding Equivalence Properties for Security Protocols --- Improved theory and practice}
\definecolor{light}{RGB}{230,230,230}
\newcommand{\light}[1]{\cellcolor{light}#1}
\newcommand\problemdescr[3][]{
  \ifthenelse{\isempty{#1}}
  {
    \begin{itemize}[itemsep=0pt]
      \item[\(\triangleright\)] \textsc{Input:} {#2}
      \item[\(\triangleright\)] \textsc{Question:} {#3}
    \end{itemize}
  }
  {

    \noindent #1:
    \begin{itemize}[topsep=0pt,itemsep=0pt]
      \item[\(\triangleright\)] \textsc{Input:} {#2}
      \item[\(\triangleright\)] \textsc{Question:} {#3}
    \end{itemize}
  }
}
\newcommand{\mvs}{\fontfamily{mvs}\fontencoding{U}%
\fontseries{m}\fontshape{n}\selectfont}
\newcommand{\mvchr}[1]{{\mvs\char#1}}
\newcommand\Lightning{\mvchr{69}}
\newcommand{\downsquigarrow}{\mathbin{\text{\rotatebox[origin=c]{270}{\(\rightsquigarrow\)}}}}
  \newcommand{\xRightarrow}[2][]{\ext@arrow 0359\Rightarrowfill@{#1}{#2}}
  \newcommand{\xleftrightarrow}[2][]{\ext@arrow 0359\leftrightarrowfill@{#1}{#2}}
\newcommand{\eqdef}{\triangleq}
\newcommand{\eint}[2]{\llbracket #1,#2\rrbracket}
\newcommand{\N}{\mathbb N}
\newcommand{\R}{\mathcal R}
\let\emptyset\varnothing
\let\epsilon\varepsilon
\let\rho\varrho
\newcommand\complexityfont{\small \sffamily}
\newcommand\nexp{{\complexityfont NEXP}\xspace}
\newcommand\np{{\complexityfont NP}\xspace}
\newcommand\logspace{{\complexityfont LOGSPACE}\xspace}
\newcommand\pspace{{\complexityfont PSPACE}\xspace}
\newcommand\ptime{{\complexityfont PTIME}\xspace}
\newcommand\ctime{{\complexityfont TIME}\xspace}
\newcommand\cspace{{\complexityfont SPACE}\xspace}
\newcommand\polyh[1]{\(\Pi_{#1}\)\xspace}
\newcommand\problemfont{\small \scshape}
\newcommand\sat{{\problemfont SAT}\xspace}
\newcommand\sucsat{{\problemfont SuccinctSAT}\xspace}
\newcommand\qbf[1][]{{\problemfont QBF}\(_{#1}\)\xspace}
\newcommand\StaticEquiv{{\problemfont StatEq}\xspace}
\newcommand\TraceEquiv{{\problemfont TraceEq}\xspace}
\newcommand\TraceInclus{{\problemfont TraceIncl}\xspace}
\newcommand\Bisimilarity{{\problemfont Bisimilarity}\xspace}
\newcommand\Similarity{{\problemfont Similarity}\xspace}
\newcommand\Simulation{{\problemfont Simulation}\xspace}
\newcommand\toolfont{\scshape}
\newcommand\proverif{{\toolfont ProVerif}\xspace}
\newcommand\tamarin{{\toolfont Tamarin}\xspace}
\newcommand\deepsec{{\toolfont DeepSec}\xspace}
\newcommand\akiss{{\toolfont Akiss}\xspace}
\newcommand\spec{{\toolfont Spec}\xspace}
\newcommand\satequiv{{\toolfont SatEquiv}\xspace}
\newcommand\adecs{{\toolfont ADECS}\xspace}
\newcommand\apte{{\toolfont Apte}\xspace}
\newcommand\avispa{{\toolfont AVISPA}\xspace}
\newcommand\scyther{{\toolfont Scyther}\xspace}
\newcommand\ukano{{\toolfont Ukano}\xspace}
\newcommand\senc{\mathsf{senc}}
\newcommand\sdec{\mathsf{sdec}}
\newcommand\aenc{\mathsf{aenc}}
\newcommand\adec{\mathsf{adec}}
\newcommand\testaenc{\mathsf{test\_aenc}}
\newcommand\getkey{\mathsf{get\_key}}
\newcommand\rsenc{\senc}
\newcommand\rsdec{\sdec}
\newcommand\raenc{\aenc}
\newcommand\radec{\adec}
\newcommand\pk{\mathsf{pk}}
\newcommand\pks{\mathit{pk}}
\newcommand\sk{\mathit{sk}}
\newcommand\sks{\mathit{sk}}
\newcommand\vpk{\mathsf{vpk}}
\newcommand\sign{\mathsf{sign}}
\newcommand\checksign{\mathsf{verify}}
\newcommand\fst{\mathsf {fst}}
\newcommand\snd{\mathsf {snd}}
\newcommand\okfun{\mathsf{ok}}
\newcommand\0{\mathsf 0}
\newcommand\1{\mathsf 1}
\newcommand\ffun{\mathsf{f}}
\newcommand\gfun{\mathsf{g}}
\newcommand\hfun{\mathsf{h}}
\newcommand\otherfun{\mathsf{getOther}}
\newcommand\args[2]{\overrightarrow{#1}^{#2}}
\newcommand\call[1]{{\tt Goto}\left\langle#1\right\rangle}
\newcommand\getEnv[1]{{\tt GetEnv}\left\langle#1\right\rangle}
\newcommand{\msg}{\mathsf {msg}}
\newcommand\norm{\!\!\downarrow}
\newcommand\ax{\mathsf{ax}}
\newcommand\InP[2]{{#1}(#2)}
\newcommand\OutP[2]{\overline {#1}\langle #2 \rangle}
\newcommand\LetP{\mathsf{let}}
\newcommand\IfP{\mathsf{if}}
\newcommand\ThenP{\mathsf{then}}
\newcommand\ElseP{\mathsf{else}}
\newcommand\BangP[1][]{!^{\scriptscriptstyle #1}}
\newcommand\multi[1]{\{\!\!\{ #1\}\!\!\}}
\newcommand\pair[1]{\langle #1\rangle}
\newcommand\sig{\mathcal {F}}
\newcommand\sigc{\sig_{\mathsf {c}}}
\newcommand\sigd{\sig_{\mathsf {d}}}
\newcommand\termset{\mathcal T}
\newcommand\recipeset{\termset^2}
\newcommand\Nall{\mathcal N}
\renewcommand\P{\mathcal P}
\newcommand\Q{\mathcal Q}
\newcommand\X[1][]{\mathcal X^{#1}}
\newcommand\Xfst{\X[1]}
\newcommand\Xsndi[2][\leqslant]{\X[2]_{\raisebox{1pt}{\(\scriptscriptstyle #1\)} #2}}
\newcommand\AX{\mathcal {AX}}
\newcommand\dom{\mathit{dom}}
\newcommand\im{\mathit{img}}
\newcommand\id{\mathit{id}}
\newcommand\rootf{\mathit{root}}
\newcommand\ptree{\mathsf{PTree}}
\newcommand\poslab{\mathsf{pos}}
\newcommand\neglab{\mathsf{neg}}
\newcommand\witness{\mathsf{w}}
\newcommand\fsol{f_{\mathsf{sol}}}
\newcommand\inp{\mathsf{in}}
\newcommand\outp{\mathsf{out}}
\newcommand\process[2]{(#1,#2)}
\newcommand\frameh[1]{\Phi_{#1}}
\newcommand\frameb[1]{\Phi_{#1}^\B}
\newcommand\framen[1]{\Phi_{#1}^{\mathsf N}}
\newcommand\tsize[1]{\left|#1\right|}
\newcommand\dagsize[1]{\left|#1\right|_{\mathsf{dag}}}
\newcommand\subterms[1][]{\mathit{st}^{#1}}
\newcommand\strsubterms[1][]{\mathit{sst}^{#1}}
\newcommand\vars[1][]{\mathit{vars}^{#1}}
\newcommand\varsfst{\vars[1]}
\newcommand\axioms{\mathit{axioms}}
\newcommand\names{\mathit{names}}
\newcommand\simpStep[1]{\mathrel{\xrightarrow{#1}\!\!\simplnorm}}
\newcommand\SimpStep[1]{\mathrel{\xRightarrow{#1}\!\!\simplnorm}}
\newcommand\cstep[1]{\xrightarrow{#1}}
\newcommand\Cstep[1]{\xRightarrow{#1}}
\newcommand\silentstep\rightsquigarrow
\newcommand\silent{\overset\star\rightsquigarrow}
\newcommand\silentrev{\overset\star\leftsquigarrow}
\newcommand\silentpistep{\silentstep_{\mathsf{pi}}}
\newcommand\silentpi{\overset\star\silentstep_{\mathsf{pi}}}
\newcommand\procnorm[1]{{#1}_{\downarrow_{\mathsf{pi}}}}
\newcommand\tr{\mathsf {tr}}
\newcommand\StatEq{\sim}
\newcommand\LabBis{\approx_b}
\newcommand\Simu{\sqsubseteq_s}
\newcommand\Simuinv{\sqsupseteq_s}
\newcommand\Simi{\approx_s}
\newcommand\TraceEq{\approx_t}
\newcommand\TraceIncl{\sqsubseteq_t}
\newcommand\A{\mathcal A}
\newcommand\simpl\rightsquigarrow
\newcommand\simplnorm{\mbox{\scalebox{0.9}{$\downsquigarrow$}}}
\newcommand\val{\mathsf{val}}
\newcommand\TestNode{\mathsf{TestNode}}
\newcommand\TestBool{\mathsf{TestBool}}
\newcommand\Node{\mathsf{Node}}
\newcommand\hNode{\hfun_{\mathsf{N}}}
\newcommand\hBool{\hfun_{\mathbb{B}}}
\newcommand\invN{\mathsf{Test_N}}
\newcommand\invB{\mathsf{Test_B}}
\newcommand\recpos[2]{{#1}_{|#2}}
\newcommand\CheckTree[1]{\mbox{\texttt{CheckTree}}(#1)}
\newcommand\CheckSat[1]{\mbox{\texttt{CheckSat}}(#1)}
\newcommand\guessBinary[1]{{\tt Choose}(#1)}
\newcommand\evalFormula[2]{#1 \leftarrow #2}
\newcommand\sem[1]{\left\llbracket #1\right\rrbracket}
\newcommand\B{\mathbb B}
\newcommand\C{\mathcal C}
\newcommand\D{\mathcal D}
\newcommand\dedfact{\vdash^{\scriptscriptstyle ?}}
\newcommand\ndedfact{\nvdash^{\scriptscriptstyle ?}}
\newcommand\eqs{=^?} 
\newcommand\neqs{\neq^?}
\newcommand\eqf{=^?_f}
\newcommand\maxarity[1]{\#(#1)}
\newcommand\mgu{\mathit{mgu}}
\newcommand\mguR[1][\R]{\mgu_{#1}}
\newcommand\eqnset{\mathcal E}
\newcommand{\replacepos}[3]{#1[#3]_{#2}}
\newcommand\Df{\mathsf{D}}
\newcommand\Eq{\mathsf{E}}
\newcommand\Eqn[1][]{\Eq^{#1}}
\newcommand\Eqfst{\Eqn[1]}
\newcommand\Eqsnd{\Eqn[2]}
\newcommand\Solved{\mathsf{K}}
\newcommand\USolved{\mathsf{F}}
\newcommand\equality[1]{#1_{=}}
\newcommand\bisim{\mathrel{\mathcal{R}}}
\newcommand\disim{\mathrel{\mathcal{S}}}
\newcommand\cs{Constraint Solving}
\newcommand\mgs[1][]{\mathit{mgs}^{#1}}
\newcommand\Sol[1][]{\mathit{Sol}^{#1}}
\newcommand\freshlab{\mathit{fresh}}
\newcommand\sstep[1]{\xrightarrow{#1}_{\mathsf{s}}}
\newcommand\Sstep[1]{\xRightarrow{#1}_{\mathsf{s}}}
\newcommand\tstep[1]{\xrightarrow{#1}_{T}}
\newcommand\Tstep[1]{\xRightarrow{#1}_{T}}
\renewcommand\S{\mathbb {S}}
\newcommand\clause[3][]{
  \ifthenelse{\isempty{#1}}
    {#2 \Leftarrow #3}
    {\forall #1.\ #2 \Leftarrow #3}
}
\newcommand\conseq{\mathsf{Conseq}}
\newcommand\stc{\mathit{st}_{\mathsf{c}}}
\newcommand\getpos[2]{{#1}_{|#2}}
\newcommand\recipes{\mathsf{R}} 
\newcommand\quanti[2]{#1\text{:}#2}
\newcommand\FApply[3]{#2\text{:}(#1,#3)}
\newcommand\CApply[2]{#2\text{:}#1}
\newcommand\Fhyp{\mathsf{hyp}}
\newcommand\receq{\simeq_{\mathsf{r}}}
\newcommand\RewF[3]{\mathsf{RewF}(#1,#2,#3)}
\newcommand\Skel[2]{\mathsf{Skel}(#1,#2)}
\newcommand\measureNC{\mathit{unused}^1}
\newcommand\terms[1][]{T^{#1}}
\newcommand\compon[1][]{\mathit{M}_{#1}}
\newcommand\setSDF{\mathsf{set}_{\Solved}}
\newcommand\setRew{\mathsf{set}_{\textsc{Rew}}}
\newcommand\setEq{\mathsf{set}_{\textsc{Eq}}}
\newcommand\defcomp[2]{%
	\[\tag{Meas. #1}
	\compon[#1](\Gamma) = #2\]
}
\newcommand\CompatibleSubs{\mathsf{CompatSubs}}
\newcommand\predlab{\mathsf{Inv}}
\newcommand\PredAll{\predlab_{\mathit{all}}}
\newcommand\PredWellFormed{\predlab_{\mathit{wf}}}
\newcommand\PredCorrectFormula{\predlab_{\mathit{sound}}}
\newcommand\PredCompleteFormula{\predlab_{\mathit{comp}}}
\newcommand\PredConseq{\predlab_{\mathit{satur}}}
\newcommand\PredSymb{\predlab_{\mathit{sol}}}
\newcommand\PredStruct{\predlab_{\mathit{str}}}
\newcommand\simplifstep{\overset {\mathsf{simpl}} \simpl}
\newcommand\normstep{\overset {\mathsf{norm}} \simpl}
\newcommand\vectstep{\overset {\mathsf{vect}} \simpl}
\newcommand\satstep{\xrightarrow{\ref{rule:satisfiable}}}
\newcommand\eqstep{\xrightarrow{\ref{rule:equality}}}
\newcommand\rewstep{\xrightarrow{\ref{rule:rewrite}}}
\newcommand\returnkw{\mathsf{return}}
\newcommand\algocomment[1]{{\color{gray}\textsf{/\!/\ #1}}}
\newcommand{\caseitem}[1]{\begin{itemize}
  \item[\(\triangleright\)] #1 
\end{itemize}}
\definecolor{greenpigment}{rgb}{0.0,0.65,0.31}
\definecolor{cadiumred}{rgb}{0.89,0.0,0.13}
\definecolor{cadiumorange}{rgb}{0.93,0.53,0.18}
\definecolor{camouflagegreen}{rgb}{0.47,0.53,0.42}
\definecolor{ceruleanblue}{rgb}{0.16, 0.32, 0.75}
\newcommand\warningsign{{\color{ceruleanblue} \raisebox{-1pt}{\scalebox{1.2}{\Lightning}}}}
\newcommand{\outoftime}{\raisebox{-2pt}{\scalebox{0.8}{\StopWatchEnd}}}
\newcommand{\outofmemory}{\raisebox{-1pt}{\scalebox{0.35}{%
	\tikz[baseline=-2ex]{
	\node[color=cadiumred,fill = white, shape=regular polygon, minimum size = 1cm, regular polygon sides=8, inner sep=0pt, draw, thick] at (0,0) {};
	\node[color=cadiumred,shape=regular polygon, text = cadiumred, regular polygon sides=8, inner sep=0pt, draw, thick] at (0,0) {\small\textbf{OM}};
	}}}}
\newcommand{\attacksimple}{\warningsign}
\newcommand{\unable}{{\color{cadiumred}\ding{55}}}
\newcommand{\verified}{{\color{greenpigment}\ding{51}}}
\newenvironment{bigproof}[1][\proofname]
  {\begin{proof}[#1]~

  \noindent}
  {\end{proof}}
\providecommand{\leftsquigarrow}{%
\mathrel{\mathpalette\reflect@squig\relax}%
}
\newcommand{\reflect@squig}[2]{%
\reflectbox{$\m@th#1\rightsquigarrow$}%
}
\newcommand\step[2][]{{\noindent{#2} {#1}:\xspace}}
\newcommand\case[2][]{\step[#1]{\bf case #2}}
\newcommand{\sbt}{\,\begin{picture}(-1,1)(-1,-2.5)\circle*{2.5}\end{picture}\ \ }
\begin{document}

\maketitle

\begin{abstract}
  Automated verification has become an essential part in the security evaluation of cryptographic protocols.
  In this context privacy-type properties are often modelled by indistinguishability statements, expressed as behavioural equivalences in a process calculus.
  In this article we contribute both to the theory and practice of this verification problem.
  We establish new complexity results for static equivalence, trace equivalence and labelled bisimilarity and provide a decision procedure for these equivalences in the case of a bounded number of protocol sessions.
  Our procedure is the first to decide trace equivalence and labelled bisimilarity exactly for a large variety of cryptographic primitives---those that can be represented by a subterm convergent destructor rewrite system.
  We also implemented the procedure in a new tool, \deepsec.
  We showed through extensive experiments that it is significantly more efficient than other similar tools, while at the same time raising the scope of the protocols that can be analysed.
\end{abstract}


\section{Introduction}

The use of automated, formal methods has become indispensable for analysing complex security protocols, such as those for authentication, key exchange and secure channel establishment.
Nowadays there exist mature, fully automated such analysers; among others \avispa \cite{ABB05}, \proverif \cite{B16}, \scyther \cite{C08}, \tamarin \cite{SMC13} or Maude-NPA \cite{SEM14}.
These tools are able to automatically verify full fledged models of widely deployed protocols and standards, such as the TLS protocol for secure connexion \cite{BBK17,CHH17}, the Signal messaging protocol \cite{KBB17,CCG18}, authentication protocols of the 5G standard \cite{BDH18}, or deployed multi-factor authentication protocols \cite{JK18}.
Theory-wise, the tools operate in so-called \emph{symbolic} models, rooted in the seminal work by Dolev and Yao \cite{DY81}: 
the attacker has full control over the communication network, unbounded computational power, but cryptography is idealised.
This model is well suited for finding attacks in the protocol logic, and tools have indeed been extremely effective in discovering this kind of flaw or proving their absence.

While most works investigate \emph{reachability} properties, a later trend consists in adapting the tools---and the underlying theory---to the more complex \emph{indistinguishability} properties.
Such properties are generally modelled as a behavioural equivalence (bisimulation or trace equivalence) in a dedicated process calculus such as the spi calculus \cite{AG99} or the applied pi calculus \cite{ABF17}.
A typical example is real-or-random secrecy: after interacting with a protocol, an adversary is unable to distinguish the \emph{real} secret used in the protocol from a \emph{random} value.
Privacy-type properties can also be expressed as such:
anonymity may be modelled as the adversary's inability to distinguish two instances of a protocol executed by different agents;
vote privacy \cite{DKR09} has been expressed as indistinguishability of the situations where the votes of two agents have been swapped or not;
unlinkability \cite{ACR10} is seen as indistinguishability of two sessions, either both executed by the same agent \(A\), or by two different agents \(A\) and \(B\).

\subsection*{Contributions}
  We significantly improve the theoretical understanding
  and the practical verification of equivalences when the number of protocol sessions is bounded.
  We emphasise that even in this setting, the system under study has an infinite state space due to the term algebra modelling cryptographic primitives.
  Our work targets the wide class of cryptographic primitives that can be represented by a subterm convergent rewriting system.
  Concretely, we provide:

  \begin{enumerate}
    \item tight complexity results for several equivalence relations:
    static equivalence, trace equivalence and labelled bisimilarity.
    In addition to the conference paper~\cite{CKR18}, we showcase the generality of our approach by providing, with a negligible proof overhead, a tight analysis of other security relations, namely similarity, simulation, and trace inclusion;
    \item a novel procedure deciding all of the above mentioned security relations for a bounded number of sessions, for the class of cryptographic primitives modelled by a destructor subterm convergent rewrite system;
    \item an implementation of our procedure for trace equivalence in a tool called \deepsec (DEciding Equivalence Properties for SECurity protocols), improved compared to its initial presentation in the conference paper~\cite{CKR18}.
  \end{enumerate}

  \noindent
  We detail the three contributions below.


\paragraph{Complexity} 
  We provide the first complexity results for deciding trace equivalence and labelled bisimilarity in the applied pi calculus, without any syntactic or semantic restriction on the class of protocols (other than bounding the number of sessions), and for a large class of cryptographic primitives modelled as rewrite rules. 
  As mentioned above, our results extend to several other security relations such as simulation.
  Let us also highlight one small, yet substantial difference with existing work: we do not consider cryptographic primitives (rewrite systems) as constants of the problem.
  As most modern verification tools allow for user-specified primitives \cite{manual-proverif,SMC13,SEM14,CCC16}, our approach seems to better fit this reality.
  Typically, all existing procedures for static equivalence can only be claimed \ptime because of this difference and are actually exponential in the sizes of the signature or equational theory.
  Our complexity results are summarised in Table~\ref{fig:summary}.
  All our lower bounds hold for subterm convergent rewrite systems and even for the positive fragment (without \(\ElseP\) branches).
  \textit{En passant}, we present results for the pi calculus:
  although investigated in \cite{BT00}, complexity was unknown when restricted to a bounded number of sessions.
  Still, our main result is the co\nexp completeness (and in particular, the decidability) of trace equivalence and labelled bisimilarity for destructor subterm convergent rewrite systems.

  \begin{table}[ht]
    \centering
    \begin{tabular}{|c|c|c|}
      \hline
      & \multirow{2}{*}{Pure pi calculus} & Applied pi calculus\\
      & & \small with destr. subterm convergent theory\\
      \hline
      static equivalence & \logspace & co\np complete\\
      \hline
      trace equivalence & \polyh 2 complete & co\nexp complete \\
      \hline
      labelled (bi)similarity & \pspace complete & co\nexp complete \\
      \hline
    \end{tabular}
    \caption{Summary of complexity results}
    \label{fig:summary}
  \end{table}

\paragraph{Decision procedure} 
  We present a novel procedure based on a symbolic semantics and constraint solving.
  Unlike most other work, our procedure decides equivalences \emph{exactly}, i.e. without approximations.
  Moreover, it does not restrict the class of processes (except for replication), nor the use of \(\ElseP\) branches, and is correct for any cryptographic primitives that can be modelled by a subterm convergent destructor rewrite system (see Section~\ref{sec:model} for more details).
  The design of the procedure did greatly benefit from our complexity study, and was developed in order to obtain tight complexity upper bounds.
  The theory is also more mature compared to the initial conference paper \cite{CKR18} which allowed some significant optimisations of the constraint solving procedure.

\paragraph{Tool implementation}
  We implemented our procedure for trace equivalence in a tool, \deepsec.
  Its prototype has initially been presented in the conference paper \cite{CKR18} (and some implementation details in a tool paper \cite{CKR18b}), but has significantly matured since then.
  In addition of the improvements at the level of the theoretical procedure, the low level implementation has been more carefully engineered data-structure-wise.
  All in all, the \deepsec 2.0.0 release includes the following new features:
  \begin{itemize}
    \item A significantly \emph{reduced verification time} (several orders of magnitude on some examples).
    \item An \emph{optional procedure exploiting the symmetries} that often arise in practical verification.
    When used, this further reduces the verification time by orders of magnitude albeit for occasionally introducing false attacks.
    In this article we rather focus on the main procedure;
    details about this feature can be found in \cite{CKR19}.
    \item An \emph{improved user experience}.
    The html based pretty-print of the original prototype has been upgraded into a standalone graphical user interface.
    Verification queries and options can be managed directly from the interface and a simulator displays interactively equivalence proofs or attacks to better visualise the outcome of the analysis.
  \end{itemize}

  \noindent
  Naturally \deepsec still integrates already-present features such as multicore distribution and the partial order reductions presented in \cite{BDH15}.
  All in all this makes the tool more user friendly and scale well despite the high theoretical complexity of the problem (co\nexp).
  Installation guidelines can be found in the official website \cite{website} together with a manual and a tutorial.

  Through extensive benchmarks, we compare \deepsec to other tools limited to a bounded number of protocol sessions: \apte, \spec, \akiss, \satequiv and our previous prototype (as presented in \cite{CKR18}).
  This prior version was already more efficient---by several orders of magnitude---than \apte, \spec and \akiss, even though \deepsec covers a strictly larger class of protocols than \apte and \spec.
  Besides, its performances were comparable to \satequiv, which still outperforms \deepsec when the number of parallel processes significantly increase.
  This gap in performance seems unavoidable as \deepsec operates on a much larger class of protocols (more primitives, \(\ElseP\) branches, no limitation to simple processes, termination guaranteed).
  Part of the benchmarks consists of classical authentication protocols and focuses on demonstrating scalability of the tool when augmenting the number of parallel protocol sessions.
  The other examples include more complex protocols, such as Abadi and Fournet's anonymous authentication protocol \cite{AF04}, the protocols implemented in the European passport \cite{P04}, a model (without XOR) of the AKA protocol used in 3G mobile telephony, as well as the Pr\^et-\`a-Voter \cite{RS06} and the Helios \cite{A08} e-voting protocols.

\subsection*{Related Work}
  The problem of analysing security protocols is undecidable in general but several decidable subclasses have been identified.
  While many complexity results are known for trace properties \cite{DLM04,RT03}, the case of behavioural equivalences remains mostly open.
  When the attacker is an eavesdropper and cannot interact with the protocol, the indistinguishability problem---\emph{static equivalence}---has been shown \ptime for large classes of cryptographic primitives \cite{AC06,CDK12,CBC11}.
  For active attackers, bounding the number of protocol sessions is often sufficient to obtain decidability \cite{RT03} and is of practical interest: most real-life attacks indeed only require a small number of sessions.
  In this context Baudet \cite{B05}, and later Chevalier and Rusinowitch \cite{CR10}, showed that real-or-random secrecy was co\np for cryptographic primitives that can be modelled as subterm convergent rewrite systems, by checking whether two constraint systems admit the same set of solutions.
  These procedures do however not allow for \(\ElseP\) branches, nor do they verify trace equivalence in full generality.
  In \cite{CCD13}, Cheval et al. have used Baudet's procedure as a black box to verify trace equivalence of \emph{determinate} processes.
  This class of processes is however insufficient for most anonymity properties.
  Finally, decidability results for an unbounded number of sessions exist \cite{CCD15,CCD15b}, but with severe restrictions on processes and equational theories.

  Tool support also exists for verifying equivalence properties.
  We start discussing tools that are limited to a bounded number of sessions.
  The \spec tool \cite{TD10, TNH16} verifies a sound symbolic bisimulation, but is restricted to particular cryptographic primitives (pairing, encryption, signatures and hash functions) and does not allow for \(\ElseP\) branches. In a similar setting, restricting to particular primitives, Cheval et al.~\cite{CCD10} propose a procedure for deciding equivalence of constraint systems. This procedure can be used for deciding trace equivalence of determinate processes and has been implemented in the ADECS tool.
  The \apte tool \cite{C14} generalizes \adecs: it covers the same primitives but allows \(\ElseP\) branches and decides trace equivalence exactly.
  On the contrary, the \akiss tool \cite{CCC16} allows for user-defined cryptographic primitives.
  The procedure of this tool is correct for primitives modelled by an arbitrary convergent rewrite system that has the finite variant property \cite{CD05}, and termination is additionally guaranteed for subterm convergent rewrite systems.
  However, \akiss does only decide trace equivalence for a class of determinate processes; for other processes trace equivalence can be both over- and under-approximated.
  The recent \satequiv tool \cite{CDD17} uses a different approach: it relies on Graph Planning and SAT solving to verify trace equivalence, rather than a dedicated procedure.
  The tool is extremely efficient and several orders of magnitude faster than other tools.
  It does however not guarantee termination and is currently restricted to pairing and symmetric encryption and only considers a class of \emph{simple processes} (a subclass of determinate processes) that satisfy a type-compliance condition.
  These restrictions severely limit its scope.

  To mitigate the state explosion problem from which most of the above tools suffer, Baelde et al.~\cite{BDH15} developed \emph{partial order techniques} which avoid to explicitly consider all possible interleavings and which are compatible with a symbolic approach based on constrained solving. Substantial efficiency gains on practical examples have been illustrated through an implementation in the \apte tool. We also implemented these techniques in \deepsec. However, the techniques may only be applied on a class of \emph{action-determinate} processes. This limitation has been overcome in a follow-up work by Baelde et al.~\cite{BDH18}: while more general the new techniques also require additional, expensive computations resulting in less spectacular performance increase than the initial work. Baelde et al.~\cite{BDH18}. have implemented their technique in a standalone library, and plugged it into the \apte and \deepsec tools.

  Other tools support verification of equivalence properties, even for an unbounded number of sessions.
  This is the case of \proverif \cite{BAF08}, \tamarin \cite{BDS15} and Maude NPA \cite{SEM14} which all allow for user-defined cryptographic primitives.
  However, given that the underlying problem is undecidable, these tools may not terminate.
  Moreover, they only approximate trace equivalence by verifying the stronger \emph{diff-equivalence}.
  This equivalence is too strong on many examples.
  While some recent improvements on \proverif \cite{CB13,BS16} help covering more protocols, general verification of trace equivalence is still out of scope.
  For instance, the verification by Arapinis et al. \cite{AMR12} of unlinkability in the 3G mobile phone protocols required some ``tricks'' and approximations of the protocol to avoid false attacks.
  In \cite{CGL17}, Cortier et al. develop a type system and automated type checker for verifying equivalences.
  While extremely efficient, this tool only covers a fixed set of cryptographic primitives (the same as \spec and \apte) and verifies an approximated equivalence, similar to diff-equivalence.
  A different approach has been taken by Hirschi et al. \cite{HBD16}, identifying sufficient conditions provable by \proverif for verifying unlinkability properties, implemented in the tool \ukano, a front-end to the \proverif tool.
  \ukano does however not verify equivalence properties in general.

\subsection*{Article Outline}
We organize the article as follows. In \Cref{sec:model} we present our formal model of cryptographic protocols and the process equivalences used to express security properties. We also precisely define the decision problems that we address in this article. 

In \Cref{sec:symbolic} we provide an overview of our decision procedures. First, we define (sound and complete) symbolic semantics where we replace the infinite set of possible attacker inputs by a finite representation in the form of constraint systems. Second,  we define the notion of a \emph{partition tree}. The partition tree organizes all symbolic traces in a tree such that a node contains (a symbolic representation of) all statically equivalent processes that can be reached by a given trace. Third, we show how equivalences can be decided on such a partition tree. Next, we explain how to compute a partition tree \emph{assuming} we can compute solutions to constraint systems. Finally, we discuss how the procedure for deciding trace equivalence has been implemented in the \deepsec tool and provide a performance evaluation.

In~\Cref{sec:ptree} we present a rule-based procedure to effectively solve constraint systems. This requires the definition of \emph{extended} constraint systems that store additional information and the introduction of the new notion of \emph{most general solutions}. Reminiscent of the notion of most general unifiers, most general solutions are a set of solutions that guarantee that any solution can be obtained from a most general solution by substituting atomic names by more complex terms. After presenting all the rules of the procedure in detail we explain how to construct a partition tree.

In~\Cref{sec:complex} we give complexity results. To achieve upper bounds we prove termination of the constraint solving procedure and exponentially bound the number of rules and size. From these bounds we obtain that when two processes are not equivalent (for different notions of equivalence) there exists a witness of exponential size, yielding a co\nexp decision procedure for equivalence. Lower bounds are provided by reduction to the \sucsat problem.

Finally we conclude the article in \Cref{sec:conclusion} and sketch some directions for future work.


\section{Model} 
\label{sec:model}


We first present our model of cryptographic protocols and use it to
model the security of the Private Authentication Protocol as a running
example~\cite{AF04}.  Our framework is based on the applied pi
calculus~\cite{ABF17} and follows the tradition of symbolic models
rooted in the seminal work of Dolev and Yao \cite{DY81}.  In these
models, the low-level details of cryptography are abstracted by a term
algebra describing the ideal behaviour of cryptographic primitives,
whereas secret data such as cryptographic keys or nonces are
represented by symbolic values called \emph{names}.

\subsection{Messages and cryptography}

\paragraph{Protocol messages}
  Cryptographic operations are modelled by a set \(\sig\) of symbols of fixed arity denoted \(\sig = \{\ffun/n, \gfun/m, \ldots\}\), called a \emph{signature}.
  In this article, it is always partitioned into:
  \begin{itemize}
    \item The infinite set of \emph{constants} (\(\sig_0\)) that are the functions of arity 0 of \(\sig\), thus modelling the public values of the protocol such as identities, IP addresses or public communication channels.
    \item The finite set of \emph{constructors} (\(\sigc\)) modelling cryptographic operations used to build messages, typically encryption, signature, concatenation or hash.
    \item The finite set of \emph{destructors} (\(\sigd\)) modelling inversions or operations that may fail depending on the structure of their argument, typically decryption, signature verification or projection.
  \end{itemize}

  \begin{example}
    \label{ex:standard-signature}
    The following signature captures most of the cryptographic primitives that are used in our examples and benchmarks.
    We will use them throughout Section~\ref{sec:model} in examples.
    \begingroup
      \renewcommand*\box[1]{\parbox[top][1cm][c]{2.5cm}{\centering \footnotesize #1}}
      \renewcommand*\arraystretch{1.3}
      \[\begin{array}{c@{\qquad}ccccc}
        & \box{concatenation / pairs}
        & \box{symmetric encryption}
        & \box{asymmetric encryption}
        & \box{digital signature}
        & \box{one-way hash}
        \\\hline
        \sigc
          & \pair {\cdot, \cdot}/2
          & \senc/3
          & \pk/1, \aenc/3
          & \vpk/1, \sign/3
          & \hfun/1
        \\\hline
        \sigd
          & \fst/1, \snd/1
          & \sdec/2
          & \adec/2
          & \checksign/2
          & \emptyset
      \end{array}\]
    \endgroup
    
    For example \(\aenc(m,r,\pk(k))\) models a plaintext \(m\) encrypted with public key \(\pk(k)\) and a randomness \(r\).
    The corresponding decryption key would be \(k\).
    A similar description can be made for symmetric encryption, except that the encryption and decryption keys are identical.
    The model of hash functions contains no destructors on purpose, thus modelling an assumption that \(\hfun\) is a random oracle, i.e., no identities can be derived from \(\hfun\).
    Notation-wise, we also often use a tuple notation \(\pair{x_1, \ldots, x_n}\) instead of the nested \(n-1\) pairs \(\pair{x_1, \pair {x_2, \ldots \pair {x_{n-1}, x_n}}}\). 
    %
  \end{example}

  A protocol message \(m\) is then modelled by a \emph{term} over this signature, i.e. \(m\) is obtained by applying function symbols to other terms or \emph{names}.
  The infinite set of \emph{names} \(\Nall\) can be seen as a symbolic abstraction of private values such as encryption keys or nonces.
  The set of names occurring in a term \(t\) is written \(\names(t)\).
  In some models
  names are partitioned into public and private names, where the set of public names essentially plays the same role as~\(\sig_0\).
  Since constants and public names have a similar role (and are even treated identically in our tool implementation) we decided to merge them into the single set \(\sig_0\) similarly to other formalisations, e.g. \cite{CCD15b}.
  We write \(\termset(S)\), \(S \subseteq \sig \cup \Nall\), the set of terms built from functions, names, and constants of \(S\).

\paragraph{Specifying cryptographic assumptions}
  The behaviour of the primitives of the signature is modelled by a \emph{rewriting system}.
  For that we assume an infinite set of \emph{variables} \(\X = \{x, y, z, \ldots\}\) that may be used in terms, and write \(\vars(t)\) the set of variables occurring in a term \(t\).
  Mappings \(\sigma\) from variables to terms are called \emph{substitutions} and are homomorphically extended to mappings from terms to terms implicitly.
  We use the postfix notation \(t\sigma\) for \(\sigma(t)\), and \(\sigma\sigma'\) for the composition of subtitution \(\sigma'\circ\sigma\) (that is, \(t \sigma \sigma' = (t\sigma) \sigma'\)).
  We call the \emph{domain} of \(\sigma\) the set \(\dom(\sigma) = \{x \in \X \mid x \sigma \neq x\}\).
  For convenience we also use set notations, defining a substitution~\(\sigma\) such that \(\dom(\sigma) \subseteq \{x_1, \ldots, x_n\}\) with the notation
  \(\sigma = \{x_1 \mapsto \sigma(x_1), \ldots, x_n \mapsto \sigma(x_n)\}\).
  Going further we may refer to the substitution \(\sigma \cup \sigma'\) (provided \(\sigma\) and \(\sigma'\) coincide on \(\dom(\sigma) \cap \dom(\sigma')\)) or write \(\sigma \subseteq \sigma'\) to mean that \(\sigma'\) extends \(\sigma\).
  A \emph{rewriting system} \(\R\) is then a finite binary relation on terms.
  All pairs of \(\R\) are called \emph{rewrite rules} and are assumed to be of the form
  \begin{align*}
    f(\ell_1, \ldots, \ell_n) & \to r & \mbox{for some }\ f/n\, \in \sigd\ \mbox{ and }\ \ell_1, \ldots, \ell_n, r \in \termset(\sig_c \cup \sig_0 \cup \X)
  \end{align*}
  Such rewriting systems are usually qualified as \emph{constructor destructor} in the literature.
  By extension we also use notation \(t \to s\) (``\emph{\(t\) rewrites to \(s\)}'') when \(t\) and \(s\) are related by the closure of \(\R\) under application of substitution and term context.
  The reflexive transitive closure of this relation is written \(\to^*\).

  \begin{example} \label{ex:standard-theory}
    We give the rewrite rules for the primitives introduced in Example~\ref{ex:standard-signature}.
    \begin{align*}
      & \text{\em sym. encryption:} & &
        \sdec(\senc(x,y,z),z) \to x \\
      & \text{\em pairs:} & &
        \fst(\pair {x,y}) \to x \quad \text{\em and} \quad \snd(\pair{x,y}) \to y \\
      & \text{\em asym. encryption:} & &
        \adec(\aenc(x,y,\pk(z)),z) \to x \\
      & \text{\em signatures:} & &
        \checksign(\sign(x,y,z), \vpk(z)) \to x
    \end{align*}
    For example here one can decrypt (apply \(\adec\)) a ciphertext \(\aenc(x,y,\pk(z))\) with the corresponding key \(z\) to recover the plaintext \(x\).
    The rule for signature verification is the opposite, recovering the signed message \(x\) using the public verification key \(\vpk(z)\).
    The behaviour of these primitives is idealised by the absence of other rules, for example modelling an assumption that no information can be extracted from a ciphertext or a signature without access to the secret or verification keys.
    This idealisation can be partially lifted by adding more rewrite rules modelling specific imperfections of the cryptography.
    For example we can add the following new symbols and rewrite rules:
    \begin{align*}
      \testaenc(\aenc(x,y,\pk(z))) & \to \okfun &
      \getkey(\aenc(x,y,\pk(z))) & \to \pk(z)
    \end{align*}
    model two assumptions that
    \begin{enumerate*}
      \item it is possible to distinguish a correctly encrypted message from a random bitstring, and
      \item it is possible to retrieve the encryption key from the ciphertext itself (i.e. the scheme is not \emph{key concealing}).
    \end{enumerate*}
    Naturally even if a protocol is considered secure without these two rewrite rules, a security violation may arise upon adding them.
    It is therefore important to keep in mind the assumptions underlying the model when interpreting the result of an analysis.
  \end{example}

  We observe that the rewrite rules introduced in the example above verify a classical property, \emph{subterm convergence}, introduced in \cite{AC06} and benefiting from several decidability results in the context of protocol analysis \cite{AC06,CCD13}.
  It means that \(\R\) is convergent (i.e. confluent and strongly terminating) and that its rules \(\ell \to r\) verify that \(r\) is either a strict subterm of \(\ell\) or a \emph{ground term} (i.e. a term without variables) in \emph{normal form} (i.e. irreducible w.r.t. \(\to\)).
  The results of this article only apply to cryptographic primitives modelled by a constructor destructor subterm convergent rewriting systems.
  Imposing such restrictions is inevitable when aiming for decidability, since the problems we investigate are undecidable for arbitrary convergent rewriting systems \cite{AC06}.

  In particular, by convergence, all terms \(t \in \termset(\sig \cup \Nall)\) have a unique normal form w.r.t.~\(\R\) that we will write \(t \norm\).
  It is also common to identify messages whose destructors failed to be applied.
  For that we define a predicate \(\msg\) on terms:
  we say that \(t\) is a \emph{message}, written \(\msg(t)\), when for all subterms \(u\) of \(t\), \(u\norm\) does not contain any destructors.
  For example if \(m , r \in \sig_0\) and \(k \neq k'\), \(\adec(\aenc(m,r,\pk(k)), k)\) is a message but not \(\fst(\pair {m, t})\) with \(t = \sdec(\senc(m,r,k),k')\).

\subsection{Protocols} \label{sec:processes}

\paragraph{Processes}
  Security protocols are modelled by \emph{(plain) processes} in a concurrent process calculus defined by the following grammar:
  \[\begin{array}{rl@{\qquad}r}
    P,Q := & 0 & \text{null}\\
    & P \mid Q & \text{parallel}\\
    & \IfP\ u = v\ \ThenP\ P\ \ElseP\ Q& \text{conditional}\\
    & \OutP u v.P & \text{output}\\
    & \InP u x.P & \text{input}\\
    \end{array}
  \]
  where \(u,v\) are terms and \(x \in \X\).
  Intuitively the \(0\) models a terminated process (and is often omitted for succinctness), a conditional \(\IfP\ u = v\ \ThenP\ P\ \ElseP\ Q \) executes either \(P\) or \(Q\) depending on whether the terms \(u\) and \(v\) are messages and have the same normal form, and \(P \mid Q\) models two concurrent processes.
  Inter-process communications are performed with \(\InP {u} {x}.P\) and \(\OutP {u} {v}.P\) which are, respectively, inputs and outputs on a communication channel \(u\).
  When \(u\) is known to the attacker, for example when it belongs to \(\sig_0\), executing an output on \(u\) adds it to the adversary's knowledge, whereas an input on \(u\) is fetched from the adversary possibly forwarding a previously stored message, or computing a new message from previous outputs.
  Otherwise the communication is performed silently without adversarial interferences.
  The main difference with the calculus of \cite{ABF17} is the absence of replication, thus bounding the number of instructions of a process.
  This restriction does \emph{not} make protocol analysis trivially decidable:
  although the number of instructions are finite, the number of their possible executions is not, since the attacker can fetch arbitrary messages to public inputs.

  \begin{example} \label{ex:process}
    We define a process modelling the protocol for private authentication described in \cite{AF04} as a running example through the article.
    Denoting by \(\sk_X,\pks_X\) the secret and public keys of an agent
    \(X\), and by \(r_X\) fresh nonces, its control flow can be described as follows using an informal Alice-Bob notation:
    \begin{align*}
      X \to B:\ & \aenc(\pair {N_X, \pks_X}, r_X, \pks_B) \\
      B \to X:\ & \aenc(\pair {N_X , N_B, \pks_B}, r_A, \pks_A) && \mbox{if \(X = A\)}\\
      & \aenc(N_B,r_B,\pks_B) && \mbox{if the decryption fails or \(X \neq A\)}
    \end{align*}
    where \(N_X,N_B\) are two freshly generated nonces.
    Here the agent \(B\) accepts authentication requests from the agent \(A\) but not from other parties.
    Among the security goals stated in \cite{AF04} are
    \begin{enumerate}
      \item \emph{Secrecy:}
      At the end of a successful instance of the protocol between \(A\) and \(B\), \(N_A\) and \(N_B\) are secrets (i.e. the attacker cannot get information about them).
      \item \emph{Anonymity:}
      The attacker cannot tell whether the protocol is run by \(A\) and \(B\) or other agents.
      \item \emph{Private authentication:}
      The attacker cannot tell whether \(B\) accepts connections from \(A\) or not.
    \end{enumerate}
    The last two security goals explain in particular the decoy message \(\aenc(N_B,r_B,\pks_B)\) that \(B\) sends upon decryption failure or connection refusal:
    thus from an outside observer there is no observable difference between the situations where \(B\) answers or not.
    The roles of \(X\) and \(B\) can be specified as follows in the applied pi calculus;
    each process takes as an argument its secret key \(s\), the public key \(p\) of the agent it aims at communicating with, its fresh session nonces \(n,r\) and we write \(t = \adec(x,s)\):
    \[\begin{array}{l@{\ }l@{\qquad}l@{\ }l}
      X(s, p, n, r) = & \OutP c {\aenc(\pair {n,\pk(s)},r,p)}. &
      B(s, p, n, r) = & \InP c x.\\
      & \InP {c} {x} & & \IfP\, \snd(t) = p\, \ThenP\\
      & & & \phantom{\ElseP}\, \OutP c {\aenc(\pair {\fst(t), n, \pk(s)}, r, p)}\\
      & & & \ElseP\, \OutP c {\aenc(n,r,\pk(s))}
    \end{array}\]
    where \(c \in \sig_0\).
    The security goals are formalised in Section~\ref{sec:equivalence}.
  \end{example}

  \paragraph{Attacker's knowledge}
    In the next paragraphs we formalise how processes may be executed in an active adversarial environment.
    The first step is to model the capabilities of the underlying attacker that spies on the communication network and actively interferes with communications.
    For that we refine the set of variables to \(\X = \Xfst \uplus \AX\), thus introducing a new type of variables \(\AX = \{\ax_1, \ax_2, \ax_3, \ldots\}\) called \emph{axioms} that will serve as handles to make reference to attacker's observations.
    Concretely a term \(\xi \in \termset(\sig \cup \AX)\) is called a \emph{recipe} and is intuitively an algorithm for the attacker to construct a term from their prior observations.
    For example upon observing the messages \(\aenc(m,r,\pk(k))\) and \(k\) in this order, an attacker can use the recipe \(\xi = \adec(\ax_1,\ax_2)\) to retrieve \(m\) although it has not been observed directly.
    We observe in particular that by definition a recipe cannot contain names, modelling that they are assumed to be private and as such cannot be used directly by the adversary.
    
    On the other hand, the variables of \(\Xfst\), called \emph{first-order variables} for distinction, stick to the initial role of variables---namely, being used as binders for protocol inputs.
    For this reason, we call a term \(t \in \termset(\sig \cup \Nall \cup \Xfst)\) a \emph{protocol term}.
    However, we often more specifically consider \emph{constructor terms} \(\termset(\sigc \cup \sig_0 \cup \Nall \cup \Xfst)\) that are protocol terms whose destructors have all been successfully computed, that is, reduced by a rewrite rule.
    We also write \(\varsfst(t) = \vars(t) \cap \Xfst\) the set of \emph{first-order variables} of \(t\).
    Using all these notions we define \emph{extended processes},
    representing a set of processes executed in parallel together
    with the knowledge aggregated by the attacker interacting with the protocol:

    \begin{definition}
      \label{def:extended process}
      An extended process is a pair \(A = (\P,\Phi)\) with \(\P\) a multiset of ground processes and \(\Phi = \{ \ax_1 \mapsto u_1, \ldots, \ax_n \mapsto u_n\} = \Phi(A)\) is called a \emph{frame} that is a substitution from axioms to ground constructor terms.
    \end{definition}

    Formalising the example above, if the frame \(\Phi = \{\ax_1 \mapsto \aenc(m,r,\pk(k)), \ax_2 \mapsto k\}\) models the attacker's observations during the execution of a protocol, the fact that \(m\) can be retrieved with the recipe \(\xi = \adec(\ax_1,\ax_2)\) is expressed by the fact that \(\msg(\xi \Phi)\) and \(\xi \Phi \norm = m\).
    A typical security problem is to decide, given a frame \(\Phi\) and a term \(t\), whether \(t\) is \emph{deducible} by the attacker from \(\Phi\);
    that is, whether there exists a recipe \(\xi\) such that \(\msg(\xi \Phi)\) and \(\xi \Phi \norm = t \norm\).

  \paragraph{Operational semantics}
    We now formalise the semantics of processes.
    By manipulating extended processes this semantics carries the knowledge the attacker aggregates by spying on the communication outputs. 
    Besides, in our constructor destructor setting we assume that the agents only send and accept meaningful messages, namely terms that verify the \(\msg\) predicate.
    While this assumption is realistic for authenticated encryption for example, it may not hold for schemes with weaker security guarantees.
    In practice the semantics takes the form of a transition relation
    between extended processes labelled by so-called \emph{actions}:
    \begin{enumerate}[label=\emph{\arabic*.}]
      \item \emph{Input actions} \(\InP {\xi_c} {\xi_t}\), where \(\xi_c\) and \(\xi_t\) are recipes, model an input from the attacker of a message (crafted using recipe \(\xi_t\)) on some channel (known to the attacker using recipe \(\xi_c\))
      \item \emph{Output actions} \(\OutP {\xi_c} {\ax_n}\), where \(\xi_c\) is a recipe, model an output on a channel (known by the attacker using recipe \(\xi_c\)), recorded into the frame (at pointer \(\ax_n \in \AX\)).
      \item \emph{Silent actions} \(\tau\) that model actions that
        are unobservable by the attacker such as synchronous private
        communications or evaluation of a conditional.
    \end{enumerate}

    We call \(\A\) the alphabet of actions, and transitions are of the form \(A \cstep {a} B\), \(a \in \A\). The transition relation is defined by the rules given in Figure~\ref{fig:semantics}.
    More generally:
    
    \begin{definition}[trace]
      We write \(A \Cstep{w} B\) when $A \cstep{a_1} \ldots \cstep{a_n} B$ and \(w \in \A^*\) is the word obtained after removing the \(\tau\) actions from the word \(a_1 \cdots a_n\), and call such a sequence of transitions a \emph{trace}.
      We also write $\Cstep{\tau}$ for $\cstep{\tau}^*$, i.e. the reflexive, transitive closure of $\cstep{\tau}$.
    \end{definition}

    \begin{figure*}[ht]
      \begingroup
        \newcommand\skipspace{\hspace{3cm}}
        \centering
        \begin{align}
          \tag{\mbox{\textsc{In}}} \label{rule:in}
          & (\multi {\InP {u} {x}.P} \cup \P, \Phi)
            \cstep {\InP {\xi_c} {\xi_t}} (\multi {P\{x \mapsto \xi_t \Phi \norm\}} \cup \P, \Phi) &
          & \mbox{\small if \(\msg(\xi_c \Phi)\), \(\msg(\xi_t \Phi)\), \(\msg(u)\)} \\
          \nonumber
          & & & \mbox{\small and \(\xi_c \Phi \norm = u \norm\)}\\
          \tag{\mbox{\textsc{Out}}} \label{rule:out}
          & (\multi {\OutP {u} {v}.P} \cup \P, \Phi)
            \cstep {\OutP {\xi_c} {\ax_n}} (\multi {P} \cup \P, \Phi \cup \{\ax_n \mapsto v\norm\}) &
          & \mbox{\small if \(\msg(\xi_c \Phi)\), \(\msg(u)\), \(\msg(v)\)} \\
          \nonumber
          & & & \mbox{\small \(\xi_c \Phi \norm = u \norm\) and \(n = |\dom(\Phi)|+1\)}\\
          \tag{\mbox{\textsc{Comm}}} \label{rule:comm}
          & (\multi {\OutP {u} {v}.P, \InP {u'} {x}.Q} \cup \P, \Phi)
            \cstep {\tau} (\multi {P, Q\{x \mapsto v\}} \cup \P, \Phi) &
          & \mbox{\small if \(\msg(u)\), \(\msg(v)\), \(\msg(u')\)} \\
          \nonumber
          & & & \mbox{\small and \(u \norm = u' \norm\)}\\
          \tag{\mbox{\textsc{Then}}} \label{rule:then}
          & (\multi {\IfP\ u = v\ \ThenP\ P\ \ElseP\ Q} \cup \P, \Phi)
            \cstep {\tau} (\multi {P} \cup \P, \Phi) &
          & \mbox{\small if \(\msg(u)\), \(\msg(v)\) and \(u \norm = v \norm\)}\\
          \tag{\mbox{\textsc{Else}}} \label{rule:else}
          & (\multi {\IfP\ u = v\ \ThenP\ P\ \ElseP\ Q} \cup \P, \Phi)
            \cstep {\tau} (\multi {Q} \cup \P, \Phi) &
          & \mbox{\small if \(\neg \msg(u)\), \(\neg \msg(v)\) or \(u \norm \neq v \norm\)}\\
          \tag{\mbox{\textsc{Par}}} \label{rule:par}
          & (\multi {P \mid Q} \cup \P, \Phi)
            \cstep {\tau} (\multi {P, Q} \cup \P, \Phi)
        \end{align}
      \endgroup
      \caption{Semantics of the calculus}
      \label{fig:semantics}
    \end{figure*}

    Apart from the absence of replication, this semantics aims at being as close as possible to the original semantics of the applied pi calculus \cite{ABF17} although using a different formalism, as it is also the semantics used by tools such as \proverif. 

    \begin{example} \label{ex:semantics}
      We now illustrate how our running example can be executed in the operational semantics.
      We let two agents of respective secret keys \(\sk_A,\sk_B \in \Nall\).
      An instance of the protocol between \(A\) and \(B\) is thus modelled, using the notations of Example~\ref{ex:process}, by the process \(P = \bar {A} \mid \bar{B}\) where, given fresh names \(r_A,r_B\):
      \begin{align*}
        \bar{A} & = X(\sk_A, \pk(\sk_B), N_A, r_A) &
        \bar{B} & = B(\sk_B, \pk(\sk_A), N_B, r_B)
      \end{align*}
      In order to lighten the presentation we use the same notations as in Example~\ref{ex:process} and name the three messages of the protocol as follows:
      \begin{align*}
        m_A & = \aenc(\pair {N_A, \pk(\sk_A)}, r_A, \pk(\sk_B)) \\
        m_B & = \aenc(\pair {\fst(t), N_B, \pk(\sk_B)}, r_B, \pk(\sk_A)) \\
        m_B' & = \aenc(N_B, r_B, \pk(\sk_B))
      \end{align*}
      We assume that the public keys \(\pk(\sk_A)\) and \(\pk(\sk_B)\) are known to the attacker, which can be modelled by an initial frame \(\Phi_0 = \{\ax_1 \mapsto \pk(\sk_A), \ax_2 \mapsto \pk(\sk_B)\}\).
      Another possibility is to prefix the process \(P\) with two outputs of \(\pk(\sk_A)\) and \(\pk(\sk_B)\) respectively, which will produce the frame \(\Phi_0\) after two applications of rule \eqref{rule:out}.
      The normal execution of the process is the following sequence of reduction steps:
      \[\begin{array}{lll}
        (\multi {P}, \Phi_0)
          & \cstep {\tau} (\multi {\bar {A}, \bar{B}}, \Phi_0) \\
          & \cstep {\OutP {c} {\ax_3}} (\multi {\InP {c} {x}, \bar {B}}, \Phi_1)
            & \mbox{with } \Phi_1 = \Phi_0 \cup \{\ax_3 \mapsto m_A\} \\
          & \Cstep {\InP {c} {\ax_3}} (\multi {\InP {c} {x}, \OutP {c} {m_B}}, \Phi_1) \\
          & \cstep {\OutP {c} {\ax_4}} (\multi {\InP {c} {x},0}, \Phi_2)
            & \mbox{with } \Phi_2 = \Phi_1 \cup \{\ax_4 \mapsto m_B\} \\
          & \cstep {\InP {c} {\ax_4}} (\multi {0,0}, \Phi_2)
      \end{array}\]
      In this execution the attacker only forwards messages, that is, each input action uses the last axiom added to the frame as a recipe.
      However the adversary may actively engage in the protocol, for example for guessing whether \(B\) accepts communications from a third agent \(C\).
      For that they could generate fresh nonces \(N,R \in \sig_0\)
      (attacker-generated nonces are modelled by fresh constants)
      and send the message \(m_A' = \aenc(\pair {N, \pk(\sk_C)}, R, 
      \pk(\sk_B))\)  to check how~\(B\) responds.
      Note that the message \(m_A'\) can indeed be crafted by the attacker assuming \(\Phi_0' = \Phi_0 \cup \{\ax_3 \mapsto \pk(\sk_C)\}\) as an initial frame.
      This scenario corresponds to the following sequence of transitions:
      \[\begin{array}{lll}
        (\multi {P}, \Phi_0')
          & \cstep {\tau} (\multi {\bar {A}, \bar {B}}, \Phi_0') \\
          & \Cstep {\InP {c} {\aenc(\pair {N, \ax_3}, R, \ax_2)}} (\multi {\bar {A}, \OutP {c} {m_B'}}, \Phi_0') \\
          & \cstep {\OutP {c} {\ax_4}} (\multi {\bar {A}, 0}, \Phi_0' \cup \{\ax_4 \mapsto m_B'\})
      \end{array}\]
      This does not leak information to the attacker, assuming they cannot distinguish the messages \(m_B\) and \(m_B'\).
      All in all, the set of \emph{traces} of the process, i.e. the set of all possible sequences of reductions, characterises all possible executions of the protocol in an active adversarial environment.
    \end{example}

    As a final note, let us observe that the original pi calculus \cite{MPW92} (referred as the \emph{pure pi calculus} in this article) can be seen as a special case of our model.
    Indeed the fragment without replication is retrieved when \(\sigc\), \(\sigd\) and \(\R\) are empty.
    This restriction makes the transition relation finitely branching up to bijective renaming of attacker-generated constants.

\subsection{Security properties} \label{sec:equivalence}

  \paragraph{Against a passive attacker}
    We first define the notion of \emph{static equivalence} that is often used to model security against a passive attacker in that it is only an equivalence of frames, i.e. it does not involve the operational semantics.
    It expresses that the knowledge obtained by eavesdropping in two different situations does not permit the attacker to distinguish them.
    For example no differences can be observed between \(\{\ax_1 \mapsto k\}\) and \(\{\ax_1 \mapsto k'\}\) if \(k,k' \in \Nall\) because, intuitively, two fresh nonces look like random bitstrings from an external observer's point of view.
    However the situation is different with the frames
    \begin{align*}
      \Phi & = \{\ax_1 \mapsto k, \ax_2 \mapsto k\} &
      \Psi & = \{\ax_1 \mapsto k', \ax_2 \mapsto k\} &
      \mbox{with } & k \neq k'
    \end{align*}
    Indeed, even if no differences can be made between \(k\) and \(k'\) in isolation, the attacker observed two identical messages in the first situation but two different messages in the second situation.
    In particular we say that the equality test ``\(\ax_1 = \ax_2\)'' distinguishes the two frames (because it holds in \(\Phi\) but not in \(\Psi\)).
    Besides, in our constructor destructor algebra it is also possible to observe destructor failures.
    For example the following frames can be distinguished:
    \begin{align*}
      \Phi & = \{\ax_1 \mapsto k, \ax_2 \mapsto \aenc(m,r,\pk(k))\} &
      \Psi & = \{\ax_1 \mapsto k', \ax_2 \mapsto \aenc(m,r,\pk(k))\}
    \end{align*}
    Indeed crafting the recipe \(\adec(\ax_2, \ax_1)\) (i.e. decrypting the last observed message with the first one) succeeds in the first situation but triggers a decryption failure in the second.
    Static equivalence has been extensively studied in the literature (see e.g. \cite{AC06,CDK12,BCD13,CBC11}).
    Formally:

    \begin{definition} \label{def:static-equivalence}
      Two frames \(\Phi\) and \(\Psi\) of same domain are \emph{statically equivalent}, written \(\Phi \StatEq \Psi\), when for all recipes \(\xi,\zeta\):
      \begin{enumerate}
        \item \(\msg(\xi \Phi)\) if and only if \(\msg(\xi \Psi)\)
        \item assuming \(\msg(\xi \Phi)\) and \(\msg(\zeta \Phi)\), \(\xi \Phi \norm = \zeta \Phi \norm\) if and only if \(\xi \Psi \norm = \zeta \Psi \norm\).
      \end{enumerate}
      This definition is lifted to extended processes by writing \(A \StatEq B\) instead of \(\Phi(A) \StatEq \Phi(B)\).
    \end{definition}

    \begin{example}
      The fact that the two frames
      \begin{align*}
        \Phi & = \{\ax_1 \mapsto \aenc(m,r,\pk(k))\} &
        \Psi & = \{\ax_1 \mapsto k'\} &
        \mbox{with }\ m \in \sig_0\ \mbox{ and }\ k,k',r \in \Nall
      \end{align*}
      are statically equivalent intuitively models that encryption makes messages unintelligible (in that the attacker cannot distinguish a ciphertext from a fresh nonce).
      Naturally this does not hold any more once the decryption key is revealed.
      Formally:
      \(\Phi \cup \{\ax_2 \mapsto k\} \ \not \StatEq\ \Psi \cup \{\ax_2 \mapsto k\}\)
      as witnessed by the recipe \(\xi = \adec(\ax_1, \ax_2)\) whose computation succeeds in the first frame but triggers a decryption failure in the second.
      Without going to the extreme extent of revealing the key, the two situations are also distinguishable if we weaken the cryptographic assumptions on \(\aenc\).
      For example, recalling the considerations of
      Example~\ref{ex:standard-signature}, if we do not suppose the
      encryption scheme to be  \emph{key concealing} anymore by adding the rule
      \[\getkey(\aenc(x,y,\pk(z))) \to \pk(z)\]
      then \(\Phi\) and \(\Psi\) are distinguished by the recipe \(\getkey(\ax_1)\) whose destructor succeeds in \(\Phi\) but fails in \(\Psi\).
      The same fact would arise using the weaker rewrite rule \[\testaenc(\aenc(x,y,\pk(z))) \to \okfun\] that tests whether a bitstring is a ciphertext.
    \end{example}

  \paragraph{Against an active attacker}
    Dynamic extensions of static equivalence consider distinguishability for an attacker interacting actively with protocols.
    Consider for example a protocol modelled by a process \(P\) manipulating a nonce \(k\).
    A possible model of the secrecy of \(k\) can be formalised by a non-interference statement:
    there is no observable difference in the behaviour of the protocol when \(k\) is replaced by another term.
    In this article we study several relations modelling the underlying notion of indistinguishability.
    For completeness, we also present their associated pre-orders that can be useful modelling tools in situations where only inclusion relations are to be expressed.

    \begin{definition}[Trace equivalence] 
      If \(A\) and \(B\) are extended processes, we write \(A \TraceIncl B\) when for all traces \(A \Cstep{\tr} A'\), there exists a trace \(B \Cstep {\tr} B'\) such that \(A' \StatEq B'\).
      We say that \(A\) and \(B\) are \emph{trace equivalent}, written \(A \TraceEq B\), when \(A \TraceIncl B\) and \(B \TraceIncl A\).
    \end{definition}
    \begin{definition}[Simulation, (Bi)similarity]
      \label{def:(bi)simulation}
      A \emph{labelled simulation} (or simply \emph{simulation}) is a relation \(\R\) such that for all extended processes \(A,B\), \(A \mathrel{\R} B\) entails
      \begin{enumerate}
        \item \(A \StatEq B\)
        \item for all transitions \(A \cstep {\alpha} A'\), there exists a trace \(B \Cstep {\alpha} B'\) such that \(A' \mathrel{\R} B'\)
      \end{enumerate}
      We call \(\Simu\) (\emph{simulation preorder}) the largest simulation, and \(\Simi\) (\emph{labelled similarity}, or simply \emph{similarity}) the relation $\Simu \cap \Simuinv$.
      \emph{Bisimilarity} \(\LabBis\) is the largest symmetric simulation.
    \end{definition}

    Note in particular that
    \[\LabBis {\subset} \Simi {\subset} \TraceEq\]
    i.e. two bisimilar processes are always similar, and two similar processes are always trace equivalent.
    These equivalences are well established as means to express security properties \cite{AG99,ABF17}.
    Trace equivalence has been studied intensively for security protocols \cite{CCD11,ACK16,CCD13,CKR18} while, for example, labelled bisimilarity is used as a characterisation for \emph{observational equivalence}~\cite{ABF17}.

    Each equivalence implies slightly different adversaries. As shown in~\cite{CCD13}, \(\TraceEq\) characterizes may-testing, i.e., equivalence in the presence of an arbitrary adversarial process running in parallel. \(\LabBis\) characterizes observational equivalence~\cite{ABF17} and considers a more adaptive adversary; \(\LabBis\) was also introduced as a proof technique for may-testing in~\cite{AG99}. Finally, it was recently shown~\cite{CCK-csf22} that  \(\Simi\) characterizes 
    a may-testing equivalence in the presence of a probabilistic adversary, i.e. an adversarial process that is allowed to branch probabilistically.

    \begin{example}
      We refer again to the processes modelling the Private Authentication protocol as described in Example~\ref{ex:process}.
      We let for instance the processes \(P_a = B(\sk_B, \pk(\sk_A), N_B, r_B)\) and \(P_c = B(\sk_B, \pk(\sk_C), N_B, r_B)\) modelling the role of \(B\) accepting connections from \(A\) and \(C\), respectively.
      We want to verify whether an adversary would be able to distinguish the two situations.
      This could be modelled for example by
      \begin{align*}
        (\multi {P_a}, \Phi_0) & \TraceEq (\multi {P_c}, \Phi_0) &
        \mbox{with } \Phi_0 & = \{\ax_1 \mapsto \pk(\sk_A), \ax_2 \mapsto \pk(\sk_B), \ax_3 \mapsto \pk(\sk_C)\}
      \end{align*}
      The initial frame \(\Phi_0\) models that the attacker knows the public keys of all agents.
      It appears that this equivalence statement holds, the core argument being that for all messages \(u_1,u_2,r_1,r_2\) and \(\pks_1,\pks_2 \in \{\pk(\sk_A), \pk(\sk_B),\pk(\sk_C)\}\), the following frames are statically equivalent:
      \begin{align*}
        \Phi_0 \cup \{\ax_4 \mapsto \aenc(u_1, r_1, \pks_1)\} & &
        \Phi_0 \cup \{\ax_4 \mapsto \aenc(u_2, r_2, \pks_2)\}
      \end{align*}
      In particular this equivalence statement still holds if we weaken the cryptographic assumptions on \(\aenc\) by assuming that a ciphertext is distinguishable from an arbitrary term, which is modelled by adding the rewrite rule \(\testaenc(\aenc(x,y,\pk(z))) \to \okfun\).
      However trace equivalence is violated if we add the rule \(\getkey(\aenc(x,y, \pk(z))) \to \pk(z)\).
      A possible attack trace is, with \(N,R \in \sig_0\):
      \[\begin{array}{@{}ll@{\qquad}l@{}}
        (\multi {P_a}, \Phi_0)
          & \Cstep {\InP {c} {\aenc(\pair {N,\ax_1},R,\ax_2)}} (\multi {\OutP {c} {u}}, \Phi_0)
            & \mbox{with } u = \aenc(\pair {N, N_B, \pk(\sk_B)}, r_B, \pk(\sk_A))\\
          & \cstep {\OutP {c} {\ax_4}} (\multi {0}, \Phi)
            & \mbox{with } \Phi = \Phi_0 \cup \{\ax_4 \mapsto u\}
      \end{array}\]
      Indeed there is only one trace in the other process taking the same actions:
      \[\begin{array}{ll@{\qquad}l}
        (\multi {P_c}, \Phi_0)
          & \Cstep {\InP {c} {\aenc(\pair {N,\ax_1},R,\ax_2)}} (\multi {\OutP {c} {v}}, \Phi_0)
            & \mbox{with } v = \aenc(N_B, r_B, \pk(\sk_B))\\
          & \cstep {\OutP {c} {\ax_4}} (\multi {0}, \Psi)
            & \mbox{with } \Psi = \Phi_0 \cup \{\ax_4 \mapsto v\}
      \end{array}\]
      and \(\Phi \not\StatEq \Psi\) because the recipe \(\xi = \getkey(\ax_4)\) is evaluated to \(\pk(\sk_A)\) in \(\Phi\) and to \(\pk(\sk_B)\) in \(\Psi\).
      That is, the recipes \(\xi\) and \(\zeta = \ax_1\) are equal in \(\Phi\) but not in \(\Psi\).
    \end{example}

\paragraph{In practice: security goals for Private Authentication}

  We now demonstrate in more details how equivalence properties can be used to model security in practical scenarios through a complete case study.
  We model the three security goals of the Private Authentication Protocol described in Example~\ref{ex:process}.
  For simplicity we present the simplest scenario of a single session of the protocol in this section (i.e. only one instance of the roles of \(A\) and \(B\) communicating in parallel).
  Of course a more extensive analysis needs to consider more parallel sessions.
  In the following we write \(\pks_X\) and \(\sk_X\) the public and private keys of an identity \(X\) and
  \[P(A,C,N_A,r_A,B,D,N_B,r_B) =
    X(\sk_A, \pks_C, N_A, r_A) \mid B(\sk_B, \pks_D, N_B, r_B)\]
  the process that runs in parallel the roles of \(A\) attempting to initiate a communication with \(C\) and \(B\) accepting a connection from a unique identity \(D\).
  We assume an initial frame \(\Phi_0\) that contains the public keys of all identities involved in the process.

    The security goals state that the protocol should conceal the identities of the participants (including \(C\) the recipient of \(A\) and \(D\) the connection accepted by \(B\)) and the values of the exchanged nonces.
    A possible formalisation is that there should not be any observable difference in \(P(A,C,N_A,r_A,B,D,N_B,r_B)\) when replacing the identities by others and \(N_A,N_B\) by any other value.
    That is, for all identities \(A,B,C,D,A',B',C',D'\), all terms
    \(N_A,N_B,N_A',N_B'\), and fresh names \(r_A,r_B\),
    \[(\multi {P(A,C,N_A,r_A,B,D,N_B,r_B)}, \Phi_0) \approx (\multi {P(A',C',N_A',r_A,B',D',N_B',r_B)}, \Phi_0)\]
    where \(\approx\) is either \(\TraceEq\), \(\Simi\) or \(\LabBis\) and
    \(\Phi_0\) is a frame whose image contains the public keys of all indentities involved.
    This models a form of non-interference property and has been called \emph{strong secrecy} in \cite{B04}. 

\subsection{Complexity and decision problems} \label{sec:complexity}
So far we detailed how process equivalences can be used to model privacy preservation in security protocols.
Our goal in this article is to present decidability and complexity results for static equivalence, trace equivalence and labelled bisimilarity.

\paragraph{On sizes}
  Before going further we need to clarify the notion of \emph{size} of the inputs since it plays a central role in complexity analyses.
  This is particularly important for our purpose since there exist several conventions for representing terms.
  The \emph{tree size} of term \(t\) refers to its number of symbols and is written \(\tsize {t}\).
  It corresponds to a classical representation of a term as a tree.
  On the other hand some of our complexity results are stated w.r.t. a succinct representation of terms as Directed Acyclic Graphs (DAG) with maximal sharing (which may be exponentially more concise).
  If \(\subterms(t)\) is the set of subterms of \(t\), the \emph{DAG size} of \(t\) refers to the cardinality \(|\subterms(t)|\) and is written \(\dagsize {t}\).
  This definition is lifted to sets and sequences of terms with the sharing common to all elements of the structure.
  The size of a signature \(\sig\) is the sum of the arities of the symbols of \(\sig\) (which is finite since \(\sigc\) and \(\sigd\) are finite) and the size of a rewrite system \(\R\) is the sum of the sizes of the two hand sides of its rules.
  The size of a process is the sum of the number of operators of the process and of the sizes of all terms appearing in the process (in conditionals, channels, and output terms).
  We emphasise that
  \begin{itemize}
    \item A complexity upper bound stated w.r.t. the DAG size of the inputs is a stronger result than the same upper bound stated w.r.t. the tree size.
    \item On the contrary a complexity lower bound stated in DAG size is a weaker result than the corresponding result in tree size.
  \end{itemize}
  In this article we only address the strongest configurations: lower bounds in the tree representation of terms, upper bounds in DAG.

\paragraph{Complexity classes}
  We now shortly remind some background about complexity, mainly introducing our notations.
  Given \(f : \N \to \N\), we define \(\mbox{\ctime}(f)\) (resp. \(\mbox{\cspace}(f)\)) the class of problems decidable by a deterministic Turing machine running in time (resp. in space) at most \(f(n)\) where \(n\) is the size of the parameters of the problem.
  It is common to define the following classes:
  \[\begin{array}{rclcrcl}
    \mbox{\logspace} & = & \displaystyle \bigcup_{p \in \N} \mbox{\cspace}(\log(n^p)) & & \mbox{\ptime} & = & \displaystyle \bigcup_{p \in \N} \mbox{\ctime}(n^p)\\[5mm]
    \mbox{\pspace} & = & \displaystyle \bigcup_{p \in \N} \mbox{\cspace}(n^p) & & \mbox{\complexityfont EXPTIME} & = & \displaystyle \bigcup_{p \in \N} \mbox{\ctime}(2^{n^p})
  \end{array}\]
  One can define their non-deterministic counterparts {\complexityfont NLOGSPACE} ({\complexityfont NL} for short), {\complexityfont NPTIME}, {\complexityfont NPSPACE} and {\complexityfont NEXPTIME}.
  Given a (non-deterministic) class \(\C\), we call co-\(\C\) the class of problems whose negation is in \(\C\).
  From now on we often omit the suffix \ctime in the name of time complexity classes for the sake of succinctness.
  Then it is known that:
  \[
    \mbox{\logspace}
      \subseteq \mbox{\complexityfont NL}
      = \mbox{co{\complexityfont NL}}
      \subseteq \mbox{\complexityfont P}
      \subseteq \mbox{\np,co\np}
      \subseteq \mbox{\pspace}
      = \mbox{\complexityfont NPSPACE}
      \subseteq \mbox{\complexityfont EXP}
      \subseteq \mbox{\complexityfont NEXP,coNEXP}
  \]

  To define complete problems for complexity classes above \ptime we use  classical \emph{many-to-one polytime reductions}.
  We also mention the notion of \emph{oracle} reduction, deciding a problem with a constant-time black box for another problem:
  the class of problems decidable in \(\C\) with an oracle for a problem \(Q\) is noted \(\C^Q\).
  When \(Q\) is complete for a class \(\D\) w.r.t. a notion of reduction executable in \(\C\), we may write \(\C^{\D}\) instead;
  in particular \(\C^{\D} = \C^{\mbox{\scriptsize co}\D}\).
  This kind of reduction is needed to define the last complexity classes we will use in this article: the \emph{polynomial hierarchy}, which is a collection of complexity classes between \ptime and \pspace.
  Indeed the difference between \np and \pspace lies in their capacity to express quantifier alternation;
  the usual complete problems considered for these two complexity classes are, given a boolean formula \(\varphi\):
  \begin{itemize}
    \item \sat (\np complete): does \(\exists x_1, \ldots, x_n. \varphi(x_1, \ldots, x_n)\) hold?
    \item \qbf (\pspace complete): does \(\forall x_1, \exists y_1, \ldots, \forall x_n, \exists y_n. \varphi(x_1,y_1, \ldots, x_n,y_n)\) hold?
  \end{itemize}
  The polynomial hierarchy characterises all classes corresponding to intermediate alternations.

  \begin{definition}
    The polynomial hierarchy {\complexityfont PH} consists of the classes \(\Sigma_n\) defined by \(\Sigma_0 = \mbox{\ptime}\) and \(\Sigma_{i+1} = \mbox{\np}^{\Sigma_i}\).
    In particular, \(\Sigma_1= \mbox{\np}\).
    We also write \polyh {i} for co\(\Sigma_i\).
  \end{definition}

\paragraph{Problems studied in this article}
  We thus study the following decision problems:

  \problemdescr[\StaticEquiv]
    {A rewriting system \(\R\), two frames \(\Phi\) and \(\Psi\).}
    {\(\Phi \StatEq \Psi\) for \(\R\)?}

  \problemdescr[\TraceEquiv]
    {A rewriting system \(\R\), two processes \(P\) and \(Q\).}
    {\((\multi {P}, \emptyset) \TraceEq (\multi {Q}, \emptyset)\) for \(\R\)?}
  
  \medskip 

  We also consider \TraceInclus, \Simulation, \Similarity, \Bisimilarity to be the analogue problems of \TraceEquiv, replacing trace equivalence by the relations \(\TraceIncl\), \(\Simu\), \(\Simi\), and \(\LabBis\), respectively.
  As we explained previously these problems are undecidable in general and we need to put restrictions on the inputs, in addition to the restriction to a bounded number of sessions, which is inherent to our model.
  Typically our results all include the restriction (inherent to our model) to constructor destructor theories and bounded processes.
  When we say for example that ``\TraceEquiv is decidable for constructor destructor subterm convergent rewriting systems'' it means that we are studying the following decision problem:

  \problemdescr
    {A constructor destructor subterm convergent rewriting system \(\R\), two processes \(P\) and \(Q\).}
    {Are \(P\) and \(Q\) trace equivalent (for \(\R\))?}

  The way we state the problem implies that complexity analyses need to account for the size of all inputs, including the rewriting system.
  However the treatment of this question is not uniform in the literature.
  Complexity analyses in \cite{AC06,B07,CCD13} consider the rewriting system as a constant of the problem.
  For the example above this means considering, for each constructor destructor subterm convergent rewriting system \(\R\), the following decision problem:

  \problemdescr
    {Two bounded processes \(P\) and \(Q\).}
    {Are \(P\) and \(Q\) trace equivalent (for \(\R\))?}

  For this formulation of the problem, generic completeness results w.r.t. complexity classes are not possible in general because different complexities may arise for each rewriting system~\(\R\).
  This is for example the case in \cite{AC06}, where \StaticEquiv is proven \ptime for any fixed subterm convergent rewriting system:
  the problem is indeed \ptime-hard for some of them \cite{CKR20} but also \logspace for others as we prove it in this article.
  All existing procedures \cite{AC06,CDK12,CBC11} are actually exponential in the size of the rewriting system.
  This is why we refer to this problem as \emph{parametric equivalence} and say by opposition that \emph{general equivalence} is the initial variant with the rewriting system considered as part of the input.
  We argue that the latter is more relevant today as the rewriting system can now be specified by the user in many automated tools.
  This motivated for example to prove in \cite{CKR20} that the complexity results of \cite{B07,CCD13} (stated in the parametric setting) were also valid in the general setting.

\section{Structure of the decision procedure}
\label{sec:symbolic}


We detail in this section our overall decision procedure for equivalence properties, intuitively reducing them to solving some forms of symbolic constraints.
We express this through a novel notion of \emph{partition tree} that crisply characterises equivalence proofs.
We formalise in this section the main properties of this tree and describe how to derive an actual decision procedure from it;
the constraint solving procedure necessary to generate the tree itself is then later detailed in Section~\ref{sec:ptree}.

\subsection{The symbolic approach for decidability}

  Our decision procedures rely on a \emph{symbolic semantics}, by opposition to the usual semantics of the calculus (recall Figure~\ref{fig:semantics}) that we will call the \emph{concrete} semantics from now on.
  Specifically, rather than fetching concrete input terms from the active attacker, our symbolic semantics abstract these inputs and only record the constraints they should satisfy to execute the protocol. 
  This thus provides a finite representation of the infinite set of actions potentially available to the attacker.
  For example let \(c \in \sig_0\), \(\hfun/1 \in \sigc\), \(k \in \Nall\) and consider the process
  \[\begin{array}{l@{\ }l}
    P =
      & \OutP {c} {k}.\,
      \InP {c} {x}.\,
      \IfP\ \fst(x) = k\ \ThenP\ \OutP {c} {\hfun(x)}
  \end{array}\]
  The trace executing the output \(\hfun(x)\) will gather constraints that intuitively indicate that:
  \begin{enumerate*}
    \item \(x\) is a term deducible by the attacker from the frame \(\{\ax_1 \mapsto k\}\); and
    \item \(x = \pair {k,y}\) for some term \(y\).
  \end{enumerate*}
  A constraint solving algorithm, detailed in Section~\ref{sec:mgs-gen}, can then be used to show that these constraints have a \emph{solution}:
  the recipe \(\xi = \pair {\ax_1,a}\), \(a \in \sig_0\), can be used to compute the input term \(x\) and satisfy the constraints, which justifies that the output of \(\hfun(x)\) is reachable.
  Similar approaches are common to decide reachability or equivalence properties of bounded processes~\cite{B07,CCD13}; our approach is however more widely applicable due to our absence of syntactic restrictions on processes.

\paragraph{Formalising symbolic constraints}
  We first introduce a new type of variables, used in recipes:

  \begin{definition}[second-order terms]
    We consider a partition of the set of (non-axiom) variables \(\X \smallsetminus \AX = \X[1] \uplus \X[2]\). 
    The elements of \(\X[1]\) are called \emph{first-order variables} and correspond to those we used so far in terms (in processes, frames, rewrite rules).
    Those of \(\X[2]\) are called \emph{second-order variables} and are used to represent an undefined recipe.
    A \emph{first-order term} is an element of \(\termset(\sig \cup \sig_0 \cup \Nall \cup \X[1])\) and a \emph{second-order term} is an element of \(\termset(\sig \cup \sig_0 \cup \AX \cup \X[2])\).

    We now distinguish \(\vars[1](u) = \vars(u) \cap \X[1]\), \(\vars[2](u) = \vars(u) \cap \X[2]\), and \(\axioms(u) = \vars(u) \cap \AX\).
    Note that we say that a second-order term \(t\) is ground if \(\vars[2](t) = \emptyset\), i.e., \(t\) may contain axioms.
    By definition, a recipe is therefore a ground second-order term.
    We also adapt the other notations of the term algebra to reflect the separation:
    \(\subterms[1]\), \(\subterms[2]\), \ldots
  \end{definition}

  In practice, when executing an input instruction \(\InP{c}{x}\) in the symbolic semantics, \(x\) will be associated to a fresh second-order variable written \(\quanti{X}{i}\), where \(X \in \X[2]\) will serve as a placeholder for the recipe used to compute \(x\), and \(i \in \N\) indicates that only the first \(i\) axioms of the frame are available to compute the recipe in question.
  This is formalised by the following, natural extension of the notion of substitution:

  \begin{definition}[second-order substitutions]
    We suppose a partition \(\X[2] = \biguplus_{i \in \N} \Xsndi[=]{i}\) where each class \(\Xsndi[=]{i}\) is infinite.
    We also write \(\Xsndi{i} = \bigcup_{j=0}^i \Xsndi[=]{j}\).
    If \(X\) is a second-order variable we may write \(\quanti{X}{i}\) to emphasise that \(X \in \Xsndi[=]{i}\) and say in this case that \emph{\(X\) is of type \(i\)}.
    A \emph{second-order substitution} is then a substitution \(\Sigma\) of domain \(\dom(\Sigma) \subseteq \X[2]\) that \emph{respects types}:
    \begin{align*}
      \forall \quanti{X}{i} \in \dom(\Sigma),\ X \Sigma \in \recipeset_i & &
      \text{where } \recipeset_i = \termset(\sig \cup \sig_0 \cup \Xsndi{i} \cup \{\ax_1, \ldots, \ax_i\})
    \end{align*}
  \end{definition}

  Altogether, we can then define the constraints that we use to characterise the possible values that an input term \(x\) may take:

  
  \begin{definition}[atoms]\label{def:atoms}
    We consider the following three kinds of atoms:
    \begin{enumerate}
      \item \emph{deduction fact} \(\xi \dedfact u\) where \(u\) is a message in normal form and \(\xi\) is a second-order term such that \(\rootf(\xi) \notin \sigc\);
      \item \emph{second-order equations} \(\xi \eqs \zeta\) where \(\xi\) and \(\zeta\) are two second-order terms;
      \item \emph{(first-order) equations} \(u \eqs v\) where \(u\) and \(v\) are two first-order terms (not necessarily messages).
    \end{enumerate}
    The negation \(\neg(\alpha \eqs \beta)\) of an equation is written \(\alpha \neqs \beta\) and called a \emph{disequation}.
  \end{definition}


  \begin{definition}[constraint] \label{def:constraint}
    An \emph{atomic constraint} (or an \emph{atomic formula}) is an atom that is either a deduction fact, a second-order equation, or a first-order equation $u \eqs v$ where $u$ and $v$ are constructor terms.
    A \emph{constraint} is then a first-order formula over atomic constraints, that is, either an atomic constraint, \(\top\), \(\bot\), or of the form \(\varphi \wedge \psi\), \(\varphi \vee \psi\), \(\neg \varphi\), \(\forall x.\varphi\), or \(\forall \quanti{X}{n}.\varphi\) for \(\varphi,\psi\) constraints.
    Note that \(\vars(\varphi)\) then refers to the \emph{free} variables of the constraint \(\varphi\).
  \end{definition}

  A deduction fact \(\xi \dedfact u\) indicates that term \(u\) is deducible by the recipe \(\xi\) and second-order equations \(\xi \eqs \zeta\) are used to put restrictions on which recipes \(\xi\) may be used to do so.
  For example \(\quanti{X}{i} \dedfact x\) states that the variable \(x\) is to be replaced by a term deducible by the attacker using at most the \(i\) first outputs of the frame;
  a constraint solving procedure may then impose that \(\exists \quanti{Y}{i}.\,X \eqs \ffun(Y)\), i.e., that the underlying recipe should have a \(\ffun\) symbol at its root.
  Equations reflect the syntactic equalities that the first-order terms verify.
  Typically when executing \(\IfP\ \fst(x) = t\ \ThenP\ P\ \ElseP\ Q\), the positive branch will intuitively lead to the constraint \(\exists y.\, x \eqs \pair {t,y}\) and the negative branch to \(\forall y.\, x \neqs \pair {t,y}\).

\paragraph{Constraint systems}
  Finally we define and give some properties of \emph{constraint systems} that are used to collect the first-order constraints induced by a given execution of a process.
  
  \begin{definition}[constraint system]
    A \emph{constraint system} is a triple \(\C = (\Phi,\Df,\Eqfst)\) whose elements are of the following form:
    \begin{enumerate}
      \item \(\Phi = \{\ax_1 \mapsto t_1, \ldots, \ax_n \mapsto t_n\}\) is a frame (not necessarily ground)
      \item \(\Df\) is a set of constraints of the form \(X \dedfact x\), with \(X \in \Xsndi{n}\), \(x \in \X[1]\).
      We also require the \emph{origination property}:
      for all \(i \in \eint {1} {n}\), for all \(x \in \vars(t_i)\), there exists \(X \in \Xsndi{i-1}\) such that \((X \dedfact x) \in \Df\).
      \item \(\Eqfst\) is a set of constraints of the form \(u \eqs v\) or \(\forall z_1 \ldots \forall z_k. \bigvee_{j=1}^r u_j \neqs v_j\).
    \end{enumerate}
  \end{definition}

  The components of \(\C\) are also written \(\Phi(\C)\), \(\Df(\C)\) and \(\Eqfst(\C)\).
  The set \(\Df\) contains all input binders \(x\) that have been executed, each mapped to a second-order variable \(X\) that will serve as a placeholder for the corresponding recipe.
  Next the origination property expresses that when reference is made to an input \(x\) in an output \(t_i\), this input should be computed only from the previous outputs \(t_1, \ldots, t_{i-1}\).
  This is a natural invariant preventing cyclic input-output dependencies, always satisfied in practice.
  Finally \(\Eqfst\) is a set of (dis)equalities imposed on the protocol messages by conditionals, among others.
  We will formalise in Section \ref{sec:mgs-def} the semantics of these constraints through a notion of \emph{solution}.

  \begin{remark}[notational conventions]
    We use several convenient notations throughout the article to lighten the presentation of constraints.
    First of all we do not make a difference between sets and conjunctions of constraints:
    for instance we may write \(\Eqfst = \bigwedge_{i=1}^n \varphi_i\) instead of \(\Eqfst = \{\varphi_i\}_{i=1}^n\) and conversely.
    We also interpret a substitution \(\sigma\) as the set of equations \(\eqnset = \{x \eqs x \sigma \mid x \in \dom(\sigma)\}\).
  \end{remark}

\subsection{(Most general) unifiers} \label{sec:unification}
We now recall some basics on term unification, a key concept in symbolic models that has some specificities in our context, in particular regarding second-order terms.

\paragraph{Unification of first-order terms}
Two first-order terms \(u\) and \(v\) are \emph{unifiable} if there exists a substitution \(\sigma\), called a \emph{unifier}, such that \(u \sigma = v \sigma\).
For example the terms \(u = \pair{\sdec(x,y),z}\) and \(v = \pair{z_1,z_2}\) are unified by \(\sigma = \{z_1 \mapsto \sdec(x,y), z_2 \mapsto z\}\).
The terms \(u\) and \(z'\) are unifiable as well using \(\sigma = \{z' \mapsto u\}\), but the terms \(u\) and \(z\) are not.
More generally, a unifier \(\sigma\) of a set of equations \(\eqnset = \{u_i \eqs v_i\}_{i=1}^n\) is a unifier of \(u_i\) and \(v_i\) for all \(i\).
A classical characterisation of the set of unifiers of two terms is based on \emph{most general unifiers}:

\begin{definition}[most general unifier]
  A unifier \(\sigma\) of \(\eqnset\) is said to be a \emph{most general} one if, for any \(\theta\) unifier of \(\eqnset\), there exists \(\tau\) such that \(\theta = \sigma \tau\).
  In this case, we write \(\sigma = \mgu(\eqnset)\) (and it is unique up to variable renaming).
  When \(\eqnset\) is not unifiable, we write \(\mgu(\eqnset) = \bot\).
\end{definition}

A straightforward inductive procedure allows to decide whether \(\eqnset\) is unifiable and, if it is, to compute \(\mgu(\eqnset)\)
We assume w.l.o.g. that this computation does not introduce variables, that is, if \(\sigma = \mgu(\eqnset)\) then \(\dom(\sigma) \cup \vars(\im(\sigma)) \subseteq \vars(\eqnset)\).
We also require that \(\dom(\sigma) \cap \vars(\im(\sigma)) = \emptyset\), that is, applying a mgu twice has no more effect than applying it once.
Note as well that all unifiers are instances of the mgu but the converse is also true, that is,
all instances of a mgu are unifiers.
By convenience we also write \(\mgu(\eqnset)\) in the case where \(\eqnset\) contains disequations (typically when writing \(\mgu(\Eqfst(\C))\)):
in this case only equations are taken into account and nothing ensures that the mgu satisfies the disequations of \(\eqnset\).

However mgu's are only syntactic:
when taking the rewriting system \(\R\) into account we say that \(\sigma\) is a \emph{unifier modulo theory} of \(\eqnset\) when for all \((u \eqs v) \in \eqnset\), \(u\sigma\norm = v\sigma\norm\).
A standard procedure based on narrowing (not detailed here) allows to compute \emph{most general unifiers modulo \(\R\)} when~\(\R\) is subterm convergent among others~\cite{CD05}.
However unlike the syntactic case they are not unique in general:

\begin{definition}[most general unifier modulo theory]
  We let \(\eqnset\) be a set of equations and \(\R\) be a convergent rewriting system.
  A set of \emph{most general unifiers modulo \(\R\)} is a set of substitutions \(\mguR(\eqnset)\) that verifies the following properties:
  \begin{enumerate}
    \item for all \(\sigma \in \mguR(\eqnset)\), \(\sigma\) is a unifier of \(\eqnset\) modulo \(\R\)
    \item for all \(\theta\) unifier of \(\eqnset\) modulo \(\R\), there exists \(\sigma \in \mguR(\eqnset)\) and a substitution \(\tau\) such that for all \(x \in \vars(\eqnset)\), \(x \theta\norm = x \sigma \tau\norm\)
  \end{enumerate}
\end{definition}

Again we emphasise that equality modulo \(\R\) only operates on valid messages, that is, if \(\sigma \in \mguR(u \eqs v)\) then \(u \sigma\) and \(v \sigma\) verify the \(\msg\) predicate.
A typical use case we consider in the symbolic semantics is \(\mguR(u \eqs u)\), which is the most general substitution \(\sigma\) such that \(\msg(u \sigma)\) holds (if any).
For example if \(u = \adec(x,y)\) we have \(\mguR(u \eqs u) = \{\sigma\}\), where:
\begin{align*}
  \sigma & = \{x \mapsto \aenc(x',x_r,\pk(y')), y \mapsto y'\} & & x',x_r,y' \in \X\ \text{fresh}
\end{align*}
This example also highlights that, unlike the syntactic case, computing mgu's modulo theory may require to introduce new variables.
This also makes it possible to enforce that \(\dom(\sigma) \cap \vars(\im(\sigma)) = \emptyset\).

\paragraph{Unification of second-order terms}
Intuitively, the unification of two second-order terms \(\xi\) and \(\zeta\) modulo theory means that they deduce the same first-order term \(u\) w.r.t. a given frame~\(\Phi\).
This unusual kind of unification is performed as a part of our constraint solving algorithm using a dedicated kind of constraint written \(\xi \eqf \zeta\), detailed in Section~\ref{sec:formulas}.
However, even the computation of syntactic mgu's has some subtleties for second-order terms that we discuss below.

As in the first-order case, a syntactic unifier of \(\xi\) and \(\zeta\) is a second-order substitution \(\Sigma\) such that \(\xi \Sigma = \zeta \Sigma\).
However, computing \(\Sigma\) is not as simple as usual due to the variable types.
Indeed, we recall that by definition, a second-order substitution has to respect types, that is, a variable \(\quanti{X}{n}\) cannot be mapped to a term containing axioms \(\ax_i\) or variables \(\quanti{Y}{i}\) if \(i > n\).
Say for instance we want to unify the two second-order terms \(\quanti{X}{1}\) and \(\ffun(\quanti{Y}{2})\):
a regular computation of the mgu would yield the substitution \(\Sigma = \{X \mapsto \ffun(Y)\}\), which does not respect the type of \(X\).
In this case, one solution is to introduce a fresh variable \(\quanti{Z}{1}\) and to choose the following unifier:
\[\mgu(X \eqs \ffun(Y)) = \{X \mapsto \ffun(Z), Y \mapsto Z\} = \Sigma\{Y \mapsto Z\}\,.\]
%
Given a second-order term \(\xi\), let us write \(\maxarity{\xi}\) the maximal type of second-order variables and axioms appearing in \(\xi\), that is, the minimal type \(i\) such that \(\xi \in \recipeset_i\).
Formally,
\[\maxarity{\xi} = \min \{i \in \N \mid \xi \in \recipeset_i\}\,.\]
The mgu of a conjunction of equations \(\varphi\) is then computed inductively as follows:
\[\begin{array}{l}
  \mgu(\top) = \top \\[3mm]
  \mgu\left(\varphi \wedge \ffun(\xi_1, \ldots, \xi_n) \eqs \gfun(\zeta_1, \ldots, \zeta_n)\right) =
  \left\{\begin{array}{ll}
    \bot & \mbox{if } \ffun \neq \gfun \\
    \mgu\left(\varphi \wedge \bigwedge_{i=1}^n \xi_i \eqs \zeta_i\right) & \mbox{if } \ffun = \gfun
  \end{array}\right. \\[8mm]
  \mgu\left(\varphi \wedge \quanti{X}{i} \eqs \xi\right) =
  \left\{\begin{array}{l@{\qquad}l}
    \bot & \mbox{if \(X \in \vars[2](\xi)\) and \(\xi \neq X\)} \\[2mm]
    \bot & \mbox{else if \(\exists \ax_j \in \axioms(\xi), j > i\)} \\[2mm]
    \Sigma_0 \Sigma
      & \mbox{else if \(\xi \notin \X[2]\), \(\quanti{Y}{j} \in \vars[2](\xi)\), \(j > i\), \(\quanti{Z}{i}\) fresh and}\\
      & \text{with } \Sigma_0 = \{Y \mapsto Z\} \text{ and } \Sigma = \mgu\left(\varphi\Sigma_0 \wedge \quanti{X}{i} \eqs \xi\Sigma_0\right) \\[2mm]
    \Sigma_0 \Sigma
      & \mbox{else if \(\maxarity{\xi} \leqslant i\), with \(\Sigma_0 = \{X \mapsto \xi\}\) and \(\Sigma = \mgu\left(\varphi\Sigma_0\right)\)}
  \end{array}\right.
\end{array}\]

As before we extend this notation to arbitrary sets \(\eqnset\), that is, we may write \(\mgu(\eqnset)\) even if~\(\eqnset\) contains disequations (which are then ignored during the computation).
The correctness of this function is proved below.

\begin{proposition}[correctness of second-order mgu's]
  For all sets of second-order equations~\(\eqnset\), the computation of \(\mgu(\eqnset)\) terminates.
  Besides we have that \(\mgu(\eqnset) = \bot\) \textit{iff} there exist no unifiers of \(\eqnset\).
  When \(\mgu(\eqnset) \neq \bot\), we have that:
  \begin{enumerate}
    \item \(\mgu(\eqnset)\) is a second-order substitution, i.e., it respects types, and it is a unifier of \(\eqnset\);
    \item for all unifiers \(\Sigma\) of \(\eqnset\), there exists a second-order substitution \(\Sigma_0\) such that \(\Sigma = \mgu(\eqnset) \Sigma_0\).
  \end{enumerate}
\end{proposition}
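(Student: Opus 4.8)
The plan is to adapt the classical Martelli--Montanari correctness argument for syntactic unification to the present typed, second-order setting, and to establish the four claims (termination, the $\bot$-characterisation, the unifier/type-respecting property, and most-generality) simultaneously by well-founded induction on a termination measure. Throughout, ``unifier'' is to be read as \emph{type-respecting} unifier, i.e.\ a second-order substitution $\Sigma$ with $\xi\Sigma = \zeta\Sigma$ for every equation $\xi \eqs \zeta$; this reading is forced by the axiom- and type-related cases and is the only one for which the statement holds. Before the induction I would fix an orientation convention for the processed equation: when exactly one side is a variable I put it on the left, and when both sides are variables I put the one of larger type on the left (breaking ties arbitrarily). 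This makes the variable cases exhaustive; in particular it lets a variable--variable equation $\quanti{X}{i} \eqs \quanti{Y}{j}$ be solved by eliminating the higher-type variable in favour of the lower-type one, which would otherwise fall through all four displayed cases.

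First I would prove termination by exhibiting a measure that strictly decreases along every recursive call, namely the lexicographic triple $(V,S,T)$, where $V$ is the number of distinct second-order variables in the system, $S$ is its total tree size, and $T$ is the sum over all occurrences of second-order variables of their types. The decomposition case ($\ffun = \gfun$) leaves $V$ unchanged and strictly decreases $S$ by removing the two matching root symbols. The elimination case ($\maxarity{\xi} \leqslant i$ with $X \notin \varssnd(\xi)$) strictly decreases $V$, since $X$ is substituted away and $\xi$ introduces no fresh variables. The type-demotion case replaces every occurrence of $\quanti{Y}{j}$ ($j>i$) by a fresh $\quanti{Z}{i}$: this renaming keeps $V$ and $S$ unchanged while strictly decreasing $T$, as the type of each rewritten occurrence drops from $j$ to $i$ and no occurrence grows. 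The degenerate subcase $\xi = \quanti{X}{i}$ merely deletes a trivial equation, decreasing $S$. Hence the recursion terminates.

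For the remaining claims I would induct on $(V,S,T)$, treating each case. The clash ($\ffun \neq \gfun$), occurs-check, and axiom-type-violation cases all return $\bot$, and in each I would exhibit the absence of a type-respecting unifier: a clash is impossible since root symbols are preserved by substitution; the occurs check yields the size contradiction $\tsize{X\Sigma} < \tsize{\xi\Sigma}$; and in the axiom case any $\ax_j$ occurring in $\xi$ with $j>i$ survives in $\xi\Sigma = X\Sigma$, so $X\Sigma \notin \recipeset_i$, contradicting that $\Sigma$ respects the type of $\quanti{X}{i}$. In the decomposition case the unifier sets of $\varphi \wedge \ffun(\xi_1,\ldots,\xi_n) \eqs \ffun(\zeta_1,\ldots,\zeta_n)$ and of $\varphi \wedge \bigwedge_{k} \xi_k \eqs \zeta_k$ coincide exactly, so all claims transfer from the induction hypothesis. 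In the elimination case I would invoke the standard substitution lemma: $\Sigma$ unifies $\varphi \wedge \quanti{X}{i} \eqs \xi$ iff $\Sigma = \Sigma_0\Sigma'$ with $\Sigma_0 = \{X \mapsto \xi\}$ and $\Sigma'$ a unifier of $\varphi\Sigma_0$; since $\maxarity{\xi} \leqslant i$ guarantees $\Sigma_0$ is type-respecting, composing it with the inductively most general and type-respecting $\mgu(\varphi\Sigma_0)$ delivers the required properties.

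The main obstacle is the type-demotion case, which has no first-order analogue. Here I would prove the key lemma that every type-respecting unifier $\Sigma$ of $\quanti{X}{i} \eqs \xi$ necessarily forces $\quanti{Y}{j}$ down to type $i$: since $Y$ occurs in $\xi$ and $X\Sigma = \xi\Sigma \in \recipeset_i$, the subterm of $\xi\Sigma$ at the position of $Y$ equals $Y\Sigma$ and lies in $\recipeset_i$, so $Y\Sigma$ uses no axiom or variable of type exceeding $i$. Consequently, renaming $Y$ to a fresh $\quanti{Z}{i}$ via $\Sigma_0 = \{Y \mapsto Z\}$ loses no generality: I would set up a type-respecting correspondence between the unifiers of the original system and those of $\varphi\Sigma_0 \wedge \quanti{X}{i} \eqs \xi\Sigma_0$ (composing with $\Sigma_0$ in one direction, re-expressing $Z$ in terms of $Y$ in the other), from which $\Sigma_0 \cdot \mgu(\varphi\Sigma_0 \wedge \quanti{X}{i} \eqs \xi\Sigma_0)$ is seen to be a type-respecting most general unifier whenever one exists, and equal to $\bot$ otherwise. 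Care must be taken with the freshness of each $Z$ so that the successive demotions of distinct offending variables do not interfere, and so that the returned substitution $\Sigma$ meets the disjointness convention $\dom(\Sigma) \cap \varssnd(\im(\Sigma)) = \emptyset$ inherited from the first-order case.
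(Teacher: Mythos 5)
Your proposal is correct, and it actually does more than the paper's own proof, which establishes only termination in detail and dismisses the remaining claims as ``straightforward inductions on the definition of \(\mgu\)''. For termination the paper orders systems by the lexicographic triple \((\vars[2](\eqnset), M(\eqnset), F(\eqnset))\) --- the variables as a multiset ordered by type, the multiset of variable occurrences, and the multiset of equation sizes --- so that type demotion decreases the first component (a type-\(j\) variable is replaced by a fresh type-\(i\) one with \(i<j\)), elimination decreases the first or, in the degenerate \(\xi = X\) subcase, the second, and decomposition decreases the third. Your triple \((V,S,T)\) distributes the same information differently, charging demotion to the type-weighted occurrence sum \(T\), elimination to the distinct-variable count \(V\), and decomposition to the total size \(S\); all your case checks are accurate (in particular, elimination introduces no variables outside the old system since \(\vars[2](\xi)\) already occurs in it, so \(V\) strictly drops, and in the \(\xi=X\) subcase \(S\) drops while \(V\) is non-increasing), so both measures are equally valid. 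Where your writeup adds genuine value is precisely in the parts the paper leaves implicit: (i) your orientation convention for variable--variable equations is not cosmetic --- an equation \(\quanti{X}{i} \eqs \quanti{Y}{j}\) with \(j>i\), read with \(X\) on the left, matches none of the displayed cases (the demotion case requires \(\xi \notin \X[2]\) and the elimination case requires \(\maxarity{\xi} \leqslant i\)), so eliminating the higher-type variable is exactly the reading needed for the recursion to be total; (ii) your observation that ``unifier'' must mean \emph{type-respecting} unifier is forced, e.g.\ by the axiom case, where \(\quanti{X}{i} \eqs \ax_j\) with \(j>i\) returns \(\bot\) despite having an untyped unifier; and (iii) your key lemma for the demotion case --- that \(X\Sigma = \xi\Sigma \in \recipeset_i\) together with closure of \(\recipeset_i\) under subterms forces \(Y\Sigma \in \recipeset_i\), so that \(\{Z \mapsto Y\Sigma\}\) is type-respecting and the unifier sets of the original and demoted systems correspond --- is exactly the non-first-order content hidden behind the paper's appeal to a routine induction. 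In short: same proof strategy for termination with an interchangeable measure, and a more complete treatment of correctness and most-generality than the paper provides.
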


\begin{proof}
  We only prove the termination since all other properties can be proved separately by straightforward inductions on the definition of \(\mgu\).
  We let the partial ordering on second order variables \(\preccurlyeq\) given by the types, i.e. \(\quanti{X}{i} \preccurlyeq \quanti{Y}{j}\) \textit{iff}  \(i \leqslant j\).
  Given a set of second-order equations \(\eqnset\) we then let
  \[\mu(\eqnset) = (\vars[2](\eqnset),M(\eqnset),F(\eqnset))\]
  where \(M(\eqnset)\) is the multiset of variables of \(\eqnset\), i.e. multiplicity included, and \(F(\eqnset)\) is the multiset of the sizes of the equations of \(\eqnset\) (where the size of \(\xi \eqs \zeta\) is the number of function symbols in \(\xi\) and \(\zeta\)).
  The first two components are ordered w.r.t. the multiset extension of \(\preccurlyeq\),
  and the third one w.r.t. the multiset extension of \(\leqslant\).
  The overall tuple is ordered w.r.t. the lexicographic composition of the three components.

  If we number from 1 to 7 the axioms defining \emph{mgu}, we can show that \(\mu\) decreases at each recursive call:
  (1), (2), (4) and (5) make no recursive calls;
  (3) preserves \(\vars[2]\) and \(M\), and makes~\(F\) decrease;
  (6) replaces all occurrences of \(Y\) with \(Z\) that has a lower type which makes \(\vars[2]\) decrease.
  Regarding (7) two cases can arise:
  either \(\xi = X\) or \(X \notin \vars[2](\xi)\).
  In the first case \(\vars[2]\) is non increasing and \(M\) is decreasing since two occurrences of \(X\) are removed and the rest of the formula \(\eqnset\) is left unchanged.
  In the second case \(\vars[2]\) is decreasing since all occurrences of \(X\) are removed and no variables are added.
\end{proof}


\subsection{(Most general) solutions} \label{sec:mgs-def}
\paragraph{Solutions}
Let us now formalise the semantics of constraints.
Given a constraint \(\varphi\), a frame \(\Phi\) and second- and first-order substitutions \(\Sigma\) and \(\sigma\)
we define the predicate \((\Phi,\Sigma,\sigma) \models \varphi\) by:
\[\begin{array}{l@{\quad\mathit{iff}\quad}l}
  (\Phi,\Sigma,\sigma) \models \xi \dedfact u & \xi \Sigma \Phi \sigma\norm = u \sigma\norm \\
  (\Phi,\Sigma,\sigma) \models \xi \eqs \zeta & \xi\Sigma = \zeta\Sigma \\
  (\Phi,\Sigma,\sigma) \models u \eqs v & u\sigma = v\sigma \\
  (\Phi,\Sigma,\sigma) \models \forall x.\,\varphi & \text{for all first-order ground terms } t, (\Phi,\Sigma,\sigma) \models \varphi\{x \mapsto t\} \\
  (\Phi,\Sigma,\sigma) \models \forall \quanti{X}{n}.\,\varphi & \text{for all } \xi \in \recipeset_n, (\Phi,\Sigma,\sigma) \models \varphi\{X \mapsto \xi\}
\end{array}\]
The definition is extended with logical connectives \(\neg, \wedge, \vee, \ldots\) in the natural way.
By convention, writing \((\Phi,\Sigma,\sigma) \models \varphi\) implicitly assumes that, for all \(x \in \vars[1](\varphi)\) and \(X \in \vars[2](\varphi)\), \(x \sigma\) and \(X \Sigma \Phi \sigma\) are ground.
Intuitively the second-order substitution \(\Sigma\) describes which recipes are used to deduce each input term appearing in \(\varphi\), while \(\sigma\) gives the actual values of these inputs.

\begin{definition}[solution of a constraint system]
  We say that \((\Sigma,\sigma)\) is a \emph{solution} of \(\C\) if \(\dom(\Sigma) = \vars[2](\C)\), \(\dom(\sigma) = \vars[1](\C)\) and \((\Phi(\C),\Sigma,\sigma) \models \Df(\C) \wedge \Eqfst(\C)\).
  We call \(\Sigma\) a \emph{second-order solution} of \(\C\) and \(\sigma\) its \emph{first-order solution}.
  The set of solutions of \(\C\) is written \(\Sol(\C)\).
\end{definition}

The solutions of a constraint system \(\C\) indicate how the inputs of \(\C\) (i.e., \(\vars[1](\Df(\C))\)) can be computed while satisfying the constraints imposed by \(\Eqfst(\C)\).
Due to the origination property, the values the first-order solution \(\sigma\) takes on \(\vars[1](\Df(\C))\) is uniquely determined by which recipes are used to deduce terms, i.e., by the second-order solution \(\Sigma\).

\begin{example} \label{ex:csys}
  Consider again the example \(P = \OutP {c} {k}.\,
  \InP {c} {x}.\,\IfP\ \fst(x) = k\ \ThenP\ \OutP {c} {\hfun(x)}\).
  The traces performing the final output \(\hfun(x)\) are characterised by the constraint system
  \begin{align*}
    \Phi(\C) & = \{\ax_1 \mapsto k, \ax_2 \mapsto \hfun(x)\} &
    \Df(\C) & = \{X \dedfact x\} &
    \Eqfst(\C) & = \{x \eqs \pair{k,y}\}
  \end{align*}
  where \(\quanti{X}{1}\) and \(y\) are fresh second- and first-order variables, respectively.
  Observe in particular that the informal constraint ``\textit{there exists a term \(y\) such that \(x = \pair {k,y}\)}'' is not formalised using an explicit \(\exists\) quantification but with a free variable \(y\).
  All second-order solutions of \(\C\) are instances of \(\Sigma_0 = \{X \mapsto \pair {\ax_1, Y}\}\) where \(\quanti{Y}{1}\) is fresh,
  for example, \(\Sigma = \{X \mapsto \pair {\ax_1, a}\}\) with \(a \in \sig_0\).
  The corresponding first-order solution is then \(\sigma = \{x \mapsto \pair {k,a}, y \mapsto a\}\).
\end{example}

\paragraph{Most general solutions}
Similarly to mgu's, we now introduce a novel characterisation of solutions as instances of so-called most general solutions (\emph{mgs}).
The definition is parametrised with a predicate \(\pi\) on second-order substitutions, writing \(\Sol[\pi](\C) = \{(\Sigma,\sigma) \in \Sol(\C) \mid \pi(\C) \text{ holds}\}\).
Filtering solutions this way will essentially permit, during the decision procedure, to perform case analyses on the form of the solutions.

\begin{definition}[most general solution] \label{def:mgs}
  A set of \emph{most general solutions of \(\C\) that satisfy \(\pi\)} is a set \(\mgs[\pi](\C)\) of second-order substitutions such that:
  \begin{enumerate}
    \item \label{it:mgs-sol}
    for all \(\Sigma_0 \in \mgs[\pi](\C)\), \(\dom(\Sigma_0) \subseteq \vars[2](\C)\), for all injections \(\Sigma_1\) to fresh constants and of domain \(\dom(\Sigma_1) = \vars[2](\im(\Sigma_0),\C) \smallsetminus \dom(\Sigma_0)\),
    \((\Sigma_0\Sigma_1,\sigma) \in \Sol[\pi](\C)\) for some \(\sigma\).
    \item \label{it:mgs-inst}
    for all \((\Sigma,\sigma) \in \Sol[\pi](\C)\), there exists \(\Sigma_0 \in \mgs[\pi](\C)\) and \(\Sigma_1\) such that \(\Sigma = \Sigma_0\Sigma_1\).
  \end{enumerate}
  We omit the predicate \(\pi\) in the case where \(\pi = \top\), i.e., \(\pi(\Sigma)\) holds for any substitution.
\end{definition}

The first condition of the definition states that a mgs \(\Sigma_0\) is ``almost'' a solution of \(\C\):
\(\Sigma_0\) is allowed to be given in a minimal form that does not instantiate all variables of \(\vars[2](\C)\), and that may not have a ground image;
but we obtain a solution by replacing all pending variables by fresh names using \(\Sigma_1\).
The second condition states that all solutions are instances of a mgs.

\begin{example} \label{ex:mgs}
  In Example \ref{ex:csys} we have \(\mgs(\C) = \{\Sigma_0\}\) and \(\Sol(\C)\) is the set of all ground instances of \(\Sigma_0\).
  However in general the situation may be less ideal.
  For example a constraint system may have several most general solutions;
  a simple example being, with \(\hfun/1\) and \(k \in \Nall\):
  \begin{align*}
    \Phi(\C) & = \{\ax_1 \mapsto \hfun(k), \ax_2 \mapsto k\} &
    \Df(\C) & = \{\quanti{X}{2} \dedfact x\} &
    \Eqfst(\C) & = \{x \eqs \hfun(k)\}
  \end{align*}
  The constraint system \(\C\) expresses that an input \(x\) should be instantiated by \(\hfun(k)\), potentially by using the two previous outputs \(\hfun(k)\) and \(k\).
  There are therefore two ways of computing \(x\):
  either using \(\ax_1\) or \(\hfun(\ax_2)\), which is reflected as the fact that \(\mgs(\C) = \{\Sigma_1,\Sigma_2\}\) with
  \begin{align*}
    \Sigma_1 & = \{X \mapsto \hfun(\ax_2)\} &
    \Sigma_2 & = \{X \mapsto \ax_1\}
  \end{align*}
  Still, it is possible to obtain unique mgs' by performing a case analysis and restricting the solutions accordingly;
  typically here we have \(\mgs[\pi_i](\C) = \{\Sigma_i\}\) with
  \begin{align*}
    \pi_1(\Sigma) & \eqdef \exists X'.\, X \eqs \hfun(X') &
    \pi_2(\Sigma) & \eqdef \forall X'.\, X \neqs \hfun(X')
  \end{align*}
  Another notable point is that some ground instances of a mgs may not be solutions themselves.
  Taking \(a \in \sig_0\) a simple example is given by \(\C = (\emptyset, X \dedfact x, x \neqs a)\)
  and \(\mgs(\C) = \{\id\}\):  the substitution \(\{X \mapsto a\}\) is a ground instance of the identity but not a solution (which does not contradict Item \ref{it:mgs-sol} of Definition \ref{def:mgs} since although \(a\) is a constant, it is not fresh).
\end{example}

We describe in Section~\ref{sec:mgs-gen} how to generate a finite set of most general solutions, at least in the context of our decision procedure.

\subsection{Symbolic semantics} \label{sec:symbolic-semantics}

\paragraph{Symbolic execution}
We now describe formally our symbolic semantics.
It shares some common ground with the concrete semantics of the calculus,
except that a constraint system collects the execution's constraints.
The semantics operates on so-called \emph{symbolic processes} \((\P,\C)\) where \(\P\) is a multiset of (non-necessarily ground) plain processes and \(\C\) is a constraint system.
All free variables of \(\P\) are bound by deductions facts, that is, for all \(x \in \vars(\P)\) there exists \((X \dedfact x) \in \Df(\C)\).
The semantics then takes the form of a labelled transition system \(\sstep{\alpha}\) between symbolic processes, defined in Figure~\ref{fig:semantics-symbolic}, where \(\alpha\) ranges over the following alphabet of \emph{symbolic actions}:
\begin{enumerate}
  \item \emph{symbolic input actions} \(\InP {X} {Y}\) where \(X\) and \(Y\) are second-order variables, modelling public inputs as in the concrete semantics except that the attacker recipes are replaced by the two placeholders \(X,Y\);
  \item \emph{symbolic output actions} \(\OutP {X} {\ax_i}\) that follow the same logic;
  \item the \emph{unobservable action} \(\tau\) which has the exact same role as in the concrete semantics.
\end{enumerate}

Before we define the semantics let us explain how we handle conditionals.
First of all we recall our convention to interpret substitutions as sets of equalities, that is, the positive branch of ``\(\IfP\ u = v\ \ThenP\ P\ \ElseP\ Q\)'' will add one mgu of \(u\) and \(v\) modulo theory to \(\Eqfst\).
Regarding the negative branch, we want to add a constraint that is satisfied \textit{iff} \(u\) and \(v\) are not equal modulo theory.
We write it \(\neg\mguR(u \eqs v)\) and define it as follows:
\begin{align*}
  \neg\mguR(u \eqs v) & = \bigwedge_{\sigma \in \mguR(u \eqs v)} \forall
  z_1, \ldots, z_n.\ 
  \bigvee_{x \in \vars(u,v)} x \neqs x \sigma
\end{align*}
where \(\{z_1, \ldots, z_n\} = \vars(u\sigma,v\sigma)\smallsetminus \vars(u,v)\).

\begin{figure*}[ht]
  \centering
  \begin{align}
    \nonumber
    & \text{If \(\C = (\Phi, \Df, \Eqfst)\), \(\mu = \mgu(\Eqfst) \neq \bot\) and \(n = |\dom(\Phi)|\):} \\[2mm]
    \tag{\mbox{\textsc{s-Then}}} \label{rule:s-then}
    & (\multi {\IfP\ u = v\ \ThenP\ P\ \ElseP\ Q} \cup \P, \C)
      \sstep {\tau} (\multi {P} \cup \P, (\Phi,\Df,\Eqfst \wedge \sigma)) \\
    \tag*{\text{\small if \(\sigma \in \mguR(u\mu \eqs v\mu)\)}} \\
    \tag{\mbox{\textsc{s-Else}}} \label{rule:s-else}
    & (\multi {\IfP\ u = v\ \ThenP\ P\ \ElseP\ Q} \cup \P, \C)
      \sstep {\tau} (\multi {Q} \cup \P, (\Phi,\Df,\Eqfst \wedge \neg\mguR(u\mu \eqs v \mu))) \\
    %
    \tag{\mbox{\textsc{s-In}}} \label{rule:s-in}
    & (\multi {\InP {u} {x}.P} \cup \P, \C)
      \sstep {\InP {Y} {X}} (\multi {P} \cup \P, (\Phi,\Df \wedge X \dedfact x \wedge Y \dedfact y,\Eqfst \wedge \sigma)) \\
    \tag*{\text{\small if \(\quanti{Y}{n}\), \(\quanti{X}{n}\) and \(y\) are fresh and \(\sigma \in \mguR(y \eqs u \mu)\)}} \\
    \tag{\mbox{\textsc{s-Out}}} \label{rule:s-out}
    & (\multi {\OutP {u} {v}.P} \cup \P, \C)
      \sstep {\OutP {Y} {\ax_{n+1}}} (\multi {P} \cup \P, (\Phi \cup \{\ax_{n+1} \mapsto v \mu \sigma \norm\},\Df \wedge Y \dedfact y,\Eqfst \wedge \sigma)) \\
    \tag*{\text{\small if \(\quanti{Y}{n}\) and \(y\) are fresh and \(\sigma \in \mguR(y \eqs u \mu \wedge v\mu \eqs v\mu)\)}} \\
    \tag{\mbox{\textsc{s-Comm}}} \label{rule:s-comm}
    & (\multi {\OutP {u} {v}.P, \InP {w} {x}.Q} \cup \P, \C)
      \sstep {\tau} (\multi {P, Q\{x \mapsto v \mu \sigma\}} \cup \P, (\Phi,\Df,\Eqfst \wedge \sigma)) \\
    \tag*{\text{\small if \(\sigma \in \mguR(u \mu \eqs w \mu \wedge v\mu \eqs v\mu)\)}} \\
    %
    \tag{\mbox{\textsc{s-Par}}} \label{rule:s-par}
    & (\multi {P \mid Q} \cup \P, \C)
      \sstep {\tau} (\multi {P, Q} \cup \P, \C)
  \end{align}
  \caption{A symbolic semantics for the applied pi-calculus}
  \label{fig:semantics-symbolic}
\end{figure*}

The rule \eqref{rule:s-in} adds two deduction facts \(X \dedfact x\) and \(Y \dedfact y\) to \(\Df\), modelling that the input term and communication channel should be deducible by the adversary;
in particular the constraint \(\sigma \in \mguR(y \eqs u \mu)\) indicates that the term deduced by \(Y\) is effectively the channel \(u\).
The rule \eqref{rule:s-out} essentially follows the same logic, adding a fresh deduction fact and a constraint indicating that the channel is deducible.
We assume an implicit alpha renaming of bound variables so that each appear only once in the process:
this prevents reference conflicts in \(\Df\) when applying the rule \eqref{rule:s-in}.
Let us also point out that several rules introduce constraints of the form \(\mguR(u,u)\):
we recall that this substitution is not always \(\top\), but is the most general substitution \(\sigma\) ensuring that \(u\sigma\) is a message.
As in the concrete semantics, a \emph{symbolic trace} is then a finite sequence of transitions 
\[(\P_0,\C_0) \sstep{\alpha_1} \cdots \sstep{\alpha_n} (\P_n,\C_n)\] 
which may be referred to as \((\P_0,\C_0) \Sstep {\tr} (\P_n,\C_n)\) if \(\tr\) is obtained by removing the \(\tau\)'s from the word \(\alpha_1 \cdots \alpha_n\).
For simplicity the plain process \(P\) may be interpreted as the symbolic process \((\multi{P},(\emptyset,\top,\top))\).

\begin{example}
  We consider again the example of the private authentication protocol.
  We recall the process of the agent \(B\) receiving the communication, writing \(\pks_X,\sks_X\) instead of \(\pk(\sk(X)),\sk(X)\), and \(t = \adec(x,s)\):
  \[\begin{array}{l@{\ }l@{\qquad}l@{\ }l}
    B(s, p, n, r) = & \InP c x.\\
    & \IfP\, \snd(t) = p\, \ThenP\\
    & \phantom{\ElseP}\, \OutP c {\aenc(\pair {\fst(t), n, \pk(s)},r,p)}\\
    & \ElseP\, \OutP c {\aenc(n,r,\pk(s))}
  \end{array}\]
  and use a frame \(\Phi_0 = \{\ax_1 \mapsto \pks_A, \ax_2 \mapsto \pks_B, \ax_3 \mapsto  \aenc(\pair{N_A,\pks_A}, r_A, \pks_B)\}\), containing public keys and the connection request sent by \(A\).
  We give in Figure \ref{fig:semantics-symbolic-ex} a tree of all symbolic executions of \(B\) (we only write the constraints added at each step).
  
  \begin{figure}[ht]
    \centering
    \includegraphics[width=0.85\textwidth]{files/symbolic-tree.pdf}
    \caption{Tree of all constraint systems reachable by executing \(B\) symbolically}
    \label{fig:semantics-symbolic-ex}
  \end{figure}

  Intuitively, the branch of the constraint system \(\C_1\) abstracts the set of concrete traces where \(B\) accepts the connection, and the branch of \(\C_2\) those where \(B\) refuses it.
  Typically in the traces of the branch \(\C_1\) the attacker forwards the message of \(A\) or forges one pretending to be \(A\);
  this is formally expressed by the fact that \(\mgs(\C_1) = \{\Sigma_0 \cup \Sigma_{\mathsf{fwd}},\ \Sigma_0 \cup \Sigma_{\mathsf{att}}\}\) where:
  \[\Sigma_0 = \{Y \mapsto d, Z \mapsto d\} \qquad
    \Sigma_{\mathsf{fwd}} = \{X \mapsto \ax_3\} \qquad
    \Sigma_{\mathsf{att}} = \{X \mapsto \aenc(\pair{x_1,\ax_1}, x_3, \ax_2)\}
    \qedhere\]
\end{example}

\paragraph{Soundness and completeness}
Similar symbolic semantics have been developed in the context of protocol analysis~\cite{B07,CCD13}.
The general approach is to abstract the (infinite) set of concrete traces by the finite set of symbolic traces and to study the solutions of the resulting constraint systems.
A typical example is that the following statements are equivalent:
\begin{enumerate}
  \item \emph{Weak secrecy of the term \(u\) in \(P\)}: for all traces \(P \Cstep {\tr} (\P,\Phi)\), \(u\) is not deducible from \(\Phi\)
  \item for all symbolic traces \(P \Sstep{\tr} (\P,\C)\), the system \((\Phi(\C),\Df(\C) \wedge X \dedfact x, \Eqfst(\C) \wedge x \eqs u)\) has no solution, where \(\quanti{X}{n}\) and \(x\) are fresh, \(n = |\dom(\Phi)|\)
\end{enumerate}
(Recall that for notational convenience the plain process \(P\) may be interpreted as the symbolic process \((\multi{P},(\emptyset,\top,\top))\).)
This reduces weak secrecy (for a bounded number of sessions) to the decidability of whether a constraint system has a solution.
Similar approaches have been developed in~\cite{B07,CCD13} to decide equivalence properties for some classes of processes.
They rely on a connection between the symbolic and concrete semantics, under the form of two properties:
\begin{enumerate*}
  \item \emph{soundness}:
  applying to a symbolic trace a solution of its final constraint system leads to a concrete trace; and
  \item \emph{completeness}:
  all concrete traces are obtained by applying a solution to a symbolic one.
\end{enumerate*}
They are formalised below, the proof following from a straightforward induction on the length of the traces.

\begin{proposition}[soundness and completeness of the symbolic semantics] \label{prop:symbolic-sound-complete}
  Let \((\P, \C)\) be a symbolic process.
  Then we have:
  \begin{enumerate}
    \item \emph{Soundness:}
    for all symbolic traces \((\P,\C) \Sstep{\tr_s} (\Q,\C')\) and \((\Sigma,\sigma) \in \Sol(\C')\), there exists a concrete trace of the form \((\P\sigma, \Phi(\C)\sigma\norm) \Cstep{\tr_s\Sigma} (\Q\sigma, \Phi(\C')\sigma\norm)\)
    \item \emph{Completeness:}
    for all symbolic processes \((\P,\C)\), \((\Sigma,\sigma) \in \Sol(\C)\), and for all concrete traces \((\P\sigma, \Phi(\C)\sigma \norm) \Cstep{\tr} (\Q, \Phi)\), there exists a symbolic trace
    \((\P,\C) \Sstep{\tr'} (\Q',\C')\) and \((\Sigma',\sigma') \in \Sol(\C')\)
    such that \(\Sigma \subseteq \Sigma'\), \(\Q = \Q'\sigma'\), \(\tr = \tr'\Sigma'\) and \(\Phi = \Phi(\C')\sigma'\norm\).
  \end{enumerate}
\end{proposition}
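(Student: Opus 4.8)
The plan is to prove each of the two items by induction on trace length---on the symbolic trace for soundness, on the concrete trace for completeness---peeling off the \emph{last} transition for soundness and the \emph{first} transition for completeness, and in each case performing a case analysis on which rule of \Cref{fig:semantics-symbolic} is fired, matching it with its concrete counterpart in \Cref{fig:semantics}. The invariant I rely on throughout is that any first-order solution \(\sigma\) of a system \(\C\) factors through the syntactic unifier \(\mu = \mgu(\Eqfst(\C))\): since \((\Phi,\Sigma,\sigma) \models \Eqfst(\C)\) forces \(u\sigma = v\sigma\) for every equation \((u \eqs v)\) of \(\Eqfst(\C)\), the substitution \(\sigma\) is a syntactic unifier of those equations, hence \(\sigma = \mu\sigma''\) for some \(\sigma''\); together with the idempotency of \(\mu\) this yields \(t\mu\sigma = t\sigma\) for every term \(t\). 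This identity is what lets me replace the guarded terms \(u\mu, v\mu\) appearing in the side conditions of the symbolic rules by \(u\sigma, v\sigma\) when reasoning about the concrete semantics.

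\emph{Soundness.} The empty symbolic trace is the base case and is immediate. For the inductive step I decompose a trace of length \(n+1\) as \((\P,\C) \Sstep{\tr_s} (\Q'',\C'') \sstep{\alpha} (\Q,\C')\). As every symbolic rule only \emph{adds} constraints, the components of \(\C''\) are included in those of \(\C'\), so the restriction of \((\Sigma,\sigma)\) to \(\vars[2](\C'')\) and \(\vars[1](\C'')\) lies in \(\Sol(\C'')\); the induction hypothesis then provides a concrete trace up to \((\Q''\sigma, \Phi(\C'')\sigma\norm)\). It remains to fire the concrete rule matching \(\alpha\) and discharge its premises from the fact that \((\Sigma,\sigma)\) solves the constraints freshly added by \(\alpha\). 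For \eqref{rule:s-then}, the added constraint \(\sigma' \in \mguR(u\mu \eqs v\mu)\) supplies both \(\msg(u\sigma), \msg(v\sigma)\) and \(u\sigma\norm = v\sigma\norm\), which are exactly the premises of \eqref{rule:then}; dually, the disequation added by \eqref{rule:s-else} yields the premise of \eqref{rule:else}. For \eqref{rule:s-in} and \eqref{rule:s-out}, the deduction facts and channel equations become, after applying \(\Sigma\) to the symbolic action, the \(\msg\) conditions and the equality \(\xi_c\Phi\norm = u\norm\) required by \eqref{rule:in} and \eqref{rule:out}; the frame update \(\ax_{n+1} \mapsto v\mu\sigma'\norm\) matches the concrete one because \((v\mu\sigma'\norm)\sigma\norm = v\sigma\norm\) by the factoring identity and stability of rewriting under substitution. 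The silent rules \eqref{rule:s-comm} and \eqref{rule:s-par} are analogous.

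\emph{Completeness.} The empty concrete trace is again the base case, for which \(\tr' = \epsilon\), \(\C' = \C\) and \((\Sigma',\sigma') = (\Sigma,\sigma)\) work by assumption. Otherwise I peel off the first concrete transition \((\P\sigma,\Phi(\C)\sigma\norm) \cstep{a} (\Q_1,\Phi_1)\) and exhibit a symbolic step \((\P,\C) \sstep{\alpha} (\P_1,\C_1)\) together with an extended solution \((\Sigma_1,\sigma_1) \in \Sol(\C_1)\) satisfying \(\Sigma \subseteq \Sigma_1\), \(\Q_1 = \P_1\sigma_1\), \(\Phi_1 = \Phi(\C_1)\sigma_1\norm\), and \(a = \alpha\Sigma_1\); the induction hypothesis on \((\P_1,\C_1)\), \((\Sigma_1,\sigma_1)\) and the remaining trace then concludes. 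The attacker actions are the interesting cases: for a concrete input \(\InP{\xi_c}{\xi_t}\) I apply \eqref{rule:s-in}, whose fresh second-order variables \(X,Y\) have type \(n = |\dom(\Phi)|\), and extend \(\Sigma\) by \(\{X \mapsto \xi_t, Y \mapsto \xi_c\}\) and \(\sigma\) by \(\{x \mapsto \xi_t\Phi\norm, y \mapsto \xi_c\Phi\norm\}\). Types are respected since a concrete recipe over a frame with \(n\) axioms belongs to \(\recipeset_n\), and the origination property is preserved because the new inputs are deduced solely from the already-recorded frame; the channel equation \(\sigma \in \mguR(y \eqs u\mu)\) is then met after choosing, among the finitely many most general unifiers modulo \(\R\), the representative of which the concrete value is an instance (Item~2 of the definition of \(\mguR(\cdot)\)) and extending \(\sigma\) on its fresh variables accordingly. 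Outputs are handled the same way, and conditionals by selecting the \(\mguR\) branch that matches the concrete evaluation, or verifying \(\neg\mguR(u\mu \eqs v\mu)\) for \eqref{rule:s-else}.

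The bulk of the work, and the only genuinely delicate point, lies in the bookkeeping around unification modulo the theory. In both directions one must translate between the guarded symbolic side conditions (phrased through \(\mu\) and the sets \(\mguR(\cdot)\)) and the concrete \(\msg\)/normal-form premises, and in completeness one must simultaneously select the correct unifier modulo \(\R\) and extend the second-order substitution without breaking the type discipline or the origination property. These are exactly the places where the freshness conventions on the symbolic variables and the factoring identity \(t\mu\sigma = t\sigma\) carry the argument; everything else is routine inductive plumbing.
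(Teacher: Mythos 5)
Your proof is correct and matches the paper's approach: the paper gives no detailed argument for Proposition~\ref{prop:symbolic-sound-complete}, stating only that it ``follows from a straightforward induction on the length of the traces,'' which is precisely the induction you carry out. Your fleshed-out details --- the factoring identity \(t\mu\sigma = t\sigma\), the rule-by-rule matching between Figures~\ref{fig:semantics-symbolic} and~\ref{fig:semantics}, and the use of Item~2 of the definition of \(\mguR\) to select the right unifier when extending the solution in the completeness direction --- are exactly the bookkeeping the paper elides.
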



\subsection{The key tool: the partition tree} 

To decide trace equivalence and labelled bisimilarity, we introduce the novel notion of a \emph{partition tree} of two bounded processes \(P\) and \(Q\).
The point is to build a (finite) tree of all symbolic executions of \(P\) and \(Q\), grouping into the same nodes intermediary processes as follows:

\begin{enumerate}
  \item All processes of a same node should have a \emph{common, unique mgs}.
  Since one symbolic process alone may already have several most general solutions, the node is parametrised by a restricting predicate \(\pi\) on second-order solutions (recall Example \ref{ex:mgs}).
  \item When applying the mgs of a node to all of the processes it contains (and instantiating the potential remaining variables by fresh distinct constants), the resulting frames are \emph{statically equivalent}.
  Conversely, all reachable symbolic processes that would verify this property should be in the node as well.
\end{enumerate}

A branch of this tree therefore represents the set of all equivalent traces of \(P\) and \(Q\) taking a given sequence of visible actions.
Taking profit of this observation we will show that whenever \(P\) and \(Q\) are not trace equivalent or labelled bisimilar, a witness of non-equivalence can be exhibited using the tree.
Formally its nodes are modelled by \emph{configurations} that consist of sets \(\Gamma\) of symbolic processes sharing a unique mgs and statically equivalent solutions.

\begin{definition}[configuration] \label{def:configuration}
  A \emph{configuration} is a pair \((\Gamma,\pi)\) where \(\Gamma\) is a set of symbolic processes and \(\pi\) a predicate on second-order substitutions.
  We also require that:
  \begin{enumerate}
    \item \label{it:configuration-pred-dom}
    the predicate \(\pi\) is defined on \(\vars[2](\Gamma)\), that is, for all \(\Sigma\), \(\pi(\Sigma)\) \textit{iff} \(\pi(\Sigma_{|\vars[2](\Gamma)})\);
    \item \label{it:configuration-unique-mgs}
    for all \((\P,\C) \in \Gamma\), \(|\mgs[\pi](\C)| = 1\);
    \item \label{it:configuration-same-solutions}
    for all \((\P_1,\C_1),(\P_2,\C_2) \in \Gamma\), if \((\Sigma,\sigma_1) \in \Sol[\pi](\C_1)\) then there exists \(\sigma_2\)
    such that \((\Sigma,\sigma_2) \in \Sol[\pi](\C_2)\) and \(\Phi(\C_1) \sigma_1 \StatEq \Phi(\C_2) \sigma_2\).
  \end{enumerate}
\end{definition}

The predicate \(\pi\) can typically be described using second-order (dis)equations.
We then consider trees with nodes labelled by configurations and edges by visible symbolic actions (i.e., not \(\tau\)).
Given a node \(n\) of such a tree, we write \(\Gamma(n)\) and \(\pi(n)\) the components of the corresponding configuration, and \(n \xrightarrow{\alpha} n'\) to express that \(n'\) is a child node of \(n\) through an edge labelled by the symbolic action \(\alpha\).
By definition of a mgs, the points \ref{it:configuration-unique-mgs} and \ref{it:configuration-same-solutions} of Definition \ref{def:configuration} above ensure that all symbolic processes in \(\Gamma(n)\) have the same set of second-order variables,
written \(\vars[2](n)\), and a common and unique mgs, written \(\mgs(n)\).

\begin{definition}[partition tree] \label{def:partition-tree}
  A \emph{partition tree} of two bounded processes \(P\) and \(Q\) is a tree \(T\) whose nodes are labelled by configurations and edges by visible symbolic actions, and that verifies the following properties.
  First of all \(P,Q \in \Gamma(\rootf(T))\) and \(\pi(\rootf(T)) = \top\), where \(\rootf(T)\) denotes the root node of the tree.
  Then for all nodes \(n\) of \(T\), \((\P,\C) \in \Gamma(n)\) and visible symbolic actions \(\alpha\):

  \begin{enumerate}
    \item \label{it:PT-silent}
    \textit{Closure by \(\tau\)-transition}:
    if \((\P,\C) \Sstep{\tau} (\P',\C')\) and \(\Sol[\pi(n)](\C') \neq \emptyset\) then \((\P',\C') \in \Gamma(n)\).

    \item \label{it:PT-child-concrete-derivation}
    \textit{All symbolic transitions are reflected in the tree:}
    if \((\P,\C) \Sstep{\alpha} (\P',\C')\) and \((\Sigma,\sigma) \in \Sol[\pi(n)](\C')\)
    then there exists an edge \(n \xrightarrow{\alpha} n'\) in \(T\) such that \((\P',\C') \in \Gamma(n')\) and \((\Sigma',\sigma) \in \Sol[\pi(n')](\C')\) for some \(\Sigma'\) that coincides with \(\Sigma\) on \(\vars[2](n)\).
  \end{enumerate}

  \noindent Moreover for all edges \(n \xrightarrow{\alpha} n_c\) of \(T\) and \((\P_c, \C_c) \in \Gamma(n_c)\):

  \begin{enumerate}[resume]
    \item \label{it:PT-monotonic}
    \textit{Predicates are refined along branches:} for all \(\Sigma\), if \(\Sigma\) verifies \(\pi(n_c)\) then it verifies \(\pi(n)\).

    \item \label{it:PT-parent-concrete-derivation}
    \textit{Nodes are maximal:}
    if \((\Sigma,\sigma) \in \Sol [\pi(n)](\C)\), \((\Sigma_c,\sigma_c) \in \Sol [\pi(n_c)](\C_c)\) and \(\Sigma \subseteq \Sigma_c\),
    then \(\Gamma(n_c)\) contains all symbolic processes \((\P',\C')\) such that \((\P,\C) \Sstep {\alpha} (\P',\C')\) and, for some substitution \(\sigma'\), \((\Sigma_c,\sigma') \in \Sol(\C')\) and \(\Phi(\C_c) \sigma_c \StatEq \Phi(\C') \sigma'\).
  \end{enumerate}

  \noindent
  The set of partition trees of \(P\) and \(Q\) is written \(\ptree(P,Q)\).
\end{definition}

The set \(\ptree(P,Q)\) is infinite (at least because arbitrarily many processes can be put in the root configuration) but our decision procedures only require to construct one, arbitrary partition tree.
The children \(n'\) of a node \(n\) represent the sets of processes, grouped w.r.t. static equivalence, reachable by one transition from a process of \(n\).
Item \ref{it:PT-child-concrete-derivation} ensures that all cases are covered, that is, for all symbolic transitions from \(n\) and all solutions \(\Sigma\), at least one child \(n'\) should contain the resulting symbolic process.
Note that we do not impose that \(\Sigma\) verifies \(\pi(n')\), but that there exists another solution \(\Sigma'\) computing the same first-order terms that does.
This more permissive approach will allow us, when generating partition-tree nodes in Section~\ref{sec:ptree}, to use families of predicates \(\pi\) that only consider solutions of a certain form (which therefore requires to prove that any deducible term can be computed by a recipe of this form).
Item \ref{it:PT-parent-concrete-derivation} then formalises that the nodes are saturated under static equivalence:
if \(n'\) is a child of \(n\) and a symbolic transition \(A \Sstep {\alpha} B\) from a process \(A \in \Gamma(n)\) may result into a process statically equivalent to a process \(C \in \Gamma(n')\) then \(B\) should be in \(\Gamma(n')\) as well.

\begin{example}
  Let us draw a partition tree corresponding to an anonymity analysis in the private authentication protocol, simplified for readability.
  We consider the following light version of the role of the process \(B\) accepting a connection from an agent \(X\), removing the identification nonces \(N_A,N_B\) from the protocol and replacing the decoy message by a fresh name \(r\):
  \[\begin{array}{l@{\ }l}
    B_X = & \InP {c} {x}. \IfP\ \radec(x,\sks_B) = \pks_X\ \ThenP\ \OutP {c} {\aenc(\okfun,r,\pks_X)}\ \ElseP\ \OutP {c} {r}
  \end{array}\]
  We consider a 3-agent scenario (\(A,B,C\)) where \(A\) has already emitted \(\aenc(\pks_A,r_A,\pks_B)\) to initiate a communication with \(B\). The security property we study is whether the identity of \(B\)'s accepted recipient remains anonymous.
  That is we want to prove \(P \approx Q\) where
  \begin{align*}
    P & = C[B_A] & Q & = C[B_C] & C[R] & = \OutP {c} {\pks_A}.\ \OutP {c} {\pks_B}.\ \OutP {c} {\pks_C}.\ \OutP {c} {\aenc(\pks_A, r_A,\pks_B)}.\ R
  \end{align*}
  
  \begin{figure}[ht]
    \centering
    \includegraphics[width=0.92\textwidth]{files/partition-tree.pdf}
    \caption{A simplified partition tree of \(P\) and \(Q\)}
    \label{fig:partition-tree}
  \end{figure}

  The partition tree in Figure \ref{fig:partition-tree} has been lightened for readability: 
  if a node contains two symbolic processes \(A_s,B_s\) such that \(A_s \sstep{\tau} B_s\), then \(A_s\) is omitted from the node (as it contains less constraints than \(B_s\) anyway).
  The configuration at the root of the tree only contains \(P\) and \(Q\).
  After the four initial outputs of the context \(C\), we reach the constraint system \(\C_0\) defined by:
  \begin{align*}
    \Phi(\C_0) & = \{\ax_1 \mapsto \pks_A, \ax_2 \mapsto \pks_B, \ax_3 \mapsto \pks_C, \ax_4 \mapsto \aenc(\pks_A, r_A,\pks_B)\} \\
    \Df(\C_0) & = X_1 \dedfact x_1 \wedge X_2 \dedfact x_2 \wedge X_3 \dedfact x_3 \wedge X_4 \dedfact x_4 \\
    \Eqfst(\C_0) & = x_1 \eqs c \wedge x_2 \eqs c \wedge x_3 \eqs c \wedge x_4 \eqs c
  \end{align*}
  The next step is the first one inducing a non-trivial case analysis.
  This node has four children for the adversary to compute the input \(\InP {c}{x}\):
  \begin{enumerate*}[label=\(\pi_{\arabic*}\)]
    \item forwards the message of \(A\),
    \item forges a message pretending it is from \(A\),
    \item forges a message pretending it is from \(C\),
    \item any other case.
  \end{enumerate*}
  The choice of these 4 cases is guided by the conditional \(\IfP\ \radec(x,\sks_B) = \pks_X\) (where $X=A$ or $X=C$) that is evaluated on the input.
  Choice $\pi_1$ results in the positive branch in both $B_A$ and $B_C$, as it corresponds to an honest execution. Choice $\pi_2$ results in choosing the positive branch in $B_A$ and the negative branch in $B_C$, while $\pi_3$ does the opposite. Choice \(\pi_4\) leads to the negative branch in all cases by construction (as it is the negation of the 3 previous cases). 
  
  More precisely we write \(\Phi(\C_{1,X}^{\poslab}) = \Phi(\C_{1,X}^{\neglab}) = \Phi(\C_0)\), \(\Df(\C_{1,X}^{\poslab}) = \Df(\C_{1,X}^{\neglab}) = \Df(\C_0) \wedge Y \dedfact y\) and
  \begin{align*}
    \Eqfst(\C_{1,X}^{\poslab}) & = \Eqfst(\C_0) \wedge y \eqs c \wedge x \eqs \aenc(\pks_X,x',\pks_B) \\
    \Eqfst(\C_{1,X}^{\neglab}) & = \Eqfst(\C_0) \wedge y \eqs c \wedge \forall x'.\, x \neqs \aenc(\pks_X,x',\pks_B)
  \end{align*}
  Then the final transitions simply execute the resulting outputs, i.e. \(\C_{2,X}^s\), \(s \in \{\poslab,\neglab\}\), is obtained by adding \(Z \dedfact z\) and \(z \eqs c\) to \(\C_{1,X}^s\).
  Since a ciphertext is indistinguishable from a nonce, the two outputs always end up in the same nodes;
  that is, all leaves contain at least one process originated from \(P\) and at least one from \(Q\), which is how we prove trace equivalence.
  The situation would be different with a rewrite rule such as \(\testaenc(\aenc(x,y,\pk(z))) \to \okfun\);
  a partition tree of \(P\) and \(Q\) with this extended rewriting system can be found in Figure \ref{fig:partition-tree-attack}.

  \begin{figure}[ht]
    \centering
    \includegraphics[width=0.95\textwidth]{files/partition-tree-attack.pdf}
    \caption{Partition tree with the rewriting system extended with \(\testaenc(\aenc(x,y,\pk(z))) \to \okfun\)}
    \label{fig:partition-tree-attack}
  \end{figure}

  We highlighted the part differing from the previous tree.
  Essentially some leaf nodes have been split in two due to the enhanced capabilities of the adversary to disprove static equivalence, inducing a violation of trace equivalence.
  For example the leftmost leaf's mgs is
  \[\{X \mapsto \ax_4, X_1 \mapsto c, \ldots, X_4 \mapsto c, Y \mapsto c, Z \mapsto c\}\]
  which corresponds to an attack trace where the attacker forwards the message of \(A\) and observes whether the response of \(B\) is a ciphertext, which reveals whether \(B\) accepts connections from \(A\) or not.
\end{example}

In the remaining of the section we formalise how to decide trace equivalence and labelled bisimilarity of two processes, given a partition tree and the mgs of each of its nodes.
For that we will rely on the following notion of reduction, characterising symbolic traces viewed as branches of a partition tree:

\begin{definition}[partition-tree trace]
  Given  a partition tree  \(T\)
  we write \((\P,\C), n \tstep{\alpha} (\P',\C'), n'\) when:
  \begin{enumerate}
    \item \(n\) and \(n'\) are nodes of \(T\) such that \((\P,\C) \in \Gamma(n)\) and \((\P',\C') \in \Gamma(n')\); and
    \item if \(\alpha = \tau\) then \(n = n'\), otherwise \(n \xrightarrow{\alpha} n'\) and \((\P,\C) \sstep{\alpha} (\P',\C')\).
  \end{enumerate}
  For convenience this notion is to be understood up to alpha renaming of the variables of the symbolic action \(\alpha\).
  We write \(A^s_0,n_0 \Tstep{\tr} A^s_p,n_p\) instead of \(A^s_0,n_0 \tstep{\alpha_1} \cdots \tstep{\alpha_p} A^s_p,n_p\) if \(\tr\) is the word obtained after removing \(\tau\) symbols from \(\alpha_1 \cdots \alpha_p\).
  If \(P\) is a plain process we may also write \(P \Tstep {\tr} A_s,n\) instead of \((\multi{P}, (\emptyset,\top,\top)), \rootf(T) \Tstep{\tr} A_s,n\).
\end{definition}

\subsection{Decision procedures for equivalence} \label{sec:ptree-eq}
In this section, we assume that we managed to compute a partition tree \(T \in \ptree(P_1,P_2)\) (in particular, that there exists one).
We then describe how to derive a decision procedure for trace equivalence and labelled bisimilarity from \(T\).

\paragraph{Trace equivalence}
As hinted in our various examples, deciding trace equivalence can be reduced to an analogue notion of equivalence using the (finite) transition relation \(\tstep{\alpha}\) instead of the concrete semantics \(\cstep{\alpha}\).
This is formalised by the following theorem:

\begin{theorem}[restate=thmTraceEquivPtree,name={partition-tree-based characterisation of trace equivalence}] \label{thm:trace-equiv-ptree}
  Whenever \(T \in \ptree(P_1,P_2)\),
  the following points are equivalent:
  \begin{enumerate}
    \item \label{it:trace-equiv-ptree-incl}
    \(P_1 \TraceIncl P_2\)
    \item \label{it:trace-equiv-ptree-trace}
    for all partition-tree traces \(P_1 \Tstep{\tr} (\P_1,\C_1),n\), we have \(P_2 \Tstep{\tr} (\P_2,\C_2),n\)
  \end{enumerate}
\end{theorem}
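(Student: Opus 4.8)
The plan is to prove the two implications separately, in each case using Proposition~\ref{prop:symbolic-sound-complete} to move between the concrete and symbolic worlds and lifting the \emph{per-step} node properties of Definition~\ref{def:partition-tree} to whole traces by induction on their length. A small but crucial observation underlies the whole argument: static equivalence is monotone under restriction of frames to a common prefix of their domains (any test distinguishing the restricted frames distinguishes the full ones). Since in our semantics outputs only \emph{append} entries to the frame, the frame of any intermediate extended process along a trace is exactly a prefix of the final frame; hence a single static equivalence of the endpoints yields static equivalence at every intermediate step for free.

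For the direction \ref{it:trace-equiv-ptree-trace} $\Rightarrow$ \ref{it:trace-equiv-ptree-incl} I would start from an arbitrary concrete trace \(P_1 \Cstep{\tr_c} A_1'\) and apply completeness to obtain a symbolic trace \(P_1 \Sstep{\tr_s} (\P_1,\C_1)\) with a solution \((\Sigma,\sigma) \in \Sol(\C_1)\) such that \(\tr_c = \tr_s\Sigma\) and \(A_1' = (\P_1\sigma, \Phi(\C_1)\sigma\norm)\). A first lifting step, proved by induction on \(\tr_s\) using Item~\ref{it:PT-child-concrete-derivation}, realises this symbolic trace as a partition-tree trace \(P_1 \Tstep{\tr_s} (\P_1,\C_1), n\): along the way the tracked second-order solution may be replaced by an equivalent \(\Sigma'\) coinciding with \(\Sigma\) on \(\vars[2](n)\) (so that \(\tr_s\Sigma' = \tr_c\)) while additionally satisfying the successive predicates \(\pi(n')\). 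Hypothesis~\ref{it:trace-equiv-ptree-trace} then gives \(P_2 \Tstep{\tr_s} (\P_2,\C_2), n\) for the \emph{same} node \(n\), so that \((\P_1,\C_1),(\P_2,\C_2) \in \Gamma(n)\). Item~\ref{it:configuration-same-solutions} of Definition~\ref{def:configuration} supplies \(\sigma_2\) with \((\Sigma',\sigma_2) \in \Sol[\pi(n)](\C_2)\) and \(\Phi(\C_1)\sigma \StatEq \Phi(\C_2)\sigma_2\), and applying soundness to the symbolic trace of \(P_2\) with this solution yields a concrete trace \(P_2 \Cstep{\tr_c} A_2'\) with \(A_1' \StatEq A_2'\), which is exactly trace inclusion.

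For the converse \ref{it:trace-equiv-ptree-incl} $\Rightarrow$ \ref{it:trace-equiv-ptree-trace} I would argue by strong induction on the length of the partition-tree trace \(P_1 \Tstep{\tr} (\P_1,\C_1), n\). Picking a solution of \(\Sol[\pi(n)](\C_1)\) (one exists since \(|\mgs[\pi(n)](\C_1)| = 1\)), soundness gives a concrete trace \(P_1 \Cstep{\tr\Sigma} A_1'\), and trace inclusion~\ref{it:trace-equiv-ptree-incl} produces \(P_2 \Cstep{\tr\Sigma} A_2'\) with \(A_1' \StatEq A_2'\); completeness turns the latter into a symbolic trace of \(P_2\) with the same visible word. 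The inductive step processes the last edge \(n^- \xrightarrow{\alpha} n\): the induction hypothesis places the prefixes of both processes in the common parent node \(n^-\), the monotonicity observation above provides the static equivalence of the \(\alpha\)-successors (restrict \(A_1' \StatEq A_2'\) to the relevant prefix of the frame), and the maximality property (Item~\ref{it:PT-parent-concrete-derivation}), instantiated with the \(P_1\)-process as the child witness in \(\Gamma(n)\) and the \(P_2\)-process as the statically equivalent successor, forces the \(P_2\)-successor into \(\Gamma(n)\). Iterating along the branch yields \(P_2 \Tstep{\tr} (\P_2,\C_2), n\).

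The main obstacle I anticipate is the bookkeeping of second-order solutions and predicates along a trace. Item~\ref{it:PT-child-concrete-derivation} only guarantees a solution coinciding with \(\Sigma\) on \(\vars[2](n)\) rather than \(\Sigma\) itself, so the inductive invariant must record which second-order variables are frozen by the actions performed so far and verify that replacing the remainder changes neither the visible word \(\tr\Sigma\) nor the realised first-order terms. Likewise, applying the maximality clause at each edge requires exhibiting the \emph{matching} second-order solutions \(\Sigma \subseteq \Sigma_c\) on both the \(P_1\)- and \(P_2\)-sides together with the intermediate static equivalences demanded by its hypotheses. Once the monotonicity observation is in place the static-equivalence threading is routine, and I expect the remaining difficulty to be concentrated entirely in phrasing this solution-and-predicate invariant precisely enough to feed the node properties of Definition~\ref{def:partition-tree} at every step.
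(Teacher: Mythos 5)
Your overall architecture---soundness/completeness of the symbolic semantics (Proposition~\ref{prop:symbolic-sound-complete}) plus inductions lifting Items~\ref{it:PT-child-concrete-derivation} and~\ref{it:PT-parent-concrete-derivation} of Definition~\ref{def:partition-tree} from single edges to whole branches---is exactly the paper's (its Lemmas~\ref{lem:PT-child-concrete-derivation} and~\ref{lem:PT-parent-concrete-derivation}), and your direction \ref{it:trace-equiv-ptree-incl}\(\Rightarrow\)\ref{it:trace-equiv-ptree-trace} is a correct inlining of Lemma~\ref{lem:PT-parent-concrete-derivation}, frame-prefix monotonicity included. There is, however, a genuine gap in the direction \ref{it:trace-equiv-ptree-trace}\(\Rightarrow\)\ref{it:trace-equiv-ptree-incl}: you claim the lifting of the \(P_1\)-trace into the tree can be arranged with a solution \(\Sigma'\) coinciding with \(\Sigma\) on the action variables, ``so that \(\tr_s\Sigma' = \tr_c\)''. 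This is unachievable. At each step, Item~\ref{it:PT-child-concrete-derivation} only guarantees coincidence on \(\vars[2](n)\) of the \emph{parent} node; the fresh second-order variables labelling the current action belong to the child node \(n'\) and are precisely the ones \(\Sigma'\) is allowed---and in general forced---to reassign, because \(\pi(n')\) may admit only recipes of a restricted form (the paper flags this right after Definition~\ref{def:partition-tree}; concretely, the extended systems of Section~\ref{sec:ext-csys} impose \(\Solved\)-basis and uniformity, which an arbitrary attacker recipe violates). Your ``frozen variables'' invariant cannot repair this: each action variable is frozen only from the step \emph{after} its introduction, at its already-modified value, so the final word \(\tr_s\Sigma'\) differs from \(\tr_c\) in general, and soundness on the \(P_2\) side only yields \(P_2 \Cstep{\tr_s\Sigma'} A_2'\), not \(P_2 \Cstep{\tr_c} A_2'\). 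Only the realised first-order terms can be preserved, not the visible word.

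The missing idea is the paper's closing argument, which is short but essential. Having obtained \(P_2 \Cstep{\tr_s\Sigma'} (\Q,\Psi)\) with \(\Psi \StatEq \Phi\), one shows that the \emph{original} recipes evaluate identically in \(\Psi\), i.e. \(\tr\,\Psi\norm = \tr_s\Sigma'\Psi\norm\), whence \(P_2 \Cstep{\tr} (\Q,\Psi)\) as well: every \(X \in \vars[2](\tr_s)\) has a deduction fact \((X \dedfact x) \in \Df(\C_1)\), and since \((\Sigma,\sigma_1)\) and \((\Sigma',\sigma_1)\) are both solutions of \(\C_1\) with the \emph{same} first-order solution \(\sigma_1\) (the lifting keeps \(\sigma\) fixed), both \(X\Sigma\) and \(X\Sigma'\) compute \(x\sigma_1\norm\) in \(\Phi = \Phi(\C_1)\sigma_1\norm\); static equivalence \(\Phi \StatEq \Psi\) then transfers these recipe-evaluation equalities to \(\Psi\), and since the semantics' side conditions depend only on the messages computed by recipes, the word \(\tr\) fires on the \(P_2\) side too. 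With this one extra step your proof closes; the remaining bookkeeping you describe (threading restrictions of a common final solution through the node predicates via Items~\ref{it:configuration-pred-dom} and~\ref{it:PT-monotonic}, plus a renaming of the completeness-produced action variables) matches the paper's treatment.
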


The proof of this result mostly follows from a combination of the soundness and completeness of the symbolic semantics, with two technical lemmas generalising the properties of the partition tree from edges to branches.
The detailed statements and proofs can be found in Appendix~\ref{app:decision-proc-trace-from-ptree}.

\paragraph{Simulations}
In the case of trace equivalence, a witness that \(A \not \TraceEq B\) was simply a trace of \(A\) or \(B\) that has no equivalent trace in the other process.
The case of labelled bisimilarity is however more involved.
Using vocabulary borrowed from game theory, the definition of bisimilarity can be seen as a \emph{prover-disprover game}:
at each state of the game the disprover chooses a transition from one of the two processes and the prover answers by choosing a transition of the same type from the other process (plus some potential \(\tau\)-transitions).
The disprover wins the game if they manage to reach a state with non-statically-equivalent processes or if the prover cannot answer to one of the moves.
A witness of non-equivalence is thus a winning strategy for the disprover.
We formalise this below,
recalling that if \(\alpha\) is an action, we write \(\bar{\alpha} = \alpha\) if \(\alpha \neq \tau\) and \(\bar{\alpha} = \epsilon\) if \(\alpha = \tau\).

\begin{definition}[witnesses] 
A \emph{witness of non-bisimilarity} \(\witness\) is a set of pairs \((A_0,A_1)\) verifying the following two conditions:
\begin{enumerate}
  \item \(A_0\) and \(A_1\) are ground extended processes such that \(A_0 \StatEq A_1\)
  \item there exists \(b \in \{0,1\}\) and a transition \(A_b \cstep{\alpha} A_b'\) such that for all traces \(A_{1-b} \Cstep{\bar{\alpha}} A_{1-b}'\), either
  \(A_0' \not\StatEq A_1'\) or \((A_0',A_1') \in \witness\).
\end{enumerate}
We say that in addition that \(\witness\) is a \emph{witness of non-simulation} if the above two conditions can always be satisfied with \(b = 0\).
We say that \(\witness\) is a witness for \((A_0,A_1)\) if \((A_0,A_1) \in \witness\).
\end{definition}

Note that the witness can be seen as a relation corresponding to the negation of the definition of bisimilarity (\Cref{def:(bi)simulation}) minus the static equivalence, i.e. $\not\LabBis \setminus \not\StatEq$.

\begin{proposition}[witness-based characterisation of labelled bisimilarity] \label{prop:concrete-witness}
If \(A_0 \StatEq A_1\) then:
\begin{enumerate}
  \item \(A_0 \not\LabBis A_1\) \textit{iff} there exists a witness of non-bisimilarity \(\witness\) for \((A_0,A_1)\)
  \item \(A_0 \not\Simu A_1\) \textit{iff} there exists a witness of non-simulation \(\witness\) for \((A_0,A_1)\)
\end{enumerate}
\end{proposition}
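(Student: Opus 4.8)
The plan is to characterise $\LabBis$ and $\Simu$ as greatest fixpoints and then read witnesses off the \emph{negation} of the associated fixpoint equation. Consider the monotone operator on relations over ground extended processes
\[
  \Psi_s(\R) = \{(A_0,A_1) \mid A_0 \StatEq A_1 \text{ and } \forall\, A_0 \cstep{\alpha} A_0',\ \exists\, A_1 \Cstep{\bar\alpha} A_1',\ (A_0',A_1') \in \R\}\,,
\]
and let $\Psi$ be its symmetric refinement, requiring in addition that every move $A_1 \cstep{\alpha} A_1'$ be matched by some $A_0 \Cstep{\bar\alpha} A_0'$ with $(A_0',A_1') \in \R$. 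By Definition~\ref{def:(bi)simulation}, simulations are exactly the post-fixpoints $\R \subseteq \Psi_s(\R)$ and symmetric bisimulations the symmetric post-fixpoints of $\Psi$; hence $\Simu$ and $\LabBis$ are the respective greatest post-fixpoints. Since $\Psi_s,\Psi$ are monotone and $\Psi$ preserves symmetry, a standard Knaster--Tarski argument shows these greatest post-fixpoints are fixpoints: $\Simu = \Psi_s(\Simu)$ and $\LabBis = \Psi(\LabBis)$. All the work is then to align the Boolean structure of $\neg\Psi,\neg\Psi_s$ with the definition of a witness.

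For the forward implications I would take $\witness = \{(A_0,A_1) \mid A_0,A_1 \text{ ground},\ A_0 \StatEq A_1,\ A_0 \not\LabBis A_1\}$ (and its analogue with $\Simu$ for simulation) and show it is a witness containing $(A_0,A_1)$. Condition~1 holds by construction. For condition~2, fix $(A_0,A_1) \in \witness$: since $A_0 \StatEq A_1$ yet $(A_0,A_1) \notin \LabBis = \Psi(\LabBis)$, one of the two clauses of $\Psi$ must fail, i.e.\ there are $b \in \{0,1\}$ and a transition $A_b \cstep{\alpha} A_b'$ such that every matching $A_{1-b} \Cstep{\bar\alpha} A_{1-b}'$ gives $(A_0',A_1') \notin \LabBis$. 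Finally $(A_0',A_1') \notin \LabBis$ unfolds precisely into ``$A_0' \not\StatEq A_1'$ or $(A_0',A_1') \in \witness$'' (using $\LabBis {\subseteq} \StatEq$), which is condition~2. The simulation case is identical except that only the $b=0$ clause of $\Psi_s$ exists, matching the extra constraint in the definition of a witness of non-simulation; and concrete transitions preserve groundness, so the pairs produced are indeed ground.

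For the backward implications, assume $\witness$ is a witness and, for contradiction, that some $(A_0,A_1) \in \witness$ satisfies $A_0 \LabBis A_1$ (resp.\ $A_0 \Simu A_1$); I would establish $\witness \cap {\LabBis} = \emptyset$ by well-founded descent. The distinguished move $A_b \cstep{\alpha} A_b'$ from condition~2 is matched, since $\LabBis = \Psi(\LabBis)$, by some $A_{1-b} \Cstep{\bar\alpha} A_{1-b}'$ with $A_0' \LabBis A_1'$, hence $A_0' \StatEq A_1'$; this excludes the first disjunct of condition~2, forcing $(A_0',A_1') \in \witness \cap {\LabBis}$. Iterating yields an infinite sequence in $\witness \cap {\LabBis}$, each element obtained from the previous by \emph{one} real transition on the moving side and zero or more on the other. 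The crux, and the main obstacle, is to rule this out by a well-founded measure. The naive choice of total term size fails, because rule~\eqref{rule:in} substitutes an attacker term $\xi_t\Phi\norm$ of unbounded size into the continuation. I would instead take $m(A)$ to be the number of \emph{structural process operators} ($\mid$, conditionals, input and output prefixes) occurring in the multiset $A$: this is finite by the absence of replication and, crucially, invariant under first-order substitution since messages contain no process constructors. A short case analysis on Figure~\ref{fig:semantics} shows every transition strictly decreases $m$, so $m(A_0')+m(A_1') < m(A_0)+m(A_1)$ along the descent, contradicting its infiniteness. Hence $(A_0,A_1) \notin \LabBis$, and the argument for $\Simu$ is the same restricted to $b=0$.
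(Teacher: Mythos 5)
Your proposal is correct and takes essentially the same route as the paper: your canonical witness \(\{(A_0,A_1) \mid A_0 \StatEq A_1,\ A_0 \not\LabBis A_1\}\) is exactly the paper's conversion of its maximal \emph{labelled attack} \(\not\LabBis\) into a witness (i.e.\ \(\not\LabBis \smallsetminus \not\StatEq\), as the paper itself remarks after the definition), and your Knaster--Tarski unfolding plus well-founded descent simply make explicit the paper's one-line appeal to boundedness (``since processes are bounded there exist no infinite sequences of transitions''). The only genuine addition is your operator-counting measure --- invariant under first-order substitution, where raw term size fails because of rule \eqref{rule:in} --- which rigorously discharges that termination claim.
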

\begin{proof}
We only give the proof in the case of \(\LabBis\), as the proof for \(\Simu\) is analogue.
First, we observe that \(A_0 \not \LabBis A_1\) \textit{iff} there exists a binary relation \(\disim\) on ground extended processes such that \(A_0 \disim A_1\) and, for all \((B_0, B_1) \in \disim\), either
\begin{enumerate*}
  \item \(B_0 \not \StatEq B_1\), or
  \item there exists \(b \in \{0,1\}\) and a transition \(B_b \cstep{\alpha} B_b'\) such that for all traces \(B_{1-b} \Cstep{\bar{\alpha}} B_{1-b}'\),
  \(B_0' \disim B_1'\).
\end{enumerate*}
Let us call such a relation \(\disim\) a \emph{labelled attack on \((A_0,A_1)\)}.
Since processes are bounded there exist no infinite sequences of transitions and for all $A,B$, \(A \not \LabBis B\) therefore straightforwardly rephrases to the existence of a labelled attack \(\disim\) such that \(A \disim B\).
It then suffices to observe that
\begin{enumerate}
  \item If \(\disim\) is a labelled attack on \((A_0,A_1)\) then \(\disim\ \smallsetminus \not\StatEq\) is a witness for \((A_0,A_1)\).
  \item If \(\witness\) is a witness for \((A_0,A_1)\) then \(\witness\ \cup \not\StatEq\) is a labelled attack on \((A_0,A_1)\). \qedhere
\end{enumerate}
\end{proof}

We now define a symbolic variant of the notion of witness that can be constructed within a partition tree \(T\).
In essence, a symbolic witness may be seen as a winning strategy for the disprover in a bisimulation game limited to the finite transition relation \(\tstep{}\).

\begin{definition}[symbolic witnesses]
A \emph{symbolic witness of non bisimilarity} \(\witness_s\) w.r.t. a partition tree \(T\) is a finite tree whose nodes \(N\) are labelled by tuples \((A_0,n)\) or \((A_0,A_1,n)\) with \(n\) a node of \(T\) and \(A_0,A_1 \in \Gamma(n)\).
We also require that if \(N\) is labelled \((A_0,A_1,n)\), there exist \(b \in \{0,1\}\) and a transition \(A_b,n \tstep{\alpha} A'_b, n'\) (possibly \(\alpha = \tau\)) such that:
\begin{enumerate}
  \item If \(A_{1-b}\) is not reducible by \(\Tstep{\bar{\alpha}}\) then \(N\) has a unique child labelled \((A_b',n')\);
  \item otherwise the children of \(N\) are the nodes labelled \((A'_0,A'_1,n')\), \(A_{1-b}, n \Tstep{\bar{\alpha}} A'_{1-b}, n'\).
\end{enumerate}
We say that \(\witness_s\) is a \emph{witness of non-simulation} if the above two conditions can always be satisfied with \(b = 0\).
We say that \(\witness_s\) is a symbolic witness for \((A_0,A_1,n)\) when \(\rootf(\witness_s)\) is labelled by \((A_0,A_1,n)\).
\end{definition}

However purely symbolic witnesses do not exhibit consistent proofs of non-equivalence in general.
Indeed, while a concrete execution fixes the effective value of an input \(x\) at the moment it is performed, a symbolic execution records constraints on \(x\) all along the execution.
Rephrasing, the symbolic semantics puts the prover at a disadvantage in the game, since they have to answer to the disprover's input actions without knowing the values of the input terms.
Symbolic witnesses inducing invalid winning strategies for the disprover will be discarded by their absence of \emph{solutions} in the following sense:


\begin{definition}[solution of a symbolic witness] \label{def:solution-witness}
Let \(\witness_s\) be a symbolic witness.
A \emph{solution} of \(\witness_s\) is a function \(\fsol\) that maps nodes of \(\witness_s\) to ground second-order substitutions such that for all nodes \(N\) labelled \((A_0,n)\) or \((A_0,A_1,n)\),
\begin{enumerate}
  \item for all \(A_b = (\P,\C)\), \((\fsol(N),\sigma) \in \Sol[\pi(n)](\C)\) for some \(\sigma\);
  \item for all children nodes \(N_1,N_2\) of \(N\), \(\fsol(N) \subseteq \fsol(N_1) = \fsol(N_2)\).
\end{enumerate}
We denote \(\Sol(\witness_s)\) the set of solutions of \(\witness_s\).
\end{definition}


\begin{theorem}[restate=thmLabBisPtree,name={partition-tree-based characterisation of labelled bisimilarity}] \label{thm:ptree-lab-bis}
If \(T \in \ptree(P_1,P_2)\):
\begin{enumerate}
  \item \(P_1 \LabBis P_2\) \textit{iff} for all symbolic witnesses of non-bisimilarity \(\witness_s\) for \((P_1,P_2,\rootf(T))\), we have \(\Sol(\witness_s) = \emptyset\)
  \item \(P_1 \Simu P_2\) \textit{iff} for all symbolic witnesses of non-simulation \(\witness_s\) for \((P_1,P_2,\rootf(T))\), we have \(\Sol(\witness_s) = \emptyset\)
\end{enumerate}
\end{theorem}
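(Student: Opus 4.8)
The plan is to prove each equivalence by its contrapositive and to reduce everything to the concrete characterisation of Proposition~\ref{prop:concrete-witness}. Concretely, I would establish a bidirectional correspondence: there exists a concrete witness of non-bisimilarity for \((P_1,P_2)\) if and only if there exists a symbolic witness \(\witness_s\) for \((P_1,P_2,\rootf(T))\) with \(\Sol(\witness_s) \neq \emptyset\). Combined with Proposition~\ref{prop:concrete-witness} this yields point~1, and point~2 (simulation) is handled by the identical argument carrying the constraint \(b = 0\) through every step. The two pillars of the correspondence are the soundness and completeness of the symbolic semantics (Proposition~\ref{prop:symbolic-sound-complete}) and the defining properties of the partition tree (Definitions~\ref{def:configuration} and~\ref{def:partition-tree}), the latter providing the static-equivalence bookkeeping that links concrete and symbolic processes.

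For the direction where a solved symbolic witness yields concrete non-bisimilarity, I would fix a solution \(\fsol \in \Sol(\witness_s)\) and map each node \(N\) labelled \((A_0,A_1,n)\) to the pair of concrete processes obtained by instantiating \(A_0,A_1\) with \(\fsol(N)\) and the associated first-order solutions. Item~\ref{it:configuration-same-solutions} of Definition~\ref{def:configuration} makes these two concrete processes statically equivalent, giving condition~1 of a concrete witness. For condition~2, each symbolic disprover move \(A_b,n \tstep{\alpha} A_b',n'\) is realised as a concrete transition by soundness. To check that every concrete prover response is accounted for, I lift an arbitrary response \(A_{1-b} \Cstep{\bar{\alpha}} A_{1-b}'\) back to a symbolic one by completeness: if its frame is statically equivalent to the disprover's result, saturation of nodes (item~\ref{it:PT-parent-concrete-derivation} of Definition~\ref{def:partition-tree}) places it in \(n'\), so it is one of the children of \(N\) and hence a recursive attack position; otherwise the two frames are statically distinguishable and it is an immediate win for the disprover.

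For the converse, starting from a concrete witness \(\witness\) (obtained via Proposition~\ref{prop:concrete-witness}) I would build \(\witness_s\) and \(\fsol\) by induction on the attack tree, which is finite since processes are bounded. At each node I lift the disprover's concrete move to a symbolic partition-tree transition via completeness, extracting a second-order solution that instantiates the symbolic action to the concrete recipe used; item~\ref{it:PT-child-concrete-derivation} of Definition~\ref{def:partition-tree} places the resulting symbolic process in the appropriate child node \(n'\). The children of the new symbolic node are precisely the symbolic prover responses landing in \(n'\); by saturation these are exactly the concrete responses statically equivalent to the disprover's result, so the concrete witness guarantees they are recursive attack positions and the induction proceeds.

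The main obstacle is the consistency constraint on solutions, namely \(\fsol(N) \subseteq \fsol(N_1) = \fsol(N_2)\) for sibling nodes: all children of a node must carry one common second-order substitution, even though they correspond to distinct prover responses. This is exactly what the partition tree is engineered to permit. Since every prover response retained as a child lives in the single node \(n'\), item~\ref{it:configuration-same-solutions} of Definition~\ref{def:configuration} guarantees that a single second-order substitution \(\Sigma\) simultaneously solves all of them---only their first-order solutions differ---so one may take \(\fsol(N_i) = \Sigma\) uniformly. Threading this choice coherently along an entire branch, while respecting the types of second-order variables and the monotonic \(\subseteq\)-increase of solutions along edges (item~\ref{it:PT-monotonic} of Definition~\ref{def:partition-tree}), is the delicate bookkeeping at the heart of the argument.
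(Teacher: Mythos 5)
Your proposal is correct and follows essentially the same route as the paper's proof: the paper likewise establishes a strengthened lemma (Lemma~\ref{lem:lab-bis-ptree}, proved by induction with the root solution prescribed) reducing everything to the concrete-witness characterisation of Proposition~\ref{prop:concrete-witness}, realising and lifting moves via Proposition~\ref{prop:symbolic-sound-complete} together with Items~\ref{it:PT-child-concrete-derivation} and~\ref{it:PT-parent-concrete-derivation} of Definition~\ref{def:partition-tree}. In particular, your resolution of the sibling-consistency constraint \(\fsol(N) \subseteq \fsol(N_1) = \fsol(N_2)\) --- using Item~\ref{it:configuration-same-solutions} of Definition~\ref{def:configuration} to fix one common second-order substitution and threading it through the induction --- is exactly the paper's device of stating the induction invariant as ``there exists \(\fsol \in \Sol(\witness_s)\) with \(\fsol(\rootf(\witness_s)) = \Sigma\)'' for a given \(\Sigma\).
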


The proof, although technical, simply connects the symbolic witnesses to concrete ones using the soundness and completeness of the symbolic semantics as well as the properties of the partition tree, following similar ideas as the analogue proof for trace equivalence.
The detailed proof can be found in Appendix~\ref{app:decision-proc-bisim-from-ptree}.

Assuming one has computed a partition tree \(T \in \ptree(P_1,P_2)\) and the mgs of each of its nodes, since there are finitely-many possible symbolic witnesses, Theorem \ref{thm:ptree-lab-bis} yields a decision procedure for the labelled bisimilarity of \(P_1\) and \(P_2\) provided one can decide whether a given symbolic witness has a solution.
For that we rely on a simple, bottom-up unification of the mgs' appearing in the witness;
details can be found in Section~\ref{sec:witness-complexity} where we study more precisely the complexity of partition-tree-based decision procedures.

\subsection{Generating partition trees (with  a constraint-solving oracle)} \label{sec:overview-ptree}


In this section we detail the skeleton of the procedure for computing a partition tree of two plain processes \(P_1\) and \(P_2\).
The description is modular in that most of the technical details, in particular the modelling of the node predicates and how we obtain the expected properties of the tree, are abstracted by a \emph{constraint-solving oracle} that we detail in the next sections.
This section should therefore be seen as the overview of the whole algorithm for deciding equivalence properties, which gives enough insight to discuss our implementation.

The algorithm generates the nodes of the tree top-down, that is, from the root to the leaves.
We outline the procedure in Figure \ref{fig:overview-ptree}.

\begin{figure}[ht]
  \centering
  \includegraphics[width=0.99\textwidth]{files/overview-generation.pdf}
  \caption{Computing the subtree of a partition tree rooted in a node \(n\)}
  \label{fig:overview-ptree}
\end{figure}

Let us now describe the algorithm to compute \(T \in \ptree(P_1,P_2)\) in more details, up to the technical developments detailed in the next sections.
\begin{enumerate}
  \item First, we initiate a root containing \(P_1\) and \(P_2\) and saturate the configuration by \(\tau\) transitions.
  That is, we consider the set of symbolic processes
  \[\Gamma(\rootf(T)) = \left\{(\P,\C) \mid P_i \Sstep{\tau} (\P,\C), i \in \{1,2\}, \Sol(\C) \neq \emptyset\right\}\]
  Note that the constraint systems \(\C\) involved in this definition do not contain deduction facts, which makes the decision of the emptiness of \(\Sol(\C)\) relatively straightforward.
  Using the terminology of the later Section~\ref{sec:ptree}, using \emph{simplification rules} permits to put the constraints into a simple form where the existence of a solution is trivial to decide.
  \item Then let us assume we already constructed a node \(n\) of the tree using this algorithm, in particular the corresponding configuration \((\Gamma(n),\pi(n))\).
  To compute the children of \(n\) we first enumerate all symbolic transitions from processes of \(\Gamma(n)\), separating input and output actions.
  That is, we compute the two sets
  \begin{align*}
    \Gamma^\inp & = \left\{B \mid A \in \Gamma(n), A \Sstep{\InP{Y}{X}} B\right\} &
    \Gamma^\outp & = \left\{B \mid A \in \Gamma(n), A \Sstep{\OutP{Y}{\ax_p}} B\right\}
  \end{align*}
  \item \(\Gamma^\inp\) and \(\Gamma^\outp\) are two intermediary sets that do not satisfy yet the father-child properties of the partition tree.
  For that we use a constraint-solving algorithm detailed in Chapter \ref{sec:ptree} (\emph{simplification rules} again, but also \emph{case distinction rules}) that will partition \(\Gamma^\inp\) and \(\Gamma^\outp\) to gather symbolic processes with statically-equivalent solutions and remove those with no solutions.
  This constraint solving results into a sequence of configurations
  \begin{align*}
    (\Gamma_1^\inp,\pi_1^\inp), \ldots, (\Gamma_p^\inp,\pi_p^\inp) & &
    (\Gamma_1^\outp,\pi_1^\outp), \ldots, (\Gamma_q^\outp,\pi_q^\outp)
  \end{align*}
  that will label the children of \(n\).
  The procedure is then carried out recursively from these child nodes until no more symbolic transitions are available.
\end{enumerate}

In Section~\ref{sec:ptree} we detail the missing parts of this procedure that take the form of constraint-solving rules, in the context of constructor-destructor subterm convergent theories.
Note that the approach is modular in that the proofs we have carried so far are independent of the assumptions on the rewriting system:
generalising the results of Section~\ref{sec:ptree} will automatically result in the decidability of trace equivalence and labelled bisimilarity of bounded processes for the extended class of theories.

\subsection{Implementation and performances} \label{sec:implem}

\paragraph{The \deepsec prover}
Building on the procedure's structure described above and the internal solver developed in the next sections, we have implemented a prototype in OCaml, called \deepsec (DEciding Equivalence Properties in SECurity protocols).
The user specifies a rewriting system (that is checked to be constructor-destructor and subterm convergent by the tool), two bounded processes, and the tool verifies whether they are trace equivalent.
If not, a concrete attack trace is returned in a dedicated graphical interface;
we refer to the \deepsec's website for development credits, tutorials and details on practical usage~\cite{website}:
\begin{center}
  \url{https://deepsec-prover.github.io/}
\end{center}

The tool's specification language implements the grammar presented in Section~\ref{sec:model}, including some syntax extensions for non-deterministic choice, private function symbols, a restricted form of patterned \(\LetP\) bindings, as well as bounded replication \(\BangP[n] P\) defined as \(n\) parallel copies of \(P\).
These additional primitives should mostly be seen as syntactic sugar, although the native integration allowed specific optimisations compared to encodings within the initial calculus.
The syntax and structure of \deepsec's input files are similar to the widely used \proverif tool~\cite{manual-proverif} to make it easier for new users to discover and use it.

\paragraph{Trace equivalence vs (bi)simulation}

  The tool currently only implements the trace equivalence procedure as it is rather efficient. Following \Cref{thm:trace-equiv-ptree}, the procedure for checking trace equivalence between $P_1$ and $P_2$ consists in generating the partition tree and checking that each node contains symbolic constraint systems both from $P_1$ and $P_2$. As different branches of the partition tree are independent from one another, the implementation only requires to store in memory the current branch that is being verified, instead of the whole partition tree. On the other hand, the procedure for checking (bi)simulation both requires to compute and store in memory the full partition tree. In addition, the procedure also requires guessing a symbolic witness, which would be extremely inefficient. A natural follow up to our work would be to explore ways of effectively implementing the decision procedure for (bi)simulation that would avoid these two main hurdles.

\paragraph{Partial order reductions}
The tool also implements \emph{partial order reductions} (POR), an optimisation technique for protocol analysis developed by Baelde et al.~\cite{BDH15}.
The basic idea is to discard part of the state space that is redundant but this optimisation is only sound when processes are \emph{action-determinate}, as defined in~\cite{BDH15}.
Although we omit here the definition of determinacy for simplicity, let us mention that not using private channels and assigning a different channel name to each parallel process is a simple, syntactic way to ensure this property.
This is however not always possible---typically when looking at some anonymity or unlinkability properties.
Typically, the private authentication protocol used as a running example can be modelled as a determinate process, but not the Helios and BAC protocols (due to private channels or because this introduces artificial violations of the equivalence property).

In practice, \deepsec automatically detects action-determinate processes and activates the POR, which drastically reduces the number of symbolic executions that need to be considered.
We also go further and allow to verify a refined equivalence, \emph{equivalence by session}, that allows to use similar POR techniques without the restriction to determinate processes.
This contribution is however out of the scope of this article;
details can be found in~\cite{CKR19} and our experimental results presented below only include the base POR of~\cite{BDH15}.

\paragraph{Distributing the computation}
The main task of \deepsec is to generate a partition tree and, as we explained, this is done using a top-down approach.
This task can be distributed as computing a given node of the tree can be done independently of its sibling nodes.
However, some engineering is needed to avoid heavy communication overhead due to task scheduling.
Indeed, the partition tree is not a balanced tree and we do not know in advance which branches will be larger than others.
Because of this, we do not directly compute and return the children of each node in the most straightforward manner, but proceed in two steps:
\begin{enumerate}
  \item \label{it:parall-step-1} We start with a breadth-first generation of the partition tree.
  The number of pending nodes will gradually grow until eventually exceeding a threshold parameter \(n\).
  \item \label{it:parall-step-2} Each available core focuses on one of these nodes, computes the whole subtree rooted in this node in a depth-first manner and, when this the task is completed, is assigned to a new node until none remain.
\end{enumerate}
If some cores become idle for too long in Step \ref{it:parall-step-2} (because the number of cores exceeds the number of non-completed nodes), we perform a \emph{new round}, that is, we interrupt the working nodes and restart this two-step procedure on incomplete nodes.
Although doing so wastes some proof work, this improves performances for particularly unbalanced trees.
Note that parallelisation is also supported by other automated analysers such as \akiss~\cite{CCC16}, but \deepsec goes one step further as it is able to distribute the computation not only on multiple cores of a given machine but also clusters of computers.

\paragraph{Benchmarks}
We performed extensive benchmarks to compare \deepsec against other tools that verify equivalence properties for a bounded number of sessions:
\akiss~\cite{CCC16}, \apte~\cite{C14}, \satequiv~\cite{CDD17} and \spec~\cite{TNH16}.
Experiments are carried out on Intel Xeon 3.10GHz cores, with 40Go of memory.
We distributed the computation on 20 cores for \akiss and \deepsec as they support parallelisation---unlike the others which therefore use a single core.
The results are summarised in Table \ref{fig:bench}
with the following symbol conventions:

\begin{center}
  \begin{tabular}{cl}
    \verified & analysis terminates and equivalence holds \\
    \attacksimple & analysis terminates and an attack is found \\
    \outofmemory & analysis aborted due to memory overflow (Out of Memory)\\
    \outoftime & analysis aborted due to timeout (12 hours) \\
    \unable & the tool is not expressive enough to analyse the protocol
  \end{tabular}
\end{center}

We first analysed strong secrecy and anonymity for several classical authentication protocols.
The \deepsec tool clearly outperforms \akiss, \apte, and \spec.
The \satequiv tool becomes more efficient, when the number of sessions significantly increases.

To put more emphasis on the broad scope we also include analyses of unlinkability and anonymity properties for a number of other protocols.
This includes the Private authentication protocol used as a running example, BAC~\cite{P04} and the Helios voting protocol~\cite{A08}.
In addition we study a simplified version of the AKA protocol deployed in 3G telephony networks without XOR~\cite{AMR12}, the Passive Authentication protocol implemented in the European passport~\cite{P04}, as well as the Prêt-à-Voter protocol (PaV)~\cite{RS06}.
Note that, while PaV is a priori in the scope of \akiss, it failed to produce a proof: \akiss only approximates trace equivalence of non-determinate processes and finds a false attack here.
Finally we note that BAC, PaV and Helios protocols are not action-determinate and therefore do not benefit from the POR optimisation, which explains the much higher verification times when increasing the sessions.
Nevertheless, as exemplified by some examples, attacks may be found very efficiently, as it generally does not require to explore the entire state space.

\begin{table}
  \centering
  \small
  \renewcommand{\arraystretch}{0.945} 
  \setlength\tabcolsep{6pt}
  \scalebox{1}{%

  \begin{tabular}{cccc@{}cc@{}cc@{}cc@{}cc@{}c}
    \specialrule{1.2pt}{1pt}{1pt}
    \multicolumn{1}{c}{} & \multicolumn{2}{c}{\bf Protocol (\# of roles)} &\multicolumn{2}{c}{\akiss} & \multicolumn{2}{c}{\apte} & \multicolumn{2}{c}{\spec} & \multicolumn{2}{c}{\satequiv} & \multicolumn{2}{c}{\deepsec} \\
    \specialrule{1.2pt}{1pt}{1pt}
    \light & \multirow{6}{*}{Denning-Sacco} & 3 &\verified &$<$1s&\verified &$<$1s&\verified &11s &\verified &$<$1s&\verified &$<$1s\\
    \light & & 6 &\verified &$<$1s 	&\verified & 1s& \multicolumn{2}{c}{\outofmemory} &\verified &$<$1s &\verified & $<$1s\\
    \light & & 7 &\verified & 6s &  \verified & 3s & & &\verified &  $<$1s  & \verified & $<$1s\\
    \light & & 10 &\multicolumn{2}{c}{\outofmemory} &\verified & 9m49 & & &\verified &  $<$1s  &\verified &  $<$1s\\
    \light & & 12 & & &\multicolumn{2}{c}{\outoftime} & &  &\verified  & $<$1s  & \verified &$<$1s\\
    \light & & 29 & & & & & & &\verified  & $<$1s  &\verified  & 1s\\
    \cline{2-13}
    \light&\multirow{7}{*}{Wide Mouth Frog} & 3 & \verified & $<$1s & \verified  &$<$1s & \verified  &5s &  \verified &$<$1s&\verified & $<$1s\\
    \light& & 6 & \verified & $<$1s &  \verified &$<$1s &\verified & 1h11m & \verified & $<$1s & \verified & $<$1s\\
    \light& & 7 & \verified & $<$1s & \verified & 1s & \multicolumn{2}{c}{\outofmemory} &  \verified &$<$1s&\verified & $<$1s\\
    \light& & 10 & \verified & 10s & \verified & 3m35 & & & \verified &$<$1s &\verified & 1s\\
    \light& & 12 & \verified & 22m16s &\multicolumn{2}{c}{\outoftime} & & &\verified &$<$1s  &\verified  & $<$1s\\
    \light& & 14 & \multicolumn{2}{c}{\outofmemory} & & & & &\verified &$<$1s  & \verified & $<$1s\\
    \light & & 23 & & & & & & & \verified &$<$1s &\verified  &1s\\
    \cline{2-13}
    \light &\multirow{7}{*}{Yahalom-Lowe} & 3 & \verified & $<$1s & \verified & $<$1s & \verified & 7s & \verified &$<$1s & \verified & $<$1s\\
    \light & & 6 & \verified & 2s & \verified & 41s & \multicolumn{2}{c}{\outofmemory} & \verified &$<$1s & \verified & $<$1s\\
    \light & & 7 & \verified & 42s & \verified & 34m38s & & &\verified &1s  & \verified &$<$1s \\
    \light & & 10 & \multicolumn{2}{c}{\outofmemory} & \multicolumn{2}{c}{\outoftime} & &&\verified & 1s  & \verified &$<$1s\\
    \light & & 12 & & && && & \verified &4s & \verified &2s\\
    \light & & 14 & & && && & \verified &7s & \verified &2s\\
    \light \multirow{-20}{*}{\begin{sideways}Strong secrecy\end{sideways}} & & 17 & & && && & \verified &12s &\verified &8s\\
    \hline
    \light & \multirow{6}{*}{\begin{tabular}{@{}c@{}}Private\\Authentication\end{tabular}} & 2 &  \verified & $<$1s &  \verified & $<$1s & \multicolumn{2}{c}{\multirow{6}{*}{\unable}} & \multicolumn{2}{c}{\multirow{6}{*}{\unable}} & \verified & $<$1s\\
    \light & & 4 & \verified & $<$1s &\verified &  1s &&  & & &\verified & $<$1s\\
    \light & & 6 & \verified & 21s &\verified &  4m18s & && & & \verified & $<$1s\\
    \light & & 8 & \multicolumn{2}{c}{\outofmemory} & \multicolumn{2}{c}{\outoftime} && & & & \verified & 1s\\
    \light & & 10 & & & & & && & &\verified &  2s\\
    \light \multirow{-6}{*}{\light\begin{sideways}Anonymity\end{sideways}} & & 15 & & & & && & & & \verified & 32s\\
    \hline
    \light & \multirow{2}{*}{\begin{tabular}{@{}c@{}}3G-AKA\end{tabular}}& 4 & \verified & 1m35s & \verified & 1h23m & \multicolumn{2}{c}{\multirow{2}{*}{\unable}} & \multicolumn{2}{c}{\multirow{2}{*}{\unable}} &  \verified &$<$1s\\
    \light & & 6 & \multicolumn{2}{c}{\outofmemory} & \multicolumn{2}{c}{\outoftime} && & & & \verified & 2s\\
    \cline{2-13}
    \light &\multirow{6}{*}{\begin{tabular}{@{}c@{}}Passive\\ Authentication\end{tabular}} & 4 & \verified &$<$1s & \verified & 1s & \multicolumn{2}{c}{\multirow{6}{*}{\unable}} & \multicolumn{2}{c}{\multirow{6}{*}{\unable}} &\verified &$<$1s\\
    \light & & 6 & \verified & 2m15s & \verified & 1m27s & && & &\verified &$<$1s\\
    \light & & 7 & \verified & 1h40m & \verified & 1m44s & && & &\verified &1s\\
    \light & & 9 & \multicolumn{2}{c}{\outoftime} &\verified &  2h08m && & & &\verified &$<$1s\\
    \light & & 15 & & &\multicolumn{2}{c}{\outoftime} & && & & \verified & 9s\\
    \light & & 21 & & && & && & & \verified & 15s\\
    \cline{2-13}
    \light & \multirow{2}{*}{\begin{tabular}{@{}c@{}}BAC\end{tabular}} & 4 & \multicolumn{2}{c}{\outofmemory} & \attacksimple &38m56s & \multicolumn{2}{c}{\multirow{2}{*}{\unable}} & \multicolumn{2}{c}{\multirow{2}{*}{\unable}} & \attacksimple& 1s\\
    \light \multirow{-10}{*}{\light\begin{sideways}Unlinkability\end{sideways}} & & 6 & & &\multicolumn{2}{c}{\outoftime} & & & & & \multicolumn{2}{c}{\outoftime}\\
    \hline
    \light & \begin{tabular}{@{}c@{}}Pr\^et-\`a-Voter\end{tabular} & 6 &\multicolumn{2}{c}{\unable}&\multicolumn{2}{c}{\unable} &\multicolumn{2}{c}{\unable} &\multicolumn{2}{c}{\unable} &\verified & 2s\\
    \cline{2-13}
    \light & \begin{tabular}{@{}c@{}}Helios Vanilla\end{tabular} & 6 & \attacksimple &47s & \attacksimple &$<$1s & \multicolumn{2}{c}{\unable} &\multicolumn{2}{c}{\unable} & \attacksimple &$<$1s\\
    \cline{2-13}
    \light & \multirow{3}*{Helios ZKP (vote swap)} & 10 & \multicolumn{2}{c}{\multirow{3}{*}{\outofmemory}} & \multicolumn{2}{c}{\multirow{3}{*}{\unable}} & \multicolumn{2}{c}{\multirow{3}{*}{\unable}} & \multicolumn{2}{c}{\multirow{3}{*}{\unable}} & \verified & <1s\\
    \light & & 11 & & & & & & & & & \verified & 7m 24s\\
    \light \multirow{-5}{*}{\light\begin{sideways}Ballot privacy\end{sideways}} & & 12 & & & & & & & & & \verified & 1h 38m\\
    \hline
  \end{tabular}}

  \caption{Performances of \deepsec (20 cores) against other protocol analysers}
  \label{fig:bench}
\end{table}

\section{Generation of the partition tree}
\label{sec:ptree}
  

In the previous sections, we detailed how to use the partition tree to derive decision procedures for equivalence properties.
We describe in this section a constraint solving procedure that may be used to generate one in practice.

\subsection{Extended constraint systems} \label{sec:ext-csys}
In order to carry out the constraint solving required to construct the partition tree, we extend constraint systems with components allowing to reason more finely about the \emph{attacker's knowledge}.
The notion of solution of constraint system is also extended to capture their expected properties.

\subsubsection{Knowledge base and formulas} \label{sec:formulas}

  \paragraph{New constraints}
    From now on we assume the existence of a rewriting system \(\R\) that is constructor-destructor and subterm convergent (we recall that the results of the previous sections did not rely on this assumption). All our definitions, lemmas and theorems will thus implicitly depend on this rewriting system.
    We introduce an extension of constraint systems with second-order constraints that serve key roles in the generation of the partition tree:

    \caseitem{\emph{Giving a finite representation of the deductive capabilities of the attacker}.}

      This takes the form of a \emph{knowledge base} \(\Solved\) which is a finite set of deduction facts.
      By relying on subterm convergence among others, our procedure will ensure that a term \(u\) is deducible \textit{iff} it can be deduced by applying constructor symbols to deduction facts of \(\Solved\), which makes deducibility easily decidable due to the constructor-destructor property.
      In particular we will only consider solutions that compute terms using entries of \(\Solved\) in this restricted manner.

    \caseitem{\emph{Giving a finite representation of the distinguishing capabilities of the attacker}.}

      This takes the form of a set of \emph{formulas} \(\USolved\) that is, in short, a finite representation of the term equalities that hold in the current frame.
      In particular static equivalence will be characterisable only from the formulas of \(\USolved\).

    \caseitem{\emph{Recording the constraints imposed on second-order solutions during the constraint solving}.}

      When computing most general solutions or performing case analyses on the form of solutions, we track the resulting effect on second-order solutions in a set \(\Eqsnd\) that is the second-order analogue of \(\Eqfst\).
      This is mostly how we model the predicates \(\pi\) that appear in the configurations in partition trees (Definition~\ref{def:configuration}).

    \medskip

    More formally we consider, in addition to deduction facts and second-order equations, a new atomic second-order constraint, \emph{equality facts} \(\xi \eqf \zeta\), \(\xi\) and \(\zeta\) second-order terms.
    Unlike second-order equations that model syntactic equalities, equality facts capture equalities modulo theory, that is, the fact that \(\xi\) and \(\zeta\) deduce the same first-order term.
    Concretely we extend the relation \(\models\) (Section~\ref{sec:mgs-def}) with
    \[(\Phi,\Sigma,\sigma) \models \xi \eqf \zeta \quad \mbox{iff} \quad \xi \Sigma \Phi \sigma\norm = \zeta \Sigma \Phi \sigma \norm\]
    We now define the constraints that are typically put in the set \(\USolved\).

    \begin{definition}[deduction formula, equality formula]
      A \emph{deduction} (resp. \emph{equality}) \emph{formula} is a constraint of the form \(\clause[S]{H}{(C_1 \wedge \ldots \wedge C_n)}\):
      \begin{enumerate}
        \item \(S\) is a set of (both first-order and second-order) variables;
        \item \(H\) is a deduction fact (resp. an equality fact);
        \item for all \(i \in \{1, \ldots, n\}\), \(C_i\) is either a deduction fact of the form \(X \dedfact t\), \(X \in \X[2]\), or a first-order syntactic equation \(u \eqs v\).
      \end{enumerate}
      A formula \(\psi\) is called \emph{solved} when it contains no hypotheses, i.e., \(\psi = (\clause[\emptyset]{H}{\top}) = H\).
      Given a formula \(\psi = \clause[S]{H}{\varphi}\), we denote by \(\Fhyp(\psi)\) the set of the syntactic equations appearing in the hypotheses \(\varphi\), and by \(\Df(\psi)\) the set of deduction facts in \(\varphi\).
    \end{definition}

    Intuitively, a formula captures a deduction or comparison that the attacker may perform and the premisses \(C_1, \ldots, C_n\) express conditions under which this is possible.
    Typically if the attacker observed a ciphertext \(\rsenc(m,r,k)\) (bound to an axiom \(\ax\)), we may express the deducibility of \(m\) through the formula
    \[\clause[X]{\rsdec(\ax,X) \dedfact m}{X \dedfact k}\]
    Another example is the following formula that expresses the tautology that two recipes deducing the same term should be equal in the sense of an equality fact:
    \[\clause[X,Y,z]{X \eqf Y}{(X \dedfact z \wedge Y \dedfact z)}\]
    This formula will serve as a generic placeholder when computing equality formulas during the constraint solving, that is, we will always add equality formulas obtained by substituting variables in the above formula.
    Although we consider arbitrary formulas such as the above two during the computation of the partition tree, note that only formulas of a certain shape will eventually be added in the set \(\USolved\) recording the attacker's distinguishing capabilities.
    We give more details about the invariants of the procedure in Appendix~\ref{app:invariants}, but we can mention for example that the formulas effectively recorded in \(\USolved\) will be of the form \(\clause{H}{\varphi}\), i.e., there are no universally-quantified variables, and \(\varphi\) only contains first-order equations.

  \paragraph{Extended constraint systems}
    We now formalise how we extend constraint systems to store the knowledge base, formulas, and to capture restrictions on the form of solutions.

    \begin{definition}[extended constraint system]
      A tuple \(\C^e = (\Phi, \Df, \Eqfst, \Eqsnd, \Solved, \USolved)\) is called an \emph{extended constraint system} where:
      \begin{enumerate}
        \item \((\Phi,\Df,\Eqfst)\) is a constraint system, although more general in that \(\Df\) may contain constraints of the form \(X \dedfact u\) or \(\forall X.\, X \ndedfact u\) where \(u\) may be an arbitrary constructor term;
        \item \(\Eqsnd\) is a set of second-order equations and constraints of the form \(\forall Y_1,\ldots, Y_k. \bigvee_{j=1}^p \xi_j \neqs \zeta_j\)
        \item \(\Solved\) is a set of deduction facts;
        \item \(\USolved\) is a set of deduction and equality formulas.
      \end{enumerate}
    \end{definition}

    As explained earlier, the set \(\Eqsnd\) gathers constraints to be satisfied by the second-order solutions of the system, \(\Solved\) is a finite representation of the attacker knowledge, and \(\USolved\) characterises the attacker capabilities to deduce and compare terms modulo theory.
    In particular the set \(\Eqsnd\) contains additional constraints to be satisfied by solutions while \(\Solved\) and \(\USolved\) are valid formulas that characterise potential attacker actions.
    For example, the (unsolved) deduction formulas in \(\USolved\) reason about potentially deducible terms:
    when such formula contains premisses, the procedure will perform a case analysis to distinguish cases where the hypotheses hold or not, leading to solved or trivial formulas, respectively.
    When a solved deduction formula is obtained this way, we add it to the knowledge base \(\Solved\) if \(u\) is not already deducible from it.

\subsubsection{(Most general) solutions} \label{sec:ext-mgs}

  We now define how the notion of solutions is lifted to extended constraint systems and how this embeds the predicates \(\pi\) used in the definition of partition-tree configurations.
  The definition of a solution \((\Sigma,\sigma)\) of \(\C^e\) follows three guidelines:
  \begin{enumerate*}
    \item it should be a solution in the usual sense and satisfy \(\Eqsnd(\C^e)\);
    \item the set of formulas \(\USolved(\C^e)\) plays no role in the definition of solutions:
    we will only prove invariants that this set verifies during our specific constraint-solving procedure (see Appendix~\ref{app:ptree}, Section \ref{app:invariants}); and
    \item all recipes used in the solution should have been constructed from the knowledge base \(\Solved(\C^e)\), \emph{uniformly} (that is, a same first-order term should not be deduced by different recipes in the solution).
  \end{enumerate*}
  In particular this requires a notion of \emph{consequence}, indicating that a recipe can be deduced from the knowledge base.

  \begin{definition}[consequence] \label{def:consequence}
    We define the set of \emph{consequences} of a set of deduction facts \(S\), denoted \(\conseq(S)\), as the set of pairs \((C[\xi_1,\ldots, \xi_n],C[u_1,\ldots, u_n])\)
    where \(C\) is a context built using \(\sigc \cup \sig_0\) and for all \(i \in \{1,\ldots, n\}\), \(\xi_i \dedfact u_i \in S\).
    We write \(\xi \in \conseq(S)\) if \(\exists t.\, (\xi,t) \in \conseq(S)\).
  \end{definition}

  We recall that by definition a deduction fact never has a constructor function symbol at its root (Definition \ref{def:constraint}):
  in particular if \(\xi \in \conseq(S)\), the context \(C\) in the above definition is unique.
  Writing \(\xi = C[\xi_1, \ldots, \xi_n]\) it is therefore possible to define unambiguously the set of \emph{consequential subterms} of \(\xi\)
  \[\stc(\xi,S) = \{ \xi_{|p} \mid p \text{ position of } C\}\]
  If \(R\) is a set of recipes we write \(\stc(R,S) = \bigcup_{\xi \in R} \stc(\xi,S)\).
  From this we can define solutions of extended constraint systems.

  \begin{definition}[solution of an extended constraint system] \label{def:ext-sol}
    A pair of substitutions \((\Sigma,\sigma)\) is a solution of \((\Phi, \Df, \Eqfst, \Eqsnd, \Solved, \USolved)\) if \((\Phi,\Sigma,\sigma) \models \Df \wedge \Eqfst \wedge \Eqsnd\) and the following two properties hold:
    \begin{enumerate}
      \item \emph{\(\Solved\)-Basis:}
      for all \(\xi \in \subterms[2](\im(\Sigma) \cup \Solved\Sigma)\),
      \(\msg(\xi\Phi\sigma)\) and \((\xi,\xi \Phi \sigma \norm) \in \conseq (\Solved \Sigma \sigma)\)
      \item \emph{Uniformity:}
      for all \(\xi,\xi' \in \stc(\im(\Sigma), \Solved\Sigma)\), \(\xi \Phi \sigma\norm = \xi' \Phi \sigma\norm\) implies \(\xi = \xi'\).
    \end{enumerate}
    The set of solutions of \(\C^e\) is written \(\Sol(\C^e)\) and \(\C^e\) is \emph{satisfiable} if \(\Sol(\C^e) \neq \emptyset\).
    We will denote by \(\bot\) an unsatisfiable extended constraint system.
    The notion of most general solution of \(\C^e\) is adapted in a straightforward way from the analogue for regular constraint systems.
  \end{definition}

  Intuitively when computing a node \(n\) of a partition tree, the extended constraint systems represent the predicate \(\pi(n)\):
  it will be defined so that given \((\P,\C) \in \Gamma(n)\) attached with \(\C^e\), we have \(\Sol[\pi(n)](\C) = \Sol(\C^e)\) (up to domain restriction).
  We detail this in Sections \ref{sec:cs-basics} and \ref{sec:correctness-proc}.

  \begin{example}
    Consider the extended constraint system \(\C^e\) defined by
    \begin{mathpar}
      \Phi = \{\ax_1 \mapsto \pair{k,x}\} \and
      \Df = \quanti{X}{0} \dedfact x \wedge \quanti{Y}{1} \dedfact y \and
      \Eqfst = y \eqs x \and
      \Eqsnd = \top \and
      \Solved = \ax_1 \dedfact \pair{k,x} \and
      \USolved = \Solved \wedge \fst(\ax_1) \dedfact k \wedge \snd(\ax_1) \dedfact x \wedge X \eqf \snd(\ax_1)
    \end{mathpar}
    This system involves an adversarial input \(x\) computable from an empty frame, which produces in response an output of \(\pair{k,x}\) for some name \(k\), and then the adversary inputs again \(y = x\).
    The set \(\USolved\), although not impacting the notion of solution, characterises here all successful operations that the attacker may perform in this situation:
    applying destructors to the term bound to \(\ax_1\) and observe that \(X\) and \(\snd(\ax_1)\) deduce the same term.

    We have for example \((\pair{X,\ax_1}, \pair{x,\pair{k,x}}) \in \conseq(\Solved \cup \Df)\).
    However the knowledge base is not \emph{saturated} in the sense that there are deducible terms \(u\), for example \(u = k\), such that there exist no recipes \(\xi\) such that \((\xi,u) \in \conseq(\Solved \cup \Df)\).
    In our procedure, the saturation is done by adding to \(\Solved\) all destructor applications that result into a non-consequence term.
    A saturated version of the constraint system would be
    \[\C^e_s = \C^e[\Solved \mapsto \Solved \wedge \fst(\ax_1) \dedfact k]\]
    Note that adding the deduction fact \(\snd(\ax_1) \dedfact x\) to the knowledge base is possible but redundant since \(x\) is already deducible from \(X\).
    The saturation
    ensures that for all \((\Sigma,\sigma)\) satisfying \(\Df(\C^e_s) \wedge \Eqfst(\C^e_s) \wedge \Eqsnd(\C^e_s)\), there exists \(\Sigma'\) such that \((\Sigma',\sigma) \in \Sol(\C^e_s)\), meaning that the requirement that solutions verify \(\Solved\)-basis can always be satisfied (which is key for satisfying the requirement that all symbolic transitions are reflected in the partition tree,
    recall Item~\ref{it:PT-child-concrete-derivation} of Definition~\ref{def:partition-tree}).
    Let us then consider
    \begin{align*}
      \Sigma & = \{X \mapsto a, Y \mapsto \snd(\ax_1)\} &
      \Sigma' & = \{X \mapsto a, Y \mapsto a\} &
      \sigma & = \{x \mapsto a, y \mapsto a\}
    \end{align*}
    Both \((\Sigma,\sigma)\) and \((\Sigma',\sigma)\) are solutions of the regular constraint system \((\Phi(\C^e_s), \Df(\C^e_s), \Eqfst(\C^e_s))\),
    but only \((\Sigma',\sigma)\) is a solution of \(\C^e\).
    This is because \(\Sigma\) does not verify uniformity:
    two different recipes \(a\) and \(\snd(\ax_1)\) are used to deduce the same first-order term \(a\).
    More generally we have \(\mgs(\C^e_s) = \{Y \mapsto X\}\).
    To obtain this result, the constraint-solving procedure for computing mgs', detailed in Section \ref{sec:mgs-gen}, will observe that \(X\) and \(Y\) deduce the same term and should therefore be unified to satisfy uniformity.
    A second-order equation \(X \eqs Y\) is thus added in \(\Eqsnd\), whose mgu is then the expected most general solution.
  \end{example}

  \begin{remark}[uniformity and complexity]
    In some sense enforcing that solutions are uniform ensures their minimality in terms of DAG size, by forcing identical recipes to be reused as much as possible when constructing the solution.
    This will be key for the complexity of our decision procedure, see Section~\ref{sec:termination}.
  \end{remark}

\subsubsection{\cs: the basics} \label{sec:cs-basics}
  Now we give details about the organisation of our constraint solver, detailed and proved correct in the next sections.
  As explained in Section \ref{sec:ext-mgs}, the goal of extended constraint system is to carry additional, structural information about solutions in a node \(n\), thus playing the role of the predicate \(\pi(n)\).
  More formally the procedure operates on:

  \begin{definition}[extended symbolic process, vector]
    An \emph{extended symbolic process} is a tuple \((\P,\C,\C^e)\) where \((\P,\C)\) is a symbolic process and \(\C^e\) an extended constraint system.
    We call a \emph{vector} a set of sets of extended symbolic processes \(\S = \{\Gamma_1, \ldots, \Gamma_n\}\).
    Each set \(\Gamma_i\) is called a \emph{component} of \(\S\).
  \end{definition}

  An extended symbolic process \((\P,\C,\C^e)\) induces a predicate \(\pi\) on the solutions of \((\P,\C)\) defined as follows:
  if \((\Sigma,\sigma) \in \Sol(\C)\), then \(\pi(\Sigma)\) holds \textit{iff} there exists \((\Sigma',\sigma') \in \Sol(\C^e)\) such that \(\Sigma \subseteq \Sigma'\) and \(\sigma \subseteq \sigma'\).
  In particular \(\Sol[\pi](\C) = \Sol(\C^e)\) (up to domain restriction).
  However, we recall that, in the definition of the node of a partition tree (configurations, Definition~\ref{def:configuration}), only \emph{one} common predicate \(\pi\) is used for all constraint systems \(\C\) of the configuration \(\Gamma\).
  For consistency, we therefore have to impose conditions ensuring that the predicates \(\pi\) corresponding to each \(\C \in \Gamma\) are all identical.
  Given a set of constraint systems \(\Gamma\) such that this property is not verified, the goal of the constraint-solving procedure is thus to refine \(\Gamma\) until obtaining a vector \(\S = \{\Gamma_1, \ldots, \Gamma_n\}\) such that

  \medskip
  \begin{enumerate}
    \item each component \(\Gamma_i\) can be used to model a partition-tree node, that is, a predicate \(\pi_i\) can be defined as above uniformly across all elements of \(\Gamma_i\);
    \item the underlying nodes verify the properties of the partition tree w.r.t. their father node \(\Gamma\).
  \end{enumerate}
  \medskip
  
  The procedure takes the form of various reduction relations that are used to refine a set of sets of extended symbolic processes, progressively, until reaching the final vector \(\S\):

  \medskip
  \begin{enumerate}
    \item A set of rules to compute \emph{most general solutions} (Section \ref{sec:mgs-gen}).
    \item A set of \emph{symbolic rules} (Section \ref{sec:symbolic-rules}) that formalise how to apply symbolic transitions to extended symbolic processes.
    \item Various sets of \emph{simplification rules} (Sections \ref{sec:first-simplification-rules}, \ref{sec:normalisation-rules} and \ref{sec:vector-rules}) that simplify vectors to remove unsatisfiable systems, or to split components that contain processes with non-statically-equivalent solutions.
    \item A set of \emph{case distinction rules} (Section \ref{sec:case-distinction-rules}) that refines the current vector based on case analyses to enforce the various properties of the partition tree (unique mgs in each component, maximal components w.r.t. static equivalence...).
  \end{enumerate}
  \medskip

  The overall procedure organising the above sets of rules into a complete algorithm to compute a partition tree is then detailed in Section \ref{sec:correctness-proc}.
  This can therefore be seen as the detailed version of the outline provided in Section \ref{sec:overview-ptree}.
  The main arguments for proving the correctness of the computation are also provided in Section~\ref{sec:correctness-proc};
  note however that these are only arguments of \emph{partial correctness}: 
  the termination of the procedure is later justified in Section~\ref{sec:termination}.

\subsection{\cs: computing most general solutions} \label{sec:mgs-gen}

\subsubsection{Applying solutions and unifiers} \label{sec:mgs-app}
  Because solutions \(\Sigma\) may introduce new second-order variables, their applications to a constraint system or a formula is not straightforward.
  Let for example \(\C^e = (\Phi, \Df, \Eqfst, \Eqsnd, \Solved, \USolved)\) where a variable \(\quanti{X}{k}\) is used to deduce a term \(u\), i.e. \((X \dedfact u) \in \Df\).
  Now say we want to consider the scenario where \(u\) is computed using a constructor \(\ffun/3\) and an entry of the knowledge base \((\xi \dedfact v) \in \Solved\) as a first argument, that is, we want to apply to \(\C^e\):
  \begin{align*}
    \Sigma & = \{X \rightarrow \ffun(\xi,X_1,X_2)\} &
    X_1,X_2 \text{ fresh}
  \end{align*}
  The raw application \(\C^e \Sigma\) has a flawed structure, in particular because the variables \(X_1\) and \(X_2\) would not be bound in the resulting system.
  To solve this issue we use a custom application mechanism that replaces \(X \dedfact u\) in \(\Df\) by \(X_1 \dedfact x_1,X_2 \dedfact x_2\), \(x_1,x_2\) fresh, and we add the equality \(u \eqs \ffun(v,x_1,x_2)\) to \(\Eqfst\) to express the logical link between \(X\) and \(X\Sigma = \ffun(\xi,X_1,X_2)\).
  %

  \begin{definition}[application of a substitution to an extended constraint system] \label{def:application_mgs_csys}
    Let \(\C^e = (\Phi, \Df, \Eqfst, \allowbreak\Eqsnd, \Solved, \USolved)\) and \(\Sigma\) be a substitution.
    We write \(\CApply{\Sigma}{\C^e}\) the constraint system:
    \[
      (\Phi, \Df', \Eqfst \wedge E_\Sigma, \Eqsnd \Sigma \wedge \Sigma_{|\vars[2](\C^e)}, \Solved\Sigma, \USolved\Sigma)
    \]
    where \(\Df' = (\Df \smallsetminus D_\dom) \cup D_\freshlab\) with the sets of:
    \begin{enumerate}
      \item deduction facts removed by the application of \(\Sigma\):
      \(D_\dom = \{ Y \dedfact u \in \Df \mid Y \in \dom(\Sigma)\}\)
      \item binding facts:
      \(D_\freshlab = \{ Y \dedfact y \mid Y \in \vars[2](\im(\Sigma_{|\vars[2](\C^e)})) \smallsetminus \vars[2](\C^e), y \text{ fresh}\}\)
      \item linking equations:
      \(E_\Sigma = \{ u \eqs v \mid Y \dedfact u \in D_\dom, (Y\Sigma,v) \in \conseq(\Solved\Sigma \cup \Df')\}\)
    \end{enumerate}
    By abuse of notation we may write \(\CApply{\Sigma}{S}\) for \((\P,\C,\CApply{\Sigma}{\C^e})\) if \(S = (\P,\C,\C^e)\).
  \end{definition}

  We will also use a similar mechanism for applying substitutions to formulas:

  \begin{definition}[application of a substitution to a formula]
    \label{def:application_mgs_formula}
    Let \(\C^e = (\Phi, \Df, \Eqfst, \Eqsnd, \Solved, \USolved)\),
    \(\psi = \clause[S]{H}{\varphi}\) be a formula,
    and \(\Sigma\) be a substitution.
    We denote \(\FApply{\Sigma}{\psi}{\C^e}\) (or \(\FApply{\Sigma}{\psi}{S}\) by abuse of notations if \(S = (\P,\C,\C^e)\)) the formula
    \[
        \clause[S']{H\Sigma}{(\Df' \wedge \Fhyp(\psi) \wedge E_\Sigma)}
    \]
    where \(\Df' = (\Df(\psi) \smallsetminus D_\dom) \cup D_\freshlab\), \(S' = (S \smallsetminus \dom(\Sigma)) \cup \vars[1](D_\freshlab)\) and:
    \begin{enumerate}
      \item \(D_\dom = \{ Y \dedfact u \in \Df(\psi) \mid Y \in \dom(\Sigma)\}\)
      \item \(D_\freshlab = \{ Y \dedfact y \mid Y \in \vars[2](\im(\Sigma)) \smallsetminus \vars[2](\C,\psi), y \text{ fresh}\}\)
      \item \(E_\Sigma = \left\{ u \eqs v \mid Y \dedfact u \in D_\dom,  (Y\Sigma,v) \in \conseq\left(\Solved \cup \Df \cup \Df'\right) \right\}\)
    \end{enumerate}
  \end{definition}

\subsubsection{Constraint-solving rules} \label{sec:mgs-rules}
  \paragraph{A complete example}
    By definition, the solutions of an extended constraint systems \(\C^e\) have to verify \(\Solved(\C^e)\)-basis, which means that in practice we only have to compute solutions constructed by applying constructors to the entries of the knowledge base and \(\Df\).
    Besides due to the uniformity requirement we can always unify two recipes that deduce the same first-order term.
    Putting everything together the most general solutions of an extended constraint system can then be computed with a simple transition system.
    Let us detail a complete example to illustrate the mechanisms in play, before formalising the corresponding constraint-solving rules.

    \begin{example}
      Given \(k,r \in \Nall\), let us consider a situation where the attacker has observed the output of a hash \(\hfun(r)\), then inputs a term \(x\), receives in response a ciphertext \(\raenc(k,r,x)\) encrypted with \(x\), and finally inputs a term \(y\) that should verify the equation \(y \eqs \pair{k,\hfun(x)}\).
      This is modelled by the frame \(\Phi = \{\ax_1 \mapsto \hfun(r), \ax_2 \mapsto \raenc(k,r,x)\}\) and the constraints
      \begin{align*}
        \Df & = \quanti{X}{1} \dedfact x \wedge \quanti{Y}{2} \dedfact y &
        \Eqfst & = y \eqs \pair{k,\hfun(x)}
      \end{align*}
      At this point a saturated knowledge base should contain the two entries of the frame and one recipe indicating that decrypting \(\ax_2\) results in obtaining the name \(k\).
      \[\Solved = \ax_1 \dedfact \hfun(r) \wedge \ax_2 \dedfact \raenc(k,r,x) \wedge \radec(\ax_2,X) \dedfact k\]
      We consider that \(\Eqsnd = \top\) and we leave the set of formulas \(\USolved\) unspecified since it has no influence on solutions.
      First of all some \emph{simplification rules} will be applied to propagate the equations on \(x\) and \(y\) to the whole system;
      here it will apply \(\mgu(\Eqfst)\) to \(\Df\), resulting in
      \[\Df = X \dedfact x \wedge Y \dedfact \pair{k,\hfun(x)}\]
      The constraint-solving rules detailed in the remaining of this section consider all ways to compute recipes for \(X\) and \(Y\) from the knowledge base.
      For each of these recipes two cases arise:
      either
      \begin{enumerate*}
        \item it is picked directly from the knowledge base; or
        \item it starts with a constructor symbol.
      \end{enumerate*}
      This will correspond to the constraint-solving rules \eqref{rule:res} and \eqref{rule:cons}, respectively.
      Finally, to satisfy the uniformity property,
      the procedure unifies any second-order terms in the system that deduce the same first-order term (Rule \eqref{rule:conseq}).
      We keep on refining the case analysis with these three rules, removing branches yielding contradictions, until no more rules are applicable.
      The resulting systems will either have no solutions, or be in a so-called \emph{solved form} and have \(\mgu(\Eqsnd)\) as a unique mgs.
      Let us do it for our example:

      \caseitem{\emph{{case 1}: the recipe for \(Y\) has a constructor symbol at its root (only possible case)}}

        The constructor in question is necessarily the pair.
        Therefore we let two fresh second-order variables \(\quanti{Y_1}{2},\quanti{Y_2}{2}\) and apply the substitution \(\{Y \mapsto \pair{Y_1,Y_2}\}\) to the system (in the sense of Definition \ref{def:application_mgs_csys}).
        After simplification this leads to the updated second-order constraints:
        \begin{align*}
          \Df & = X \dedfact x \wedge Y_1 \dedfact k \wedge Y_2 \dedfact \hfun(x) &
          \Eqsnd & = Y \eqs \pair{Y_1,Y_2}
        \end{align*}

      \caseitem{\emph{{case 1.1}: the recipe for \(Y_1\) is \(\radec(\ax_2,X)\) from the knowledge base (only possible case)}}

        We thus apply the substitution \(\{Y_1 \mapsto \radec(\ax_2,X)\}\), resulting in the updated constraints:
        \begin{align*}
          \Df & = X \dedfact x \wedge Y_2 \dedfact \hfun(x) &
          \Eqsnd & = Y \eqs \pair{\radec(\ax_2,X),Y_2} \wedge Y_1 \eqs \radec(\ax_2,X)
        \end{align*}

      \caseitem{\emph{{case 1.1.1}: the recipe for \(Y_2\) is the entry \(\ax_1\) from the knowledge base}}

        We therefore apply the substitution \(\{Y_2 \mapsto \ax_1\}\), resulting in the updated constraints:
        \begin{align*}
          \Df & = X \dedfact r &
          \Eqsnd & = Y \eqs \pair{\radec(\ax_2,X),\ax_1} \wedge Y_1 \eqs \radec(\ax_2,X) \wedge Y_2 \eqs \ax_1 
        \end{align*}
        However the constraints on \(X\) are now unsatisfiable:
        the corresponding recipe can neither start with a constructor nor be an entry of the knowledge base.
        The constraints in this branch of the case analysis therefore have no solutions.

      \caseitem{\emph{{case 1.1.2}: the recipe for \(Y_2\) has a constructor symbol at its root}}

        The constructor in question is necessarily \(\hfun\).
        Similarly to case 1 we apply the substitution \(\{Y_2 \mapsto \hfun(Y_3)\}\) for some fresh variable \(\quanti{Y_3}{2}\) which results in the updated constraints:
        \begin{align*}
          \Df & = X \dedfact x \wedge Y_3 \dedfact x &
          \Eqsnd & = Y \eqs \pair{\radec(\ax_2,X),\hfun(Y_3)} \wedge Y_1 \eqs \radec(\ax_2,X) \wedge Y_2 \eqs \hfun(Y_3)
        \end{align*}
        Then we observe that \(X\) and \(Y_3\) should be unified by uniformity because they deduce the same first-order term \(x\).
        We have \(\mgu(X \eqs Y_3) = \{Y_3 \mapsto X\}\) (we recall that \(\{X \mapsto Y_3\}\) is \emph{not} a valid second-order substitution because \(Y_3\) has a strictly greater type than \(X\)) which, after application to the system, results in the updated constraints:
        \begin{align*}
          \Df & = X \dedfact x &
          \Eqsnd & = Y \eqs \pair{\radec(\ax_2,X),\hfun(X)} \wedge Y_1 \eqs \radec(\ax_2,X) \wedge Y_2 \eqs \hfun(X) \wedge Y_3 \eqs X
        \end{align*}
        This will be a typical example of system in solved form.
        Since we considered all cases and only this branch was successful we conclude that the overall system has a unique mgs which is \(\mgu(\Eqsnd) = \{Y \mapsto \pair{\radec(\ax_2,X),\hfun(X)}\}\).
    \end{example}

  \paragraph{Formalisation}
    We will formalise the simplification rules in the next section and focus here on the main three rules \eqref{rule:conseq}, \eqref{rule:res} and \eqref{rule:cons} mentioned in the above example.
    For that we reason about a set \(\recipes(\C^e)\) that represents all recipes that are already used to constraint the solutions of \(\C^e\):
    \[\recipes(\C^e) =
      \stc(\im(\mgu(\Eqsnd(\C^e)), \Solved(\C^e) \cup \Df(\C^e))
      \cup \vars[2](\Df(\C^e))\]
    As we saw in the example, the mgs is gradually constructed ``within \(\Eqsnd\)'', in the sense that after normalising \(\C^e\) with the transition system defined in this section, it will have \(\mgu(\Eqsnd)\) as a unique mgs.
    In particular an invariant of our transition system is that \(\im(\mgu(\Eqsnd(\C^e)))\) is consequence of \(\Solved(\C^e)\) and \(\Df(\C^e)\), hence the notation \(\recipes(\C^e)\) is well defined.
    Formally speaking the transition system relies on three rules of the form
    \[\tag{\mbox{\(\star\)}} \label{rule:mgs-base}
      \C^e\ \xrightarrow{\Sigma}\ \CApply{\Sigma}{\C^e}\]
    for some substitution \(\Sigma\) and under various conditions capturing the possible ways to satisfy the constraints of \(\C^e\).
    The uniformity property is expressed by applying \eqref{rule:mgs-base} with
    \begin{equation}
      \tag{\mbox{\textsc{MGS-Conseq}}}
      \parbox{0.75\textwidth}{
        \(\Sigma = \mgu(\xi \eqs \zeta)\) for some \(\xi \in \recipes(\C^e) \cup \sig_0\), \(\zeta \in \recipes(\C^e)\), and provided
        \(\Sigma \neq \top\), \(\Sigma \neq \bot\), and \(\exists u. (\xi,u), (\zeta,u) \in \conseq(\Solved(\C^e) \cup \Df(\C^e))\)
      }
      \label{rule:conseq}
    \end{equation}
    The result is the unification in \(\C^e\) of the two second-order terms \(\xi\) and \(\zeta\) that deduce the same term \(u\).
    It then remains to add rules that express how each term \(u\), \((X \dedfact u) \in \Df(\C^e)\), can be constructed by the adversary from the knowledge base.
    When Rule \eqref{rule:conseq} is not applicable we thus apply \eqref{rule:mgs-base} under one of the following two conditions.
    The first one expresses that \(u\) is computed by directly using an entry from the knowledge base:
    \begin{equation}
      \tag{\mbox{\textsc{MGS-Res}}}
      \parbox{0.75\textwidth}{
        \(\Sigma = \mgu(X \eqs \xi) \neq \bot\) where, for some \(u \notin \X\), there exist deduction facts \((X \dedfact u) \in \Df(\C^e)\) and \((\xi \dedfact v) \in \Solved(\C^e)\)
      }
      \label{rule:res}
    \end{equation}
    Then the last rule expresses that the computation of \(u\) starts by applying a constructor \(\ffun\):
    \begin{equation}
      \tag{\mbox{\textsc{MGS-Cons}}}
      \parbox{0.75\textwidth}{
        \(\Sigma = \{ X \rightarrow \ffun(X_1,\ldots, X_n)\}\) where \(\quanti{X_1}{k},\ldots, \quanti{X_n}{k}\) are fresh, and there exists a deduction fact \((\quanti{X}{k} \dedfact \ffun(u_1,\ldots, u_n)) \in \Df(\C^e)\)
      }
      \label{rule:cons}
    \end{equation}
    As said above we always apply Rule~\eqref{rule:conseq} in priority, that is, we add to the last two rules the condition that Rule~\eqref{rule:conseq} cannot be applied.
    This will be crucial in particular when studying the complexity of the procedure in Section~\ref{sec:termination}.

\subsubsection{First set of simplification rules} \label{sec:first-simplification-rules}
  To effectively compute most general solutions, the above three rules are applied repeatedly, but some \emph{simplification rules} are also used in between.
  Their role is to put the constraint systems in a simpler form and in particular to detect the unsatisfiable systems.
  Other simplification rules, serving different purposes, will be introduced later in the procedure.
  The rules here are of two kinds:
  \begin{enumerate}
    \item \emph{simplification rules for formulas} that simply compute mgu and simplify the hypotheses of formulas; and
    \item \emph{simplification rules for mgs'} that apply mgu to the rest of the system, and detect unsatisfiability through contradictions or violations of uniformity.
  \end{enumerate}

  \paragraph{Simplification rules for formulas}
    We first introduce basic simplification rules for formulas that will be used even outside of the computation of most general solutions.
    We define five sets rules in Figure \ref{fig:normalisation_formula} that apply on constraints of \(\Eqfst\), \(\Eqsnd\) and \(\USolved\).

    \begin{figure}[ht]
      \[
        \begin{array}{l@{\quad}l}
          \mbox{\em Misc.} &
            \neg \top \simpl \bot \qquad \qquad
            \neg \bot \simpl  \top \qquad \qquad
            \varphi \wedge \top \simpl \varphi \qquad
            \varphi \wedge \bot \simpl \bot \\[3mm]
          \text{\em Universal vars.} & \clause[S \cup
          \{x\}]{H}{(x \eqs u \wedge \varphi)} \simpl
          \clause[S]{H\sigma}{\varphi\sigma} \qquad \text{if}\ \sigma =
          \mgu(x \eqs u) \neq \bot \\
          & \clause[S \cup \{x\}]{H}{\varphi}
          \simpl \clause[S]{H}{\varphi} \qquad \text{if}\ x \notin
          \vars[1](\varphi) \\[3mm]
          \mbox{\em 1\textsuperscript{st} order eq.} &
            u \eqs v \simpl \mgu(u \eqs v) \\[3mm]
          \mbox{\em 1\textsuperscript{st} order diseq.} &
          \forall S.\, \phi \simpl
              \left\{\begin{array}{ll}
                \forall S.\, \displaystyle \bigvee_{x \in \dom(\sigma)} x \neqs x\sigma & \mbox{with } \sigma = \mgu(\neg\phi) \neq \bot\\
                \top & \mbox{if } \mgu(\neg\phi) = \bot
              \end{array}\right. \\\\
          %
          \mbox{\em 2\textsuperscript{nd} order diseq.} &
            \forall S.\, \phi \simpl
              \left\{\begin{array}{ll}
                \forall S \cup S'.\hspace{-2mm} \displaystyle \bigvee_{X \in \dom(\Sigma)} \hspace{-3mm} X \neqs X\Sigma
                  & \mbox{with } \Sigma = \mgu(\neg\phi) \neq \bot \\[-4mm]
                  & \text{and } S' = \vars[2](\im(\Sigma)) \smallsetminus \vars[2](\phi)\\[2mm]
                \top & \mbox{if } \mgu(\neg\phi) = \bot
              \end{array}\right.
        \end{array}
      \] 
      \caption{Simplification rules on formulae}
      \label{fig:normalisation_formula}
    \end{figure}

    We recall that, in the case of the simplification of second order disequations, the computation of mgu may introduce new variables to match arities (see Section~\ref{sec:unification}), hence the need for the universal quantified variables \(S'\).
    No rules are needed for second-order equations in the context of our decision procedure, since Rules \eqref{rule:conseq}, \eqref{rule:res} and \eqref{rule:cons} already apply mgu to the entire system.
    The simplification rules are lifted to extended constraint systems \(\C^e\) in the natural way, by applying the simplifications to all formulas of \(\Eqfst(\C^e)\), \(\Eqsnd(\C^e)\) and \(\USolved(\C^e)\).

  \paragraph{Simplification rules for MGS}
    In addition of the rules of Figure \ref{fig:normalisation_formula} we define a couple of other rules specific to the computation of most general solutions.
    First of all the rule
    \begin{equation}
      \label{rule:unifEqfst_simpl}
      \tag{\mbox{\textsc{MGS-Unif}}}
      (\Phi, \Df, \Eqfst \wedge x \eqs u, \Eqsnd, \Solved, \USolved)
      \quad  \simpl \quad
        (\Phi\sigma, \Df\sigma, \Eqfst\sigma \wedge x \eqs u, \Eqsnd, \Solved\sigma, \USolved\sigma)
    \end{equation}
    where \(x \in \vars[1](\Eqfst, \Df, \Phi, \Solved, \USolved) \smallsetminus \vars(u)\) and \(\sigma = \{x \mapsto u\}\),
    propagates first-order mgu in the whole system.
    We also consider the following rule discarding a system with no solutions
    \begin{equation}
      \label{rule:mgs-unsat}
      \tag{\mbox{\textsc{MGS-Unsat}}}
      \C^e \simpl \bot
    \end{equation}
    where either of the following three conditions is satisfied:
    \begin{enumerate}
      \item \label{it:mgs-unsat}
      \(\Eqfst = \bot\)
      \item \label{it:mgs-not-uniform}
      there exist \(\xi,\zeta \in \recipes(\C^e)\) such that \((\xi,u),(\zeta,u) \in \conseq(\Solved(\C^e) \cup \Df(\C^e))\) and,
      writing \(\Sigma = \mgu(\xi \eqs \zeta)\), either \(\Sigma = \bot\) or \(\Eqsnd \Sigma \simpl^* \bot\) with the rules of Figure \ref{fig:normalisation_formula}
      \item \label{it:mgs-private}
      there exist \((\forall \quanti{X}{i}.\, X \ndedfact u) \in \Df(\C^e)\) and \(\xi \in \recipeset_i\) such that \((\xi,u) \in \conseq(\Solved(\C^e) \cup \Df(\C^e))\)
    \end{enumerate}

    The first condition captures trivially unsatisfiable systems, the second one systems with no uniform solutions, and the third one exhibits a public channel that has been used for an internal communication (which is forbidden by the semantics).
    Since the whole set of simplification rules (Figure \ref{fig:normalisation_formula} and the above two) is convergent modulo renaming of variables, we denote \(\C \simplnorm\) a normal form of the extended constraint system \(\C\) w.r.t. \(\simpl\).

\subsubsection{Overall procedure and correctness} \label{sec:mgs-proc}
  \paragraph{Description of the procedure}
    The point of the transition systems above is to transform an extended constraint system into a form where it has a unique mgs.
    More formally:

    \begin{definition}[solved extended constraint system]
      An extended constraint system \(\C^e\) is in \emph{solved form} if \(\C^e \neq \bot\), \(\C^e\) is irreducible w.r.t. \(\simpl\) and \(\xrightarrow{\Sigma}\), and all deduction facts in \(\Df(\C^e)\) have  variables as first-order terms. 
    \end{definition}

    Intuitively for such constraint systems, \(\mgu(\Eqsnd(\C^e))\) is the unique mgs of \(\C^e\).
    Note however that this method for computing mgs' is only correct under some invariants of our overall procedure.
    Typically, since second-order equations are not handled by simplification rules,
    if \(\Eqsnd\) contains two contradictory equations \(X \eqs a \wedge X \eqs b\) for two constants \(a \neq b\), our procedure would fail to detect the contradiction.
    If we define the reduction relation \(\SimpStep{\Sigma}\) as the reflexive transitive closure of the composition of relations \(\simpStep{\sigma}\),
    then under the invariants of the procedure we compute a set of most general solutions of \(\C^e\) as the set
    \(
      \{ \Sigma_{|\vars[2](\C^e)} \mid \C^e \SimpStep{\Sigma} \C^{e\prime}, \C^{e\prime} \text{ solved}\}\,.
    \)
    
    \begin{remark}[notation for extended symbolic processes]
      For convenience we often abuse notations and, if \(S = (\P,\C,\C^e)\) is an extended symbolic process, we write \(\mgs(S)\) instead of \(\mgs(\C^e)\) or say that \(S\) is in solved form.
    \end{remark}

  \paragraph{Correctness arguments}
    As mentioned earlier this procedure is only correct under some additional properties verified all along Algorithm~\ref{alg:ptree}.
    For the sake of precision we make explicit mention to these two invariants, \(\PredWellFormed(\C^e)\) and \(\PredCorrectFormula(\C^e)\).
    They are formally defined in Appendix~\ref{app:invariants} with a proof that they are preserved during the whole computation of the partition tree, but knowing their exact definition is not necessary to understand the results of this section.
    The core correctness arguments can be decomposed into following propositions, and are derived from the results proved in Appendix~\ref{app:mgs}.
    The first one states that when an extended constraint system cannot be reduced any more then its set of most general solutions is either empty or a singleton:

    \begin{proposition}[restate={propCorrectMgsSolved},name={mgs of an irreducible system}] \label{prop:correct-mgs-solved}
      Let \(\C^e\) be an extended constraint system that is irreducible w.r.t. \(\simpl\) and \(\xrightarrow{\Sigma}\), and such that the invariants \(\PredWellFormed(\C^e)\) and \(\PredCorrectFormula(\C^e)\) hold.
      Then
      \begin{enumerate}
        \item if \(\C^e\) is in solved form then \(\mgs(\C^e) = \{\mgu(\Eqsnd(\C^e))\}\)
        \item otherwise \(\mgs(\C^e) = \emptyset\)
      \end{enumerate}
    \end{proposition}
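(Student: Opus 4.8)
The plan is to unfold the definition of most general solutions (Definition~\ref{def:mgs}) and treat its two items separately, keeping in mind that since every genuine solution must be an instance of some element of a valid mgs set, we have \(\mgs(\C^e) = \emptyset\) exactly when \(\Sol(\C^e) = \emptyset\), and \(\mgs(\C^e) = \{\Sigma_0\}\) amounts to \(\Sigma_0\) being ``almost'' a solution (Item~\ref{it:mgs-sol}) together with every solution factoring through it (Item~\ref{it:mgs-inst}). Throughout I would set \(\Sigma_0 = \mgu(\Eqsnd(\C^e))\), which is well defined and type-respecting by the correctness of second-order mgu's proved in Section~\ref{sec:unification}.

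For the solved-form case (Item~1 of the statement), the factoring direction is the easy one: any solution \((\Sigma,\sigma)\) satisfies \(\Eqsnd(\C^e)\), hence \(\Sigma\) unifies the equational part of \(\Eqsnd(\C^e)\), and by the mgu property there is \(\Sigma_1\) with \(\Sigma = \Sigma_0\Sigma_1\), which is precisely Item~\ref{it:mgs-inst}. The substantive direction is showing that \(\Sigma_0\) is almost a solution. Here I would exploit that in solved form every deduction fact of \(\Df(\C^e)\) has the shape \(X \dedfact x\) with \(x\) a first-order variable; once the free part \(\Sigma_1\) is fixed by fresh constants, the first-order component is forced by \(x\sigma = X\Sigma_0\Sigma_1\Phi\norm\). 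It then remains to verify that \((\Sigma_0\Sigma_1,\sigma)\) satisfies \(\Df\), \(\Eqfst\), the disequations of \(\Eqsnd\), \(\Solved\)-basis and uniformity. Satisfaction of \(\Eqfst\) follows because \(\Eqfst \neq \bot\) (otherwise \eqref{rule:mgs-unsat} would apply, contradicting irreducibility) together with the invariant \(\PredCorrectFormula(\C^e)\); the \(\Solved\)-basis condition holds because \(\PredWellFormed(\C^e)\) guarantees that \(\im(\Sigma_0)\) and the entries of \(\Solved\) are consequential and message-producing; and uniformity holds because irreducibility w.r.t.\ \eqref{rule:conseq} (and condition~\ref{it:mgs-not-uniform} of \eqref{rule:mgs-unsat}) means any two consequential recipes deducing the same term have already been unified.

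For the ``otherwise'' case (Item~2 of the statement), I would split on whether \(\C^e = \bot\). If \(\C^e = \bot\) then \(\Sol(\C^e) = \emptyset\) by convention and \(\mgs(\C^e) = \emptyset\) is immediate. Otherwise \(\C^e\) is irreducible and not solved, so some deduction fact \((X \dedfact u) \in \Df(\C^e)\) has \(u \notin \X\). I would show \(\Sol(\C^e) = \emptyset\) by contradiction: a solution \((\Sigma,\sigma)\) satisfies \(\Solved\)-basis, which forces \(X\Sigma\) to be a consequence of \(\Solved\Sigma\sigma\) computing \(u\sigma\norm\). Inspecting the root symbol of \(u\) then contradicts irreducibility, since a constructor or constant root would make \eqref{rule:cons} (or \eqref{rule:res}) applicable, a name already deducible from \(\Solved\) would make \eqref{rule:res} or \eqref{rule:conseq} applicable, and a name with no consequence from \(\Solved\) cannot be the value of any recipe, each possibility contradicting either irreducibility or the existence of the solution.

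The main obstacle is the ``almost-a-solution'' half of Part~1: the delicate point is that \(\mgu(\Eqsnd(\C^e))\) is only a \emph{syntactic} most general unifier of the second-order equations, so turning it into a genuine solution requires the invariants \(\PredWellFormed\) and \(\PredCorrectFormula\) to reconcile \(\Eqsnd\) with \(\Df\), \(\Eqfst\) and \(\Solved\), and requires the exhaustive application of \eqref{rule:conseq} to secure uniformity. These are exactly the properties established by the lemmas of Appendix~\ref{app:mgs}, on which I would rely; the remaining verifications (satisfaction of \(\Df\) and of the disequations) are then routine.
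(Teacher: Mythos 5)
Your proof is correct and takes essentially the same route as the paper's own argument (the lemmas of Appendix~\ref{app:mgs}): the factoring half of Part~1 uses that every solution satisfies \(\Eqsnd(\C^e)\) and hence is an instance of \(\mgu(\Eqsnd(\C^e))\); the almost-solution half instantiates the pending second-order variables by fresh constants and invokes \(\PredWellFormed\), \(\PredCorrectFormula\) and the consequence lemmas, with uniformity secured by irreducibility under \eqref{rule:conseq} and \eqref{rule:mgs-unsat}; and the unsolved case forces the offending \(u\) to be a name whose deducibility (via \(\Solved\)-basis of a hypothetical solution) yields an entry of \(\Solved\) making \eqref{rule:res} applicable, a contradiction. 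Two cosmetic slips only: satisfaction of \(\Eqfst\) comes from composing the first-order solution with \(\mgu(\Eqfst)\) under \(\PredWellFormed\) (with the fresh constants handling its disequations), not from \(\PredCorrectFormula\); and a constant-rooted \(u\) triggers \eqref{rule:conseq} (taking \(\xi = u \in \sig_0\)) rather than \eqref{rule:res}.
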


    The second argument is that applying the mgs constraint-solving rules is correct w.r.t. the solutions of the initial system.

    \begin{proposition}[restate={propSoundMgsStep},name={soundness of one step of the mgs constraint solving}] \label{prop:sound-mgs-step}
      Let \(\C^e\) be an extended constraint system with \(\C^e = \C^e \simplnorm\).
      If \(\C^e \simpStep{\Sigma} \C^{e\prime}\) and \((\Sigma',\sigma) \in \Sol(\C^{e\prime})\) then \((\Sigma'_{|\vars[2](\C^e)}, \sigma_{|\vars[1](\C^e)}) \in \Sol(\C^e)\).
    \end{proposition}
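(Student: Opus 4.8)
The plan is to decompose the combined reduction $\C^e \simpStep{\Sigma} \C^{e\prime}$ into its two constituents, namely the raw application of one of the rules \eqref{rule:conseq}, \eqref{rule:res}, \eqref{rule:cons}, producing the intermediate system $\CApply{\Sigma}{\C^e}$, followed by the simplification normalisation $\CApply{\Sigma}{\C^e}\simplnorm = \C^{e\prime}$. I would then prove two independent backward-soundness lemmas and compose them. The first is that the simplification rules of Figure~\ref{fig:normalisation_formula}, together with \eqref{rule:unifEqfst_simpl} and \eqref{rule:mgs-unsat}, send solutions of the reduced system back to solutions of the reducing one (after the forced domain adjustments). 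The second is the soundness of the raw application: if $(\Sigma',\sigma) \in \Sol(\CApply{\Sigma}{\C^e})$ then $(\Sigma'_{|\vars[2](\C^e)}, \sigma_{|\vars[1](\C^e)}) \in \Sol(\C^e)$. Since, by Definition~\ref{def:ext-sol}, the set $\USolved$ plays no role in the notion of solution, only the rules acting on $\Phi, \Df, \Eqfst, \Eqsnd, \Solved$ need attention in the first lemma.

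For the simplification lemma I would do a routine case analysis on the rule applied. Every relevant rule either (i) propagates a first- or second-order mgu $\theta$ into the system, in which case backward preservation holds because the retained equation (e.g. the kept $x \eqs u$ in \eqref{rule:unifEqfst_simpl}) makes satisfaction of $\varphi\theta$ and of $\varphi$ coincide under any solution, the defining property of mgu's supplying the reinstatement of $\theta$ on the eliminated variables; or (ii) rewrites a system to $\bot$ under $\Eqfst = \bot$, a uniformity violation, or a private-channel consequence, in which case the reduced system has no solution and the implication is vacuous. The only delicate point is bookkeeping the domains: a solution of the normal form must be extended on the first-order variables removed by \eqref{rule:unifEqfst_simpl} and on the fresh universally quantified variables introduced when simplifying second-order disequations, but each such extension is uniquely forced by the associated mgu.

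The core of the argument, and the main obstacle, is the raw-application lemma, proved separately for each of the three rules but sharing a common skeleton dictated by Definition~\ref{def:application_mgs_csys}. Given $(\Sigma',\sigma) \in \Sol(\CApply{\Sigma}{\C^e})$, the second-order equations $\Sigma_{|\vars[2](\C^e)}$ that the application records inside $\Eqsnd$ force $X\Sigma' = X\Sigma\Sigma'$ for every $X \in \vars[2](\C^e)$, so that $\Sigma'_{|\vars[2](\C^e)}$ acts exactly as the composite $\Sigma\Sigma'$ on the old variables; this is precisely what lets satisfaction be transported back. For each deduction fact $X \dedfact u$ of $\Df(\C^e)$ dissolved into binding facts $D_\freshlab$ and a linking equation $u \eqs v$ with $(X\Sigma, v) \in \conseq(\Solved\Sigma \cup \Df')$, one checks $X\Sigma\Sigma'\Phi\sigma\norm = v\sigma\norm = u\sigma\norm$, the first equality by unfolding the consequence context against $\Solved$-basis and the second by the linking equation now satisfied; the untouched facts are immediate. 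Satisfaction of $\Eqfst(\C^e)$ and $\Eqsnd(\C^e)$ is inherited since $\Eqfst \wedge E_\Sigma$ and $\Eqsnd\Sigma$ sit among the new constraints, while $\Solved$-basis and uniformity transfer because consequences of $\Solved\Sigma$ are exactly the $\Sigma$-images of consequences of $\Solved$. The subtlest bookkeeping is that the fresh second-order variables of $\im(\Sigma)$ lie outside $\vars[2](\C^e)$ and are discarded by the restriction; one must verify this discarding preserves the consequence and uniformity conditions of $\C^e$, which holds because those conditions quantify only over $\vars[2](\C^e)$-recipes. I would conclude by composition: a solution of $\C^{e\prime} = \CApply{\Sigma}{\C^e}\simplnorm$ yields, via the simplification lemma, a solution of $\CApply{\Sigma}{\C^e}$, which via the raw-application lemma projects to the claimed solution of $\C^e$; the detailed verifications are those of Appendix~\ref{app:mgs}.
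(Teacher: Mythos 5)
Your proof is correct, and its decomposition --- backward soundness of the raw application \(\CApply{\Sigma}{\C^e}\) composed with backward soundness of the normalisation \(\simplnorm\) --- is exactly what the definition of \(\simpStep{\Sigma}\) calls for. Note, however, that the paper never actually writes this argument down: Appendix~\ref{app:mgs} contains only the completeness lemma (behind Proposition~\ref{prop:complete-mgs-step}) and the two lemmas on irreducible systems (behind Proposition~\ref{prop:correct-mgs-solved}), so your closing pointer to ``the detailed verifications of Appendix~\ref{app:mgs}'' cites verifications that are not there; your write-up is in fact supplying the direction the paper leaves implicit.

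Two steps deserve tightening. First, in transporting a dissolved fact \(X \dedfact u\), the equality \(X\Sigma\Sigma'\Phi\sigma\norm = v\sigma\norm\) requires that instantiated knowledge-base entries evaluate to their recorded right-hand sides; this does follow from the \(\Solved\)-basis of the new solution, but only because entry recipes have non-constructor roots (Definition~\ref{def:atoms}), so the consequence of such a recipe can only be read off the corresponding entry --- spell this out, as it is the one place the argument could silently fail (and it sidesteps \(\PredCorrectFormula\), which the statement, unlike its sibling propositions, does not assume; equating \(\Solved\Sigma' = \Solved\Sigma\Sigma'\) still uses \(\vars[2](\Solved) \subseteq \vars[2](\Df)\) from \(\PredWellFormed\)). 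Second, your justification of uniformity transfer is too quick: \(\stc(\im(\Sigma'_{|\vars[2](\C^e)}), \Solved\Sigma\Sigma')\) is not literally contained in \(\stc(\im(\Sigma'), \Solved\Sigma\Sigma')\), since a constructor-rooted instance \(X\Sigma\Sigma'\) need not occur in the latter; one must push equality of values through the constructor layer (injectivity of constructors on normal forms of messages) down to frontier positions where the new system's uniformity applies, with Proposition~\ref{prop:trans-conseq} handling only the consequence side. Both repairs are routine, so the proof stands.
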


    Finally the last argument formalises than all solutions can be expressed as a sequence of mgs constraint-solving transitions.

    \begin{proposition}[restate={propCompleteMgsStep},name={completeness of one step of the mgs constraint solving}] \label{prop:complete-mgs-step}
      Let \(\C^e\) be an extended constraint system such that \(\C^e \simplnorm = \C^e\) and the invariants \(\PredWellFormed(\C^e)\) and \(\PredCorrectFormula(\C^e)\) hold.
      We also assume that at least one mgs constraint-solving rule is applicable to \(\C^e\).
      Then for all \((\Sigma,\sigma) \in \Sol(\C^e)\), there exist a constraint-solving transition \(\C^e \simpStep{\Sigma_0} \C^{e\prime}\) and \(\Sigma \subseteq \Sigma'\), \(\sigma \subseteq \sigma'\) such that \((\Sigma',\sigma') \in \Sol(\C^{e\prime})\).
    \end{proposition}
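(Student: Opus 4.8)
The plan is to perform a case analysis on which constraint-solving rule applies, guided by the priority discipline (Rule~\eqref{rule:conseq} is always applied first, so that \eqref{rule:res} and \eqref{rule:cons} carry the side condition that \eqref{rule:conseq} is inapplicable). Fix $(\Sigma,\sigma) \in \Sol(\C^e)$; by Definition~\ref{def:ext-sol} it satisfies $\Df \wedge \Eqfst \wedge \Eqsnd$ together with the $\Solved$-basis and uniformity conditions, and these two structural conditions are what drive the argument. In each case I will exhibit the substitution $\Sigma_0$ labelling an applicable instance of \eqref{rule:mgs-base} such that $\Sigma$ is itself an instance of $\Sigma_0$, i.e. $\Sigma = \Sigma_0\Sigma''$ up to the fresh variables introduced by the rule. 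From this I extend $(\Sigma,\sigma)$ to a solution of $\CApply{\Sigma_0}{\C^e}$, and finally transport it to $\C^{e\prime} = \CApply{\Sigma_0}{\C^e}\simplnorm$ using the solution-preservation of the simplification rules.

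First I would treat the case where \eqref{rule:conseq} applies, with $\Sigma_0 = \mgu(\xi \eqs \zeta)$ for some $\xi \in \recipes(\C^e) \cup \sig_0$ and $\zeta \in \recipes(\C^e)$ that are both consequences deducing a common term $u$, i.e. $(\xi,u),(\zeta,u) \in \conseq(\Solved(\C^e) \cup \Df(\C^e))$. The point is that the solution already performs this unification: $\xi\Sigma$ and $\zeta\Sigma$ both lie in $\stc(\im(\Sigma),\Solved\Sigma)$ and, since consequence is preserved under instantiation, both deduce $u\sigma$ in $\Phi(\C^e)$ under $\sigma$. The uniformity condition then forces $\xi\Sigma = \zeta\Sigma$, so $\Sigma$ is a unifier of $\xi$ and $\zeta$ and therefore factors through $\Sigma_0 = \mgu(\xi \eqs \zeta)$. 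Hence $\Sigma$ survives the application of $\Sigma_0$ unchanged on $\vars[2](\C^e)$.

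Next I would handle the case where \eqref{rule:conseq} is inapplicable but one of \eqref{rule:res}/\eqref{rule:cons} applies; then $\Df(\C^e)$ contains a deduction fact $(X \dedfact u)$ whose first-order member $u$ is not a variable. Applying the $\Solved$-basis condition to $X\Sigma \in \im(\Sigma)$ gives $(X\Sigma, u\sigma) \in \conseq(\Solved\Sigma\sigma)$, i.e. $X\Sigma = C[\vec\xi]$ for some constructor-and-constant context $C$ with $C[\vec u] = u\sigma$. I then branch on the root of $X\Sigma$. If $C$ is trivial, $X\Sigma$ coincides with the image of a knowledge-base entry $(\xi \dedfact v) \in \Solved(\C^e)$, so $\Sigma$ unifies $X$ with $\xi$ and factors through $\Sigma_0 = \mgu(X \eqs \xi)$, the substitution of \eqref{rule:res}. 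If instead $C$ has a constructor $\ffun$ at its root then, since constructors are irreducible, $u\sigma$ has root $\ffun$; as $u \notin \X$ this forces $u = \ffun(u_1,\ldots,u_n)$, matching the precondition of \eqref{rule:cons}, and $\Sigma$ factors through $\Sigma_0 = \{X \mapsto \ffun(X_1,\ldots,X_n)\}$ by assigning to the fresh $X_i$ the corresponding immediate subrecipes of $X\Sigma$ (arity-zero constants being the degenerate subcase). The type discipline is respected because these subrecipes all have type at most that of $X$.

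Finally, in every case the application $\CApply{\Sigma_0}{\C^e}$ replaces the affected deduction facts by fresh binder facts $D_\freshlab$ and records linking equations $E_\Sigma$ (Definition~\ref{def:application_mgs_csys}). I would define $\Sigma'$ to extend $\Sigma$ by mapping each fresh second-order variable to the matching subrecipe determined above, and $\sigma'$ to extend $\sigma$ by the first-order terms these subrecipes deduce; the linking equations then hold by construction, and the $\Solved$-basis and uniformity conditions are inherited since no new recipes outside $\stc(\im(\Sigma),\Solved\Sigma)$ are introduced. This yields $(\Sigma',\sigma') \in \Sol(\CApply{\Sigma_0}{\C^e})$ with $\Sigma \subseteq \Sigma'$ and $\sigma \subseteq \sigma'$; using that the simplification rules $\simpl$ preserve solutions (up to extension on fresh variables) then gives a solution of $\C^{e\prime}$ and concludes. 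I expect the main obstacle to be precisely this last bookkeeping in the \eqref{rule:res}/\eqref{rule:cons} case: proving that the chosen decomposition of $X\Sigma$ assigns the fresh variables $X_i$ recipes that are simultaneously type-correct, consistent with the linking equations, and uniform, and that simplification-normalisation does not invalidate the solution---all of which rest on the invariants $\PredWellFormed(\C^e)$ and $\PredCorrectFormula(\C^e)$ guaranteeing that the knowledge base is well-formed and the recorded formulas sound.
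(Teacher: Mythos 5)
Your proposal is correct and follows essentially the same route as the paper's own proof (the appendix lemma behind Proposition~\ref{prop:complete-mgs-step}): first use uniformity of the solution to show \(\Sigma\) factors through \(\mgu(\xi \eqs \zeta)\) when \eqref{rule:conseq} applies, then pick \((X \dedfact u) \in \Df\) with \(u \notin \X[1]\) and case-split on \(X\Sigma \in \conseq(\Solved\Sigma\sigma)\) --- knowledge-base entry giving \eqref{rule:res}, constructor root giving \eqref{rule:cons} with the fresh \(X_i\) mapped to the immediate subrecipes --- exactly as the paper does via the \(\Solved\)-basis condition. The only cosmetic divergence is that you fold the constant case \(X\Sigma \in \sig_0\) into \eqref{rule:cons} as a degenerate arity-zero instance, whereas the paper (consistently with the priority on \eqref{rule:conseq}, whose side condition explicitly allows \(\xi \in \sig_0\)) routes it through \eqref{rule:conseq}; since your first case already covers constants, this changes nothing, and your explicit extension of \(\sigma\) to the fresh first-order variables of \(D_\freshlab\) is if anything slightly more careful than the paper's treatment.
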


    Together these three results give the partial correctness of the procedure, that is, the correctness of the computation when it terminates.
    The termination is studied in Section~\ref{sec:termination}:

    \begin{theorem}[partial correctness of mgs computation]
      Let \(\C^e\) be an extended constraint system such that \(\PredWellFormed(\C^e)\) and \(\PredCorrectFormula(\C^e)\) hold.
      Then, assuming there exist no infinite sequences of \(\simpStep{}\) reductions from \(\C^e\), we have
      \[\mgs(\C^e) = \{\Sigma_{|\vars[2](\C^e)} \mid \C^e \simplnorm \SimpStep{\Sigma} \C^{e\prime}, \C^{e\prime} \text{ solved}\}\]
    \end{theorem}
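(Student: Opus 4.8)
The plan is to show that the explicit set on the right-hand side, call it \(M = \{\Sigma_{|\vars[2](\C^e)} \mid \C^e\simplnorm \SimpStep{\Sigma} \C^{e\prime},\ \C^{e\prime} \text{ solved}\}\), is a valid set of most general solutions of \(\C^e\), i.e.\ that it satisfies the two defining conditions of Definition \ref{def:mgs}. Since \(\mgs(\C^e)\) denotes \emph{any} set meeting those conditions, this establishes the claimed equality. The termination hypothesis---no infinite \(\simpStep{}\) sequence from \(\C^e\)---endows the reachable systems with a well-founded order, and the whole argument proceeds by well-founded induction along it. Throughout I freely use two supporting facts: that the simplification normalisation relates the solutions of a system to those of its normal form \(\simplnorm\) (so passing from \(\C^e\) to \(\C^e\simplnorm\) is harmless in both directions), and that the invariants \(\PredWellFormed\) and \(\PredCorrectFormula\) are preserved by \(\simpl\) and by \(\xrightarrow{\Sigma}\) (proved in the appendix), so that Propositions \ref{prop:correct-mgs-solved}, \ref{prop:sound-mgs-step} and \ref{prop:complete-mgs-step} apply at \emph{every} node along \emph{every} reduction path, not merely at \(\C^e\).

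For Item \ref{it:mgs-sol} of Definition \ref{def:mgs} (each element of \(M\) is ``almost'' a solution), I would fix a path \(\C^e\simplnorm \SimpStep{\Sigma} \C^{e\prime}\) with \(\C^{e\prime}\) solved and argue backwards. By Proposition \ref{prop:correct-mgs-solved} the solved leaf satisfies \(\mgs(\C^{e\prime}) = \{\mgu(\Eqsnd(\C^{e\prime}))\}\); for an arbitrary injection of its pending second-order variables to fresh constants, the instantiated mgs is a genuine element of \(\Sol(\C^{e\prime})\). I then pull this solution back one \(\simpStep{}\) at a time, using solution preservation for the normalisation part and Proposition \ref{prop:sound-mgs-step} for the \(\xrightarrow{\Sigma}\) part, each step restricting the second- and first-order components to the variables of the source system. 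Iterating down to \(\C^e\) yields a solution of \(\C^e\) whose second-order component is exactly \(\Sigma_{|\vars[2](\C^e)}\) extended by the fresh instantiation, which is precisely what Item \ref{it:mgs-sol} requires.

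For Item \ref{it:mgs-inst} (every solution is an instance of some element of \(M\)), I take \((\Sigma,\sigma) \in \Sol(\C^e)\) and first move to \(\C^e\simplnorm\) by solution preservation. If no mgs rule applies, \(\C^e\simplnorm\) is in solved form, so Proposition \ref{prop:correct-mgs-solved} gives \(\mgs(\C^e\simplnorm) = \{\mgu(\Eqsnd)\}\) and \(\Sigma\) is an instance of \(\mgu(\Eqsnd)\); the trivial (reflexive) path places \(\mgu(\Eqsnd)_{|\vars[2](\C^e)}\) in \(M\), settling the base case. Otherwise at least one mgs rule is applicable, and Proposition \ref{prop:complete-mgs-step} supplies a step \(\C^e\simplnorm \simpStep{\Sigma_0} \C^{e\prime}\) together with an extended solution \((\Sigma',\sigma') \in \Sol(\C^{e\prime})\) with \(\Sigma \subseteq \Sigma'\) and \(\sigma \subseteq \sigma'\). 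As \(\C^{e\prime}\) lies strictly below \(\C^e\) in the well-founded order, the induction hypothesis expresses \(\Sigma'\) as an instance of a member of the set associated to \(\C^{e\prime}\), i.e.\ along a path from \(\C^{e\prime}\) to a solved leaf; prepending \(\simpStep{\Sigma_0}\) turns this into a path from \(\C^e\simplnorm\) to the same leaf, and composing the substitutions shows \(\Sigma\) to be an instance of the corresponding element of \(M\).

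The hard part will be the substitution bookkeeping rather than the high-level induction. Concretely, I expect the delicate points to be: aligning the accumulated step labels \(\Sigma\) along a branch with the leaf's \(\mgu(\Eqsnd(\C^{e\prime}))\), given that fresh second-order variables are spawned by \eqref{rule:cons} and the type discipline forbids certain renamings (recall the mgu discussion in Section \ref{sec:unification}); verifying that ``being an instance of an element of \(M\)'' is stable both under prepending one \(\simpStep{}\) step and under the final restriction to \(\vars[2](\C^e)\); and checking that the fresh binding facts \(Y \dedfact y\) and linking equations introduced by the application operator (Definition \ref{def:application_mgs_csys}) do not disturb these restriction identities. These manipulations are routine but finicky, so I would isolate the two supporting facts flagged above---solution preservation under \(\simplnorm\) and invariant preservation along reductions---as lemmas discharged in the appendix, leaving the present argument as the clean two-condition verification sketched here.
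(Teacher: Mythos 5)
Your proposal is correct and follows essentially the same route as the paper's proof: reduce to the normal form \(\C^e\simplnorm\), then argue by well-founded induction on \(\simpStep{}\) that the explicit set satisfies the two conditions of Definition~\ref{def:mgs}, with Proposition~\ref{prop:correct-mgs-solved} at the irreducible base case and Propositions~\ref{prop:sound-mgs-step} and \ref{prop:complete-mgs-step} for the two inductive directions. One minor imprecision: in your base case for Item~\ref{it:mgs-inst}, irreducibility alone does not imply solvedness --- but since you assume a solution exists, the unsolved case is ruled out by Proposition~\ref{prop:correct-mgs-solved} (which gives \(\mgs = \emptyset\), hence no solutions), so the argument goes through.
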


    \begin{proof}
      Since a set of mgs' of \(\C^e\simplnorm\) is also a set of mgs' of \(\C^e\), we assume without loss of generality that \(\C^e \simplnorm = \C^e\).
      Let us write \(S = \{\Sigma_{|\vars[2](\C^e)} \mid \C^e \SimpStep{\Sigma} \C^{e\prime}, \C^{e\prime} \text{ solved}\}\) and prove that \(S\) is a set of mgs' of \(\C^e\simplnorm\).
      By the termination assumption, we can reason by well-founded induction on the reduction relation \(\simpStep{}\) from \(\C^e\).
      Using such an induction we can prove the two requirements of the definition, that is:
      \begin{enumerate}
        \item that all \(\Sigma \in S\) are solutions of \(\C^e\) after replacing their second-order variables by fresh constants (base case: Proposition \ref{prop:correct-mgs-solved};
        inductive case: soundness, i.e., Proposition \ref{prop:sound-mgs-step}).
        \item that all solutions of \(\C^e\) are instances of a substitution of \(S\)
        (base case: Proposition \ref{prop:correct-mgs-solved} again;
        inductive case: completeness, i.e., Proposition \ref{prop:complete-mgs-step}). \qedhere
      \end{enumerate}
    \end{proof}

\subsection{\cs: symbolic and simplification rules}

\subsubsection{Symbolic rules} \label{sec:symbolic-rules}

  The symbolic rules simply apply the transitions of the symbolic semantics to extended symbolic processes, adding the corresponding constraints to both the symbolic process and the extended constraint system.
  In that sense most rules are close to identical to those of the symbolic semantics (Section \ref{sec:symbolic-semantics}).
  Typically the analogue of the rule \eqref{rule:s-in} is:
  \begin{equation}
    \tag{\mbox{\textsc{E-In}}} \label{rule:e-in}
    (\multi {\InP {u} {x}.P} \cup \P, \C, \C^e)
      \sstep {\InP {Y} {X}} (\multi {P} \cup \P, \mathit{incr}(\C), \mathit{incr}(\C^e))
  \end{equation}
  where, if \(\D \in \{\C,\C^e\}\), \(\mathit{incr}(\D) = \D[\Df \mapsto \Df \wedge X \dedfact x \wedge Y \dedfact y, \Eqfst \mapsto \Eqfst \wedge \sigma]\)
  with \(\quanti{Y}{n}\), \(\quanti{X}{n}\) and \(y\) fresh (\(n\) size of the domain of the frame of \(\C\)), and \(\sigma \in \mguR(y \eqs u \mu)\), \(\mu = \mgu(\Eqfst(\C)) \neq \bot\).
  The only rule that is not a trivial extension of the symbolic semantics is the one for outputs that puts a deduction fact in \(\USolved\) to model the additional capability this offers to the attacker:
  \begin{equation}
    \tag{\mbox{\textsc{E-Out}}} \label{rule:e-out}
    (\multi {\OutP {u} {v}.P} \cup \P, \C, \C^e)
      \sstep {\OutP {Y} {\ax_{n+1}}} (\multi {P} \cup \P, \mathit{incr}(\C), \mathit{incr}(\C^e)[\USolved \mapsto \USolved \wedge \ax_{n+1} \dedfact v \sigma \norm])
  \end{equation}
  where, if \(\D \in \{\C,\C^e\}\), \(\mathit{incr}(\D) = \D[\Phi \mapsto \Phi \cup \{\ax_{n+1} \mapsto v \mu \sigma \norm\}, \Df \mapsto \Df \wedge Y \dedfact y, \Eqfst \mapsto \Eqfst \wedge \sigma]\)
  with \(\quanti{Y}{n}\) and \(y\) fresh (\(n\) size of the domain of the frame of \(\C\)), and \(\sigma \in \mguR(y \eqs u \mu \wedge v\mu \eqs v\mu)\), \(\mu = \mgu(\Eqfst(\C)) \neq \bot\).
  We omit the definition of the remaining rules corresponding to the other symbolic transitions, all being constructed similarly to \eqref{rule:e-in} by copying the new constraints of \(\C\) into \(\C^e\).

\subsubsection{Normalisation rules} \label{sec:normalisation-rules}

  We define a new set of simplification rules, called \emph{normalisation rules}, that operate on extended constraint systems.
  Similarly to the simplification rules for most general solutions introduced in Section \ref{sec:first-simplification-rules} they propagate first-order unifiers across the system and replace unsatisfiable systems by \(\bot\).
  They also rely on the computation of mgs' of Section \ref{sec:mgs-gen}, for example to identify and remove trivial constraints such as formulas with unsatisfiable hypotheses.
  They are defined in Figure \ref{fig:normalisation_constraint_systems} and commented below (in particular regarding the definition of \(\receq\)).

  \begin{figure}[ht]
    \begin{align}
      \tag{\mbox{\textsc{Norm-Unif}}}
      \label{rule:unifEqfst_norm}
      \C^e & \simpl \C^{e\prime} && \text{if \(\C^e \simpl \C^{e\prime}\) by rule \eqref{rule:unifEqfst_simpl}}\\
      \tag{\mbox{\textsc{Norm-no-MGS}}}
      \label{rule:uniform}
      \C^e & \simpl \bot && \mbox{if \(\mgs(\C^e) = \emptyset\)}\\
      \tag{\mbox{\textsc{Norm-Diseq}}}
      \label{rule:Disequation removal 1}
      \C^e[\Eqfst \mapsto \Eqfst \wedge \forall \tilde{x}. \phi]
      & \simpl
      \C^e && \mbox{if \(\mgs(\C^e[\Eqfst \mapsto \Eqfst \wedge \neg\phi]) = \emptyset\)}\\
      \tag{\mbox{\textsc{Norm-Formula}}}
      \label{rule:Disequation removal 2}
      \C^e[\USolved \mapsto \USolved \wedge \psi]
      & \simpl \C^e && \mbox{if \(\mgs(\C^e[\Eqfst \mapsto \Eqfst \wedge \Fhyp(\psi)]) = \emptyset\)}\\
      \tag{\mbox{\textsc{Norm-Dupl}}}
      \label{rule:Removal of unsolved formula}
      \C^e[\USolved \mapsto \USolved \wedge \psi]
      & \simpl \C^e && \mbox{if \(\exists \psi' \in \USolved\), \(\psi' \receq \psi\) and \(\psi'\) solved}
    \end{align}
    \caption{Normalisation rules on extended constraint systems}
    \label{fig:normalisation_constraint_systems}
  \end{figure}

  We recall that we also write \(\C^e \simpl \C^{e\prime}\) if a constraint of \(\Eqfst(\C^e)\), \(\Eqsnd(\C^e)\) or \(\USolved(\C^e)\) can be simplified using one of the simplification rules on formulas (Figure \ref{fig:normalisation_formula}).
  The relation \(\simpl\) can be lifted to sets of (sets of) extended constraint systems or symbolic processes in the natural way.
  Let us now comment on the rules of Figure \ref{fig:normalisation_constraint_systems}.
  Rule \eqref{rule:unifEqfst_norm} uses the same rule as in the mgs constraint solving to propagate first-order unifiers to the whole system.
  The next three rules exploit the existence of a most general solution of the constraint system to simplify some constraints:

  \medskip
  \begin{enumerate}
    \item Rule \eqref{rule:uniform} checks whether the constraint system is unsatisfiable, i.e., does not have a most general solution, and in this case transforms it into \(\bot\).
    \item Rule \eqref{rule:Disequation removal 1} similarly removes a disequation \(\forall \tilde{x}. \phi\) in \(\Eqfst\) when it does not effectively restrict the solutions:
    for that we require the constraint system not to have solutions that contradict the disequation.
    \item Analogously Rule \eqref{rule:Disequation removal 2} removes a formula with unsatisfiable hypotheses.
    The fact that we only consider the equations among the hypotheses (recall that \(\Fhyp{\psi}\) omits the hypotheses of \(\psi\) that are deduction facts) is due to an invariant of our procedure.
    We will indeed ensure that formulae are only added to the set \(\USolved\) after all deduction facts have been removed from hypotheses by appropriate solving.
  \end{enumerate}
  \medskip

  Finally Rule \eqref{rule:Removal of unsolved formula} removes an unsolved deduction or equality formula \(\psi\) from \(\USolved\) when it is subsumed by another formula \(\psi'\).
  This is formalised by the following notion of equivalence:

  \begin{definition}[head equivalence of formulas]
    Let \(\psi = \clause{H}{\varphi}\) and \(\psi' = \clause{H'}{\varphi'}\) be two formulas.
    We say that \(\psi\) and \(\psi'\) are \emph{head equivalent}, written \(\psi \receq \psi'\), if for some \(\xi,\zeta,u,u'\) either \(H = H' = (\xi \eqf \zeta)\), or \(H = (\xi \dedfact u)\) and \(H' = (\xi \dedfact u')\).
  \end{definition}

  That is, two formulas are head equivalent if their heads have the same second-order terms (but may differ on their first-order terms), which means they model the same attacker action.
  In particular if \(\psi \receq \psi'\) and \(\psi'\) is solved (namely has no hypotheses any more) then the formula \(\psi\) is already implied by \(\psi'\) which is why Rule \eqref{rule:Removal of unsolved formula} can remove it from \(\USolved\).

\subsubsection{Vector-simplification rules} \label{sec:vector-rules}

  We now define simplification rules that focus on vector, thus called \emph{vector-simplification rules}.
  They are described in Figure \ref{fig:normalisation_vector} and focus among other things on adding formulas and entries in the knowledge base.
  This has to be done concurrently on an entire vector component to ensure that the same attacker actions can be performed in all of its elements, that is, that they have statically-equivalent solutions.
  The rules assume that the constraint systems have been normalised by the normalisation rules (see Figure \ref{fig:normalisation_constraint_systems}), and one of them uses our custom notation for applying a substitution \(\Sigma\) to a formula (Section \ref{sec:mgs-app}, Definition \ref{def:application_mgs_formula}).
  Finally, for the sake of succinctness, if \(S = (\P,\C,\C^e)\) is an extended symbolic process we refer as \(\Phi(S), \Eqfst(S), \Eqsnd(S), \ldots\) to the corresponding components of \(\C^e\).

  \begin{figure}[ht]
    \begin{equation}
      \tag{\mbox{\textsc{Vect-rm-Unsat}}}
      \label{rule:vectorbot}
      \S \cup \{ \Gamma \cup \{ (\P,\C,\bot) \} \}
      \simpl
      \S \cup \{ \Gamma \}
    \end{equation}
    \begin{equation}
      \tag{\mbox{\textsc{Vect-Split}}}
      \label{rule:vector-split solved}
      \S \cup \{\Gamma\} \simpl \S \cup \{\Gamma^+, \Gamma^-\}
    \end{equation}
    if \(\Gamma^+,\Gamma^-\) is a partition of \(\Gamma\) and there exists a formula \(\psi\) such that
    \begin{enumerate}
      \item \(\forall S \in \Gamma^+,\,\exists \psi' \in \USolved(S),\, \psi \receq \psi' \text{ and } \psi'\) solved; and
      \item \(\forall S \in \Gamma^-,\, \forall \psi' \in \USolved(S),\, \psi \not\receq \psi'\)
    \end{enumerate}

    \begin{equation}
      \tag{\mbox{\textsc{Vect-add-Conseq}}}
      \label{rule:vector-solve}
      \S \cup \left\{ \Gamma \right\}
      \simpl
      \S \cup \left\{ \{ S[\Solved \mapsto \Solved \wedge \xi \dedfact u_{S}] \mid S \in \Gamma\} \right\}
    \end{equation}
    if for all \(S \in \Gamma\), \(S\) is solved, \(\xi \dedfact u_{S} \in \USolved(S)\) and for all second-order term \(\zeta\), \((\zeta,u_{S}) \notin \conseq(\Solved(S) \cup \Df(S))\)

    \begin{equation}
      \tag{\mbox{\textsc{Vect-add-Formula}}}
      \label{rule:vector-consequence}
      \S \cup \left\{ \Gamma \right\}
      \simpl
      \S \cup \left\{ \{ S[\USolved \mapsto \USolved \wedge \FApply{\Sigma}{\psi}{S})] \mid S \in \Gamma\} \right\}
    \end{equation}
    if \(\psi = \clause[X,Y,z]{X \eqf Y}{(X \dedfact z \wedge Y \dedfact z)}\) and \(\Sigma = \{X \mapsto \xi, Y \mapsto \zeta\}\), and
    for all \(S \in \Gamma\),
    \begin{enumerate}
      \item \(S\) is solved
      \item \(\USolved(S)\) contains a formula of the form \(\xi \dedfact u_{S}\).
      Besides, there should exist \(S \in \Gamma\), such that \((\zeta,u_S) \in \conseq(\Solved(S) \cup \Df(S))\).
      \item for all \((\clause{\zeta_1 \eqf \zeta_2}{\varphi}) \in \USolved(S)\), \(\zeta_1 \neq \xi\) and \(\zeta_2 \neq \xi\)
    \end{enumerate}
    \caption{Vector-simplification rules for sets of sets of extended symbolic processes}
    \label{fig:normalisation_vector}
  \end{figure}

  Rule \eqref{rule:vectorbot} removes \(\bot\) elements from the vector.
  Rule \eqref{rule:vector-split solved} splits a component whenever a common solution would yield statically inequivalent frames.
  More specifically, the rule separates the constraint systems in \(\Gamma^+\) in which a given recipe always yields a message (resp. an equality always holds) from the constraint systems in \(\Gamma^-\) in which the same recipe would never yield a message (resp. the same equality would never hold).
  This is characterised by the fact that a deduction (resp. equality) formula is solved in some constraint systems and not in the others. 
  Rule \eqref{rule:vector-solve} adds a solved deduction formula from \(\USolved(S)\) to \(\Solved(S)\) when this formula is solved in the entire component \(\Gamma\) and the new knowledge-base entries are not redundant with existing ones.
  Finally, when an equality fact \(\xi \eqf \zeta\) should hold in one constraint system, Rule \eqref{rule:vector-consequence} adds it to the entire component \(\Gamma\) (with appropriate hypotheses).
  Observe that we use in this rule the placeholder formula \(\psi = \clause[X,Y,z]{X \eqf Y}{(X \dedfact z \wedge Y \dedfact z)}\), introduced in Section \ref{sec:formulas}, stating that two recipes deducing the same term should verify an equality fact.

\subsection{\cs: case distinction rules} \label{sec:case-distinction-rules}

Our case distinction rules take the form of a transition system on vectors \(\S\) of extended symbolic processes similarly to the vector-simplification rules.
There are three different rules, each operating in a similar manner:
given a vector \(\S \cup \{\Gamma\}\), all rules perform a transformation of the following form on one component \(\Gamma\):
\[\tag{\mbox{\(\star\star\)}} \label{rule:case-base}
  \S \cup \{\Gamma\} \rightarrow \S \cup \{\Gamma^+,\Gamma^-\}\]
where \(\Gamma^+\) (the \emph{positive branch}) is intuitively obtained by applying a mgs \(\Sigma\) on each symbolic processes of \(\Gamma\) and \(\Gamma^-\) (the \emph{negative branch}) by adding the formula \(\neg\Sigma\) to each symbolic process \(S \in \Gamma\), where
\begin{align*}
  \neg\Sigma & = \forall S. \bigvee_{X \in \dom(\Sigma)} X \neqs X\Sigma &
  \mbox{with } S = \vars[2](\im(\Sigma)) \smallsetminus \vars[2](\Gamma)
\end{align*}
Intuitively this refines the component \(\Gamma\) by considering the cases where \(\Sigma\) is a solution or not.
After that, normalising the refined components \(\Gamma^+\) and \(\Gamma^-\) with the simplification rules---in particular Rules \eqref{rule:vectorbot} and \eqref{rule:vector-split solved}---will discard impossibles cases and separate processes with newly-found non-statically-equivalent solutions.
The three case distinction rules \eqref{rule:satisfiable}, \eqref{rule:equality} and \eqref{rule:rewrite} are presented in the next sections by specifying how \(\Gamma^+\) and \(\Gamma^-\) are computed from \(\Gamma\).
They are applied using a particular strategy defined by the following ordering on rules (where < means ``has priority over''):
\[\ref{rule:satisfiable} < \ref{rule:equality} < \ref{rule:rewrite}\]

Note that this ordering is mostly arbitrary: only the minimality of \ref{rule:satisfiable} will be needed in Section~\ref{sec:termination} for complexity.
The other inequalities are only there to reduce the number of cases to be considered in proofs.

\subsubsection{Rule Sat}

  The first rule focuses on satisfiability:
  its goal is to separate extended constraint systems of \(\Gamma\) that do not have the same solutions.
  For example if we have \(S = (\P,\C,\C^e) \in \Gamma\) and \(\Sigma \in \mgs(\C^e)\), all other symbolic processes \(S' \in \Gamma\) should also have a solution that is an instance of \(\Sigma\) (and if not, the component \(\Gamma\) should be split to separate \(S\) and \(S'\)).
  In particular this ensures that when this rule cannot be applied any more, all extended constraint system in \(\Gamma\) share a common, unique mgs (in particular they are in solved form).
  The same mechanism can be used to consider the solutions \(\Sigma\) making trivial some disequations of \(\Eqfst\) or hypotheses of some formulas in \(\USolved\).
  In particular the normalisation rules defined earlier in Section \ref{sec:normalisation-rules} will then handle the now trivial or unsatisfiable constraints.
  All this can be formalised as an instance of \eqref{rule:case-base} with:

  \begin{mdframed}[style=margindefstyle]
    For all $\Sigma$, $\Gamma$,
    \begin{equation}
      \tag{\text{\textsc{Sat}}}
      \label{rule:satisfiable}
      \begin{array}{r@{\ }l}
        \Gamma^+ & = \{\CApply{\Sigma}{S} \mid S \in \Gamma\} \\
        \Gamma^- & = \{S[\Eqsnd \mapsto \Eqsnd \wedge \neg\Sigma] \mid S \in \Gamma\}
      \end{array}
    \end{equation}
    where there exists \(S \in \Gamma\) such that either
    \begin{enumerate}
      \item \label{it:rule-sat-mgs}
      \(S\) not solved and \(\Sigma \in \mgs(S)\); or otherwise
      \item \label{it:rule-sat-hyp}
      there exists \(\psi \in \USolved(S)\) not solved and \(\Sigma \in \mgs(S[\Eqfst \mapsto \Eqfst \wedge \Fhyp(\psi)])\); or
      \item \label{it:rule-sat-diseq}
      \(\Eqfst(S)\) contains a disequality \(\psi = \forall \tilde{x}.\phi\) and \(\Sigma \in \mgs(S[\Eqfst \wedge \psi \mapsto \Eqfst]\mgu(\neg\phi))\).
    \end{enumerate}
  \end{mdframed}

\subsubsection{Rule Eq}

  The second case distinction rule focuses on the static equivalence between solutions of extended constraint systems.
  More specifically, the rule \eqref{rule:equality} checks whether an entry \(\xi_1 \dedfact u_1\) of one knowledge base of \(\Gamma\) can deduce the same term as another recipe \(\xi_2\) consequence of \(\Solved\).
  The rule is formalised as an instance of \eqref{rule:case-base} with
  \begin{mdframed}[style=margindefstyle]
    For all $\Sigma$, $\Sigma_0$, $\Gamma$,
    \begin{align}
      \tag{\text{\textsc{Eq}}}
      \label{rule:equality}
      \begin{array}{r@{\ }l}
        \Gamma^+ & = \{\CApply{\Sigma}{S}[\USolved \mapsto \USolved \wedge \FApply{\Sigma_0\Sigma}{\psi}{\CApply{\Sigma}{S}}] \mid S \in \Gamma\} \\
        \Gamma^- & = \{S[\Eqsnd \mapsto \Eqsnd \wedge \neg\Sigma] \mid S \in \Gamma\}
      \end{array}
    \end{align}
    if there exist \(S \in \Gamma\), such that \(\Sigma \in \mgs(S[\Eqfst \mapsto \Eqfst \wedge H_E, \Df \mapsto \Df \wedge H_D])\),
    where \(H_E\) and \(H_D\) are, respectively, the sets of equations and deduction facts of the hypotheses of \(\FApply {\Sigma_0} {\psi} {S}\), and:
    \begin{enumerate}
      \item \label{it:rule-equality-1}
      either \(\Sigma_0 = \{ X\rightarrow \xi_1, Y\rightarrow \xi_2 \}\) for some \((\xi_1 \dedfact u_1),(\xi_2 \dedfact u_2) \in \Solved(S)\) and
      for all \((\clause{H}{\varphi}) \in \USolved(S)\), \(H \neq (\xi_1 \eqf \xi_2)\); or
      \item \label{it:rule-equality-2}
      \(\Sigma_0 = \{ X \rightarrow \xi_1, Y \rightarrow \ffun(X_1,\ldots,X_n)\}\) for some \((\xi_1 \dedfact u_1) \in \Solved(S)\) and
      \(\ffun/n \in \sigc\) with \(\quanti{X_1}{k},\ldots, \quanti{X_n}{k}\) fresh and for all \((\clause{\zeta_1 \eqf \zeta_2}{\varphi}) \in \USolved(S)\),
      \(\zeta_b = \xi_1\) implies \(\rootf(\zeta_{1-b}) \neq \ffun\).
    \end{enumerate}
    where \(k = |\dom(\Phi(S))|\) and \(\psi = \clause[X,Y,z]{X \eqf Y}{(X \dedfact z \wedge Y \dedfact z)}\) with \(\quanti{X}{k},\quanti{Y}{k},z\) fresh variables.
  \end{mdframed}

  Similarly to Rule \eqref{rule:vector-consequence} the rule uses the generic equality formula
  \(\psi\)
  and the hypotheses of \(\FApply{\Sigma_0}{\psi}{S}\) express that \(X \Sigma_0\) and \(Y \Sigma_0\) deduce the same term.
  Since a recipe consequence of \(\Solved\) can either be coming from a deduction fact in \(\Solved\) or be a recipe with a constructor symbol at its root, we consider the two cases \ref{it:rule-equality-1} and \ref{it:rule-equality-2} each with the appropriate instantiation \(\Sigma_0\) of the placeholders \(X\) and \(Y\).
  The side requirements that head-equivalent formulas should not already be present in \(\USolved(S)\) are simply here for termination purpose, thus avoiding infinite aggregation of redundant formulas.

\subsubsection{Rule Rew}
  The third case distinction rule focuses on saturating the knowledge base.
  For example when outputting a term \(u\), the corresponding symbolic rule \eqref{rule:e-out} will add a deduction fact \(\ax_n \dedfact u\) to \(\USolved\);
  the rule \eqref{rule:rewrite} will apply rewrite rules on \(u\) to determine whether new messages can be learned by the attacker.
  Typically if \(u = \pair{u_1,u_2}\) the following actions will happen:
  \begin{enumerate}
    \item after \(\ax_n \dedfact u\) has been added to \(\USolved\) by the symbolic rule \eqref{rule:e-out}, it will be copied to the knowledge base \(\Solved\) by the simplification rules \eqref{rule:vector-solve} (assuming \(u\) is not already deducible from any knowledge base of the component)
    \item after that, the case-distinction rule \eqref{rule:rewrite} will add the two deduction facts \(\fst(\ax_n) \dedfact u_1\) and \(\snd(\ax_n) \dedfact u_2\) to \(\USolved\), which may in turn be transferred to \(\Solved\) as well.
  \end{enumerate}

  More precisely, given a deduction fact \(\xi_0 \dedfact u_0\), the rule checks whether one may apply a rewrite rule \(\ell \rightarrow r\) to \(u_0\), which may require to first apply a context on \(u_0\) (for example if \(u_0 = \hfun(a)\) and \(\ell = \ffun(\gfun(\hfun(x)))\)).
  For that we introduce a notion of \emph{skeleton} of \(\ell\).

  \begin{definition}[rewriting skeleton] \label{def:skeleton}
    Let \(p\) be a position of a first-order term \(\ell\).
    A \emph{skeleton for \((\ell,p)\)} is a tuple \((\xi,t,D)\) such that
    \(\xi \in \termset(\sig \cup \sig_0 \cup \X[2])\),
    \(t \in \termset(\sig \cup \sig_0 \cup \X[1])\),
    \(D\) is a set of deduction facts and
    \[(\rootf(\getpos{\xi}{q}), \rootf(\getpos{t}{q})) = \left\{
      \begin{array}{ll}
        (\rootf(\getpos{\ell}{q}), \rootf(\getpos{\ell}{q})) & \mbox{for any strict prefix \(q\) of \(p\)}\\
        (X_{q},x_q) & \mbox{for any other position \(q\) of \(\xi\)}
      \end{array}
    \right.\]
    where the set of variables \(X_{q}\) (resp. \(x_q\)), \(q\) a position of \(\xi\) that is not strict prefix of $p$, are fresh pairwise distinct second-order (resp. first-order) variables,
    and \(D\) is the set of all the deduction facts \(X_q \dedfact x_q\).
    The set of all such skeletons (which, notably, are all identical up to variable renaming but may therefore differ on the second-order-variable types) is written \(\Skel{\ell}{p}\).
  \end{definition}


  For a skeleton \((\xi,t,D) \in  \Skel{\ell}{p}\), the recipe \(\xi\) represents the context that the attacker will apply on top of the deduction fact \(\xi_0 \dedfact u_0\) at the position \(p\) to obtain the left-hand side \(\ell\).
  The term \(t\) represents the corresponding generic term on which the rewrite rule will be applied.
  Finally \(D\) is the set of deduction facts linking the variables of \(\xi\) and \(t\).
  \medskip

  Consider now a component \(\Gamma\), a symbolic process \(S \in \Gamma\), a deduction fact \((\xi_0 \dedfact u_0) \in \Solved(S)\), and a context \(C\).
  The first role of Rule \eqref{rule:rewrite} is to saturate the knowledge base, that is, to deduce the new term \(C[u_0] \norm\) using \(C[\xi_0]\).
  However after adding the new deduction fact \(C[\xi_0] \dedfact C[u_0] \norm\) to \(\USolved(S)\), a head-equivalent formula should be added to all other symbolic processes of \(\Gamma\) whenever it is possible, so that the vector-simplification rule \eqref{rule:vector-split solved} (which separates processes with non-statically-equivalent solutions) only separates \(S' \in \Gamma\) from \(S\) if \(C[\xi_0]\) yields a valid message in \(S\) but not in \(S'\).
  Yet the behaviour of a destructor symbol may be described by multiple rewrite rules:
  the rewrite rule used to normalise \(C[u_0]\) may therefore not be the same as the one used to normalise the term deduced by \(C[\xi_0]\) in \(S'\).
  Because of this we have to add to \(\USolved(S')\) all formulas corresponding to using all possible rewrite rules.
  For that we consider the following set of generic formulas:
  \[
    \RewF{\xi}{\ell \rightarrow r}{p} =
      \left\{\clause[S]{\xi \dedfact r'}{(D \wedge \mgu(\ell' \eqs t))}\ \left|
        \begin{array}{l}
          \ell' \rightarrow r' \in \R\\
          (\xi,t,D) \in \Skel{\ell}{p}\\
          S = \vars(D,\ell')
        \end{array}\right.\right\}
  \]
  Let us now give a complete example to illustrate all these notions.
  The goal is to detail what formulas will be added to \(\USolved\) by Rule \eqref{rule:rewrite} on a concrete case as the actual definition of the rule is quite technical and hard to read---although the intuition behind it is rather simple.

  \begin{example}
    Consider a rewriting system defined by a binary symbol \(\hfun\) and the two rewrite rules
    \begin{align*}
      \otherfun(\hfun(x,y),x) & \to y &
      \otherfun(\hfun(x,y),y) & \to x
    \end{align*}
    These two rewrite rules give access to either argument of \(\hfun\) assuming the other one is known.
    Now consider a component \(\Gamma\) containing two extended symbolic processes \(S_1\) and \(S_2\) with the respective frames, given \(k,k',s,s' \in \Nall\):
    \begin{align*}
      \Phi(S_1) & = \{\ax_1 \mapsto \hfun(k,k'), \ax_2 \mapsto k\} &
      \Phi(S_2) & = \{\ax_1 \mapsto \hfun(s,s'), \ax_2 \mapsto s'\}
    \end{align*}
    They are are statically equivalent, even if the recipe \(\otherfun(\ax_1,\ax_2)\) is not normalised using the same rewrite rule in \(\Phi(S_1)\) and \(\Phi(S_2)\).
    We assume that \(\Solved(S_1) = \ax_1 \dedfact \hfun(k,k') \wedge \ax_2 \dedfact k\) and \(\Solved(S_2) = \ax_1 \dedfact \hfun(s,s') \wedge \ax_2 \dedfact s'\).
    We describe the application of Rule \eqref{rule:rewrite} that uses
    the rewrite rule \((\ell \to r) = (\otherfun(\hfun(x,y),x) \to y)\) to deduce a new term in \(S_1\) by putting \(\ax_1\) at the position of \(\hfun(x,y)\) in \(\ell\), i.e., position $1$.
    To begin the rule considers a skeleton \((\xi,t,D) \in \Skel{\ell}{1}\):
    \begin{align*}
      \xi & = \otherfun(X_1,X_2) &
      t & = \otherfun(x_1,x_2) &
      D & = X_1 \dedfact x_1 \wedge X_2 \dedfact x_2
    \end{align*}
    The set \(\RewF{\xi}{\ell \rightarrow r}{1}\) therefore contains the following two formulas (normalised by the simplification rules on formulae):
    \begin{align*}
      \psi_1 & = \clause[X_1,X_2,x,y]{\otherfun(X_1,X_2) \dedfact y}{\left(X_1 \dedfact \hfun(x,y) \wedge X_2 \dedfact x\right)} \\
      \psi_2 & = \clause[X_1,X_2,x,y]{\otherfun(X_1,X_2) \dedfact x}{\left(X_1 \dedfact \hfun(x,y) \wedge X_2 \dedfact y\right)}
    \end{align*}
    They are only generic formulas indicating that when the left side of a rewrite rule can be computed then the right side can be computed as well.
    Since our goal is to apply the rewrite rule \(\ell \to r\) where the deduction fact \(\ax_1 \dedfact \hfun(k,k')\) is used to deduce the subterm \(\getpos{\ell}{1}\), we have to replace the variable at position 1 in \(\xi\), namely \(X_1\), by \(\ax_1\) in these formulas.
    We do this by applying the substitution \(\Sigma_0 = \{X_1 \mapsto \ax_1\}\) to \(\psi_1,\psi_2\) (in the sense defined in Section \ref{sec:mgs-app}, again normalised by simplification rules):
    \begin{align*}
      \FApply{\Sigma_0}{\psi_1}{S_1} & = \clause[X_2]{\otherfun(\ax_1,X_2) \dedfact k'}{X_2 \dedfact k} \\
      \FApply{\Sigma_0}{\psi_2}{S_1} & = \clause[X_2]{\otherfun(\ax_1,X_2) \dedfact k}{X_2 \dedfact k'}
    \end{align*}
    Rule \eqref{rule:rewrite} will then compute most general solutions to instantiate \(X_2\) in a way that satisfies the hypotheses of these formulas.
    In the case of the first formula we have the unique solution \(\mgs(S_1[\Df \mapsto \Df \wedge X_2 \dedfact k]) = \{\Sigma\}\) where \(\Sigma = \{X_2 \mapsto \ax_2\}\);
    the algorithm will therefore add the following deduction fact to \(\USolved(S_1)\):
    \[\FApply{\Sigma_0\Sigma}{\psi_1}{\CApply{\Sigma}{S_1}} = \otherfun(\ax_1,\ax_2) \dedfact k'\]
    On the contrary, the algorithm could not have added the second formula: 
    we have \(\mgs(S_1[\Df \mapsto \Df \wedge X_2 \dedfact k']) = \emptyset\), meaning that no solutions satisfy its hypotheses.
    Then we are almost done:
    as explained earlier it only remains to add a head-equivalent formula to \(\USolved(S_2)\), if any, so that the vector-simplification rule \eqref{rule:vector-split solved} does not split \(\Gamma\) if the recipe \(\otherfun(\ax_1,\ax_2)\) yields a valid message in both \(S_1\) and \(S_2\).
    That is, we should add to \(\USolved(S_2)\):
    \begin{align*}
      \FApply{\Sigma_0\Sigma}{\psi_1}{\CApply{\Sigma}{S_2}} & = (\clause{\otherfun(\ax_1,\ax_2) \dedfact s'}{s' \eqs s}) \\
      \FApply{\Sigma_0\Sigma}{\psi_2}{\CApply{\Sigma}{S_2}} & = (\clause{\otherfun(\ax_1,\ax_2) \dedfact s}{s' \eqs s'})
    \end{align*}
    This time the situation is reversed compared to \(S_1\):
    the first formula has unsatisfiable hypotheses (and will therefore be discarded at the next round of normalisation rules by Rule \eqref{rule:Disequation removal 2}) and the second one will be simplified to \(\otherfun(\ax_1,\ax_2) \dedfact s\).
    This illustrates why we add one formula for each rewrite rule:
    should we have only considered \(\psi_1\), we would have missed the head-equivalent formula \(\otherfun(\ax_1,\ax_2) \dedfact s\) in \(S_2\), resulting in the incorrect conclusion that \(\Phi(S_1) \not\StatEq \Phi(S_2)\).
  \end{example}

  \newcommand{\USolvedFunc}[1]{\mathfrak{F}(#1)}

  Let us now formalise Rule \eqref{rule:rewrite} in full generality.
  It can now be defined as an instance of \eqref{rule:case-base} under the following conditions:
  \begin{mdframed}[style=margindefstyle]
    For all $\Sigma$, $\Gamma$,
    \begin{align}
      \tag{\text{\textsc{Rew}}}
      \label{rule:rewrite}
      \begin{array}{r@{\ }l}
        \Gamma^+ & = \{\CApply{\Sigma}{S}[\USolved \mapsto \USolved \wedge \USolvedFunc{S}] \mid S \in \Gamma\} \\
        \Gamma^- & = \{S[\Eqsnd \mapsto \Eqsnd \wedge \neg\Sigma] \mid S \in \Gamma\}
      \end{array}
    \end{align}
    if there exist \(S \in \Gamma\), \(\ell\rightarrow r \in \R\),
    \(p\) position of \(\ell\),
    \(\xi \in \termset(\sig \cup \Xsndi[=]{k})\) with \(k = |\dom(\Phi(S))|\),
    \(\psi_0 \in \RewF{\xi}{\ell \rightarrow r}{p}\), \((\xi_0 \dedfact u_0) \in \Solved(S)\), and a function $\mathfrak{F}$ from subsets of $\Gamma$ to constraints such that the following conditions are met:
    \begin{enumerate}
      \item \label{it:rule-rewrite-1}
      \(p \neq \epsilon\) and \(\getpos {\ell} {p} \notin \X[1]\)
      \item \label{it:rule-rewrite-3}
      \(\Sigma_0 = \{ \getpos{\xi}{p} \rightarrow \xi_0\}\)
      and \(\Sigma \in \mgs(S[\Df \mapsto \Df \wedge \Df(\psi_1), \Eqfst \mapsto \Eqfst \wedge \Fhyp(\psi_1)])\) if \(\psi_1 = \FApply{\Sigma_0}{\psi_0}{S}\)
      \item \label{it:rule-rewrite-5}
      \(\Sigma_1\) is an injection from \(\im(\Sigma) \setminus \vars[2](\psi_1)\) to fresh constants, and for any such injection \(\Sigma_1'\), we have \(\FApply{\Sigma_0 \Sigma \Sigma_1'}{\psi_0}{\CApply{\Sigma}{S}} \notin \USolved(S)\)
      \item \label{it:rule-rewrite-2}
      for all \(S' \in \Gamma\), \(\USolvedFunc{S'} = \{\FApply{\Sigma_0 \Sigma \Sigma_1} {\psi} {\CApply{\Sigma}{S'}} \mid \psi \in \RewF{\xi}{\ell \rightarrow r}{p}\}\)
    \end{enumerate}
  \end{mdframed}

  The conditions are rather technical but simply capture the steps of the example.
  Commenting the requirements of the rule, Item \ref{it:rule-rewrite-1} ensures that the rewriting is not performed at a trivial position.
  Items~\ref{it:rule-rewrite-3} and \ref{it:rule-rewrite-5} describe the formulas added to each symbolic process of \(\Gamma\):
  just as in the example they are obtained by choosing one \(S \in \Gamma\), computing a formula \(\psi_0\) (\(\psi_1\) in the example) corresponding to applying one given rewrite rule \(\ell \to r\) in \(S\), replacing the position \(p\) of \(\ell\) by an entry of \(\Solved(S)\) by applying \(\Sigma_0\) and computing a solution \(\Sigma\) of the hypotheses of the resulting formula.
  Finally, in Item~\ref{it:rule-rewrite-2}, all other symbolic processes \(S' \in \Gamma\) receive the formulas of \(\USolvedFunc{S'}\), each attempting to apply a rewrite rule to yield a valid message with the same recipe as in \(\psi_0\). 

  Note that the computation of the mgs \(\Sigma\) may leave some second-order variables \(X_1, \ldots, X_n\) unconstrained because they do not need to be instantiated in a particular way to obtain a solution.
  This is where Item~\ref{it:rule-rewrite-5} come into play, replacing these pending variables by fresh constants.

\subsection{All in all: computing a partition tree} \label{sec:correctness-proc}

\paragraph{Overall procedure}
  We make reference to the various constraint-solving relations defined in the previous sections using the following notations:
  \begin{mathpar}
    \simplifstep \text{ : simplification rules on formulas} \and
    \normstep \text{ : normalisation rules} \and
    \vectstep \text{ : vector-simplification rules} \\
    \satstep \text{ : \eqref{rule:satisfiable} case-distinction rule} \and
    \eqstep \text{ : \eqref{rule:equality} case-distinction rule} \and
    \rewstep \text{ : \eqref{rule:rewrite} case-distinction rule}
  \end{mathpar}
  All these transition relations are interpreted as binary relations on vectors.
  We recall that we call a \emph{component} a set \(\Gamma\) of extended symbolic processes and a \emph{vector} a set \(\S\) of components, and that all \((\P,\C,\C^e) \in \Gamma\) induce a predicate \(\pi\) on second-order solutions of \(\C\) such that
  \[\Sol[\pi](\C) = \{(\Sigma_{|\vars[2](\C)}, \sigma_{|\vars[1](\C)}) \mid (\Sigma,\sigma) \in \Sol(\C^e)\}\]
  We therefore propose in Algorithm \ref{alg:ptree} a procedure to compute the partition tree of two bounded plain processes, where the nodes are labelled by components instead of regular partition-tree configurations;
  in particular the proof of correctness of this algorithm has to
  justify that the above predicate \(\pi\) can be defined uniformly
  across the entire nodes of the computed tree.

  \newcommand\nodegen{\mathsf{generateSubtree}}
  \newcommand\rootgen{\mathsf{generateRoot}}
  \newcommand\applysimpl{\mathsf{applySimpl}}
  \newcommand\applycase{\mathsf{applyCase}}
  \newcommand\Ifkw[2]{\textbf{if}\ #1\ \textbf{then}\ #2}
  \newcommand\ElseIfkw[2]{\textbf{else if}\ #1\ \textbf{then}\ #2}
  \newcommand\Elsekw[1]{\textbf{else}\ #1}
  \begin{algorithm}
    \caption{Computation of the partition tree, with nodes labelled with components}
    \label{alg:ptree}

    \algocomment{Application of simplification rules, as much as possible}\;
    \Def {\(\applysimpl(\S : \mathsf{Vector}) : \mathsf{Vector}\)} {
      \Ifkw {\(\S \simplifstep \S'\)} {%
        \(\returnkw\ \applysimpl(\S')\)\;
      }
      \ElseIfkw {\(\S \normstep \S'\)} {%
        \(\returnkw\ \applysimpl(\S')\)\;
      }
      \ElseIfkw {\(\S \vectstep \S'\)} {%
        \(\returnkw\ \applysimpl(\S')\)\;
      }
      \Elsekw {%
        \(\returnkw\ \S\)\;
      }
    }\;

    \algocomment{Application of case-distinction rules, with simplification rules in between}\;
    \Def {\(\applycase(\S : \mathsf{Vector}) : \mathsf{Vector}\)} {
      \Ifkw {\(\S \satstep \S'\)} {%
        \(\returnkw\ \applycase(\applysimpl(\S'))\)\;
      }
      \ElseIfkw {\(\S \eqstep \S'\)} {%
        \(\returnkw\ \applycase(\applysimpl(\S'))\)\;
      }
      \ElseIfkw {\(\S \rewstep \S'\)} {%
        \(\returnkw\ \applycase(\applysimpl(\S'))\)\;
      }
      \Elsekw {%
        \(\returnkw\ \S\)\;
      }
    }\;

    \algocomment{Generates the subtree rooted on a node labelled by the component \(\Gamma\)}\;
    \Def {\(\nodegen(\Gamma : \mathsf{Component}) : \mathsf{Tree}\)} {
      \(\Gamma_{\inp} \leftarrow \{S' \mid S \Sstep{\InP{Y}{X}} S', S \in \Gamma\}\)\;
      \(\Gamma_{\outp} \leftarrow \{S' \mid S \Sstep{\OutP{Y}{\ax}} S', S \in \Gamma\}\)\;
      \If {\(\Gamma_\inp = \Gamma_\outp = \emptyset\)} {%
        \(\returnkw\) a tree reduced to its root, labelled \(\Gamma\)\;
      }
      \Else{} {%
        \(\S \leftarrow \applycase(\applysimpl(\{\Gamma_\inp,\Gamma_\outp\}))\)\;
        \(T \leftarrow\) tree with root \(\Gamma\) and children \(\nodegen(\Gamma')\) for each \(\Gamma' \in \S\)\;
        \(\returnkw\ T\)\;
      }
    }\;

    \algocomment{Generates the root and then an entire partition tree of \(P_1\) and \(P_2\)}\;
    \Def {\(\ptree(P_1,P_2 : \mathsf{Processes}) : \mathsf{Tree}\)} {
      \(\Gamma_1 \leftarrow \{S \mid P_1 \Sstep{\epsilon} S\}\)\;
      \(\Gamma_2 \leftarrow \{S \mid P_2 \Sstep{\epsilon} S\}\)\;
      \(\{\Gamma\} \leftarrow \applysimpl(\{\Gamma_1 \cup \Gamma_2\})\) \algocomment{in the root, simplification rules never split the vector}\;
      \(\returnkw\ \nodegen(\Gamma)\)\;
    }
  \end{algorithm}

\paragraph{Correctness arguments}

  To conclude we mention that the core arguments justifying that Algorithm \ref{alg:ptree} effectively generates a partition tree can be found in Appendix~\ref{app:ptree-proof}.
  Technically, most of the theorem statements rely on a collection of invariants, with a proof of their preservation at each step of the procedure (called with ``\(\predlab\)'' names such as \(\PredWellFormed\), \(\PredCorrectFormula\),...).

\section{Termination and complexity}
\label{sec:complex}


\subsection{Preliminaries}

  \begin{enumerate}
    \item In Section \ref{sec:termination} we prove that Algorithm \ref{alg:ptree} uses a \emph{finite number of constraint-solving rules} to compute each branch of the partition tree.
    This proves the all considered security relations (trace equivalence and inclusion, simulation, (bi)similarity) to be decidable.
    \item For complexity purposes we then refine this result in Section \ref{sec:exp-mgs}:
    we prove that Algorithm~\ref{alg:ptree} applies at most an \emph{exponential number of rules} and that the nodes of the resulting partition tree have \emph{most general solutions of exponential (DAG) size}.
    \item Relying on these bounds, we show that two processes are not equivalent \textit{iff} there exists a \emph{non-equivalence witness of exponential size} (as defined in Section~\ref{sec:ptree-eq}).
    This shows the security relations to be decidable in co\nexp time.
    \item Finally we show in Section~\ref{sec:deepsec-hardness} that the security relations are \emph{co\nexp hard}.
    We also provide a complexity analysis in the pure pi-calculus.
    All in all:
  \end{enumerate}

  \begin{theorem}[restate=thmDeepsecConexp,name={complexity of equivalences}] \label{thm:deepsec-conexp}
    For bounded processes, the problems \TraceEquiv, \TraceInclus, \Simulation, \Similarity, and \Bisimilarity are co\nexp complete for constructor-destructor subterm convergent theories.
    Besides, in the pure pi calculus, \TraceEquiv and \TraceInclus are \polyh{2} complete, and \Simulation, \Similarity and \Bisimilarity are \pspace complete.
  \end{theorem}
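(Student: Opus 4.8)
The plan is to prove the two halves of the statement — the co\nexp result for the applied pi calculus and the \polyh{2}/\pspace results for the pure pi calculus — following the four-step roadmap of the preliminaries. For the applied-pi upper bound I would first establish \emph{termination} of the constraint-solving procedure of Algorithm~\ref{alg:ptree}: the key is a well-founded measure that strictly decreases under each case-distinction rule \eqref{rule:satisfiable}, \eqref{rule:equality}, \eqref{rule:rewrite}, so that every branch of the partition tree is generated after finitely many rule applications. Combined with the partition-tree characterisations of Theorems~\ref{thm:trace-equiv-ptree} and~\ref{thm:ptree-lab-bis}, which reduce each of the five relations to a finite, effectively-computable object, this already yields decidability.

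The second, quantitative step is to refine termination into \emph{explicit exponential bounds}: I would bound the number of rules applied along a branch by an exponential in the DAG-size of the inputs, and simultaneously bound the DAG-size of the most general solutions labelling each node by an exponential. The crucial lever here is the uniformity requirement of Definition~\ref{def:ext-sol}, which forces identical first-order terms to be deduced by a single recipe and thereby keeps mgs' of minimal DAG-size; subterm convergence of \(\R\) then controls how far rewriting can expand terms, so that the knowledge base \(\Solved\) and the formula set \(\USolved\) remain of bounded size along each branch. I expect this to be the \textbf{main obstacle}: a careless analysis allows recipes to be duplicated and yields a doubly-exponential bound, which would miss co\nexp. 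The measure must therefore combine the second-order constraints with the structure of \(\Solved\) and \(\USolved\), and the uniformity invariant must be exploited at each duplicating step.

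Given these bounds, co\nexp membership follows. By the characterisations of Section~\ref{sec:ptree-eq}, \(P_1\not\TraceEq P_2\) is witnessed by a single partition-tree trace of one process with no match in the other, and \(P_1\not\LabBis P_2\) by a solvable symbolic witness; in both cases the exponential bounds guarantee a witness of exponential size that can be guessed and verified in nondeterministic exponential time \emph{without materialising the whole tree} (for (bi)similarity, the solution of the witness is recovered by a bottom-up unification of the node mgs', which stays within the bound). Hence the complement of each relation is in \nexp, i.e. each problem is in co\nexp. For the matching lower bound I would reduce \sucsat to the complement of each relation: the succinct (polynomial-size circuit) encoding of an exponential formula is what forces \nexp-hardness in tree size rather than mere \np-hardness. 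I would design polynomial-size processes so that a distinguishing trace/witness corresponds exactly to a satisfying valuation, checking that the reduction already works for the positive fragment (no \(\ElseP\)), as claimed.

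Finally, in the pure pi calculus the term algebra collapses (\(\sigc\), \(\sigd\) and \(\R\) empty), so static equivalence is decidable in \logspace and the transition relation is finitely branching up to renaming of attacker constants, with traces of polynomial length. Trace inclusion then reads as ``for all traces of \(A\), there exists a statically-equivalent trace of \(B\)'', a \(\forall\exists\) statement with a \logspace matrix, placing \TraceInclus (and hence \TraceEquiv) in \polyh{2}; hardness follows by encoding a level-2 quantified boolean formula into the two processes so that the quantifier alternation matches the inclusion/matching alternation. For \Simulation, \Similarity and \Bisimilarity I would use the prover--disprover game characterisation underlying Proposition~\ref{prop:concrete-witness}: plays have polynomial length (bounded processes) and polynomial branching (up to constant renaming), so the game is decided by an alternating polynomial-time procedure, giving \pspace; \pspace-hardness follows from a standard encoding of \qbf into a bisimulation game.
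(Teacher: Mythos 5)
Your proposal is correct and follows essentially the same route as the paper: a well-founded measure giving termination and an exponential bound on rule applications per branch (Theorem~\ref{thm:termination-ptree}), exponential DAG-size bounds on most general solutions obtained by exploiting uniformity (Theorem~\ref{thm:mgs-size}), co\nexp membership via exponential-size witnesses --- a partition-tree trace for \(\TraceEq\) and a solvable symbolic witness, with solution recovered by bottom-up unification of node mgs', for the simulation-based relations --- and a \sucsat reduction (in the positive fragment, with a fixed rewrite system) for the lower bound. The pure-pi arguments likewise match: \(\forall\exists\) quantification over polynomial-length traces with finitely branching transitions up to renaming for \polyh{2}, the alternating polynomial-time game for \pspace, and hardness by encoding \qbf[2] and full \qbf respectively.
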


  \paragraph{Notations}
  We also introduce some notations that will be used in most incoming sections.
  We recall that we study complexity w.r.t. the DAG size of terms (which provides stronger results compared to complexity bounds w.r.t. the tree size of terms);
  in particular the DAG size of a substitution \(\sigma\) is \(\dagsize{\sigma} = |\subterms(\im(\sigma))|\), hence the many occurrences of subterm sets below.
  Given an extended constraint system \(\C^e = (\Phi,\Df,\Eqfst,\Eqsnd,\Solved,\USolved)\) we write
  \begin{align*}
    \tag{first-order mgu}
    \mu^1 & = \mgu(\Eqfst) \\
    \tag{second-order mgu}
    \mu^2 & = \mgu(\Eqsnd) \\
    \tag{first-order terms}
    \terms[1] & = \subterms[1](\im(\Phi\mu^1),\im(\mu^1),\Solved \mu^1, \Df\mu^1) \\
    \tag{second-order terms}
    \terms[2] & = \subterms[2](\im(\mu^2),\Solved, \Df) \\
    \tag{solution recipes}
    \recipes & = \stc(\im(\mu^2),\ \Solved \cup \Df) \cup \vars[2](\Df)
  \end{align*}
  When the extended constraint system is not clear from context we write explicitly \(\mu^1(\C^e)\), \(\mu^2(\C^e)\), \(\terms[1](\C^e)\),...
  Intuitively \(\mu^i\) are the mgu's of the equations of \(\Eq[i]\) and we recall in particular that \(\mgs(\C^e) = \{\mu^2\}\) when \(\C^e\) is solved (Section~\ref{sec:mgs-proc}, Proposition~\ref{prop:correct-mgs-solved}).
  The other notations assume \(\mu^1 \neq \bot\) (if \(\mu^1 = \bot\), \(\C^e\) will be discarded by the normalisation rule \eqref{rule:mgs-unsat} anyway).
  The sets \(\terms[1]\) and \(\terms[2]\) respectively represent the first-order and second-order terms appearing in the system, while \(\recipes \subseteq \terms[2]\) models the set of recipes used to build the solution of \(\C^e\) (i.e., \(\mu^2\)) from \(\Solved \cup \Df\).
  We recall that it is the same set as the one used when defining the constraint-solving rules for most general solutions (Section~\ref{sec:mgs-rules}).

  \begin{remark}[uniformity of second-order terms across components]
    Due to an invariant of the procedure (\(\PredStruct\) formalised in Appendix \ref{app:ptree}, Section \ref{app:invariants}), we know that all extended constraint systems in a component \(\Gamma\) have the same second-order structure.
    Here this means that \(\mu^2(\C^e_1) = \mu^2(\C^e_2)\) and \(\terms[2](\C^e_1) = \terms[2](\C^e_2)\) for any \((\P_1,\C_1,\C^e_1),(\P_2,\C_2,\C^e_2) \in \Gamma\).
    For this reason we may write \(\mu^2(\Gamma)\) or \(\terms[2](\Gamma)\) instead of \(\mu^2(\C^e)\) or \(\terms[2](\C^e)\) for some arbitrary \((\P,\C,\C^e) \in \Gamma\).
  \end{remark}

\subsection{Termination of the constraint solving}
\label{sec:termination}

\subsubsection{Termination of the computation of most general solutions} \label{sec:termination-mgs}

We first study the termination of the procedure for computing most general solutions provided in Section~\ref{sec:mgs-gen}.
The proof mostly relies on the following measure that characterises the set of first-order terms of an extended constraint system \(\C^e\) that are not used in \(\mgs(\C^e)\), that is, that are not deduced by any recipe \(\xi \in \recipes(\C^e)\):
\[\measureNC(\C^e) =
  \{ t \in \terms[1] \mid t \not\in \X[1] \wedge \forall \xi \in \recipes(\C^e) \smallsetminus \X[2],\, (\xi,t) \notin \conseq(\Solved\mu^1 \cup \Df\mu^1)\}
\]
The simplification rules for mgs do not affect this value (except if \(\C^e\) is replaced by \(\bot\) by \eqref{rule:mgs-unsat}).
The application of \eqref{rule:conseq} will ensure that \(\measureNC\) is at least non-in\-creasing, while \eqref{rule:res} and \eqref{rule:cons} make it strictly decreasing.
We summarise this as the following proposition, proved in Appendix~\ref{app:termination};
we recall that, similarly to the correctness arguments in Section~\ref{sec:ptree}, the statements makes reference to some procedure invariants formalised in Appendix~\ref{app:invariants}:

\begin{proposition}[restate=propMgsDecrease,name={decrease of unused first-order terms during constraint solving}] \label{prop:mgs-decrease}
  Let \(\C^e\) be an extended constraint system such that \(\C^e \simplnorm = \C^e\) and the invariants \(\PredWellFormed(\C^e)\) and \(\PredCorrectFormula(\C^e)\) hold.
  Then let \(\C^e \simpStep{\Sigma} \C^{e\prime} \neq \bot\).
  If this transition is derived with:
  \begin{enumerate}
    \item Rule \eqref{rule:conseq}: \(|\measureNC(\C^{e\prime})| \leqslant |\measureNC(\C^e)|\)
    \item Rules \eqref{rule:res} or \eqref{rule:cons}: \(|\measureNC(\C^{e\prime})| < |\measureNC(\C^e)|\)
  \end{enumerate}
\end{proposition}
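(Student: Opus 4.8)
The plan is to track how the finite set $\measureNC(\C^e)$ of \emph{unused} first-order terms evolves across a single mgs-solving step, relying on the custom substitution-application mechanism of Definition~\ref{def:application_mgs_csys}. First I would establish a \emph{monotonicity lemma} for consequence: if a first-order term $t$ is deducible by some non-variable recipe $\xi \in \recipes(\C^e) \smallsetminus \X[2]$, i.e.\ $(\xi, t') \in \conseq(\Solved\mu^1 \cup \Df\mu^1)$ with $t' = t$, then after the step the ($\mu^1(\C^{e\prime})$-image of) $t$ is still deducible by a non-variable recipe of $\recipes(\C^{e\prime})$. The point is that $\CApply{\Sigma}{\C^e}$ only ever \emph{removes} a deduction fact $Y \dedfact u$ (for $Y \in \dom(\Sigma)$), replacing it by fresh binding facts plus linking equations $E_\Sigma$ that record $u \eqs v$ whenever $(Y\Sigma, v)$ is a consequence of the updated base; since $\Sigma$ is recorded in $\Eqsnd$, the image $Y\Sigma$ lands in $\im(\mu^2(\C^{e\prime}))$ and hence in $\recipes(\C^{e\prime})$. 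Alongside this I would show that no genuinely new first-order subterm is created: every term occurring in a linking equation of $E_\Sigma$ already occurs in $\Solved$ or $\Df$, so that $\terms[1](\C^{e\prime})$ contains no subterm that is not an instance under $\mu^1(\C^{e\prime})$ of a subterm of $\terms[1](\C^e)$. Together these two facts give that $\measureNC$ is \emph{non-increasing} across any step, which already settles the inequality for \eqref{rule:conseq}.

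For \eqref{rule:conseq} specifically, I would argue the measure cannot strictly increase by noting that $\Sigma = \mgu(\xi \eqs \zeta)$ merely unifies two recipes $\xi, \zeta \in \recipes(\C^e)$ that deduce a common term $u$; the only deduction fact possibly eliminated has a first-order term that is, by the side-condition of the rule, already a consequence of $\Solved(\C^e) \cup \Df(\C^e)$. Hence no previously-used term becomes unused, and the monotonicity lemma yields $|\measureNC(\C^{e\prime})| \leqslant |\measureNC(\C^e)|$.

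For \eqref{rule:res} and \eqref{rule:cons} the task is to exhibit a \emph{strict} witness leaving $\measureNC$. Here I would exploit the priority of \eqref{rule:conseq}: since these rules are applied only when \eqref{rule:conseq} is not applicable, the relevant term $u$ (for \eqref{rule:res}, where $(X \dedfact u) \in \Df$, $u \notin \X$) respectively $\ffun(u_1,\ldots,u_n)$ (for \eqref{rule:cons}) cannot already be deducible by a non-variable recipe of $\recipes(\C^e)$ — otherwise $X$, which lies in $\recipes(\C^e)$ via $\vars[2](\Df)$, would be unifiable with that recipe and \eqref{rule:conseq} would fire. Consequently this term is non-variable and belongs to $\measureNC(\C^e)$. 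After the step it becomes deducible by the non-variable recipe $\xi$ from $\Solved$ (via the new linking equation $u \eqs v$ forcing $u\mu^1 = v\mu^1$), respectively by $\ffun(X_1,\ldots,X_n) = X\Sigma$, both of which the monotonicity argument places in $\recipes(\C^{e\prime}) \smallsetminus \X[2]$; the freshly-introduced first-order variables $x_i$ are in $\X[1]$ and therefore excluded from the measure. Thus the witness leaves $\measureNC(\C^{e\prime})$, and combined with non-increase this gives $|\measureNC(\C^{e\prime})| < |\measureNC(\C^e)|$.

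The main obstacle I anticipate is the precise bookkeeping under $\CApply{\Sigma}{\cdot}$: one must verify carefully that the linking equations $E_\Sigma$ and the refinement of $\mu^1$ neither enlarge $\terms[1]$ by genuinely new subterms nor turn a used term into an unused one, and that the consequence relation is preserved modulo the new first-order unifier (this is where the invariants $\PredWellFormed(\C^e)$ and $\PredCorrectFormula(\C^e)$ enter, guaranteeing in particular that $\im(\mu^2)$ is a consequence of $\Solved \cup \Df$ so that $\recipes$ is well defined throughout). The delicate sub-case is \eqref{rule:conseq} when $\Sigma$ removes a deduction fact while also substituting inside $\Df$, which I would handle by checking the monotonicity lemma term-by-term on the consequential-subterm decomposition $\stc(\cdot, \Solved \cup \Df)$ rather than on raw recipes.
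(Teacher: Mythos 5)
Your proposal is correct in substance and follows essentially the same route as the paper's proof: the non-increase part is the paper's combination of the well-formedness invariant (every recipe in \(\stc(\C^e)\) and all its subterms are consequences of \(\Solved \cup \Df\), so \(\stc(\C^e)\Sigma \subseteq \stc(\C^{e\prime})\)) with Proposition~\ref{prop:trans-conseq} on transitivity of consequences --- which is exactly the content of your ``monotonicity lemma'' --- and the strict-decrease part is the same witness argument, exhibiting \(u\) (resp. \(\ffun(u_1,\ldots,u_n)\)) as a term that lies in \(\measureNC(\C^e)\) before the step and is deduced after it by \(\xi\Sigma\) (resp. \(X\Sigma = \ffun(X_1,\ldots,X_n)\)), both placed in \(\recipes(\C^{e\prime}) \smallsetminus \X[2]\) since \(\Sigma\) is recorded in \(\Eqsnd\).

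One step deserves a patch. To show the witness term belongs to \(\measureNC(\C^e)\), you argue that if some non-variable \(\xi' \in \recipes(\C^e)\) already deduced it, then \(X\) ``would be unifiable with that recipe and \eqref{rule:conseq} would fire.'' In this second-order setting \(\mgu(X \eqs \xi')\) may well be \(\bot\) --- either by occurs-check or, more importantly, because \(\xi'\) contains axioms \(\ax_j\) of type strictly greater than the type of \(X\) (the same term can legitimately be deducible both early via \(X\) and late via a high-type recipe) --- and then \eqref{rule:conseq} does \emph{not} fire, so priority alone does not close the case. The conclusion nevertheless holds, because in that situation condition~\ref{it:mgs-not-uniform} of Rule \eqref{rule:mgs-unsat} applies (two recipes of \(\recipes(\C^e)\) deducing the same term with \(\mgu = \bot\)), so \(\C^e\) would reduce to \(\bot\) under \(\simpl\), contradicting the hypothesis \(\C^e \simplnorm = \C^e\). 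This is precisely how the paper secures its application condition ``for all \(\xi \in \stc(\C^e) \setminus \{X\}\), \((\xi,u) \notin \conseq(\Solved \cup \Df)\)'': normalisation, not just rule priority, is needed. Since the missing hypothesis is already in your statement, the fix is a one-line addition rather than a structural repair.
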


Using this proposition we can then easily prove the computation of the set of most general solutions to be terminating, and actually to give an upper bound on its cardinality:

\begin{theorem}[restate=propMgsSize,name={termination for most general solutions}]
  There exist no infinite sequences of transitions w.r.t. \(\simpStep{}\).
  Besides if \(\C^e\) is an extended constraint system such that the invariants \(\PredWellFormed(\C^e)\) and \(\PredCorrectFormula(\C^e)\) hold, we have
  \[|\mgs(\C^e)| \leqslant (|\Solved(\C^e)| + 1)^{|\measureNC(\C^e)|}\]
\end{theorem}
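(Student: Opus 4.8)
The plan is to derive both claims from Proposition~\ref{prop:mgs-decrease}, which controls the behaviour of the measure $|\measureNC|$ along the relation $\simpStep{}$. Recall that each $\simpStep{}$ step consists of exactly one application of a constraint-solving rule among \eqref{rule:conseq}, \eqref{rule:res}, \eqref{rule:cons}, followed by a round of (confluent) simplification; and that, as already noted in the text, this trailing simplification leaves $|\measureNC|$ unchanged unless it collapses the system to $\bot$ via \eqref{rule:mgs-unsat}. It therefore suffices to reason about the three core rules.

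First I would prove termination by exhibiting a well-founded lexicographic measure on extended constraint systems. The outer component is the natural number $|\measureNC(\C^e)|$: by Proposition~\ref{prop:mgs-decrease} it strictly decreases under \eqref{rule:res} and \eqref{rule:cons}, and is non-increasing under \eqref{rule:conseq}. The inner component $\nu(\C^e)$ measures the second-order variables still occurring in $\recipes(\C^e)$, ranked exactly as in the termination argument for second-order unification (the proof of correctness of second-order mgu's), i.e.\ as a multiset of variables ordered by their type. An application of \eqref{rule:conseq} unifies two syntactically distinct consequential recipes $\xi \neq \zeta$ (since $\Sigma = \mgu(\xi \eqs \zeta) \notin \{\top, \bot\}$), hence eliminates at least one second-order variable, possibly at the cost of introducing fresh variables of strictly lower type; this is precisely why the type-ranked multiset ordering is needed, and it makes $\nu$ strictly decrease. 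Under \eqref{rule:res} and \eqref{rule:cons} the inner component may grow (e.g.\ \eqref{rule:cons} creates fresh variables $X_1, \ldots, X_n$), but this is harmless because the outer component strictly decreases there. The pair $(|\measureNC|, \nu)$ thus strictly decreases, in the lexicographic order, at every $\simpStep{}$ step, so no infinite sequence of $\simpStep{}$ transitions can exist.

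For the cardinality bound I would view $\mgs(\C^e)$ through the finite tree of all maximal $\simpStep{}$-derivations issued from $\C^e\simplnorm$. By the partial correctness of the procedure and Proposition~\ref{prop:correct-mgs-solved}, every solved leaf $\C^{e\prime}$ contributes exactly one candidate $\mu^2(\C^{e\prime})$ to $\mgs(\C^e)$, so it suffices to bound the number of leaves. The only genuinely branching steps are \eqref{rule:res} and \eqref{rule:cons}, which resolve a single non-variable deduction fact $X \dedfact u$: rule \eqref{rule:cons} is forced to use the constructor $\rootf(u)$ (one option), whereas rule \eqref{rule:res} may unify $X$ with any entry of $\Solved$ (at most $|\Solved(\C^e)|$ options, using that $|\Solved\Sigma| \leqslant |\Solved|$ is preserved along derivations). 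Hence the branching factor is at most $|\Solved(\C^e)| + 1$. Each such step strictly decreases the natural number $|\measureNC|$ by Proposition~\ref{prop:mgs-decrease}, so every branch contains at most $|\measureNC(\C^e)|$ of them; the tree therefore has at most $(|\Solved(\C^e)| + 1)^{|\measureNC(\C^e)|}$ leaves, giving the stated bound.

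The main obstacle is justifying that applications of \eqref{rule:conseq} do not contribute to this branching count. Since \eqref{rule:conseq} has priority and merely enforces uniformity, I would show that the maximal \eqref{rule:conseq}-phase separating two \eqref{rule:res}/\eqref{rule:cons} steps is confluent up to variable renaming: all the unifications it performs are forced, as any two consequential recipes deducing the same first-order term must eventually be identified, so the phase has a unique outcome and can be contracted to a single, non-branching edge. Establishing this confluence, together with the type-ranked multiset argument for the inner measure $\nu$, is the technical heart of the proof; everything else follows directly from Proposition~\ref{prop:mgs-decrease}.
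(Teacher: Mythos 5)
Your proposal is correct and follows essentially the paper's route: Proposition~\ref{prop:mgs-decrease} as the engine, a lexicographic termination measure with \(|\measureNC|\) as the outer component, confluence of the \eqref{rule:conseq} phases to normalise derivations, and a branching count of at most \(|\Solved(\C^e)|+1\) per \(\measureNC\)-decreasing step, giving the bound \((|\Solved(\C^e)|+1)^{|\measureNC(\C^e)|}\). The one substantive divergence is the inner termination component: the paper uses the strictly decreasing set \(m(\C^e)\) of parameter pairs \((\xi,\zeta)\) with which \eqref{rule:conseq} is still applicable, whereas you import the type-ranked variable measure from the termination proof of second-order unification. Your variant is sound, but only under the reading where \(\nu(\C^e)\) is the \emph{set} of second-order variables occurring in \(\recipes(\C^e)\) (ordered by the multiset extension of the type order), not the occurrence multiset: a \eqref{rule:conseq} step may bind \(X\) to a recipe such as \(\ffun(Y,Y)\) with \(Y\) of the same type as \(X\), duplicating same-rank occurrences of \(Y\), which is not a decrease in the occurrence-multiset order --- while the variable set does lose \(X\), and the fresh variables produced by second-order mgu computation always replace strictly higher-typed ones, exactly as in the paper's own mgu termination argument. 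You also gloss over one point in the counting: the branching factor is \(|\Solved|+1\) only once the choice of \emph{which} deduction fact \((X \dedfact u) \in \Df\) rules \eqref{rule:res}/\eqref{rule:cons} resolve is itself fixed, otherwise it picks up a factor of \(|\Df|\); the paper handles this explicitly by deterministically choosing the deduction fact, justified by the same confluence-of-mgu's consideration you invoke for the \eqref{rule:conseq} phases, and your contraction argument extends to cover it but should say so. Finally, your claim that the trailing simplification leaves \(|\measureNC|\) unchanged is slightly too strong --- the paper only establishes non-increase --- but non-increase is all your argument needs.
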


\begin{proof}
  First of all we observe that consecutive applications of Rule \eqref{rule:conseq} are terminating, since applying this rule strictly decrease the cardinality of the set \(m(\C^e)\) of parameters \((\xi,\zeta)\) the rule can be applied with.
  Combining this with Proposition \ref{prop:mgs-decrease} we obtain that if \(\C^e \simpStep{\Sigma} \C^{e\prime} \neq \bot\) then \(\C^e < \C^{e\prime}\) w.r.t. the lexicographic composition of \(\measureNC\) and \(m\).

  Besides consecutive applications of Rule \eqref{rule:conseq} are also confluent by unicity of mgu's.
  For the same reason the applications of Rules \eqref{rule:res} or \eqref{rule:cons} can be performed on one deterministically-chosen deduction fact \((X \dedfact u) \in \Df\).
  We therefore obtain \(\mgs(\C^e) = \{\Sigma_{|\vars[2](\C^e)} \mid \C^e \SimpStep{\Sigma} \C^{e\prime} \text{ normalised}, \C^{e\prime} \text{ solved}\}\)
  where a reduction \(\C^e \SimpStep{\Sigma} \C^{e\prime}\) is said to be \emph{normalised} when all applications of Rule \eqref{rule:conseq} and the choice of deduction facts in Rules \eqref{rule:res} or \eqref{rule:cons} are done in a fixed, deterministic way.
  Since for any \(\C^e\), there are at most \(|\Solved(\C^e)|\) normalised applications of Rule \eqref{rule:res} and 1 normalised application of Rule \eqref{rule:cons}, we deduce by Proposition \ref{prop:mgs-decrease} that \(|\mgs(\C^e)| \leqslant (|\Solved(\C^e)|+1)^{|\measureNC(\C^e)|}\).
\end{proof}

\subsubsection{Termination of the computation of partition trees}
To bound the number of rule applications in Algorithm \ref{alg:ptree} we define a well-founded measure that decreases after each case-distinction, simplification, normalisation and vector-simplification rules.
More precisely the rule applications are always of the form
\begin{align*}
  \S \cup \{\Gamma\} & \rightarrow \S \cup \{\Gamma_1, \ldots, \Gamma_p\} &
  p \in \{1,2\}
\end{align*}
and we show that for all \(i \in \eint{1}{p}\), \(\Gamma > \Gamma_i\) w.r.t. to a well-founded measure on components (under the invariants of the procedure defined in Appendix \ref{app:ptree}).
This therefore bounds the number of rule applications to compute a given branch of the partition tree.
The measure in question is a tuple of 9 integer components that is ordered w.r.t. the lexicographic ordering.

\paragraph{Measure 1:  sizes of the processes}

  As first element of the measure, we compute a maximum on the sizes of the processes in the multisets \(\P\), that is,
  \defcomp{1}{
    \max_{(\P,\C,\C^e) \in \Gamma} \sum_{R \in \P} \dagsize{R}
  }
  Notice that this stays unchanged for any simplification or case distinction rules but strictly decreases when applying the extended symbolic transitions.

\paragraph{Measure 2:  Number of constraint systems}

  The third element of the measure considers the number of extended symbolic processes in the set, i.e., \(|\Gamma|\), that may increase only when applying a symbolic transition;
  however it strictly decreases when applying the simplification rules \eqref{rule:vector-split solved} and \eqref{rule:vectorbot}.
  Moreover it also strictly decreases for the positive branch of Rule \eqref{rule:satisfiable} when applied with the case \ref{it:rule-sat-diseq} of its application conditions.
  In such a case, we consider a disequation \(\psi\) and a mgs \(\Sigma\) of \(\C^e_j\) that does not satisfy \(\psi\), which will lead to at least one \(S \in \Gamma\) being discarded by the simplification rule \eqref{rule:vectorbot}.
  \defcomp{2}{|\Gamma|}

\paragraph{Measure 3:  Number of terms not consequence}

  Given \(\C^e\) an extended symbolic constraint system, let us consider the following set representing the set of terms that are not consequence of \(\Solved(\C^e)\) and \(\Df(\C^e)\):
  \[
    \setSDF(\C^e) = \{ t \in \terms[1] \mid \forall \xi, (\xi,t) \notin \conseq(\Solved(\C^e) \cup \Df(\C^e))\}
  \]
  Typically it corresponds to the terms that are not deducible by the attacker but could potentially be (because the knowledge base is not saturated yet).
  In fact, when the simplification Rule \eqref{rule:vector-solve} is applied, i.e., when a deduction fact \(\xi \dedfact u\) is added to \(\Solved(\C^e)\), the term \(u\) is necessarily a subterm of the frame by the invariant \(\PredWellFormed(\C^e)\).
  Moreover by definition of Rule \eqref{rule:vector-solve} we know that \(u\) is not already consequence, meaning that the size of \(\setSDF(\C^e)\) will strictly decrease.
  Finally the case distinction rules never increase the number of elements of \(\setSDF(\C^e)\):
  indeed they all consist of applying substitutions \(\Sigma\) that are most general solutions of some systems having \(\Solved(\C^e)\) as their knowledge base, hence their first-order terms are consequence by \(\Solved\)-basis.
  All in all we choose the following component:
  \defcomp{3}{\min_{(\P,\C,\C^e) \in \Gamma} |\setSDF(\C^e)|}


\paragraph{Measure 4:  Number of unsolved extended constraint systems}

  We recall that the aim of Rule \eqref{rule:satisfiable}, case \ref{it:rule-sat-mgs} of its application conditions, is to put extended constraint systems in solved form (that is, in a form where they trivially have \(\mu^2\) as a unique mgs).
  If
  \defcomp{4}{|\{(\P,\C,\C^e) \in \Gamma \mid \C^e \text{ unsolved}\}|}
  then this measure is strictly decreasing when applying the rule in question.
  Once a system has a unique mgs, instantiating its second variables does not change this fact and the other case distinction rules are therefore non-increasing w.r.t. this measure.


\paragraph{Measure 5:  Applicability of Rule REW}

  The next element represents the number of applications of Rule \eqref{rule:rewrite} that are still possible.
  Typically, we consider all the parameters of the rule \eqref{rule:rewrite} (the deduction facts from \(\Solved\), the rewrite rule, etc...) on which the rule would be applied with a most general solutions that does not already corresponds to a deduction fact in \(\USolved\).
  If \(\C^e\) is an extended constraint system we therefore consider \(\setRew(\C^e)\) the set of tuples \((\psi,\ell \rightarrow r, p, \psi_0, \Sigma)\) that satisfy all the application conditions of Rule \eqref{rule:rewrite}, and
  \defcomp{5}
    {\sum_{(\P,\C,\C^e) \in \Gamma} |\setRew(\C^e)|}
  By definition \(|\setRew(\C^e)|\) strictly decreases for at least one \((\P,\C,\C^e) \in \Gamma\) (and non-increasing for the others) when applying Rule \eqref{rule:rewrite}.
  Then let \((\P,\C,\C^e) \in \Gamma\):
  the other case distinction rules \eqref{rule:satisfiable} and \eqref{rule:equality} do not increase \(|\setRew(\C^e)|\).
  Indeed if we consider one of their applications \(\C^e \simpStep{\Sigma'} \CApply{\Sigma'}{\C^e}\),
  we have
  \[\setRew(\CApply{\Sigma'}{\C^e}) = \{(\psi\Sigma',\ell \rightarrow r, p, \psi_0, \Sigma\Sigma') \mid (\psi,\ell \rightarrow r, p, \psi_0, \Sigma) \in \setRew(\C^e)\}\,.\]

  Note however that \(|\setRew(\C^e)|\) may increase by application of Rule \eqref{rule:vector-solve} since \(|\Solved(\C^e)|\) will increase;
  yet the measure is already decreasing by the component \(\compon[3]\).

\paragraph{Measure 6:  Number of unsolved deduction formulas}

  We recall that Rule \eqref{rule:satisfiable}, case \ref{it:rule-sat-hyp} of its application conditions, applies a most general solution to remove the hypotheses of one formula \(\psi \in \USolved(\C^e)\) for some \((\P,\C,\C^e) \in \Gamma\)
  (the formula becomes solved in the positive branch, and is removed by \eqref{rule:Disequation removal 2} in the negative branch).
  This rule application is therefore strictly decreasing w.r.t. the measure
  \defcomp{6}
    {|\{\psi \in \USolved(\C^e) \mid (\P,\C,\C^e) \in \Gamma, \psi \text{ unsolved deduction formula}\}|}
  We only consider deduction formulas \(\psi\) for this component.
  In particular the only rule that may increase this measure (i.e., generate unsolved deduction formulas) is \eqref{rule:rewrite} which is already decreasing w.r.t. the previous component of the measure.

\paragraph{Measure 7:  Applicability of Rule EQ}

  Similarly to the analogue component for Rule \eqref{rule:rewrite}, we now define the next component that bounds the maximal number of possible applications of Rule \eqref{rule:equality}.
  The application conditions stipulate that it can be applied either
  \begin{enumerate}
    \item \label{it:measure-eq-1}
    on two deduction facts of \(\Solved(\C^e_i)\), or
    \item \label{it:measure-eq-2}
    on one deduction fact of \(\Solved(\C^e_i)\) in combination with a construction function symbol.
  \end{enumerate}
  Even if the application conditions also consider a mgs \(\Sigma\), the number of applications of Rule \eqref{rule:equality} will not depend on their number;
  this is intuitively because after applying the rule with one arbitrary mgs \(\Sigma\), the conditions forbid any later applications with identical parameters except \(\Sigma\).
  Formally consider for example the case \ref{it:measure-eq-1} (case \eqref{it:measure-eq-2} follows the same reasoning).
  The rule is applied on two deduction facts \((\xi_1 \dedfact u_1),(\xi_2 \dedfact u_2) \in \Solved(\C^e)\).
  Thus, an equality formula with \(\xi_1\Sigma \eqf \xi_2\Sigma\) as head will be added in \(\USolved(\CApply{\Sigma}{\C^e})\).
  However, in further applications of the rule, the condition that
  ``for all \((\clause{H}{\varphi}) \in \USolved(\CApply{\Sigma}{\C^e})\), \(H \neq (\xi_1 \eqf \xi_2)\)''
  will prevent a new application with the same (up to instantiation of \(\Sigma\)) deductions facts from \(\Solved(\CApply{\Sigma}{\C^e})\).


  We therefore conclude that the rule \eqref{rule:equality} can be applied only once per pair of deduction facts in \(\Solved\) and once per deduction fact in \(\Solved\) and function symbol in \(\sigc\).
  If \(\C^e\) is an extended constraint system we therefore consider \(\setEq(\C^e)\) the set of pairs \((\psi,\psi') \in \Solved(\C^e)^2\) or \((\psi,\ffun) \in \Solved \times \sigc\) that satisfy all the application conditions of the rule \eqref{rule:equality}, and
  \defcomp{7}
    {\sum_{(\P,\C,\C^e) \in \Gamma} |\setEq(\C^e)|}

\paragraph{Measure 8:  Number of unsolved equality formulas}

  We now introduce the analogue of Component 7 for equality formulas, that is,
  \defcomp{8}
    {|\{\psi \in \USolved(\C^e) \mid (\P,\C,\C^e) \in \Gamma, \psi \text{ unsolved equality formula}\}|}
  As before Rule \eqref{rule:satisfiable} makes this measure decrease in the case \ref{it:rule-sat-hyp} of its application conditions.
  On the contrary unsolved equality formulas can be generated by two rules:
  the case distinction rule \eqref{rule:equality} or the vector-simplification rule \eqref{rule:vector-consequence}.

\paragraph{Measure 9: Remaining most general solutions}

  So far, every time we showed that one of the previous element of the measure (strictly) decrease by application of a case distinction rule, we always focused on the positive branches of case-distinction rules.
  The negative branches on the contrary only add recipe disequations to the system, which does not increase any the previous components of the measure \emph{but} strictly decreases the number of most general solutions we can compute for the same instance of the rule.
  For example, if \(\Sigma \in \mgs(\C^e)\) then \(|\mgs(\C^e)| > |\mgs(\C^e[\Eqsnd \wedge \neg\Sigma])|\).
  Hence it suffices to consider the last component:
  \defcomp{9}
    {\left|\left\{\Sigma \mid
      \begin{array}{@{\ }l}
        \text{there exists a case distinction rule applicable} \\
        \text{from \(\C^e\) with parameter \(\Sigma\), \((\P,\C,\C^e) \in \Gamma\)}
      \end{array}\right\}\right|}

\paragraph{Conclusion}
  This gives the termination of the algorithm for computing \(T \in \ptree(P,Q)\).
  We study more precisely the Components 1 to 9 of the measure in Appendix \ref{app:termination} and prove that they can all be bound by an exponential in \(\dagsize{P,Q,\R}\) (with \(\R\) the rewriting system, implicitly including the signature).
  Hence:

  \begin{theorem}[termination for partition trees] \label{thm:termination-ptree}
    For all \(P,Q\) plain processes, Algorithm \ref{alg:ptree} terminates with arguments \(P,Q\).
    Moreover each branch of the resulting tree is generated by applying at most an exponential number (in \(\dagsize{P,Q,\R}\)) of rules, not counting the negative branches of case distinction rules.
  \end{theorem}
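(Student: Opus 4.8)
The plan is to turn the component-by-component analysis of Measures~1--9 into a single well-founded measure and read off both termination and the counting bound from it. I would let $\mu_{gen}(\Gamma) = (\compon[1](\Gamma), \ldots, \compon[9](\Gamma))$, ordered by the lexicographic composition of the orderings on each component. Since every component is a natural number and $\N$ is well-ordered, this lexicographic product is well-founded; it therefore suffices to prove a master lemma stating that every rule application $\S \cup \{\Gamma\} \to \S \cup \{\Gamma_1, \ldots, \Gamma_p\}$ with $p \in \{1,2\}$, whether it comes from an extended symbolic transition, a normalisation or vector-simplification rule, or a case-distinction rule, satisfies $\mu_{gen}(\Gamma) > \mu_{gen}(\Gamma_i)$ for every $i$, under the procedure invariants of Appendix~\ref{app:invariants}.

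The proof of the master lemma is the assembly of the observations already recorded with each measure: symbolic transitions strictly decrease $\compon[1]$; the rules \eqref{rule:vectorbot} and \eqref{rule:vector-split solved} and case~\ref{it:rule-sat-diseq} of \eqref{rule:satisfiable} decrease $\compon[2]$; \eqref{rule:vector-solve} decreases $\compon[3]$; the three cases of \eqref{rule:satisfiable} decrease $\compon[4]$, $\compon[6]$ or $\compon[8]$; \eqref{rule:rewrite} decreases $\compon[5]$; \eqref{rule:equality} decreases $\compon[7]$; and the negative branches of the case-distinction rules decrease $\compon[9]$. The crux of this step is the bookkeeping enabled by the priority ordering: whenever a rule raises a low-priority component it must strictly decrease a higher-priority one (e.g.\ \eqref{rule:vector-solve} may increase $\compon[5]$ but is caught by $\compon[3]$, and \eqref{rule:equality} may create unsolved equality formulas counted by $\compon[8]$ but decreases $\compon[7]$), so the net effect is a lexicographic decrease.

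Termination of Algorithm~\ref{alg:ptree} then follows in two layers. The innermost formula-level simplifications reach a normal form by the convergence established in Section~\ref{sec:first-simplification-rules}, and each remaining normalisation, vector-simplification and case-distinction step strictly decreases $\mu_{gen}$ by the master lemma, so every within-node computation $\applycase(\applysimpl(\cdot))$ terminates. For the recursion of $\nodegen$, I would note that each child is produced after at least one symbolic transition, which strictly decreases $\compon[1]$, while no later simplification or case-distinction rule alters the process multisets and hence leaves $\compon[1]$ untouched; thus $\compon[1]$ strictly decreases along every root-to-leaf path, bounding the tree depth by the initial maximal process size. Together with finite branching---only finitely many symbolic transitions issue from a finite multiset of bounded processes, and each rule yields at most two components---this gives termination of the whole algorithm. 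For the exponential bound I would invoke the component bounds proved in Appendix~\ref{app:termination}, namely that each $\compon[i](\Gamma)$ is at most $2^{q(\dagsize{P,Q,\R})}$ for a fixed polynomial $q$. A strictly decreasing lexicographic chain in a $9$-tuple whose entries lie in $\{0, \ldots, B\}$ has length at most $(B+1)^9$, and since $B$ is exponential so is $(B+1)^9$; hence each branch is generated by an exponential number of positive-branch rule applications. The caveat ``not counting the negative branches'' reflects that these are tracked solely by $\compon[9]$, whose count is controlled separately (Section~\ref{sec:exp-mgs}) and does not enter the bounds on the productive components $\compon[1], \ldots, \compon[8]$.

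The main obstacle is not the lexicographic bookkeeping above, which is largely pre-digested in the Measures~1--9 discussion, but the exponential bounds on the nine components deferred to Appendix~\ref{app:termination}. These reduce to showing that the DAG sizes of all data carried by an extended constraint system---the frame $\Phi$ and the sets $\Df$, $\Solved$ and $\USolved$---remain exponentially bounded throughout the computation, so that the derived sets $\setSDF$, $\setRew$, $\setEq$, $\measureNC$ and the counts of (un)solved formulas inherit exponential bounds. Controlling the growth of $\Solved$ and $\USolved$ under the saturating action of \eqref{rule:rewrite} and \eqref{rule:equality}, while exploiting the uniformity requirement on solutions to force recipes of minimal DAG size, is the delicate part on which the whole exponential bound rests.
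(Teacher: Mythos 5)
Your proposal follows essentially the same route as the paper: the paper's own proof is exactly the lexicographic composition of Measures~1--9, shown strictly decreasing rule by rule just as you assemble it (symbolic transitions caught by \(\compon[1]\), negative branches absorbed solely by \(\compon[9]\) and hence excluded from the count), combined with the exponential bounds on Components~1--8 deferred to Appendix~\ref{app:termination}. Your explicit \((B+1)^9\) chain-length bound and the tree-depth argument via \(\compon[1]\) merely spell out what the paper leaves implicit, so there is no substantive difference in approach.
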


\subsection{Bounding the size of most general solutions}
\label{sec:exp-mgs}

\subsubsection{Overall approach}
\paragraph{Objective}
  We now focus on the theoretical complexity of the decision problems \TraceEquiv, \Bisimilarity, \Simulation,\ldots
  Our goal for now is to prove that they are all decidable in co\nexp and the core argument to achieve this is to prove the theorem:

  \begin{theorem}[size of most general solutions] \label{thm:mgs-size}
    If \(T\) is a partition tree of \(P,Q\) (w.r.t. a rewriting system \(\R\)) generated by Algorithm \ref{alg:ptree}, then for all nodes \(n\) of \(T\), \(\dagsize{\mgs(n)}\) is exponential in \(\dagsize{P,Q,\R}\).
  \end{theorem}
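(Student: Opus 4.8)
The plan is to reduce the statement to a controlled accumulation argument along a single branch of the tree. First I would recall that, by Proposition~\ref{prop:correct-mgs-solved} together with the structural invariant ensuring a common second-order structure across a component, every solved system in a node $n$ shares the same second-order mgu, so that $\mgs(n) = \mu^2(\Gamma(n)) = \mgu(\Eqsnd)$. Since a syntactic (typed) unifier has DAG size polynomial in the DAG size of its input equation set, and since the invariants guarantee that $\im(\mu^2)$ and all recipes occurring in $\Eqsnd$ are consequences of $\Solved \cup \Df$ (in the sense of Definition~\ref{def:consequence}), the problem reduces to bounding $\dagsize{\Eqsnd}$, which in turn reduces to bounding the DAG size of the knowledge base $\Solved$, of $\Df$, and of the constructor contexts recorded in the second-order equations.

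The central idea is to work throughout in a \emph{single shared DAG} that accumulates all first- and second-order terms produced along the branch from the root to $n$, and to show that each application of a constraint-solving rule adds only polynomially-many (in $\dagsize{\R}$) fresh nodes to it. Concretely, knowledge-base entries are created only by outputs (Rule~\eqref{rule:e-out} followed by~\eqref{rule:vector-solve}) and by saturation (Rule~\eqref{rule:rewrite}); in both cases the recipe of a new entry is a context whose size is bounded by that of a rewriting skeleton (hence by $\dagsize{\R}$) placed on top of recipes of \emph{already present} entries and deduction-fact variables. Likewise, the second-order equations appended to $\Eqsnd$ are exactly the recorded mgs' $\Sigma_{|\vars[2](\C^e)}$ produced by the application operation of Definition~\ref{def:application_mgs_csys}, whose images are again built from existing consequential subterms. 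Referencing an existing subterm contributes no new DAG node, so each rule application enlarges the shared DAG by $O(\mathrm{poly}(\dagsize{\R}))$ nodes.

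I would then invoke Theorem~\ref{thm:termination-ptree}: the branch leading to $n$ is generated by at most an exponential number (in $\dagsize{P,Q,\R}$) of rule applications. Combining this with the per-step polynomial increment gives a shared DAG of exponential size, inside which $\mu^2 = \mgu(\Eqsnd)$ lives; hence $\dagsize{\mgs(n)}$ is exponential in $\dagsize{P,Q,\R}$, as required. The uniformity requirement of Definition~\ref{def:ext-sol} is what makes the per-step bound sound: it forces any two consequential subterms deducing the same first-order term to be syntactically identified, which is precisely what guarantees that the new recipes genuinely \emph{reuse} rather than \emph{copy} earlier structure, so that no duplicate subterms proliferate across the exponentially many steps.

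The hard part will be making this sharing precise in the DAG model. A naive tree-size analysis of saturation gives at least a doubly-exponential blow-up, since Rule~\eqref{rule:rewrite} stacks skeleton contexts on top of recipes that may themselves already be exponential; the entire argument must therefore be conducted over $\dagsize{\cdot}$, and one must verify, using the procedure invariants (in particular that knowledge-base terms are subterms of the frame, recall the analysis of the measure components in Section~\ref{sec:termination}, and that solutions are uniform), that each $\CApply{\Sigma}{\C^e}$ introduces only boundedly-many fresh consequential subterms. Establishing this step-wise DAG-increment lemma, and checking that $\Eqsnd$ never accumulates hidden contradictions that would enlarge its mgu, is where the bulk of the technical work lies, and is the part I would defer to the appendix.
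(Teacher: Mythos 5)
Your overall skeleton coincides with the paper's: reduce \(\dagsize{\mgs(n)}\) to the accumulated second-order term structure (\(\mgs(n) = \mu^2(\Gamma(n))\), so it suffices to track \(\terms[2]\)), establish a polynomial per-step increment, and multiply by the exponential step count from Theorem~\ref{thm:termination-ptree}; you also correctly identify uniformity as the ingredient that keeps recipes shared. The gap is that the step you defer to the appendix is the entire content of the theorem, and the sketch you offer for it does not go through as stated. You claim that each application of a constraint-solving rule adds only polynomially many (in \(\dagsize{\R}\)) fresh DAG nodes by inspecting the syntactic shape of the rules, but for the case-distinction rules \eqref{rule:satisfiable}, \eqref{rule:equality} and \eqref{rule:rewrite} the object applied is a most general solution \(\Sigma\), and the size of \(\Sigma\) is precisely the quantity the theorem is bounding --- a per-rule syntactic inspection is therefore circular. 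Within a single mgs computation, every application of \eqref{rule:cons} introduces a fresh constructor context (genuinely new nodes, not references to existing subterms), and nothing in your argument bounds how many times \eqref{rule:cons} fires. Uniformity alone does not do this: it prevents two distinct consequential recipes from deducing the same first-order term, but it does not bound the number of distinct first-order terms the solution must deduce.

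The paper closes exactly this hole with an amortization absent from your proposal. Proposition~\ref{prop:evol-mgs} charges the growth of \(\terms[2]\) against the strict decrease of the unused-terms measure \(\measureNC\): by Proposition~\ref{prop:mgs-decrease}, each \eqref{rule:res}/\eqref{rule:cons} step consumes at least one unused first-order term, while \eqref{rule:conseq} --- the uniformity rule, applied with priority --- adds nothing, so one full mgs application adds at most \(|\sig| \times |\measureNC(\C^e)|\) nodes. The polynomial bound on \(|\measureNC(\C^e)|\) is then obtained by a pivot you never make: Proposition~\ref{prop:approx-unused} transfers the bound from the extended system \(\C^e\) to the plain symbolic system \(\C\) (which the solver's rules leave untouched), and the subsequent corollary bounds \(\dagsize{\Phi(\C)\mu^1(\C), \mu^1(\C)}\) by \(9\dagsize{P,\R}^2\) through an analysis of mgu's modulo a subterm convergent theory along a symbolic trace. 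Without this chain --- growth charged to \(\measureNC\), \(\measureNC\) bounded through \(\C\), and the first-order DAG size of \(\C\) bounded polynomially using subterm convergence --- your ``step-wise DAG-increment lemma'' has no proof; note in particular that the subterm-convergence hypothesis, which the bound genuinely needs at this point, never enters your argument at all.
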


  We will detail in Section \ref{sec:witness-complexity} how to derive a co\nexp decision procedure for equivalence properties by using this result.
  To bound the size of most general solutions we rely on the results previously established in Section \ref{sec:termination-mgs}:
  in the final partition tree \(\mgs(n) = \mu^2(\Gamma(n))\) and it therefore suffices to prove that for all nodes, \(\dagsize{\mu^2(\Gamma(n))}\) is exponential in \(\dagsize{P,Q,\R}\).
  However we will instead study the easier-to-track bound:
  \[|\terms[2](\Gamma(n))| \geqslant |\subterms(\im(\mu^2(\Gamma(n))))| = \dagsize{\mu^2(\Gamma(n))}\]

\paragraph{Evolution of second-order terms}
  Let us now consider each constraint-solving rule and determine how \(\terms[2]\) evolves along the components along a branch of the partition tree.
  \begin{enumerate}
    \item \emph{Symbolic rules}:
    only Rules \eqref{rule:e-in} and \eqref{rule:e-out} increase the size of \(\terms[2]\) by adding at most two new second order variables.
    \item \emph{Simplification, normalisation, vector-simplification rules}:
    only Rule~\eqref{rule:vector-solve} may increase the size of \(\terms[2](\Gamma)\).
    Indeed, it transfers a deduction fact from \(\USolved\) in \(\Solved\) for each extended constraint systems in the current component \(\Gamma\).
    \item \emph{Case distinction rules}:
    the positive branches of these rules increase the size of \(\terms[2]\) whereas the negative branches leave it unchanged.
  \end{enumerate}
  It therefore suffices to prove the following result to obtain Theorem \ref{thm:mgs-size}:

  \begin{proposition}[evolution of second-order terms in partition trees] \label{prop:mgs-size-step}
    If \(\S \cup \{\Gamma\} \rightarrow \S \cup \S'\) where \(\S' = \{\Gamma'\}\) is obtained by Rule \eqref{rule:vector-solve} or \(\Gamma' \in \S'\) is the positive branch of a case distinction rule,
    then \(\dagsize{\terms[2](\Gamma')} - \dagsize{\terms[2](\Gamma)}\) is bounded by a polynomial in \(\dagsize{P,Q,\R}\).
  \end{proposition}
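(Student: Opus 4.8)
The plan is to reduce the statement to a bound on the DAG size of the recipes \emph{freshly introduced} by each rule, and then to control that quantity using the structure of the recipes the procedure manipulates. The starting point is the elementary fact that substitution can only create new subterms inside the substituted image: for any set of second-order terms $T$ and any substitution $\theta$, $|\subterms[2](T\theta)| \leqslant |\subterms[2](T)| + |\subterms[2](\im(\theta))|$, the injective-like map $s \mapsto s\theta$ accounting for the first summand. Since $\terms[2](\Gamma) = \subterms[2](\im(\mu^2),\Solved,\Df)$ ignores $\USolved$ and is insensitive to first-order unification, I would first check that, up to the bounded renaming introduced by recomputing $\mu^2 = \mgu(\Eqsnd)$ (which only adds fresh variables to match arities, recall Section~\ref{sec:unification}), $\terms[2](\Gamma')$ is contained in $\subterms[2](\terms[2](\Gamma)\,\theta) \cup \subterms[2](\im(\theta))$, where $\theta$ is the substitution applied by the rule (recipes in $\Df'$ being mere variables). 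This yields $|\terms[2](\Gamma')| - |\terms[2](\Gamma)| \leqslant |\subterms[2](\im(\theta))| + O(1)$, so everything reduces to bounding $\dagsize{\theta}$ by a polynomial in $\dagsize{P,Q,\R}$.

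For Rule \eqref{rule:vector-solve} no genuine substitution is applied: the component transfers a deduction fact $\xi \dedfact u_S$ from $\USolved$ to $\Solved$, and since $\xi$ was not previously counted in $\terms[2]$ the increment is exactly $|\subterms[2](\xi) \smallsetminus \terms[2](\Gamma)|$. Here I would invoke the invariants on the shape of the formulas stored in $\USolved$ (part of $\PredWellFormed$, formalised in the appendix): such a head recipe is either a single axiom introduced by \eqref{rule:e-out}, or a skeleton context produced by \eqref{rule:rewrite}, i.e. $\xi \in \termset(\sig \cup \Xsndi[=]{k})$ whose non-variable part is an instance of the recipe of some $(\xi,t,D) \in \Skel{\ell}{p}$ with $\getpos{\xi}{p}$ replaced by an entry $\xi_0$ of $\Solved$. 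The skeleton part has at most $\tsize{\ell} \leqslant \dagsize{\R}$ symbols, and the only non-skeleton subterms of $\xi$ are $\xi_0$ and the images filling the remaining positions; by construction these are consequences of $\Solved \cup \Df$, so their consequential subterms already lie in $\terms[2](\Gamma)$. Hence the genuinely new subterms number $O(\dagsize{\R})$.

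The positive branches of \eqref{rule:satisfiable}, \eqref{rule:equality} and \eqref{rule:rewrite} are the crux, since the applied $\theta$ is a most general solution $\Sigma$ and $\dagsize{\Sigma}$ must be bounded \emph{independently of the possibly exponential current size of the component} (the equality or deduction formulas these rules append to $\USolved$ do not enter $\terms[2]$ and are accounted for only when later transferred by \eqref{rule:vector-solve}). The idea is to decompose $\Sigma$ along the constraint-solving rules that produce it: each \eqref{rule:res} step maps a variable to an existing recipe of $\Solved$, whose subterms are by definition already in $\terms[2](\Gamma)$ and thus contribute nothing new, while each \eqref{rule:cons} step adds exactly one constructor node $\ffun(X_1,\dots,X_n)$ with fresh leaves, and \eqref{rule:conseq} only unifies. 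Thus $\dagsize{\Sigma}$ is, modulo the reused $\Solved$-recipes, bounded by the number of \eqref{rule:cons} applications producing $\Sigma$. The main obstacle is to show this number stays polynomial in the fixed input and does not telescope with the running system: I would argue, relying on the priority of \eqref{rule:satisfiable} (which drives systems into solved form before \eqref{rule:equality} and \eqref{rule:rewrite} apply, so the bulk of the deduction facts already admit $\mu^2$ as their unique mgs) together with the structural invariant $\PredStruct$, that such a $\Sigma$ need only solve the constantly-many hypotheses $\Df(\psi_1)$ freshly attached by the triggering formula, each deducing a first-order term that is — by $\PredWellFormed$ and subterm convergence — a subterm of an input-derived term, so that the constructor context mirrored by \eqref{rule:cons} has size $O(\dagsize{P,Q,\R})$. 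Assembling the three cases gives the claimed polynomial increment; summing it over the exponentially many rule applications of Theorem~\ref{thm:termination-ptree} then yields Theorem~\ref{thm:mgs-size}.
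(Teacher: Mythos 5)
Your overall strategy---charging the growth of \(\terms[2]\) to the image of the applied substitution, bounding the \eqref{rule:vector-solve} increment by the skeleton size \(O(\dagsize{\R})\), and observing that among the mgs constraint-solving rules only \eqref{rule:cons} creates new second-order structure while \eqref{rule:res} reuses \(\Solved\)-recipes and \eqref{rule:conseq} merely unifies---is exactly the right skeleton, and it matches the paper's. But the crux step, bounding the number of \eqref{rule:cons} applications inside one mgs computation, contains a genuine gap, and both of your supporting claims fail. First, the claim that \(\Sigma\) ``need only solve the constantly-many hypotheses \(\Df(\psi_1)\) freshly attached by the triggering formula'' is false for Rule \eqref{rule:satisfiable}, case~\ref{it:rule-sat-mgs}: after a symbolic transition, the first-order unifier \(\sigma \in \mguR(\cdot)\) added to \(\Eqfst\) is propagated through the \emph{entire} system by \eqref{rule:unifEqfst_simpl}, so arbitrarily many deduction facts of \(\Df\) can simultaneously acquire non-variable right-hand sides; the facts to be solved are not constantly many. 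Second, the claim that each such first-order term is ``a subterm of an input-derived term'' ignores that \(\mguR\) introduces genuinely new structure (e.g.\ \(\mguR(\adec(x,y) \eqs \adec(x,y))\) maps \(x\) to \(\aenc(x',x_r,\pk(y'))\)) and that these unifiers \emph{compose} along the trace; keeping \(\dagsize{\Phi(\C)\mu^1,\mu^1}\) polynomial is not automatic and requires the explicit bookkeeping the paper does in the corollary closing Section~\ref{sec:exp-mgs} (constructing the equation set \(S\) along the trace, items \ref{it:poly-step-1}--\ref{it:poly-step-3}, and the bound \(|\subterms(\sigma)| \leqslant 2|\subterms(S)|\,\dagsize{\R}\) for subterm convergent systems).

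The paper closes precisely this hole with an amortised potential argument that your proof, made rigorous, would have to reconstruct: the measure \(\measureNC(\C^e)\) of first-order terms not yet deduced by any recipe in \(\recipes(\C^e)\) is strictly decreased by every \eqref{rule:cons} and \eqref{rule:res} step and never increased by \eqref{rule:conseq} (Proposition~\ref{prop:mgs-decrease}, where the priority of \eqref{rule:conseq}---which you correctly sensed matters---is what prevents re-deriving the same term twice); Proposition~\ref{prop:evol-mgs} then converts this into \(|\terms[2](\CApply{\Sigma}{\C^e})| \leqslant |\terms[2](\C^e)| + |\sig| \times (|\measureNC(\C^e)| - |\measureNC(\CApply{\Sigma}{\C^e})|)\); Proposition~\ref{prop:approx-unused} transfers the bound to the plain system via \(|\measureNC(\C^e)| \leqslant |\measureNC(\C)| \leqslant \dagsize{\Phi(\C)\mu^1(\C),\mu^1(\C)}\), the plain system being untouched by all non-symbolic rules (with the extension to \eqref{rule:equality}, \eqref{rule:rewrite} and \eqref{rule:vector-solve} handled through the set \(\CompatibleSubs\) of substitutions whose images are consequences of \(\Solved\)); and only then does the trace-equation argument deliver the polynomial bound. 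So the gap is localised but essential: without the measure \(\measureNC\) (or an equivalent potential) you have no bound on \(\dagsize{\Sigma}\) that is uniform in the current, possibly exponentially grown, component.
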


  Indeed we recall that by Theorem \ref{thm:termination-ptree}, we already know that each branch of the partition tree is obtained after applying at most an exponential number of rules (negative branches of case distinction rules excluded).
  Hence we obtain the expected exponential bound on \(\terms[2]\) when combined with the above proposition.
  The remaining of Section \ref{sec:exp-mgs} is dedicated to its proof.

\subsubsection{Bounding the increase of the second-order terms}
\paragraph{When applying a mgs}
  We first study the growth of \(\terms[2](\C^e)\) when applying a mgs to \(\C^e\), which means proving Theorem \ref{thm:mgs-size} in the case of Rule \eqref{rule:satisfiable}.
  Similarly to our previous results on most general solutions (Section \ref{sec:termination-mgs}), our bounds depend on \(\measureNC(\C^e)\) the number of first-order terms of \(\C^e\) that are not already used in the solution, i.e., in \(\mu^2\).
  We also recall that by Proposition \ref{prop:mgs-decrease}, this measure is non-increasing when applying any of the mgs simplification and constraint-solving rules, and is even strictly decreasing in the case of Rule \eqref{rule:res} and \eqref{rule:cons}.
  Let us now show that its growth is actually inverted compared to \(\terms[2]\), that is, how much \(\terms[2]\) increases can be bounded by how much \(\measureNC\) decreases:

  \begin{proposition}[evolution of second-order terms when applying mgs]
    \label{prop:evol-mgs}
    For all extended processes \(\C^e\) that verify the invariants \(\PredWellFormed(\C^e)\) and \(\PredCorrectFormula(\C^e)\), we have
    \[\forall \Sigma \in \mgs(\C^e),\
      |\terms[2](\CApply{\Sigma}{\C^e})| \leq |\terms[2](\C^e)| + |\sig| \times (|\measureNC(\C^e)| - |\measureNC(\CApply{\Sigma}{\C^e})|)\]
  \end{proposition}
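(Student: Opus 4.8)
The plan is to exploit the way a most general solution is actually produced by the constraint solver, and to track the two quantities $|\terms[2]|$ and $|\measureNC|$ jointly along that computation. By the partial correctness of the mgs computation (Section~\ref{sec:mgs-proc}), any $\Sigma \in \mgs(\C^e)$ arises from a normalised derivation $\C^e \SimpStep{\Sigma} \C^{e\prime}$ with $\C^{e\prime}$ solved, each step being an application of \eqref{rule:conseq}, \eqref{rule:res} or \eqref{rule:cons} (interleaved with simplification rules, which do not affect either measure). Moreover the one-shot application $\CApply{\Sigma}{\C^e}$ agrees with $\C^{e\prime}$ as far as its second-order terms are concerned, so that $|\terms[2](\CApply{\Sigma}{\C^e})| = |\terms[2](\C^{e\prime})|$. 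It therefore suffices to bound the per-step increase of $|\terms[2]|$ in terms of the per-step decrease of $|\measureNC|$, and to telescope over the derivation.

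First I would analyse each constraint-solving rule. For \eqref{rule:conseq} the applied substitution is $\mgu(\xi \eqs \zeta)$ for two recipes $\xi,\zeta$ that are already consequence of $\Solved \cup \Df$, hence whose second-order subterms already lie in $\terms[2]$; unification merely identifies such terms, so $|\terms[2]|$ does not increase, matching the fact that $|\measureNC|$ is non-increasing here by Proposition~\ref{prop:mgs-decrease}. For \eqref{rule:res} the variable $X$ is instantiated by an entry $\xi$ of $\Solved$, whose subterms are already counted in $\terms[2]$ through $\Solved$, so $|\terms[2]|$ does not grow while $|\measureNC|$ strictly decreases. For \eqref{rule:cons} the variable $X$ is mapped to $\ffun(X_1,\ldots,X_n)$ with $n = \arity{\ffun}$ fresh second-order variables: $X$ leaves the subterm set while at most $n+1$ new terms (namely $\ffun(X_1,\ldots,X_n)$ and the $X_i$) enter it, so the net increase is at most $n \leqslant |\sig|$, and again $|\measureNC|$ drops by at least one.

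Combining these observations, every step that strictly decreases $|\measureNC|$ raises $|\terms[2]|$ by at most $|\sig|$, and every step leaving $|\measureNC|$ unchanged leaves $|\terms[2]|$ non-increasing. Telescoping over the derivation then gives
\[|\terms[2](\C^{e\prime})| - |\terms[2](\C^e)| \leqslant |\sig| \cdot \bigl(|\measureNC(\C^e)| - |\measureNC(\C^{e\prime})|\bigr),\]
which is exactly the claimed inequality once transported back to $\CApply{\Sigma}{\C^e}$.

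The main obstacle will be the two bookkeeping points that make this telescoping rigorous. First, one must check that \eqref{rule:conseq} genuinely never enlarges the set of second-order subterms: the delicate issue is that a second-order mgu may introduce fresh variables to respect types (recall Section~\ref{sec:unification}), and one has to verify that these are renamings of already-present variables to lower types rather than new subterms. Second, and more importantly, I would need to justify $|\terms[2](\CApply{\Sigma}{\C^e})| = |\terms[2](\C^{e\prime})|$, relating the global application of Definition~\ref{def:application_mgs_csys} to the incremental applications performed along the derivation; here the invariants $\PredWellFormed(\C^e)$ and $\PredCorrectFormula(\C^e)$ are what guarantee that the linking equations and the image of $\mu^2$ produced by $\CApply{\cdot}{\cdot}$ introduce no second-order terms beyond those already tracked, so that the step-by-step count faithfully bounds the quantity appearing in the statement.
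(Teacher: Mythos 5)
Your proposal is correct and takes essentially the same route as the paper's own proof: both reduce the claim to a per-step bound along the mgs derivation, case-analyse the rules (simplification rules and \eqref{rule:unifEqfst_simpl} leave \(\terms[2]\) untouched, \eqref{rule:conseq} and \eqref{rule:res} only unify terms already in \(\terms[2]\) while \(\measureNC\) is non-increasing, and \eqref{rule:cons} adds at most \(n \leqslant |\sig|\) fresh terms exactly when Proposition~\ref{prop:mgs-decrease} guarantees a strict drop of \(|\measureNC|\)), and then telescope. The two bookkeeping points you flag --- type-respecting fresh variables in second-order mgus being mere renamings, and identifying \(\terms[2](\CApply{\Sigma}{\C^e})\) with the value at the end of the derivation --- are precisely what the paper leaves implicit in its opening reduction ``it suffices to prove this property when replacing \(\CApply{\Sigma}{\C^e}\) by \(\C^{e\prime}\)'', so your extra care only makes the argument more explicit, not different.
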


  \begin{proof}
    We assume \(|\sig| > 0\) by convention.
    It suffices to prove this property when replacing \(\CApply{\Sigma}{\C^e}\) by \(\C^{e\prime}\) for \(\C^e \rightarrow \C^{e\prime} \neq \bot\) obtained by a mgs simplification or constraint-solving rules.
    We perform a case analysis on the rule in question.

    \caseitem{\emph{case 1: simplification rule on formulas}}

      The simplification rules on formulas only affect first-order terms and second-order disequations and we therefore have \(\terms[2](\C^{e\prime}) = \terms[2](\C^e)\).

    \caseitem{\emph{case 2: mgs simplification rule}}

      We only need to consider Rule \eqref{rule:unifEqfst_simpl}.
      Since it only affects first-order terms, the reasoning is identical to the previous case.

    \caseitem{\emph{case 3: mgs constraint-solving rule}}

      Rules \eqref{rule:conseq} and \eqref{rule:res} apply a second-order substitution \(\Sigma = \mgu(\xi \eqs \zeta)\) to \(\C^e\) for some \(\xi,\zeta \in \terms[2](\C^e)\).
      In particular we deduce that \(|\terms[2](\C^{e\prime})| \leqslant |\terms[2](\C^e)|\) and the conclusion thus follows from the fact that \(\measureNC(\C^e) - \measureNC(\C^{e\prime}) \geqslant 0\) by Proposition \ref{prop:mgs-decrease}.
      Finally the only rule that increases \(\terms[2](\C^e)\) is the last one, \eqref{rule:cons}, that generates \(n\) fresh second-order variables for some constructor symbol \(\ffun/n\).
      In particular \(|\terms[2](\C^{e\prime})| \leqslant |\terms[2](\C^e)| + n\), hence the result since \(\measureNC(\C^e) - \measureNC(\C^{e\prime}) > 0\) by Proposition \ref{prop:mgs-decrease}.
  \end{proof}

  In particular we obtain Proposition \ref{prop:mgs-size-step} for Rule \eqref{rule:satisfiable} case \ref{it:rule-sat-mgs}, provided we manage to prove that \(\measureNC(\C^e)\) is bounded by a polynomial.
  We explain in Appendix \ref{app:termination} how to extend the argument to Rules \eqref{rule:equality}, \eqref{rule:rewrite} and \eqref{rule:vector-solve}.

\paragraph{Bound of unused terms}
  To conclude let us establish the polynomial bound on \(|\measureNC(\C^e)|\).
  In order to do so we explore the relation between \(\C\) and \(\C^e\) in \((\P,\C,\C^e) \in \Gamma\).
  Intuitively \(\measureNC(\C^e)\) always has less elements then \(\measureNC(\C)\) because
  \begin{enumerate}
    \item the symbolic rules always add the same constraints to \(\C\) and \(\C^e\), ensuring that \(\measureNC(\C^e)\) increases at most as much as \(\measureNC(\C)\) by these rules
    \item the other rules leave \(\C\) untouched and do not make \(\measureNC(\C^e)\) increase.
  \end{enumerate}

  \begin{proposition}[approximation of unused terms] \label{prop:approx-unused}
    For all \((\P,\C,\C^e) \in \Gamma\), 
    \[|\measureNC(\C^e)| \leqslant \dagsize{\Phi(\C)\mu^1(\C), \mu^1(\C)}\]
  \end{proposition}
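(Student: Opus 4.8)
The plan is to deduce the stated cardinality bound from the set inclusion
\[
  \measureNC(\C^e) \subseteq \subterms[1](\im(\Phi(\C)\mu^1(\C))) \cup \subterms[1](\im(\mu^1(\C)))\,.
\]
Indeed, the number of distinct first-order subterms on the right-hand side is exactly $\dagsize{\Phi(\C)\mu^1(\C), \mu^1(\C)}$, so this inclusion immediately yields the inequality. I would first record the degenerate reading of a plain constraint system $\C$ as an extended system with empty knowledge base: then $\recipes(\C) \smallsetminus \X[2] = \emptyset$, the consequence condition defining $\measureNC$ becomes vacuous, and $\terms[1](\C)$ reduces to the first-order subterms of $\im(\Phi(\C)\mu^1(\C))$ and $\im(\mu^1(\C))$. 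Hence $\measureNC(\C)$ consists exactly of the non-variable terms among these subterms, so $|\measureNC(\C)| \leqslant \dagsize{\Phi(\C)\mu^1(\C), \mu^1(\C)}$, and it suffices to establish $\measureNC(\C^e) \subseteq \measureNC(\C)$ — matching the informal claim that the extended measure never exceeds the plain one.

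I would prove this inclusion by induction on the sequence of constraint-solving rules producing the extended symbolic process $(\P,\C,\C^e)$ from the root, maintaining as invariant that every non-variable $t \in \terms[1](\C^e)$ is either a consequence of $\Solved(\C^e)\mu^1 \cup \Df(\C^e)\mu^1$ through some recipe of $\recipes(\C^e) \smallsetminus \X[2]$, or a first-order subterm of $\Phi(\C)\mu^1(\C)$ or $\mu^1(\C)$. At the root the knowledge base is trivial and the first-order components of $\C$ and $\C^e$ coincide, so the invariant holds. For the symbolic rules \eqref{rule:e-in} and \eqref{rule:e-out}, the key observation is that they add the very same first-order constraints (input binders, output terms, unifiers) to both $\C$ and $\C^e$; thus $\Phi(\C) = \Phi(\C^e)$, the equations of $\Eqfst$ grow identically, and any newly created first-order term appears symmetrically on both sides, preserving the invariant.

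The rules acting only on $\C^e$ leave $\C$ untouched, so it is enough to check they create no non-consequence term outside $\subterms[1](\Phi(\C)\mu^1(\C)) \cup \subterms[1](\mu^1(\C))$. The simplification, normalisation and formula rules act on $\Eqsnd$, $\USolved$ and disequations, or propagate first-order unifiers already present, hence do not enlarge the relevant set of non-consequence terms. Rule \eqref{rule:vector-solve} transfers a solved deduction fact into $\Solved(\C^e)$ whose right-hand side is, by the invariant $\PredWellFormed$, a subterm of the frame: this only turns terms into consequences and cannot violate the invariant. For the positive branch of a case-distinction rule (\eqref{rule:satisfiable}, \eqref{rule:equality}, \eqref{rule:rewrite}) one applies a most general solution through $\CApply{\Sigma}{\C^e}$, which adds linking equations $u \eqs v$ to $\Eqfst(\C^e)$ where $v$ is a consequence of $\Solved(\C^e) \cup \Df(\C^e)$; consequently $\mu^1(\C^e)$ is refined, but every first-order subterm newly introduced by this refinement originates from such a consequence $v$ and is therefore itself a consequence built by constructors over knowledge-base entries, so it is filtered out of $\measureNC(\C^e)$.

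I expect the main obstacle to be precisely this last case: controlling how the refinement of $\mu^1(\C^e)$ by the linking equations interacts with the consequence filter. The delicate point is that a subterm of a consequence term need not itself be a consequence, so one cannot simply argue that all new subterms disappear from the measure; instead I would exploit the structure of $\conseq$ (a constructor context over deduction facts of $\Solved \cup \Df$, whose right-hand sides are frame subterms by $\PredWellFormed$) together with the $\Solved$-basis and uniformity properties of solutions, which guarantee that the recipes witnessing these consequences indeed lie in $\recipes(\C^e) \smallsetminus \X[2]$. Showing that every genuinely new, non-consequence first-order term must already be a subterm of $\Phi(\C)\mu^1(\C)$ or $\mu^1(\C)$ is where the bulk of the bookkeeping — and the reliance on the procedure invariants of Appendix~\ref{app:invariants} — will be concentrated.
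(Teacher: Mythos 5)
Your overall architecture is the paper's: the trivial bound \(|\measureNC(\C)| \leqslant |\terms[1](\C)| = \dagsize{\Phi(\C)\mu^1(\C),\mu^1(\C)}\) via the degenerate reading of \(\C\) as an extended system, followed by a rule-by-rule comparison of the extended measure with the plain one. But there is a genuine gap in your very first reduction: the set inclusions you rely on --- \(\measureNC(\C^e) \subseteq \subterms[1](\im(\Phi(\C)\mu^1(\C))) \cup \subterms[1](\im(\mu^1(\C)))\), equivalently \(\measureNC(\C^e) \subseteq \measureNC(\C)\) --- are false in general, and the invariant you propose to maintain (``every non-variable \(t \in \terms[1](\C^e)\) is either a consequence or a subterm of \(\Phi(\C)\mu^1(\C)\) or \(\mu^1(\C)\)'') breaks at the first rule that refines \(\mu^1(\C^e)\) strictly beyond \(\mu^1(\C)\). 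Concretely, take \(\Phi(\C) = \{\ax_1 \mapsto \hfun(\pair{x,k})\}\) with \(k\) private and non-deducible and \(\Df = \{X \dedfact x\}\), and apply \eqref{rule:satisfiable} with a mgs sending \(X\) to a constant \(a\) (e.g.\ case 3 of its application conditions, triggered by a disequation \(x \neqs a\)): the linking equation \(x \eqs a\) makes \(\mu^1(\C^e) = \{x \mapsto a\}\) while \(\mu^1(\C)\) stays the identity. Then \(\pair{a,k} \in \terms[1](\C^e)\) is not a consequence --- it is only a \emph{subterm} of the consequence \(\hfun(\pair{a,k})\), and your own closing caveat that subterms of consequences need not be consequences bites exactly here --- so \(\pair{a,k} \in \measureNC(\C^e)\); yet it is a subterm of neither \(\Phi(\C)\mu^1(\C)\) nor \(\mu^1(\C)\), whose subterms are \(\hfun(\pair{x,k})\), \(\pair{x,k}\), \(x\), \(k\). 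No amount of bookkeeping rescues the membership claim, because the offending terms are genuinely new instances.

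What does hold, and what the paper proves, is only the cardinality inequality \(|\measureNC(\C^e)| \leqslant |\measureNC(\C)|\), and the mechanism is a counting argument rather than containment. Every rule refining the extended system applies a substitution whose image is built from \(\Solved(\C^e) \cup \Df(\C^e)\) (the class \(\CompatibleSubs(\C^e)\), of which mgs' are instances), and the instantiation map \(t \mapsto t\sigma\) sends the old \(\terms[1]\) onto the new one while preserving consequence-hood by Proposition \ref{prop:trans-conseq}; hence every non-consequence of the new system is the image of a non-consequence of the old one and the count cannot increase --- even though, as in the example above, the image \(\pair{a,k}\) of \(\pair{x,k}\) escapes the fixed subterm set. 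For the symbolic rules one then compares \emph{increments} on the two sides, \(|\measureNC(\C^{e\prime})| - |\measureNC(\C^e)| \leqslant |\measureNC(\C')| - |\measureNC(\C)|\), exploiting that identical constraints are added to \(\C\) and \(\C^e\), instead of asserting memberships in a set frozen at \(\mu^1(\C)\). Reworking your induction with this surjection-based invariant in place of the set inclusion turns your plan into the paper's proof; as written, the plan fails at its announced reduction step.
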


  \begin{proof}
    Considering \(\C\) instead of \(\C^e\), we have the trivial approximation
    \[|\measureNC(\C)| \leqslant |\terms[1](\C)| = \subterms(\Phi(\C)\mu^1(\C), \mu^1(\C)) = \dagsize{\Phi(\C)\mu^1(\C), \mu^1(\C)}\,.\]
    It therefore suffices to prove that \(|\measureNC(\C^e)| \leqslant |\measureNC(\C)|\).
    For that we show that the inequality \(|\measureNC(\C^e)| \leqslant |\measureNC(\C)|\) is preserved when applying any of the constraint-solving rules.

    \caseitem{\emph{case 1: symbolic rules}}

      These rules add the same constraints to \(\C^e\) and \(\C\) (up to an additional deduction fact added to \(\USolved(\C^e)\) in the case of Rule \eqref{rule:e-out}, but this does not affect \(\measureNC(\C^e)\)).
      In particular since \(\Eqsnd(\C) = \Solved(\C) = \emptyset\), if we consider an instance \((\P,\C,\C^e) \sstep{\alpha} (\P',\C',\C^{e\prime})\) of a symbolic rule we therefore have
      \[|\measureNC(\C^{e\prime})| - |\measureNC(\C^e)| \leqslant |\measureNC(\C')| - |\measureNC(\C)|\]
      which gives the expected result.

    \caseitem{\emph{case 2: simplification, normalisation, vector-simplification rules}}

      By definition these rules only affect \(\C^e\) and leave \(\C\) untouched, hence the conclusion since these rules do not increase \(\measureNC(\C^e)\).

    \caseitem{\emph{case 3: case distinction rules}}

      Let us consider \(\CompatibleSubs(\C^e)\) the set of substitutions \(\Sigma\) such that the notation \(\CApply{\Sigma}{\C^e}\) is well defined, that is, such that
      \begin{enumerate}
        \item if \(\dom(\Sigma) \subseteq \vars[2](\Df(\C^e))\)
        \item for all \(X \in \dom(\Sigma)\), there exists \(t\) such that \((X\Sigma,t) \in \conseq(\Solved(\C^e) \cup \Df')\) where
        \(\Df' = \{Y \dedfact u \in \Df(\C^e) \mid Y \notin \dom(\Sigma)\} \cup D_\freshlab\) with
        \[D_\freshlab = \{Y \dedfact y \mid Y \in \vars[2](\im(\Sigma_{|\vars[2](\C^e)})) \smallsetminus \vars[2](\C^e), y \text{ fresh}\}\]
      \end{enumerate}
      This is intuitively the set of substitutions \(\Sigma\) whose image is constructed from \(\Solved(\C^e)\), up to the new variables of \(D_\freshlab\) introduced by \(\Sigma\).
      In particular we have for all \(\Sigma \in \CompatibleSubs(\C^e)\), \(|\measureNC(\CApply{\Sigma}{\C^e})| \leqslant |\measureNC(\C^e)|\) (which follows in more details from Proposition \ref{prop:trans-conseq} in Appendix \ref{app:ptree}), hence the conclusion.
  \end{proof}

  This relation allows to eventually reduce the problem to give a polynomial bound on \(\Phi(\C)\) and \(\mu^1(\C)\) which are only affected by symbolic rules (we recall that the other constraint-solving rules do not modify \(\C\)).
  All in all this concludes the proof of the expected polynomial bound:

  \begin{corollary}[polynomial evolution of second-order terms]
    For all extended processes \(\C^e\) that verify the invariants \(\PredWellFormed(\C^e)\) and \(\PredCorrectFormula(\C^e)\), we have
    \[\forall \Sigma \in \mgs(\C^e),\
      |\terms[2](\CApply{\Sigma}{\C^e})| \leq |\terms[2](\C^e)| + 9\dagsize{P,Q,\R}^3\]
  \end{corollary}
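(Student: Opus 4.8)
The plan is to chain the two immediately preceding propositions and then reduce everything to a purely first-order, structural bound. First I would fix an extended constraint system $\C^e$ satisfying $\PredWellFormed(\C^e)$ and $\PredCorrectFormula(\C^e)$, together with $\Sigma \in \mgs(\C^e)$, and apply Proposition~\ref{prop:evol-mgs} to get
\[|\terms[2](\CApply{\Sigma}{\C^e})| \leq |\terms[2](\C^e)| + |\sig| \times \bigl(|\measureNC(\C^e)| - |\measureNC(\CApply{\Sigma}{\C^e})|\bigr)\,.\]
Since $|\measureNC(\CApply{\Sigma}{\C^e})| \geq 0$, the parenthesised difference is at most $|\measureNC(\C^e)|$, so it suffices to bound $|\sig| \times |\measureNC(\C^e)|$ by $9\dagsize{P,Q,\R}^3$. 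The factor $|\sig|$, being the size of the signature implicitly included in $\R$, is bounded by $\dagsize{P,Q,\R}$. For the other factor I would invoke Proposition~\ref{prop:approx-unused}, which for the symbolic process $(\P,\C,\C^e)\in\Gamma$ carrying $\C^e$ gives $|\measureNC(\C^e)| \leq \dagsize{\Phi(\C)\mu^1(\C),\mu^1(\C)}$. The whole corollary thus reduces to showing that, after normalisation, the DAG size of the frame and of the first-order mgu is at most quadratic in $\dagsize{P,Q,\R}$.

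To establish this remaining bound I would prove a structural invariant along the symbolic rules, exploiting the observation already used in the proof of Proposition~\ref{prop:approx-unused} that all non-symbolic rules leave $\C$ untouched; hence only \eqref{rule:e-in} and \eqref{rule:e-out} (and the internal-communication/conditional rules) can affect $\Phi(\C)$ and $\mu^1(\C)$. The invariant states that every first-order term occurring in $\Phi(\C)$ or $\im(\mu^1(\C))$ is, up to normalisation and instantiation by the running mgu, built from subterms of $P$, $Q$ and the rewrite rules of $\R$. Each application of \eqref{rule:e-out} appends to the frame a term of the form $v\mu\sigma\norm$ where $v$ is an output term of the process and thus a subterm of $P$ or $Q$, and each rule contributes to $\Eqfst$ a unifier $\sigma \in \mguR(\cdots)$ whose freshly introduced material is controlled, via the narrowing characterisation of $\mguR$ for subterm convergent systems~\cite{CD05}, by the left-hand sides of $\R$. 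Counting the outputs (at most $\dagsize{P,Q}$ of them) and the bounded amount of fresh material introduced per step, and using maximal sharing of the DAG representation, one obtains $\dagsize{\Phi(\C)\mu^1(\C),\mu^1(\C)} \leq 9\dagsize{P,Q,\R}^2$.

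The main obstacle is precisely this last bound: one must argue that applying the first-order mgu and then normalising does not blow up the DAG representation. This is where subterm convergence of $\R$ is essential, since it guarantees that normal forms are structurally subterms rather than arbitrarily larger terms, and where the maximal sharing of the DAG size must be used carefully so that repeated instantiations of the same subterm are not counted multiple times. This is exactly the content delegated to Appendix~\ref{app:termination}. Granting it, assembling $|\sig| \leq \dagsize{P,Q,\R}$ with the quadratic frame bound yields
\[|\terms[2](\CApply{\Sigma}{\C^e})| - |\terms[2](\C^e)| \leq |\sig| \times |\measureNC(\C^e)| \leq \dagsize{P,Q,\R} \cdot 9\dagsize{P,Q,\R}^2 = 9\dagsize{P,Q,\R}^3\,,\]
which is the stated corollary.
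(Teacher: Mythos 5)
Your proposal is correct and follows essentially the same route as the paper: the paper likewise chains Propositions~\ref{prop:evol-mgs} and~\ref{prop:approx-unused}, absorbs \(|\sig| \leqslant \dagsize{P,Q,\R}\), and reduces the corollary to the bound \(\dagsize{\Phi(\C)\mu^1(\C),\mu^1(\C)} \leqslant 9\dagsize{P,\R}^2\), which it establishes by exactly the structural argument you sketch---collecting along the symbolic trace a set \(S\) of equations whose members are subterms of \(P\) or one fresh variable per action, with \(\mu^1(\C) \in \mguR(S)\) and frame entries instances of subterms of \(P\), and then the inductive estimate \(|\subterms(\sigma)| \leqslant 2|\subterms(S)| \times \dagsize{\R}\) for mgu's modulo a constructor-destructor subterm convergent system. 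The only slip is where you park the hard step: that mgu-size/normalisation bound is not delegated to Appendix~\ref{app:termination} (which concerns the termination measure) but is proved inline in the corollary's own proof as inequality \((\mathcal{E})\), so what you ``grant'' is precisely the content the paper supplies there.
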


  \begin{proof}
    We agree on the convention that \(\dagsize{P}\), \(\dagsize{Q}\) and \(\dagsize{\R}\) are strictly positive.
    By Propositions \ref{prop:evol-mgs} and \ref{prop:approx-unused}, it suffices to prove that for all symbolic traces \(P \Sstep{\tr} (\P,\C)\), we have that 
    \(\dagsize{\Phi(\C)\mu^1(\C), \mu^1(\C)} \leqslant 9\dagsize{P,\R}^2\) (which, as we will see, is a very rough approximation).
    For that a quick induction on the length of \(\tr\) allows to construct a set of \(|\tr|\) variables \(Y = \{y_i\}_{i=1}^{|\tr|}\) and finite set of equations \(S\) such that
    \begin{enumerate}
      \item \label{it:poly-step-1}
      \(\mu^1(\C) \in \mguR(S)\)
      \item \label{it:poly-step-2}
      for all \((u \eqs v) \in S\), \(u\) (resp. \(v\)) is either a subterm of a term appearing in \(P\) or a variable of \(Y\)
      \item \label{it:poly-step-3}
      for all terms \(u \in \im(\Phi(\C)\mu^1(\C))\), there exists \(u_0\) subterm of a term appearing in \(P\) such that \(u_0 \mu^1(\C) = u\)
    \end{enumerate}
    The variables of \(Y\) model the fresh channel variables introduced when executing \eqref{rule:s-in} or \eqref{rule:s-out} transitions, and the set of equations \(S\) collects the equality tests performed during the trace and how each variable of \(P\) is instantiated by \(\Eqfst\) (including by private communications).
    Independently from this, by induction on a straightforward algorithm to compute mgu modulo theory, we have if \(\R\) is constructor-destructor subterm convergent
    \begin{align}
      |\subterms(\sigma)|
        \nonumber
        & \leqslant |\subterms(S)| + \dagsize{\R} \times |\{ t \in \subterms(S) \mid \rootf(t) \in \sigd \}| \\
        \label{eqn:mgu-modulo-bound} \tag{\mbox{\(\mathcal{E}\)}}
        & \leqslant 2|\subterms(S)| \times \dagsize{\R}
    \end{align}
    Altogether we therefore obtain
    \begin{align*}
      \dagsize{\Phi(\C)\mu^1(\C), \mu^1(\C)}
        & \leqslant |\subterms(P)| + 2 \dagsize{\mu^1(\C)}
          & & \text{(by \ref{it:poly-step-3})} \\
        & \leqslant \dagsize{P} + 4|\subterms(S)| \times \dagsize{\R}
          & & \text{(by \ref{it:poly-step-1} and \eqref{eqn:mgu-modulo-bound})} \\
        & \leqslant \dagsize{P} + 4(\dagsize{P} + |\tr|)\dagsize{\R}
          & & \text{(by \ref{it:poly-step-2})} \\
        & \leqslant 9 \dagsize{P,\R}^2 & & \qedhere
    \end{align*}
  \end{proof}

\subsection{Complexity upper bounds for equivalence properties} \label{sec:witness-complexity}

\subsubsection{Complexity of trace equivalence}

The goal of this section is to prove the following theorem:

\begin{theorem}[complexity of trace equivalence] \label{thm:trace-equiv-complexity}
  \TraceEquiv are co\nexp for bounded processes and constructor-destructor subterm convergent theories.
\end{theorem}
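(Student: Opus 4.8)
The goal is to show that the non-equivalence problem for trace equivalence lies in \nexp, so that \TraceEquiv itself is co\nexp. The plan is to reduce $P \not\TraceEq Q$ to the existence of a small witness that can be guessed and verified in exponential time. By Theorem~\ref{thm:trace-equiv-ptree}, once a partition tree $T \in \ptree(P,Q)$ is fixed, $P \not\TraceIncl Q$ holds \iff there is a partition-tree trace $P \Tstep{\tr} (\P_1,\C_1),n$ for which no matching trace $Q \Tstep{\tr} (\P_2,\C_2),n$ exists in the \emph{same} node $n$. So a witness of non-inclusion is essentially a single branch of the partition tree: a sequence of at most exponentially many constraint-solving steps (Theorem~\ref{thm:termination-ptree}), together with the data of the visited nodes, where the terminal node $n$ contains a symbolic process from $P$ but none from $Q$. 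Since $\TraceEq$ is the conjunction of two inclusions, and co\nexp is closed under intersection, it suffices to handle $\TraceIncl$.

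First I would make precise what an \nexp machine needs to guess and check. The key quantitative inputs are already in place: by Theorem~\ref{thm:termination-ptree} each branch of $T$ is generated by at most exponentially many rule applications (ignoring negative branches of case-distinction rules), and by Theorem~\ref{thm:mgs-size} every node $n$ of $T$ has $\dagsize{\mgs(n)}$ exponential in $\dagsize{P,Q,\R}$. The crucial point is that one does \emph{not} guess the whole (infinite, or at best exponentially large) partition tree, but only a single root-to-node branch together with the components $\Gamma(n)$ that the branch visits. The machine nondeterministically guesses the sequence of symbolic actions $\tr$, and for each step the rule applied and its parameters (the relevant most general solutions $\Sigma$, rewrite rule, position, skeleton, etc.), each of which has DAG size bounded exponentially. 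It then deterministically recomputes the resulting components by replaying the simplification, normalisation, vector-simplification and case-distinction rules along that branch, exactly as in $\nodegen$ of Algorithm~\ref{alg:ptree}, and finally checks that the terminal component contains a symbolic process originating from $P$ but none originating from $Q$ whose constraint system admits the same $\tr$-trace.

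Next I would verify that each such replay-and-check step runs in exponential time, so that the whole guess-and-check procedure is in \nexp. Here I would lean on the invariant (Remark on uniformity of second-order terms and the bounds of Section~\ref{sec:exp-mgs}) that all objects manipulated along a branch---frames $\Phi$, the sets $\Df$, $\Eqfst$, $\Eqsnd$, $\Solved$, $\USolved$, and the recorded mgs'---have DAG size exponential in $\dagsize{P,Q,\R}$. Each individual rule application involves computing first- and second-order mgu's (which by Proposition on correctness of second-order mgu's terminate, and by the estimate \eqref{eqn:mgu-modulo-bound} blow up only polynomially in the rewriting system), consequence tests $(\xi,u)\in\conseq(\cdot)$, and mgs computations whose output is a singleton in solved form; all of these are polynomial in the exponential-sized inputs, hence exponential overall, and there are only exponentially many steps per branch. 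The correctness of the check rests on Theorem~\ref{thm:trace-equiv-ptree}: the branch we guessed is a faithful fragment of \emph{some} $T \in \ptree(P,Q)$, so the existence of the guessed $P$-trace with no matching node-preserving $Q$-trace is equivalent to $P \not\TraceIncl Q$.

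The main obstacle I expect is not the counting but the soundness of guessing only a branch rather than the full tree. The partition-tree properties (maximality, Item~\ref{it:PT-parent-concrete-derivation} of Definition~\ref{def:partition-tree}) are global: whether a given symbolic process belongs to a node $n'$ depends on \emph{all} transitions leading into $n'$ across the whole component being split, not just the ones on our chosen branch. I would therefore have to argue that the vector-simplification and case-distinction rules, when replayed along a single branch, reconstruct the component $\Gamma(n')$ faithfully---i.e.\ that the splitting of a component into positive and negative branches is locally determined by the guessed parameters, so that the membership of the target $P$-process and the absence of any matching $Q$-process can be certified by examining only the guessed component. This is where the uniformity of second-order structure across a component (invariant $\PredStruct$) and the partial-correctness results of Section~\ref{sec:correctness-proc} do the heavy lifting: they guarantee that replaying $\applysimpl$ and $\applycase$ on the branch yields exactly the components that the full algorithm would produce. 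Once this local reconstruction is justified, the exponential size and time bounds assemble directly into an \nexp non-equivalence test, giving $\TraceEquiv \in$ co\nexp.
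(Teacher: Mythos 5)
Your proposal is correct in outline, but it takes a genuinely different route from the paper. The paper uses the partition tree only \emph{inside the proof} that a small witness exists: from Theorem~\ref{thm:trace-equiv-ptree} and the exponential bound on \(\mgs(n)\) (Theorem~\ref{thm:mgs-size}) it extracts, via soundness of the symbolic semantics, a single \emph{concrete} trace \(P_1 \Cstep{\tr\Sigma} A_1\) of exponential DAG size such that no same-labelled trace of \(P_2\) yields a statically equivalent frame (Proposition~\ref{prop:trace-witness-exp}); the \nexp verifier then works entirely at the concrete level, enumerating the exponentially many traces \(P_2 \Cstep{\tr\Sigma} (\Q,\Psi)\) and, for each, guessing a distinguishing pair of recipes whose size is controlled by the polynomial witness bound for non-static-equivalence from \cite{AC06}. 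You instead stay symbolic: guess a root-to-node branch of the tree, replay \(\applysimpl\)/\(\applycase\) along it, and check origin labels in the terminal component, never descending to concrete traces and never invoking the static-equivalence witness result --- node membership already encodes static equivalence via the saturation properties of configurations. Your approach is sound because the vector rules act on one component at a time (so a single branch is computable independently of its siblings, and any partial run extends to a full tree produced by Algorithm~\ref{alg:ptree}), and you correctly identify that \(\PredStruct\) and the partial-correctness results carry this; what it costs you is a heavier verification burden that you only gesture at: besides exponential DAG sizes of terms, you need the exponential bound on the \emph{cardinality} of components (measure \(\compon[2]\)), an exponential bound on the number of \(\simpStep{}\) steps per mgs computation, and a bound on branch length that also counts negative branches of case-distinction rules (the paper's Theorem~\ref{thm:termination-ptree} excludes these; one recovers the bound from the ninth measure component, since lexicographic chains over exponentially bounded tuples have exponential length). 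The paper's route buys a verifier whose correctness needs no re-justification of the solver replay and whose checking step is elementary term normalisation; yours buys a self-contained symbolic argument that dispenses with the concrete enumeration and the external \cite{AC06} bound.
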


The proof relies on the following arguments that were developed in previous sections:
\begin{enumerate}
  \item \emph{charactering trace inclusion with partition trees:}
  Theorem~\ref{thm:trace-equiv-ptree}
  \item \emph{existence of a mgs of exponential size:}
  Theorem \ref{thm:mgs-size}
  \item \emph{soundness and completeness of the symbolic semantics:}
  see Proposition~\ref{prop:symbolic-sound-complete}
\end{enumerate}
Using these ingredients we prove the core property:

\begin{proposition}[witness of non-trace equivalence of exponential size] \label{prop:trace-witness-exp}
  Let \(P_1,P_2\) be two plain processes w.r.t. a constructor-destructor subterm convergent rewriting system \(\R\).
  The following points are equivalent:
  \begin{enumerate}
    \item \label{it:trace-witness-1}
    \(P_1 \not\TraceIncl P_2\)
    \item \label{it:trace-witness-2}
    there exists a trace \(t : P_1 \Cstep{\tr} A_1\) such that \(\dagsize{t}\) is exponential in \(\dagsize{P,Q,\R}\) and for all \(P_2 \Cstep{\tr} A_2\), \(A_1 \not\StatEq A_2\).
  \end{enumerate}
\end{proposition}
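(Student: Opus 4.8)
The plan is to prove the two implications separately, the converse \ref{it:trace-witness-2}~$\Rightarrow$~\ref{it:trace-witness-1} being immediate and the forward direction \ref{it:trace-witness-1}~$\Rightarrow$~\ref{it:trace-witness-2} carrying all the work. For the easy direction, a trace $t : P_1 \Cstep{\tr} A_1$ none of whose $P_2$-counterparts is statically equivalent to $A_1$ is, by definition of $\TraceIncl$, a direct certificate that $P_1 \not\TraceIncl P_2$; the size bound on $t$ is irrelevant here.

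For the forward direction I would first fix a partition tree $T \in \ptree(P_1,P_2)$, which exists by \Cref{thm:termination-ptree}. Applying the contrapositive of \Cref{thm:trace-equiv-ptree} to $P_1 \not\TraceIncl P_2$ yields a partition-tree trace $P_1 \Tstep{\tr} (\P_1,\C_1), n$ for which no partition-tree trace $P_2 \Tstep{\tr} (\P_2,\C_2), n$ reaches the same node $n$. The length of $\tr$ is bounded by the depth of $n$ in $T$, hence by the number of constraint-solving rules applied along that branch, which is exponential in $\dagsize{P_1,P_2,\R}$ by \Cref{thm:termination-ptree}. I would then take the unique $\Sigma_0 \in \mgs(n)$ and extend it by an injection $\Sigma_1$ of its pending second-order variables to fresh constants; by \Cref{def:mgs}\,(\ref{it:mgs-sol}), $(\Sigma_0\Sigma_1,\sigma) \in \Sol(\C_1)$ for some first-order $\sigma$. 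Feeding this solution to the soundness part of \Cref{prop:symbolic-sound-complete} produces a concrete trace $t : P_1 \Cstep{\tr\Sigma} A_1$ with $\Sigma = \Sigma_0\Sigma_1$ and $A_1 = (\P_1\sigma, \Phi(\C_1)\sigma\norm)$. The recipes labelling $t$ are built from $\mgs(n)$ and finitely many fresh constants, so $\dagsize{t}$ is controlled by $\dagsize{\mgs(n)}$ times the (exponential) number of actions; since $\dagsize{\mgs(n)}$ is exponential in $\dagsize{P_1,P_2,\R}$ by \Cref{thm:mgs-size}, the whole trace has exponential DAG size.

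It then remains to check that $t$ genuinely witnesses non-inclusion, i.e.\ that every $P_2 \Cstep{\tr\Sigma} A_2$ satisfies $A_1 \not\StatEq A_2$. I would argue by contradiction: assuming $A_2 = (\Q,\Psi)$ with $A_1 \StatEq A_2$, the completeness part of \Cref{prop:symbolic-sound-complete} lifts this concrete execution of $P_2$ to a symbolic trace $P_2 \Sstep{\tr'} (\Q',\C_2)$ together with a solution $(\Sigma',\sigma') \in \Sol(\C_2)$ satisfying $\tr\Sigma = \tr'\Sigma'$ and $\Psi = \Phi(\C_2)\sigma'\norm$. Replaying this symbolic trace inside $T$ via the closure property \Cref{def:partition-tree}\,(\ref{it:PT-child-concrete-derivation}) and the maximality-under-static-equivalence property \Cref{def:partition-tree}\,(\ref{it:PT-parent-concrete-derivation}) --- the same branch-level generalisations that underlie \Cref{thm:trace-equiv-ptree} --- I would obtain a partition-tree trace $P_2 \Tstep{\tr} (\Q',\C_2), n'$ whose endpoint $n'$ must coincide with $n$, precisely because $A_1 \StatEq A_2$ forces the two symbolic processes into the static-equivalence class grouped by the configuration of $n$ (\Cref{def:configuration}). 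This contradicts the defining property of $n$, concluding the proof.

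The main obstacle I anticipate is this last step: turning the hypothetical statically-equivalent concrete counter-trace of $P_2$ back into a partition-tree trace landing in node $n$. Here the purely qualitative \Cref{thm:trace-equiv-ptree} does not suffice on its own, and one must re-invoke the soundness/completeness bridge together with the technical lemmas relating edge-wise and branch-wise partition-tree properties. Care is also needed to ensure that the recipe word $\tr\Sigma$ fed to $P_2$ is the exact concrete word for which completeness is applied, so that the node reached by the lifted trace is forced to be $n$ rather than merely some node at the same depth.
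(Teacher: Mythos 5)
Your proposal is correct and follows essentially the same route as the paper's own proof: apply \Cref{thm:trace-equiv-ptree} to extract a partition-tree trace of $P_1$ reaching a node $n$ unmatched by $P_2$, instantiate it via the exponential-size mgs (\Cref{thm:mgs-size}) and soundness of the symbolic semantics, then refute any statically equivalent concrete counter-trace of $P_2$ by lifting it back through completeness and Lemma~\ref{lem:PT-parent-concrete-derivation} into a partition-tree trace forced to land in $n$. Your explicit bound on the length of $\tr$ via \Cref{thm:termination-ptree} and the identification of the final lifting step as the delicate point merely make explicit what the paper leaves terse.
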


\begin{proof}
  The proof of \ref{it:trace-witness-2}\(\Rightarrow\)\ref{it:trace-witness-1} is trivial and we therefore focus on \ref{it:trace-witness-1}\(\Rightarrow\)\ref{it:trace-witness-2}.
  Let us assume that \(P_1 \not\TraceIncl P_2\), and let \(T \in \ptree(P_1,P_2)\) the partition tree computed by Algorithm \ref{alg:ptree}.
  By Theorem \ref{thm:trace-equiv-ptree} we obtain a partition-tree trace \(P_1 \Tstep{\tr} (\P,\C),n\) such that there exist no traces of the form \(P_2 \Tstep{\tr} (\P',\C'),n\).
  But by Theorem \ref{thm:mgs-size} we know that the (DAG) size of \(\mgs(n)\) is of exponential in \(\dagsize{P,Q,\R}\), which gives a solution \((\Sigma,\sigma) \in \Sol[\pi(n)](\C)\) of exponential size as well by definition of a mgs.

  Let us then consider the trace \(t : P_1 \Cstep{\tr\Sigma} (\P\sigma, \Phi(\C)\sigma \norm)\) (that exists by soundness of the symbolic semantics) and show that it satisfies the conditions of \ref{it:trace-witness-2}.
  It is indeed of exponential DAG size.
  Besides assume by contradiction that there exists a trace \(P_2 \Cstep{\tr\Sigma} (\Q,\Psi)\) such that \(\Phi(\C)\sigma \StatEq \Psi\).
  By using the completeness of the symbolic semantics and the properties of the partition tree (Lemma~\ref{lem:PT-parent-concrete-derivation}), we would obtain a symbolic process \((\P',\C')\) such that \(P_2 \Tstep{\tr} (\P',\C'),n\), yielding a contradiction.
\end{proof}

To obtain a decidability result we also use the following result on static equivalence from \cite{AC06}:

\begin{proposition}[witness of non-static equivalence of polynomial size]
  If two frames \(\Phi\) and \(\Psi\) are not statically equivalent w.r.t. a subterm convergent rewriting system \(\R\), there exist two recipes \(\xi\) and \(\zeta\) such that \(\dagsize{\xi,\zeta}\) is polynomial in \(\dagsize{\Phi,\Psi,\R}\), \(\xi \Phi \norm = \zeta \Phi\norm\) and \(\xi \Psi\norm \neq \zeta \Psi\norm\).
\end{proposition}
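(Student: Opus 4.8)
The plan is to show that whenever $\Phi \not\StatEq \Psi$ there is a \emph{small} distinguishing equality test, following the saturation-based argument of Abadi and Cortier. First I would reduce both failure modes of \Cref{def:static-equivalence} to the production of a single equality test. If condition~2 already fails we directly have recipes $\xi,\zeta$ with $\xi\Phi\norm = \zeta\Phi\norm$ and $\xi\Psi\norm \neq \zeta\Psi\norm$ (the opposite orientation being handled by relabelling the frames, since $\StatEq$ is symmetric). If instead condition~1 fails, say $\msg(\xi\Phi)$ but $\neg\msg(\xi\Psi)$, then $m = \xi\Phi\norm$ is a message, and I would use the saturated knowledge base constructed below to find a recipe $\zeta$ with $\zeta\Phi\norm = m$ whose $\Psi$-normal form is a genuine message; since $\xi\Psi\norm$ contains a stuck destructor while $\zeta\Psi\norm$ does not, the two normal forms differ, so $(\xi,\zeta)$ is again an equality test. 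This folds the message case into the equality case.

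The technical core is a \emph{locality} property of subterm convergent rewriting. I would prove that for any recipe $M$, every subterm of $M\Phi\norm$ is either a subterm of $\im(\Phi)$ or of the form $M'\Phi\norm$ for some subterm $M'$ of $M$; the proof is an induction on the length of the normalising derivation, using that each rule $\ell \to r$ has $r$ a strict subterm of $\ell$ or ground, so no rewrite step introduces a term outside $\subterms(\im(\Phi)) \cup \subterms(M)$. From this I would extract a saturated set of deduction facts, assigning one \emph{canonical} recipe to each term of a set of ``useful'' deducible subterms. By locality this set of terms is contained in $\subterms(\im(\Phi),\im(\Psi))$ up to the constructor contexts of the theory, hence of cardinality $\mathcal O(\dagsize{\Phi,\Psi})$; maintaining maximal sharing, each canonical recipe then has DAG size polynomial in $\dagsize{\Phi,\Psi,\R}$. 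This is also the base that supplies the recipe $\zeta$ needed in the message-to-equality reduction.

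With this in hand, I would take a distinguishing equality test $(\xi,\zeta)$ of \emph{minimal} DAG size and argue by a cut/replacement argument that it must already be polynomial. Whenever a maximal subrecipe $\xi'$ of $\xi$ (or of $\zeta$) computes in $\Phi$ a term lying in the useful set with canonical recipe $\eta \neq \xi'$, replacing $\xi'$ by $\eta$ either preserves the distinguishing behaviour — contradicting minimality — or else $\xi'$ and $\eta$ already form a strictly smaller distinguishing test (they agree in $\Phi$ by construction, so if the replacement ceased to distinguish then $\xi'$ and $\eta$ must disagree in $\Psi$). Either way the minimal test is glued from canonical recipes by at most $\dagsize{\xi,\zeta}$ destructor/constructor applications whose redex positions correspond to subterms of $\Phi$, $\Psi$ and $\R$, which yields the claimed polynomial bound, where the dependence on $\dagsize{\R}$ enters through the arity and left-hand-side sizes of the rewrite rules applied along the normalisations.

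The main obstacle I anticipate is the quantitative control of nested destructor applications: a priori the distinguishing recipe may apply destructors deeply, and a single normalisation step can duplicate or delete large subterms. The locality property is what tames this, but turning ``subterm of $\im(\Phi)$ or of $M$'' into an explicit polynomial count via DAG sharing — while simultaneously keeping the message-to-equality reduction size-bounded — is the delicate part. This is exactly where the subterm-convergence hypothesis (rather than mere convergence) is indispensable, since the statement is false for general convergent systems~\cite{AC06}.
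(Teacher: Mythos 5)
The paper gives no proof of this proposition at all: it is imported verbatim from \cite{AC06}, so your reconstruction can only be measured against the standard literature argument. Your middle two paragraphs do reproduce its correct skeleton (the locality property of subterm convergent rewriting, a saturated set of canonical recipes for the polynomially many deducible subterms of \(\subterms(\im(\Phi),\im(\Psi))\), and a replacement argument on a minimal witness). The genuine gap is your very first step, the ``message-to-equality'' reduction. You claim that if \(\msg(\xi\Phi)\) but \(\neg\msg(\xi\Psi)\), the saturated base supplies a recipe \(\zeta\) with \(\zeta\Phi\norm = \xi\Phi\norm\) and \(\msg(\zeta\Psi)\). No such \(\zeta\) need exist. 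Take \(\Phi = \{\ax_1 \mapsto \senc(n,r,k),\ \ax_2 \mapsto k\}\) and \(\Psi = \{\ax_1 \mapsto \senc(n,r,k'),\ \ax_2 \mapsto k\}\) with \(n,r,k,k' \in \Nall\). Then \(\xi = \sdec(\ax_1,\ax_2)\) is a message in \(\Phi\) (value \(n\)) and stuck in \(\Psi\), so \(\Phi \not\StatEq \Psi\); but since rewriting is purely syntactic, \emph{every} recipe \(\zeta\) with \(\zeta\Phi\norm = n\) (e.g. \(\fst(\pair{\sdec(\ax_1,\ax_2), a})\)) normalises in \(\Psi\) to the very same stuck term \(\sdec(\senc(n,r,k'),k)\) as \(\xi\), so your pair \((\xi,\zeta)\) satisfies \(\xi\Psi\norm = \zeta\Psi\norm\) and distinguishes nothing. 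Worse, an injective-replacement argument (treating the stuck term as a fresh constant) shows these two frames satisfy exactly the same equalities of normal forms in \emph{both} orientations: the \(\msg\)-failure cannot be folded into an equality test at all, so your first paragraph asserts something false.

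What the correct proof does instead is keep the \(\msg\)-failure as a separate, single-recipe witness form, mirroring the two conditions of \Cref{def:static-equivalence}: take the minimal subrecipe \(\xi'\) of \(\xi\) with \(\neg\msg(\xi'\Psi)\) (so its proper subrecipes are messages in both frames), replace those proper subrecipes by canonical ones from the saturated base, and bound \(\dagsize{\xi'}\) by the same locality analysis --- no conversion to an equality test is needed or possible. Note that the elision is already present in the proposition as the paper states it (and in the co\nexp procedure sketch that invokes it, which only checks \(\xi_t \Phi\norm = \zeta_t \Phi\norm \Leftrightarrow \xi_t \Psi\norm = \zeta_t \Psi\norm\)), so you inherited the problem from the statement; but a proof must resolve it rather than bridge it with a nonexistent \(\zeta\). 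A secondary, repairable weakness: in your minimality argument, ``replacement contradicts minimality'' needs the canonical recipe \(\eta\) to be strictly smaller than the subrecipe \(\xi'\) it replaces, which DAG-size minimality does not give you for free; the standard route avoids this by proving directly that some witness of the restricted shape ``bounded destructor/constructor context over canonical recipes'' exists, rather than arguing by contradiction from an arbitrary minimal one.
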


Wrapping everything together we obtain the following \nexp decision procedure for non-trace equivalence:

\begin{enumerate}
  \item Given two processes \(P_1,P_2\), guess an integer \(i \in \eint{1}{2}\) and a trace \(P_1 \Cstep{\tr} (\P,\Phi)\) of exponential size.
  In particular, although \(|\dom(\Phi)| \leqslant |\tr|\), the sizes of the terms in \(\im(\Phi)\) may be exponential as well.
  \item For each of the exponentially-many traces of the form \(t : P_2 \Cstep{\tr} (\Q,\Psi)\), guess two recipes \(\xi_t,\zeta_t\) of exponential size.
  \item if for one such trace \(t\) we do not have \(\xi_t \Phi\norm = \zeta_t \Phi\norm \Leftrightarrow \xi_t \Psi\norm = \zeta_t \Psi\norm\), conclude that \(P_1 \not\TraceEq P_2\).
\end{enumerate}

\subsubsection{Complexity of labelled bisimilarity}

The goal of this section is to prove the following theorem:

\begin{theorem}[complexity of labelled bisimilarity]
  \Bisimilarity is co\nexp for bounded processes and constructor-destructor subterm convergent theories.
\end{theorem}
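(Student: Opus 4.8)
The plan is to follow exactly the template just used for trace equivalence in Theorem~\ref{thm:trace-equiv-complexity}, replacing the trace-based witness by the symbolic-witness machinery developed for bisimilarity. The three ingredients are already in place: the partition-tree characterisation of labelled bisimilarity (Theorem~\ref{thm:ptree-lab-bis}), the exponential bound on the size of most general solutions at every node (Theorem~\ref{thm:mgs-size}), and the soundness/completeness of the symbolic semantics (Proposition~\ref{prop:symbolic-sound-complete}). The co\nexp upper bound amounts to showing that non-bisimilarity admits a witness of exponential size, which can then be guessed and checked in \nexp.

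First I would establish the analogue of Proposition~\ref{prop:trace-witness-exp}: whenever \(P_1 \not\LabBis P_2\), there exists a \emph{symbolic} witness of non-bisimilarity \(\witness_s\) (w.r.t. the partition tree \(T \in \ptree(P_1,P_2)\) computed by Algorithm~\ref{alg:ptree}) together with a solution \(\fsol \in \Sol(\witness_s)\), both of exponential DAG size in \(\dagsize{P_1,P_2,\R}\). By Theorem~\ref{thm:ptree-lab-bis}, \(P_1 \not\LabBis P_2\) is equivalent to the existence of \emph{some} symbolic witness with a non-empty solution set. The size bound then comes from two observations. On one hand, the partition tree \(T\) has exponentially many nodes and each has a mgs of exponential size by Theorem~\ref{thm:mgs-size}; since a symbolic witness is a tree whose nodes are labelled by (pairs of) symbolic processes drawn from nodes of \(T\) together with a single transition each, I expect the branching and depth of \(\witness_s\) to be controlled by the (exponential) size of \(T\), so that a \emph{minimal} such witness has at most exponentially many nodes. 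On the other hand, a solution \(\fsol\) assigns to each node a ground second-order substitution which, by the structure of the solution conditions in Definition~\ref{def:solution-witness}, can be obtained by a bottom-up unification of the mgs' occurring along the witness; each such mgs being of exponential size, and there being exponentially many of them, the resulting substitutions remain of exponential DAG size.

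The resulting \nexp procedure for non-bisimilarity then reads: guess a symbolic witness \(\witness_s\) of exponential size (which requires access to the relevant fragment of the partition tree, itself computable in exponential time by Theorem~\ref{thm:termination-ptree}, or guessable and locally verifiable), guess a candidate solution \(\fsol\) of exponential size, and verify in exponential time that \(\fsol \in \Sol(\witness_s)\). The verification reduces to checking, at each of the exponentially many nodes, that \((\fsol(N),\sigma) \in \Sol[\pi(n)](\C)\) for the relevant constraint systems and that the inclusion conditions \(\fsol(N) \subseteq \fsol(N_1) = \fsol(N_2)\) hold; each such check involves normalising terms of exponential size and testing consequence/static-equivalence conditions, all feasible in exponential time. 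Since \(P_1 \LabBis P_2\) holds \emph{iff} no such witness-with-solution exists, this places \Bisimilarity in co\nexp.

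The main obstacle I anticipate is bounding the size of a minimal symbolic witness and of its solution, i.e. the bisimilarity analogue of the clean statement in Proposition~\ref{prop:trace-witness-exp}. For trace equivalence a witness is a single trace, so its exponential size follows directly from the depth of \(T\) and the mgs bound; for bisimilarity the witness is a \emph{tree} encoding a disprover strategy, and its children at each node range over \emph{all} responses \(A_{1-b},n \Tstep{\bar\alpha} A'_{1-b},n'\). I must argue that it suffices to consider a minimal winning strategy whose tree has polynomially-bounded depth per visible action and exponentially-bounded branching, so that the overall witness stays exponential rather than super-exponential; the key point will be that the relevant state space (pairs of symbolic processes attached to a common partition-tree node, together with a consistent global second-order solution built from the node mgs') is itself of exponential size, so a minimal witness need never revisit an equivalent configuration. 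Making this no-repetition / pumping argument precise, and checking that the bottom-up unification producing \(\fsol\) does not blow the DAG size beyond exponential, is where the real work lies; everything else is a routine transcription of the trace-equivalence proof.
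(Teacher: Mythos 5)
Your proposal follows the paper's proof essentially verbatim: it combines the symbolic-witness characterisation of Theorem~\ref{thm:ptree-lab-bis} with the exponential mgs bound of Theorem~\ref{thm:mgs-size}, and obtains an exponential-size solution by bottom-up unification of the mgs' along the witness, which is exactly the paper's inductive construction (where each mgu at an internal node is only polynomially larger than its children's substitutions, composed over the tree). The one obstacle you flag as ``the real work''---a minimality/pumping argument to keep the witness tree exponential---is unnecessary here: since processes are bounded, every transition strictly consumes process structure, so the witness has polynomial depth, and its branching is bounded by the exponentially many symbolic processes attached to a partition-tree node, which already yields the exponential size bound without any no-revisit argument.
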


Similarly to trace equivalence we build on the results of the previous sections, this time using the characterisation of labelled bisimilarity based on symbolic witnesses (Theorem~\ref{thm:ptree-lab-bis}).
Given a partition tree with most general solutions of exponential size, our goal is therefore to derive from it a symbolic witness of non-equivalence and a solution of this witness (Definition~\ref{def:solution-witness}), both of exponential size as well.

\begin{proposition}[witness of non-labelled bisimilarity of exponential size]
  Let \(P_1,P_2\) be two plain processes w.r.t. a constructor-destructor subterm convergent rewriting system \(\R\).
  The following points are equivalent:
  \begin{enumerate}
    \item \label{it:bis-witness-1}
    \(P_1 \not\LabBis P_2\)
    \item \label{it:bis-witness-2}
    there exists a witness \(\witness\) for \((P_1,P_2)\) such that \(\dagsize{\witness}\) is exponential in \(\dagsize{P,Q,\R}\).
  \end{enumerate}
\end{proposition}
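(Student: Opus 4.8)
The proof will mirror the one for trace equivalence (Proposition~\ref{prop:trace-witness-exp}), replacing the partition-tree characterisation of trace inclusion by the one for bisimilarity (Theorem~\ref{thm:ptree-lab-bis}). The implication \ref{it:bis-witness-2}\(\Rightarrow\)\ref{it:bis-witness-1} is immediate: an exponential-size witness is in particular a witness of non-bisimilarity for \((P_1,P_2)\), so since \(P_1 \StatEq P_2\) (both have empty frames) Proposition~\ref{prop:concrete-witness} yields \(P_1 \not\LabBis P_2\). The whole work therefore goes into \ref{it:bis-witness-1}\(\Rightarrow\)\ref{it:bis-witness-2}, for which the plan is to extract a \emph{symbolic} witness from the partition tree, bound its size and the size of one of its solutions, and then transfer it to a concrete witness while tracking sizes.

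First I would fix a partition tree \(T \in \ptree(P_1,P_2)\) computed by Algorithm~\ref{alg:ptree}. Assuming \(P_1 \not\LabBis P_2\), Theorem~\ref{thm:ptree-lab-bis} provides a symbolic witness of non-bisimilarity \(\witness_s\) for \((P_1,P_2,\rootf(T))\) with \(\Sol(\witness_s) \neq \emptyset\). The next step is to argue that \(\witness_s\) can be chosen of exponential DAG size. For this I would bound separately the \emph{depth} and the \emph{branching} of \(\witness_s\). The depth is controlled by the length of partition-tree traces \(A^s,n \Tstep{\tr} \cdots\): since the processes are bounded, any symbolic trace contains at most linearly many input/output prefixes and \(\tau\)-steps in \(\dagsize{P,Q,\R}\), so \(\witness_s\) has polynomial depth. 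The branching at a node labelled \((A_0,A_1,n)\) is the number of responses \(A_{1-b},n \Tstep{\bar\alpha} A'_{1-b},n'\); this is bounded by the number of symbolic processes contained in a node of \(T\) together with the (bounded) number of \(\tau\)-reachable processes, hence is at most exponential. A tree of exponential branching \(2^{p}\) and polynomial depth \(q\) has at most \(2^{pq}\) nodes, which is still exponential in \(\dagsize{P,Q,\R}\).

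It then remains to pick a solution \(\fsol \in \Sol(\witness_s)\) of exponential size. Each value \(\fsol(N)\) is a ground second-order substitution satisfying \((\fsol(N),\sigma) \in \Sol[\pi(n)](\C)\) for the constraint systems at \(N\); by Theorem~\ref{thm:mgs-size} the mgs of every node of \(T\) has exponential DAG size, and a solution is obtained from a mgs by replacing the finitely many pending variables with fresh constants, so one can choose each \(\fsol(N)\) of exponential DAG size while respecting the coherence constraint \(\fsol(N) \subseteq \fsol(N_1)=\fsol(N_2)\) via a bottom-up unification of the (exponential) mgs' along the witness. Finally I would convert \((\witness_s,\fsol)\) into a concrete witness \(\witness\) for \((P_1,P_2)\) exactly as in the proof of Theorem~\ref{thm:ptree-lab-bis}: applying \(\fsol(N)\) to the symbolic processes of each node produces concrete process pairs that are statically equivalent (by the configuration property of \(T\)), and soundness of the symbolic semantics (Proposition~\ref{prop:symbolic-sound-complete}) turns the disprover's symbolic moves into concrete moves. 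Since \(\witness_s\) has exponentially many nodes and each carries substitutions of exponential DAG size, the resulting \(\witness\) has exponential size. The main obstacle is this last transfer step: one must verify that the object produced is a genuine concrete witness, i.e.\ that \emph{every} concrete response of the prover is already accounted for by a child of \(N\); this is precisely where completeness of the symbolic semantics and the maximality of partition-tree nodes (Item~\ref{it:PT-parent-concrete-derivation} of Definition~\ref{def:partition-tree}) are needed, and re-running that argument while keeping the size bookkeeping explicit is the delicate part.
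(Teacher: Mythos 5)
Your proposal is correct and follows essentially the same route as the paper: obtain a symbolic witness with nonempty solution set from Theorem~\ref{thm:ptree-lab-bis}, extract a solution of exponential size by bottom-up unification of the node mgs' (whose exponential bound comes from Theorem~\ref{thm:mgs-size}), and transfer to a concrete witness via the soundness/completeness machinery already packaged in the proof of Theorem~\ref{thm:ptree-lab-bis}. The only thing the paper makes explicit that you leave implicit is the induction invariant justifying that the unification at each node succeeds and controls growth --- namely that every solution of \(\witness_s\) factors as \(f_{\mgs} f f'\), so nonemptiness of \(\Sol(\witness_s)\) forces \(\mgu(f_1(N_1)\rho_1 \wedge \ldots \wedge f_p(N_p)\rho_p) \neq \bot\) with only polynomial size increase per level.
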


\begin{proof}
  The proof of \ref{it:trace-witness-2}\(\Rightarrow\)\ref{it:trace-witness-1} is trivial and we therefore focus on \ref{it:trace-witness-1}\(\Rightarrow\)\ref{it:trace-witness-2}.
  Let us assume that \(P_1 \not\LabBis P_2\), and let \(T \in \ptree(P_1,P_2)\) the partition tree computed by Algorithm \ref{alg:ptree}.
  By Theorem \ref{thm:ptree-lab-bis} we obtain a symbolic witness \(\witness_s\) for \((P_0,P_1,\rootf(T))\) such that \(\Sol(\witness_s) \neq \emptyset\), and it suffices to prove that there exists a solution of \(\witness_s\) of exponential size (where the size of a solution \(\fsol\) is \(\sum_{N \in \dom(\fsol)} \dagsize{\fsol(N)}\)).
  More precisely we construct by induction on \(\witness_s\) a function \(f\) mapping the nodes of \(\witness_s\) to second-order substitutions (not necessarily ground) such that:
  \begin{enumerate}
    \item \((f_{\mgs} f) \in \Sol(\witness_s)\), where \(f_{\mgs}(S,n) = \mgs(n)\) and the notation \(f = gh\) is defined by \(f(N) = g(N) h(N)\) for all nodes \(N\) of \(\witness_s\)
    \item for all \(\fsol \in \Sol(\witness_s)\), there exists \(f'\) such that \(\fsol = f_{\mgs} f f'\)
  \end{enumerate}
  In particular since \(f_{\mgs}\) is of exponential size by Theorem \ref{thm:mgs-size}, it suffices to ensure that \(f\) is of exponential size as well.

  \caseitem{\emph{case 1: \(\witness_s\) is reduced to a leaf \(N\)}.}

    Then it suffices to choose \(f(N) = \id\).

  \caseitem{\emph{case 2: \(\witness_s\) has a root labelled \((S,n)\) and children \(N_1, \ldots, N_p\) labelled \((S_1,n'), \ldots, (S_p,n')\)}}

    Let us write \(S = \{A_0,A_1\}\) with, by definition, a symbolic trace \(A_0 \sstep{\alpha} A_0'\) such that each trace \(A_1 \Sstep{\bar{\alpha}} A_1^i\) corresponds to a child \(S_i = \{A_0',A_1^i\}\).
    We apply the induction hypothesis to the children to obtain their respective functions \(f_1, \ldots, f_p\).
    We recall that \(\Sol(\witness_s) \neq \emptyset\) by hypothesis and that all solutions \(\fsol\) verify \(\fsol(N_1) = \cdots = \fsol(N_p)\);
    thus, since by induction hypothesis all solutions of \(\witness_s\)  are instances of \(f_{\mgs} f_i\), we obtain:
    \[\mgu(\mgs(n') f_1(N_1)\rho_1 \wedge \ldots \wedge \mgs(n') f_p(N_p)\rho_p) \neq \bot\]
    for \(\rho_1, \ldots, \rho_p\) fresh variables renamings of \(\im(f_1(N_1)), \ldots, \im(f_p(N_p))\), respectively.
    In particular, assuming without loss of generality that all the \(f_i(N_i)\) have the same domain \((\vars[2](n') \smallsetminus \dom(\mgs(n'))) \cup \im(\mgs(n'))\), we can write
    \[\Sigma = \mgu(f_1(N_1)\rho_1 \wedge \ldots \wedge f_p(N_p)\rho_p) \neq \bot\]
    Note that this \(\mgu\) is only polynomially bigger than each \(f_i(N_i)\).
    Since \(\mgs(n')\) is an instance of \(\mgs(n)\), we also let \(\Sigma_0\) such that \(\mgs(n') = \mgs(n) \Sigma_0\).
    We then conclude the proof by defining \(f\) as follows:
    \begin{enumerate}
      \item \(f(\rootf(\witness_s)) = (\Sigma_0 \Sigma)_{|\vars[2](n)}\)
      \item for all \(i \in \eint{1}{p}\), for all nodes \(N\) in the subtree of \(\witness_s\) rooted in \(N_i\), \(f(N) = f_i \Sigma\). \qedhere
    \end{enumerate}
\end{proof}

\subsection{Complexity lower bounds} \label{sec:deepsec-hardness}

We prove in this section the complexity lower bounds stated in Theorem~\ref{thm:deepsec-conexp}.

\subsubsection{Extensions of the calculus} \label{sec:tools encoding}
We first introduce useful syntax extensions that can be encoded in the original calculus.
We point out that that using these encodings does not affect the complexity of deciding the related decision problems, since they rely on polynomial-size encodings.

\paragraph{Internal non-deterministic choice}

  A first classical operator is the \emph{non-deterministic choice}:
  \(P + Q\) is a process that can be executed either as \(P\) or as \(Q\).
  Its operational semantics can therefore be described by adding the following rule to those of Figure \ref{fig:semantics}:
  \begin{align}
    \tag{\mbox{\textsc{Choice}}}
    \label{rule:choice}
    (\multi {P+Q} \cup \P, \Phi) & \cstep {\silent} (\multi {R} \cup \P, \Phi) & \text{\small if \(R \in \{P,Q\}\)}
  \end{align}
  This reduction can easily be encoded as an internal communication on a fresh private channel.
  We formalise it by a process transformation \(\sem{\cdot}\):
  \begin{align}
    \label{translation choice}
    \sem {P+Q} &\eqdef \OutP{s}{s} \mid \InP{s}{x}.\sem{P} \mid \InP{s}{y}. \sem{Q} & \mbox{where \(s\in\Nall\) and \(x,y\in\X[1]\) are fresh}
  \end{align}
  and all other cases of the syntax are handled as homomorphic extensions of \(\sem{\cdot}\). 
  As for the parallel operator we will sometimes use the big operator \(\sum\) assuming right-associativity. The correctness of this translation with respect to \(\TraceEq\) and \(\LabBis\) will be stated later on in this section.

  We also introduce the \(\guessBinary{x}\) construct which non-deterministically assigns either \(0\) or \(1\) to \(x\). \(\guessBinary{x}.P\) silently reduces to either \(P\{ x \mapsto \0\}\) or \(P\{ x \mapsto \1\}\) and \(\guessBinary{\vec{x}}.P\) is defined as \(\guessBinary{x_1}.\guessBinary{x_2}\ldots\guessBinary{x_n}.P\) where \(\vec x = x_1,\cdots,x_n\).
  Formally, we extend the operational semantics with the rule
  \begin{align*}
    %
    \tag{\sc Choose-0}\label{rule:choose-0}
    (\P\cup\multi{\guessBinary{x}.P}, \Phi) &\cstep{\epsilon} (\P\cup\multi{P\{ x \mapsto \0\}}, \Phi)\\
    %
    \tag{\sc Choose-1}\label{rule:choose-1}
    (\P\cup\multi{\guessBinary{x}.P}, \Phi) &\cstep{\epsilon} (\P\cup\multi{P\{ x \mapsto \1\}}, \Phi)\\
  \end{align*}
  and define
  \[
    \sem{\guessBinary{y}.P} \eqdef (\OutP{d}{0} + \OutP{d}{1}) \mid \InP{d}{y}.\sem{P} \quad\mbox{with \(d \in \Nall\) is fresh}
  \]

\paragraph{Boolean circuits and formulae}
\label{sec:simplify section circuit}

  Complete problems in complexity theory often involve boolean formulae (e.g., \sat or \qbf). The ability to evaluate boolean formulae, or boolean circuits in general, within the applied \(\pi\)-calculus is therefore crucial. We can implement such a feature by the means of private channels and internal communication: each edge of a boolean circuit \(\Gamma\) indeed mimics a channel transmitting a boolean over a network (Figure \ref{fig:circuit conversion}). 

  \begin{figure}[ht]
    \centering

    \begin{tikzpicture}
      [
        auto,
        or/.style={or gate US,draw}
      ]

      \newcommand\size{0.8}

      \node[or,fill=cyan!50] (OR) at (0,0) {\(\vee\)};
      \node (Arrow) at (2,0) {\scalebox{2}{\(\rightsquigarrow\)}};

      \node[draw,fill=cyan!50] (Inp) at (3.5,0) {};
      \node (Par1) at (5.5,0) {\scalebox{2}{\(|\)}};
      \node[draw,fill=cyan!50] (Comp) at (7.5,0) {};
      \node (Par2) at (9.5,0) {\scalebox{2}{\(|\)}};
      \node[draw,fill=cyan!50] (P) at (11.5,0) {\(P(x,y)\)};

      \draw[-] (OR) edge[sloped] node[above] {\scalebox{\size}{\(c_1\)}} (-1,0.3);
      \draw[-] (OR) edge[sloped] node[below] {\scalebox{\size}{\(c_2\)}} (-1,-0.3);
      \draw[-] (OR) edge[sloped] node[above] {\scalebox{\size}{\(c_3\)}} (1,0.3);
      \draw[-] (OR) edge[sloped] node[below] {\scalebox{\size}{\(c_4\)}} (1,-0.3);

      \draw[->] (Inp) edge[sloped] node[above] {\scalebox{\size}{\(\OutP{c_1}a\)}} (5,0.3);
      \draw[->] (Inp) edge[sloped] node[below] {\scalebox{\size}{\(\OutP{c_2}b\)}} (5,-0.3);

      \draw[<-] (Comp) edge[sloped] node[above] {\scalebox{\size}{\(\InP {c_1} x\)}} (6,0.3);
      \draw[<-] (Comp) edge[sloped] node[below] {\scalebox{\size}{\(\InP{c_2} y\)}} (6,-0.3);
      \draw[->] (Comp) edge[sloped] node[above] {\scalebox{\size}{\(\OutP{c_3}{x\vee y}\)}} (9,0.3);
      \draw[->] (Comp) edge[sloped] node[below] {\scalebox{\size}{\(\OutP{c_4}{x\vee y}\)}} (9,-0.3);

      \draw[<-] (P) edge[sloped] node[above] {\scalebox{\size}{\(\InP{c_3}x\)}} (10,0.3);
      \draw[<-] (P) edge[sloped] node[below] {\scalebox{\size}{\(\InP{c_4}y\)}} (10,-0.3);
    \end{tikzpicture}

    \caption{Simulation of an OR-gate within the applied \(\pi\)-calculus}
    \label{fig:circuit conversion}
  \end{figure}

  The essence of circuits lies in so-called {\it logical gates} which are boolean functions with at most two inputs. We consider the fan-out of gates to be possibly more than one in order to model the fact that wires of a gate can be split and connected to the input of different gates.
  Formally, we assume without loss of generality that the gate has at most two (identical) outputs, to be given as input to other gates. Logical gates usually range over the constants \(\0\) and \(\1\) and the predicates \(\wedge\), \(\vee\) and \(\neg\) with the usual truth tables but we may use other common operators such as \(=\). From that a {\it boolean circuit} is an acyclic graph of logical gates: each input (resp. output) of a gate is either isolated or connected to a unique output (resp. input) of another gate, which defines the edges of this graph.

  Such a circuit \(\Gamma\) with \(m\) isolated inputs and \(n\) isolated outputs thus models a boolean function \(\Gamma:\B^m\rightarrow\B^n\) (where \(\B=\{\0,\1\}\)).  We write \((c_1,c_2,g,c_3,c_4)\in\Gamma\) to state that \(g:\B^2\rightarrow\B\) is a gate of \(\Gamma\) whose inputs are passed through edges \(c_1\) and \(c_2\) and whose output is sent to edges \(c_3\) and \(c_4\). This notation is naturally lifted to other in-outdegrees.

\paragraph{Embedding into the calculus}
  The syntax of plain processes is now extended with the construction \(x_1,\cdots,x_n\leftarrow\Gamma(b_1,\cdots,b_m).P\) where \(\Gamma:\B^m\rightarrow\B^n\) is a circuit, \(x_1,\ldots,x_n\) variables and \(b_1,\cdots,b_m\) terms. We fix two distinct terms \(\0,\1 \in \sig_0\) to model \(\B\) within the calculus, and the labelled operational semantics is extended with the rule:
  \begin{align*}
    \tag{\textsc{Valuate}}\label{rule:valuate}
    (\P\cup\multi{\vec x \leftarrow \Gamma(\vec b).P}, \Phi) &\cstep{\epsilon}
    (\P\cup\multi{P\{ \vec{x} \mapsto \Gamma(\vec{b}\norm)\}}, \Phi) & \mbox{if \(\msg(\vec b)\) and \(\vec{b}\norm \subseteq \B\)}
  \end{align*}

  Now we have to extend the definition of \(\sem\cdot\) (previous subsection) to handle the new operator. For simplicity we only consider the case where gates have two inputs and two outputs: handling lower arities is straightforward. If \((c_1,c_2,g,c_3,c_4)\in\Gamma\), we first define:

  \[\sem{c_1,c_2,g,c_3,c_4}\eqdef\InP{c_1}x.\InP{c_2}y.\prod_{b,b'\in\B}\IfP~x=b~\ThenP~\IfP~y=b'~\ThenP~(\OutP{c_3}{g(b,b')}\mid\OutP{c_4}{g(b,b')})\]

  where \(c_1,c_2,c_3,c_4\in\Nall\) (assuming that different circuits in a process do not share edges). To sum it up, we simply see circuit edges as private channels and simulate the logical flow of the gate. It is then easily extended:
  \begin{align*}
    \sem{\evalFormula{\vec x}{\Gamma(\vec b)}.P} & \eqdef
    \left(\prod_{k=1}^m \OutP {c_{i_k}} {b_k}\right) \mid
    \left(\prod_{(c_1,c_2,g,c_3,c_4)\in\Gamma} \sem {c_1,c_2,g,c_3,c_4}\right) \mid
    \InP{c_{o_1}}{x_1} \ldots \InP {c_{o_n}} {x_n}. \sem {P}
  \end{align*}

  where \((c_{i_k})_{k=1}^m\) (resp. \((c_{o_k})_{k=1}^n\)) are the isolated input (resp. output) edges of \(\Gamma\).
  Note that when \(b\) and \(b'\) are fixed booleans, \(g(b,b')\) denotes the boolean obtained from the truth table of \(g\): we emphasise that \(g\) is {\it not} a function symbol of the signature \(\sig\).

  \begin{remark}[simplifying assumption]
    \label{rem:at least one gate}
    We assume that every input of a circuit goes through at least one gate and every circuit has at least one output. This is to avoid irrelevant side cases in proofs.
  \end{remark}

\paragraph{Correctness of the translation}

  Now we dispose of an extended syntax and semantics as well as a mapping \(\sem\cdot\) removing the new constructors from a process.
  The correctness of this translation is proven in Appendix \ref{app:termination}:

  \begin{proposition}[correctness of the encodings]
    \label{prop:encodings correct}
    Let \(\TraceEq^+\) and \(\LabBis^+\) be the notions of trace equivalence and labelled bisimilarity over the extended calculus (the flag \(^+\) being omitted outside of this lemma). For all extended processes \(A = (\P,\Phi)\), the translation \(\sem A = (\sem \P, \Phi) = (\multi {\sem P ~|~ P \in \P}, \Phi)\) can be computed in polynomial time, \(A\TraceEq^+\sem A\) and \(A\LabBis^+\sem A\).
  \end{proposition}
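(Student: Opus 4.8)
The plan is to prove the three assertions in order, obtaining trace equivalence for free from bisimilarity so that only one relation must be built. Polynomial computability is immediate: $\sem\cdot$ is defined by structural recursion and each clause replaces a sugared operator by a gadget of size bounded independently of its surroundings — the clauses for $P+Q$ and $\guessBinary y.P$ inflate the size by a constant factor, and the clause for $\evalFormula{\vec x}{\Gamma(\vec b)}.P$ produces a process linear in the number of gates of $\Gamma$, each gate contributing the constant-size gadget $\sem{c_1,c_2,g,c_3,c_4}$ with at most $|\B|^2$ nested conditionals. Hence $\dagsize{\sem P}$ is polynomial in $\dagsize P$ and the translation runs in polynomial time. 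For the equivalences I observe that the inclusion $\LabBis {\subset} \Simi {\subset} \TraceEq$ of the base calculus transfers verbatim to the extended one, its proof using only the shape of \Cref{def:(bi)simulation}; thus $\LabBis^+ \subseteq \TraceEq^+$ and it suffices to establish $A \LabBis^+ \sem A$.

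First I would isolate the internal activity created by the encoding. Write $\silentpistep$ for the $\tau$-transitions obtained by applying \eqref{rule:comm} on one of the fresh private channels $s,d$, or circuit edges introduced by $\sem\cdot$ (together with the conditional evaluations \eqref{rule:then}/\eqref{rule:else} internal to a gate gadget), and $\silentpi$ for its reflexive-transitive closure. The decisive structural facts to establish are: (i) these channels are fresh names, so they never enter the frame and cannot occur in any recipe; consequently any input left blocked on them is observationally inert and behaves exactly like $0$; (ii) the two communications available in $\sem{P+Q}$ (resp.\ $\sem{\guessBinary y.P}$) commit the left or right branch and leave only inert residuals, mirroring precisely the nondeterminism of \eqref{rule:choice} (resp.\ \eqref{rule:choose-0}/\eqref{rule:choose-1}); and (iii) the circuit gadget is \emph{confluent} and strongly normalizing, so that its internal reductions propagate booleans along the acyclic graph $\Gamma$ and deterministically produce $\sem P\{\vec x \mapsto \Gamma(\vec b\norm)\}$ plus inert residuals, reproducing through the nested conditionals the truth table used by \eqref{rule:valuate}. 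Confluence in (iii) holds because each edge is written once and read by the unique gate input connected to it, so communications on distinct edges commute.

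Then I would define $\R$ as the symmetric closure of the set of pairs $(A,B)$, with $A$ a process of the extended calculus and $B$ one of the base calculus, such that $\Phi(A)=\Phi(B)$ and $B$ is obtained from $\sem A$ by \emph{deterministic} internal reductions (circuit propagation as in (iii)) and erasure of inert residuals, the choice gadgets of $B$ being committed exactly where the corresponding $+$ or $\guessBinary{}$ of $A$ has been committed. This set contains $(A,\sem A)$. I would verify the clauses of \Cref{def:(bi)simulation}: static equivalence is clear since $\silentpistep$ never modifies $\Phi$ and the private channels do not appear in it. For the transfer clause I use that $\LabBis$ is \emph{weak} — a $\tau$ may be answered by $\Cstep\tau$ (zero or more internal steps). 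A visible action of either side is answered by first flushing the relevant $\silentpi$ steps on the other side and then firing the identical action; a circuit step \eqref{rule:valuate} of the extended side is answered by the full, confluent normalization of the gadget. Crucially, the \emph{committing} $\tau$'s of a choice are not absorbed into $\R$ but played as synchronized moves: a sugared $\tau$ of \eqref{rule:choice} is matched by the encoding's committing communication and vice versa, so that an uncommitted $+$ in $A$ is only ever related to an uncommitted gadget in $B$; this synchronization is exactly what prevents the spurious situation where $B$ has irrevocably committed a branch that $A$ could still refuse. In every case the successor pair lies in $\R$.

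The hard part will be fact (iii) and its interaction with the bisimulation: one must show not merely that the gadget eventually computes $\Gamma(\vec b\norm)$, but that \emph{every} intermediate configuration the disprover may reach — with some edges already valued and several gates in various stages of firing — remains $\R$-related to the appropriate process, independently of the interleaving chosen. This is precisely where the confluence of the edge communications and the acyclicity of $\Gamma$ are needed, so that a partially evaluated gadget has a well-defined deterministic residual. A secondary point of care is the synchronization of choice commitments discussed above, which must be threaded consistently through the definition of $\R$. Once these are in place, $\R$ is a symmetric labelled bisimulation with $(A,\sem A)\in\R$, giving $A \LabBis^+ \sem A$ and hence, by $\LabBis^+\subseteq\TraceEq^+$, also $A \TraceEq^+ \sem A$.
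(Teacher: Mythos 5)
Your proposal is correct and takes essentially the same route as the paper's proof: reduce to labelled bisimilarity via \(\LabBis^+ \subseteq \TraceEq^+\), then exhibit a weak bisimulation relating each process to (partially normalised) forms of its translation, relying on freshness of the gadget channels, inertness of blocked residuals, confluence of the encoding's deterministic internal steps (gate-by-gate for circuits), and synchronization of the genuinely nondeterministic commit moves of \(+\) and \(\choice\). The only difference is packaging: the paper factors your ``flush deterministic steps and erase inert residuals'' closure into a reusable proof technique, \emph{bisimulation up to simplification} (the relation \(\silentstep\) with a strong-confluence lemma and Proposition~\ref{prop:bisim up to}), and then takes the simple candidate \(\{(C,\procnorm{\sem C}) \mid C = \procnorm{C}\}\), whereas your relation \(\R\) inlines that closure directly --- it is essentially \(\silent \bisim \silentrev\), i.e.\ the same argument.
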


  \begin{remark}[stability of common fragments]
    As the finite and pure fragments of the applied \(\pi\)-calculus are closed under \(\sem\cdot\), sums and circuits can be safely used within any intersection of such fragments.
    The encoding does not use else branches either.
  \end{remark}

\subsubsection{Lower bounds in the pure fragment} \label{sec:lower-pure}

  We now use the above tool to state our reductions, first, in the pure pi calculus.
  
  \paragraph{Trace equivalence}

  To show that trace equivalence is \polyh{2}-hard we proceed by a reduction from \qbf[2], that is, the problem of deciding, given \(\varphi\) a boolean formula whose variables are partitioned into \(\{\vec x\}\cup\{\vec y\}\), whether \(\forall \vec{x}.\exists \vec{y}. \varphi(\vec x,\vec y) = \1\).
  Our goal is to thus to construct two processes \(A\) and \(B\) such that:
  \begin{align}
    A \TraceEq B 
    & & \textit{iff}
    & & \forall \vec{x}.\exists \vec{y}. \varphi(\vec x,\vec y) = \0 \label{tr pure hard}
  \end{align}

  \begin{figure}[!ht]
    \centering
    \scalebox{0.9}
    {
      \begin{tikzpicture}
        [
          op/.style={draw,circle,fill=cyan!50,minimum size=1.6em},
          test/.style={draw,rectangle},
          proc/.style={},
          title/.style={draw,rectangle,fill=cyan!50}
        ]
        \newcommand\size{0.9}

        \node(TitleP) [title] at (8,0) {\(P(t)\)};
        \node(Input) [proc] at (8,-1) {\(\InP c {\vec x}\)};
        \node(Guess) [proc] at (8,-2) {\(\guessBinary{\vec y}\)};
        \node(Comp) [proc] at (8,-3) {\(\evalFormula v {\varphi(\vec x,\vec y)}\)};
        \node(Send) [proc] at (8,-4) {\(\OutP c t\)};

        \node(TitleA) [title] at (0,0) {\(A\)};
        \node(SumCircleA) [op] at (0,-1) {};
        \node(SumA) [proc] at (0,-1) {\(+\)};
        \node(LeftA) [proc] at (-1,-2) {\(P(v)\)};
        \node(RightA) [proc] at (1,-2) {\(P(\1)\)};

        \node(TitleB) [title] at (4,0) {\(B\)};
        \node(SumCircleB) [op] at (4,-1) {};
        \node(SumB) [proc] at (4,-1) {\(+\)};
        \node(LeftB) [proc] at (3,-2) {\(P(\0)\)};
        \node(RightB) [proc] at (5,-2) {\(P(\1)\)};

        \draw[->>] (Input) -- node[auto] {} (Guess);
        \draw[->>] (Guess) -- node[auto] {} (Comp);
        \draw[->>] (Comp) -- node[auto] {} (Send);

        \draw[->>] (SumCircleA) -- node[auto] {} (LeftA);
        \draw[->>] (SumCircleA) -- node[auto] {} (RightA);
        \draw[->>] (SumCircleB) -- node[auto] {} (LeftB);
        \draw[->>] (SumCircleB) -- node[auto] {} (RightB);
      \end{tikzpicture}
    }
    \caption{Schematic definition of \(A\) and \(B\)}
    \label{fig:eqtr pure hardness}
  \end{figure}

  Consider three distinct names \(c,\0,\1\in\sig_0\) (the last two modelling booleans for the syntax extension of circuit evaluation, see Section~\ref{sec:tools encoding}). 
  Processes \(A\) and \(B\) are depicted in Figure~\ref{fig:eqtr pure hardness}. 
  Intuitively, the process \(P(t)\) (where \(t\) is a term which may depend on the variables bound by \(P\)) gets a valuation of \(\vec x\) from the attacker, internally chooses a valuation of \(\vec y\), computes the value of \(\varphi(\vec x,\vec y)\) using rule \ref{rule:valuate}, and outputs \(t\). 
  From that it is quite easy to see that \(A\) and \(B\) have the same set of traces \textit{iff} for all valuation of \(\vec x\), there exists a valuation of \(\vec y\) such that \(\varphi(\vec x,\vec y)=\0\).
  This reduction is formalised and proved in Appendix~\ref{app:lower-pure}.

  \begin{theorem}
    In the pure pi-calculus, \TraceEquiv and \TraceInclus are \polyh{2}-hard for bounded positive processes.
  \end{theorem}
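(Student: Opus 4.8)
The plan is to reduce from the \polyh{2}-complete problem of deciding, given a boolean formula \(\varphi\) over variables \(\{\vec x\}\cup\{\vec y\}\), whether \(\forall \vec x.\,\exists \vec y.\,\varphi(\vec x,\vec y)=\0\) (equivalently \qbf[2] on \(\neg\varphi\)), to the equivalence of the processes \(A\) and \(B\) of Figure~\ref{fig:eqtr pure hardness}. By Proposition~\ref{prop:encodings correct} the syntactic sugar used in \(A,B\) (choice, \(\guessBinary{\cdot}\), circuit evaluation) is removed by \(\sem\cdot\) in polynomial time and without introducing \(\ElseP\) branches, so the resulting plain processes are bounded, positive, and of size polynomial in \(\tsize\varphi\). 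I will establish the two facts that (i) \(A\TraceIncl B\) holds unconditionally, and (ii) \(B\TraceIncl A\) holds exactly when \(\forall \vec x.\,\exists \vec y.\,\varphi=\0\); together these give \eqref{tr pure hard}, hence \polyh{2}-hardness of \TraceEquiv, while (ii) alone gives the same bound for \TraceInclus by taking the instance \((B,A)\).

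The crux is a characterisation of the traces of \(P(t)\). Since \(\guessBinary{\vec y}\) reduces silently (rules \eqref{rule:choose-0}, \eqref{rule:choose-1}) and the circuit evaluation \(\evalFormula v{\varphi(\vec x,\vec y)}\) is compiled into internal communications on private channels, the only visible actions of \(P(t)\) are the input of \(\vec x\) and the final output of \(t\) on the public channel \(c\). In the pure fragment with empty initial frame, every recipe available to the attacker is a public constant of \(\sig_0\); if some \(x_i\) is set to a non-boolean constant then, by rule \eqref{rule:valuate} and since positivity forbids \(\ElseP\) escapes, evaluation stalls and no output occurs. Consequently a \emph{complete} trace of \(P(t)\) is determined by a boolean valuation \(\beta\) of \(\vec x\) (supplied as recipes in \(\{\0,\1\}\)) together with an internal guess \(\gamma\) of \(\vec y\); it outputs \(t\norm\), i.e.\ \(\varphi(\beta,\gamma)\) when \(t=v\), and \(\0\) resp.\ \(\1\) when \(t=\0\) resp.\ \(t=\1\). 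Crucially, because \(\guessBinary{\cdot}\) is nondeterministic, for a fixed \(\beta\) the process \(P(v)\) admits, for \emph{every} \(\gamma\), a complete trace outputting \(\varphi(\beta,\gamma)\).

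For inclusion (i), I match each trace of \(A=P(v)+P(\1)\): the \(P(\1)\)-summand is matched verbatim by the \(P(\1)\)-summand of \(B\); a complete \(P(v)\)-trace outputting \(b=\varphi(\beta,\gamma)\in\B\) is matched by the \(P(b)\)-summand of \(B\), yielding identical frames; and any proper prefix, including stalled executions, is matched by the corresponding prefix in \(B\), whose frame is empty or identical. For (ii), I match each trace of \(B=P(\0)+P(\1)\): the \(P(\1)\)-summand is again matched by \(P(\1)\) in \(A\); a complete \(P(\0)\)-trace has input valuation \(\beta\) (necessarily boolean, since the output was reached only after evaluating the circuit) and frame \(\{\ax_1\mapsto\0\}\). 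The only summand of \(A\) able to output \(\0\) is \(P(v)\), since \(\{\ax_1\mapsto\0\}\not\StatEq\{\ax_1\mapsto\1\}\) (the recipes \(\ax_1\) and \(\0\) are equal in the first frame but not the second, recall Definition~\ref{def:static-equivalence}); and \(P(v)\) can output \(\0\) on input \(\beta\) exactly when \(\exists \vec y.\,\varphi(\beta,\vec y)=\0\). As the attacker can realise every boolean \(\beta\) in \(B\) via \(P(\0)\), it follows that \(B\TraceIncl A\) iff \(\exists \vec y.\,\varphi(\beta,\vec y)=\0\) for all \(\beta\), that is iff \(\forall \vec x.\,\exists \vec y.\,\varphi=\0\). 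Combining (i) and (ii), \(A\TraceEq B\) iff \(B\TraceIncl A\) iff the \qbf[2] instance holds.

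I expect the main obstacle to be the rigorous trace-matching bookkeeping rather than any single conceptual step: one must verify that partial and stalled executions never create a distinguishing trace, that the silent guessing and circuit-evaluation steps can always be scheduled to realise the claimed outputs, and that no unintended attacker recipe (such as a non-boolean or garbage constant) yields a frame breaking an inclusion. These checks are routine but rely on the explicit invariant that the sole frame entry is a public boolean, so that static (in)equivalence collapses to the syntactic (in)equality of \(\0\) and \(\1\). This is precisely the argument formalised in Appendix~\ref{app:lower-pure}.
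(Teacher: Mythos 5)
Your proof is correct and follows essentially the same route as the paper's: the identical reduction from \qbf[2] with \(A = P(v)+P(\1)\) and \(B = P(\0)+P(\1)\), hinging on the same key observation that matching the \(P(\0)\)-trace of \(B\) (frame \(\{\ax_1\mapsto\0\}\), distinguishable from \(\{\ax_1\mapsto\1\}\) by the test \(\ax_1 = \0\)) forces the \(P(v)\)-summand of \(A\) to output \(\0\) on the same boolean input \(\beta\), i.e.\ \(\exists\vec y.\,\varphi(\beta,\vec y)=\0\). The one minor divergence is your handling of \TraceInclus: you obtain its hardness directly from the unconditional inclusion \(A\TraceIncl B\) via the instance \((B,A)\), whereas the paper appeals to the generic encoding \(P\TraceIncl Q\) \iff \(P+Q\TraceEq Q\); your variant is self-contained and equally valid.
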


  Note that the hardness for \TraceInclus is directly implied by the hardness of \TraceEquiv.
  This is evidenced by the reduction that, for all processes \(P,Q\), \(P \TraceIncl Q\) \textit{iff} \(P + Q \TraceEq Q\).


  \paragraph{Simulations}

  We now prove that labelled bisimilarity is \pspace-hard for the positive pure pi calculus by reduction from \qbf.
  This is more involved as \qbf allows arbitrary quantifier alternation.  Let \(\varphi\) be a boolean formula whose variables are partitioned into \(\{x_1,\ldots, x_n\}\cup\{y_1,\ldots, y_n\}\) for some \(n\in\mathbb N\). We construct (in polynomial time in the size of \(\varphi\) and \(n\)) two processes \(A\) and \(B\) such that:
  \begin{align}
    \label{eqn:reduction pspace hard}
    A \LabBis B
    & & \textit{iff}
    & & A \Simi B
    & & \textit{iff}
    & & \forall x_1 \exists y_1 \ldots \forall x_n \exists y_n.~\varphi(x_1,\ldots, x_n,y_1,\ldots, y_n) = \0
  \end{align}

  Both \qbf and labelled bisimilarity may be seen as bisimulation games: an attacker plays the \(\exists\)-quantifiers (selects a transition in a process) whereas a defender responds with the \(\forall\)-quantifiers (tries to find a similarly-labelled sequence of transitions in the other process). The role of \(A\) and \(B\)  is to implement this intuitive connection: the attacker moves will be simulated by public inputs \(\InP c {x_i}\) and the defender responses by instructions \(\guessBinary{z_i}.\InP c {y_i}\). The structure of \(A\) and \(B\) is then designed to constrain the moves of the two players so that the winning condition of the attacker is exactly \(\exists x_1\forall y_1\ldots\exists x_n \forall y_n.~\varphi(\vec x,\vec y)=1\).

  \begin{figure}[!ht]
    \centering
    \scalebox{0.9}
    {
      \begin{tikzpicture}
        [
          op/.style={draw,circle,fill=cyan!50,minimum size=1.6em},
          test/.style={draw,rectangle},
          proc/.style={},
          title/.style={draw,rectangle,fill=cyan!50}
        ]
        \newcommand\size{0.9}

        \node(TitleAi) [title] at (0.5,0) {\(A_{i,~i\leqslant n}\)};
        \node(InputA) [proc] at (0.5,-1) {\(\InP c {x_i}\)};
        \node(TestBoolA) [test] at (0.5,-2) {\scalebox{\size}{\(x_i\in\B\)}};
        \node(ProcDiA) [proc] at (0.5,-3) {\(D_i\)};

        \node(TitleBi) [title] at (4,0) {\(B_{i,~i\leqslant n}\)};
        \node(InputB) [proc] at (4,-1) {\(\InP c {x_i}\)};
        \node(TestBoolB) [test] at (4,-2) {\scalebox{\size}{\(x_i\in\B\)}};
        \node(SumBCircle) [op] at (4,-3) {};
        \node(SumB) [proc] at (4,-3) {\(+\)};
        \node(ProcDiB) [proc] at (3,-4) {\(D_i\)};
        \node(InputB2) [proc] at (5,-4) {\(\InP c {y_i}\)};
        \node(TestBoolB2) [test] at (5,-5) {\scalebox{\size}{\(y_i\in\B\)}};
        \node(ProcBi) [proc] at (5,-6) {\(\call{B_{i+1}}\)};

        \node(TitleDi) [title] at (10,0) {\(D_{i,~i\leqslant n}\)};
        \node(GuessD) [proc] at (10,-1) {\(\guessBinary{z_i}\)};
        \node(InputD) [proc] at (10,-2) {\(\InP c {y_i}\)};
        \node(CompDiff) [proc] at (10,-3) {\(\evalFormula{r_i}{(y_i=z_i)}\)};
        \node(ParaCircle) [op] at (10,-4) {};
        \node(Para) [proc] at (10,-4) {\(|\)};
        \node(TestD1) [test] at (8,-4) {\scalebox{\size}{\(r_i=1\)}};
        \node(TestD0) [test] at (12,-4) {\scalebox{\size}{\(r_i=0\)}};
        \node(CallA) [proc] at (8,-5) {\(\call{A_{i+1}}\)};
        \node(CallB) [proc] at (12,-5) {\(\call{B_{i+1}}\)};

        \node(TitleAend) [title] at (14.5,0) {\(A_{n+1}\)};
        \node(CompAend) [proc] at (14.5,-1) {\(\evalFormula v {\varphi(\vec x,\vec y)}\)};
        \node(SendAend) [proc] at (14.5,-2) {\(\OutP c v\)};

        \node(TitleBend) [title] at (14.5,-4) {\(B_{n+1}\)};
        \node(SendBend) [proc] at (14.5,-5) {\(\OutP c \0\)};

        \draw[-] (InputA) -- node[auto] {} (TestBoolA);
        \draw[->>] (TestBoolA) -- node[auto] {} (ProcDiA);

        \draw[-] (InputB) -- node[auto] {} (TestBoolB);
        \draw[->>] (TestBoolB) -- node[auto] {} (SumBCircle);
        \draw[->>] (SumBCircle) -- node[auto] {} (ProcDiB);
        \draw[->>] (SumBCircle) -- node[auto] {} (InputB2);
        \draw[-] (InputB2) -- node[auto] {} (TestBoolB2);
        \draw[->>] (TestBoolB2) -- node[auto] {} (ProcBi);

        \draw[->>] (GuessD) -- node[auto] {} (InputD);
        \draw[->>] (InputD) -- node[auto] {} (CompDiff);
        \draw[->>] (CompDiff) -- node[auto] {} (ParaCircle);
        \draw[-] (ParaCircle) -- node[auto] {} (TestD1);
        \draw[-] (ParaCircle) -- node[auto] {} (TestD0);
        \draw[->>] (TestD1) -- node[auto] {} (CallA);
        \draw[->>] (TestD0) -- node[auto] {} (CallB);

        \draw[->>] (CompAend) -- node[auto] {} (SendAend);
      \end{tikzpicture}
    }
    \caption{Schematic definition of \(A_i\) and \(B_i\)}
    \label{fig:eqobs pure hardness}
  \end{figure}

  \(A\) and \(B\) are defined inductively by processes \(A_i\), \(B_i\) and \(D_i\), depicted in Figure~\ref{fig:eqobs pure hardness},  structured in a way that, in a bisimulation game:
  \begin{enumerate}
  \item the attacker chooses the instance of \(x_i\);
  \item the defender chooses the instance of \(z_i\) and can force the attacker to instantiate \(y_i\) with the same value (the attacker not doing so allows for a trivial victory of the defender).
  \end{enumerate}
  The intermediary processes \(\call{A_i}\) and \(\call{B_i}\) intuitively formalise value passing from one index to another, in order to avoid an exponential blowup when encoding \(n\) nested tests.
  Their precise definition and the correctness of the reduction is formalised in Appendix~\ref{app:lower-pure}.
  As before, the hardness of the pre-order follows from the hardness of its symmetric closure.

  \begin{theorem}
    In the pure pi-calculus, \Simulation, \Similarity and \Bisimilarity are \pspace-hard for bounded positive processes.
  \end{theorem}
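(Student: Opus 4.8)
The plan is to prove \pspace-hardness by a polynomial-time reduction from \qbf, reusing the gadgets of Figure~\ref{fig:eqobs pure hardness}. Given a \qbf instance $\forall x_1 \exists y_1 \ldots \forall x_n \exists y_n.\,\psi(\vec x,\vec y)$, I would build $A = \call{A_1}$ and $B = \call{B_1}$ as drawn there, but instantiated with $\varphi = \neg\psi$ (one extra $\neg$-gate), and then apply Proposition~\ref{prop:encodings correct} to replace the circuit-evaluation, $\guessBinary{\cdot}$ and $+$ constructs by their pure-calculus encodings. Since those encodings are positive and use no else branch, $A$ and $B$ stay bounded positive pure processes, and the construction is polynomial. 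It then suffices to establish the biconditional~\eqref{eqn:reduction pspace hard}; \pspace-completeness of \qbf gives the conclusion for \Bisimilarity and \Similarity, the negation $\varphi=\neg\psi$ accounting for the switch between the condition ``$\varphi=\0$'' in~\eqref{eqn:reduction pspace hard} and ``$\psi=\1$'' in \qbf.

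The core of the argument is to read the bisimulation game between $A$ and $B$ as the \qbf game, and I would prove by induction on the round $i$ that the disprover wins from the level-$i$ configuration exactly when $\exists x_i \forall y_i \ldots \exists x_n \forall y_n.\,\varphi = \1$ holds under the partial valuation accumulated so far. Each round splits into two phases. First the disprover plays the input $\InP c {x_i}$, the boolean test $x_i \in \B$ discarding non-boolean recipes; this realises the disprover-side $\exists x_i$ choice. Then the prover replies through the guess-and-echo gadget $D_i$: the branch $\guessBinary{z_i}$ commits the prover to a value of $z_i$, and the subsequent test $\evalFormula{r_i}{(y_i = z_i)}$ forces the disprover to echo that value via $\InP c {y_i}$, since any mismatch lets the prover reach a state it can keep matching trivially. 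Thus the prover effectively \emph{controls} the bit $y_i$, realising the $\forall y_i$ choice. Finally the two parallel branches guarded by $r_i = \1$ and $r_i = \0$ route the game into $\call{A_{i+1}}$ or $\call{B_{i+1}}$; the extra summand of $B_i$ together with this routing is precisely what keeps the attacked process and the role assignment consistent across levels, so that quantifier alternation is preserved.

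At the base case $i = n+1$ the processes diverge: $A_{n+1}$ outputs $v = \varphi(\vec x,\vec y)$ whereas $B_{n+1}$ outputs $\0$. Hence the final pair of frames is statically equivalent iff the two emitted bits agree, i.e. iff $\varphi = \0$, so the disprover has a winning last move exactly when $\varphi = \1$. Unwinding the induction to $i = 1$ then yields $A \not\LabBis B$ iff $\exists x_1 \forall y_1 \ldots \exists x_n \forall y_n.\,\varphi = \1$, equivalently $A \LabBis B$ iff $\forall x_1 \exists y_1 \ldots \forall x_n \exists y_n.\,\varphi = \0$. The same analysis shows that a winning disproving strategy can always be taken to attack a single fixed side, so that $A \Simi B$ holds under the same condition, giving the \Similarity case. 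Finally \pspace-hardness of the preorder \Simulation follows from that of \Similarity as for \TraceInclus: by construction one of the two simulation inclusions between $A$ and $B$ is forced to hold, reducing a similarity query to a single simulation query (or one passes through the $+$-gadget as in the trace case).

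The hard part will be the faithful verification of the guess-and-echo gadget. One must check, by an exhaustive case analysis of the transitions available at every stage, that the disprover gains no advantage by deviating (playing a non-boolean recipe, or a $y_i \ne z_i$), and that the $r_i$-routing never opens a spurious escape for either player. Carrying this through cleanly requires a simultaneous induction tracking both the current pair of positions $(\call{A_i},\call{B_i})$ and the partial valuation of $\vec x,\vec y$, which is exactly the bookkeeping I would defer to Appendix~\ref{app:lower-pure}.
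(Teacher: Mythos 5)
Your proposal follows essentially the same route as the paper's own proof: the identical \qbf reduction built from the gadgets $A_i$, $B_i$, $D_i$ of Figure~\ref{fig:eqobs pure hardness}, the same guess-and-echo reading of the game (the prover commits via $\guessBinary{z_i}$ and the mismatch test $r_i$ forces the disprover to echo $y_i=z_i$), and the same transfer of hardness from \Similarity to the \Simulation preorder; the verification you defer is exactly what the paper carries out in Appendix~\ref{app:lower-pure} via a bisimulation up to simplification (Proposition~\ref{prop:bisim up to}) and a labelled attack valid against simulation (Proposition~\ref{prop:eqobs labelled disim}). Your explicit instantiation $\varphi=\neg\psi$ is a harmless cosmetic variation — the paper leaves this complementation implicit, \pspace being closed under complement.
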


\subsubsection{Reduction of SuccinctSAT to process equivalence}

We now show that, when cryptographic primitives are modelled by a destructor subterm convergent rewrite system, \TraceEquiv, \TraceInclus, \Simulation, \Similarity and \Bisimilarity are co\nexp hard by reducing SuccinctSAT to process equivalence. Consider an instance of \sucsat, \(\Gamma\), with \(m+2\) inputs and \(n+1\) outputs and we design \(\sig\), \(\R\) subterm destructor and \(A\) and \(B\) positive processes such that, for any equivalence relation \({\approx} \in \{\Simi,\TraceEq,\LabBis\}\), \(A\not\approx B\) \textit{iff} \(\sem\Gamma_\varphi\) is satisfiable.

\paragraph{Term algebra}
  Terms are built over the following signature:
  \begin{align*}
    \sig \eqdef~~& \0,~\1, & & \mbox{(booleans \(\B\))}\\
               & \Node/2,~\pi/2, & & \mbox{(binary trees)}\\
               & \hfun/2, & & \mbox{(one-way binary hash)}\\
               & \hNode/2,~\hBool/2,~\invN/1,~\invB/1 & & \mbox{(testable binary hashes)}
  \end{align*}

  We equip this term algebra with the rewriting system \(E\) containing the following rules modelling subtree extraction (for binary trees) and argument testing (for hashes):
  \begin{align*}
    \pi(\Node(x,y),\0)\rightarrow~&x
    & \pi(\Node(x,y),\1)\rightarrow~&y\\
    \invN(\hNode(\Node(x,y),z))\rightarrow~&\1
    & \invB(\hBool(\0,z))\rightarrow~&\1 & \invB(\hBool(\1,z))\rightarrow~&\1
  \end{align*}

  In particular \(\R\) is subterm and destructor, the destructor symbols being \(\pi\), \(\invN\) and \(\invB\).
  We will also use a shortcut for recursive subtree extraction:
  if \(\ell\) is a finite sequence of first-order terms, the notation \(\recpos t \ell\) is inductively defined by:
  \begin{align*}
    \recpos t \epsilon &\eqdef t & \recpos t {b \cdot \ell} &\eqdef\recpos{\pi(t,b)}\ell
  \end{align*}

\paragraph{Core of the reduction}
  Let us give the intuition behind the construction before diving into the formalism.
  Recall that we are studying a formula in CNF \(\sem\Gamma_\varphi\) with \(2^n\) variables and \(2^m\) clauses. In particular, given a valuation of its \( 2^n\) variables, we can verify in non-deterministic polynomial time in \(n,m\) that it falsifies \(\sem\Gamma_\varphi\):
  \begin{enumerate}
    \item guess an integer \(i\in\eint 0 {2^m-1}\) as a sequence of \(m\) bits;
    \item obtain the three literals of the \(i\)\textsuperscript{th} clause of \(\sem\Gamma_\varphi\) (requiring three runs of the circuit \(\Gamma\)) and verify that the valuation falsifies the disjunction of the three literals.
  \end{enumerate}

  \noindent This non-deterministic verification is the essence our reduction. In the actual processes:
  \begin{enumerate}
    \item a process \(\CheckTree x\) checks that \(x\) is a correct encoding of a valuation, that is, that \(x\) is a complete binary tree of height \(n\) whose leaves are booleans;
    \item a process \(\CheckSat x\) implements the points {\it 1.} and {\it 2.} above.
  \end{enumerate}

  All of this is then formulated as equivalence properties within \(A\) and \(B\) (see the intermediary lemmas in the next paragraph for details). Intuitively, we want to express the following statement by equivalence properties: ``{\it for all term \(x\), either \(x\) is not an encoding of a valuation or falsifies a clause of \(\sem\Gamma_\varphi\)}''.
  A schematised definition is proposed in Figure \ref{fig:nexp reduction}.

  \begin{figure}[!ht]
    \centering
    \scalebox{0.80}
    {
      \begin{tikzpicture}
        [
          op/.style={draw,circle,fill=cyan!50,minimum size=1.6em},
          test/.style={draw,rectangle},
          proc/.style={},
          title/.style={draw,rectangle,fill=cyan!50}
        ]
        \newcommand\size{0.9}

        \node(TitleA) [title] at (-1.5,0) {\(A\)};
        \node(InputA) [proc] at (-1.5,-1) {\(\InP c x\)};
        \node(SumCircleA) [op] at (-1.5,-2) {};
        \node(SumA) [proc] at (-1.5,-2) {\(+\)};
        \node(CallA) [proc] at (-3,-3) {\(\CheckSat x\)};
        \node(VerifA) [proc] at (0,-3) {\(\CheckTree x\)};

        \node(TitleB) [title] at (-1.5,-4.1) {\(B\)};
        \node(InputB) [proc] at (-1.5,-5.1) {\(\InP c x\)};
        \node(SumCircleB) [op] at (-1.5,-6.1) {};
        \node(SumB) [proc] at (-1.5,-6.1) {\(+\)};
        \node(CallB) [proc] at (-3,-7.1) {\(\CheckSat x\)};
        \node(VerifB) [proc] at (0,-7.1) {\(\CheckTree x\)};
        \node(OutputB1) [proc] at (-1.5,-8.1) {\(\OutP c {\hfun(\0,s)}\)};
        \node(OutputB2) [proc] at (-1.5,-9.1) {\(\OutP c {\hfun(\1,s)}\)};

        \node(TitleP) [title] at (4,0) {\(\CheckSat x\)};
        \node(ChooseP) [proc] at (4,-1) {\(\guessBinary{p_1,\ldots,p_m}\)};
        \node(Lit1) [proc] at (4,-2) {\(\evalFormula{b_1,\ell_1}{\Gamma(\vec p,\0,\1)}\)};
        \node(Lit2) [proc] at (4,-3) {\(\evalFormula{b_2,\ell_2}{\Gamma(\vec p,\1,\0)}\)};
        \node(Lit3) [proc] at (4,-4) {\(\evalFormula{b_3,\ell_3}{\Gamma(\vec p,\1,\1)}\)};
        \node(Eval) [proc] at (4,-5.8) {\(\evalFormula v {\left(\begin{array}{l}b_1=\recpos x {\ell_1}\\~\vee~ b_2=\recpos x{\ell_2}\\~\vee~b_3=\recpos x {\ell_3}\end{array}\right)}\)};
        \node(OutputP1) [proc] at (4,-7.6) {\(\OutP c {\hfun(v,s)}\)};
        \node(OutputP2) [proc] at (4,-8.6) {\(\OutP c {\hfun(\1,s)}\)};

        \node(TitleQ) [title] at (10,0) {\(\CheckTree x\)};
        \node(SumCircleQ) [op] at (10,-1) {};
        \node(SumQ) [proc] at (10,-1) {\(+\)};
        \node(VerifNode) [proc] at (8.2,-2) {\(\sum_{i=0}^{n-1}\)};
        \node(ChooseVN1) [proc] at (8.2,-3) {\(\guessBinary{p_1,\ldots,p_i}\)};
        \node(OutputVN1) [proc] at (8.2,-4) {\(\OutP c {\hNode(\recpos x {\vec p},s)}\)};
        \node(OutputVN2) [proc] at (8.2,-5) {\(\OutP c {\hfun(\1,s)}\)};
        \node(Invi)[fill,circle,scale=0.3] at (11.8,-2) {};
        \node(VerifBool) [proc] at (11.8,-3) {\(\guessBinary{p_1,\ldots,p_n}\)};
        \node(OutputVB1) [proc] at (11.8,-4) {\(\OutP c {\hBool(\recpos x {\vec p},s)}\)};
        \node(OutputVB2) [proc] at (11.8,-5) {\(\OutP c {\hfun(\1,s)}\)};

        \draw[->>] (InputA) -- node[auto] {} (SumCircleA);
        \draw[->>] (SumCircleA) -- node[auto] {} (CallA);
        \draw[->>] (SumCircleA) -- node[auto] {} (VerifA);

        \draw[->>] (InputB) -- node[auto] {} (SumCircleB);
        \draw[->>] (SumCircleB) -- node[auto] {} (CallB);
        \draw[->>] (SumCircleB) -- node[auto] {} (VerifB);
        \draw[->>] (SumCircleB) -- node[auto] {} (OutputB1);
        \draw[->>] (OutputB1) -- node[auto] {} (OutputB2);

        \draw[->>] (ChooseP) -- node[auto] {} (Lit1);
        \draw[->>] (Lit1) -- node[auto] {} (Lit2);
        \draw[->>] (Lit2) -- node[auto] {} (Lit3);
        \draw[->>] (Lit3) -- node[auto] {} (Eval);
        \draw[->>] (Eval) -- node[auto] {} (OutputP1);
        \draw[->>] (OutputP1) -- node[auto] {} (OutputP2);

        \draw[->>] (SumCircleQ) -- node[auto] {} (VerifNode);
        \draw[-] (SumCircleQ) -- node[auto] {} (Invi);
        \draw[->>] (Invi) -- node[auto] {} (VerifBool);
        \draw[-,dotted] (VerifNode) -- node[auto] {} (ChooseVN1);
        \draw[->>] (ChooseVN1) -- node[auto] {} (OutputVN1);
        \draw[->>] (OutputVN1) -- node[auto] {} (OutputVN2);
        \draw[->>] (VerifBool) -- node[auto] {} (OutputVB1);
        \draw[->>] (OutputVB1) -- node[auto] {} (OutputVB2);
      \end{tikzpicture}
    }
    \caption{Informal definition of \(A\) and \(B\)}
    \label{fig:nexp reduction}
  \end{figure}

\paragraph{Formal construction}
  Let us now define the processes depicted in Figure \ref{fig:nexp reduction} properly; note that all the proofs about the correctness of this construction are relegated to Appendix \ref{app:termination} but we still state several intermediary lemmas in order to highlight the proof structure.
  But first of all, let us give a name to a frame which is at the core of our reduction:
  \[\Phi_\0=\{\ax_1\mapsto\hfun(\0,s),~\ax_2\mapsto\hfun(\1,s)\}\]

  \(\Phi_\0\) is reached after executing the central branch of \(B\) and everything is about knowing under which conditions a frame statically equivalent to \(\Phi_\0\) can be reached in \(A\). Let us define the processes themselves now. We fix \(s\in\Nall\) and define, if \(x\) is a protocol term:
  \begin{align*}
    \CheckTree x \eqdef~& \displaystyle\sum_{i=0}^{n-1}\left(~\guessBinary{p_1,\ldots,p_i}.~\OutP c {\hNode(\recpos x{p_1\cdots p_i},s)}.~\OutP c {\hfun(\1,s)}~\right)\\
    & + \guessBinary{p_1,\ldots,p_n}.~\OutP c {\hBool(\recpos x{p_1\cdots p_n},s)}.~\OutP c {\hfun(\1,s)}
  \end{align*}

  \begin{proposition}[restate=propCorrectnessChecktree,name={correctness of the tree checker}]
    \label{prop:correctness checktree}
    Let \(x\) be a message which is not a complete binary tree of height \(n\) with boolean leaves.
    Then there exists a reduction \(\CheckTree x\Cstep {\epsilon} (\multi P,\emptyset)\) such that \(P \LabBis \OutP c {\hfun(\0,s)}.\, \OutP c {\hfun(\1,s)}\).
  \end{proposition}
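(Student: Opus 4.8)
The plan is to exhibit, for an invalid input $x$, a single run of $\CheckTree x$ consisting only of silent transitions that commits to one summand, guesses a concrete bit string, and halts just before its first output, landing on a two-output process whose outputs are statically indistinguishable from $\hfun(\0,s)$ followed by $\hfun(\1,s)$. I work throughout in the extended calculus so that the internal choice $+$ and the $\guessBinary{\cdot}$ construct reduce cleanly by rules \ref{rule:choice} and \ref{rule:choose-0}/\ref{rule:choose-1}, leaving no residual processes; since both the target $\OutP c{\hfun(\0,s)}.\OutP c{\hfun(\1,s)}$ and the reached process contain no extended operators, the resulting $\LabBis$ is the ordinary base-calculus relation (and transfers by \Cref{prop:encodings correct} if needed).

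First I would set up a navigation characterisation of valid encodings. For a bit string $w$ with $|w|\leqslant n$, call $w$ \emph{reachable} when $\recpos x w$ is a message, and a \emph{defect} when it is reachable and either $|w|<n$ with $\recpos x w\norm$ not of the form $\Node(\cdot,\cdot)$, or $|w|=n$ with $\recpos x w\norm$ not a boolean. A straightforward induction on $|w|$ shows reachability propagates (a reachable $\Node$ at depth $<n$ makes both children reachable), so if there were no defect then every $w$ with $|w|\leqslant n$ would be reachable, all internal positions $\Node$-rooted and all depth-$n$ positions boolean, i.e.\ $x$ would be a complete binary tree of height $n$ with boolean leaves. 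Contrapositively, an invalid $x$ admits a defect $w$. I then fix such a $w$ and run the matching summand of $\CheckTree x$: the $i$-th summand with $i=|w|$ if the defect is internal, otherwise the final summand, guessing $p_1\cdots p_{|w|}=w$. Since $w$ is reachable, $\recpos x w$ is a message, so $H\eqdef\hNode(\recpos x w,s)$ (resp.\ $\hBool(\recpos x w,s)$) is a message and the run reaches $(\multi{P},\emptyset)$ with $P=\OutP c H.\OutP c{\hfun(\1,s)}$ by silent steps only, the frame still empty as no output has fired.

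Next I would prove the two static-equivalence facts that drive the bisimulation. Writing $m=\recpos x w\norm$, the defect condition says $m$ is not $\Node$-rooted (resp.\ not a boolean), hence \emph{no} destructor of $\R$ applies at the root of $H$: the destructors are only $\pi,\invN,\invB$, where $\invN$ needs $\hNode(\Node(\cdot,\cdot),\cdot)$, $\invB$ needs $\hBool(\0/\1,\cdot)$, and $\pi$ needs a $\Node$. Since $s$ is a private name the attacker cannot place in a recipe and (as in the construction, where $x$ is an attacker input over the empty frame) does not occur in $x$, both $H\norm$ and $\hfun(\0,s)$ are \emph{sealed}: every recipe either fails identically on the two frames or treats the axiom as an opaque atom, so no recipe separates $H\norm$ from $\hfun(\0,s)$. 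This gives $\{\ax_1\mapsto H\norm\}\StatEq\{\ax_1\mapsto\hfun(\0,s)\}$; adding the identical binding $\ax_2\mapsto\hfun(\1,s)$ to both sides preserves equivalence (the only new tests are equalities involving $\ax_2$, and in both frames the two axioms are never identified), yielding $\{\ax_1\mapsto H\norm,\ax_2\mapsto\hfun(\1,s)\}\StatEq\Phi_\0$.

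Finally I would close with an explicit bisimulation. Both $P$ and $Q_0\eqdef\OutP c{\hfun(\0,s)}.\OutP c{\hfun(\1,s)}$ are deterministic sequences of two outputs on the public channel $c$ and have no other transitions, so writing $R=\OutP c{\hfun(\1,s)}$, $\Phi_1=\{\ax_1\mapsto H\norm\}$, $\Psi_1=\{\ax_1\mapsto\hfun(\0,s)\}$, $\Phi_2=\Phi_1\cup\{\ax_2\mapsto\hfun(\1,s)\}$ and $\Psi_2=\Phi_\0$, the symmetric closure of
\[\left\{\big((\multi{P},\emptyset),(\multi{Q_0},\emptyset)\big),\ \big((\multi{R},\Phi_1),(\multi{R},\Psi_1)\big),\ \big((\multi{0},\Phi_2),(\multi{0},\Psi_2)\big)\right\}\]
is a labelled bisimulation: the static-equivalence requirement at each pair is exactly the two facts above (plus triviality for empty frames), and each output move of one side is matched by the identical move of the other, landing back in the relation. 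Hence $P\LabBis Q_0$, which concludes. The main obstacle is the sealed-term static-equivalence claim: formalising that a constructor term sealed under the private $s$ with no root-applicable destructor is indistinguishable from $\hfun(\0,s)$ requires a careful induction over recipe contexts to rule out that some deep destructor application or equality test separates the two frames; everything else is routine bookkeeping over the navigation of $x$.
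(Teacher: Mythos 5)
Your proposal is correct and follows essentially the same route as the paper: you locate a failing position of \(x\) (the paper's two cases, which you merge into a single ``defect'' via an explicit reachability induction), fire the matching summand of \(\CheckTree x\) guessing that position, and reduce bisimilarity of the resulting two-output process to static equivalence of the sealed hash frames---exactly the content of the paper's Lemma~\ref{lem:nexp st-equiv}, whose critical-pair induction is the ``careful induction over recipe contexts'' you defer. The only cosmetic differences are that you spell out the closing bisimulation relation explicitly (the paper leaves this step implicit after citing its lemma) and that you prove the one-element frame equivalence first and then extend by adding the \(\ax_2\) binding, whereas the paper proves the two-element statement directly; since adding a binding does not preserve static equivalence in general, that extension step is best folded into the main recipe induction, as the paper's lemma does.
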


  \noindent
  Now let us move on to \(\CheckSat x\). This process binds a lot of variables:
  \begin{enumerate}
    \item \(\vec p=p_1,\ldots,p_m\) models the non-deterministic choice of a clause number in \(\eint 0 {2^m-1}\);
    \item \(b_i,\ell_i\), \(i\in\eint 1 3\), where \(\ell_i\) is a sequence of \(n\) variables, model the literals of the clause chosen above (\(b_i\) is the negation bit and \(\ell_i\) the identifier of the variable);
    \item \(v\) stores whether the chosen clause is satisfied by the valuation modelled by \(x\).
  \end{enumerate}
  \[\begin{array}{rl}
    \CheckSat x\eqdef~&\guessBinary{\vec p}.\\
    &\evalFormula{b_1,\ell_1}{\Gamma(\vec p,\0,\1)}.\\
    &\evalFormula{b_2,\ell_2}{\Gamma(\vec p,\1,\0)}.\\
    &\evalFormula{b_3,\ell_3}{\Gamma(\vec p,\1,\1)}.\\
    &\evalFormula v {(b_1=\recpos x{\ell_1} \vee~ b_2=\recpos x{\ell_2} \vee~ b_3=\recpos x{\ell_3})}.\\
    &\OutP c{\hfun(v,s)}.\OutP c{\hfun(\1,s)}
  \end{array}\]

  \begin{proposition}[restate=propCorrectnessChacksat,name={correctness of the sat checker}]
    \label{prop:correctness checksat}
    Let \(x\) be a complete binary tree of height \(n\) whose leaves are booleans, and \(\val_x\) be the valuation mapping the variable number \(i\) of \(\sem\Gamma_\varphi\) to \(\recpos x {p_1 \cdots p_n} \in \B\) where \(p_1 \cdots p_n\) is the binary representation of \(i\) (i.e., \(i = \sum_{k=1}^np_k2^{k-1}\)).
    If \(\val_x\) does not satisfy \(\sem\Gamma_\varphi\) then there exists \(\CheckSat x \Cstep \epsilon P\) such that \(P \LabBis \OutP c {\hfun(\0,s)}.\, \OutP c {\hfun(\1,s)}\).
  \end{proposition}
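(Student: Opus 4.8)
The plan is to exhibit one specific sequence of silent reductions of $\CheckSat x$ that selects a falsified clause and whose residual process is \emph{syntactically} the target $\OutP c{\hfun(\0,s)}.\,\OutP c{\hfun(\1,s)}$. Since $\val_x$ does not satisfy the CNF formula $\sem\Gamma_\varphi$, some clause of index $i^\star \in \eint{0}{2^m-1}$ is falsified by $\val_x$; let $\vec p^\star = p_1^\star\cdots p_m^\star \in \B^m$ be its binary representation. First I would fire the $m$ silent steps \eqref{rule:choose-0}/\eqref{rule:choose-1} arising from $\guessBinary{\vec p}$, picking at step $k$ the bit $p_k^\star$, so that $\vec p$ is instantiated to $\vec p^\star$. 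Then I would fire the three \eqref{rule:valuate} steps evaluating $\Gamma(\vec p^\star,\cdot)$: by construction of the succinct encoding these produce, for $j \in \{1,2,3\}$, the negation bit $b_j \in \B$ and the $n$-bit variable address $\ell_j \in \B^n$ of the $j$-th literal of clause $i^\star$. These three steps are licensed because their circuit inputs are the booleans $\vec p^\star$ together with the fixed selector constants $\0,\1$, so the side-conditions $\msg(\cdot)$ and $\cdot\norm \subseteq \B$ of \eqref{rule:valuate} hold.

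The delicate step is the final \eqref{rule:valuate}, computing $v$ from the boolean circuit $(b_1 = \recpos x{\ell_1}) \vee (b_2 = \recpos x{\ell_2}) \vee (b_3 = \recpos x{\ell_3})$, as I must again discharge the two side-conditions of the rule for each input. For the $b_j$ this is immediate. For the extraction terms $\recpos x{\ell_j}$ I would invoke the hypothesis that $x$ is a complete binary tree of height $n$ with boolean leaves: since each $\ell_j$ is now a concrete $n$-bit address, a straightforward induction on $n$ using the rewrite rules $\pi(\Node(u,w),\0)\to u$ and $\pi(\Node(u,w),\1)\to w$ shows that $\recpos x{\ell_j}$ is a message whose normal form is precisely the leaf of $x$ at address $\ell_j$, hence a boolean. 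This is exactly where the tree-shape hypothesis is indispensable: without it a projection $\pi$ could get stuck on a non-$\Node$ subterm and violate $\msg$ (this is the very property that $\CheckTree$ enforces in the full reduction).

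It remains to read off the boolean value of $v$. Writing $\recpos x{\ell_j}\norm = \val_x(\ell_j)$, the circuit evaluates to the truth value of $\val_x(\ell_1) = b_1 \vee \val_x(\ell_2) = b_2 \vee \val_x(\ell_3) = b_3$, which by the literal convention of $\sem\Gamma_\varphi$ is exactly the value of clause $i^\star$ under $\val_x$. As $i^\star$ is falsified this value is $\0$, so the last \eqref{rule:valuate} step substitutes $v \mapsto \0$. After these $m+4$ unobservable steps the residual is literally $\OutP c{\hfun(\0,s)}.\,\OutP c{\hfun(\1,s)}$ with empty frame; taking $P$ to be this process gives $P \LabBis \OutP c{\hfun(\0,s)}.\,\OutP c{\hfun(\1,s)}$ by reflexivity (and, if one reads the statement in the pure calculus after applying $\sem\cdot$, the same reductions proceed as internal communications and the conclusion transfers through Proposition~\ref{prop:encodings correct}). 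The main obstacle is thus confined to this paragraph and the previous one: correctly matching the output of the succinct circuit to the notion of clause falsification, and discharging the $\msg$/boolean side-conditions of \eqref{rule:valuate} from the tree-shape of $x$; the rest is a purely deterministic, routine reduction.
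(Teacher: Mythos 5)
Your proposal is correct and follows essentially the same route as the paper's proof: exhibit the concrete silent reduction that guesses the binary index of a falsified clause, runs the circuit to extract the three literals, and evaluates the disjunction to \(\0\), leaving exactly \(\OutP c {\hfun(\0,s)}.\,\OutP c {\hfun(\1,s)}\). You merely spell out details the paper elides (the \(\msg\)/boolean side-conditions of \eqref{rule:valuate} and the induction showing each \(\recpos x{\ell_j}\) normalises to the corresponding boolean leaf), which is a welcome but not divergent refinement.
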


  We can finally wrap up everything by defining \(A\) and \(B\) and stating the last part of the correctness theorem.
  We recall that all the proofs can be found in Appendix \ref{app:termination}.
  \begin{align*}
    A\eqdef~&\InP c x.(\CheckSat x ~+~ \CheckTree x)\\
    B\eqdef~&\InP c x.(\CheckSat x ~+~ \CheckTree x ~+~ \OutP c {\hfun(\0,s)}.\OutP c {\hfun(\1,s)})
  \end{align*}

  \begin{proposition}[restate=propCorrectnessCheckall,name={correctness of the reduction}]
    \label{prop:correctness checkall}
    For any equivalence relation \({\approx} \in \{\Simi,\TraceEq,\LabBis\}\), \(\sem\Gamma_\varphi\) is satisfiable \textit{iff} \(A\not\approx B\).
  \end{proposition}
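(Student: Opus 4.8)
The plan is to prove the two implications separately and glue them with the hierarchy $\LabBis \subset \Simi \subset \TraceEq$ recalled after Definition~\ref{def:(bi)simulation}. For the forward direction I will show that satisfiability of $\sem\Gamma_\varphi$ yields $A \not\TraceEq B$; since non-trace-equivalence implies both non-similarity and non-bisimilarity, this covers all $\approx \in \{\Simi,\TraceEq,\LabBis\}$ at once. For the converse I will show that unsatisfiability yields $A \LabBis B$, which by the same hierarchy entails $A \Simi B$ and $A \TraceEq B$. Hence it suffices to establish these two extreme cases.

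The backbone of both directions is a static-equivalence analysis of the frames reachable after two outputs, compared to $\Phi_\0 = \{\ax_1 \mapsto \hfun(\0,s), \ax_2 \mapsto \hfun(\1,s)\}$. The key facts, all immediate from the rewrite rules, are: (i) since $\hfun$ is destructor-free and $s$ is private, $\hfun(\0,s)$ and $\hfun(\1,s)$ pass no test, so a term $\hNode(t,s)$ (resp.\ $\hBool(b,s)$) is statically equivalent to $\hfun(\0,s)$ exactly when it fails the relevant test, i.e.\ when $t$ is not of the form $\Node(\cdot,\cdot)$ (for $\invN$) resp.\ $b \notin \{\0,\1\}$ (for $\invB$); and (ii) the frame $\{\ax_1 \mapsto \hfun(\1,s), \ax_2 \mapsto \hfun(\1,s)\}$ is distinguished from $\Phi_\0$ by the recipes $\ax_1, \ax_2$, which are equal in the former but unequal in $\Phi_\0$. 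These observations are precisely what underlie Propositions~\ref{prop:correctness checktree} and~\ref{prop:correctness checksat}, which I will use as black boxes: if $x$ is a malformed encoding then $\CheckTree x$ can silently reach a process bisimilar to $\OutP c {\hfun(\0,s)}.\OutP c {\hfun(\1,s)}$, and if $x$ encodes a valuation falsifying $\sem\Gamma_\varphi$ then $\CheckSat x$ can do likewise.

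For the forward direction, from a satisfying valuation I build a complete binary tree $x_0$ of height $n$ with boolean leaves (deducible since $\0,\1,\Node \in \sigc \cup \sig_0$), so that $\val_{x_0}$ satisfies $\sem\Gamma_\varphi$. In $B$ the disprover inputs $x_0$ and takes the central branch, reaching $\Phi_\0$ after two outputs. I then argue $A$ has no trace with the same visible actions ending in a frame statically equivalent to $\Phi_\0$: after inputting $x_0$, process $A$ must proceed through $\CheckSat{x_0}$ or $\CheckTree{x_0}$, and the contrapositives of the two propositions apply. Indeed $\CheckTree{x_0}$ emits, in every branch, a first message passing $\invN$ or $\invB$ (as $x_0$ is well-formed), while $\CheckSat{x_0}$ emits only $\{\hfun(\1,s),\hfun(\1,s)\}$ (as every clause is satisfied, so $v=\1$), each distinguishable from $\Phi_\0$ by observation (i) or (ii) above. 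Hence $B \not\TraceIncl A$, so $A \not\TraceEq B$.

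For the converse I will exhibit a labelled bisimulation witnessing $A \LabBis B$. Since every branch of $A$ is literally a branch of $B$, the processes are symmetric up to the extra central branch $\OutP c {\hfun(\0,s)}.\OutP c {\hfun(\1,s)}$ of $B$. I relate each pair of identical sub-executions to itself, and pair $B$'s central branch (after any input $x$) with a matching execution of $A$: by unsatisfiability, for every well-formed $x$ the valuation $\val_x$ falsifies some clause, so $\CheckSat x$ reaches a process bisimilar to $\OutP c {\hfun(\0,s)}.\OutP c {\hfun(\1,s)}$ by Proposition~\ref{prop:correctness checksat}, and for every malformed $x$ the process $\CheckTree x$ does so by Proposition~\ref{prop:correctness checktree}; either way $A$ answers $B$'s move with a bisimilar continuation. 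Verifying that this candidate relation is closed under all transitions — including the internal $\tau$-steps introduced by the encodings of non-deterministic choice, boolean guessing and circuit evaluation — is the main obstacle. The delicate point is precisely that the two propositions yield genuine bisimilarity rather than a single statically equivalent frame, which is what allows the matching continuations to be related inside the bisimulation at all depths instead of merely at the level of the reached frame.
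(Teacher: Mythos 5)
Your proposal is correct and follows essentially the same route as the paper's proof: the forward direction exhibits $B$'s central-branch trace $\InP c {x_0}.\OutP c {\ax_1}.\OutP c {\ax_2}$ for an encoding $x_0$ of a satisfying valuation and uses Lemma~\ref{lem:nexp st-equiv} to show it is unmatched in $A$ (settling all three relations since $\TraceEq$ is the coarsest), while the converse proceeds by contraposition, constructing a bisimulation up to the encodings' silent steps from the syntactic inclusion of $A$'s branches in $B$'s together with Propositions~\ref{prop:correctness checktree} and~\ref{prop:correctness checksat}. The only cosmetic difference is that the paper's relation handles the matched continuations by pairing the remaining output sequences under statically equivalent frames, where you invoke the propositions' bisimilarity statements directly.
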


  As a conclusion we obtain the co\nexp hardness of equivalence properties (and their respective pre-orders as a consequence) for constructor-destructor subterm convergent theories.
  This is stated by the theorem below, which additionally puts an emphasis on the fact that the rewriting system used in our reduction is constant, that is, it does not depend on \(\Gamma\).

  \begin{theorem}[restate=thmDeepsecHardness,name={hardness of equivalences}] \label{thm:conexp-finite}
    There exists a fixed constructor-destructor subterm convergent rewriting system \(\R\) such that the decision problems \(\R\)--\TraceEquiv, \(\R\)--\TraceInclus, \(\R\)--\Simulation, \(\R\)--\Similarity and \(\R\)--\Bisimilarity are co\nexp hard for bounded positive processes.
  \end{theorem}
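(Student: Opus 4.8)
The plan is to assemble the fixed signature $\sig$, the fixed destructor subterm convergent rewriting system $\R$, and the bounded positive processes $A$ and $B$ already constructed in this section into a single polynomial-time many-one reduction from the complement of \sucsat. Since \sucsat is \nexp-complete (a classical fact), its complement is co\nexp-complete, hence co\nexp-hard; it therefore suffices to exhibit, for every instance $\Gamma$, processes computable in polynomial time in $|\Gamma|$ such that the relevant relation holds \emph{iff} $\sem\Gamma_\varphi$ is unsatisfiable.

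First I would invoke Proposition~\ref{prop:correctness checkall}, which gives $A \not\approx B$ \emph{iff} $\sem\Gamma_\varphi$ is satisfiable for $\approx \in \{\Simi,\TraceEq,\LabBis\}$, equivalently $A \approx B$ \emph{iff} $\sem\Gamma_\varphi$ is unsatisfiable. It then remains to check that $\Gamma \mapsto (A,B)$ is polynomial. The processes of Figure~\ref{fig:nexp reduction} have size polynomial in $n$, $m$ and $|\Gamma|$ — crucially, $\CheckSat$ \emph{guesses} a clause index in $\eint 0 {2^m-1}$ via $\guessBinary{\vec p}$ and runs the circuit $\Gamma$ three times, instead of expanding the $2^m$ clauses — and they only use the sum, the $\guessBinary{\cdot}$ operator and circuit evaluation. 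By Proposition~\ref{prop:encodings correct} these constructs are eliminated by the polynomial-time translation $\sem\cdot$ while preserving $\TraceEq$ and $\LabBis$ (and hence $\Simi = \Simu \cap \Simuinv$), and the translation introduces no $\ElseP$ branch, so the produced processes are genuinely bounded and positive. As $\sig$ and $\R$ are defined independently of $\Gamma$, this establishes the co\nexp-hardness of \TraceEquiv, \Similarity and \Bisimilarity with a single, \emph{fixed} $\R$.

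To extend the result to the pre-orders \TraceInclus and \Simulation, I would exploit the syntactic shape of the reduction: $B$ is obtained from $A$ by adjoining one extra summand (the central branch reaching $\Phi_\0$), so the non-deterministic choices of $A$ form a sub-family of those of $B$. Consequently $A \TraceIncl B$ and $A \Simu B$ hold \emph{unconditionally}, $B$ being able to answer any move of $A$ by selecting the matching summand while keeping identical frames. Hence $A \TraceEq B$ iff $B \TraceIncl A$, and $A \Simi B$ iff $B \Simu A$. Combined with Proposition~\ref{prop:correctness checkall}, this yields that $B \TraceIncl A$ (resp. $B \Simu A$) holds \emph{iff} $\sem\Gamma_\varphi$ is unsatisfiable, so the map $\Gamma \mapsto (B,A)$ is a polynomial reduction witnessing co\nexp-hardness of \TraceInclus and \Simulation as well.

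The genuine mathematical content — why $A$ can reach a frame statically equivalent to $\Phi_\0$ exactly when its input fails to encode a satisfying valuation, through $\CheckTree$ flagging malformed trees and $\CheckSat$ flagging a falsified clause — is already packaged in Propositions~\ref{prop:correctness checktree}, \ref{prop:correctness checksat} and~\ref{prop:correctness checkall}, so this final step is mostly bookkeeping. The point that most deserves care is the polynomiality of the reduction: one must confirm that neither the $2^m$ clauses nor the $2^n$ leaf valuations are ever written out, which is exactly why the construction guesses indices and re-runs the succinct circuit $\Gamma$ rather than manipulating an expanded Boolean formula. The unconditional inclusions $A \TraceIncl B$ and $A \Simu B$ also deserve to be made explicit, but they follow immediately from the ``extra summand'' structure of $B$ relative to $A$.
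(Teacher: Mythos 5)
Your proposal is correct and takes essentially the same approach as the paper: it assembles Proposition~\ref{prop:correctness checkall} with the fixed rewriting system and the polynomial-time, else-free encodings of Proposition~\ref{prop:encodings correct}, observing that \(\CheckSat\) guesses clause indices rather than expanding the \(2^m\) clauses. Your explicit treatment of the pre-orders---the unconditional inclusions \(A \TraceIncl B\) and \(A \Simu B\) coming from the extra-summand structure of \(B\), so that equivalence holds \emph{iff} the reverse inclusion \(B \TraceIncl A\) (resp.\ \(B \Simu A\)) does---is simply a fleshed-out version of the paper's one-line remark that hardness of each pre-order follows from hardness of its symmetric closure.
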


\section{Conclusion and future work}\label{sec:conclusion}


In this article we have studied automated verification of equivalence properties, encompassing both theoretical and practical aspects. 
We provide tight complexity results for static equivalence, trace equivalence and labelled (bi)similarity (as well as their respective pre-orders), summarised in~Table~\ref{fig:summary}.
In particular we show that deciding trace equivalence and labelled (bi)similarity for a bounded number of sessions is co\nexp complete for subterm convergent destructor rewrite systems.
Finally, we implement the procedure for deciding trace equivalence in the \deepsec prototype.
As demonstrated through an extensive benchmark (Table \ref{fig:bench}), our tool is broad in scope and efficient compared to other tools.

Our work opens several directions for future work. 
It would be interesting to lift the restriction of subterm convergent equational theories to allow for more cryptographic primitives.
Similarly, we plan to avoid the restriction to destructor rewrite systems to more general ones.
Also, in recent work~\cite{CCK-csf22} it was shown that labelled similarity is the same relation as \emph{may testing equivalence} in presence of a probabilistic adversary which motivates the extension of our implementation beyond trace equivalence.
The presented procedure for (bi)similarity is however highly non-deterministic and a naive implementation would certainly be inefficient.

Finally, we also plan to extend the \deepsec tool with support for other types of properties.
The extension provided in this article to simulation and other security relations shows the modularity of our core proof technique, the partition tree, for analysing security properties.
For example, a simplified version of the tree could be used to verify more classical (and simpler) \emph{trace properties}, which would significantly rise the scope and usability of \deepsec.
More generally, since navigation within the tree already allows to verify the complex notion of bisimilarity, we expect that the technique should scale to \emph{hyperproperties} in general.
There are few formalisms and results for such properties in the context of security protocols, but hyperlogics fitting our symbolic model have recently been introduced~\cite{BDM22}. 
They allow for example to model fine variants of equivalence relations to capture subtle hypotheses, and their combination to liveness or real-time properties.
We expect that our proof techniques would allow to study the decidability and complexity of a large fragment of such logics.


\printbibliography

\newpage 
\appendix


\section{Decision procedures using partition trees} \label{app:decision-proc-from-ptree}

  We detail in this appendix the technical proofs of correctness of the decision procedures for equivalences of Section~\ref{sec:ptree-eq}, assuming a partition tree \(T\) priorly constructed.

  \subsection{Trace equivalence} \label{app:decision-proc-trace-from-ptree}

    We prove in this section the following theorem:

    \thmTraceEquivPtree*

    The proof of this theorem relies on two technical lemmas extending the properties of the partition tree edges to its branches, i.e., from \(\tstep{}\) to \(\Tstep{}\).
    For example we can generalise as follows the fact that the nodes of the tree are labelled by maximal configurations, i.e., Definition~\ref{def:partition-tree}, Item~\ref{it:PT-parent-concrete-derivation}:

    \begin{lemma}[restate=PTParentConcreteDerivation,name={}] \label{lem:PT-parent-concrete-derivation}
      Assume that \((\P_1,\C_1),n \Tstep{\tr} (\P_1',\C_1'),n'\) and \((\P_2,\C_2) \Sstep {\tr} (\P_2',\C_2')\) with \((\P_2,\C_2) \in \Gamma(n)\).
      We also consider, for all \(i \in \{1,2\}\), a solution \((\Sigma',\sigma_i') \in \Sol[\pi(n')](\C_i')\) such that
      \(\Phi(\C_1') \sigma_1' \StatEq \Phi(\C_2') \sigma_2'\).
      Then we have \((\P_2,\C_2),n \Tstep {\tr} (\P_2',\C_2'),n'\).
    \end{lemma}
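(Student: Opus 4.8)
The plan is to proceed by induction on the number of visible actions in $\tr$, decomposing both the partition-tree trace of $(\P_1,\C_1)$ and the symbolic trace of $(\P_2,\C_2)$ at their first visible action. The four ingredients I would use are: closure by $\tau$-transition (Item~\ref{it:PT-silent}), refinement of predicates along branches (Item~\ref{it:PT-monotonic}), maximality of nodes (Item~\ref{it:PT-parent-concrete-derivation}), together with the routine monotonicity observation that any solution of a constraint system reached at some point of a symbolic trace restricts to a solution of every system reached earlier, since the symbolic rules of Figure~\ref{fig:semantics-symbolic} only ever add constraints, extend $\Df$, and extend the frame. I will also use that static equivalence of two frames is preserved under restriction to a common prefix of their domains, as any distinguishing test on the shorter frames is still one on the longer frames.

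For the base case $\tr = \epsilon$, the first branch consists solely of $\tau$-transitions so $n' = n$, and the second trace $(\P_2,\C_2)\Sstep{\epsilon}(\P_2',\C_2')$ is a chain of symbolic $\tau$-steps. I would walk along this chain and assign the node $n$ to every intermediate process: each step $(\P'',\C'')\sstep{\tau}(\P''',\C''')$ with $(\P'',\C'')\in\Gamma(n)$ yields $(\P''',\C''')\in\Gamma(n)$ by Item~\ref{it:PT-silent}, whose side condition $\Sol[\pi(n)](\C''')\neq\emptyset$ holds because the given solution $(\Sigma',\sigma_2')\in\Sol[\pi(n)](\C_2')$ restricts, by monotonicity, to a solution of $\C'''$ still satisfying $\pi(n)$ (the predicate is unchanged along the chain). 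This same $\tau$-closure argument also handles, in the inductive step, the maximal $\tau$-prefix $(\P_2,\C_2)\Sstep{\tau}(\hat\P_2,\hat\C_2)$ preceding the first visible action, keeping it inside $\Gamma(n)$.

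For the inductive step, write $\tr = a\cdot\tr''$ and decompose the first branch as $(\P_1,\C_1),n\Tstep{\tau}(\hat\P_1,\hat\C_1),n\tstep{a}(\P_1^\circ,\C_1^\circ),m\Tstep{\tr''}(\P_1',\C_1'),n'$, where $n\xrightarrow{a}m$ is an edge of $T$, and correspondingly the second trace as $(\P_2,\C_2)\Sstep{\tau}(\hat\P_2,\hat\C_2)\sstep{a}(\P_2^\circ,\C_2^\circ)\Sstep{\tr''}(\P_2',\C_2')$. After the $\tau$-prefix lands $(\hat\P_2,\hat\C_2)$ in $\Gamma(n)$, the crux is to show $(\P_2^\circ,\C_2^\circ)\in\Gamma(m)$ by invoking Item~\ref{it:PT-parent-concrete-derivation} with father node $n$, child node $m$, reference child process $(\P_1^\circ,\C_1^\circ)\in\Gamma(m)$, and the transition $(\hat\P_2,\hat\C_2)\sstep{a}(\P_2^\circ,\C_2^\circ)$. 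This is the step I expect to be the main obstacle, as it requires producing the solutions witnessing the hypotheses of maximality. I would obtain them by restriction of the final data: setting $\Sigma_c$ to be $\Sigma'$ restricted to the second-order variables of $m$ and $\Sigma$ its further restriction to those of $n$ (so $\Sigma\subseteq\Sigma_c$), and taking the appropriate restrictions of $\sigma_1'$ and $\sigma_2'$ as first-order parts. Monotonicity then gives $(\Sigma,\cdot)\in\Sol[\pi(n)](\hat\C_2)$, $(\Sigma_c,\cdot)\in\Sol(\C_1^\circ)$ and $(\Sigma_c,\cdot)\in\Sol(\C_2^\circ)$, while Item~\ref{it:PT-monotonic} upgrades membership in $\Sol[\pi(n')]$ at the leaf to membership in $\Sol[\pi(m)]$ at $m$, since $m$ is an ancestor of $n'$ and the predicates only weaken upwards. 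The remaining requirement $\Phi(\C_1^\circ)\sigma_c\StatEq\Phi(\C_2^\circ)\sigma'$ follows from the hypothesis $\Phi(\C_1')\sigma_1'\StatEq\Phi(\C_2')\sigma_2'$ by the prefix-restriction property of static equivalence.

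Having established $(\hat\P_2,\hat\C_2),n\tstep{a}(\P_2^\circ,\C_2^\circ),m$, I would close the argument by applying the induction hypothesis to the tails $(\P_1^\circ,\C_1^\circ),m\Tstep{\tr''}(\P_1',\C_1'),n'$ and $(\P_2^\circ,\C_2^\circ)\Sstep{\tr''}(\P_2',\C_2')$ with $(\P_2^\circ,\C_2^\circ)\in\Gamma(m)$ and the unchanged leaf solutions $(\Sigma',\sigma_1'),(\Sigma',\sigma_2')$, obtaining $(\P_2^\circ,\C_2^\circ),m\Tstep{\tr''}(\P_2',\C_2'),n'$. Concatenating the $\tau$-prefix, the $\tstep{a}$ step and this tail gives the desired $(\P_2,\C_2),n\Tstep{\tr}(\P_2',\C_2'),n'$. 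The only genuinely delicate points, to be spelled out carefully, are the bookkeeping of the variable domains when restricting $\Sigma'$ and the $\sigma_i'$, and checking that the restricted pairs meet the domain requirements of $\Sol[\cdot]$.
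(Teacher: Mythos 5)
Your proposal is correct and follows essentially the same route as the paper's proof: induction on the length of \(\tr\), the base case via \(\tau\)-closure (Item~\ref{it:PT-silent}), and in the inductive step restricting \(\Sigma'\) to \(\vars[2](n)\) and \(\vars[2](\tilde n)\), using predicate refinement (Item~\ref{it:PT-monotonic}) together with the domain condition on predicates, transferring the leaf static equivalence back to the intermediate systems, and invoking maximality (Item~\ref{it:PT-parent-concrete-derivation}) at the first edge before closing with the induction hypothesis. Your spelled-out justifications of the two steps the paper leaves implicit --- that solutions restrict along symbolic traces and that static equivalence is preserved under restriction to a common frame prefix --- are exactly the right supporting observations.
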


    \begin{proof}
      We proceed by induction on the length \(\tr\).
      The case \(\tr = \epsilon\) follows from the saturation of nodes under \(\tau\)-transition (Definition \ref{def:partition-tree}, Item \ref{it:PT-silent}).
      Otherwise we let, with \(\tr = \alpha \cdot \tilde {\tr}\),
      \begin{align*}
        (\P_1,\C_1),n & \Tstep{\alpha} (\tilde{\P_1},\tilde{\C_1}), \tilde{n} \Tstep{\tilde {\tr}} (\P_1',\C_1'),n' &
        (\P_2,\C_2) & \Sstep{\alpha} (\tilde{\P_2},\tilde{\C_2}) \Sstep{\tilde {\tr}} (\P_2',\C_2')
      \end{align*}
      We also consider the restrictions \(\Sigma = \Sigma'_{|\vars[2](n)}\) and \(\tilde{\Sigma} = \Sigma'_{|\vars[2](\tilde{n})}\).
      In particular \(\Sigma \subseteq \tilde{\Sigma}\) and there exist \(\sigma_2,\tilde{\sigma}_1,\tilde{\sigma}_2\) such that
      \begin{align*}
        (\Sigma,\sigma_2) & \in \Sol(\C_2) &
        (\tilde{\Sigma},\tilde{\sigma}_1) & \in \Sol(\tilde{\C}_1) &
        (\tilde{\Sigma},\tilde{\sigma}_2) & \in \Sol(\tilde{\C}_2)
      \end{align*}
      The hypothesis that \(\Phi(\C_1') \sigma_1' \StatEq \Phi(\C_2') \sigma_2'\) also implies that \(\Phi(\tilde{\C}_1) \tilde{\sigma}_1 \StatEq \Phi(\tilde{\C}_2) \tilde{\sigma}_2\).
      Besides since predicates are refined along branches (Definition \ref{def:partition-tree}, Item \ref{it:PT-monotonic}) and are defined on the variables of their configurations (Definition \ref{def:configuration}, Item \ref{it:configuration-pred-dom}),
      we know that \(\Sigma\) and \(\tilde{\Sigma}\) verify \(\pi(n)\) and \(\pi(\tilde{n})\), respectively.

      All in all we can use the maximality of the node \(\tilde{n}\) (Definition \ref{def:partition-tree}, Item \ref{it:PT-parent-concrete-derivation} applied to the edge \(n \xrightarrow{\alpha} \tilde{n}\)),
      which gives that \((\tilde{\P}_2,\tilde{\C}_2) \in \Gamma(\tilde{n})\).
      Hence \((\P_2,\C_2),n \Tstep {\alpha} (\tilde{\P}_2,\tilde{\C}_2),\tilde{n}\) by definition and
      the conclusion then follows from the induction hypothesis applied to the remaining of the traces.
    \end{proof}

    Combined with the soundness and the completeness of the symbolic semantics, this permits to prove one direction of Theorem \ref{thm:trace-equiv-ptree}:

    \begin{proof}[Proof of Theorem \ref{thm:trace-equiv-ptree},
      \ref{it:trace-equiv-ptree-incl}\(\Rightarrow\)\ref{it:trace-equiv-ptree-trace}.]
      Let us consider a trace \(P_1 \Tstep {\tr} (\P_1,\C_1), n\) and exhibit a trace \(P_2 \Tstep {\tr} (\P_2,\C_2), n\).
      We decompose the proof into the following steps:
      \begin{enumerate}
        \item By \emph{soundness} of the symbolic semantics we obtain a trace \(P_1 \Cstep{\tr \Sigma} (\P_1 \sigma_1, \Phi(\C_1) \sigma_1 \norm)\) for an arbitrary solution \((\Sigma,\sigma_1) \in \Sol(\C_1)\).
        \item By \emph{hypothesis \ref{it:trace-equiv-ptree-incl}} there exists a concrete trace \(P_2 \Cstep{\tr \Sigma} (\P,\Phi)\) such that \(\Phi \StatEq \Phi(\C_1) \sigma \norm\).
        \item By \emph{completeness} of the symbolic semantics we obtain a symbolic trace \(P_2 \Sstep {\tr'} (\P_2,\C_2)\) and \((\Sigma',\sigma_2) \in \Sol(\C_2)\) such that \(\tr \Sigma = \tr' \Sigma'\), \(\P_2 \sigma_2 = \P\) and \(\Phi(\C_2) \sigma_2 \norm = \Phi\).
        Due to the form of symbolic actions, we know that there exists a second-order-variable renaming \(\rho\) such that \(\tr = \tr' \rho\);
        in particular \(P_2 \Sstep {\tr} (\P_2,\C_2 \rho)\) and \((\Sigma,\sigma_2) \in \Sol(\C_2 \rho)\).
        \item By \emph{Lemma \ref{lem:PT-parent-concrete-derivation}} we therefore obtain that \(P_2 \Tstep {\tr} (\P_2,\C_2 \rho), n\), which gives the expected conclusion. \qedhere
      \end{enumerate}
    \end{proof}

    The second property of the partition tree we extend is the fact that symbolic transitions are reflected in the tree, i.e., Definition~\ref{def:partition-tree}, Item~\ref{it:PT-child-concrete-derivation}:

    \begin{lemma}[restate=PTChildConcreteDerivation,name={}] \label{lem:PT-child-concrete-derivation}
      Let \(n\) be a node of a partition tree \(T\) and \((\P,\C) \in \Gamma(n)\).
      If \((\P,\C) \Sstep {\tr} (\P',\C')\) and \((\Sigma,\sigma) \in \Sol[\pi(n)](\C')\) then there exist a node \(n'\) and a substitution \(\Sigma'\) such that \((\P,\C),n \Tstep {\tr} (\P',\C'),n'\)
      and \((\Sigma',\sigma) \in \Sol[\pi(n')](\C')\).
    \end{lemma}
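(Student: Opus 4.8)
The plan is to mirror the structure of the proof of Lemma~\ref{lem:PT-parent-concrete-derivation}, proceeding by induction on the number of symbolic transitions in the derivation $(\P,\C) \Sstep{\tr} (\P',\C')$, counting the $\tau$-steps and not only the visible ones. The base case of zero transitions is immediate: then $(\P',\C') = (\P,\C)$ and $\tr = \epsilon$, so one takes $n' = n$ and $\Sigma' = \Sigma$, and $(\P,\C),n \Tstep{\epsilon} (\P,\C),n$ holds by definition. For the inductive step I would isolate the first transition $(\P,\C) \sstep{\alpha_1} (\P_1,\C_1)$, so that $(\P_1,\C_1) \Sstep{\tr_1} (\P',\C')$ with $\tr = \alpha_1 \cdot \tr_1$ (or $\tr = \tr_1$ when $\alpha_1 = \tau$). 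In both cases the restriction $(\Sigma_{|\vars[2](\C_1)}, \sigma_{|\vars[1](\C_1)})$ is a solution of $\C_1$, and since $\pi(n)$ depends only on $\vars[2](n) \subseteq \vars[2](\C_1)$ (Definition~\ref{def:configuration}, Item~\ref{it:configuration-pred-dom}) and $\Sigma$ verifies $\pi(n)$, this restriction lies in $\Sol[\pi(n)](\C_1)$.

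When $\alpha_1 = \tau$, the closure of nodes under $\tau$-transitions (Definition~\ref{def:partition-tree}, Item~\ref{it:PT-silent}) applies since $\Sol[\pi(n)](\C_1) \neq \emptyset$, placing $(\P_1,\C_1) \in \Gamma(n)$; I then invoke the induction hypothesis on $(\P_1,\C_1) \in \Gamma(n)$ with the unchanged solution $(\Sigma,\sigma) \in \Sol[\pi(n)](\C')$ and prepend the step $(\P,\C),n \tstep{\tau} (\P_1,\C_1),n$. When $\alpha_1$ is a visible action, I apply the edge-level property that all symbolic transitions are reflected in the tree (Definition~\ref{def:partition-tree}, Item~\ref{it:PT-child-concrete-derivation}) to the solution $(\Sigma_{|\vars[2](\C_1)}, \sigma_{|\vars[1](\C_1)})$: this yields an edge $n \xrightarrow{\alpha_1} n_1$ with $(\P_1,\C_1) \in \Gamma(n_1)$ and a second-order solution $\Sigma_1' \in \Sol[\pi(n_1)](\C_1)$ computing the same first-order substitution $\sigma_1 = \sigma_{|\vars[1](\C_1)}$ and coinciding with $\Sigma$ on $\vars[2](n)$.

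The crux is then to feed the induction hypothesis at $n_1$, which requires producing a solution of the whole $\C'$ that satisfies $\pi(n_1)$ and still computes $\sigma$. I would build it by merging, setting $\hat\Sigma = \Sigma_1' \cup \Sigma_{|\vars[2](\C') \smallsetminus \vars[2](\C_1)}$. This is well defined and ground because solutions map variables to recipes over axioms only, so the two pieces never cross-reference each other's variables. Since $\vars[2](n_1) = \vars[2](\C_1)$, the predicate $\pi(n_1)$ constrains only the $\Sigma_1'$-part, hence $\hat\Sigma$ verifies $\pi(n_1)$. The main obstacle is checking $(\hat\Sigma,\sigma) \in \Sol(\C')$: here I rely on the fact that $\Sigma_1'$ and $\Sigma_{|\vars[2](\C_1)}$ induce the exact same first-order values $\sigma_1$, and that by the origination property every constraint of $\C'$ introduced after $\C_1$ refers back to earlier inputs only through their first-order values, which are untouched by the swap; the frame entries and recipes of the suffix are inherited verbatim from $\Sigma$, so each deduction fact and (dis)equation of $\C'$ is satisfied under $\hat\Sigma$ exactly as it was under $\Sigma$. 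Having established $(\hat\Sigma,\sigma) \in \Sol[\pi(n_1)](\C')$, the induction hypothesis applied to $(\P_1,\C_1) \in \Gamma(n_1)$, the trace $\Sstep{\tr_1}$ and the solution $\hat\Sigma$ gives a node $n'$ and substitution $\Sigma'$ with $(\P_1,\C_1),n_1 \Tstep{\tr_1} (\P',\C'),n'$ and $(\Sigma',\sigma) \in \Sol[\pi(n')](\C')$; prepending the edge $(\P,\C),n \tstep{\alpha_1} (\P_1,\C_1),n_1$ then yields the desired $(\P,\C),n \Tstep{\tr} (\P',\C'),n'$.
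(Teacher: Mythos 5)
Your proposal is correct and takes essentially the same approach as the paper's proof: an induction along the symbolic trace in which \(\tau\)-steps are absorbed by node closure (Definition~\ref{def:partition-tree}, Item~\ref{it:PT-silent}), visible steps are matched by the edge-level property (Item~\ref{it:PT-child-concrete-derivation}), restrictions of \((\Sigma,\sigma)\) stay in \(\Sol[\pi(n)]\) by Item~\ref{it:configuration-pred-dom} of Definition~\ref{def:configuration}, and the crux is a spliced second-order substitution with unchanged first-order part --- your \(\hat\Sigma = \Sigma_1' \cup \Sigma_{|\vars[2](\C') \smallsetminus \vars[2](\C_1)}\) is the exact mirror of the paper's extension \(\tilde{\Sigma}^e = \tilde{\Sigma} \cup \Sigma_{|\vars[2](\C') \smallsetminus \vars[2](\tilde{\C})}\). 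The only difference is direction: you peel the first transition and apply the edge property before the induction hypothesis, whereas the paper peels the last visible action and applies the induction hypothesis to the prefix before the edge property, the verification that the hybrid pair solves \(\C'\) (identical frame instantiation under \(\sigma\), deduction facts handled by whichever of the two component solutions covers them, non-deducibility constraints depending only on \(\sigma\)) being the same in both.
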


    Note that unlike the definition of partition tree, we do not require that \(\Sigma'\) coincides with \(\Sigma\) on \(\vars[2](n)\).
    This additional requirement would not make the lemma false but is unnecessary to prove Theorem~\ref{thm:trace-equiv-ptree}.
    The lemma is proved by induction on \(\tr\) below:
    
    \begin{proof}
      We proceed by induction on the length of \(\tr\).
      If \(\tr = \epsilon\) it suffices to choose \(n = n'\) and the conclusion immediately follows.
      Otherwise let us decompose the symbolic trace into
      \begin{align*}
        (\P,\C) & \Sstep{\tilde{\tr}} (\tilde{\P},\tilde{\C}) \Sstep {\alpha} (\P',\C') &
        \tr & = \tilde{\tr} \cdot \alpha
      \end{align*}
      Note that \((\Sigma_{|\vars[2](\tilde{n})}, \sigma_{|\vars[1](\tilde{n})}) \in \Sol(\tilde{\C})\), and \(\Sigma_{|\vars[2](\tilde{n})}\) verifies \(\pi(n)\) by definition of a configuration
      (since \(\Sigma\) verifies it and has the same restriction to \(\vars[2](\Gamma(n))\) as \(\Sigma_{|\vars[2](\tilde{n})}\)).
      By induction hypothesis we therefore obtain \(\tilde{n},\tilde{\Sigma}\) such that \((\P,\C),n \Tstep {\tilde{\tr}} (\tilde{\P},\tilde{\C}),\tilde{n}\)
      and \((\tilde{\Sigma},\sigma_{|\vars[1](\tilde{n})}) \in \Sol[\pi(\tilde{n})](\tilde{\C})\).
      Let us then consider the extension
      \[\tilde{\Sigma}^e = \tilde{\Sigma} \cup \Sigma_{|\vars[2](n') \smallsetminus \vars[2](\tilde{n})}\]
      To conclude the proof it suffices to apply the Item \ref{it:PT-child-concrete-derivation} of Definition \ref{def:partition-tree} to the symbolic transition \((\tilde{\P},\tilde{\C}) \Sstep {\alpha} (\P',\C')\)
      and the solution \((\tilde{\Sigma}^e,\sigma)\);
      what remains to prove is therefore that we effectively have  \((\tilde{\Sigma}^e,\sigma) \in \Sol[\pi(\tilde{n})](\C')\).
      First of all we indeed have by construction \(\dom(\tilde{\Sigma}^e) = \vars[2](\C')\) and \(\dom(\sigma) = \vars[1](\C')\).
      We also know that \(\tilde{\Sigma}^e\) satisfies the predicate \(\pi(\tilde{n})\) because \(\tilde{\Sigma} = \tilde{\Sigma}^e_{|\vars[2](\tilde{n})}\) satisfies it.
      The first-order solution \(\sigma\) satisfies the constraints of \(\Eqfst(\C')\) since \((\Sigma,\sigma) \in \Sol(\C')\) by hypothesis.
      Finally we let \(\varphi \in \Df(\C')\) and prove that \((\Phi(\C'),\tilde{\Sigma}^e,\sigma) \models \varphi\):

      \caseitem{\emph{case 1:} \(\varphi = (X \dedfact x) \in \Df(\tilde{\C})\)}

        The conclusion follows from the fact that \((\tilde{\Sigma},\sigma_{|\vars[1](\tilde{n})}) \in \Sol(\tilde{\C})\).

      \caseitem{\emph{case 2:} \(\varphi = (X \dedfact x) \in \Df(\C') \smallsetminus \Df(\tilde{\C})\)}

        The conclusion follows from the fact that \((\Sigma,\sigma) \in \Sol(\C')\).

      \caseitem{\emph{case 3:} \(\varphi = \forall X.\, X \ndedfact x\)}

        We have to prove that \(x \sigma\) is not deducible from the frame \(\Phi(\C') \sigma\), which is a consequence from the fact that \((\Sigma,\sigma) \in \Sol(\C')\).
    \end{proof}

    Using again the soundness and completeness of the symbolic semantics, we can finally derive the other direction of Theorem~\ref{thm:trace-equiv-ptree}.

    \begin{proof}[Proof of Theorem~\ref{thm:trace-equiv-ptree},
      \ref{it:trace-equiv-ptree-trace}\(\Rightarrow\)\ref{it:trace-equiv-ptree-incl}.]
      Let us consider a trace \(P_1 \Cstep {\tr} (\P,\Phi)\) and exhibit a trace \(P_2 \Cstep {\tr} (\Q,\Psi)\) such that \(\Phi \StatEq \Psi\).
      We decompose the proof into the following steps:
      \begin{enumerate}
        \item By \emph{completeness} of the symbolic semantics we obtain a symbolic trace \(P_1 \Sstep {\tr_s} (\P_1,\C_1)\) and \((\Sigma,\sigma_1) \in \Sol(\C)\) such that \(\tr_s \Sigma = \tr\), \(\P_1 \sigma_1 = \P\) and \(\Phi(\C_1) \sigma_1 \norm = \Phi\).
        \item By \emph{Lemma \ref{lem:PT-child-concrete-derivation}} we then obtain a partition-tree trace \(P_1 \Tstep {\tr_s} (\P_1,\C_1),n\) and \(\Sigma'\) such that \((\Sigma',\sigma_1) \in \Sol[\pi(n)](\C_1)\).
        \item By \emph{hypothesis \ref{it:trace-equiv-ptree-trace}} there also exists a partition-tree trace \(P_2 \Tstep {\tr_s} (\P_2,\C_2),n\).
        By definition of a configuration we also know that there exists \(\sigma_2\) such that \((\Sigma',\sigma_2) \in \Sol[\pi(n)](\C_2)\) and \(\Phi(\C_1) \sigma_1 \StatEq \Phi(\C_2) \sigma_2\).
        \item By \emph{soundness} of the symbolic semantics applied to we then obtain a concrete trace \(P_2 \Cstep {\tr_s \Sigma'} (\Q,\Psi)\) with \(\Q = \P_2 \sigma_2\) and \(\Psi = \Phi(\C_2) \sigma_2 \norm \ \StatEq\ \Phi(\C_1) \sigma_1 \norm\ = \Phi\).
      \end{enumerate}
      However we may have \(\tr_s\Sigma' \neq \tr\) and, to conclude the proof, we prove that \(P_2 \Cstep {\tr} (\Q,\Psi)\) as well.
      For that it suffices to prove that \(\tr \Psi \norm = \tr_s \Sigma' \Psi\norm \), that is, although the recipes or \(\tr\) and \(\tr_s\Sigma'\) are different they produce the same first-order terms.
      Since \(\Phi\) and \(\Psi\) are statically equivalent, \(\tr = \tr_s \Sigma\) and \(\Phi = \Phi(\C_1) \sigma_1 \norm\), it suffices to prove that \(\tr_s \Sigma \Phi(\C_1) \sigma \norm = \tr_s \Sigma' \Phi(\C_1) \sigma\norm \).
      Let \(X \in \vars[2](\tr_s)\).
      A quick look at the rules of the symbolic semantics shows that there exists a deduction fact \((X \dedfact x) \in \Df(\C_1)\).
      In particular, since \((\Sigma,\sigma_1)\) and \((\Sigma',\sigma_1)\) are both solutions of \(\C_1\) we have
      \(X \Sigma \Phi(\C_1) \sigma_1 \norm = x \sigma_1 \norm = X \Sigma' \Phi(\C_1) \sigma_1 \norm\),
      hence the conclusion.
    \end{proof}

  \subsection{Simulations} \label{app:decision-proc-bisim-from-ptree}

    In this section, we now prove the main theorem at the basis of the decision procedure for simulations and its variants:

    \thmLabBisPtree*
  
    We only prove the case of labelled bisimilarity, as the proof for simulation is analogue.
    For that, we prove the following technical lemma by induction on the structure of the (symbolic) witness; 
    this lemma is a stronger version of the theorem for the purpose of managing the induction invariant.

    \begin{lemma} \label{lem:lab-bis-ptree}
      Let \(n\) be a node of a partition tree \(T\) and \(A_0,A_1 \in \Gamma(n)\).
      We let \(A_i = (\P_i,\C_i)\) and \(\Sigma,\sigma_0,\sigma_1\) such that \((\Sigma,\sigma_i) \in \Sol[\pi(n)](\C_i)\).
      If \(A_i^c = (\P_i \sigma_i,\Phi(\C_i) \sigma_i \norm)\), the following points are equivalent:
      \begin{enumerate}
        \item \label{it:lab-bis-ptree-equiv}
        \(A_0^c \not \LabBis A_1^c\)
        \item \label{it:lab-bis-ptree-witness}
        there exist a symbolic witness \(\witness_s\) for \((A_0,A_1,n)\) and a solution \(\fsol \in \Sol(\witness_s)\) such that \(\fsol(\rootf(\witness_s)) = \Sigma\)
      \end{enumerate}
    \end{lemma}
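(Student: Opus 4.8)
The plan is to prove the biconditional by well-founded induction, exploiting that bounded processes admit no infinite transition sequences, so that both concrete and symbolic witnesses are finite trees. The entry point in both directions is Proposition~\ref{prop:concrete-witness}: since $A_0, A_1 \in \Gamma(n)$ share the solution $\Sigma$, the configuration property (Definition~\ref{def:configuration}) gives $\Phi(\C_0)\sigma_0 \StatEq \Phi(\C_1)\sigma_1$, i.e. $A_0^c \StatEq A_1^c$, so Proposition~\ref{prop:concrete-witness} reduces $A_0^c \not\LabBis A_1^c$ to the existence of a concrete witness $\witness$ for $(A_0^c, A_1^c)$. I would then pass between concrete and symbolic witnesses using the soundness and completeness of the symbolic semantics (Proposition~\ref{prop:symbolic-sound-complete}) together with the branch-level partition-tree properties of Lemmas~\ref{lem:PT-child-concrete-derivation} and~\ref{lem:PT-parent-concrete-derivation}. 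The induction is carried on a measure that strictly decreases along transitions, such as the maximal length of a trace issued from $A_0^c$ or $A_1^c$, since every move of either witness consumes at least one visible or silent action.

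For \ref{it:lab-bis-ptree-equiv}$\Rightarrow$\ref{it:lab-bis-ptree-witness}, assuming $A_0^c \not\LabBis A_1^c$ I first extract a concrete witness $\witness$. Its top move is a transition $A_b^c \cstep{\alpha} A_b^{c\prime}$; completeness lifts it to a symbolic transition from $A_b = (\P_b, \C_b)$, and Lemma~\ref{lem:PT-child-concrete-derivation} places it inside $T$ as $A_b, n \tstep{\alpha} A'_b, n'$ together with a solution $\Sigma'_b \supseteq \Sigma$. I then distinguish whether $A_{1-b}$ is reducible by $\Tstep{\bar\alpha}$ to $n'$. If it is not, the symbolic witness has the single leaf child $(A'_b, n')$, and Item~1 of Definition~\ref{def:solution-witness} is witnessed by $\Sigma'_b$. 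Otherwise each defender response $A_{1-b}, n \Tstep{\bar\alpha} A'_{1-b}, n'$ yields a pair lying in the same node $n'$, hence with statically equivalent solutions; the concrete witness condition then forces the corresponding concrete instances into $\witness$, so they are non-bisimilar, and the induction hypothesis supplies a sub-witness whose root solution value is $\Sigma'_b$ restricted to $\vars[2](n')$. Assembling these sub-witnesses under the root $(A_0, A_1, n)$ and setting $\fsol(\rootf(\witness_s)) = \Sigma$ then satisfies the condition $\fsol(N) \subseteq \fsol(N_1) = \fsol(N_2)$.

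For \ref{it:lab-bis-ptree-witness}$\Rightarrow$\ref{it:lab-bis-ptree-equiv}, given $\witness_s$ and $\fsol$ I build a concrete witness as the set of instances, via $\fsol$, of the pairs labelling the nodes of $\witness_s$, and conclude by Proposition~\ref{prop:concrete-witness}. For a node with move $A_b, n \tstep{\alpha} A'_b, n'$, soundness turns the symbolic transition into a concrete one from $A_b^c$, instantiated by the relevant value of $\fsol$. I then verify the concrete witness condition: any concrete response of $A_{1-b}^c$ reaching a frame statically equivalent to $A_b^{c\prime}$ lifts, by completeness and Lemma~\ref{lem:PT-parent-concrete-derivation}, to a symbolic response landing in $n'$, hence appearing as a child of the current node in $\witness_s$, so the instantiated pair belongs to the concrete witness. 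When $A_{1-b}$ is not reducible by $\Tstep{\bar\alpha}$ to $n'$, the same lifting argument shows (by contradiction) that $A_{1-b}^c$ admits no concrete response statically equivalent to $A_b^{c\prime}$, which matches the $\not\StatEq$ disjunct of the concrete witness definition.

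The main obstacle is the coordination of the second-order solutions across the sibling nodes of the witness: a symbolic witness is only meaningful when all its pairs are solvable by a single coherent family of recipes (Item~2 of Definition~\ref{def:solution-witness}), and a naive construction would have to unify unrelated solutions of distinct responses. The key observation making this tractable is precisely the design of the partition tree: all defender responses to a fixed attacker move are forced into the same child node $n'$, which carries a common set of second-order variables and a unique most general solution. Consequently a single solution $\Sigma'_b$ extending $\Sigma$ simultaneously solves every sibling, so the agreement $\fsol(N_1) = \fsol(N_2)$ is obtained for free, and the remaining verifications---static equivalence of same-node processes, strict decrease of the induction measure, and the correctness of the leaf cases---are routine.
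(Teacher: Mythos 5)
Your proposal is correct and follows essentially the same route as the paper's proof: entry via Proposition~\ref{prop:concrete-witness} (with static equivalence of \(A_0^c, A_1^c\) from Definition~\ref{def:configuration}), a well-founded induction exploiting boundedness (the paper measures \(|\P_0,\P_1|\), you use maximal trace length --- equivalent for this purpose), the forward direction by lifting the attacker move through completeness and the partition-tree reflection property with exactly the paper's case split on whether the defender has any \(\Tstep{\bar\alpha}\) response, and the converse by instantiating the symbolic witness via \(\fsol\) using soundness, completeness and node maximality (Lemma~\ref{lem:PT-parent-concrete-derivation}), with solution coherence across siblings obtained from the common node \(n'\) precisely as in the paper. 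One citation slip worth fixing: for the top move you should invoke Item~\ref{it:PT-child-concrete-derivation} of Definition~\ref{def:partition-tree} directly on the single edge rather than Lemma~\ref{lem:PT-child-concrete-derivation}, since the appendix explicitly notes that the lemma drops the requirement that the new solution coincide with \(\Sigma\) on \(\vars[2](n)\) --- which is exactly the property you rely on to conclude \(\fsol(\rootf(\witness_s)) = \Sigma\).
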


    To prove this lemma we first observe that, by definition of a configuration (Definition~\ref{def:configuration}), \(A_0^c \StatEq A_1^c\) because these two processes are obtained by instanciating two symbolic processes from a same node \(n\) with a common solution \(\Sigma\).
    We then prove the two directions separately.

    \medskip

    \begin{bigproof}[Proof of Lemma \ref{lem:lab-bis-ptree}, \ref{it:lab-bis-ptree-equiv}\(\Rightarrow\)\ref{it:lab-bis-ptree-witness}]
      We prove the result by induction on \(|\P_0,\P_1|\).
      The conclusion is immediate if \(|\P_0,\P_1| = 0\) as it yields a contradiction:
      the multisets \(\P_0\) and \(\P_1\) can only contain null processes and the fact that \(A_0^c \StatEq A_1^c\) justifies that \(A_0^c \LabBis A_1^c\).
      Otherwise we let by Proposition \ref{prop:concrete-witness} a witness \(\witness\) of \((A_0^c,A_1^c)\).
      Thus, by definition, there exist \(b \in \{0,1\}\) and a transition \(A_b^c \cstep{\alpha} A_b^{\prime c} = (\Q, \Phi)\)
      such that for all traces \(A_{1-b}^c \Cstep{\bar{\alpha}} A_{1-b}^{\prime c}\) such that \(A_0^{\prime c} \StatEq A_0^{\prime c}\),
      we have \((A_0^{\prime c}, A_1^{\prime c}) \in \witness\)
      (and therefore \(A_0^{\prime c} \not \LabBis A_1^{\prime c}\) by Proposition \ref{prop:concrete-witness}).
      Let us now construct a symbolic witness \(\witness_s\) of \((A_0,A_1,n)\) and a suitable solution \(\fsol\).

      \caseitem{\emph{case 1:} \(\alpha \neq \tau\)}

        By completeness of the symbolic semantics (Proposition \ref{prop:symbolic-sound-complete}) applied to the transition \(A_b^c \cstep{\alpha} A_b^{\prime c}\), we let a symbolic transition \(A_b \sstep {\alpha_s} A_b' = (\Q_s,\C)\) and a solution \((\Sigma', \sigma') \in \Sol (\C)\)
        such that \(\Sigma \subseteq \Sigma'\),
        \(\alpha = \alpha_s \Sigma'\), \(\Q = \Q_s \sigma'\) and \(\Phi = \Phi(\C) \sigma' \norm\).
        Note that by hypothesis \(\Sigma\) verifies \(\pi(n)\) and, therefore, so does its extension \(\Sigma'\) (since by definition predicates are stable by domain extension, recall Definition \ref{def:configuration}).
        Then since the symbolic transition \(A_b \sstep {\alpha_s} A_b'\) is reflected in \(T\) (in the sense of Definition \ref{def:partition-tree}, Item \ref{it:PT-child-concrete-derivation}),
        we obtain a transition \(A_b, n \tstep {\alpha_s} A_b', n'\) and \(\Sigma''\) such that \((\Sigma'',\sigma') \in \Sol [\pi(n')] (\C)\) and \(\Sigma_{|\vars[2](n)}'' = \Sigma_{|\vars[2](n)}' \quad (= \Sigma)\).

        \caseitem{\emph{case 1a}: there exist no \(A_{1-b}'\) such that \(A_{1-b},n \Tstep {\alpha_s} A_{1-b}', n'\)}

          Then we define \(\witness_s\) to be the tree whose root is labelled \((A_0,A_1,n)\) and that has a unique child labelled \((A_b',n')\).
          We then consider \(\fsol\) mapping the child to \(\Sigma''\) and the root to \(\Sigma''_{|\vars[2](n)} = \Sigma\), which is a solution of \(\witness_s\).

        \caseitem{\emph{case 1b}: otherwise}

          In this case we define \(\witness_s\) as follows.
          Its root is labelled \((A_0,A_1,n)\) and its children are all the nodes labelled \((A_0',A_1',n')\), with \(A_{1-b},n \Tstep {\alpha_s} A_{1-b}', n'\).
          For each such node, as explained in the beginning of the proof we have \(A_0^{\prime c} \not\LabBis A_1^{\prime c}\) which permits to apply the induction hypothesis with the solution \(\Sigma''\).
          This gives a symbolic witness rooted in this node and \(\fsol\) a solution mapping this node to \(\Sigma''\).
          Let us write more explicitly these witnesses \(\witness_s^1, \ldots, \witness_s^p\) and \(\fsol^1, \ldots, \fsol^p\) the corresponding solutions.
          To conclude it then suffices to choose \(\witness_s^1, \ldots, \witness_s^p\) as the children of the root of \(\witness_s\), and \(\fsol\) maps the root of \(\witness_s\) to \(\Sigma''_{|\vars[2](n)} = \Sigma\) and each node \(n\) of \(\witness_s^i\) to \(\fsol^i(n)\).

        \caseitem{\emph{case 2:} \(\alpha = \tau\)}

          Analogue to case 1 in the simpler case where \(n = n'\) and \(\Sigma = \Sigma' = \Sigma''\).
          Note also that the analogue of case 1a cannot arise.
    \end{bigproof}

    \begin{bigproof}[Proof of Lemma \ref{lem:lab-bis-ptree}, \ref{it:lab-bis-ptree-witness}\(\Rightarrow\)\ref{it:lab-bis-ptree-equiv}]
        We construct a concrete witness \(\witness\) of \((A_0^c, A_1^c)\) as follows:
        \[\witness = \left\{\begin{array}{r|l}
          \multirow{2}*{%
            \(\left((\P_0 \sigma_0,\Phi(\C_0) \sigma_0 \norm),
            (\P_1 \sigma_1,\Phi(\C_1) \sigma_1 \norm)\right)\)}
            & N \text{ node of \(\witness_s\) labelled } ((\P_0,\C_0),(\P_1,\C_1),n), \\
            & \forall i \in \{0,1\}, (\fsol(N), \sigma_i) \in \Sol(\C_i)
        \end{array}\right\}\]
        The fact that all \((B_0,B_1) \in \witness\) verify \(B_0 \StatEq B_1\) follows from Definition \ref{def:configuration}.
        Then let us consider \(((\P_0 \sigma_0,\Phi(\C_0) \sigma_0 \norm),
        (\P_1 \sigma_1,\Phi(\C_1) \sigma_1 \norm)) \in \witness\) using the notations of the construction of \(\witness\) above.
        By definition of a symbolic witness there exists \(b \in \{0,1\}\) and a transition \((\P_b,\C_b), n \tstep{\alpha} (\P_b',\C_b'), n'\) such that:

        \caseitem{\emph{case 1:} \(N = \rootf(\witness_s)\) has a unique child \(N'\) labelled \(\{(\P_b',\C_b')\},n'\)}

          Then consider the concrete transition \((\P_b \sigma_b,\Phi(\C_b) \sigma_b \norm) \cstep{\alpha \fsol(N')} (\P_b' \sigma_b',\Phi(\C_b') \sigma_b' \norm)\) obtained by soundness of the symbolic semantics (Proposition \ref{prop:symbolic-sound-complete})
          where \((\fsol(N'),\sigma_b') \in \Sol[\pi(n')](\C_b')\).
          By completeness of the symbolic semantics (which is possible to apply since \(\fsol(N) \subseteq \fsol(N')\) by definition of a solution of a symbolic witness) and maximality of the node \(n'\) (Definition \ref{def:partition-tree}, Item \ref{it:PT-parent-concrete-derivation}), there cannot exist any concrete trace of the form
          \[(\P_{1-b} \sigma_{1-b},\Phi(\C_{1-b}) \sigma_{1-b} \norm) \cstep{\alpha \fsol(N')} (\P,\Phi)\]
          such that \(\Phi \StatEq \Phi(\C_b') \sigma_b'\), hence the conclusion.

        \caseitem{\emph{case 2:} the children of \(N = \rootf(\witness_s)\) are all the nodes \(N'\) labelled \(((\P_0',\C_0'), (\P_1',\C_1'),n')\), where \((\P_{1-b},\C_{1-b}), n \Tstep{\alpha} (\P_{1-b}',\C_{1-b}'), n'\) (and there is at least one such child)}

          Let \(N'\) be an arbitrary child of \(N\), labelled \(((\P_0',\C_0'), (\P_1',\C_1'),n')\) with the above notations.
          As in the previous case we consider the concrete transition obtained by soundness of the symbolic semantics, \((\P_b \sigma_b,\Phi(\C_b) \sigma_b \norm) \cstep{\alpha \fsol(N')} (\P_b' \sigma_b',\Phi(\C_b') \sigma_b' \norm)\).
          Then let us consider a trace of the form
          \begin{align*}
            (\P_{1-b} \sigma_{1-b},\Phi(\C_{1-b}) \sigma_{1-b} \norm) \Cstep{\alpha \fsol(N')} (\P,\Phi) = A & &
            \Phi \StatEq \Phi(\C_b') \sigma_b'
          \end{align*}
          Our goal is to prove that \((A, (\P_b \sigma_b,\Phi(\C_b) \sigma_b \norm)) \in \witness\).
          Using the completeness of the symbolic semantics and the maximality of \(n'\) as in the previous case, we obtain a partition-tree trace \((\P_{1-b}, \C_{1-b}),n \Tstep{\alpha} (\P_{1-b}'',\C_{1-b}''), n'\) and \((\fsol(N'),\sigma_{1-b}'') \in \Sol[\pi(n')](\C_{1-b}'')\)
          with \(\P = \P_{1-b}'' \sigma_{1-b}''\) and \(\Phi = \Phi(\C_{1-b}'')\sigma_{1-b}'' \norm\).
          By hypothesis there therefore exists a node \(N''\), a child of \(N\), labelled \(((\P_b',\C_b'),(\P_{1-b}'',\C_{1-b}''),n')\).
          The conclusion then follows from the fact that \(\fsol(N') = \fsol(N'')\) by definition of a solution of a symbolic witness.
    \end{bigproof}


\section{Correctness of the generation of partition trees} \label{app:ptree}

\subsection{Invariants of the procedure} \label{app:invariants}

In this section we present some additional properties that are verified all along the procedure by the nodes of the partition tree under construction.
Such nodes are sets of extended symbolic processes (i.e. tuples \((\P,\C,\C^e)\) with \(\P\) a process, \(\C\) a constraint system and \(\C^e\) an extended constraint system).
Understanding the technical details of these invariants is not necessary to understand the algorithm itself, however most of our subprocedure (e.g. the generation of most general solutions) are only correct in their context.

\paragraph{Invariant 1: Well-formedness}
  The first invariant is about the shape of the extended constraint systems.
  Two important properties are that
  all equations of \(\Eqfst\) and \(\Eqsnd\) are trivially satisfiable (they are essentially of the form \(x \eqs u\) where \(x\) appears nowhere else in the constraint system) and that those of \(\Eqsnd\) only use terms that can be constructed from the knowledge base (i.e. they are consequences of \(\Solved \cup \Df\)).

  \begin{definition} \label{def:well-formed}
    We define the predicate \(\PredWellFormed\) on extended constraint systems as follows;
    we have that \(\PredWellFormed((\Phi, \Df, \Eqfst, \Eqsnd, \Solved, \USolved))\) holds when
    \begin{itemize}
      \item Variables in \(\Solved\) and \(\USolved\): \(\vars[2](\Solved,\USolved) \subseteq \vars[2](\Df)\)
      \item Equation: \(\mgu(\Eqn[i]) \neq \bot\), \(\dom(\mgu(\Eqn[i])) \cap \vars[i](\Df) = \emptyset\), \(\vars[i](\im(\mgu(\Eqn[i]))) \subseteq \vars[i](\Df)\).
      \item Solution is consequence: \(\im(\mgu(\Eqsnd)) \subseteq \conseq(\Solved \cup \Df)\).
      \item Shape of \(\Solved\): For all \(\psi = (\xi \dedfact u) \in \Solved\), \(\psi \in \USolved\), \(u \notin \X[1]\), \(u \in \subterms(\Phi)\) and \(\subterms(\xi) \subseteq \conseq(\Solved \cup \Df)\).
      \item Shape of \(\USolved\): For all \(\clause[S]{H}{\varphi} \in \USolved(\C)\), \(S\) is empty and \(\varphi\) only contains syntactic equations as hypothesis, i.e. no deduction facts.
      Moreover \(\strsubterms[2](H) \subseteq \conseq(\Solved \cup \Df)\),
      and if \(H = \xi \dedfact u\) then either \(u \in \subterms(\Phi)\) or there exists a recipe \(\zeta\) such that \((\zeta,u) \in \conseq(\Solved \cup \Df)\).
    \end{itemize}
  \end{definition}

  This invariant is lifted to sets of extended constraint systems in the natural way, i.e. \(\PredWellFormed(S)\) holds \textit{iff} for all \(\C \in S\), \(\PredWellFormed(\C)\) holds.

\paragraph{Invariant 2: Formula soundness}
  The second invariant states that any substitution that satisfies the deduction facts of \(\Df\) and the equalities of \(\Eqfst\) also satisfies all formulas of \(\Solved \cup \USolved\).
  This means that the procedure only adds correct formulas to the constraint system, sometimes under some hypothese for the formulas of \(\USolved\).

  \begin{definition} \label{def:invariant_correctness_formula}
    We define the predicate \(\PredCorrectFormula\) on extended constraint systems as follows;
    we have that \(\PredCorrectFormula((\Phi, \Df, \Eqfst, \Eqsnd, \Solved, \USolved))\) holds when for all \(\psi \in \Solved \cup \USolved\), for all substitutions \(\Sigma,\sigma\),
    if
    \[(\Phi,\Sigma,\sigma) \models \Df' \wedge
      \equality{\Eqfst} \qquad \text{with}\ \Df' = \{\psi' \in \Df \mid \vars[2](\psi') \subseteq \vars[2](\psi)\}\]
    where \(\equality{\Eqfst}\) if the set of equations of \(\Eqfst\), then \((\Phi,\Sigma,\sigma) \models \psi\).
  \end{definition}

\paragraph{Invariant 3: Formula completeness}
  Given a formula \(\psi = \clause{H}{\varphi}\) in \(\USolved\), the soundness invariant above states that when some substitutions satisfy \(\varphi\) then they also satisfy the head \(H\).
  The next invariant can be seen as a kind of converse statement:
  when some substitutions satisfy the head \(H\), there exists a formula \(\psi' \in \USolved\) (not necessarily the same) that has the same head and whose hypotheses are satisfied.
  This means that the procedure always covers all cases when generating the potential hypotheses of a given head \(H\).

  \begin{definition} \label{def:invariant_complete_formula}
    In this definition, we say that \((\Phi,\Sigma,\sigma)\) \emph{weakly satisfies} a head \(H\) when:
    \begin{itemize}
      \item if \(H = \xi \dedfact u\) then \(\msg(\xi \Sigma \Phi \sigma)\), i.e. the recipe \(\xi\) leads to a valid message
      \item if \(H = \xi \eqf \zeta\) then \((\Phi,\Sigma,\sigma)\) satisfies \(H\) in the usual sense, i.e. the recipes \(\xi,\zeta\) lead to the same valid message.
    \end{itemize}
    Given a set of extended processes \(\Gamma\), the predicate \(\PredCompleteFormula(\Gamma)\) holds when for all \(\C^e_1,\C^e_2 \in \Gamma\), for all \((\clause{H}{\varphi}) \in \USolved(\C^e_1)\),
    for all \((\Sigma,\sigma) \in \Sol(\C^e_2)\), if  \((\Phi(\C^e_2),\Sigma,\sigma)\) weakly satisfies \(H\)
    then there exists \((\clause{H'}{\varphi'}) \in \USolved(\C^e_2)\) such that \(H' \receq H\) and \((\Phi(\C^e_2),\Sigma,\sigma) \models \varphi'\).
  \end{definition}

\paragraph{Invariant 4: Knowledge-base saturation}
  The next invariant states that the knowledge base \(\Solved\) is saturated, i.e. that we do not miss solutions by imposing that they are constructed from \(\Solved\) (see Definition \ref{def:ext-sol}).

  \begin{definition} \label{def:invariant_consequence}
    We define the predicate \(\PredConseq\) on extended constraint systems as follows.
    We have that \(\PredConseq(\C)\) holds when, considering the minimal \(k\) such that \(\vars[2](\Df(\C)) \subseteq \Xsndi{k}\),
    for all \(\ffun/n \in \sigd\), for all \((\xi_1,u_1),\ldots, (\xi_n,u_n) \in \conseq(\Solved(\C))\) such that \(\xi_1,\ldots, \xi_n \in \recipeset_k\),
    if \(\ffun(u_1,\ldots, u_n) \norm\) is a constructor term then there is \(\xi \in \recipeset_k\) such that \((\xi,\ffun(u_1,\ldots, u_n) \norm) \in \conseq(\Solved(\C) \cup \Df(\C))\).
  \end{definition}

\paragraph{Invariant 5: Preservation of solutions}
  Finally the last invariant states that all the constraint solving performed on the additional data of extended constraint systems (\(\Eqsnd\), \(\Solved\), \(\USolved\)) preserves the solutions.
  That is, in an extended symbolic process \((\P,\C,\C^e)\), where \(\C\) is the constraint system obtained by only collecting constraints during the execution of the process \(\P\) (i.e. without additional constraint solving), all solutions of \(\C^e\) are solutions of \(\C\).

  \begin{definition} \label{def:invariant_solution}
    We define the predicate \(\PredSymb\) defined on extended symbolic processes where \(\PredSymb((\P,\C,\C^e))\) holds when
    \begin{itemize}
      \item for all \(i \in \N\), \(\vars[2](\C) \subseteq \Xsndi{i}\) iff \(\vars[2](\C^e) \subseteq \Xsndi{i}\).
      \item for all \((\Sigma,\sigma) \in \Sol(\C^e)\), \((\Sigma_{|\vars[2](\C)},\sigma_{|\vars[1](\C)}) \in \Sol(\C)\).
    \end{itemize}
  \end{definition}

  We restrict the substitutions to the variables of \(\C\) since our extended constraint system may introduce new variables (e.g. by applying most general solutions) but all these variables are uniquely defined by the instantiation of the variables of \(\C\).

\paragraph{Invariant 6: Component structure}
  Finally we state an invariant on components stating that all of their constraint systems have the same second-order structure.
  This invariant is preserved during the procedure by the fact that the constraint-solving rules that modify \(\Solved\) or \(\Eqsnd\) are always applied to the entire components.

  \begin{definition} \label{def:invariant_structure}
    We define the predicate \(\PredStruct\) defined on sets of extended symbolic processes where \(\PredSymb(\Gamma)\) holds when for all \((\P_1,\C_1,\C^e_1),(\P_2,\C_2,\C^e_2) \in \Gamma\),
    \begin{itemize}
      \item \(\dom(\Phi(\C^e_1)) = \dom(\Phi(\C^e_2))\)
      \item \(\vars[2](\C^e_1) = \vars[2](\C^e_2)\)
      \item \(\{\xi \mid (\xi \dedfact u) \in \Solved(\C^e_1)\} = \{ \xi \mid (\xi \dedfact u) \in \Solved(\C^e_2)\}\)
    \end{itemize}
  \end{definition}

\paragraph{Overall invariant}
  As we mentioned, all invariants are lifted to sets of extended symbolic processes in the natural way if needed.
  We refer as
  \[\PredAll(\Gamma) =
    \PredWellFormed(\Gamma) \wedge
    \PredCorrectFormula(\Gamma) \wedge
    \PredCompleteFormula(\Gamma) \wedge
    \PredConseq(\Gamma) \wedge
    \PredSymb(\Gamma) \wedge
    \PredStruct(\Gamma)\]
  the invariant of the whole procedure on the nodes of the partition tree (i.e. sets of extended symbolic processes).

\subsection{Preservation of the invariants} \label{app:invariants-preserve}

In this section we prove that the invariants of the procedure stated in Section \ref{app:invariants} are preserved all along the procedure.
First of all we prove the case of the case distinction rules:

\begin{lemma}
  Let \(\S\) be a set of set of extended symbolic processes such that \(\PredAll(\S)\).
  Let \(\S \rightarrow \S'\) by applying one case distinction rule and then normalising the result with the simplification rules.
  We have that \(\PredAll(\S')\).
\end{lemma}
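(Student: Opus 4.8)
The plan is to exploit the common structure of the three case distinction rules: \eqref{rule:satisfiable}, \eqref{rule:equality} and \eqref{rule:rewrite} are all instances of the schema \eqref{rule:case-base}, splitting the treated component $\Gamma$ into a \emph{positive branch} $\Gamma^+$ (obtained by applying some $\CApply{\Sigma}{\cdot}$ to each process, possibly after enriching $\USolved$) and a \emph{negative branch} $\Gamma^-$ (obtained by conjoining the second-order disequation $\neg\Sigma$ to each $\Eqsnd$). I would therefore first establish the conclusion for the bare schema, treating the two branches separately, and only afterwards account for the rule-specific additions to $\USolved$. Since $\PredAll$ is a conjunction of six predicates, each branch is then checked invariant by invariant; most of these verifications are routine structural bookkeeping and I would only detail the genuinely load-bearing ones.

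First I would handle the negative branch $\Gamma^-$, which is uniform across the three rules: it leaves $\Phi$, $\Df$, $\Solved$ and $\USolved$ untouched and only strengthens $\Eqsnd$ by a disequation. Consequently $\PredWellFormed$, $\PredCorrectFormula$, $\PredCompleteFormula$, $\PredConseq$ and $\PredStruct$ are inherited verbatim (none of them constrains $\Eqsnd$ beyond the shape already guaranteed by the simplification rules of Figure~\ref{fig:normalisation_formula}), while $\PredSymb$ is preserved because adding a constraint can only shrink $\Sol(\C^e)$, so the inclusion of restricted solutions into $\Sol(\C)$ is maintained. For the positive branch I would rely on the careful design of the application operator $\CApply{\Sigma}{\cdot}$ (Definition~\ref{def:application_mgs_csys}): the binding facts $D_\freshlab$ and linking equations $E_\Sigma$ it introduces are exactly what is needed to keep $\vars[2](\Solved,\USolved) \subseteq \vars[2](\Df)$ and to keep the image of $\Eqsnd$ consequence of $\Solved \cup \Df$, yielding $\PredWellFormed$; and since every $\Sigma$ used is the relevant fragment of a most general solution, Proposition~\ref{prop:correct-mgs-solved} guarantees that it computes genuine solutions, so that $\PredSymb$ and $\PredConseq$ survive the substitution, the latter because the terms introduced by $\Sigma$ are consequence by $\Solved$-basis and hence create no new uncovered destructor application.

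The substantive part of the argument concerns the two formula invariants when \eqref{rule:equality} and \eqref{rule:rewrite} inject new entries into $\USolved$. Soundness ($\PredCorrectFormula$) is comparatively direct: the equality formulas added by \eqref{rule:equality} are instances of the tautological placeholder $\psi = \clause[X,Y,z]{X \eqf Y}{(X \dedfact z \wedge Y \dedfact z)}$, and the deduction formulas added by \eqref{rule:rewrite} are instances of $\RewF{\xi}{\ell \rightarrow r}{p}$, each of which faithfully encodes a valid application of a rewrite rule to a term built from $\Solved$; in both cases $\FApply{\cdot}{\cdot}{\cdot}$ preserves the soundness relation by construction. The main obstacle is $\PredCompleteFormula$ for \eqref{rule:rewrite}: I must show that whenever a solution of some $\C^e_2 \in \Gamma$ weakly satisfies a freshly added head $H = (\xi \dedfact u)$, the set $\USolved(\C^e_2)$ contains a head-equivalent formula whose hypotheses that solution satisfies. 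This is precisely the phenomenon illustrated by the running example, where the recipe $\otherfun(\ax_1,\ax_2)$ is normalised by \emph{different} rewrite rules in different processes of the same component; the rule compensates for this by adding, via $\USolvedFunc{S'}$, one formula per rewrite rule of $\RewF{\xi}{\ell \rightarrow r}{p}$ to every $S' \in \Gamma$. The completeness proof then reduces to a case analysis on which rewrite rule normalises $u$ in $\C^e_2$, using subterm convergence together with the skeleton structure of Definition~\ref{def:skeleton} to match that rule with the corresponding formula of $\USolvedFunc{\C^e_2}$.

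Finally I would argue that the subsequent normalisation preserves $\PredAll$, ideally through a companion lemma treating the simplification, normalisation and vector-simplification rules in isolation: \eqref{rule:vectorbot} only deletes $\bot$-systems, \eqref{rule:vector-split solved} only partitions a component (so the per-component invariants are inherited and $\PredStruct$ is maintained because the split parts keep their second-order structure), and \eqref{rule:vector-solve} transfers a solved deduction formula into $\Solved$ under its non-redundancy side condition, which is exactly the step re-establishing $\PredConseq$ once a knowledge-base-extending entry has been produced in $\USolved$. Assembling these pieces across the three rules and two branches then yields $\PredAll(\S')$.
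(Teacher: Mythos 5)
Your plan coincides with the paper's proof in its essentials. The paper likewise proceeds invariant by invariant, dismisses the negative branches (which only extend \(\Eqsnd\) by a disequation; note in passing that \(\PredWellFormed\) \emph{does} constrain \(\Eqsnd\), via \(\mgu(\Eqsnd) \neq \bot\) and \(\im(\mgu(\Eqsnd)) \subseteq \conseq(\Solved \cup \Df)\), but this is unaffected because mgu computation ignores disequations), pushes the positive branches through the bookkeeping built into \(\CApply{\Sigma}{\cdot}\), obtains \(\PredCorrectFormula\) from the tautological placeholder and from the validity of the rewrite step encoded by \(\RewF{\xi}{\ell \rightarrow r}{p}\), and---on the point you correctly single out as the crux---proves \(\PredCompleteFormula\) for \eqref{rule:rewrite} exactly as you propose: by constructor-destructorness some rule \(\ell' \to r'\) unifies with the term under the destructor context, and the formula generated for that rule in every other system of the component is the head-equivalent witness. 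One sub-invariant is handled differently: for \(\PredConseq\) your one-line justification via \(\Solved\)-basis only accounts for the recipes in \(\im(\Sigma)\), whereas the invariant quantifies over \emph{all} destructor applications to consequences of the updated knowledge base, whose first-order terms are further instantiated once normalisation propagates the linking equations \(E_\Sigma\). The paper avoids tracking this by re-deriving \(\PredConseq\) from the already-preserved \(\PredWellFormed\) (shape of \(\Solved\)): a destructor application normalising to a constructor term does so by a single root rewrite, whose result is ground or a subterm of an argument by subterm convergence. Your version needs this repair or an equivalent one.

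The genuine gap is in your normalisation phase. Your checklist covers \eqref{rule:vectorbot}, \eqref{rule:vector-split solved} and \eqref{rule:vector-solve}, but the rules that actually threaten \(\PredCompleteFormula\) are those that \emph{remove} formulas from \(\USolved\), namely \eqref{rule:Disequation removal 2} and \eqref{rule:Removal of unsolved formula}: the invariant asserts the existence in \(\USolved(\C^e_2)\) of a head-equivalent formula whose hypotheses the given solution satisfies, and deletion can destroy precisely that witness. The paper closes these cases explicitly: if the witness \(\psi\) were removed by \eqref{rule:Disequation removal 2}, then \(\mgs(\C^e_2[\Eqfst \mapsto \Eqfst \wedge \Fhyp(\psi)]) = \emptyset\) contradicts the fact that the solution satisfies \(\Fhyp(\psi)\); if it is removed by \eqref{rule:Removal of unsolved formula}, the side condition supplies a solved \(\psi'\) with \(\psi' \receq \psi\), which serves as the new witness since its hypotheses are trivially satisfied. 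Likewise \eqref{rule:vector-consequence}, which you omit, adds an \emph{unsolved} equality formula to every system of the component, and its completeness case requires showing that any solution weakly satisfying the head \(\xi \eqf \zeta\) also satisfies the generated hypothesis (the paper does this by identifying the two deduced first-order terms). Without these cases the inductive preservation of \(\PredCompleteFormula\)---on which your \eqref{rule:rewrite} argument itself relies at the next step---does not go through.
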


For the proof we let \(\Gamma' \in \S'\) and write \(\Gamma \in \S\) the set from which \(\Gamma'\) is originated, i.e. \(\Gamma'\) is one of the sets obtained after applying one case distinction rule to \(\Gamma\) (either the positive or the negative branch) and then normalising with the simplification rules.

\begin{proof} (Preservation of \(\PredWellFormed\).)
  Let \(\C^{e\prime} = (\Phi, \Df, \Eqfst, \Eqsnd, \Solved, \USolved)\) for some \((\P',\C',\C^{e\prime}) \in \Gamma'\).
  We consider each item of the definition of the predicate (Definition \ref{def:well-formed}) and show that \(\C^{e\prime}\) verifies them.

  \caseitem{\emph{property 1 (Variables in \(\Solved\) and \(\USolved\)):}
  \(\vars[2](\Solved,\USolved) \subseteq \vars[2](\Df)\).}

    We first observe that this property is preserved by all simplification rules:
    therefore it sufficies to prove that it is preserved by application of case distinction rules.
    For that we also observe that if \(\C^e\) is an extended constraint system verifying this property then for all second-oder substitutions \(\Sigma\), \(\CApply{\Sigma}{\C^e}\) verifies it as well (which is precisely why we consider this notation rather than a raw application \(\C^e \Sigma\)).
    This is sufficient for getting the expected result for Rule \eqref{rule:satisfiable}.
    The case of the negative branches of Rules \eqref{rule:rewrite} and \eqref{rule:equality} are trivial.
    Regarding their positive branches, we only treat the case of Rule \eqref{rule:equality} since the treatment of \eqref{rule:rewrite} can be obtained by using a similar reasoning on each formulas added by the rule (the sets \(\USolved_0\) in the notations of Rule \eqref{rule:rewrite}).
    The rule \eqref{rule:equality} does not introduce elements in \(\Solved\) and the only second-oder variables introduced in \(\USolved\) that are not trivially added to \(\Df\) are from recipes \(\xi_1,\xi_2\) such that \(\xi_1 \dedfact u_1, \xi_2 \dedfact u_2 \in \Solved(\C^e)\) for some \((\P,\C,\C^e) \in \Gamma\).
    In particular by hypothesis \(\vars[2](\xi_1,\xi_2) \subseteq \vars[2](\Df(\C^e)) \subseteq \vars[2](\Df(\C^{e\prime}))\), hence the conclusion.

  \caseitem{\emph{property 2 (first and second-order equations):}
  we have \(\mgu(\Eqn[i]) \neq \bot\), \(\dom(\mgu(\Eqn[i])) \cap \vars[i](\Df) = \emptyset\), and \(\vars[i](\im(\mgu(\Eqn[i]))) \subseteq \vars[i](\Df)\).}

    The fact that \(\mgu(\Eqn[i]) \neq \bot\) is a direct consequence of the facts that \(\C^{e\prime} \neq \bot\) and that \(\C^{e\prime}\) is already normalised by the simplification rules of Figure \ref{fig:normalisation_constraint_systems},
    in particular Rules \ref{rule:unifEqfst_norm} and the rules inherited from Figure \ref{fig:normalisation_formula}.
    The remaining properties are simple invariants of the mgu's that are straightforward to verify.

  \caseitem{\emph{property 3 (Solution is consequence):}
  \(\im(\mgu(\Eqsnd)) \subseteq \conseq(\Solved \cup \Df)\).}

    This property comes from the fact that \(\mgu(\Eqsnd)\) is only modified in the positive branches of the case distinction rules by applying a mgs \(\Sigma\) to the extended constraint systems \(\C^e \in \Gamma\).
    A quick induction on the constraint solving relation for computing mgs \(\simpl\) show that \(\im(\Sigma) \subseteq \conseq(\Solved(\C^e) \cup \Df(\C^e))\), hence the conclusion.

  \caseitem{\emph{property 4 (Shape of \(\Solved\)):}
  for all \(\psi = (\xi \dedfact u) \in \Solved\), \(\psi \in \USolved\), \(u \notin \X[1]\), \(u \in \subterms(\Phi)\) and \(\subterms(\xi) \subseteq \conseq(\Solved \cup \Df)\).}

    The only rule adding a deduction fact to \(\Solved\) is the rule \eqref{rule:vector-solve};
    in particular this gives \(\psi \in \USolved\).
    The added deduction facts are of the form \(\xi \dedfact u\) where, for some \((\P,\C,\C^e) \in \Gamma\), \(\xi \dedfact u \in \USolved(\C^e)\) and \((\zeta,u) \notin \conseq(\Solved(\C^e) \cup \Df(\C^e))\) for any recipe \(\zeta\).
    The fact that \(\xi \dedfact u \in \USolved(\C^e)\) and \(\PredWellFormed(\Gamma)\) (property 5) justify that \(\subterms(\xi) \subseteq \conseq(\Solved \cup \Df)\);
    this justifies that \(u \in \subterms(\Phi)\) as well when taking into account that \(u\) is not a consequence of \(\Df(\C^e)\) (in particular it cannot be a ground contructor term without names otherwise it would even be consequence of the empty set).
    Finally the property \(u \notin \X[1]\) also follows from \((\zeta,u)\) not being a consequence of \(\Df(\C^e)\).

  \caseitem{\emph{property 5 (Shape of \(\USolved\)):}
  For all \(\clause{H}{\varphi} \in \USolved(\C^{e\prime})\), \(\varphi\) only contains syntactic equations as hypothesis, i.e. no deduction facts. Moreover \(\strsubterms[2](H) \subseteq \conseq(\Solved \cup \Df)\),
  and if \(H = \xi \dedfact u\) then either \(u \in \subterms(\Phi)\) or there exists a recipe \(\zeta\) such that \((\zeta,u) \in \conseq(\Solved \cup \Df)\).}

    The property that the hypotheses of the formula only contain syntactic equations can be obtained by a straightforward inspection of each case-distinction and simplification rules.
    Note in particular that whenever a formula \(\psi\) with deduction-fact hypotheses is considered (in Rules \eqref{rule:rewrite} and \eqref{rule:equality}), they are applied to a substitution \(\Sigma\) so that \(\FApply{\Sigma}{\psi}{\C^e}\) only has equations as hypotheses.
    As for the second part of property 5 we write
    \begin{enumerate*}
      \item \label{it:inv-wf-usolved-1}
      the property \(\strsubterms[2](H) \subseteq \conseq(\Solved \cup \Df)\) and
      \item \label{it:inv-wf-usolved-2}
      the rest (the property about the head terms of deduction formulas).
    \end{enumerate*}
    We perform a case analysis on the rule that added the formula to \(\Gamma'\).
    \begin{itemize}
      \item rule \eqref{rule:vector-consequence}:
      the proof of \ref{it:inv-wf-usolved-2} directly follows from \(\PredWellFormed(\Gamma)\) since the rule does not add a deduction formula.
      Regarding \ref{it:inv-wf-usolved-1},
      using the notations of the rule, the head of the formula is \(\xi \eqf \zeta\) where \(\xi \dedfact u \in \USolved\) for some \(u\), and \(\zeta \in \conseq(\Solved \cup \Df)\).
      In particular the conclusion follows from \(\PredWellFormed(\Gamma)\) (property 4 for the subterms of \(\xi\) and property 5 for the subterms of \(\zeta\))
      \item rule \eqref{rule:rewrite}:
      we first prove \ref{it:inv-wf-usolved-1}.
      Using the notations of the rule, all non-root positions of the head of a formula of \(\USolved_0\) are either variables \(X\) such that \(\Df\) contains a deduction fact \(X \dedfact x\) (generated by the skeleton), a position of \(\xi_0\), or a public function symbol (since the rewriting system is constructor, the left-hand sides \(\ell\) of the rewrite rules only contain a destructor at their roots).
      In particular all strict subterms of \(H\) are consequences of \(\Solved \cup \Df\).
      Now let us prove \ref{it:inv-wf-usolved-2}.
      By definition of the rule, \(u\) is a constructor term in normal form obtained after applying one rewrite rule \(\ell \to r\) at the root of \(C[u_0]\) for some context \(C\) (not containing names but possibly containing variables \(x\) such that \(X \dedfact x \in \Df\) for some \(X\)).
      We recall that the rewriting system is constructor-destructor and subterm convergent, which leaves two cases.
      The first is that \(r\) is a ground constructor term, and then \(u = r\) is trivially a consequence of \(\Solved \cup \Df\) for the recipe \(\zeta = r\).
      Otherwise \(r\) is a subterm of \(\ell\), meaning that \(u\) is a subterm of \(C[u_0]\).
      This implies that \(u\) is either a subterm of \(u_0\), a subterm of the context \(C\), or a term of the form \(C'[u_0]\) for some subcontext \(C'\) of \(C\).
      In all cases, since \(u_0 \in \subterms(\Phi)\) by \(\PredWellFormed(\Gamma)\) (property 4), this gives the expected conclusion.
      \item rule \eqref{rule:equality}:
      the proof of \ref{it:inv-wf-usolved-2} follows from \(\PredWellFormed(\Gamma)\) since the rule does not add a deduction formula.
      Regarding \ref{it:inv-wf-usolved-1},
      this follows from \(\PredWellFormed(\Gamma)\) (property 4).
      \qedhere
    \end{itemize}
\end{proof}

\begin{proof} (Preservation of \(\PredCorrectFormula\).)
  Let \(\C' = (\Phi, \Df, \Eqfst, \Eqsnd, \Solved, \USolved)\) for some \((\P',\C',\C^{e\prime}) \in \Gamma'\), \(\psi \in \Solved \cup \USolved\), and \((\Sigma,\sigma)\) such that
  \[(\Phi,\Sigma,\sigma) \models
    \underset {\vars[2](\psi') \subseteq \vars[2](\psi)} {\underset {\psi' \in \Df} \bigwedge} \hspace{-5mm} \psi' \quad \wedge
    \equality{\Eqfst}\]
  We have to prove that \((\Phi,\Sigma,\sigma) \models \psi\).
  Here we prove more precisely that the conclusion holds when applying one time any of the case-distinction or simplification rules.
  First of all we observe that this property is preserved by the application of a mgs to an extended constraint system.
  In particular this makes the conclusion immediate for all rules except the following:
  \begin{itemize}
    \item Rule \eqref{rule:rewrite} (positive branch):
    using the notations of the rule and recalling \(\PredCorrectFormula(\Gamma)\), the conclusion follows from the fact that if \(\xi_0 \Sigma \Phi \sigma \norm = u_0 \norm\), \(\ell \to r \in \R\) and \(C[u] = \ell \sigma'\) for some substitution \(\sigma'\), then \(C[\xi_0] \Sigma \Phi \sigma \norm = r \sigma' \norm\).

    \item Rule \eqref{rule:equality} (positive branch):
    using the notations of the rule and recalling \(\PredCorrectFormula(\Gamma)\), if \(\psi\) is the formula added to \(\USolved\) by this rule, we have \((\Phi,\Sigma,\sigma) \models \psi\) iff for some recipes \(\xi_1,\xi_2\), if \(\xi_1 \Sigma \Phi \sigma \norm = z\) and \(\xi_2 \Sigma \Phi \sigma \norm = z\)
    then \(\xi_1 \Sigma \Phi \sigma \norm = \xi_2 \Sigma \Phi \sigma \norm\).
    This naturally holds.

    \item Rule \eqref{rule:vector-solve}:
    since the element added to \(\Solved\) is originated from \(\USolved\), the conclusion follows from \(\PredCorrectFormula(\Gamma)\).

    \item Rule \eqref{rule:vector-consequence}:
    the reasoning is identical to the one for \eqref{rule:equality}. \qedhere
  \end{itemize}
\end{proof}

\begin{proof} (Preservation of \(\PredCompleteFormula\).)
  Let \({\C_1^e}',{\C_2^e}'\) for some \((\P_1',\C_1',{\C^e_1}'),(\P_2',\C_2',{\C^e_2}') \in \Gamma'\), \(\psi = \clause{H}{\varphi} \in \USolved({\C^e_1}')\) and
  \((\Sigma,\sigma) \in \Sol({\C^e_2}')\).
  We assume that \((\Phi({\C^e_2}'),\Sigma,\sigma)\) weakly satisfies \(H\).
  We want to prove that there exists \(\psi' = (\clause{H'}{\varphi'}) \in \USolved({\C^e_2}')\) such that \(H' \receq H\) and \((\Phi({\C^e_2}'), \Sigma, \sigma) \models \varphi'\).
  We prove the strenghtened property stating that the invariant is preserved after the application of each case-distinction and simplification rules.
  \begin{itemize}
    \item Rules \eqref{rule:unifEqfst_norm}, \eqref{rule:uniform}, \eqref{rule:Disequation removal 1}:
      the conclusion directly follows from \(\PredCompleteFormula(\Gamma)\).

    \item Rule \eqref{rule:Disequation removal 2}:
      by \(\PredCompleteFormula(\Gamma)\) we let \(\psi_0' = (\clause{H_0'}{\varphi_0'}) \in \USolved(\C^e_2)\) such that \(H_0' \receq H\) and \((\Phi(\C^e_2), \Sigma, \sigma) \models \varphi_0'\).
      The conclusion directly follows from this, except if \(\psi_0'\) is the formula removed by the rule, i.e. if \(\mgs(\C^e_2[\Eqfst \mapsto \Eqfst \wedge \varphi_0']) = \emptyset\).
      However this would yield a contradiction with \((\Phi(\C^e_2), \Sigma, \sigma_2) \models \varphi_0'\), hence the conclusion.

    \item Rule \eqref{rule:Removal of unsolved formula}:
      using the same notations as in the previous case, we let \(\psi_0'\) by \(\PredCompleteFormula(\Gamma)\) and the only non-trivial case is again the one where \(\psi_0'\) is removed from \(\C^e_2\) by the rule.
      This means that there exists \(\psi' \in \USolved({\C^e_2}')\) solved such that \(\psi' \receq H_0'\), hence the conclusion since \(\Fhyp(\psi') = \top\) and \(H_0' \receq H\).

    \item Rules \eqref{rule:vectorbot} and \eqref{rule:vector-split solved}:
      the conclusion follows from the \(\PredCompleteFormula(\Gamma)\) since \(\Gamma' \subseteq \Gamma\).

    \item Rule \eqref{rule:vector-solve}:
      directly follows from \(\PredCompleteFormula(\Gamma)\).

    \item Rule \eqref{rule:vector-consequence}:
      Let us write
      \[\Gamma' = \{ (\P,\C,\C^e[\USolved \mapsto \USolved \wedge \FApply{\Sigma}{\psi}{\C^e}]) \mid (\P,\C,\C^e) \in \Gamma \}\]
      with the notations and assumptions of the rule.
      If \(i \in \{1,2\}\) we write \(S_i = (\P,\C,\C^e_i)\).
      The only case that does not directly follows from \(\PredCompleteFormula(\Gamma)\) is the case where \(\psi = \FApply{\Sigma}{\psi}{\C^e_1}\).
      But since there exists a formula \(\xi \dedfact u_{S_2} \in \USolved(\C^e_2)\) by hypothesis, we know by \(\PredCorrectFormula(\Gamma)\) that \(\xi \Phi(\C^e_2) \Sigma \sigma_2 \norm = u_{S_2}\).
      In particular since \((\Phi(\C^e_2), \Sigma, \sigma_2)\) weakly satisfies \(\xi \eqf \zeta\), we can write \(u_{S_2}' = \zeta \Sigma \Phi(\C^e_2) \sigma_i \norm\) and have \(u_{S_2} = u_{S_2'}\).
      Since we also have by definition (after formula normalisation, see Figure \ref{fig:normalisation_formula})
      \[\FApply{\Sigma}{\psi}{\C^e_2} = (\clause{\xi \eqf \zeta}{u_{S_2} \eqs u_{S_2'}})\]
      we obtain the expected conclusion.

    \item Any case-distinction rule (negative branch) or Rule \eqref{rule:satisfiable}:
    follows from \(\PredCompleteFormula(\Gamma)\).

    \item Rule \eqref{rule:rewrite} (positive branch):
      using the notations of the rule, the only case that does not directly follow from the assumption \(\PredCompleteFormula(\Gamma)\) is the case where \(\psi \in \USolved_0\).
      The hypothesis \((\Phi(\C^e_2),\Sigma,\sigma)\) weakly models the head of \(\psi\) can then be rephrased as \(\msg(\xi' \Sigma \Phi(\C^e_2) \sigma)\) for some recipe \(\xi' = C[\xi_0]\) such that \(\rootf(C) \in \sigd\), by definition of \(\USolved_0\).
      Let us write \(u = \xi_0 \Sigma \Phi(\C^e_2) \sigma \norm\).
      Since the rewriting system is constructor-destructor and \(\rootf(C[u]) \in \sigd\), there exists at least one rewrite rule \(\ell' \to r' \in \R\) such that \(C[u]\) and \(\ell'\) are unifiable.
      In particular it sufficies to choose \(\psi'\) the formula of \(\USolved_0\) corresponding to picking this rule in the definition of \(\RewF{\xi}{\ell \to r}{p}\).

    \item Rule \eqref{rule:equality} (positive branch):
      easily follows from \(\PredCompleteFormula(\Gamma)\) since the rule only adds to each \((\P,\C,\C^e) \in \Gamma\) the same solved formula.
      \qedhere
  \end{itemize}
\end{proof}

\begin{proof} (Preservation of \(\PredConseq\).)
  Let \(\C' = (\Phi, \Df, \Eqfst, \Eqsnd, \Solved, \USolved)\) for some \((\P',\C',\C^{e\prime}) \in \Gamma'\) and \(k\) such that \(\vars[2](\Df(\C)) \subseteq \Xsndi{k}\).
  We also let \(\ffun/n \in \sigd\) and \((\xi_1,u_1),\ldots, (\xi_n,u_n) \in \conseq(\Solved(\C))\) such that \(\xi_1,\ldots, \xi_n \in \recipeset_k\),
  and \(\ffun(u_1,\ldots, u_n) \norm\) is a constructor term.
  We have to prove that there is \(\xi \in \recipeset_k\) such that \((\xi,\ffun(u_1,\ldots, u_n) \norm) \in \conseq(\Solved(\C) \cup \Df(\C))\).
  This can be justified by \(\PredWellFormed(\Gamma)\) (in particular the item ``shape of \(\Solved\)'').
  Indeed, by definition the term \(\ffun(u_1,\ldots, u_n)\) contains a single destructor, which is the symbol \(\ffun\) at its root.
  In particular if \(\ffun(u_1,\ldots, u_n) \norm\) is indeed a constructor protocol term \(u\), this has been obtained after applying a single rewriting rule at the root of \(\ffun(u_1,\ldots, u_n)\) since the rewriting system is constructor-destructor.
  Therefore, by subterm convergence, \(u\) is either a ground protocol term (without names, by definition of a rewrite rule) or a subterm of one of the \(u_i\)s.
  In the former case the conclusion is immediate, in the latter it follows from the invariant \(\PredWellFormed\).
\end{proof}

\begin{proof} (Preservation of \(\PredSymb\).)
  The preservation of this invariant is straightforward:
  the case distinction and simplification rules only modify the extended constraint systems by
  \begin{enumerate*}
    \item applying a mgs, or
    \item adding formulas to \(\Eqfst\), \(\Eqsnd\), \(\Solved\),..., or
    \item removing formulas from \(\USolved\), or
    \item removing trivially-false disequations from \(\Eqfst\).
  \end{enumerate*}
  In particular such operations restrict the set of solutions.
\end{proof}

We can then extend this property to the whole procedure by proving the preservation by symbolic rules.

\begin{lemma}
  Let \(\Gamma\) be a set of extended symbolic processes such that \(\PredAll(\{\Gamma\})\).
  We assume that no case-distinction or simplification rules can be applied to \(\Gamma\).
  We then let
  \begin{align*}
    \Gamma_\inp & = \{ (\P',\C',\C^{e\prime}) \mid (\P,\C,\C^e) \in \Gamma, (\P,\C,\C^e) \Sstep{\InP{Y}{X}} (\P',\C',\C^{e\prime})\}\\
    \Gamma_\outp & = \{ (\P',\C',\C^{e\prime}) \mid (\P,\C,\C^e) \in \Gamma, (\P,\C,\C^e) \Sstep{\OutP{Z}{\ax_{|\Phi(\C^e)|+1}}} (\P',\C',\C^{e\prime})\}
  \end{align*}
  and the set of set of symbolic processes \(\S\) obtained by normalising \(\{\Gamma_\inp,\Gamma_\outp\}\) with the simplification rules.
  Then \(\PredAll(\S)\).
\end{lemma}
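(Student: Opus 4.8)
The plan is to split the passage from $\Gamma$ to $\S$ into its two ingredients and treat each separately. Each composite transition defining $\Gamma_\inp$ and $\Gamma_\outp$ is a finite sequence of elementary symbolic steps $\sstep{\alpha}$ (the extended rules of Section~\ref{sec:symbolic-rules}: \eqref{rule:e-in}, \eqref{rule:e-out}, and the silent analogues of \textsc{s-Then}, \textsc{s-Else}, \textsc{s-Comm}, \textsc{s-Par}), followed by normalisation with the simplification rules. It therefore suffices to show that \textbf{(a)} each elementary symbolic rule preserves $\PredAll$, and \textbf{(b)} normalisation by simplification rules preserves $\PredAll$. Point \textbf{(b)} is already settled: the proof of the preceding lemma verified, rule by rule, that each of the six invariants survives every simplification and normalisation rule, and that argument does not depend on which non-simplification rule was applied beforehand. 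So the whole burden lies in \textbf{(a)}, which I would establish on $\Gamma_\inp$ and $\Gamma_\outp$ separately since the invariants are lifted componentwise. Throughout I would rely on the convention that the fresh first- and second-order variables (and the axiom $\ax_{n+1}$) introduced by a rule are chosen identically across all elements of a component; this is legitimate because $\PredStruct(\Gamma)$ forces a common frame domain, hence a common type $n=|\dom(\Phi)|$ for the fresh second-order variables.

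The three ``local'' invariants are routine. For $\PredSymb$, every symbolic rule adds exactly the same constraints to $\C$ and to $\C^e$, the sole exception being the extra deduction fact $\ax_{n+1} \dedfact v\sigma\norm$ that \eqref{rule:e-out} appends to $\USolved$; since $\USolved$ plays no role in the notion of solution, the solution sets of $\C$ and $\C^e$ grow in lockstep and the required inclusion persists. The silent rules only add first-order (dis)equations or propagate a substitution, exactly as inputs do, hence are harmless here as well. For $\PredWellFormed$ the one nontrivial point is the shape of the new $\USolved$ entry: its head recipe $\ax_{n+1}$ has no strict subterms, so $\strsubterms[2](\ax_{n+1}) \subseteq \conseq(\Solved \cup \Df)$ holds vacuously, while its payload $v\sigma\norm$ is the normal form of the output term, which coincides with the image of $\ax_{n+1}$ in the extended frame and is thus a subterm of $\Phi$. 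For $\PredConseq$, the crucial observation is that no symbolic rule touches $\Solved$: the set $\conseq(\Solved)$ driving the saturation obligation is unchanged, whereas $\conseq(\Solved \cup \Df)$ can only grow as $\Df$ gains facts, and since the common type $n$ never decreases, any witnessing recipe in $\recipeset_n$ certifying an obligation remains valid.

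It remains to treat the two genuinely componentwise invariants. For $\PredStruct$, the uniform choice of fresh variables guarantees that all elements of $\Gamma_\inp$ share the domain $\dom(\Phi)$, the variable set $\vars[2](\C^e)\cup\{X,Y\}$, and the (untouched) recipe set of $\Solved$; symmetrically, all elements of $\Gamma_\outp$ extend the frame domain by the single axiom $\ax_{n+1}$ and add the fresh channel variable, so $\PredStruct$ holds on each new component. For $\PredCompleteFormula$ the input case is immediate as \eqref{rule:e-in} adds no formula to $\USolved$; in the output case, the facts appended to the various $\USolved(\C^e_S)$ all share the head recipe $\ax_{n+1}$ but may carry distinct payloads $v_S\sigma\norm$. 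Given such a fact $\ax_{n+1} \dedfact v_1\sigma\norm$ in one system and a solution of a second system $S_2$ weakly satisfying its head---which here merely asserts $\msg$ of the frame entry at $\ax_{n+1}$, always true---the required head-equivalent, hypothesis-satisfying formula is precisely the solved fact $\ax_{n+1} \dedfact v_2\sigma\norm$ the same rule added to $\USolved(\C^e_{S_2})$. Finally $\PredCorrectFormula$ for this output fact reduces to the tautology that $\ax_{n+1}$ computes its own frame entry $v\sigma\norm$.

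The main obstacle I anticipate is not any single invariant in isolation but the interaction between $\PredConseq$ and the fact that \eqref{rule:e-out} enlarges the frame without inserting $\ax_{n+1}$ into $\Solved$ (that transfer is deferred to \eqref{rule:vector-solve}): one must check carefully that leaving the knowledge base temporarily unsaturated with respect to the fresh output does not break the saturation invariant, which works only because $\PredConseq$ quantifies its premises over $\conseq(\Solved)$ and not over $\conseq(\Solved\cup\Df)$. The second delicate point, which genuinely requires the $\PredStruct(\Gamma)$ hypothesis together with the uniform freshness convention, is to argue that the componentwise invariants survive the split into $\Gamma_\inp$ and $\Gamma_\outp$, even though these sets are assembled from transitions taken independently by the distinct processes of $\Gamma$.
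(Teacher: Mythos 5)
Your treatment of \(\PredConseq\) is where the proof breaks, and it is no accident that the paper's own proof declares this invariant the only substantial case and spends its entire argument on it. You verify that \emph{old} saturation obligations keep their witnesses (\(\conseq(\Solved)\) is untouched by the symbolic rules, \(\conseq(\Solved\cup\Df)\) only grows, and \(\recipeset_{k}\subseteq\recipeset_{k'}\) for \(k\leqslant k'\)), but you never account for the \emph{new} obligations. Recall that \(\PredConseq\) quantifies its premises over recipes \(\xi_1,\ldots,\xi_n\in\recipeset_k\) where \(k\) is the minimal type bounding \(\vars[2](\Df)\); the rules \eqref{rule:e-in} and \eqref{rule:e-out} introduce fresh second-order variables of type \(n=|\dom(\Phi)|\), so \(k\) jumps to \(n\), and knowledge-base entries whose recipes have types strictly between the old bound and \(n\) --- above all the axioms bound by recent outputs, transferred into \(\Solved\) by \eqref{rule:vector-solve} but previously exempt because their type exceeded the old \(k\) --- suddenly enter the scope of the invariant. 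Concretely: if the parent's last action output a pair \(\pair{a,b}\) to \(\ax_m\) while \(\Df\) had maximal type \(m-1\), then after the next input the child has \(k=m\), so \(\fst(\ax_m)\norm=a\) becomes an obligation, and nothing in your argument produces a recipe deducing \(a\). Such witnesses exist only because of the lemma's central hypothesis --- that no case-distinction rule, in particular \eqref{rule:rewrite}, and no simplification rule, in particular \eqref{rule:vector-solve}, is applicable to \(\Gamma\) --- a hypothesis your proof never invokes for \(\PredConseq\).

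This hypothesis is exactly how the paper discharges the case. Given \(\ffun\in\sigd\) and \((\xi_1,u_1),\ldots,(\xi_n,u_n)\in\conseq(\Solved)\) with \(u=\ffun(u_1,\ldots,u_n)\) and \(u\norm\) a constructor term, it distinguishes two cases: either some rewrite rule \(\ell\to r\) applicable at the root of \(u\) has a non-variable position \(i\cdot p\) such that \(\getpos{\xi_i}{p}\) is the recipe of a \(\Solved\) entry, in which case non-applicability of \eqref{rule:rewrite} to \(\Gamma\) (whose side condition already operates at type \(|\dom(\Phi(S))|\), not at the smaller \(\Df\)-type of the parent --- this is why it covers the obligations that the raised bound drags into scope) forces a solved formula \(\zeta'\dedfact u\norm\) to be present in \(\USolved\), and non-applicability of \eqref{rule:vector-solve} then yields \((\xi',u\norm)\in\conseq(\Solved\cup\Df)\); or no such position exists, so the relevant subrecipes are constructor-rooted, and constructor-destructor-ness plus subterm convergence make \(u\norm\) either ground or of the form \(C[\xi_p,\ldots,\xi_q]\) for a constructor context \(C\), giving the witness directly. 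Your closing paragraph shows you sensed the tension, but your resolution is also subtly off: the fresh output is protected at the moment of the transition because \(\ax_{n+1}\dedfact v\sigma\norm\) sits in \(\USolved\), yet after normalisation \eqref{rule:vector-solve} moves it into \(\Solved\), and from then on it is exempt only by the type restriction \(\xi_i\in\recipeset_k\) with \(k=n<n+1\) --- and that restriction does nothing for the previously-exempt older entries, for which the non-applicability hypothesis is indispensable.
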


\begin{proof}
  Most invariants are either easily seen to be preserved by application of any symbolic or simplification rules, or are a straightforward consequence of the fact that no simplification rules can be applied to \(\S\).
  The only substantial case is the Invariant 4 stating that the knowledge base is saturated.
  Let us then consider \((\P,\C,\C^e) \in \Gamma'\) for some \(\Gamma' \in \S\) and we let \(k\) the minimal index such that \(\vars[2](\Df(\C)) \subseteq \Xsndi{k}\).
  We then let \(\ffun/n\, \in \sigd\) and \((\xi_1,u_1), \ldots, (\xi_n,u_n) \in \conseq(\Solved(\C^e))\) such that \(u \norm\), with \(u = \ffun(u_1, \ldots, u_n)\) is a constructor term.
  We recall that no case-distinction rules can be applied to \(\{\Gamma\}\), in particular \eqref{rule:rewrite}.
  But since the rewriting system is constructor-destructor, \(\ffun \in \sigd\), and \(u \norm\) is a constructor term, this means that a rewrite rule is applicable at the root of \(u\).
  We distinguish two cases.

  \caseitem{\emph{case 1:} there exists a rule \(\ell \to r\) applicable at the root of \(u\), \(i \in \eint {1} {n}\) and \(p\) a position such that
  \(\getpos {\ell} {i \cdot p} \notin \X[1]\) and \((\getpos {\xi_i} {p} \dedfact v) \in \Solved(\C^e)\) for some \(v\).}

    We consider the instance of Rule \eqref{rule:rewrite} with the following parameters:
    the position \(i \cdot p\), the rule \(\ell \to r\), a recipe \(\xi \in \termset(\sig,\quanti{\X}{|\Phi(\C^e)|})\), the formula \(\psi_0\) obtained by considering the rule \(\ell \to r\) in \(\RewF{\xi}{\ell \to r}{i \cdot p}\),
    the deduction fact \(\getpos {\xi_i} {p} \dedfact v\), \(\Sigma_0 = \{\getpos{\xi}{i \cdot p} \mapsto \getpos {\xi_i} {p}\}\),
    and a mgs \(\Sigma\) such that the head of \(\FApply{\Sigma_0\Sigma}{\psi_0}{\CApply{\Sigma}{\C^e}}\) is of the form \(\zeta \dedfact u\norm\) for some recipe \(\zeta\).
    Since this instance of the rule is not applicable by hypothesis, we deduce that there already exists a solved formula \(\psi \in \USolved(\C^e)\) of the form \(\zeta' \dedfact u\norm\).
    Since no simplification rules are applicable neither, in particular \eqref{rule:vector-solve}, we deduce that there exists a recipe \(\xi'\) such that \((\xi',u\norm) \in \conseq(\Solved(\C^e) \cup \Df(\C^e))\).

  \caseitem{\emph{case 2:} otherwise}

    We let \(\ell \to r\) an arbitrary rewrite rule applicable at the root of \(u\).
    By hypothesis for all positions \(i \cdot p\) of \(\ell\) that are not variables we know that \(\getpos{\xi_i}{p}\) is not the recipe of a deduction fact from \(\Solved(\C^e)\);
    since \(\xi_i \in \conseq(\Solved(\C^e))\), \(\rootf(\getpos{\xi_i}{p})\) is therefore a constructor symbol.
    We rule out the immediate case where \(r\) is a ground term and only consider the one where \(r\) is a strict subterm of \(\ell\).
    Then, since the rewriting system is constructor-destructor, we can fix a ground constructor context \(C\) such that \(r = C[x_p,\ldots,x_q]\) for \(x_p, \ldots, x_p\) the variables numbered \(p\) to \(q\) of \(\ell\) (w.r.t. the lexicographic ordering on the positions of the term \(\ell\)).
    In particular the recipe \((C[\xi_p, \ldots, \xi_q], u \norm) \in \conseq(\Solved(\C^e))\).
\end{proof}

\subsection{Preliminary technical results}

In this section we prove some low level technical results that will be useful during the incoming proofs.

\paragraph{Preservation by application of substitutions}
\label{sec:properties_formulas}

  In this section, we show that applying substitution preserves in some cases the different notions we use in our algorithms, namely first-order and second-order equations, deduction and equality facts; and uniformity.


  \begin{lemma}
  \label{lem:apply_subst_clause}
  Let \(\psi\) be either a deduction fact, or an equality fact or a first-order equation or a second-order equation.
  For all ground frame \(\Phi\), for all substitutions \(\Sigma,\Sigma',\sigma,\sigma'\) if \(\dom(\Sigma) \cap \dom(\Sigma') = \emptyset\) and \(\dom(\sigma) \cap \dom(\sigma') = \emptyset\)
  then \((\Phi,\Sigma\Sigma',\sigma\sigma') \models \psi\) is equivalent to \((\Phi,\Sigma\Sigma',\sigma\sigma') \models \psi\Sigma\sigma\) and is equivalent to \((\Phi,\Sigma',\sigma') \models \psi\Sigma\sigma\).
  \end{lemma}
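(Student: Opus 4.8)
The plan is to prove the two-sorted substitution lemma by a case analysis on the four kinds of atom that \(\psi\) may be, unfolding the definition of \(\models\) in each case. Write (1), (2), (3) for the three predicates \((\Phi,\Sigma\Sigma',\sigma\sigma')\models\psi\), \((\Phi,\Sigma\Sigma',\sigma\sigma')\models\psi\Sigma\sigma\), and \((\Phi,\Sigma',\sigma')\models\psi\Sigma\sigma\), respectively; the goal is to show (1)\(\Leftrightarrow\)(2)\(\Leftrightarrow\)(3). The whole argument rests on two elementary observations about the two-sorted term algebra: a recipe (second-order term) contains no first-order variable, hence is invariant under any first-order substitution, and a protocol term (first-order term) contains no second-order variable, hence is invariant under any second-order substitution. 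Consequently the two sorts decouple when computing \(\psi\Sigma\sigma\): for \(\psi = \xi\dedfact u\) we get \((\xi\Sigma)\dedfact(u\sigma)\), for \(\psi=\xi\eqf\zeta\) we get \((\xi\Sigma)\eqf(\zeta\Sigma)\), for a second-order equation \(\psi=\xi\eqs\zeta\) we get \((\xi\Sigma)\eqs(\zeta\Sigma)\), and for a first-order equation \(\psi = u\eqs v\) we get \((u\sigma)\eqs(v\sigma)\).

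With this separation, the truth value of each atom under a triple \((\Phi,\Theta,\theta)\) depends only on the instantiations \(\xi\Theta\) (further mapped through \(\Phi\) and normalised, in the case of \(\dedfact\) and \(\eqf\)) at the second-order level and \(u\theta\) at the first-order level. I would first establish (1)\(\Leftrightarrow\)(3) by pure associativity of composition: reading \(\psi\Sigma\sigma\) under \((\Sigma',\sigma')\) yields the instantiations \((\xi\Sigma)\Sigma' = \xi(\Sigma\Sigma')\) and \((u\sigma)\sigma' = u(\sigma\sigma')\), which are exactly the ones read off \(\psi\) under \((\Sigma\Sigma',\sigma\sigma')\); since \(\Phi\) is ground and the atom predicates depend only on these terms, both triples satisfy their respective atom under the same condition. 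This step merely shifts the substitution \(\Sigma\sigma\) between the formula and the environment and needs no extra assumption. The equivalence (1)\(\Leftrightarrow\)(2), where the environment \((\Sigma\Sigma',\sigma\sigma')\) is the same on both sides, is where idempotence of \(\Sigma\) and \(\sigma\) enters — the standing convention of this paper for the (most general) unifiers and solutions we manipulate (recall Section~\ref{sec:unification}). Here (2) re-applies \(\Sigma,\sigma\) before evaluating, producing instantiations such as \(\xi\Sigma\Sigma\Sigma'\) and \(u\sigma\sigma\sigma'\); idempotence \(\Sigma\Sigma=\Sigma\), \(\sigma\sigma=\sigma\) collapses these to \(\xi(\Sigma\Sigma')\) and \(u(\sigma\sigma')\), matching (1). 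The disjoint-domain hypotheses \(\dom(\Sigma)\cap\dom(\Sigma')=\emptyset\) and \(\dom(\sigma)\cap\dom(\sigma')=\emptyset\) guarantee that the trailing \(\Sigma',\sigma'\) only touch variables left free by \(\Sigma,\sigma\), so that \(\Sigma\Sigma'\) and \(\sigma\sigma'\) act as coherent combined substitutions and the implicit groundedness side-condition of \(\models\) holds uniformly across the three triples.

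I do not expect a serious obstacle, as this is a routine substitution lemma. The only points requiring care are bookkeeping — checking, for the deduction and equality facts, that the frame application and normalisation \(\xi\Theta\Phi\sigma\norm\) commute correctly with the composed substitutions (which holds because \(\Phi\) maps axioms to ground terms, so first-order substitutions act only after \(\Phi\) has been applied) — and being explicit about the use of idempotence, since without it (2) would genuinely fail. I would present the four atom cases in parallel rather than sequentially, since the second-order atoms (\(\dedfact\), \(\eqf\), \(\eqs\)) and the first-order atom (\(\eqs\)) are each dispatched by the very same computation on their respective sort.
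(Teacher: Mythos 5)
Your proof is correct and takes essentially the same route as the paper's: a case analysis on the four atom kinds, using sort separation (second-order substitutions fix first-order terms and vice versa) together with associativity of composition to get \((\Phi,\Sigma\Sigma',\sigma\sigma')\models\psi \Leftrightarrow (\Phi,\Sigma',\sigma')\models\psi\Sigma\sigma\), and the acyclicity/idempotence convention on substitutions to collapse \(\xi\Sigma\Sigma\Sigma'\) and \(u\sigma\sigma\sigma'\) for the remaining equivalence --- exactly the step the paper dispatches with ``by definition of substitution (in particular the acyclic property)''. If anything, your explicit flagging of where idempotence is needed (and that the claim would fail without it) is more careful than the paper's terse appeal.
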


  \begin{proof}
    The proof of this lemma is done by case analysis on \(\psi\).

    \medskip

    \noindent\emph{Case \(u \eqs v\):}
    Consider \((\Phi,\Sigma\Sigma',\sigma\sigma') \models u \eqs v\).
    This is equivalent to \(u\sigma\sigma' = v\sigma\sigma'\).
    Since \(\vars(\Sigma) \cap \X[1] = \emptyset\), we deduce that \((\Phi,\Sigma\Sigma',\sigma\sigma') \models u \eqs v\) is equivalent to \(u\Sigma\sigma\sigma' = v\Sigma\sigma\sigma'\).
    This is also equivalent to \(u\Sigma\sigma\sigma\sigma' = v\Sigma\sigma\sigma\sigma'\) and so \((\Phi,\Sigma\Sigma',\sigma\sigma') \models u\Sigma\sigma \eqs v\Sigma\sigma\).
    Note that \(u\Sigma\sigma\sigma' = v\Sigma\sigma\sigma'\) is also equivalent to \((\Phi,\Sigma',\sigma') \models u\Sigma\sigma \eqs v\Sigma\sigma\).

    \medskip

    \noindent\emph{Case \(\xi \eqs \xi'\):}
    Similar to the previous case.

    \medskip

    \noindent\emph{Case \(\xi \dedfact u\):}
    Consider \((\Phi,\Sigma\Sigma',\sigma\sigma') \models \xi \dedfact u\).
    It is equivalent to \(\xi\Sigma\Sigma'\Phi\norm = u\sigma\sigma'\) and \(\msg(\xi\Sigma\Sigma'\Phi)\).
    But \((\xi \dedfact u)\Sigma\sigma = \xi\Sigma \dedfact u\sigma\).
    Moreover, by definition of substitution (in particular the acyclic property), we deduce that \(\xi\Sigma\Sigma'\Phi = \xi\Sigma\Sigma\Sigma'\Phi\) and \(u\sigma\sigma' = u\sigma\sigma\sigma'\). Hence, \(\msg(\xi\Sigma\Sigma'\Phi)\) is equivalent to \(\msg(\xi\Sigma\Sigma\Sigma'\Phi)\);
    and \(\xi\Sigma\Sigma'\Phi\norm = u\sigma\sigma'\) is equivalent to \(\xi\Sigma\Sigma\Sigma'\Phi\norm = u\sigma\sigma\sigma'\).
    Hence \((\Phi,\Sigma\Sigma',\sigma\sigma') \models \xi \dedfact u\) is equivalent to \((\Phi,\Sigma\Sigma',\sigma\sigma') \models \xi \dedfact u\Sigma\sigma\).
    Note that \(\msg(\xi\Sigma\Sigma'\Phi)\) and \(\xi\Sigma\Sigma'\Phi\norm = u\sigma\sigma'\) are also equivalent to \((\Phi,\Sigma',\sigma') \models \xi\Sigma \dedfact u\sigma\).

    \medskip

    \noindent\emph{Case \(\xi \eqf \xi'\):}
    Similar to previous case.
  \end{proof}


\paragraph{Properties on consequence of set of deduction facts}
\label{sec:properties_consequence}

  This section contains some results about the consequence relation when modifying a set of deduction facts.
  The first lemma is about the application of substitutions.


  \begin{lemma}
  \label{lem:consequence_protocol_terms_substitution}
  Let \(S\) be a set of solved deduction facts.
  For all substitutions \(\sigma\) of protocol terms, for all \((\xi,t) \in \conseq(S)\), \((\xi,t\sigma) \in \conseq(S\sigma)\).
  \end{lemma}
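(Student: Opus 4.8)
The plan is to prove the statement directly by unfolding the definition of consequence (Definition~\ref{def:consequence}) and exploiting that $\sigma$ is a \emph{first-order} substitution, so that it acts trivially on recipes and on the purely constructor context underlying any consequence pair; no induction is needed. Concretely, I would fix an arbitrary $(\xi,t) \in \conseq(S)$ and expand it according to Definition~\ref{def:consequence}: there exist a context $C$ built over $\sigc \cup \sig_0$ and deduction facts $(\xi_1 \dedfact u_1), \ldots, (\xi_n \dedfact u_n) \in S$ such that $\xi = C[\xi_1, \ldots, \xi_n]$ and $t = C[u_1, \ldots, u_n]$.

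I would then record the two key observations. First, since $\sigma$ is a substitution of protocol terms we have $\dom(\sigma) \subseteq \X[1]$; as each recipe $\xi_i$ lies in $\termset(\sig \cup \sig_0 \cup \AX \cup \X[2])$ and hence contains no first-order variable, $\xi_i\sigma = \xi_i$ and in particular $\xi\sigma = \xi$. Second, the context $C$ is built solely from symbols of $\sigc \cup \sig_0$ and therefore contains no variables at all, so applying $\sigma$ commutes with plugging into $C$, giving $t\sigma = C[u_1, \ldots, u_n]\sigma = C[u_1\sigma, \ldots, u_n\sigma]$.

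Finally I would reassemble the consequence pair over $S\sigma$. For each $i$, applying $\sigma$ to $(\xi_i \dedfact u_i) \in S$ yields $(\xi_i \dedfact u_i\sigma) \in S\sigma$, again using $\xi_i\sigma = \xi_i$. Reusing the same context $C$ together with these facts, Definition~\ref{def:consequence} gives $(C[\xi_1,\ldots,\xi_n],\, C[u_1\sigma,\ldots,u_n\sigma]) \in \conseq(S\sigma)$, which is exactly $(\xi, t\sigma)$ by the preceding paragraph. This closes the argument. There is no genuine obstacle here: the only point deserving care is the commutation of $\sigma$ with the context $C$ and its triviality on the recipes $\xi_i$, both of which follow immediately from $\sigma$ being first-order and $C$ being variable-free. (The hypothesis that the facts of $S$ are \emph{solved} serves only to guarantee that each element of $S$ is an atomic deduction fact, so that the expansion above is well defined.)
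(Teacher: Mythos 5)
Your proof is correct and follows essentially the same route as the paper's: the paper also decomposes \((\xi,t)\) as \(\xi = C[\xi_1,\ldots,\xi_n]\), \(t = C[u_1,\ldots,u_n]\) with \((\xi_i \dedfact u_i) \in S\), observes that \((\xi_i \dedfact u_i\sigma) \in S\sigma\), and reassembles with the same context. You merely make explicit two facts the paper leaves implicit, namely that the first-order \(\sigma\) is the identity on recipes and commutes with the variable-free context \(C\).
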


  \begin{proof}
  We know that \((\xi,t) \in \conseq(S)\) implies \(\xi = C[\xi_1,\ldots, \xi_n]\) and \(t = C[t_1,\ldots, t_n]\) for some public context \(C\) and for all \(i\), \((\xi_i \dedfact t_i \in S\).
  Hence \((\xi_i \dedfact t_i\sigma \in S\sigma\) which allows us to conclude.
  \end{proof}


  \begin{proposition}[transitivity of consequences] \label{prop:trans-conseq}
    Let \(S,S'\) be two sets of solved deduction facts.
    Let \(\varphi = \{X_i \dedfact u_i\}_{i=1}^n\) such that all \(X_i\) are pairwise distinct, let \((\xi,t) \in \conseq(S \cup \varphi)\) and \(\Sigma,\sigma\) be two substitutions.
    If for all \(i \in \eint{1}{n}\), \((X_i\Sigma,u_i\sigma) \in \conseq(S\Sigma\sigma \cup S')\) then \((\xi\Sigma,t\sigma) \in \conseq(S\Sigma\sigma \cup S')\).
  \end{proposition}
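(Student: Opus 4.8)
The plan is to unfold Definition~\ref{def:consequence} and argue by structural induction on the constructor context that witnesses the consequence $(\xi,t) \in \conseq(S \cup \varphi)$. By assumption there exist a context $C$ built over $\sigc \cup \sig_0$ and deduction facts $\zeta_j \dedfact v_j \in S \cup \varphi$ (one per hole of $C$, say $j = 1, \ldots, m$) such that $\xi = C[\zeta_1, \ldots, \zeta_m]$ and $t = C[v_1, \ldots, v_m]$. Since $C$ contains only symbols of $\sigc \cup \sig_0$ and no variables, it is invariant under both substitutions, so that $\xi\Sigma = C[\zeta_1\Sigma, \ldots, \zeta_m\Sigma]$ and $t\sigma = C[v_1\sigma, \ldots, v_m\sigma]$; the whole argument thus reduces to tracking how each leaf $(\zeta_j, v_j)$ behaves, and I would run the induction on the number of symbols of $C$.

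For the base case $C = [\cdot]$ is a single hole, so $\xi = \zeta_1$ and $t = v_1$ with $\zeta_1 \dedfact v_1 \in S \cup \varphi$, and I split on the origin of this fact. If $\zeta_1 \dedfact v_1 \in \varphi$, then $\zeta_1 = X_i$ and $v_1 = u_i$ for some $i$ (using that the $X_i$ are pairwise distinct, so the pairing is unambiguous), and the goal $(\xi\Sigma, t\sigma) = (X_i\Sigma, u_i\sigma) \in \conseq(S\Sigma\sigma \cup S')$ is exactly the hypothesis of the proposition. If instead $\zeta_1 \dedfact v_1 \in S$, then since $\zeta_1$ is a recipe (no first-order variables) and $v_1$ a protocol term (no second-order variables), applying the substitutions gives $\zeta_1\Sigma \dedfact v_1\sigma \in S\Sigma\sigma$; a single deduction fact is a consequence of its own singleton via the trivial (hole) context, whence $(\xi\Sigma, t\sigma) \in \conseq(S\Sigma\sigma) \subseteq \conseq(S\Sigma\sigma \cup S')$.

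For the inductive step, $C = f(C_1, \ldots, C_k)$ for some $f/k \in \sigc \cup \sig_0$, so that $\xi = f(\xi_1, \ldots, \xi_k)$ and $t = f(t_1, \ldots, t_k)$, where each $(\xi_l, t_l)$ is a consequence of $S \cup \varphi$ via the strictly smaller context $C_l$. The induction hypothesis yields $(\xi_l\Sigma, t_l\sigma) \in \conseq(S\Sigma\sigma \cup S')$ for every $l$, i.e. each such pair is realised by some context $C_l'$ over $\sigc \cup \sig_0$ filled with facts of $S\Sigma\sigma \cup S'$. It then suffices to observe that $f(C_1', \ldots, C_k')$ is again a context over $\sigc \cup \sig_0$ realising $(\xi\Sigma, t\sigma) = (f(\xi_1\Sigma, \ldots, \xi_k\Sigma), f(t_1\sigma, \ldots, t_k\sigma))$, so this pair lies in $\conseq(S\Sigma\sigma \cup S')$, which closes the induction.

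The proof is essentially bookkeeping, and the only point requiring care—the main, minor, obstacle—is keeping the second-order substitution $\Sigma$ acting on the recipe components and the first-order substitution $\sigma$ acting on the term components coherent throughout the context decomposition. This coherence is guaranteed by the facts that the witnessing context $C$ is variable-free (hence fixed by either substitution) and that the deduction facts of $\varphi$ pair each pairwise-distinct recipe $X_i$ with a unique term $u_i$, so the recipe-side and term-side decompositions remain aligned once the substitutions are applied.
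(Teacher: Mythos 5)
Your proof is correct and follows essentially the same route as the paper's: an induction on the size of the recipe (yours phrased via the witnessing context $C$, the paper's via $|\xi|$) with the same three-way case analysis into deduction facts from $\varphi$, deduction facts from $S$, and constructor/constant applications. The only cosmetic difference is that you absorb the constant case $\xi = t \in \sig_0$ into the inductive step as a $k=0$ context, where the paper treats it as a separate base case; both are fine.
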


  \begin{proof}
    We prove this result by induction on \(|\xi|\).
    The base case (\(|\xi| = 0\)) is trivial as there are no terms of size \(0\) and we hence focus on the inductive step.
    We perform a case analysis on the hypothesis \((\xi,t) \in \conseq(S \cup \varphi)\).

    \caseitem{\emph{case 1: \(\xi = t \in \sig_0\)}}
      We directly have by definition that \((\xi\Sigma,t\sigma) \in \conseq(S\Sigma\sigma \cup S')\).

    \caseitem{\emph{case 2: there are \(\xi_1,t_1,\ldots, t_m,\xi_m\) and \(\ffun \in \sigc\) such that \(\xi = \ffun(\xi_1,\ldots, \xi_m)\), \(t = \ffun(t_1,\ldots, t_m)\) and for all \(i \in \eint{1}{m}\), \((\xi_i,t_i) \in \conseq(S \cup \varphi)\)}}
      By induction hypothesis we know that for all \(j \in \eint{1}{m}\), \((\xi_j\Sigma,t_j\sigma) \in \conseq(S\Sigma\sigma \cup S')\).
      Writing \(\xi\Sigma = \ffun(\xi_1\Sigma,\ldots, \xi_m\Sigma)\) and \(t\sigma = \ffun(t_1\sigma,\ldots,t_m\sigma)\), we conclude that \((\xi\Sigma,t\sigma) \in \conseq(S\Sigma\sigma \cup S')\).

    \caseitem{\emph{case 3: \(\xi \dedfact t \in S \cup \varphi\)}}
      If \((\xi \dedfact t) \in S\) then \((\xi\Sigma \dedfact t\sigma) \in S\Sigma\sigma\) and the result directly holds.
      Otherwise \(\xi \dedfact t \in \varphi\) and hence by hypothesis \((\xi\Sigma, t\sigma) \in \conseq(S\Sigma\sigma \cup S')\).
  \end{proof}

  The previous lemma showed that a consequence \((\xi,t)\) is preserved when applying some substitution \(\Sigma,\sigma\) under the right conditions.
  However, it is quite strong since we ensure that \(\xi\Sigma\) is consequence with \(t\sigma\).
  In some cases, we cannot guarantee that \(\xi\Sigma\) is consequence with \(t\sigma\) but with some other first-order term.
  This is the purpose of the next lemma.


  \begin{lemma}
    \label{lem:consequence_subtitution_recipe}
    Let \(S\), \(S'\) be two sets of solved deduction facts.
    Let \(\varphi = \{X_i \dedfact u_i\}_{i=1}^n\) such that all \(X_i\) are pairwise distinct.
    For all \(\Sigma\), for all \(\xi \in \conseq(S \cup \varphi)\), if for all \(i \in \{1, \ldots, n\}\), \(X_i\Sigma \in \conseq(S\Sigma \cup S')\) then \(\xi\Sigma \in \conseq(S\Sigma \cup S')\).
  \end{lemma}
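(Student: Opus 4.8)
The plan is to prove the statement by a straightforward induction on the tree size $|\xi|$ of the recipe, mirroring the structure of the proof of Proposition~\ref{prop:trans-conseq} but dropping the bookkeeping on the associated first-order term: here we only need the \emph{existence} of some witness term for $\xi\Sigma$, not a match with a prescribed $t\sigma$, which makes the argument strictly simpler. First I would unfold the hypothesis $\xi \in \conseq(S \cup \varphi)$: by definition there exists a first-order term $t$ with $(\xi,t) \in \conseq(S \cup \varphi)$, and I would reason by cases on the last rule used to derive this membership, i.e.\ on the root symbol of $\xi$.

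The base cases are handled directly. If $\xi = t \in \sig_0$ is a constant, then $\xi\Sigma = \xi$ (second-order substitutions fix constants), so $\xi\Sigma \in \conseq(S\Sigma \cup S')$ via the trivial context. If $\xi \dedfact t \in S$, then $(\xi\Sigma \dedfact t) \in S\Sigma$ since the first-order term $t$ is untouched by the second-order substitution $\Sigma$; hence $\xi\Sigma \in \conseq(S\Sigma) \subseteq \conseq(S\Sigma \cup S')$. Finally, if $\xi \dedfact t \in \varphi$, then $\xi = X_i$ and $t = u_i$ for some $i$ (the $X_i$ being pairwise distinct), and the conclusion $\xi\Sigma = X_i\Sigma \in \conseq(S\Sigma \cup S')$ is precisely the assumption of the lemma.

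For the inductive step, if $\xi = \ffun(\xi_1, \ldots, \xi_m)$ with $\ffun \in \sigc$, the corresponding decomposition $t = \ffun(t_1, \ldots, t_m)$ gives $(\xi_j, t_j) \in \conseq(S \cup \varphi)$, hence $\xi_j \in \conseq(S \cup \varphi)$ for each $j$. Since $|\xi_j| < |\xi|$, the induction hypothesis yields $\xi_j\Sigma \in \conseq(S\Sigma \cup S')$; as $\xi\Sigma = \ffun(\xi_1\Sigma, \ldots, \xi_m\Sigma)$ and $\ffun$ is a constructor symbol, closing $\conseq$ under application of $\ffun$ gives $\xi\Sigma \in \conseq(S\Sigma \cup S')$.

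I do not expect any genuine obstacle. The only point requiring a little care is distinguishing how the second-order substitution $\Sigma$ acts on the two components of a deduction fact $\xi \dedfact t$: it rewrites the recipe $\xi$ but fixes the first-order term $t$, which is exactly what makes the $(\xi \dedfact t) \in S$ case go through without introducing a fresh witness term. Beyond this observation the argument is a routine, weakened variant of Proposition~\ref{prop:trans-conseq}.
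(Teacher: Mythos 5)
Your proof is correct and takes essentially the same route as the paper's: induction on \(|\xi|\) with a case analysis on how \(\xi\) is a consequence of \(S \cup \varphi\) (constant, constructor at the root, or deduction fact in \(S\) or \(\varphi\)), using the hypothesis on \(X_i\Sigma\) exactly in the \(\varphi\)-case and the fact that \(\Sigma\) only rewrites recipes in the \(S\)-case. Your observation that one can drop the bookkeeping on the witness first-order term is sound, and matches how the paper's own proof silently re-quantifies over witnesses \(t'_j\) in the constructor case.
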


  \begin{proof}
    We prove this result by induction on \(|\xi|\).
    The base case (\(|\xi| = 0\)) being trivial as there is no term of size \(0\), we focus on the inductive step.

    Since \(\xi\) is consequence of \(S \cup \varphi\), we know by definition that there exists \(t\) such that one of the following conditions hold:
    \begin{enumerate}
      \item \(\xi = t \in \sig_0\)
      \item there exists \(\xi_1,t_1,\ldots, t_m,\xi_m\) and \(\ffun \in \sigc\) such that \(\xi = \ffun(\xi_1,\ldots, \xi_m)\), \(t = \ffun(t_1,\ldots, t_m)\) and for all \(i \in \{1, \ldots,m\}\), \((\xi_i,t_i)\) is consequence of \(S \cup \varphi\).
      \item there exists \(t\) such that \(\xi \dedfact t \in S \cup \varphi\).
    \end{enumerate}
    In case 1, we directly have by definition that \((\xi\Sigma,t)\) is a consequence of \(S\Sigma \cup S'\).
    In case 2, by our inductive hypothesis on \(\xi_1,\ldots, \xi_m\), we have that for all \(j \in \{1,\ldots, m\}\), \(\xi_j\Sigma\) is a consequence of \(S\Sigma \cup S'\) hence there exists \(t'_1,\ldots, t'_m\) such that for all \(j \in \{1,\ldots, m\}\), \((\xi_j\Sigma,t'_j)\) is a consequence of \(S\Sigma \cup S'\).
    With \(\xi\Sigma = \ffun(\xi_1\Sigma,\ldots, \xi_m\Sigma)\) and \(t' = \ffun(t'_1,\ldots,t'_m)\), we conclude that \((\xi\Sigma,t')\) is consequence of \(S\Sigma \cup S'\).
    In case 3, if \(\xi \dedfact t \in S\) then \(\xi\Sigma \dedfact t \in S\Sigma\) and so the result directly holds.
    Else \(\xi \dedfact t \in \varphi\) and so by hypothesis \(\xi\Sigma \in  \conseq(S\Sigma \cup S')\).
  \end{proof}

  In the next lemma, we show that when a recipe is consequence of the sets of solved deduction formulas \(\Solved\Sigma\sigma\) where \((\Sigma,\sigma)\) is a solution of the constraint system, then all subterms of that recipe are also consequence of \(\Solved\Sigma\sigma\).
  This property is in fact guaranted by the fact that \(\Solved\) contains itself recipes consequence of itself.
  This is an important property that allows us to generate solutions that satisfy the uniformity property.


  \begin{lemma}
    \label{lem:consequence_subterms_uninstantiated}
    Let \(\C = (\Phi, \Df, \Eqfst, \Eqsnd, \Solved, \USolved)\) be an extended constraint system such that \(\PredWellFormed(\C)\).
    For all \(\xi \in \conseq(\Solved \cup \Df)\), \(\subterms(\xi) \subseteq \conseq(\Solved \cup \Df)\).
  \end{lemma}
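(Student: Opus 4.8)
The plan is to proceed by induction on the size \(|\xi|\) of the recipe, unfolding the definition of the consequence relation (Definition~\ref{def:consequence}) at the root and performing a case analysis on how the membership \(\xi \in \conseq(\Solved \cup \Df)\) is justified. Writing \(S = \Solved \cup \Df\), the hypothesis provides a first-order term \(t\) such that \((\xi,t) \in \conseq(S)\), that is, \(\xi = C[\xi_1,\ldots,\xi_n]\) for some context \(C\) over \(\sigc \cup \sig_0\) with \(\xi_i \dedfact u_i \in S\) for each \(i\). I would then distinguish the cases according to the outermost layer of this decomposition.

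Concretely there are three cases. In the \emph{constant case}, where \(\xi \in \sig_0\), we have \(\subterms(\xi) = \{\xi\}\) and the conclusion is immediate since \(\xi \in \conseq(S)\) by hypothesis. In the \emph{constructor case}, where \(\xi = \ffun(\xi_1,\ldots,\xi_m)\) with \(\ffun \in \sigc\) and each \(\xi_j \in \conseq(S)\), we have \(\subterms(\xi) = \{\xi\} \cup \bigcup_j \subterms(\xi_j)\); applying the induction hypothesis to each strictly smaller \(\xi_j\) yields \(\subterms(\xi_j) \subseteq \conseq(S)\), and combining this with \(\xi \in \conseq(S)\) gives the whole union inside \(\conseq(S)\). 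Finally, in the \emph{direct case}, where the context is trivial and \(\xi \dedfact t \in S\), I would split on the origin of this deduction fact: if it belongs to \(\Df\), then \(\xi\) is a second-order variable (an extended constraint system stores in \(\Df\) only facts of the form \(X \dedfact u\)), so \(\subterms(\xi) = \{\xi\}\) and membership is trivial; if it belongs to \(\Solved\), then the invariant \(\PredWellFormed(\C)\) —specifically the ``shape of \(\Solved\)'' clause of Definition~\ref{def:well-formed}— states exactly that \(\subterms(\xi) \subseteq \conseq(\Solved \cup \Df)\).

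I do not expect a genuine obstacle here, as the argument is essentially bookkeeping over the inductive structure of recipes. The one delicate point is the direct case: the property is not established from scratch but inherited from the well-formedness invariant, which is precisely why the statement must assume \(\PredWellFormed(\C)\). In other words, the lemma is the promotion of the \(\Solved\)-local guarantee recorded in the invariant to \emph{all} consequences of \(\Solved \cup \Df\), and the inductive argument above is what transports that guarantee across constructor contexts and through the atomic recipes attached to \(\Df\). I would make sure to explicitly note that facts in \(\Df\) carry single variables as recipes, so they contribute no nontrivial strict subterms, which is what makes the \(\Df\) subcase vacuous.
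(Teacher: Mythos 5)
Your proof is correct and follows essentially the same route as the paper's: both rest on the unique context decomposition \(\xi = C[\xi_1,\ldots,\xi_n]\) from the definition of \(\conseq\) together with the ``shape of \(\Solved\)'' clause of \(\PredWellFormed\), the paper phrasing it as a direct case split on whether a subterm's position lies in \(C\) or inside some \(\xi_i\), while your induction on \(|\xi|\) merely re-derives the first of these cases step by step. Your explicit observation that facts in \(\Df\) carry bare second-order variables as recipes (making that subcase vacuous) is a correct detail the paper leaves implicit.
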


  \begin{proof}
    Since \(\xi \in \conseq(\Solved \cup \Df)\), we know that \(\xi = C[\xi_1,\ldots, \xi_n]\) where \(C\) is a public context and \(\xi_1,\ldots, \xi_n\) are recipes of deduction facts from \(\Solved\) or \(\Df\).
    Hence since \(\xi' \in \subterms(\xi)\), we have that the position \(p\) of \(\xi'\) in \(\xi\) is either a position of \(C\) thus \(\xi' \in \in \conseq(\Solved \cup \Df)\) from the definition of consequence;
    or is a position of one of the \(\xi_i\) and thus we conclude by the predicate \(\PredWellFormed(\C)\).
  \end{proof}


  \begin{lemma}
  \label{lem:consequence_subterms}
    Let \(\C = (\Phi, \Df, \Eqfst, \Eqsnd, \Solved, \USolved)\) be an extended constraint system such that \(\PredWellFormed(\C)\).
    For all \((\Sigma,\sigma) \in \Sol(\C)\), for all \(\xi \in \conseq(\Solved\Sigma\sigma)\), \(\subterms(\xi) \subseteq \conseq(\Solved\Sigma\sigma)\).
  \end{lemma}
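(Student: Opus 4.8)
The plan is to deduce the statement directly from the \(\Solved\)-basis condition that every solution \((\Sigma,\sigma)\) of \(\C\) satisfies by definition (Definition~\ref{def:ext-sol}). Recall that this condition asserts that for every \(\xi \in \subterms[2](\im(\Sigma) \cup \Solved\Sigma)\) we have \(\msg(\xi\Phi\sigma)\) and \((\xi, \xi\Phi\sigma\norm) \in \conseq(\Solved\Sigma\sigma)\). This is exactly the ingredient that upgrades the purely syntactic closure property of Lemma~\ref{lem:consequence_subterms_uninstantiated} (stated for \(\conseq(\Solved \cup \Df)\)) to the instantiated set \(\conseq(\Solved\Sigma\sigma)\): here the recipes occurring in the knowledge base have already been substituted, so I cannot argue only from the shape of \(\Solved\) and must instead use that the solution itself witnesses that each such recipe, and all its subterms, remain consequences after instantiation.

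\textbf{Key steps.} First I would unfold the hypothesis \(\xi \in \conseq(\Solved\Sigma\sigma)\) using Definition~\ref{def:consequence}: there is a context \(C\) built over \(\sigc \cup \sig_0\) and deduction facts \((\xi_1 \dedfact u_1), \ldots, (\xi_n \dedfact u_n) \in \Solved\Sigma\sigma\) such that \(\xi = C[\xi_1, \ldots, \xi_n]\); in particular each recipe \(\xi_i\) lies in \(\subterms[2](\Solved\Sigma)\), since applying \(\sigma\) does not touch second-order terms. Then I would take an arbitrary subterm \(\xi'\) of \(\xi\), say at position \(p\), and perform a case analysis on \(p\). If \(p\) is a position of \(C\) lying (weakly) above all the holes, then \(\xi'\) is obtained by plugging the recipes \(\xi_i\) below \(p\) into the subcontext \(C_{|p}\) of \(C\); as \(C_{|p}\) is again a context over \(\sigc \cup \sig_0\) and each such \(\xi_i\) is the recipe of a fact in \(\Solved\Sigma\sigma\), we get \(\xi' \in \conseq(\Solved\Sigma\sigma)\) directly from the definition of consequence. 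Otherwise \(p\) lies strictly within some \(\xi_i\), so \(\xi' \in \subterms(\xi_i) \subseteq \subterms[2](\Solved\Sigma)\); the \(\Solved\)-basis condition then yields \((\xi', \xi'\Phi\sigma\norm) \in \conseq(\Solved\Sigma\sigma)\), hence \(\xi' \in \conseq(\Solved\Sigma\sigma)\). Since every subterm of \(\xi\) falls into one of these two cases, this establishes \(\subterms(\xi) \subseteq \conseq(\Solved\Sigma\sigma)\).

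\textbf{Main obstacle.} The only delicate point is the bookkeeping of the position-based case split, namely checking that every position of \(\xi = C[\xi_1,\ldots,\xi_n]\) is either a context position (covered by closure of \(\conseq\) under subcontexts) or internal to a single \(\xi_i\) (covered by \(\Solved\)-basis), with no case at the boundary escaping the analysis. Here it is convenient to invoke the remark following Definition~\ref{def:consequence} that, since deduction facts never carry a constructor at their root, the decomposition \(\xi = C[\xi_1,\ldots,\xi_n]\) is unique, so the frontier between \(C\) and the \(\xi_i\) is unambiguous. I note that the hypothesis \(\PredWellFormed(\C)\) is in fact not essential to this argument—the crux is the \(\Solved\)-basis property, which any solution carries by definition—so the result follows even more directly than its uninstantiated analogue Lemma~\ref{lem:consequence_subterms_uninstantiated}; I keep \(\PredWellFormed(\C)\) in the statement for uniformity with the surrounding lemmas.
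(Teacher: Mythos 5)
Your proof is correct. The paper in fact states this lemma without a proof, as the instantiated analogue of Lemma~\ref{lem:consequence_subterms_uninstantiated}, whose proof uses exactly your argument: decompose \(\xi = C[\xi_1,\ldots,\xi_n]\) and split positions between those of the context \(C\) (handled by closure of \(\conseq\) under subcontexts) and those strictly inside some \(\xi_i\). The only adaptation the instantiation forces is the one you make: in the inner case, the appeal to \(\PredWellFormed\) (the shape-of-\(\Solved\) condition) is replaced by the \(\Solved\)-basis clause of Definition~\ref{def:ext-sol}, which applies since \(\subterms(\xi_i) \subseteq \subterms[2](\Solved\Sigma)\) and \(\sigma\) leaves second-order terms untouched. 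Your closing remark is also accurate: for the instantiated statement the \(\Solved\)-basis property, carried by every solution by definition, already yields the closure, so \(\PredWellFormed(\C)\) is not genuinely used here — in contrast with the uninstantiated lemma, where the well-formedness invariant is what supplies \(\subterms(\xi_i) \subseteq \conseq(\Solved \cup \Df)\). An alternative (heavier) route would have been to start from the uninstantiated lemma and transfer along \(\Sigma\sigma\) via Proposition~\ref{prop:trans-conseq}; your direct use of \(\Solved\)-basis avoids that transfer entirely.
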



  \begin{lemma}
    \label{lem:consequence_implies_dedfact}
    Let \(S\) be a set of ground deduction facts. Let \(\Phi\) be a ground frame.
    Assume that for all \(\psi \in S\), \(\Phi \models \psi\). For all \((\xi,t) \in \conseq(S)\), \(\Phi \models \xi \dedfact t\).
  \end{lemma}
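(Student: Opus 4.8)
The plan is to prove the statement by structural induction on the recipe $\xi$, following the same case analysis as in Proposition~\ref{prop:trans-conseq} and Lemma~\ref{lem:consequence_subterms_uninstantiated}. Unfolding Definition~\ref{def:consequence}, the hypothesis $(\xi,t) \in \conseq(S)$ means that $\xi = C[\xi_1,\ldots,\xi_n]$ and $t = C[u_1,\ldots,u_n]$ for some context $C$ built over $\sigc \cup \sig_0$ with $\xi_i \dedfact u_i \in S$ for each $i$. The crucial feature I would exploit is that $C$ contains no destructor symbol, so that normalisation commutes with the context.

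First I would treat the two base cases. If $\xi = t \in \sig_0$, then $\xi$ is a ground constant, hence $\xi\Phi = \xi = t$ is already in normal form and $\Phi \models \xi \dedfact t$ holds trivially. If instead $\xi \dedfact t \in S$, then the conclusion is exactly the hypothesis that every fact of $S$ is satisfied by $\Phi$.

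The inductive case is when $\xi = \ffun(\xi'_1, \ldots, \xi'_m)$ and $t = \ffun(t'_1, \ldots, t'_m)$ with $\ffun \in \sigc$ and $(\xi'_j, t'_j) \in \conseq(S)$ for each $j$. Applying the induction hypothesis to each $(\xi'_j, t'_j)$ yields $\Phi \models \xi'_j \dedfact t'_j$, i.e.\ $\xi'_j \Phi \norm = t'_j$. Since the rewrite rules all carry a destructor at their root (recall that rules have the shape $f(\ell_1,\ldots,\ell_n) \to r$ with $f/n \in \sigd$), a term whose head symbol is the constructor $\ffun$ is never a redex; consequently $\xi\Phi = \ffun(\xi'_1\Phi, \ldots, \xi'_m\Phi)$ normalises by normalising its arguments, so that $\xi\Phi\norm = \ffun(\xi'_1\Phi\norm, \ldots, \xi'_m\Phi\norm) = \ffun(t'_1,\ldots,t'_m) = t$. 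As each $t'_j$ is a message in normal form and $\ffun$ is a constructor, $t$ is itself a message in normal form, whence $\Phi \models \xi \dedfact t$ as required.

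I do not expect a serious obstacle here: the argument is a routine structural induction, and the only point requiring care is the commutation of normalisation with the constructor context $C$, which follows directly from the constructor–destructor shape of $\R$. Should the convention on $\models$ additionally demand the $\msg$ predicate on the deduced recipe, the same induction covers it, since $\msg$ is preserved when a constructor is applied to arguments already satisfying $\msg$: every subterm of $\ffun(\xi'_1\Phi,\ldots,\xi'_m\Phi)$ is either one of the already-treated $\xi'_j\Phi$ or the whole term, whose normal form $t$ is destructor-free.
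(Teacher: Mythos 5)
Your proof is correct and follows essentially the same route as the paper's: an induction on the size/structure of \(\xi\) with the identical three-case analysis (\(\xi = t \in \sig_0\), a constructor \(\ffun \in \sigc\) at the root with recursive consequences, or \(\xi \dedfact t \in S\) directly). The only difference is one of detail, not substance: where the paper merely says the constructor case follows by ``a simple induction,'' you spell out the key point that rewrite rules carry destructors at their roots, so normalisation commutes with the constructor context and the \(\msg\) predicate is preserved — a welcome elaboration of the step the paper leaves implicit.
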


  \begin{proof}
    We prove this result by induction on \(|\xi|\).
    The base case being trivial, we focus on the inductive step. Since \((\xi,t)\) is consequence of \(S\) then one of the following properties holds:
    \begin{enumerate}
      \item \(\xi = t \in \sig_0\)
      \item there exist \(\xi_1,t_1,\ldots, \xi_n,t_n\) and \(\ffun \in \sigc\) such that \(\xi = \ffun(\xi_1,\ldots, \xi_n)\), \(t = \ffun(t_1,\ldots, t_n)\) and for all \(i \in \{1,\ldots, n\}\), \((\xi_i,t_i)\) is consequence of \(S\)
      \item \(\xi \dedfact t \in S\).
    \end{enumerate}
    In Case 1, the result trivially holds.
    In case two, a simple induction on \((\xi_1,t_1),\ldots, (\xi_n,t_n)\) allows us to conclude.
    In case 3, we know by hypothesis that \(\Phi \models \xi \dedfact t\) hence the result holds
  \end{proof}


\subsection{Correctness of most general solutions} \label{app:mgs}

In this section we prove the correctness of the constraint solving procedure for computing mgs' in Section \ref{sec:mgs-gen}.
We show that given an extended constraint system \(\C\), the Rules \eqref{rule:conseq}, \eqref{rule:res},
\eqref{rule:cons}, \eqref{rule:mgs-unsat}, and \eqref{rule:unifEqfst_simpl} allow to compute the most general solutions of \(\C\).


\begin{lemma}
  Let \(\C\) an extended constraint system such that \(\PredWellFormed(\C)\) and \(\PredCorrectFormula(\C)\).
  If any rule is applicable on \(\C\) then for all \((\Sigma,\sigma) \in \Sol(\C)\), there exists \(\C'\), \(\Sigma'\) such that \(\C \SimpStep{} \C'\) and \((\Sigma\Sigma',\sigma) \in \Sol(\C')\).
\end{lemma}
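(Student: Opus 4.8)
The plan is to prove this completeness statement by a case analysis on which rule applies, using the fixed solution $(\Sigma,\sigma)$ to select, among the possibly many applicable instances, the one that matches the shape of the solution. Since it suffices to exhibit a single reduction, and one constraint-solving rule followed by the simplification normalisation $\simplnorm$ is a single step of $\simpStep{}$ (hence of $\SimpStep{}$), I only need to produce one well-chosen step.

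First I would rule out the degenerate possibilities. Because $(\Sigma,\sigma)\in\Sol(\C)$, the rule \eqref{rule:mgs-unsat} cannot apply: its first condition $\Eqfst=\bot$ contradicts satisfiability; its second condition exhibits two recipes $\xi,\zeta\in\recipes(\C)$ deducing a common term but failing to unify, which contradicts the Uniformity clause of Definition \ref{def:ext-sol} (by Lemma \ref{lem:consequence_subterms} and Lemma \ref{lem:consequence_implies_dedfact}, $\xi\Sigma$ and $\zeta\Sigma$ deduce the same first-order term and are therefore forced equal); its third condition contradicts a $\ndedfact$-constraint satisfied by $(\Phi(\C),\Sigma,\sigma)$. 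If \eqref{rule:unifEqfst_simpl} is the only applicable rule, it is solution-preserving — $\sigma$ already validates the equation $x\eqs u$ it propagates — so I take $\Sigma'=\id$ and let the step be absorbed into $\simplnorm$.

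Next come the two substantive cases, ordered by the priority \eqref{rule:conseq} over \eqref{rule:res}, \eqref{rule:cons}. If \eqref{rule:conseq} applies with $\xi,\zeta$ consequence of a common term, the Uniformity clause again forces $\xi\Sigma=\zeta\Sigma$, so $\Sigma$ factors through $\Sigma_0=\mgu(\xi\eqs\zeta)$ and $(\Sigma,\sigma)$ remains a solution of $\CApply{\Sigma_0}{\C}$ (so $\Sigma'=\id$, up to renaming any type-matching fresh variables). Otherwise I pick a deduction fact $(X\dedfact u)\in\Df(\C)$ with $u\notin\X[1]$ — one exists, for otherwise $\C$ is solved and no constraint-solving rule applies — and inspect $X\Sigma$, which lies in $\conseq(\Solved\Sigma\sigma)$ by the $\Solved$-Basis clause. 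If $X\Sigma$ is (an instance of) a knowledge-base recipe $\xi$ with $(\xi\dedfact v)\in\Solved(\C)$, I follow \eqref{rule:res} with $\Sigma_0=\mgu(X\eqs\xi)$; otherwise $X\Sigma=\ffun(\zeta_1,\dots,\zeta_n)$ with $\ffun\in\sigc$, which forces $u=\ffun(u_1,\dots,u_n)$, and I follow \eqref{rule:cons} with $\Sigma_0=\{X\mapsto\ffun(X_1,\dots,X_n)\}$ and $\Sigma'=\{X_i\mapsto\zeta_i\}$. Here the role of normalisation inside $\SimpStep{}$ is crucial: the linking equation $u\eqs\ffun(x_1,\dots,x_n)$ created when forming $\CApply{\Sigma_0}{\C}$ is normalised by \eqref{rule:unifEqfst_simpl}, eliminating the fresh first-order binders $x_i$ in favour of the existing arguments $u_i$, which is exactly why the same $\sigma$ can be kept in the conclusion.

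Finally, in each case I would verify that $(\Sigma\Sigma',\sigma)$ is a genuine element of $\Sol(\CApply{\Sigma_0}{\C})$ and survives normalisation. Transporting satisfaction of the deduction facts, first- and second-order equations, and equality facts across $\Sigma_0$ and the linking equations $E_{\Sigma_0}$ is routine via Lemma \ref{lem:apply_subst_clause}. I expect the main obstacle to be re-establishing the two structural clauses of Definition \ref{def:ext-sol}: the $\Solved$-Basis property after substitution is recovered through Proposition \ref{prop:trans-conseq}, which rebuilds consequence derivations for every subterm of the new image from $\Solved\Sigma_0\sigma$ together with the fresh binders; and Uniformity is preserved because the applied rule either unifies recipes already forced equal in the solution (for \eqref{rule:conseq} and \eqref{rule:res}) or decomposes a recipe into its consequence subterms (for \eqref{rule:cons}), again invoking Lemma \ref{lem:consequence_subterms}. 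One such step then yields $\C\SimpStep{}\C'$ with $(\Sigma\Sigma',\sigma)\in\Sol(\C')$, as required.
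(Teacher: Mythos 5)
Your proposal is correct and takes essentially the same route as the paper's proof: apply \eqref{rule:conseq} with priority, using the Uniformity clause of Definition~\ref{def:ext-sol} (via \(\PredCorrectFormula\), Lemma~\ref{lem:consequence_implies_dedfact} and Proposition~\ref{prop:trans-conseq}) to force \(\xi\Sigma = \zeta\Sigma\) and factor the solution through the mgu, and otherwise pick \((X \dedfact u) \in \Df(\C)\) with \(u \notin \X[1]\) and case-split on \(X\Sigma \in \conseq(\Solved\Sigma\sigma)\) to select \eqref{rule:res} or \eqref{rule:cons}, with exactly the witnesses \(\Sigma'\) the paper uses (identity for \eqref{rule:conseq} and \eqref{rule:res}, \(\{X_i \mapsto \zeta_i\}\) for \eqref{rule:cons}). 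The only cosmetic differences are that you additionally dismiss \eqref{rule:mgs-unsat} and \eqref{rule:unifEqfst_simpl} explicitly, and that the subcase \(X\Sigma \in \sig_0\), which the paper lists as a separate \eqref{rule:conseq} instance, is silently absorbed by your Conseq-first discipline (if \(X\Sigma\) were a constant then \(u\) would be that constant and \eqref{rule:conseq} would apply, contradicting your ``otherwise'' assumption) --- added detail rather than a deviation.
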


\begin{proof}
  First, assume that there exist \(\xi,\zeta \in R(\C)^2\) and \(u\) such that \(\xi \neq \zeta\) and \((\xi,u),(\zeta,u) \in \conseq(\Solved \cup \Df)\).
  By \(\PredCorrectFormula(\C)\) and Lemmas \ref{lem:consequence_implies_dedfact} and Proposition \ref{prop:trans-conseq},
  we deduce that \(\Phi\sigma \models \xi\Sigma \dedfact u\sigma \wedge \zeta\Sigma \dedfact u\sigma\).
  As such we have \(\xi\Sigma\norm = u\sigma = \zeta\Sigma\norm\).
  However by definition of a solution it implies that \(\xi\Sigma = \zeta\Sigma\).
  Thus there exists \(\Sigma' = \mgu(\xi \eqs \zeta)\) such that \(\Sigma' \neq \emptyset\) and \(\Sigma' \neq \bot\).

  In such a case, let us show that \((\Sigma,\sigma) \in \Sol(\C')\) with \(\C \rightarrow \C'\) by Rule \eqref{rule:conseq}.
  We already know that \(\Sigma \models \Eqsnd(\C)\) and since \(\xi\Sigma = \zeta\Sigma\) with \(\Sigma' = \mgu(\xi \eqs \zeta)\), we directly have that \(\Sigma \models \Eqsnd(\C)\Sigma' \wedge \Sigma'\).
  Moreover, \(\Solved(\C)\Sigma = \Solved(\C)\Sigma'\Sigma\).
  Hence, the two bullets of the definition of solutions is trivially satisfied by that fact that \((\Sigma,\sigma) \in \Sol(\C)\).
  Therefore, we conclude that \((\Sigma,\sigma) \in \Sol(\C')\).

  \smallskip

  Let us now consider the case where our assumption do no hold. Thus since we assume that at least one rule is applicable on \(\C\), there exists \(X \dedfact u \in \Df(\C)\) where \(u \notin \X[1]\). Let us do a case analysis on \(X\Sigma\) since \(X\Sigma \in \conseq(\Solved\Sigma\sigma)\) by definition of a solution.
  \begin{itemize}
    \item either \(X\Sigma \in \sig_0\): in such a case, we have \(\C \rightarrow \C'\) by Rule \eqref{rule:conseq} and we can prove similarly as in the previous case that \((\Sigma,\sigma) \in \Sol(\C')\);
    \item or \(X\Sigma = \ffun(\xi_1,\ldots,\xi_n)\) where \(\xi_i \in \conseq(\Solved\Sigma\sigma)\) for all \(i\):
    note that we know that \(X\Sigma\Phi\sigma\norm = u\sigma\).
    Hence \(u = \ffun(u_1,\ldots, u_n)\) for some \(u_1,\ldots, u_n\).
    We deduce that for all \(i\), \(\Phi\sigma \models \xi_i \dedfact u_i\sigma\).
    Thus, by considering \(\Sigma' = \{ X_i \mapsto \xi_i\}_{i=1}^n\), we can conclude that \(\C \rightarrow \C'\) by Rule \eqref{rule:cons} and \((\Sigma\Sigma',\sigma) \in \Sol(\C')\);
    \item or \(X\Sigma \dedfact u\sigma \in \Solved\Sigma\sigma\) (since once again \(X\Sigma\Phi\sigma\norm = u\sigma\)):
    thus there exists \(\xi \dedfact v \in \Solved\) such that \(\xi\Sigma = X\Sigma\) and \(u\sigma = v\sigma\).
    Hence \(\mgu{\xi}{X}\) exists and \(\sigma \models u \eqs v\).
    Thereofore, we can conclude that \(\C \rightarrow \C'\) by Rule \eqref{rule:res} and \((\Sigma,\sigma) \in \Sol(\C')\). \qedhere
  \end{itemize}
\end{proof}


\begin{lemma}
  Let \(\C \neq \bot\) an extended constraint system such that \(\C\simplnorm = \C\), \(\PredWellFormed(\C)\) and \(\PredCorrectFormula(\C)\). If \(\C \not\simpStep{}\) and \(\C\) is a solved extended constraint system then \(\mgs(\C) = \{\mgu(\Eqsnd)\}\).
\end{lemma}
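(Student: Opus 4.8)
This lemma is precisely Item~1 of Proposition~\ref{prop:correct-mgs-solved}, so the plan is to verify directly the two defining conditions of a set of most general solutions (Definition~\ref{def:mgs}) for the singleton $\{\mu\}$, where $\mu \eqdef \mgu(\Eqsnd(\C))$. Observe first that $\dom(\mu) \subseteq \vars[2](\C)$, since $\mu$ only unifies equations whose variables lie in $\vars[2](\C)$; this already discharges the domain requirement in Item~\ref{it:mgs-sol}. The two conditions are then treated separately.

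The instance condition (Item~\ref{it:mgs-inst}) is the easy direction. Given $(\Sigma,\sigma) \in \Sol(\C)$, the clause $(\Phi(\C),\Sigma,\sigma) \models \Eqsnd(\C)$ unfolds, on each second-order equation $\xi \eqs \zeta$ of $\Eqsnd(\C)$, into the \emph{syntactic} equality $\xi\Sigma = \zeta\Sigma$. Hence $\Sigma$ is a syntactic unifier of the equational part of $\Eqsnd(\C)$, and by correctness of second-order mgu's (Section~\ref{sec:unification}) there exists $\Sigma_1$ with $\Sigma = \mu\Sigma_1$, which is exactly what Item~\ref{it:mgs-inst} demands.

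The bulk of the work is the almost-solution condition (Item~\ref{it:mgs-sol}): given an injection $\Sigma_1$ from the pending second-order variables to fresh constants, I must produce $\sigma$ with $(\mu\Sigma_1,\sigma) \in \Sol(\C)$; write $\Sigma = \mu\Sigma_1$. I would first construct the first-order substitution $\sigma$ by induction following the origination property. Since $\C$ is solved, every deduction fact of $\Df(\C)$ is of the form $X \dedfact x$ with $x \in \X[1]$, and each frame entry $\ax_i$ depends only on variables bound by facts of strictly smaller type; setting $x\sigma \eqdef X\Sigma\,\Phi(\C)\sigma\norm$ is therefore well defined by induction on the type of $X$, and is extended to $\dom(\mgu(\Eqfst(\C)))$ through $\mgu(\Eqfst(\C))$ consistently by $\PredWellFormed(\C)$. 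The invariant $\im(\mu) \subseteq \conseq(\Solved(\C)\cup\Df(\C))$ ensures each $X\Sigma$ is a consequence recipe, so $X\Sigma\,\Phi(\C)\sigma$ is genuinely a message. With $\sigma$ in hand, $\Df(\C)$ and the equational part of $\Eqfst(\C)$ hold by construction, and the equations of $\Eqsnd(\C)$ hold because $\Sigma$ is an instance of their mgu $\mu$.

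What remains---and what I expect to be the crux---is the two structural requirements and the surviving disequations. For $\Solved$-basis I would combine $\im(\mu) \subseteq \conseq(\Solved(\C)\cup\Df(\C))$ with Lemma~\ref{lem:consequence_subterms_uninstantiated}, propagating the consequence relation through the fresh-constant instantiation $\Sigma_1$. Uniformity is where the irreducibility hypothesis is essential: if two distinct recipes of $\stc(\im(\Sigma),\Solved(\C)\Sigma)$ deduced the same first-order term, the underlying consequence pair would make Rule~\eqref{rule:conseq} applicable, or else trigger condition~\ref{it:mgs-not-uniform} of Rule~\eqref{rule:mgs-unsat} and force $\C = \bot$, either way contradicting that $\C$ is solved. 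Finally, for the disequations remaining in $\Eqfst(\C)$ and $\Eqsnd(\C)$ I would exploit that $\simpl$-irreducibility leaves them in reduced form (the mgu of their negation being nontrivial) together with the freshness of the constants introduced by $\Sigma_1$, which prevents the instantiating terms from collapsing the required inequalities. The delicate part throughout is that none of $\Solved$-basis, uniformity, and disequation satisfaction follows from a single hypothesis: each needs the precise interplay between the two invariants, the fresh constants of $\Sigma_1$, and the irreducibility w.r.t.\ both $\simpl$ and the rules \eqref{rule:conseq}, \eqref{rule:res}, \eqref{rule:cons}, so carefully tracking how $\Sigma_1$ interacts with $\conseq(\cdot)$ and with the normalized (dis)equations is the step demanding the most care.
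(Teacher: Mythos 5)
Your proposal is correct and takes essentially the same route as the paper's own proof: the instance condition via the unifier property of \(\mgu(\Eqsnd)\), and the almost-solution condition by instantiating pending variables with fresh constants and invoking \(\PredWellFormed\), \(\PredCorrectFormula\) and the consequence lemmas for \(\Solved\)-basis, with irreducibility w.r.t. \eqref{rule:conseq} and \eqref{rule:mgs-unsat} yielding uniformity and the surviving disequations. One simplification worth noting: since \(\PredWellFormed\) gives \(\dom(\mgu(\Eqsnd)) \cap \vars[2](\Df) = \emptyset\), every \(X\) occurring in \(\Df\) is mapped by \(\Sigma_1\) to a fresh constant, so your recursive definition of \(\sigma\) collapses to the paper's direct assignment \(x \mapsto X\Sigma_1\) and no induction on types is actually needed.
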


\begin{proof}
  We know that for all \((\Sigma,\sigma) \in \Sol(\C)\), \(\Sigma \models \Eqsnd(\C)\) thus we directly obtain the existence of \(\Sigma'\) such that \(\Sigma = \mgu(\Eqsnd)\Sigma'\).
  Consider now the second bullet point of the definition of most general solutions.
  We know that \(\C\) is solved.
  Hence consider a fresh bijective renaming  \(\Sigma_1\) from \(\vars[2](\Sigma_0) \cup \vars[2](\C) \setminus \dom(\Sigma_0)\) to \(\sig_0\).
  Let us define \(\sigma_1 = \{x \mapsto X\Sigma_1 \mid X \dedfact x \Df(\C)\}\).
  Thanks to \(\PredWellFormed(\C)\), \(\PredCorrectFormula(\C)\) and Lemma \ref{lem:consequence_subterms}, \ref{lem:consequence_subtitution_recipe},
  and \ref{lem:consequence_implies_dedfact} that
  \((\Phi\mgu(\Eqfst(\C))\sigma_1,\mgu(\Eqsnd)\Sigma_1, \mgu(\Eqfst(\C))\sigma_1) \models \Df \wedge \Eqfst \wedge \Eqsnd\).
  Moreover, by Lemma \ref{lem:consequence_subterms}, we know that the first bullet of the definition of solution is satisfied.
  Finally, the second bullet is satisfied otherwise Rule \eqref{rule:mgs-unsat} would be applicable which contradict \(\C\simplnorm = \C\).
  Therefore, \((\mgu(\Eqsnd)\Sigma_1, \mgu(\Eqfst(\C))\sigma_1) \in \Sol(\C)\).
  We conclude that \(\mgs(\C) = \{\mgu(\Eqsnd)\}\).
\end{proof}


\begin{lemma}
  Let \(\C\) an extended constraint system such that \(\C\simplnorm = \C\), \(\PredWellFormed(\C)\) and \(\PredCorrectFormula(\C)\).
  If \(\C \not\simpStep{}\) and \(\C\) is not solved then \(\Sol(\C) = \emptyset\).
\end{lemma}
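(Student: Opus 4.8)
The plan is to prove the statement as the contrapositive of the completeness reasoning: assuming a solution exists, I will show that one of the mgs constraint-solving rules must be applicable to $\C$, contradicting the hypothesis $\C \not\simpStep{}$.

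First I would dispose of the degenerate case $\C = \bot$, for which $\Sol(\C) = \emptyset$ holds by the convention that $\bot$ denotes an unsatisfiable system. So I may assume $\C \neq \bot$. I then observe that the hypotheses pin down exactly why $\C$ fails to be solved. Indeed, $\C\simplnorm = \C$ says that $\C$ is irreducible w.r.t.\ $\simpl$, and $\C \not\simpStep{}$ says that it is irreducible w.r.t.\ the constraint-solving transitions $\xrightarrow{\Sigma}$. Comparing with the definition of solved form, the only clause that a non-$\bot$, doubly-irreducible system can still violate is the requirement that every deduction fact of $\Df(\C)$ carry a variable as first-order term. Hence there is a deduction fact $(X \dedfact u) \in \Df(\C)$ with $u \notin \X[1]$.

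Next I would argue by contradiction: suppose $(\Sigma,\sigma) \in \Sol(\C)$. By the $\Solved$-basis condition of Definition~\ref{def:ext-sol}, the recipe $X\Sigma$ satisfies $\msg(X\Sigma\Phi\sigma)$ and $(X\Sigma, X\Sigma\Phi\sigma\norm) \in \conseq(\Solved\Sigma\sigma)$, while the deduction constraint $X \dedfact u$ forces $X\Sigma\Phi\sigma\norm = u\sigma\norm$. I then run the same case analysis on the structure of $X\Sigma$ as a consequence of $\Solved\Sigma\sigma$ that underlies the completeness step (Proposition~\ref{prop:complete-mgs-step}): either $X\Sigma \in \sig_0$, or $X\Sigma = \ffun(\xi_1,\ldots,\xi_n)$ for some constructor $\ffun$, or $X\Sigma$ is (an instance of) the recipe of a deduction fact of $\Solved$. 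In these three cases respectively, Rule~\eqref{rule:conseq}, Rule~\eqref{rule:cons}, and Rule~\eqref{rule:res} would be applicable to $\C$ — exactly as in the proof of completeness, which at this point relies only on $\PredWellFormed(\C)$, $\PredCorrectFormula(\C)$, and $u \notin \X[1]$. This exhibits a transition $\C \simpStep{\Sigma_0} \C'$, contradicting $\C \not\simpStep{}$; therefore no solution exists and $\Sol(\C) = \emptyset$.

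The main obstacle, such as it is, is not deep: it is to check carefully that in each branch the applicable rule is genuinely a $\xrightarrow{\Sigma}$ transition with $\Sigma \notin \{\top,\bot\}$ (e.g.\ that $\mgu(X \eqs \xi) \neq \top,\bot$ when $X\Sigma$ is a constant or a $\Solved$-recipe), and to account for the priority of \eqref{rule:conseq}: when I claim \eqref{rule:res} or \eqref{rule:cons} applies, either \eqref{rule:conseq} is already applicable — yielding a transition directly — or it is not, in which case the side condition of \eqref{rule:res}/\eqref{rule:cons} is met and one of them fires. One must also treat the sub-case where $u$ is a name or a constant, where \eqref{rule:cons} does not directly apply and \eqref{rule:conseq} or \eqref{rule:res} must be invoked instead. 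All these verifications are the symmetric counterparts of the corresponding steps in the completeness lemma, so the bulk of the proof is simply that case analysis reread as an applicability statement rather than a solution-preservation statement.
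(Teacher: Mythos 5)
Your proof is correct and takes essentially the same route as the paper's: both argue by contradiction that, for a deduction fact \((X \dedfact u) \in \Df(\C)\) with \(u \notin \X[1]\), the \(\Solved\)-basis condition forces the solution's recipe \(X\Sigma\) to be a constant, constructor-rooted, or an instance of a \(\Solved\)-entry, whence Rule \eqref{rule:conseq}, \eqref{rule:cons} or \eqref{rule:res} would be applicable, contradicting irreducibility. The paper merely reorders the same case analysis—using non-applicability of \eqref{rule:conseq} and \eqref{rule:cons} to deduce, before invoking any solution, that \(u \in \Nall\), and then bringing in the solution only to show \eqref{rule:res} applies (via \(\PredWellFormed\) ensuring the \(\Solved\)-entry's term is not a variable)—so the difference is purely organisational.
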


\begin{proof}
  Since \(\C\) is not solved, we have two possibilities:
  Either (a) all deduction facts in \(\Df\) are have variables as right hand term but not pairwise distinct.
  But in such a case Rule \eqref{rule:conseq} would be applicable which contradicts \(\C \not\simpStep{}\);
  or (b) there exists \((X \dedfact u) \in \Df(\C)\) such that \(u \notin \X[1]\).
  Since Rule \eqref{rule:conseq} is not applicable, we deduce that \(u \notin \sig_0\) and for all \(\xi,\zeta \in \recipes(\C) \setminus \{ X\}\), \((\xi,u) \notin \conseq(\Solved \cup \Df)\).
  But rule Rule \eqref{rule:cons} is not applicable therefore, we deduce that \(u \in \Nall\).

  Assume now that \(\Sol(\C) \neq \emptyset\) and so \((\Sigma,\sigma) \in \Sol(\C)\).
  Thus \(X\Sigma\Phi\norm = u\).
  By definition of a solution, we know that \((X\Sigma,u) \in \conseq(\Solved(\C)\Sigma\sigma)\).
  Since \(u \in \Nall\) it implies that there exists \((\xi \dedfact v) \in \Solved(\C)\) such that \(X\Sigma = \xi\Sigma\) and \(u = v\sigma\).
  Note that by \(\PredWellFormed(\C)\), we also have that \(v \notin \X[1]\) and so \(u = v\).
  In such a case, we obtain a contradiction with the fact the Rule \eqref{rule:res} is not applicable.
\end{proof}


\subsection{Correctness of the partition tree} \label{app:ptree-proof}

In this section we prove the correctness of the procedure generating the partition tree, using the invariants proved in Appendix \ref{app:invariants}.
Let us first start by noticing that the case distinction rules and simplification rules preserves the first order solutions of the extended constraint systems. This property is stated in the following lemma.

\begin{lemma} \label{lem:preservation_solutions}
  Let \(\S\) be a set of set of extended symbolic processes such that \(\PredAll(\S)\).
  Let \(\S \rightarrow \S'\) by applying only case distinction or simplifications rules (i.e. no symbolic transitions).
  Then:
  \begin{itemize}
    \item \emph{Soundness:}

    for all \(S \in \S\), for all \((\P,\C,\C^e) \in S\), for all \((\Sigma,\sigma) \in \Sol(\C^e)\),
    there exist \(S' \in \S'\), \((\P,\C,{\C^e}') \in S'\) and \((\Sigma',\sigma') \in \Sol({\C^e}')\) such that \(\sigma_{|\vars[1](\C)} = \sigma'_{|\vars[1](\C)}\)

    \item \emph{Completeness:}

    for all \(S' \in \S\), for all \((\P,\C,{\C^e}') \in S'\), for all \((\Sigma',\sigma') \in \Sol({\C^e}')\),
    there exist \(S \in \S\), \((\P,\C,\C^e) \in S\) and \((\Sigma,\sigma) \in \Sol(\C^e)\) such that \(\Sigma_{|\vars[2](\C)} = \Sigma'_{|\vars[2](\C)}\) and \(\sigma_{|\vars[1](\C)} = \sigma'_{|\vars[1](\C)}\)
  \end{itemize}
\end{lemma}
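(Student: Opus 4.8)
The plan is to reduce to a single reduction step and then proceed by a case analysis on the rule applied. Since the two properties compose along reduction sequences---a matching solution for $\S \to \S'$ followed by one for $\S' \to \S''$ yields one for $\S \to \S''$, and the restriction conditions on $\sigma$ and $\Sigma$ are transitive---it suffices to treat one application of a simplification, normalisation, vector-simplification, or case-distinction rule. The crucial simplifying observation is that, by Definition~\ref{def:ext-sol}, the component $\USolved$ plays no role whatsoever in the solution set of an extended constraint system. Consequently every rule whose only solution-relevant effect is to add or remove formulas of $\USolved$---namely the formula simplifications on $\USolved$, \eqref{rule:Disequation removal 2}, \eqref{rule:Removal of unsolved formula}, \eqref{rule:vector-consequence}, and the $\USolved$-additions performed by the positive branches of \eqref{rule:equality} and \eqref{rule:rewrite}---leaves $\Sol(\C^e)$ literally unchanged, so both directions are immediate for these.

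It then remains to analyse the rules acting on $(\Phi,\Df,\Eqfst,\Eqsnd,\Solved)$, which I would group as follows. First, the rules that only propagate a first-order unifier or delete a redundant disequation (\eqref{rule:unifEqfst_norm} and \eqref{rule:Disequation removal 1}) do not change $\Sol(\C^e)$, so both properties hold with $\Sigma' = \Sigma$, $\sigma' = \sigma$. Second, the rules discarding an unsatisfiable system (\eqref{rule:vectorbot}, \eqref{rule:uniform}) remove a system whose solution set is empty---by definition for $\bot$, and by Proposition~\ref{prop:correct-mgs-solved} for \eqref{rule:uniform}---while leaving the others untouched, so no solution is lost or gained. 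Third, \eqref{rule:vector-split solved} merely redistributes the systems of $\Gamma$ into the two new components $\Gamma^+,\Gamma^-$ without modifying any of them, so each system and its solution set reappears verbatim.

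The genuinely substantive cases are \eqref{rule:vector-solve} and the case-distinction rules \eqref{rule:satisfiable}, \eqref{rule:equality}, \eqref{rule:rewrite}. For the case-distinction rules I would argue uniformly on the instance $\S\cup\{\Gamma\}\to\S\cup\{\Gamma^+,\Gamma^-\}$ of the schema \eqref{rule:case-base}, recalling that by $\PredStruct$ all systems of $\Gamma$ share their second-order structure, so the applied substitution $\Sigma_0$ (a most general solution of one $S\in\Gamma$) is meaningful on every element of $\Gamma$. \emph{Completeness} is the easier direction: a solution of a positive-branch system $\CApply{\Sigma_0}{\C^e}$ restricts to a solution of $\C^e$ by iterating Proposition~\ref{prop:sound-mgs-step} along the transition $\C^e \SimpStep{\Sigma_0}\CApply{\Sigma_0}{\C^e}$, while a solution of a negative-branch system $\C^e[\Eqsnd\mapsto\Eqsnd\wedge\neg\Sigma_0]$ is a fortiori a solution of $\C^e$ since we only strengthened $\Eqsnd$. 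For \emph{soundness}, given $(\Sigma^*,\sigma^*)\in\Sol(\C^e)$, I would use the dichotomy that a ground second-order substitution is either an instance of $\Sigma_0$ or satisfies the disequation $\neg\Sigma_0$: in the former case Proposition~\ref{prop:complete-mgs-step}, iterated along $\Sigma_0$, produces a matching solution of the positive branch; in the latter $(\Sigma^*,\sigma^*)$ is directly a solution of the negative branch.

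The main obstacle I expect is \eqref{rule:vector-solve} together with the precise correspondence induced by the custom application $\CApply{\Sigma_0}{\C^e}$ of Definition~\ref{def:application_mgs_csys}. When a solved deduction fact $\xi\dedfact u$ is transferred from $\USolved$ into $\Solved$, the solution set can genuinely change, because the $\Solved$-basis and uniformity clauses of Definition~\ref{def:ext-sol} now constrain how recipes are built; this is exactly why soundness matches only the first-order part $\sigma$ and not $\Sigma$. Here I would rely on the saturation invariant $\PredConseq$ and on the consequence machinery (Proposition~\ref{prop:trans-conseq}, Lemmas~\ref{lem:consequence_subterms} and~\ref{lem:consequence_subtitution_recipe}) to show that every solution of the old system can be re-expressed, keeping $\sigma$ fixed, as a uniform solution that constructs $u$ through the new entry, and conversely that any solution of the enriched system already satisfies the weaker old requirements. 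The same fresh-variable bookkeeping ($D_\freshlab$ and the linking equations $E_{\Sigma_0}$) is what forces the $\Sigma\neq\Sigma'$ slack in the case-distinction arguments above, and checking that instances of $\Sigma_0$ correspond---modulo this bookkeeping---to solutions of $\CApply{\Sigma_0}{\C^e}$ is the technically delicate point; the invariants $\PredWellFormed$ and $\PredCorrectFormula$ are needed throughout to keep the consequence relation and the equations of $\Eqsnd$ coherent.
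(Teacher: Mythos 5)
Your overall skeleton coincides with the paper's own proof: a rule-by-rule case analysis in which the rules touching only \(\USolved\) leave \(\Sol(\C^e)\) untouched (the paper makes the same observation), \eqref{rule:vectorbot} and \eqref{rule:uniform} delete only systems with empty solution sets, \eqref{rule:vector-split solved} merely redistributes systems between components, and the case-distinction rules partition the second-order solutions into instances of \(\Sigma_0\) (positive branch) and satisfiers of \(\neg\Sigma_0\) (negative branch). The paper states this last partition property directly rather than routing through Propositions~\ref{prop:sound-mgs-step} and~\ref{prop:complete-mgs-step}; your detour through the mgs constraint-solving steps is workable but note that those propositions quantify existentially over transitions, whereas here the substitution \(\Sigma_0\) is fixed by the rule, so what you actually need is the bookkeeping fact you flag at the end (instances of \(\Sigma_0\) yield solutions of \(\CApply{\Sigma_0}{\C^e}\) modulo \(D_\freshlab\) and the linking equations).

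The one place you diverge in substance is \eqref{rule:vector-solve}, and there your sketch has a directional flaw. You claim the completeness direction holds because any solution of the enriched system ``already satisfies the weaker old requirements'': but the \(\Solved\)-basis condition of Definition~\ref{def:ext-sol} is \emph{not} monotone under shrinking \(\Solved\) --- a recipe that is a consequence of the enlarged knowledge base need not be a consequence of the old one --- and symmetrically, uniformity after the addition quantifies over a larger set \(\stc(\cdot)\), so the soundness direction is not free either. The paper closes both directions at once with a sharper observation that makes your planned re-expression unnecessary: by the side condition of \eqref{rule:vector-solve}, the transferred term \(u\) is not a consequence of \(\Solved \cup \Df\), and the saturation invariant \(\PredConseq\) together with \(\PredWellFormed\) then forces the recipe \(\xi\) of the new entry to contain the most recent axiom \(\ax_{|\Phi|}\); the type discipline on the second-order variables of \(\Df\) then prevents \(\xi\) from occurring in the image of any solution, so the solution set is \emph{literally unchanged} --- no rerouting of recipes, and no slack on \(\Sigma\), is needed for this rule. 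If you carry out your version you end up having to prove this unusability claim anyway, since it is the content hiding behind your ``re-expression keeping \(\sigma\) fixed''; you may as well adopt the paper's argument directly.
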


\begin{proof}
  We do a case analysis on the rule applied.

  \caseitem{\emph{case 1:} Normalisation rules (simplification rules of Figure \ref{fig:normalisation_constraint_systems})}

    First, let us notice the result directly hold for Rules \ref{rule:unifEqfst_norm}, \ref{rule:Disequation removal 2}, and \ref{rule:Removal of unsolved formula}.
    Indeed, Rule \ref{rule:Disequation removal 1} does not modify constraints on recipe and preserves the constraints on protocol terms.
    Moreover, Rule \ref{rule:Disequation removal 2},\ref{rule:Removal of unsolved formula} affect \(\USolved\) which do not impact the solutions of the extended constraint system.
    For Rule \ref{rule:uniform}, since \(\mgs(\C^e) = \emptyset\), we have by definition of most general unifiers that \(\Sol(\C^e) = \emptyset\) (otherwise the first bullet of the definition is contradicted).
    Hence the result holds since \(\Sol(\bot) = \emptyset\). Similarly, Rule \ref{rule:Disequation removal 2} checks whether the disequations \(\forall \tilde{x}.\phi\) is trivially true meaning that the rule preserves the solutions.

  \caseitem{\emph{case 2:} Simplification rules on partitions of extended symbolic processes (Figure \ref{fig:normalisation_vector})}

    The rule \ref{rule:vectorbot} only removes an extended symbolic process with an extended constraint systems having no solution hence the result holds.
    Rule \ref{rule:vector-split solved} splits a set of \(\S\) into two sets thus preserving the extended symbolic processes, and Rule \ref{rule:vector-consequence} only adds element in \(\USolved\) which do not impact the solutions of a constraint system.
    Therefore, for all these rules, the result hold.
    For Rule \ref{rule:vector-solve} however, the result is not direct since the rule adds an element in the set \(\Solved\) which has an impact on the solutions of a constraint system.
    However, we know from the application condition of the rule that the head protocol terms of the deduction facts added in \(\Solved_i\) are not consequence of \(\Solved_i \cup \Df_i\).
    But we also know that \(\PredConseq(\S)\) and \(\PredWellFormed(\S)\) hold hence it implies that the recipe \(\xi\) (see Figure \ref{fig:normalisation_vector}) contains \(\ax_{|\Phi_i|}\) and \(\vars[2](\Df_i) \cap \Xsndi{|\Phi_i|} = \emptyset\).
    Hence, \(\xi\) cannot appear in the second order solutions \(\C^e_i\) which allows us to conclude that the solutions are preserved.

  \caseitem{\emph{case 3:} Case distinction rules}

    The case of case distinction rules is straightforward.
    Indeed, by definition all rules \eqref{rule:satisfiable}, \eqref{rule:equality} and \eqref{rule:rewrite} always refine a set of extended symbolic process \(\Gamma\) into \(\Gamma_1,\Gamma_2\)
    where \(\Gamma_1\) is obtained by applying a substitution \(\Sigma\) to \(\Gamma\), and \(\Gamma_2\) by adding the constraint \(\neg \Sigma\) to \(\Gamma\).
    In particular this refinement preserves the solutions as expected.
\end{proof}

Now we can show that the static equivalence is preserved by application of the case distinction and simplification rules.

\begin{lemma}
  Let \(\S\) be a set of set of extended symbolic processes such that \(\PredAll(\S)\).
  Let \(\S \rightarrow \S'\) by applying only case distinction or simplifications rules (i.e. no symbolic transitions), \(\Gamma \in \S\), \((\P_1,\C_1,\C^e_1), (\P_2,\C_2,\C^e_2) \in \Gamma\),
  \((\Sigma,\sigma_1) \in \Sol(\C^e_1)\) and \((\Sigma,\sigma_2) \in \Sol(\C^e_2)\) such that \(\Phi(\C_1)\sigma_1 \StatEq \Phi(\C_2)\sigma_2\).

\end{lemma}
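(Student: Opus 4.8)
The plan is to reduce the (truncated) conclusion --- that static equivalence of the two processes is preserved --- to a single combinatorial fact: the rule $\S \to \S'$ never \emph{separates} the pair, i.e.\ the images of $(\P_1,\C_1,\C^e_1)$ and $(\P_2,\C_2,\C^e_2)$ land in a common component $\Gamma' \in \S'$ sharing a second-order solution. This reduction is cheap. Every case distinction and simplification rule modifies only the extended component $\C^e$ and leaves the underlying system $\C$ untouched, so in particular $\Phi(\C_i)$ is unchanged. By the soundness part of Lemma~\ref{lem:preservation_solutions} the solutions $(\Sigma,\sigma_i) \in \Sol(\C^e_i)$ yield images $(\Sigma',\sigma_i') \in \Sol({\C^e_i}')$ with $\sigma_i'{}_{|\vars[1](\C_i)} = \sigma_i{}_{|\vars[1](\C_i)}$, whence $\Phi(\C_i)\sigma_i' = \Phi(\C_i)\sigma_i$: the two concrete frames are \emph{literally unchanged}. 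Therefore $\Phi(\C_1)\sigma_1' \StatEq \Phi(\C_2)\sigma_2'$ holds for free, provided both images sit in the same $\Gamma'$ under a common $\Sigma'$. I would then carry out the proof as a case analysis on the applied rule, checking this co-location property.

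First I would dispatch the easy cases. The formula-simplification rules and the normalisation rules \eqref{rule:unifEqfst_norm}, \eqref{rule:Disequation removal 1}, \eqref{rule:Disequation removal 2}, \eqref{rule:Removal of unsolved formula} act within a single system and only touch $\Eqfst$-disequations or $\USolved$ (which plays no role in $\Sol$); by Lemma~\ref{lem:preservation_solutions} they preserve solutions and they do not partition $\Gamma$. The rule \eqref{rule:vectorbot} only removes $\bot$-systems, which cannot be $S_1$ or $S_2$ since both have solutions. Finally \eqref{rule:vector-solve} and \eqref{rule:vector-consequence} transform $\Solved$ resp.\ add a formula to $\USolved$ \emph{uniformly across the whole component} $\Gamma$, and Lemma~\ref{lem:preservation_solutions} shows solutions are preserved (the transferred recipe $\xi$ does not occur in the second-order solutions). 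In all these cases $\Gamma$ is either kept whole or only loses $\bot$-systems, so the pair stays together.

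Next I would treat the case distinction rules \eqref{rule:satisfiable}, \eqref{rule:equality}, \eqref{rule:rewrite}. Each refines $\Gamma$ into a positive branch $\Gamma^+$ (obtained by applying a second-order mgs $\Theta$) and a negative branch $\Gamma^-$ (obtained by adjoining $\neg\Theta$ to $\Eqsnd$). The crucial point is that the branch assignment is governed purely by the \emph{shared} second-order solution $\Sigma$: either $\Sigma$ is an instance of $\Theta$, in which case both $S_1$ and $S_2$ have their solution carried into $\Gamma^+$, or $\Sigma$ satisfies $\neg\Theta$, in which case both go to $\Gamma^-$. Since the discriminating condition depends only on $\Sigma$, which is common to both, the two processes are routed into the same branch, and Lemma~\ref{lem:preservation_solutions} makes the shared image solution explicit. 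The additional formulas that \eqref{rule:equality} and \eqref{rule:rewrite} append to $\USolved$ are irrelevant to $\Sol$ and are added uniformly, so they do not interfere.

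The main obstacle is the rule \eqref{rule:vector-split solved}, which partitions $\Gamma$ according to whether a formula $\psi$ is head-equivalent to some solved formula of $\USolved(S)$. I must rule out that $S_1 \in \Gamma^+$ while $S_2 \in \Gamma^-$ (the symmetric case being identical, and both-in-$\Gamma^+$ or both-in-$\Gamma^-$ being harmless). Suppose $\USolved(S_1)$ contains a solved $\psi_1 \receq \psi$. Being solved, $\psi_1$ equals its head $H_1$ (a deduction fact $\xi \dedfact u$ or an equality fact $\xi \eqf \zeta$); by $\PredCorrectFormula(S_1)$ and $(\Sigma,\sigma_1) \in \Sol(\C^e_1)$ we get $(\Phi(\C_1),\Sigma,\sigma_1) \models H_1$, i.e.\ the encoded recipe(s) succeed, resp.\ are equal modulo theory, in the concrete frame $\Phi(\C_1)\sigma_1$. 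Applying static equivalence $\Phi(\C_1)\sigma_1 \StatEq \Phi(\C_2)\sigma_2$ (Definition~\ref{def:static-equivalence}, both items) transfers this to $\Phi(\C_2)\sigma_2$, so $(\Phi(\C_2),\Sigma,\sigma_2)$ \emph{weakly satisfies} $H_1$ in the sense of Definition~\ref{def:invariant_complete_formula}. Then $\PredCompleteFormula(\Gamma)$ applied to $\psi_1 \in \USolved(\C^e_1)$ and the solution $(\Sigma,\sigma_2) \in \Sol(\C^e_2)$ yields a formula in $\USolved(S_2)$ whose head is $\receq H_1 \receq \psi$, contradicting $S_2 \in \Gamma^-$. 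The delicate points I expect to spend care on are checking that a solution's $\Sigma$ grounds the recipes enough for static equivalence to apply, and that weak satisfaction is exactly the predicate delivered by static equivalence for both the deduction and the equality head; with the invariants in hand these are the load-bearing steps and the rest is bookkeeping.
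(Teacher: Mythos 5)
Your proposal is correct and follows essentially the same route as the paper: a case analysis over the rules in which the only non-trivial simplification case is \eqref{rule:vector-split solved}, handled by combining \(\PredCorrectFormula\) with static equivalence to weakly satisfy the head in the other system and then invoking the completeness invariant to produce a head-equivalent formula (the paper cites this step as a separate lemma, whose content is exactly your \(\PredCompleteFormula\) argument), while for the case distinction rules both proofs observe that branch assignment is determined purely by the shared second-order solution \(\Sigma\) relative to the mgs \(\Sigma_0\), so the pair is routed to the same branch. Your additional opening observation---that the rules leave \(\C_i\), hence \(\Phi(\C_i)\sigma_i\), untouched, so static equivalence is preserved for free once co-location is established---is a harmless explicit justification of what the paper leaves implicit.
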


\begin{proof}
  Once again, let us consider the potential rule applied.

  \caseitem{\emph{case 1:} Simplification rules}

    The only non trivial case is Rule \ref{rule:vector-split solved} (the other ones do not refine the partition and the conclusion is therefore immediate).
    However by Lemma
    applied to \(\S\) we know that if a deduction fact occurs in constraint systems \(\C^e_1\) but no recipe equivalent formula can be found in the constraint system \(\C^e_2\), then no solution of \(\C^e_2\) can satisfy the head of the formula.
    Besides by \(\PredCorrectFormula(\S)\) we also know that all solutions of \(\C^e_1\) satisfy this deduction fact.
    Then since \((\Sigma,\sigma_1) \in \Sol(\C^e_1)\), \((\Sigma,\sigma_2) \in \Sol(\C^e_2)\) and \(\Phi(\C_1)\sigma_1 \sim \Phi(\C_2)\sigma_2\), we obtain a contradiction.
    Therefore, \(\C^e_1\) and \(\C^e_2\) are necessarily in the same set of \(\S'\).

  \caseitem{\emph{case 2:} Case distinction rules}

    Note that for case distinction rules, the proof is simple since each rule create a partition of the second-order solutions with respect to some mgs \(\Sigma_0\).
    Thus, assume w.l.o.g. that \((\Sigma',\sigma'_1) \in \Sol({\C^e_1}')\).

  \caseitem{\emph{case 2a:} negative branch of the rule}
    First consider that \(S'\) corresponds to branch in which we applied \(\neg \Sigma_0\).
    In such a case, since we already know that \(\Sigma'\) satisfies \(\neg \Sigma_0\) and no other constraint is added, we directly obtain from \((\Sigma,\sigma_2) \in \Sol(\C^e_2)\) that \((\Sigma',\sigma'_2) \in \Sol({\C^e_2}')\) (in this case, we even have \(\sigma_2 = \sigma'_2\)).

  \caseitem{\emph{case 2b:} positive branch of the rule}
    Now consider that \(S'\) corresponds to the branch in which we applied \(\Sigma_0\).
    In such a case, the application of \(\Sigma_0\) on \(\C^e_2\) regroups all the solution of \(\C^e_2\) that satisfies \(\Sigma_0\).
    Since we know that \((\Sigma,\sigma_2) \in \Sol(\C^e_2)\) and \({\Sigma'}_{|\vars[2](\C_1)} = {\Sigma}_{|\vars[2](\C_1)}\) which implies \({\Sigma'}_{|\vars[2](\C_2)} = {\Sigma}_{|\vars[2](\C_2)}\), the result holds.
\end{proof}

The previous two lemmas allow us to obtain the soundness and completeness properties of the partition tree.
Note that the monoticity of the second-order predicate
is also proved by~\ref{lem:preservation_solutions} (Completeness part) since a solution of a child constraint system is also a solution of parent one.
We now need to prove that all nodes of the partition tree are valid configurations.
For that we prove properties on extended constraint systems such that no more case distinction rules are applicable.

\begin{lemma} \label{lem:sat-non-applicable}
  Let \(\S\) be a set of sets of extended symbolic processes such that \(\PredAll(\S)\) and no instance of the rule \eqref{rule:satisfiable} or normalisation rules (i.e. the simplification rules of Figure \ref{fig:normalisation_constraint_systems}) are applicable.
  For all \(S \in \S\), for all \((\P,\C,\C^e) \in S\), writing \(\C^e = (\Phi, \Df, \Eqfst, \Eqsnd, \Solved, \USolved)\) we have that
  \begin{enumerate}
    \item \(\C^e\) is solved
    \item all formulas \(\psi \in \USolved\) are solved
    \item \(\Eqfst\) does not contain disequations
  \end{enumerate}
\end{lemma}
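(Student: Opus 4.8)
The plan is to argue each of the three items by contraposition: assuming the desired property fails for some process $(\P,\C,\C^e)$ in some component of $\S$, I would exhibit either an applicable instance of \eqref{rule:satisfiable} or an applicable normalisation rule from Figure~\ref{fig:normalisation_constraint_systems}, contradicting the hypothesis. Throughout I may assume $\C^e \neq \bot$, since systems equal to $\bot$ are discarded by the vector-simplification rule \eqref{rule:vectorbot} used in Algorithm~\ref{alg:ptree}. I would prove item~1 first, both because it is the most direct and because it unlocks the prioritised (``or otherwise'') conditions of \eqref{rule:satisfiable} needed for the other two items.

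For item~1, suppose $\C^e$ is not in solved form. Since the normalisation rule \eqref{rule:uniform} is inapplicable and $\C^e \neq \bot$, we must have $\mgs(\C^e) \neq \emptyset$; fix any $\Sigma \in \mgs(\C^e)$. Then $(\P,\C,\C^e)$ witnesses application condition~\ref{it:rule-sat-mgs} of \eqref{rule:satisfiable} (``not solved, with $\Sigma$ a most general solution''), so the rule is applicable, a contradiction. Hence every extended constraint system occurring in $\S$ is solved. Once this holds for the whole vector, condition~\ref{it:rule-sat-mgs} can never be met again, so the ``otherwise'' guards of conditions~\ref{it:rule-sat-hyp} and~\ref{it:rule-sat-diseq} are automatically discharged in what follows.

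For item~2, suppose some $\psi \in \USolved(\C^e)$ is unsolved. By the invariant $\PredWellFormed$ the formulas of $\USolved$ have no universally quantified variables and only first-order equations as hypotheses, so $\psi = (\clause{H}{\varphi})$ with $\Fhyp(\psi) = \varphi \neq \top$. I would then branch on whether $\mgs(\C^e[\Eqfst \mapsto \Eqfst \wedge \Fhyp(\psi)])$ is empty: if it is nonempty, condition~\ref{it:rule-sat-hyp} of \eqref{rule:satisfiable} fires; if it is empty, the normalisation rule \eqref{rule:Disequation removal 2} deletes $\psi$ from $\USolved$. Either way we contradict the hypothesis, so all formulas of $\USolved$ are solved. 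Item~3 follows the same template: assuming $\Eqfst$ contains a disequation $\psi = \forall\tilde x.\phi$ (which, after the formula-simplification rules of Figure~\ref{fig:normalisation_formula}, may be taken with $\mgu(\neg\phi)\neq\bot$), I would test whether some solution can falsify it by inspecting the mgs of the system that forces the equalities $\neg\phi$; a nonempty set triggers condition~\ref{it:rule-sat-diseq}, an empty one triggers \eqref{rule:Disequation removal 1}, and both contradict the hypothesis.

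The main obstacle is not conceptual but a matching of bookkeeping: one must verify that the mgs computations appearing in \eqref{rule:satisfiable} and in the normalisation rules agree on emptiness. Concretely, for item~3 this amounts to checking that $\mgs(\C^e[\Eqfst \wedge \psi \mapsto \Eqfst]\mgu(\neg\phi))$, where $\psi$ is removed and $\neg\phi$ enforced by substitution, is empty exactly when $\mgs(\C^e[\Eqfst \mapsto \Eqfst \wedge \neg\phi])$ is, both being representations of the solutions that violate the disequation; the treatment of the bound variables $\tilde x$ is where care is required. A secondary detail is dispatching the $\C^e=\bot$ case and respecting the priority structure of \eqref{rule:satisfiable}, which is precisely why item~1 is established first.
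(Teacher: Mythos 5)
Your proposal is correct and follows essentially the same route as the paper's proof: for each of the three items, non-applicability of the corresponding case of \eqref{rule:satisfiable} together with non-applicability of the matching normalisation rule (\eqref{rule:uniform}, \eqref{rule:Disequation removal 2}, \eqref{rule:Disequation removal 1}) yields a dichotomy both branches of which contradict the hypotheses. Your extra bookkeeping---establishing item~1 first so the ``otherwise'' guards of \eqref{rule:satisfiable} are discharged, invoking \(\PredWellFormed\) so that an unsolved formula has a non-trivial purely equational \(\Fhyp\), and checking that the mgs formulations in case~\ref{it:rule-sat-diseq} of \eqref{rule:satisfiable} and in \eqref{rule:Disequation removal 1} agree on emptiness---merely makes explicit details that the paper's three-sentence proof leaves implicit.
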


\begin{proof}
  First of all the non-applicability of Rule \eqref{rule:satisfiable} case \ref{it:rule-sat-mgs} gives that either \(\C^e\) is solved or \(\mgs(\C^e) = \emptyset\);
  due to normalisation rules not being applicable we deduce that \(\mgs(\C^e) \neq \emptyset\) meaning that \(\C^e\) is solved.
  Sinmilarly by the non applicability of case \ref{it:rule-sat-hyp} we know that for all \(\psi \in \USolved\), either \(\psi\) is solved or \(\mgs(\C^e[\Eqfst \mapsto \Eqfst \wedge \Fhyp(\psi)]) = \emptyset\).
  But since the normalisation rules are also not applicable, we know that \(\mgs(\C^e[\Eqfst \mapsto \Eqfst \wedge \Fhyp(\psi)]) \neq \emptyset\):
  therefore \(\psi\) is solved.
  Finally the non applicability of case \ref{it:rule-sat-diseq} and of the normalisation rules also gives us that \(\Eqfst\) is only composed of syntactic equations.
\end{proof}

\begin{lemma}
  Let \(\S\) be a set of sets of extended symbolic processes such that \(\PredAll(\S)\) and no instance of the rule \eqref{rule:satisfiable} or normalisation rules are applicable.
  For all \(S \in \S\), for all \((\P,\C,\C^e) \in S\), \(|\mgs(\C^e)| = 1\).
\end{lemma}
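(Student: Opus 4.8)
The plan is to reduce the statement to two facts that are already available: the characterisation of the most general solutions of a solved system (Proposition~\ref{prop:correct-mgs-solved}) and the structural lemma immediately preceding this statement (Lemma~\ref{lem:sat-non-applicable}), which tells us that when neither \eqref{rule:satisfiable} nor the normalisation rules apply, every extended constraint system in the vector is in solved form. Once these are in hand the argument is a short composition, so the substance of the work has in fact already been carried out in those two results.

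Concretely, I would fix an arbitrary \(S \in \S\) and \((\P,\C,\C^e) \in S\). The first step is to extract from the hypothesis \(\PredAll(\S)\) the two invariants required downstream, namely \(\PredWellFormed(\C^e)\) and \(\PredCorrectFormula(\C^e)\); these are two of the conjuncts of \(\PredAll\), so this is immediate. The second step is to apply Lemma~\ref{lem:sat-non-applicable} to conclude that \(\C^e\) is in solved form. By the very definition of solved form, this entails \(\C^e \neq \bot\) and that \(\C^e\) is irreducible w.r.t. both \(\simpl\) and \(\xrightarrow{\Sigma}\), so all the premises of Proposition~\ref{prop:correct-mgs-solved} are met. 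Finally, I would invoke item~1 of that proposition, which gives \(\mgs(\C^e) = \{\mgu(\Eqsnd(\C^e))\}\), a singleton, whence \(|\mgs(\C^e)| = 1\).

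The step I expect to require the most care is checking that the non-applicability hypotheses really do imply the full irreducibility demanded by the definition of solved form. The subtlety is that the relation \(\simpl\) appearing in that definition bundles together several rule sets --- the formula-simplification rules of Figure~\ref{fig:normalisation_formula}, the mgs-specific rules \eqref{rule:unifEqfst_simpl} and \eqref{rule:mgs-unsat}, and, through \(\xrightarrow{\Sigma}\), the constraint-solving rules \eqref{rule:conseq}, \eqref{rule:res}, \eqref{rule:cons} --- whereas our hypothesis only mentions \eqref{rule:satisfiable} and the normalisation rules of Figure~\ref{fig:normalisation_constraint_systems}. However, reconciling these is exactly what Lemma~\ref{lem:sat-non-applicable} already accomplishes: non-applicability of case~\ref{it:rule-sat-mgs} of \eqref{rule:satisfiable} forces each \(\C^e\) to be either solved or to satisfy \(\mgs(\C^e) = \emptyset\), and non-applicability of the normalisation rule \eqref{rule:uniform} excludes the latter, so \(\C^e\) is genuinely solved. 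I would therefore cite that lemma rather than re-derive the irreducibility by hand, and the present statement follows as an immediate corollary of it together with Proposition~\ref{prop:correct-mgs-solved}.
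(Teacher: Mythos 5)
Your proof is correct, and it is sound within the paper's dependency structure: \(\PredAll\) indeed contains \(\PredWellFormed\) and \(\PredCorrectFormula\), Lemma~\ref{lem:sat-non-applicable} does deliver solved form (non-applicability of case~\ref{it:rule-sat-mgs} of \eqref{rule:satisfiable} forces ``solved or \(\mgs = \emptyset\)'', and non-applicability of \eqref{rule:uniform} rules out the empty case), solved form by definition subsumes irreducibility w.r.t.\ \(\simpl\) and \(\xrightarrow{\Sigma}\), and Proposition~\ref{prop:correct-mgs-solved} is established independently (in Appendix~\ref{app:mgs}) so there is no circularity. Where you diverge from the paper is that the paper does \emph{not} cite Proposition~\ref{prop:correct-mgs-solved} here: after invoking Lemma~\ref{lem:sat-non-applicable} it re-derives everything by hand in three steps --- explicitly constructing a solution \((\Sigma,\sigma)\) by sending the pending variables of \(\Df\) to pairwise distinct fresh public constants (using the fact, also from Lemma~\ref{lem:sat-non-applicable}, that \(\Eqfst\) has no disequations and that the disequations of \(\Eqsnd\) are not trivially unsatisfiable), then verifying \(\Solved\)-basis and uniformity via \(\PredWellFormed\), \(\PredCorrectFormula\) and the consequence lemmas, and finally arguing unicity to get \(\mgs(\C^e) = \{\mgu(\Eqsnd)\}\). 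That inline construction is essentially a duplicate of the appendix proof behind item~1 of Proposition~\ref{prop:correct-mgs-solved}, so your version buys brevity and avoids re-proving existing material, while the paper's version buys a self-contained argument that exhibits the concrete witness solution. You also correctly identified and discharged the only delicate point --- reconciling the stated non-applicability hypotheses (\eqref{rule:satisfiable} and the normalisation rules of Figure~\ref{fig:normalisation_constraint_systems}) with the irreducibility w.r.t.\ the mgs-level relations demanded by solved form --- by deferring it to Lemma~\ref{lem:sat-non-applicable}, which is exactly how the paper handles it.
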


\begin{proof}
  Let us denote \(\C^e = (\Phi, \Df, \Eqfst, \Eqsnd, \Solved, \USolved)\).
  By Lemma \ref{lem:sat-non-applicable} we know that \(\C^e\) is solved, that all formulas \(\psi \in \USolved\) are solved, and that \(\Eqfst\) only contain equations.

  \caseitem{\emph{Step 1:} Construction of \((\Sigma,\sigma)\) such that \((\Phi,\Sigma,\sigma) \models \Df \wedge \Eqfst \wedge \Eqsnd\)}
    Since \(\C^e\) is solved, we deduce that all deduction facts in \(\Df = \{ X_i \dedfact x_i \}_{i=1}^n\) for some \(n\) and pairwise distinct \(x_i\)s and \(X_i\)s.
    Consider now the substitutions \(\Sigma_0 = \{ X_i \rightarrow n_i\}_{i=1}^n\) and \(\sigma_0 = \{ x_i \rightarrow n_i\}_{i=1}^n\) where the \(n_i\)s are pairwise distincts public names, i.e. \(n_i \in \sig_0\).
    Since no more normalisation rules are applicable, we know that the disequations in \(\Eqsnd\) not trivially unsatisfiable.
    Therefore by replacing the free variables of the disequations by names allow us to obtain that \(\Sigma_0\) the disequations of \(\Eqsnd\).
    By considering \(\Sigma = \mgu(\Eqsnd)\Sigma'\), we obtain that \(\Sigma \models \Eqsnd\).
    Moreover we proved that \(\Eqfst\) does not contain any disequations, we directly obtain that \(\mgu(\Eqfst)\sigma_0 \models \Eqfst\).
    Therefore, by defining \(\sigma = \mgu(\Eqfst)\sigma_0\), we obtain that \((\Phi,\Sigma,\sigma) \models \Df \wedge \Eqfst \wedge \Eqsnd\).

  \caseitem{\emph{Step 2:} Proof that \((\Sigma,\sigma)\) is a solution}
    To prove that \((\Sigma,\sigma)\) is an actual solution of \(\C^e\) it remains to prove that it verifies the additional required two conditions:
    \(\Solved\)-basis and uniformity. 
    Let us first prove the \(\Solved\)-basis, i.e. that for all \(\xi \in \subterms(\im(\Sigma)) \cup \strsubterms[2](\Solved \Sigma)\), \(\msg(\xi \Phi \sigma)\) and \((\xi,\xi\Phi\sigma) \in \conseq(\Solved \Sigma \sigma)\).
    The case \(\xi \in \strsubterms[2](\Solved \Sigma)\) directly follows from \(\PredWellFormed(\C^e)\).
    Let us therefore consider the case \(\xi \in \subterms(\im(\Sigma))\).
    Since \(\PredWellFormed(\C^e)\) holds we have that \(\im(\mgu(\Eqsnd)) \subseteq \conseq(\Solved \cup \Df)\);
    for the same reason we have that for all \(\zeta \dedfact u \in \Solved\), \(\subterms(\zeta) \subseteq \conseq(\Solved \cup \Df)\).
    Therefore by applying Lemma \ref{lem:consequence_subtitution_recipe}, and by a quick induction on the size of the recipe in \(\im(\Sigma)\), we obtain that \(\xi \in \conseq(\Solved\Sigma)\).
    Finally, by definition of consequence and since \(\PredCorrectFormula(\C^e)\) holds, we have \(\msg(\xi\Phi\sigma)\) hence the \(\Solved\)-basis.

    Let us now prove uniformity.
    We know that \(\C^e\) is solved which therefore means that for all recipes \(\xi,\zeta \in \stc(\im(\mgu(\Eqsnd)),\Solved \cup \Df)^2 \cup (\sig_0 \times \vars[2](\Df))\), \((\xi,u), (\zeta,u) \in \conseq(\Solved \cup \Df)\) implies \(\xi = \zeta\).
    Since \(\Sigma = \mgu(\Eqsnd)\Sigma_0\) we directly obtain that for all \(\xi,\zeta \in \stc(\Sigma,\Solved\Sigma)\), \((\xi,u),(\zeta,u) \in \conseq(\Solved\Sigma)\) implies \(\xi = \zeta\), which is exactly the uniformity.

  \caseitem{\emph{Step 3:} Unicity of the solution}
    This step is rather straightforward:
    considering that \(\Sigma = \mgu(\Eqsnd)\Sigma_0\) and any other solutions \((\Sigma',\sigma') \in \Sol(\C^e)\) satisfy \(\Sigma' \models \Eqsnd\), we deduce that \(\mgs(\C^e) = \{ \mgu(\Eqsnd) \}\) and so \(|\mgs(\C^e)| = 1\).
\end{proof}

Let us now show that all extended constraint systems in the set have the same solutions and that they are statically equivalent.

\begin{lemma}
  Let \(\S\) be a set of set of extended symbolic processes such that \(\PredAll(\S)\) and no instances of the rules \eqref{rule:satisfiable}, \eqref{rule:equality} or \eqref{rule:rewrite} or simplification rules are applicable.
  For all \(S \in \S\), for all \((\P_1,\C_1,\C^e_1), (\P_2,\C_2,\C^e_2) \in S\), if \((\Sigma,\sigma_1) \in \Sol(\C^e_1)\) then \((\Sigma,\sigma_2) \in \Sol(\C^e_2)\) and \(\Phi(\C^e_1)\sigma_1 \StatEq \Phi(\C^e_2)\sigma_2\).
\end{lemma}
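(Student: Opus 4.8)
The plan is to show that once the procedure stabilises, each component $S$ satisfies Item~\ref{it:configuration-same-solutions} of Definition~\ref{def:configuration}; this is the last missing piece to certify that the tree produced by Algorithm~\ref{alg:ptree} is a genuine partition tree. First I would invoke the two immediately preceding lemmas: since no instance of \eqref{rule:satisfiable} nor of the simplification/normalisation rules applies, Lemma~\ref{lem:sat-non-applicable} gives that each $\C^e_i$ is in solved form, that every formula of $\USolved(\C^e_i)$ is solved, and that $\Eqfst(\C^e_i)$ contains no disequation; moreover $\mgs(\C^e_i) = \{\mu^2(\C^e_i)\}$ is a singleton. By the structural invariant $\PredStruct$ (part of $\PredAll$) the two systems share the same second-order skeleton---$\vars[2]$, the frame domain, and the recipes occurring in $\Solved$ coincide---and in particular $\mu^2(\C^e_1) = \mu^2(\C^e_2)$.

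The core of the argument is to produce, from $(\Sigma,\sigma_1)\in\Sol(\C^e_1)$, a first-order substitution $\sigma_2$ with $(\Sigma,\sigma_2)\in\Sol(\C^e_2)$. Because $\C^e_2$ is solved, $\Df(\C^e_2)=\{X_j\dedfact x_j\}$ has pairwise-distinct variables, and by the origination property $\sigma_2$ is essentially forced: I would define it inductively along the axiom order by $x_j\sigma_2 = X_j\Sigma\,\Phi(\C^e_2)\sigma_2\norm$, which is well defined since every recipe in $\im(\Sigma)$ is a consequence of the shared knowledge base $\Solved$, and each entry of $\Solved(\C^e_2)$ yields a message under $\sigma_2$ by $\PredWellFormed$ and $\PredCorrectFormula$. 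I would then verify the defining conditions of $\Sol(\C^e_2)$ one by one: the satisfaction $\Sigma\models\Eqsnd(\C^e_2)$ follows from $\mu^2(\C^e_1)=\mu^2(\C^e_2)$ for the equations and from the tightness of disequations guaranteed by non-applicability of the normalisation rules; $\Solved$-basis follows from saturation, i.e.\ from $\PredConseq$ (ensured by non-applicability of \eqref{rule:rewrite}) together with the consequence Lemmas~\ref{lem:consequence_subtitution_recipe} and~\ref{lem:consequence_subterms}; uniformity follows from solvedness exactly as in the proof that $|\mgs(\C^e_2)|=1$; and the equations of $\Eqfst(\C^e_2)$ hold by construction of $\sigma_2$ and $\PredCorrectFormula$.

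It then remains to prove $\Phi(\C^e_1)\sigma_1\StatEq\Phi(\C^e_2)\sigma_2$, which splits into the two clauses of Definition~\ref{def:static-equivalence}. For clause~1 (preservation of $\msg$), I would use subterm convergence and saturation to reduce, for an arbitrary recipe $\xi$, the test $\msg(\xi\Phi\sigma)$ to the existence of a consequence of $\Solved\cup\Df$ computing the same term, so that deducibility in $\C^e_1$ and in $\C^e_2$ are both characterised by the solved deduction formulas of $\USolved$; the non-applicability of \eqref{rule:vector-split solved} forces the solved deduction formulas of $\USolved(\C^e_1)$ and $\USolved(\C^e_2)$ to be head-equivalent, while $\PredCompleteFormula$ guarantees that no deducible term is missed, yielding the biconditional. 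For clause~2 (preservation of equalities), the non-applicability of \eqref{rule:equality} ensures that every equality modulo theory between two consequence recipes is recorded as a solved equality formula $\xi\eqf\zeta$ in $\USolved$, and again \eqref{rule:vector-split solved} together with $\PredCompleteFormula$ synchronise these formulas across $\C^e_1$ and $\C^e_2$; combined with uniformity this transfers each equality from one system to the other.

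The hard part will be clause~1 of static equivalence together with the bookkeeping of destructors: the same recipe $\xi$ may reach a message through different rewrite rules in $\C^e_1$ and in $\C^e_2$---as in the $\otherfun$ example of Section~\ref{sec:case-distinction-rules}---so the argument cannot proceed syntactically on recipes but must go through the knowledge base and the full family of formulas produced by \eqref{rule:rewrite}. Making the reduction ``arbitrary recipe $\mapsto$ consequence recipe'' precise is where subterm convergence and the saturation invariant $\PredConseq$ are essential, and where I expect the bulk of the technical effort to concentrate; since the roles of $\C^e_1$ and $\C^e_2$ are symmetric, every implication has to be established in both directions, each time appealing to the appropriate combination of $\PredCompleteFormula$, $\PredCorrectFormula$, and the non-applicability of \eqref{rule:vector-split solved}.
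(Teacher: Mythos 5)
Your overall route is the paper's own: after invoking Lemma~\ref{lem:sat-non-applicable} and the structural invariant, you construct the forced first-order solution \(\sigma_2\) along the axiom order, and you prove static equivalence by reducing arbitrary recipes to consequences of the shared knowledge base, exploiting non-applicability of \eqref{rule:rewrite}, \eqref{rule:equality} and \eqref{rule:vector-split solved} together with \(\PredCorrectFormula\) and \(\PredCompleteFormula\) (the paper makes this precise with a lexicographic induction on the number of non-consequence subterms of the recipe pair and on recipe size, which is exactly the bookkeeping you anticipate for the \(\otherfun\)-style cases). There is, however, one step that fails as you justify it: uniformity of \((\Sigma,\sigma_2)\). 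You claim it "follows from solvedness exactly as in the proof that \(|\mgs(\C^e_2)|=1\)", but in that proof the transfer of uniformity relies on the pending second-order variables being instantiated by an \emph{injection onto fresh, pairwise distinct constants}; here \(\Sigma\) is an arbitrary given solution, and solvedness (irreducibility under \eqref{rule:conseq}) only guarantees that syntactically distinct consequential subterms deduce distinct terms \emph{symbolically} --- a particular instantiation may well identify them. What rules this out for \(\C^e_2\) is precisely clause~2 of static equivalence: if \(\xi,\xi' \in \stc(\im(\Sigma),\Solved\Sigma)\) satisfy \(\xi\,\Phi(\C^e_2)\sigma_2\norm = \xi'\,\Phi(\C^e_2)\sigma_2\norm\), then static equivalence gives \(\xi\,\Phi(\C^e_1)\sigma_1\norm = \xi'\,\Phi(\C^e_1)\sigma_1\norm\), and uniformity of \((\Sigma,\sigma_1)\) yields \(\xi = \xi'\). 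So asserting full membership \((\Sigma,\sigma_2)\in\Sol(\C^e_2)\) \emph{before} establishing static equivalence is circular at this point.

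The repair is the paper's ordering, and your remaining arguments survive it: first establish only that \((\Phi(\C^e_2),\Sigma,\sigma_2) \models \Df(\C^e_2) \wedge \Eqfst(\C^e_2) \wedge \Eqsnd(\C^e_2)\) together with the \(\Solved\)-basis (your construction of \(\sigma_2\), the \(\Eqsnd\) transfer via the shared structure, and the consequence lemmas suffice for this); then run the static-equivalence induction, which --- as the paper's proof shows --- needs only these modelling facts plus the invariants, and never the uniformity of \((\Sigma,\sigma_2)\); and only then read off uniformity from clause~2 as above. One smaller omission in your equality case: besides \eqref{rule:equality} and \eqref{rule:vector-split solved}, the paper also needs non-applicability of \eqref{rule:vector-consequence} (together with \(\PredConseq\)) to obtain, for the head \(u_1\) of the solved deduction formula, a consequence recipe \(\beta'\) with a recorded equality \(\gfun(\zeta'_1,\ldots,\zeta'_n) \eqf \beta'\) in \(\USolved\); replacing the non-consequence subterm by \(\beta'\Sigma\) is what makes the induction measure decrease, and without this rule your "synchronisation" argument has no recipe to substitute.
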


\begin{proof}
  Since \eqref{rule:satisfiable} and normalisation rules are not applicable, we know by Lemma \ref{lem:sat-non-applicable} that all extended constraint systems \(\C^e \in S\) have a particular form, that is
  \begin{enumerate*}
    \item all deduction facts in \(\Df(\C^e)\) have pairwise distinct variables as right hand side; and
    \item \(\Eqfst(\C^e)\) only contain syntactic equations.
  \end{enumerate*}
  Moreover, we know that all extended constraint systems have the same structure.
  Therefore, if \((\Sigma,\sigma_1) \in \Sol(\C^e_1)\), we deduce that \(\Sigma \models \Eqsnd(\C^e_1)\) and for all \(\xi \in \subterms(\im(\Sigma))\),
  \(\xi \in \conseq(\Solved(\C^e_1)\Sigma)\), meaning that \(\Sigma \models \Eqsnd(\C^e_2)\) and \(\xi \in \conseq(\Solved(\C^e_2)\Sigma)\).
  Since the first order solutions are always completely defined by the second-order substitutions, we can build \(\sigma'_2\) such that for all \(X \dedfact x \in \Df(\C^e_2)\), \(X\Sigma(\Phi(\C^e_2)\sigma'_2)\norm = x\sigma'_2\).
  Moreover, since \(\PredCorrectFormula(\C^e_1)\) and \(\PredCorrectFormula(\C^e_2)\) both hold and since for all \(\xi \in \subterms(\im(\Sigma))\), \(\xi \in \conseq(\Solved(\C^e_2)\Sigma)\), we deduce that for all \(\xi \in \subterms(\im(\Sigma))\), \(\msg(\xi\Phi(\C^e_2)\sigma'_2)\).
  Note that we also need to satisfy the syntactic equations in \(\Eqfst\).
  However thanks to \(\PredWellFormed(\C^e_2)\) holding, we know that \(\dom(\mgu(\Eqfst(\C^e_2)) \cap \vars[1](\Df(\C^e_2)) = \emptyset\).
  Thus, we can build \(\sigma_2 = \mgu(\Eqfst(\C^e_2))\sigma'_2\) and obtain that \((\Sigma,\sigma_2) \models \Df(\C^e_2) \wedge \Eqfst(\C^e_2) \wedge \Eqsnd(\C^e_2)\).
  Note that by origination property of an extended constraint system, we have \(\Phi(\C^e_2)\sigma'_2 = \Phi(\C^e_2)\sigma_2\).
  Therefore, since we already prove that for all \(\xi \in \subterms(\im(\Sigma))\), \(\msg(\xi\Phi(\C^e_2)\sigma'_2)\) and \(\xi \in \conseq(\Solved(\C^e_2)\Sigma)\),
  it only remains to prove the second bullet point of Definition the definition of solutions extended constraint system to obtain that \((\Sigma,\sigma_2) \in \Sol(\C^e_2)\).

  To prove this it sufficies to prove that \(\Phi(\C^e_1)\sigma_1 \StatEq \Phi(\C^e_2)\sigma_2\):
  the conclusion will then follow since \((\Sigma,\sigma_1) \in \Sol(\C^e_1)\).
  Therefore we let recipes \(\xi,\xi'\) and show that:
  \begin{enumerate}[label=(\roman*)]
    \item \label{it:ptree-config-msg}
      \(\msg(\xi\Phi(\C^e_1)\sigma_1)\) iff \(\msg(\xi\Phi(\C^e_2)\sigma_2)\)
    \item \label{it:ptree-config-stateq}
      if \(\msg(\xi\Phi(\C^e_1)\sigma_1),\ \xi'\Phi(\C^e_1)\sigma_1)\) then \(\xi\Phi(\C^e_1)\sigma_1\norm = \xi'\Phi(\C^e_1)\sigma_1\norm\) iff \(\xi\Phi(\C^e_2)\sigma_2\norm = \xi'\Phi(\C^e_2)\sigma_2\norm\).
  \end{enumerate}
  We prove this by lexicographic induction on \((N(\xi,\xi'),\max(|\xi|,|\xi'|)\) where \(N(\xi\,\xi')\) is the number of subterms \(\zeta \in \subterms(\xi,\xi')\) such that \(\zeta \notin \conseq(\Solved(\C^e_1)\Sigma)\)
  (recall that since \(\C^e_1\) and \(\C^e_2\) have the same structure, we have \(\zeta \in \conseq(\Solved(\C^e_1)\Sigma)\) iff \(\zeta \in \conseq(\Solved(\C^e_2)\Sigma)\)).

  \caseitem{\emph{case 1:} \(N(\xi,\xi') = 0\) and \(\max(|\xi|,|\xi'|) = 0\)}
    Impossible since there exist no terms of size \(0\).

  \caseitem{\emph{case 2:} \(N(\xi,\xi') > 0\)}
  \caseitem{\emph{subgoal 2a}: Proof of \ref{it:ptree-config-msg}}
    Assume \(\msg(\xi\Phi(\C^e_1)\sigma_1)\).
    Let us also assume by contradiction that \(\neg \msg(\xi\Phi(\C^e_2)\sigma_2)\).
    Since we know that \(N(\xi,\xi') > 0\), there exists \(\zeta \in \subterms(\xi,\xi')\) such that \(\zeta \not \in \conseq(\Solved(\C^e_1)\).
    Without loss of generality we can consider that \(\zeta \in \subterms(\xi)\) (otherwise we can apply our inductive hypothesis on \(\xi\) twice since \(N(\xi,\xi')\) would be equal to \(0\) and so we would obtain a contradiction).
    Moreover, let us consider \(\zeta\) such that \(|\zeta|\) is minimal.
    Therefore, by definition of consequence, we deduce that \(\zeta = \gfun(\zeta_1,\ldots, \zeta_n)\) with \(\gfun \in \sigd\) and for all \(i \in \{1,\ldots,n\}\), \(\zeta_i \in \conseq(\Solved(\C^e_1))\).
    Since \(\msg(\xi\Phi(\C^e_1)\sigma_1)\) we also deduce that \(\gfun(\zeta_1,\ldots, \zeta_n)\Phi(\C^e_1)\sigma_1\norm\) is a protocol term.
    Therefore, there exist a rewrite rule \(\gfun(\ell_1,\ldots, \ell_n) \rightarrow r\) and a substitution \(\gamma\) such that \(\ell_i\gamma = \zeta_i\Phi(\C^e_1)\sigma_1\norm\) for all \(i = 1\ldots n\).

    Recall that the rule \eqref{rule:rewrite} is not applicable on \(\C^e_1\) and \(\C^e_2\).
    Therefore we can show that provided \(\neg \msg(\xi\Phi(\C^e_2)\sigma_2)\) and \(\gfun(\zeta_1,\ldots, \zeta_n)\Phi(\C^e_1)\sigma_1\norm\) is a protocol term then we necessarily have that there exists \(\zeta'_1,\ldots, \zeta'_n\) and \(u\)
    such that \(\gfun(\zeta'_1,\ldots, \zeta'_n) \dedfact u_1 \in \USolved(\C^e_1)\) and \(\zeta'_i\Sigma\Phi(\C^e_1)\sigma_1\norm = \zeta_i\Phi(\C^e_1)\sigma_1\norm\).
    Moreover, since the normalisation rules are also not applicable (in particular Rule \ref{rule:vector-split solved}), we deduce that there exists \(u_2\) such that \(\gfun(\zeta'_1,\ldots, \zeta'_n) \dedfact u_2 \in \USolved(\C^e_2)\). By \(\PredWellFormed(\C^e_1)\),
    we know that for all \(i \in \{1,\ldots, n\}\), \(\zeta'_i \in \conseq(\Solved(\C^e_1) \cup \Df(\C^e_1))\) and so \(\zeta'_i\Sigma \in \conseq(\Solved(\C^e_1)\Sigma)\).
    Moreover, by hypothesis on \(\zeta_i\), we know that \(\zeta_i \in \conseq(\Solved(\C^e_1)\Sigma)\).
    Thus, by applying our inductive hypothesis, we obtain that \(\zeta_i\Phi(\C^e_2)\sigma_2\norm = \zeta'_i\Sigma\Phi(\C^e_2)\sigma_2\norm\).
    Moreover, by \(\PredCorrectFormula(\C^e_2)\), we know that \(\gfun(\zeta'_1,\ldots, \zeta'_n)\Sigma\Phi(\C^e_2)\sigma_2\norm = u_2\sigma_2\) which is a protocol term.
    We conclude that \(\gfun(\zeta_1,\ldots, \zeta_n)\Sigma\Phi(\C^e_2)\sigma_2\norm\) is a protocol term and thus \(\msg(\xi\Phi(\C^e_2)\sigma_2)\) gives us a contradiction.

  \caseitem{\emph{subgoal 2b}: Proof of \ref{it:ptree-config-stateq}}
    Assume now that \(\xi\Phi(\C^e_1)\sigma_1\norm = \xi'\Phi(\C^e_1)\sigma_1\norm\), \(\msg(\xi\Phi(\C^e_1)\sigma_1)\) and \(\msg(\xi'\Phi(\C^e_1)\sigma_1)\).
    Let us once again take the smallest \(\zeta \in \subterms(\xi,\xi')\) such that \(\zeta \not \in \conseq(\Solved(\C^e_1)\).
    We already proved above that there exist \(u_1,u_2\), \(\gfun\), \(\zeta'_1,\ldots, \zeta'_n, \zeta_1,\ldots, \zeta_n\) such that:
    \begin{itemize}
      \item \(\zeta = \gfun(\zeta_1,\ldots, \zeta_n)\)
      \item \(\gfun(\zeta'_1,\ldots, \zeta'_n) \dedfact u_1 \in \USolved(\C^e_1)\)
      \item \(\gfun(\zeta'_1,\ldots, \zeta'_n) \dedfact u_2 \in \USolved(\C^e_2)\)
      \item for all \(i \in \{1,\ldots, n\}\), \(\zeta_i\Phi(\C^e_2)\sigma_2\norm = \zeta'_i\Sigma\Phi(\C^e_2)\sigma_2\norm\) and \(\zeta_i\Phi(\C^e_1)\sigma_1\norm = \zeta'_i\Sigma\Phi(\C^e_1)\sigma_1\norm\).
    \end{itemize}
    By \(\PredConseq(\C^e_1)\), we know that there exists \(\beta\) such that \((\beta,u_1) \in \conseq(\Solved(\C^e_1) \cup \Df(\C^e_1)\).
    However the normalisation Rule \ref{rule:vector-consequence} is not applicable on the set of extended symbolic processes.
    Thus, we deduce that there exists \(\beta'\) such that \((\beta',u_1) \in \conseq(\Solved(\C^e_1) \cup \Df(\C^e_1)\) and \(\gfun(\zeta'_1,\ldots, \zeta'_n) \eqf \beta' \in \USolved(\C^e_1)\).
    Once again due to the normalisation Rule \ref{rule:vector-split solved}, we obtain that \(\gfun(\zeta'_1,\ldots, \zeta'_n) \eqf \beta' \in \USolved(\C^e_2)\).
    But \(\PredCorrectFormula(\C^e_2)\) and \(\PredCorrectFormula(\C^e_1)\) hold meaning that \((\Phi(\C^e_2)\sigma_2,\Sigma,\sigma_2) \models \gfun(\zeta'_1,\ldots, \zeta'_n) \eqf \beta'\) and \((\Phi(\C^e_1)\sigma_1,\Sigma,\sigma_1) \models \gfun(\zeta'_1,\ldots, \zeta'_n) \eqf \beta'\).

    Note that if \(p\) is the position of \(\zeta\) in \(\xi\) then we have \(N(\replacepos{\xi}{p}{\beta'\Sigma}, \xi') < N(\xi,\xi')\).
    Thus by applying our inductive hypothesis, we obtain that \(	(\Phi(\C^e_2)\sigma_2,\Sigma,\sigma_2) \models \replacepos{\xi}{p}{\beta'\Sigma} \eqf \xi'\).
    Since \((\Phi(\C^e_2)\sigma_2,\Sigma,\sigma_2) \models \gfun(\zeta'_1,\ldots, \zeta'_n) \eqf \beta'\)
    and \((\Phi(\C^e_2)\sigma_2,\Sigma,\sigma_2) \models \gfun(\zeta'_1,\ldots, \zeta'_n) \eqf \gfun(\zeta_1,\ldots, \zeta_n)\),
    we conclude that \((\Phi(\C^e_2)\sigma_2,\Sigma,\sigma_2) \models \xi \eqf \xi'\).

  \caseitem{\emph{case 3:} \(N(\xi,\xi') = 0\) and \(\max(|\xi|,\xi'|) > 0\)}
    In such a case, we know that \(\xi, \xi' \in \conseq(\Solved(\C^e_1)\Sigma)\) and \(\xi, \xi' \in \conseq(\Solved(\C^e_2)\Sigma)\).
    By definition of consequence and by \(\PredCorrectFormula(\C^e_1)\) and \(\PredCorrectFormula(\C^e_2)\), we directly obtain that \(\msg(\xi\Phi(\C^e_1)\sigma_1)\) and \(\msg(\xi\Phi(\C^e_2)\sigma_2)\) (same thing for \(\xi'\)).
    Now assume that \((\Phi(\C^e_1)\sigma_1,\Sigma,\sigma_1) \models \xi \eqf \xi'\).
    Since both \(\xi,\xi'\) are consequences of \(\Solved(\C^e_1)\Sigma\), we deduce that:
    \begin{itemize}
      \item either \(\xi = \ffun(\xi_1,\ldots, \xi_n)\) and \(\xi' = \ffun(\xi'_1,\ldots,\xi'_n)\) with \(\ffun \in \sigc\) and \((\Phi(\C^e_1)\sigma_1,\Sigma,\sigma_1) \models \xi_i \eqf \xi'_i\) for all \(i\).
      Therefore, we can apply our inductive hypothesis on the \((\xi_i,\xi'_i)\)s to conclude.
      \item or \(\xi\Sigma,\xi'\Sigma \in \Solved(\C^e_1)\Sigma\):
      Since we know that the rule \eqref{rule:equality} is not applicable, it implies that \(\xi \eqf \xi' \in \USolved(\C^e_1)\) and so \(\xi \eqf \xi' \in \USolved(\C^e_2)\) thanks to the normalisation Rule \ref{rule:vector-split solved}.
      Since \(\PredCorrectFormula(\C^e_2)\) holds, we can conclude that \((\Phi(\C^e_2)\sigma_2,\Sigma,\sigma_2) \models \xi \eqf \xi'\).
      \item or \(\xi\Sigma \in \Solved(\C^e_1)\Sigma\) and \(\xi' = \ffun(\xi'_1,\ldots,\xi'_n)\) with \(\ffun \in \sigc\);
      Once again since the rule \eqref{rule:equality} is not applicable, we deduce that there exists \(\zeta'_1,\ldots, \zeta'_n\) such that \(\xi \eqf \ffun(\zeta'_1,\ldots, \zeta'_n) \in \USolved(\C^e_1)\).
      Note from \(\PredCorrectFormula(\C^e_1)\) that in such a case, \((\Phi(\C^e_1)\sigma_1,\Sigma,\sigma_1) \models \xi \eqf \ffun(\zeta'_1,\ldots, \zeta'_n)\)
      meaning that \((\Phi(\C^e_1)\sigma_1,\Sigma,\sigma_1) \models \xi'_i \eqf \zeta'_i\) for all \(i \in \{1,\ldots, n\}\).
      Since \(|\xi'_i \Phi(\C^e_1)\sigma_1\norm| < |\xi\Phi(\C^e_1)\sigma_1\norm|\), we can apply our inductive hypothesis on all \((\xi'_i,\zeta'_i)\) meaning that \((\Phi(\C^e_2)\sigma_2,\Sigma,\sigma_2) \models \ffun(\zeta'_1,\ldots, \zeta'_n) \eqf \xi'\).
      However, by the rule \ref{rule:vector-split solved} not being applicable,
      \(\xi \eqf \ffun(\zeta_1,\ldots, \zeta'_n) \in \USolved(\C^e_1)\) implies \(\xi \eqf \ffun(\zeta'_1,\ldots, \zeta'_n) \in \USolved(\C^e_2)\) and so by \(\PredCorrectFormula(\C^e_2)\),
      we obtain that \((\Phi(\C^e_2)\sigma_2,\Sigma,\allowbreak\sigma_2) \models \xi \eqf \ffun(\zeta'_1,\ldots, \zeta'_n)\) which allows us to conclude that \((\Phi(\C^e_2)\sigma_2,\Sigma,\sigma_2) \models \xi \eqf \xi'\).
      \qedhere
  \end{itemize}
\end{proof}


\section{Termination proof} \label{app:termination}

\subsection{For mgs}

  The termination of the computation of most general solutions mostly relied on the following result, yet to be proved:

  \propMgsDecrease*

  \begin{proof}
    Consider first the simplification rule \eqref{rule:unifEqfst_simpl} and the ones from Figure \ref{fig:normalisation_formula}.
    They typically apply protocol term substitutions on the constraint system (they also effect recipe disequations that are irrelevant in \(\measureNC(\C^e)\)).
    Note that the applied substitution is always generated from terms already in the constraint system.
    As such \(\mu^1(\C^e\norm) = \mu^1(\C^e)\) and so
    \(\Phi(\C^e\norm)\mu^1(\C^e\norm) = \Phi(\C^e)\mu^1(\C^e)\),
    \(\Solved(\C^e\norm)\mu^1(\C^e\norm) = \Solved(\C^e)\mu^1(\C^e)\) and
    \(\Df(\C^e\norm)\mu^1(\C^e\norm) = \Df(\C^e)\mu^1(\C^e)\).
    Thus, we directly obtain that \(|\measureNC(\C^e\norm)| \leqslant |\measureNC(\C^e)|\).

    Let us look at Rules \eqref{rule:conseq}, \eqref{rule:cons} and \eqref{rule:res} and let us consider \(\C^e \xrightarrow{\Sigma} \C'^e\).
    The rule \eqref{rule:conseq} does not modify the protocol terms of the constraint systems by apply a recipe substitution.
    However, we show an invariant on the constraint systems that any \(\xi,\zeta \in \stc(\C^e)\) are consequence of \(\Solved \cup \Df\) as well as any of their subterms (see Definition \ref{def:well-formed} in Appendix).
    Thus, we deduce from the definition of \(\stc(\C^e)\) that \(\stc(\C^e)\Sigma \subseteq \stc(\C'^e)\).
    To conclude that \(|\measureNC(\C'^e)| \leqslant |\measureNC(\C^e)|\), we rely on the technical Proposition \ref{prop:trans-conseq};
    in other words, if \((\xi,t) \in \conseq(\Solved(\C^e)\mu^1(\C^e) \cup \Df(\C^e)\mu^1(\C^e))\) then \((\xi\Sigma,t) \in \conseq(\Solved(\C'^e)\mu^1(\C'^e) \cup \Df(\C'^e)\mu^1(\C'^e))\).
    Since \(\stc(\C^e)\Sigma \subseteq \stc(\C'^e)\), we conclude that \(|\measureNC(\C'^e)| \leqslant |\measureNC(\C^e)|\).

    Applying the same reasoning for the rule \eqref{rule:res}, we can also show that \(\measureNC(C'^e) \leqslant \measureNC(\C^e)\).
    However, we can even show that this inequality is strict.
    Indeed, using the same the notation in the rule \eqref{rule:res}, this rule is only applied if \(\C^e = \C^e\norm\) and for all \(\xi \in \stc(\C^e) \setminus \{ X\}\), \((\xi,u) \not\in \conseq(\Solved \cup \Df)\).
    Note that \(\C^e = \C^e\norm\) implies that \(\Solved\mu^1 = \Solved\) and \(\Df\mu^1 = \Df\).
    Moreover, it also implies that \(u \in \measureNC{\C^e}\).
    However, in \(\C'^e\), we have that \((\xi,u\mu_1(\C'^e)) \in \conseq(\Solved(\C'^e)\mu^1(\C'^e) \cup \Df(\C'^e)\mu^1(\C'^e))\).
    Moreover, we show another invariant on the constraint system (see Definition \ref{def:well-formed} in Appendix) that ensures us that \(X \in \stc(\C^e)\) and so \(\xi \in \stc(\C'^e)\).
    Hence, we obtain that \(u\mu^1(\C'^e) \not\in \measureNC(\C'^e)\) allowing us to conclude that \(\measureNC(C'^e) < \measureNC(C^e)\).
    By applying the same reasoning, we can also show that \(\measureNC(C'^e) < \measureNC(C^e)\) when the rule \eqref{rule:cons} is applied.
  \end{proof}

\subsection{Exponential measure}

  Another argument left pending is that each component of the measure except the last one can be bounded by an exponential in the DAG size of the parameters of the problem.
  We give a bound for each of them, in particular relying on the bound on \(\measureNC\) proved in the body of the article.

  \begin{enumerate}
    \item \(\compon[1](\Gamma) \leqslant \dagsize{P,Q}\):

      by definition.



    \item \(\compon[2](\Gamma) \leqslant (\dagsize {P} \dagsize {E})^{\dagsize {P}} + (\dagsize {Q} \dagsize {E})^{\dagsize {Q}}\):

      The measure corresponds to the number of symbolic transitions possible from \(P\) and \(Q\) for a given symbolic trace, hence the bound.
      Notice that the part \(\dagsize {E}^{\dagsize {P}}\) is due to the computation of the most general unifiers modulo \(E\) in the symbolic transitions.

    \item \(\compon[3](\Gamma) \leqslant 9\dagsize{P,Q,E}^3\):

      It suffices to observe that \(\setSDF(\C^e) \leqslant \measureNC(\C^e)\) and to use the bound proved in the body of the article.

    \item \(\compon[4](\Gamma) \leqslant \compon[2](\Gamma)\):

      Trivial.

    \item \(\compon[5](\Gamma) \leqslant \compon[2](\Gamma) \times \dagsize{E}^{\dagsize{E}} \times (18\dagsize{P,Q,E})^{27\dagsize{P,Q,E}^3}\):

      Bounding the size of \(|\setRew(\C^e)|\) can easily be done:
      the number of \(\psi \in \Solved\) possible is bounded by \(|\Solved|\), itself bounded by \(|\setSDF(\C^e)|\).
      The number of rewrite rules, position \(p\) and \(\psi_0 \in \RewF{\xi}{\ell \rightarrow r}{p}\) only depends on the rewrite systems and can be bounded by \(\dagsize{E}^{\dagsize{E}}\).
      Note that the exponential comes mainly from the number of possible positions in \(\ell\).
      We already know that the number of most general solutions is bounded by \((|\Solved(\C^e)| + 1)^{\measureNC(\C^e)}\).
      Combining with all previous results, and with the rough approximation \(9\dagsize{P,Q,E}^3+1 \leqslant 18\dagsize{P,Q,E}^3\), we obtain the above bound.

    \item \(\compon[6](\Gamma) \leqslant |E| \times \compon[2](\Gamma)\):

      To bound this number, we need to recall that we always apply the case distinction rules with the priority ordering
      \eqref{rule:satisfiable} < \eqref{rule:rewrite}.
      Thus, when we apply a rule \eqref{rule:rewrite}, there is no unsolved deduction formula in any of the extended constraint systems (otherwise we should have applied the rule \eqref{rule:satisfiable}).
      It means this measure is bounded by the number of deduction formulas produced by one instance of \eqref{rule:rewrite}.
      By definition, we know that \(|\RewF{\xi}{\ell \rightarrow r}{p}| \leqslant |\R|\) (one formula per rewrite rule).
      Thus, the rule \eqref{rule:rewrite} generates at most \(|E| \times \compon[2](\Gamma)\) deduction formulas.

    \item \(\compon[7](\Gamma) \leqslant \compon[2](\Gamma) \times 2\dagsize{E}(\dagsize{P,Q})^2 (1 + \dagsize {E})^2\):

      The application conditions stipulate that the rule can be applied either (a) on two deduction facts of \(\Solved(\C^e_i)\), or (b) on one deduction fact of \(\Solved(\C^e_i)\) in combination with a construction function symbol.

      Note that even though the rule also consider the existence of a most general solution \(\Sigma \in \mgs(\C^e_i[\Eqfst \mapsto \Eqfst \wedge \Fhyp(\FApply {\Sigma_0} {\psi} {\C^e_i})])\), the number of applications of the rule \eqref{rule:equality} will not depend on the number of possible most general solutions.
      Indeed, consider the case (a) where the rule is applied on two deduction fact \((\xi_1 \dedfact u_1),(\xi_2 \dedfact u_2) \in \Solved(\C^e_i)\).
      Thus, an equality formula with \(\xi_1\Sigma \eqf \xi_2\Sigma\) as head will be added in \(\USolved(\CApply{\Sigma}{\C^e_i})\).
      However, in the application conditions of the rule, we also require that
      \emph{for all \((\clause[S]{H}{\varphi}) \in \USolved(\C^e_i)\), \(H \neq (\xi_1 \eqf \xi_2)\)}.
      Thus, a new application of the rule \eqref{rule:equality} on \(\CApply{\Sigma}{\C^e_i}\) with the same (up to instantiation of \(\Sigma\)) deductions facts from \(\Solved(\CApply{\Sigma}{\C^e_i})\) will be prevented.

      The same situation occurs in case (b) with the condition \emph{for all \((\clause[S]{\zeta_1 \eqf \zeta_2}{\varphi}) \in \USolved(\C^e_i)\), \(\zeta_1 = \xi_1\) implies \(\rootf(\zeta_2) \neq \ffun\)}.
      We therefore conclude that the rule \eqref{rule:equality} can be applied only once per pair of deduction facts in \(\Solved\) and once per deduction fact in \(\Solved\) and function symbol in \(\sigc\).

    \item \(\compon[8](\Gamma) \leqslant \compon[2](\Gamma)\):

      Unsolved equality formulas can be generated by two rules: the case distinction rule \eqref{rule:equality} or the simplification Rule \ref{rule:vector-consequence}.
      However, once again because of the priority order \eqref{rule:satisfiable} < \eqref{rule:equality}, the two rules cannot be triggered simultaneously and the rule \eqref{rule:equality} is only triggered when there is no unsolved equality formulas.
      Note that due to the condition \emph{\(\forall i. \forall (\clause[S]{\zeta_1 \eqf \zeta_2}{\varphi}) \in \USolved_i\), \(\zeta_1 \neq \xi\) or \(\zeta_2 \neq \xi\)} in Rule \ref{rule:vector-consequence},
      two instances of the Rule \ref{rule:vector-consequence} with different recipes \(\zeta\) (e.g. if \(u_1\) can be deducible with two different recipes) cannot be applied sequentially.
      Thus, at any given moment, there is at most one unsolved equality formula per extended constraint system of \(\Gamma\), hence the bound.
  \end{enumerate}

\subsection{Bounding the increase of second order terms}

  In Section \ref{sec:exp-mgs}, Proposition \ref{prop:evol-mgs}, we gave a bound on the increase of the size of most general solutions when applying the rule \eqref{rule:satisfiable}.
  We give here the arguments to extend to the other case distinction rules.
  For that it suffices to generalise this property to a more general set of substitution \(\Sigma\):

  \begin{definition}
    Let \(\C^e\) be an extended constraint system. Let \(\Sigma\) be a second-order substitution.
    We say that \(\Sigma \in \CompatibleSubs(\C^e)\) if \(\dom(\Sigma) \subseteq \vars[2](\Df(\C^e))\) and for all \(X \in \dom(\Sigma)\),
    \(X\Sigma \in \conseq(\Solved(\C^e) \cup \Df' \cup D_\Sigma)\) where
    \(\Df' = \{ X \dedfact u \in \Df(\C^e) \mid X \not\in \dom(\Sigma) \}\) and
    \(D_\Sigma = \{ X \dedfact x \mid x \text{ fresh and } X \in \vars[2](\Sigma) \setminus \vars[2](\C^e)\}\).
  \end{definition}

  Intuitively, \(\CompatibleSubs(\C^e)\) represents the recipe substitutions \(\Sigma\) that can be applied be applied to the constraint system \(\C^e\), i.e. \(\CApply{\Sigma}{\C^e}\),
  and such that the recipes in the of \(\Sigma\) would be consequence of \(\CApply{\Sigma}{\C^e}\).
  Note that \(\mgs(\C^e) \subseteq \CompatibleSubs(\C^e)\).

  By applying Proposition \ref{prop:trans-conseq}, we can show that:
  \begin{equation}
    \text{for all }\Sigma \in \CompatibleSubs(\C^e),
    |\measureNC(\CApply{\Sigma}{\C^e})| \leqslant |\measureNC(\C^e)|
    \label{term:compsubs}
  \end{equation}
  Note that in a set of symbolic processes two extended constraint systems \(\C^e_1, \C^e_2\) always have the same \emph{recipe structure} (Invariant \(\PredStruct\)),
  i.e. \(|\Phi(\C^e_1)| = |\Phi(\C^e_2)|\), \(\vars[2](\C^e_1) = \vars[2](\C^e_2)\) and
  \(\{ \xi \mid (\xi \dedfact u) \in \Solved(\C^e_1)\} = \{ \xi \mid (\xi \dedfact u) \in \Solved(\C^e_2)\}\).
  Thus, we deduce that \(\CompatibleSubs(\C^e_1) = \CompatibleSubs(\C^e_2)\).
  Therefore, we can conclude that for any simplification and case distinction rules, \(|\measureNC(\C^e)|\) never increase for all extended constraint systems in a set of extended symbolic processes.


\section{Proofs of complexity lower bounds}
\subsection{Advanced winning strategies}
\label{app:strategies}

  Before starting the proofs, we present some characterizations of observational (in)equivalence in order to make the incoming proofs easier to handle.

  \begin{remark}
    The results of this section (\ref{app:strategies}) also apply to the extended semantics of Section \ref{sec:tools encoding}.
  \end{remark}

  \paragraph{For the defender}
  \label{app:strategies def}

  The transitions of the semantics which are deterministic and silent are not essential to equivalence proofs as they do not interfere substantially with them. We introduce below a refined proof technique to rule them out.

  \begin{definition}[simplification]
    \label{def:simplification}
    A multiset of closed plain processes \(\S\) is silent in an extended process \(\process \P \Phi\) when for all transitions \(\process {\P \cup \S} \Phi \cstep \alpha \process \Q {\Phi'}\),
    it holds that \(\Q = \P' \cup \S\) with \(\process \P \Phi \cstep \alpha \process {\P'} {\Phi'}\) and \(\S\) silent in \(\process {\P'} {\Phi'}\). Then we define \(\silentstep\) (simplification relation) the relation on extended processes defined by the following inference rules:
    \begin{mathpar}
      \inferrule
        {\mbox{\(\S\) silent in \(\process \P \Phi\)}}
        {\process {\P \cup \S} \Phi \silentstep \process \P \Phi}[(S-sil)]\label{rule:s-sil}

      \inferrule
        {c \in \Nall \\ \msg t \\ c \notin \names{\P, \Phi}}
        {\process {\P \cup \multi {\OutP c t.P, \InP c x.Q}} \Phi \silentstep \process {\P \cup \multi {P, Q\{x \mapsto t\}}} \Phi}[(S-comm)]\label{rule:s-comm2}

      \inferrule
        {A \cstep \tau B ~ \mbox{by rules \textsc{Null}, \textsc{Par}, \textsc{Then}, \textsc{Else}}}
        {A \silentstep B}[(S-npte)]\label{rule:s-npte}
    \end{mathpar}
  \end{definition}

  In other words, we write \(A \silentstep B\) when \(B\) is obtained from \(A\) by removing some silent process or applying a deterministic (in the sense of the confluence lemma below) instance of the transition relation \(\cstep \tau\). We call \(\silentpistep\) the restriction of \(\silentstep\) to the rule \nameref{rule:s-npte}. Their reflexive transitive closures are denoted \(\silent\) and \(\silentpi\) respectively as usual.

  \begin{lemma}
    \label{lem:silent strong confluence}
    If \(A \silentstep B\) (by some rule \(\rho_{\mathsf{sil}}\) of the definition of \(\silentstep\)) and \(A \cstep \alpha C\) (by some rule \(\rho_{\mathsf c}\) of the semantics), then either \(B = C\) and \(\alpha = \silent\), or there exists \(D\) such that \(C \silentstep D\) (by rule \(\rho_{\mathsf{sil}}\)) and \(B \cstep \alpha D\) (by rule \(\rho_{\mathsf c}\)).
  \end{lemma}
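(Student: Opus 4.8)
The plan is to establish this diamond (strong confluence) property by a case analysis on the rule $\rho_{\mathsf{sil}} \in \{\text{(S-sil)}, \text{(S-comm)}, \text{(S-npte)}\}$ used to derive $A \silentstep B$, and, within each case, to distinguish whether the redex rewritten by the semantic transition $A \cstep \alpha C$ overlaps the redex rewritten by $\rho_{\mathsf{sil}}$. When the two redexes are disjoint elements of the multiset of processes the two steps commute, so closing the diamond reduces to checking that each step's side conditions survive the other step; when they overlap I expect to land in the degenerate branch $B = C$, $\alpha = \silent$, exploiting that the reduction selected by $\rho_{\mathsf{sil}}$ is deterministic. A supporting fact used throughout is that transitions introduce no fresh names: inputs are built from $\Phi$ through recipes and outputs from terms already present, so $\names$ can only shrink along $\cstep{}$.

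For $\rho_{\mathsf{sil}} = \text{(S-sil)}$, write $A = \process{\P \cup \S}{\Phi}$ with $\S$ silent in $\process \P \Phi$ and $B = \process \P \Phi$. Here the conclusion is essentially unfolding the definition of silence: applying it to $A \cstep \alpha C$ yields $C = \process{\P' \cup \S}{\Phi'}$ together with a transition $\process \P \Phi \cstep \alpha \process{\P'}{\Phi'}$ (this is precisely the step out of $B$) and the fact that $\S$ remains silent in $\process{\P'}{\Phi'}$. Taking $D = \process{\P'}{\Phi'}$ then closes the diamond, with rule $\rho_{\mathsf c}$ on the edge $B \cstep \alpha D$ and rule (S-sil) on the edge $C \silentstep D$.

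For $\rho_{\mathsf{sil}} = \text{(S-comm)}$, write $A = \process{\P \cup \multi{\OutP c t.P, \InP c x.Q}}{\Phi}$ with $c \in \Nall$, $c \notin \names{\P, \Phi}$, $\msg t$, and $B = \process{\P \cup \multi{P, Q\{x \mapsto t\}}}{\Phi}$. The crucial observation, which I expect to be the main obstacle, is that the only semantic transition that can involve either of the two communicating processes is the internal communication \eqref{rule:comm} between them. Indeed, a recipe contains no names by definition, and since $c \notin \names{\Phi}$ the normal form of any $\xi_c \Phi$ cannot equal $c$; hence neither \eqref{rule:in} nor \eqref{rule:out} can fire on channel $c$, and since $c \notin \names{\P}$ no process of $\P$ can communicate on $c$ either, so the \eqref{rule:comm} step pairs exactly the two designated processes. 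Consequently, if $A \cstep \alpha C$ touches the pair it must be this \eqref{rule:comm} step, which produces precisely $B$, giving the degenerate branch $B = C$ with $\alpha = \tau = \silent$. Otherwise the transition lies entirely within $\P$, and I close the diamond by re-applying (S-comm) to $C$; this needs $c \notin \names{\P', \Phi'}$ and $\msg t$ to be preserved, which holds by the name-monotonicity remark above (and because \eqref{rule:comm} leaves $\Phi$ unchanged, the frame-dependent side conditions of the step in $\P$ are identical in $A$ and $B$).

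For $\rho_{\mathsf{sil}} = \text{(S-npte)}$, the step $A \cstep \tau B$ is a silent reduction of a single process $R$ (a parallel composition, a conditional, or a null process) by \eqref{rule:par}, \eqref{rule:then}, \eqref{rule:else}, or the null rule. If the semantic transition does not rewrite $R$, the two steps act on disjoint processes and commute; I only need to note that \eqref{rule:par}, \eqref{rule:then}, \eqref{rule:else} depend only on the subterms of $R$ and not on $\Phi$, so a frame change produced by an \eqref{rule:out} step elsewhere does not disturb their applicability, and conversely $R$'s reduction does not disturb the other step. If the semantic transition does rewrite $R$, then the head constructor of $R$ admits only the rule already used by $\rho_{\mathsf{sil}}$ (and, for a conditional, \eqref{rule:then} and \eqref{rule:else} are mutually exclusive), so determinism forces $C = B$ and $\alpha = \tau = \silent$, again the degenerate branch. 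Assembling the three cases yields the stated strong confluence of $\silentstep$ with $\cstep{}$.
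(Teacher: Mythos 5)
Your proof is correct and follows essentially the same route as the paper's: a case analysis on \(\rho_{\mathsf{sil}}\), where the \nameref{rule:s-sil} case unfolds the definition of silent multisets directly, and the \nameref{rule:s-comm2}/\nameref{rule:s-npte} cases split into an overlapping redex (yielding \(B = C\) with \(\alpha = \tau\)) versus disjoint redexes that commute. The only difference is that you spell out what the paper compresses into ``a quick analysis of the rules of the semantics''---namely the argument that the private channel \(c \notin \names{\P,\Phi}\) cannot be produced by any recipe, forcing the overlap to be exactly the \eqref{rule:comm} step, and the preservation of side conditions in the disjoint case---which is a faithful elaboration rather than a different argument.
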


  \begin{proof}
    We make a case analysis on the rule used to obtain the reduction \(A \silentstep B\):

    \begin{itemize}
      \item \case[: by rule \nameref{rule:s-sil}] 1

      Then we write \(A = \process {\P \cup \S} \Phi\), \(B = \process \P \Phi\).
      By definition of silent processes, the reduction \(A \cstep \alpha C\) hence gives \(C = \process {\P' \cup \S} {\Phi'}\) where \(\process \P \Phi \cstep \alpha \process {\P'} {\Phi'} = D\) and \(\S\) silent in \(D\). In particular \(D\) gives the expected conclusion.

      \item \case[: by rule \nameref{rule:s-comm2} or \nameref{rule:s-npte}] 2

      Then either \(B = C\) and the conclusion is immediate, or \(B \neq C\) and a quick analysis of the rules of the semantics gives \(\P,Q_B,Q_B',Q_C,Q_C',\Phi,\Phi'\) such that:
      \begin{align*}
        A & = \process {\multi {Q_B,Q_C} \cup \P} \Phi &
        B & = \process {\multi {Q_B',Q_C} \cup \P} \Phi &
        C & = \process {\multi {Q_B,Q_C'} \cup \P} {\Phi'} \\
        & & \mathrm{}~& \process {\multi {Q_B}} \Phi \silentstep \process {\multi {Q_B'}} \Phi &
        \mathrm{}~& \process {\multi {Q_C}} \Phi \cstep \alpha \process {\multi {Q_C'}} {\Phi'}
      \end{align*}
      and we conclude by choosing \(D = \process {\multi {Q_B',Q_C'} \cup \P} {\Phi'}\). \qedhere
    \end{itemize}
  \end{proof}

  \begin{corollary}
    \label{cor:silent confluence}
    If \(A \silent B\) and \(A \cstep \alpha C\) then either \(B \silent C\) and \(\alpha = \silent\), or there exists \(D\) such that \(C \silent D\) and \(B \cstep \alpha D\).
  \end{corollary}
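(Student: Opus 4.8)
The plan is to prove the corollary as a routine lifting of the single-step commutation Lemma~\ref{lem:silent strong confluence} to a multi-step simplification on one side, by induction on the length of the simplification sequence. Concretely, I would first unfold \(A \silent B\) as a finite chain \(A = A_0 \silentstep A_1 \silentstep \cdots \silentstep A_n = B\) and argue by induction on \(n\), keeping the concrete transition \(A \cstep{\alpha} C\) as the fixed ``moving'' side.

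For the base case \(n = 0\) I have \(A = B\), so choosing \(D = C\) yields \(C \silent C\) by reflexivity of \(\silent\) together with \(B \cstep{\alpha} C = D\); hence the second alternative of the corollary holds. For the inductive step I would write \(A \silentstep A_1\) with \(A_1 \silent B\) of length \(n-1\), and apply Lemma~\ref{lem:silent strong confluence} to the two co-initial reductions \(A \silentstep A_1\) and \(A \cstep{\alpha} C\). This produces two cases. In the \emph{absorption} case, \(A_1 = C\) with \(\alpha\) silent: the concrete step has been entirely consumed by the first simplification step, and the remaining chain \(A_1 \silent B\) relates \(C\) and \(B\) by \(\silent\), which is exactly the first alternative of the corollary (the action being silent). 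In the \emph{propagation} case the lemma hands me a process \(C'\) with \(C \silentstep C'\) and \(A_1 \cstep{\alpha} C'\); I then apply the induction hypothesis to the shorter chain \(A_1 \silent B\) and the transition \(A_1 \cstep{\alpha} C'\).

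It remains to transfer the outcome of the induction hypothesis back across the stripped step \(C \silentstep C'\). If the induction hypothesis returns the first alternative (with \(\alpha\) silent), prefixing the single step \(C \silentstep C'\) to the accumulated \(C\)-side simplifications again yields the first alternative for the original pair. If instead it returns a witness \(D\) with \(C' \silent D\) and \(B \cstep{\alpha} D\), then composing \(C \silentstep C' \silent D\) gives \(C \silent D\), so the same \(D\) witnesses the second alternative for \((A,B,C)\). In both subcases the conclusion is obtained without any new semantic reasoning.

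The point I would stress is that essentially all of the difficulty is already discharged by the single-step Lemma~\ref{lem:silent strong confluence}, whose conclusion is precisely the ``tile'' needed to commute one simplification step past a concrete action; the corollary is then a standard strip-lemma induction. The only place demanding genuine care is the bookkeeping of the two disjuncts as they propagate along the chain — in particular recognising that once the concrete step is absorbed the residual simplifications \(A_i \silent B\) simply compose with those already accumulated on the \(C\)-side, and that in the surviving case the freshly stripped step \(C \silentstep C'\) must be prepended to the reflexive-transitive reduction produced by the induction hypothesis. No appeal to confluence or termination of \(\silentstep\) beyond Lemma~\ref{lem:silent strong confluence} is required.
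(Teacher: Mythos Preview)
Your approach---induction on the length of the simplification chain, pushing the concrete step through one \(\silentstep\) at a time via Lemma~\ref{lem:silent strong confluence}---is exactly what the paper does (its proof is the one-liner ``by a straightforward induction on the number of steps of \(A \silent B\)'').

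One point you gloss over: in your absorption case you obtain \(C = A_1 \silent B\), i.e.\ \(C \silent B\), and declare this to be ``exactly the first alternative''. But the first alternative as literally stated in the corollary is \(B \silent C\), the opposite direction, and \(\silent\) is not symmetric (rule \nameref{rule:s-sil} is irreversible). Your argument in fact establishes \(C \silent B\), not \(B \silent C\); likewise, in the propagation sub-case where the induction hypothesis returns the first alternative, ``prefixing \(C \silentstep C'\)'' only composes sensibly if that alternative is read as \(C' \silent B\). This is almost certainly a typo in the statement: the later application of the corollary in the proof of Proposition~\ref{prop:bisim up to} uses precisely the direction \(C \silent B\) (there written ``\(A' \silent C\)''). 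So your reasoning is sound for the intended statement; you have simply not noticed that what you prove does not match the printed direction.
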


  \begin{proof}
    By a straightforward induction on the number of steps of the reduction \(A \silent B\).
  \end{proof}

  \begin{corollary}
    \label{cor:npte convergence}
    \(\silentpistep\) is convergent.
  \end{corollary}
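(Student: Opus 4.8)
The plan is to establish convergence of $\silentpistep$ as the conjunction of strong normalisation and confluence, reusing the strong-confluence property already proved for $\silentstep$ in Lemma~\ref{lem:silent strong confluence}.

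First I would prove termination. Each $\silentpistep$ step is, by definition of rule \nameref{rule:s-npte}, a $\cstep{\tau}$-transition derived by one of \textsc{Null}, \textsc{Par}, \textsc{Then}, \textsc{Else}, and every such rule strictly reduces the total number of process constructors in the multiset: \textsc{Par} removes an occurrence of $\mid$, \textsc{Then} and \textsc{Else} remove a conditional $\IfP$ together with one of its two branches, and \textsc{Null} removes a terminated process $0$. Letting $m(A)$ be the total number of syntactic constructors occurring in the processes of the multiset of $A$, every reduction $A \silentpistep B$ then satisfies $m(A) > m(B)$. Since $m$ ranges over $\N$, there is no infinite $\silentpistep$-sequence, so $\silentpistep$ is strongly normalising.

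Next I would prove local confluence, which in fact follows from the stronger diamond property inherited from Lemma~\ref{lem:silent strong confluence}. Suppose $A \silentpistep B$ and $A \silentpistep C$. Reading the first reduction as an instance of $A \silentstep B$ by rule \nameref{rule:s-npte}, and the second as an instance of $A \cstep{\tau} C$ by one of \textsc{Null}, \textsc{Par}, \textsc{Then}, \textsc{Else}, the lemma provides either $B = C$, or a process $D$ with $C \silentstep D$ by rule \nameref{rule:s-npte} and $B \cstep{\tau} D$ by the very rule that derived $A \cstep{\tau} C$. In the latter case both closing reductions are again single $\silentpistep$-steps, whence $B \silentpistep D$ and $C \silentpistep D$; in the former case $B$ and $C$ trivially join. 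This yields the strong-confluence diamond for $\silentpistep$, which in particular entails local confluence.

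Finally, by Newman's lemma, strong normalisation together with local confluence gives confluence, so $\silentpistep$ is convergent. The only delicate point I anticipate is in the confluence step: I must check that the rules recorded by Lemma~\ref{lem:silent strong confluence} for the two closing reductions remain inside the \nameref{rule:s-npte} fragment, i.e. that the witnessing steps $C \silentstep D$ and $B \cstep{\tau} D$ are genuine $\silentpistep$-steps rather than arbitrary silent transitions or communications. This is guaranteed precisely because that lemma preserves the identity of the applied simplification rule on the $C$-side ($\rho_{\mathsf{sil}} = {}$\nameref{rule:s-npte}) and of the applied semantic rule on the $B$-side ($\rho_{\mathsf c} \in \{\textsc{Null}, \textsc{Par}, \textsc{Then}, \textsc{Else}\}$), so no primitive outside this fragment is ever introduced.
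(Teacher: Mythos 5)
Your proof is correct and follows essentially the same route as the paper's: local confluence (in fact the diamond property) is extracted from Lemma~\ref{lem:silent strong confluence} using exactly the rule-preservation you highlight, and convergence then follows by Newman's lemma. The only difference is cosmetic: where you exhibit an explicit decreasing constructor-count measure for termination, the paper simply appeals to the termination of the whole replication-free calculus.
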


  \begin{proof}
    The termination of \(\silentpistep\) follows from the termination of the whole calculus. As for the local confluence (which sufficies by Newmann's lemma), we observe that by Lemma \ref{lem:silent strong confluence}, if \(A \silentpistep B\) and \(A \silentpistep C\) then either \(B = C\), or there is \(D\) such that \(B \silentpistep D\) and \(C \silentpistep D\): in particular, \(B \silentpi E\) and \(C \silentpi E\) for some \(E \in \{C,D\}\).
  \end{proof}

  In particular, all extended processes \(A\) have a unique normal form w.r.t. \(\silentpistep\) which will be written \(\procnorm A\). This notation is lifted to multiset of processes, writing \(\procnorm {\P}\) (which is consistent since \(\silentpistep\) does not modify the frame). With all of this, we eventually gathered all the ingredients to introduce our characterization of bisimilarity:

  \begin{definition}[bisimulation up to \(\silentstep\)]
    A symmetric relation \(\bisim\) on extended processes is then said to be bisimulation up to \(\silentstep\), or a bisimulation up to simplification, when:
    \begin{itemize}
      \item \(\bisim~\subseteq~\StatEq\);
      \item for all extended processes \(A,B\) such that \(A\bisim B\), and for all transitions \(A\cstep\alpha A'\), there exists \(B\Cstep\alpha B'\) such that \(A' \silent \bisim \silentrev B'\).
    \end{itemize}
  \end{definition}

  \begin{proposition}
    \label{prop:bisim up to}
    For all extended processes \(A\) and \(B\), \(A\LabBis B\) \textit{iff} there exists a bisimulation up to simplification \(\bisim\) such that \(A \silent \bisim \silentrev B\).
  \end{proposition}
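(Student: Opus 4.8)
# Proof proposal for Proposition~\ref{prop:bisim up to}

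The plan is to prove both directions of the equivalence, treating the ``up to simplification'' technique as a sound and complete enhancement of the standard bisimulation game. The key enabling machinery is the confluence package established just above: Lemma~\ref{lem:silent strong confluence} (strong confluence of a single $\silentstep$ step against any transition) and its Corollaries~\ref{cor:silent confluence} and~\ref{cor:npte convergence}. The central technical fact I would rely on is that $\silentstep$-steps are \emph{deterministic} and \emph{benign}: they never change the frame, and by Corollary~\ref{cor:silent confluence} any concrete transition available before a simplification remains available after it (up to further simplification). This is exactly what lets us absorb arbitrarily long chains of $\silentstep$ without disturbing the labelled transitions that the bisimulation game actually observes.

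For the \emph{easy direction} ($\LabBis$ is itself a bisimulation up to simplification, giving the ``only if''), I would first observe that $\LabBis$ trivially satisfies the up-to conditions: it is a symmetric relation contained in $\StatEq$ (by Definition~\ref{def:(bi)simulation}), and its defining clause already furnishes, for each $A \cstep{\alpha} A'$ with $A \LabBis B$, a matching $B \Cstep{\alpha} B'$ with $A' \LabBis B'$; since $\silent$ is reflexive we have $A' \silent A' \LabBis B' \silentrev B'$, so $A' \silent\bisim\silentrev B'$ holds with $\bisim{} = {}\LabBis$. This settles the ``if $A \LabBis B$ then a witnessing bisimulation up to simplification exists'' half, taking $\bisim{} = {}\LabBis$ and using reflexivity of $\silent$ to get $A \silent \LabBis \silentrev B$.

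For the \emph{hard direction} ($A \silent\bisim\silentrev B$ for some bisimulation up to simplification $\bisim$ implies $A \LabBis B$), the strategy is to define the closure
\[
  \R = \{(A',B') \mid A' \silent\bisim\silentrev B'\}
\]
and show that $\R$ is an ordinary labelled bisimulation, whence $A \mathrel\R B$ gives $A \LabBis B$. Proving $\R \subseteq \StatEq$ is immediate because $\silentstep$ preserves frames and $\bisim{} \subseteq {}\StatEq$. The core of the argument is the transition-matching clause: given $A' \mathrel\R B'$, unfold it as $A' \silent A'' \bisim B'' \silentrev B'$, and take a transition $A' \cstep{\alpha} A_1$. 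I would push this transition through the simplification chain $A' \silent A''$ using Corollary~\ref{cor:silent confluence} repeatedly: either the transition is absorbed into $\silent$ (when $\alpha = \silent$, i.e. $\alpha$ is unobservable) and we stay related, or we obtain $A'' \cstep{\alpha} A_2$ with $A_1 \silent A_2$. Applying the up-to bisimulation clause for $\bisim$ to $A'' \cstep{\alpha} A_2$ yields $B'' \Cstep{\alpha} B_2$ with $A_2 \silent\bisim\silentrev B_2$. Finally I would transport $B'' \Cstep{\alpha} B_2$ back along $B'' \silentrev B'$ — using Corollary~\ref{cor:silent confluence} in the reverse (symmetric) orientation, which is legitimate since $\silentstep$-steps commute with transitions in either direction — to produce $B' \Cstep{\alpha} B_1$ with $B_2 \silent B_1$. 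Composing the $\silent$-segments gives $A_1 \silent\bisim\silentrev B_1$, i.e. $A_1 \mathrel\R B_1$, closing the diagram.

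The main obstacle I anticipate is bookkeeping in the transport step, specifically the interaction between the multi-step weak transition $\Cstep{\alpha}$ (which already contracts $\tau$-actions and may itself contain several $\cstep{\tau}$ moves) and the simplification chains on \emph{both} sides simultaneously. Corollary~\ref{cor:silent confluence} is stated for a single transition $A \cstep{\alpha} C$ against $A \silent B$, so I would need a short auxiliary lemma lifting it to weak transitions $\Cstep{\alpha}$ — proved by induction on the length of the transition sequence, at each step invoking the corollary and distinguishing whether the individual action is silent or observable. Care is needed that the $\silent$-closure accumulated on the $B$-side does not interfere with the $\tau$-steps internal to $\Cstep{\alpha}$; here the fact (Lemma~\ref{lem:silent strong confluence}) that a $\silentstep$ step either \emph{coincides} with the transition or strictly commutes with it is what guarantees no spurious divergence arises. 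Once this weak-transition lifting is isolated as a lemma, the bisimulation-up-to closure argument reduces to a routine diagram chase, and the proposition follows by the standard fixpoint characterisation of $\LabBis$ as the largest symmetric bisimulation.
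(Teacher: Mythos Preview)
Your architecture matches the paper's: show that ${\silent}\bisim{\silentrev}$ is itself a labelled bisimulation, with the forward push of $A \cstep{\alpha} A'$ through the left-hand $\silent$-chain handled by Corollary~\ref{cor:silent confluence}, and the easy direction dispatched by taking $\bisim{} = {}\LabBis$.

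The gap is the backward transport on the right. You propose to carry $B'' \Cstep{\alpha} B_2$ back along $B' \silent B''$ ``using Corollary~\ref{cor:silent confluence} in the reverse (symmetric) orientation'', but the corollary is not symmetric in the sense you need: it pushes a transition available at the \emph{source} of a simplification chain forward to the \emph{target}, whereas here you must lift a weak transition of the target $B''$ back to the source $B'$. That is a different statement, and neither Lemma~\ref{lem:silent strong confluence} nor its corollary delivers it. The paper obtains it by a case split on the rule producing each individual $\silentstep$-step: a step by (S-npte) or (S-comm) is literally a $\cstep{\tau}$ transition and can simply be prepended; a step by (S-sil) removes a silent multiset $\S$, and by the very definition of ``silent'' any transition of the reduct replays from the unreduced process with $\S$ carried along and then removed again. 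Chaining these observations along $B' \silent B''$ yields the needed fact: from $B' \silent B'' \Cstep{\alpha} B_2$ one gets $B' \Cstep{\alpha} B_1 \silent B_2$ for some $B_1$. Your ``main obstacle'' paragraph senses there is work here but misidentifies it as lifting the corollary from $\cstep{\alpha}$ to $\Cstep{\alpha}$ on the forward side; the actual crux is this backward-lifting case analysis. (Minor point: the residual simplification is oriented $B_1 \silent B_2$, not $B_2 \silent B_1$ as you write; this orientation is what lets the two $\silentrev$-segments compose on the right.)
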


  \begin{proof}
    The forward implication follows from the fact that \(\LabBis\) is a bisimulation up to simplification (by reflexivity of \(\silent\)). For the converse, let us consider \(\bisim\) a bisimulation up to \(\silentstep\) and prove that it is contained in \(\LabBis\). In order to do that, it sufficies to show that:
    \begin{itemize}
      \item \(\silent \bisim \silentrev\) is symmetric;
      \item \((\silent \bisim \silentrev)~ \subseteq~ \StatEq\);
      \item for all extended processes \(A,B\) such that \(A \silent \bisim \silentrev B\), if \(A\cstep\alpha A'\) then there exists \(B\Cstep\alpha B'\) such that \(A' \silent \bisim \silentrev B'\).
    \end{itemize}
    These three properties indeed justify that \((\silent \bisim \silentrev)~ \subseteq~ \LabBis\) by definition, hence the expected conclusion as \(\bisim \subseteq \silent \bisim \silentrev\) by reflexivity of \(\silent\). Yet it appears that the first two points directly follows from the properties of \(\bisim\) and the reflexivity of \(\silent\), and we thus only need to prove the third point. Let us therefore consider the following hypotheses and notations:
    \begin{align*}
      A \silent C & ~\bisim~ D \silentrev B &
      A & \cstep \alpha A'
    \end{align*}
    and let us exhibit \(B'\) such that \(B \Cstep \alpha B'\) and \(A' \silent \bisim \silentrev B'\). Let us consider the two cases induced by the application of Corollary \ref{cor:silent confluence}:

    \begin{itemize}
      \item \case 1 \(A' \silent C\) and \(\alpha = \silent\)

      Then we can choose \(B' = B\).

      \item \case 2 there exists \(C'\) such that \(A' \silent C'\) and \(C \cstep \alpha C'\)

      Consequently, since \(\bisim\) is a bisimulation up to simplification, there is \(D'\) such that \(D \Cstep \alpha D'\) and \(C' \silent \bisim \silentrev D'\). Then we remark that for all extended processes \(B_1, B_2, B_3\):
      \begin{itemize}
        \item a transition \(B_1 \silentstep B_2\) with rules \nameref{rule:s-npte} or \nameref{rule:s-comm2} implies \(B_1 \cstep \tau B_2\);
        \item if \(B_1 \silentstep B_2\) with rule \nameref{rule:s-sil} and \(B_2 \cstep \beta B_3\), then \(B_1 \cstep \beta \silentstep B_3\).
      \end{itemize}
      In particular since \(B \silent D \Cstep \alpha D'\), we have \(B \Cstep \alpha D'' \silent D'\) for some \(D''\). Hence the conclusion by choosing \(B' = D''\). \qedhere
    \end{itemize}
  \end{proof}

  \paragraph{For the attacker}
  \label{app:strategies atk}
  When taking the negation of labelled bisimilarity, we essentially obtain a set of rules for a game whose states are pairs of processes \((A,B)\): an attacker selects a transition and a defender answers by selecting a equivalently-labelled sequence of transitions in the other process.

  \begin{definition}[labelled attack]
    \label{def:labelled attack}
    A relation \(\disim\) on extended processes is called a labelled attack when for all \(A,B\) such that \(A\disim B\), it holds that:
    \begin{enumerate}
      \item either: \(A\not\StatEq B\)
      \item or: \(\exists A\cstep\alpha\Cstep\tr A',~\forall B\Cstep{\alpha.\tr} B',~A'\disim B'\)
      \item or: \(\exists B\cstep\alpha\Cstep\tr B',~\forall A\Cstep{\alpha.\tr} A',~A'\disim B'\)
    \end{enumerate}
  \end{definition}

  Note that labelled attacks are not the direct translation of the above intuition since they allow the attacker to choose several transitions in a row; this intuitively entails no loss of generality since it is equivalent to the attacker selecting some transitions non-adaptatively (i.e. independently of the answer of the defender). Here is the formal statement of correctness:

  \begin{proposition}
    \label{prop:eqobs labelled disim}
    For all extended processes \(A\) and \(B\), \(A\not\LabBis B\) \textit{iff} there exists a labelled attack \(\disim\) such that \(A\disim B\).
  \end{proposition}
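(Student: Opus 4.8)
The plan is to establish the two implications separately, taking the relation \(\not\LabBis\) itself as the canonical labelled attack for the forward direction, and arguing by well-founded induction for the converse after showing that \(\LabBis\) is preserved along compound transitions.

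For the implication \(A \not\LabBis B \Rightarrow\) existence of a labelled attack, I would show that \(\disim\), chosen to be \(\not\LabBis\), is a labelled attack; it then trivially contains \((A,B)\). Since \(\LabBis\) is the largest symmetric simulation (\Cref{def:(bi)simulation}), it suffices to prove that any pair \((A,B)\) failing all three clauses of \Cref{def:labelled attack} for \(\disim = \not\LabBis\) must in fact be bisimilar. So assume \((A,B)\) satisfies none of the clauses: then \(A \StatEq B\) (negation of clause~1), and both compound-matching properties (negations of clauses~2 and~3) hold with bisimilar matched successors. Instantiating these with the empty continuation \(\tr = \epsilon\) recovers exactly the single-step transfer conditions of a symmetric simulation (here the \(\tau\)-removal built into the \(\Cstep{}\) notation makes \(\alpha.\epsilon\) coincide with \(\bar{\alpha}\)), so that \(\LabBis \cup \{(A,B),(B,A)\}\) is a symmetric simulation. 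By maximality of \(\LabBis\) this gives \(A \LabBis B\), contradicting \((A,B) \in \not\LabBis\). Hence every such pair satisfies one of the clauses, i.e.\ \(\not\LabBis\) is a labelled attack.

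For the converse, the key auxiliary step is a \emph{transfer lemma}: \(\LabBis\) is preserved along compound moves, i.e.\ if \(A \LabBis B\) and \(A \cstep{\alpha} \Cstep{\tr} A'\), then there is a response \(B \Cstep{\alpha.\tr} B'\) with \(A' \LabBis B'\). I would prove it by induction on the length of \(\tr\), repeatedly applying the single-step simulation clause of \(\LabBis\) (together with its symmetry); intuitively this is the statement that an attacker committing to several transitions in a row can always be answered step by step, which is precisely the non-adaptivity remark following \Cref{def:labelled attack}. Given this lemma, I would show that a labelled attack can never relate two bisimilar processes. Suppose towards a contradiction that \(A \disim B\) and \(A \LabBis B\). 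Since the processes are bounded there are no infinite sequences of transitions, so I proceed by well-founded induction on the total size of \((A,B)\) (say the multiset of remaining instructions). Clause~1 is impossible, as \(A \LabBis B\) entails \(A \StatEq B\). In clause~2 the attacker exhibits a compound move \(A \cstep{\alpha} \Cstep{\tr} A'\) all of whose responses \(B \Cstep{\alpha.\tr} B'\) satisfy \(A' \disim B'\); the transfer lemma produces one such response with additionally \(A' \LabBis B'\), yielding a pair \((A',B')\) that lies in both \(\disim\) and \(\LabBis\) and is strictly smaller than \((A,B)\), contradicting the induction hypothesis. Clause~3 is symmetric, which closes the induction and the proof.

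The main obstacle is the mismatch between the compound moves permitted in \Cref{def:labelled attack} and the step-by-step nature of bisimilarity: the transfer lemma is what bridges them, and some care is needed in the bookkeeping of \(\tau\)-actions inside \(\Cstep{}\) (in particular the reading of \(\alpha.\tr\) when \(\alpha = \tau\)). A secondary subtlety is guaranteeing that the induction measure strictly decreases, which relies on the fact that any nonempty transition sequence consumes at least one instruction of a bounded process — a consequence of the termination of the calculus.
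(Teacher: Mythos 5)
Your proof is correct, and the forward direction coincides with the paper's (the paper simply observes that \(\not\LabBis\) is itself a labelled attack, with \(\tr = \epsilon\) everywhere; your maximality argument spells out why). For the converse, however, you take a genuinely different route. The paper works on the attack side: it saturates \(\disim\) into a relation \(\disim'\) via an inference system (an axiom rule plus two decomposition rules \textsc{Dec-L}/\textsc{Dec-R}) that splits each compound move \(\cstep{\alpha}\Cstep{\tr}\) into an atomic head step whose successors remain in \(\disim'\), and then concludes \(\disim' \subseteq {\not\LabBis}\) from the resulting single-step attack property. You instead work on the bisimilarity side: your transfer lemma lifts \(\LabBis\) from atomic to compound moves, and you then rule out \(\disim \cap \LabBis \neq \emptyset\) by well-founded induction on process size. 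The two proofs bridge the same mismatch (compound attacker moves versus the step-by-step bisimulation game) from opposite ends. Your version is more elementary and makes explicit the reliance on boundedness — which is genuinely needed: with infinite behaviours a never-terminating ``attack'' of this shape could relate bisimilar processes, and the paper's final inclusion \(\disim' \subseteq {\not\LabBis}\), declared ``clear'', tacitly rests on the same termination property. The paper's saturation, in exchange, produces an attack relation in atomic form that directly mirrors the negation of the bisimulation game, without introducing a size measure. Two cosmetic points in yours: the transfer lemma should be proved by induction on the number of atomic transitions in \(A \cstep{\alpha}\Cstep{\tr} A'\) rather than on \(|\tr|\) (silent steps do not shorten \(\tr\)); and the strict decrease of your measure uses that every rule of Figure~\ref{fig:semantics}, including \eqref{rule:par}, consumes an operator — worth one line, but both are routine.
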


  \begin{proof}
    The forward implication is immediate since \(\not\LabBis\) is a labelled attack (we can even choose \(\tr=\epsilon\) everytime). Let then \(\disim\) be a labelled attack such that \(A\disim B\) and let us prove that \(\disim\subseteq~\not\LabBis\). More precisely, we prove that \(\disim \subseteq ~ \disim' \subseteq ~ \not\LabBis\) for some relation \(\disim'\). We will construct \(\disim'\) in such a way that for all \(A,B\) extended processes, \(A\disim'B\) entails:
    \begin{itemize}
      \item[\((i)\)] either: \(A\not\StatEq B\);
      \item[\((ii)\)] or: \(\exists A\cstep\alpha A',~\forall B\Cstep\alpha B',~A'\disim B'\);
      \item[\((iii)\)] or: \(\exists B\cstep\alpha B',~\forall A\Cstep\alpha A',~A'\disim B'\)
    \end{itemize}
    The inclusion \(\disim'\subseteq~\not\LabBis\) is indeed clear if this property is verified, hence the expected conclusion provided such a relation \(\disim'\). We concretely define it as the smallest relation on extended processes saturated by the following inference rules:
    \begin{mathpar}
      \inferrule
      {A\disim B}
      {A\disim' B}[(Axiom)]\label{rule:axiom}

      \inferrule
      {A\disim'B\\A\cstep\alpha A'\cstep{\alpha'}\Cstep\tr A''\\\forall B\Cstep{\alpha.\alpha'.\tr}B'',A''\disim'B''\\B\Cstep\alpha B'}
      {A'\disim'B'}[(Dec-L)]\label{rule:split-l}

      \inferrule
      {A\disim'B\\B\cstep\alpha B'\cstep{\alpha'}\Cstep\tr B''\\\forall A\Cstep{\alpha.\alpha'.\tr}A'',A''\disim'B''\\A\Cstep\alpha A'}
      {A'\disim'B'}[(Dec-R)]\label{rule:split-r}
    \end{mathpar}
    In particular, note that \(\disim\subseteq\disim'\) thanks to the rule \nameref{rule:axiom}.
    As for the two other rules \nameref{rule:split-l} and \nameref{rule:split-r}, they intuitively decompose sequences \(\cstep\alpha\Cstep\tr\) into atomic transitions in order to switch from points {\it 2.} or {\it 3.} of Definition \ref{def:labelled attack} to points \((ii)\) or \((iii)\).

    Let then \(A\) and \(B\) be two extended processes such that \(A\disim'B\). We consider a proof-tree of \(A\disim'B\) in the inference system above and perform a case analysis on the rule at its root:

    \begin{itemize}

      \item \case[\nameref{rule:axiom}] 1 \(A\disim B\).

      As \(\disim\) is a labelled attack, we apply the case analysis of Definition \ref{def:labelled attack}:

      \begin{itemize}
        \item \case {1.a} \(A\not\StatEq B\).

        Then \((i)\) is satisfied.

        \item \case {1.b} there exists \(A\cstep{\alpha}A'\Cstep\tr A''\) such that \(A''\disim B''\) for all \(B\Cstep{\alpha.\tr}B''\).

        In particular, keeping in mind that \(\disim\subseteq\disim'\) due to the rule \nameref{rule:axiom}, we have \(A''\disim'B''\) for all \(B\Cstep{\alpha.\tr}B''\). Let us then show that the transition \(A\cstep{\alpha}A'\) satisfies \((ii)\).
        We therefore have to show that \(A'\disim'B'\) for all \(B\Cstep\alpha B'\). If \(A'=A''\) then the result follows from the hypothesis.
        Otherwise let us write \(A\cstep{\alpha}A'\cstep{\alpha'}\Cstep{\tr'}A''\) where \(\alpha'.{\tr'}=\tr\) and the rule \nameref{rule:split-l} justifies that \(A'\disim'B'\) for all \(B\Cstep{\alpha}B'\).

        \item \case {1.c} there exists \(B\cstep{\alpha}B'\Cstep\tr B''\) such that \(B''\disim A''\) for all \(A\Cstep{\alpha.\tr}A''\).

        Analogous, targeting \((iii)\) instead of \((ii)\) and replacing \nameref{rule:split-l} by \nameref{rule:split-r}.
      \end{itemize}

      \item \case[\nameref{rule:split-l}] 2 there are \(A_0\), \(A_1\), \(A_2\), \(B_0\), \(B_2\), \(\alpha\), \(\alpha'\), \(\tr\), such that \(A_0\disim'B_0\), \(B_0\Cstep\alpha B\),
      \(A_0\cstep\alpha A\cstep{\alpha'}A_1\Cstep\tr A_2\), and \(\forall~B_0\Cstep{\alpha.\alpha'.\tr}B_2,A_2\disim'B_2\).

      Let us show that the transition \(A\cstep{\alpha'}A_1\) satisfies \((ii)\). We therefore have to show that \(A_1\disim'B_1\) for all \(B\Cstep{\alpha'} B_1\). If \(A_1=A_2\) then the result follows from the hypothesis.
      Otherwise we write \(A\cstep{\alpha'}A_1\cstep{\alpha''}\Cstep{\tr'}A_2\) where \(\alpha''.\tr'=\tr\) and the rule \nameref{rule:split-l} justifies that \(A_1\disim'B_1\) for all \(B\Cstep\alpha B_1\).

      \item \case[\nameref{rule:split-r}] 3 Analogous to case 2.\qedhere
    \end{itemize}
  \end{proof}

\subsection{Correctness of the encodings (Section \ref{sec:tools encoding})}
\label{app:tools encoding}

  Now we prove that the translation \(\sem\cdot\) of the extended semantics is correct:

  \begin{lemma}
    \label{lem:encodings correct}
    Let \(\TraceEq^+\) and \(\LabBis^+\) be the notions of trace equivalence and labelled bisimilarity over the extended calculus (the flag \(^+\) being omitted outside of this lemma). For all extended processes \(A = \process \P \Phi\), the translation \(\sem A = \process {\sem \P} \Phi = \process{ \multi {\sem P \mid P \in \P} }\Phi\) can be computed in polynomial time, \(A\TraceEq^+\sem A\) and \(A\LabBis^+\sem A\).
  \end{lemma}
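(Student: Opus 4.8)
The plan is to settle the three claims in order: dispatch polynomial-time computability, reduce the two behavioural equivalences to a single bisimilarity, and then prove that bisimilarity with the simplification machinery of Appendix~\ref{app:strategies}. Computability is the easy part: since $\sem\cdot$ is defined by structural recursion and replaces each operator by a gadget whose size is polynomial in that operator's parameters (a fixed parallel gadget for $+$ and $\guessBinary{}$, and one constant-size process per gate for circuit evaluation), a routine induction bounds $\dagsize{\sem P}$ polynomially in $\dagsize P$ and shows that the translation is computable in polynomial time. Moreover, since $\LabBis \subset \TraceEq$ (and likewise $\LabBis^+ \subseteq \TraceEq^+$ over the extended calculus), it suffices to prove $A \LabBis^+ \sem A$; trace equivalence then follows immediately.

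To prove $A \LabBis^+ \sem A$ I would invoke Proposition~\ref{prop:bisim up to} with $\bisim$ the symmetric closure of $\{(\process{\P}{\Phi}, \process{\Q}{\Phi}) \mid \process{\sem\P}{\Phi} \silent \process{\Q}{\Phi}\}$, so that $\Q$ ranges over all configurations reachable from $\sem\P$ by \emph{simplification} steps. The key observation is that $\silentstep$ (Definition~\ref{def:simplification}) only fires an internal communication whose channel occurs in no other process: each circuit wire is a fresh private name linking exactly one output port to one input port, hence a wire communication is a genuine \ref{rule:s-comm2} step; but the \emph{committing} communication of a choice gadget is not, since in $\sem{P+Q} = \OutP s s \mid \InP s x.\sem P \mid \InP s y.\sem Q$ the channel $s$ occurs three times, so resolving the choice is a plain \ref{rule:comm} transition. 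Thus $\bisim$ automatically captures partially evaluated circuits while keeping every non-deterministic choice uncommitted, which is exactly what makes the relation sound. As simplification never touches the frame, every related pair shares $\Phi$ and therefore $\bisim \subseteq \StatEq$.

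The bisimulation-up-to conditions are then checked by a case analysis on the moving construct. Standard In/Out/Par/Then/Else reductions commute with $\sem\cdot$ and are matched directly. A circuit gate firing is a $\silentstep$ step, so it stays inside $\bisim$ and the opponent answers by doing nothing ($\bar\tau = \epsilon$). A choice commit (rule \ref{rule:comm} on channel $s$, or its analogue \ref{rule:choose-0}/\ref{rule:choose-1}) is matched \emph{in lockstep} by the high-level rule \ref{rule:choice}: neither side exposes the guarded continuations before the commit, which is precisely what distinguishes bisimilarity from mere trace equivalence. The blocked input left on the now-spent private channel is silent and is erased by the $\silent$ closures on both sides allowed in the up-to condition $A'\silent\bisim\silentrev B'$ (via rule \ref{rule:s-sil}). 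Finally, completion of a circuit---the moment its outputs reach the guarded continuation---realises the single $\tau$-step of rule \ref{rule:valuate}.

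The main obstacle is the circuit gadget, which, unlike $+$ and $\guessBinary{}$, does not commit in one step but through a whole cascade of gate firings. I must show that, whatever order the gates fire in, the internal rewriting terminates at a unique state binding $\vec x$ to $\Gamma(\vec b\norm)$, and that throughout this cascade no action of the guarded $\sem P$ is exposed. Termination and value-correctness follow by induction on the topological order of the gates (the circuit being acyclic and each wire used exactly once), and the needed confluence is exactly what Lemma~\ref{lem:silent strong confluence} together with Corollaries~\ref{cor:silent confluence} and~\ref{cor:npte convergence} provide. Establishing this deterministic, confluent evaluation is where the real work lies; once it is in place, partially evaluated circuits are observationally inert, the valuation gadget faithfully implements rule \ref{rule:valuate}, and the bisimulation-up-to argument closes, yielding $A \LabBis^+ \sem A$ and hence $A \TraceEq^+ \sem A$.
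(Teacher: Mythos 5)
Your proposal is correct and takes essentially the same route as the paper's proof: both reduce the claim to labelled bisimilarity via \(\LabBis\subseteq\TraceEq\), exhibit a bisimulation up to \(\silentstep\) (Proposition~\ref{prop:bisim up to}) closed under the same case analysis on the moving construct, rely on Lemma~\ref{lem:silent strong confluence} and its corollaries for the commutation arguments, and treat the circuit cascade by induction on the gates---the paper merely packages the candidate relation differently, pairing \(\silentpistep\)-normal forms as \(C \bisim \procnorm{\sem C}\) and pushing all \nameref{rule:s-comm2}/\nameref{rule:s-sil} steps into the up-to closure, where you saturate the translated side under all of \(\silent\). The one small slip is your claim that the cascade always terminates in a unique state binding \(\vec x\) to \(\Gamma(\vec b\norm)\): when some \(b_i\) is not a message or does not normalise to a boolean, rule \ref{rule:valuate} cannot fire and the gadget instead dies out into silent residue---a side case the paper treats explicitly (case 4.c) and which your \(\silent\)-saturated relation absorbs automatically, so the argument still closes.
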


  But first of all, a (trivial) observation about the free variables of a translated process:

  \begin{lemma}
    For all plain processes \(P\) and all first-order substitution \(\sigma\), \(\sem{P\sigma}=\sem P\sigma\).
  \end{lemma}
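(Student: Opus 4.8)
The plan is to prove this by structural induction on the plain process $P$, carrying the induction through every constructor of the (extended) grammar on which $\sem{\cdot}$ is defined. Throughout I would adopt the usual variable convention: the bound variables of $P$ and the fresh names and variables introduced by the clauses of $\sem{\cdot}$ (the private channels $s, d, c_i \in \Nall$ and the fresh first-order variables $x, y, \ldots \in \X[1]$) are all chosen outside $\dom(\sigma) \cup \vars(\im(\sigma))$. This is harmless up to $\alpha$-renaming and is exactly what makes substitution commute past these binders without capture. The base case $P = 0$ is immediate since $\sem{0} = 0$ is closed.

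For the homomorphic cases — parallel composition, conditionals, inputs and outputs — the argument is uniform. Since $\sem{\cdot}$ acts as the identity on first-order terms and commutes with each such constructor, and since $\sigma$ distributes over the constructors (leaving bound variables untouched by the freshness convention), each case reduces to the induction hypothesis on the immediate subprocesses. For instance, for an output I would compute $\sem{(\OutP{u}{v}.P)\sigma} = \sem{\OutP{u\sigma}{v\sigma}.P\sigma} = \OutP{u\sigma}{v\sigma}.\sem{P\sigma} = \OutP{u\sigma}{v\sigma}.(\sem{P}\sigma) = (\OutP{u}{v}.\sem{P})\sigma = \sem{\OutP{u}{v}.P}\sigma$, the third equality being the induction hypothesis; inputs are identical, using freshness of the bound variable so that $\sigma$ does not capture it.

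The genuinely translation-specific cases are the non-deterministic choice $P+Q$, the construct $\guessBinary{y}.P$, and the circuit evaluation $\evalFormula{\vec x}{\Gamma(\vec b)}.P$. Here I would expand the defining clause of $\sem{\cdot}$ on both $\sem{(\cdot)\sigma}$ and $\sem{\cdot}\sigma$ and check they coincide symbol by symbol. The point is that the private channels introduced by these encodings lie in $\Nall$ and are therefore invariant under the first-order substitution $\sigma$ (which only rewrites variables); the boolean constants $\0, \1 \in \sig_0$ and the truth-table values $g(b,b')$ appearing in the gate encodings are likewise invariant; and the fresh binders of the encoding are, by the convention above, not in the scope of $\sigma$, so $\sigma$ slides past the inputs $\InP{s}{x}, \InP{c_j}{x_j}, \ldots$ onto the translated subprocesses. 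The only substituted data are the terms $\vec b$ fed to a circuit, for which $(\Gamma(\vec b))\sigma = \Gamma(\vec b\,\sigma)$ matches the outputs $\OutP{c_{i_k}}{b_k\sigma}$ produced on both sides. Applying the induction hypothesis $\sem{P\sigma} = \sem{P}\sigma$ (and $\sem{Q\sigma} = \sem{Q}\sigma$) then closes each case.

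I expect no serious obstacle: the entire content is the bookkeeping that the fresh symbols introduced by $\sem{\cdot}$ are disjoint from $\sigma$, so the only mild care needed is to fix, once and for all, a choice of fresh names and variables that is simultaneously valid for the translations of $P$ and of $P\sigma$ (again legitimate up to $\alpha$-equivalence). This is what justifies the label \emph{trivial} in the text.
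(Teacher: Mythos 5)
Your proof is correct, and there is nothing in the paper to compare it against: the lemma is stated as a ``(trivial) observation'' and no proof is given in the source. Your structural induction is exactly the routine argument the paper implicitly relies on, and you correctly isolate the only point requiring care --- that the private channels $s, d, c_i \in \Nall$, the booleans $\0,\1 \in \sig_0$, and the truth-table values $g(b,b')$ are invariant under a first-order substitution, and that the fresh names and binders of the encoding (including the reused binder $y$ in the translation of $\guessBinary{y}.P$) can be fixed simultaneously for $P$ and $P\sigma$ up to $\alpha$-equivalence, outside $\dom(\sigma) \cup \vars(\im(\sigma))$.
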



  We will use this lemma implicitly in the remaining of this section. Besides, as we have the inclusion of relations \(\LabBis~\subseteq~\TraceEq\), we only need to prove the observational-equivalence statement of Lemma \ref{lem:encodings correct}. We recall that we use notations \(\procnorm A\) and \(\procnorm \P\) to refer to normal forms w.r.t. \(\silentpistep\) (see Corollary \ref{cor:npte convergence}).

  \begin{proposition}
    We consider \(\bisim\) the symmetric closure of:
    \[\{ (C, \procnorm {\sem C}) ~|~ \mbox{\(C\) extended process such that \(C = \procnorm C\)} \}\]
    \(\bisim\) is a bisimulation up to simplification.
  \end{proposition}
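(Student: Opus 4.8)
The plan is to verify directly that the symmetric relation $\bisim$ satisfies the two clauses of a bisimulation up to simplification, so that Proposition~\ref{prop:bisim up to} then yields $C \LabBis \procnorm{\sem C}$ and, via the reflexivity of $\silent$, the statement of Proposition~\ref{prop:encodings correct}. The first clause, $\bisim\ \subseteq\ \StatEq$, is immediate: neither the homomorphic translation $\sem{\cdot}$ nor any $\silentpistep$ step alters the frame, so $\Phi(C) = \Phi(\procnorm{\sem C})$ and hence $C \StatEq \procnorm{\sem C}$, symmetry being for free. All the work lies in the transfer clause, which by the symmetric definition of $\bisim$ splits into two dual cases: matching a transition of $C$ by $\procnorm{\sem C}$ (soundness of the encoding) and matching a transition of $\procnorm{\sem C}$ by $C$ (reflection, i.e.\ that the encoding introduces no extra behaviour).

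Before the case analysis I would record two helper facts. First, $\sem{\cdot}$ commutes with $\silentpistep$: every \eqref{rule:par}, \eqref{rule:then}, \eqref{rule:else} and \textsc{Null} step of $C$ maps, through the homomorphic clauses of $\sem{\cdot}$, to the corresponding step of $\sem C$, since the terms occurring in conditionals and channels are left untouched by the translation. Applying this to the normalising reduction $C' \silentpi \procnorm{C'}$ gives $\sem{C'} \silentpi \sem{\procnorm{C'}}$, whence, by convergence of $\silentpistep$ (Corollary~\ref{cor:npte convergence}), $\procnorm{\sem{C'}}=\procnorm{\sem{\procnorm{C'}}}$; this lets me always use the normal form $\procnorm{C'}$ as representative on the source side. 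Second, each clause of $\sem{\cdot}$ encoding a new operator ($P+Q$, $\guessBinary{\cdot}$, circuit evaluation) resolves through internal communications on fresh private channels $s, d$ and circuit edges, which are exactly the reductions absorbed by rules \nameref{rule:s-sil} and \nameref{rule:s-comm2} of Definition~\ref{def:simplification}; freshness guarantees these channels never enter the frame nor become visible to the attacker.

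For the soundness direction, let $C \cstep{\alpha} C'$. If the rule used is a base rule \eqref{rule:in}, \eqref{rule:out}, \eqref{rule:comm}, \eqref{rule:then}, \eqref{rule:else} or \eqref{rule:par} acting on a base operator, then homomorphy of $\sem{\cdot}$ gives $\sem C \Cstep{\alpha} \sem{C'}$ directly. If instead the rule is \eqref{rule:choice}, \eqref{rule:choose-0}, \eqref{rule:choose-1} or \eqref{rule:valuate}, all silent, then $\sem C$ simulates it by the matching internal communications on the fresh encoding channels, reaching a process $B'$ with $B' \silent \procnorm{\sem{C'}}$ once the committed branch's residual input (resp.\ the evaluated circuit's leftover gates) is discarded as a silent submultiset via \nameref{rule:s-sil}. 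Since $\sem C$ and $\procnorm{\sem C}$ differ only by $\silentpi$-reductions, which commute with transitions by Corollary~\ref{cor:silent confluence}, the same match is available from $B=\procnorm{\sem C}$. In each case I take the target pair $(\procnorm{C'}, \procnorm{\sem{C'}})\in\bisim$: as $C' \silent \procnorm{C'}$ and $B' \silent \procnorm{\sem{C'}} = \procnorm{\sem{\procnorm{C'}}}$, the diagram $A' \silent \bisim \silentrev B'$ closes.

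The reflection direction is the main obstacle. Writing $A = \procnorm{\sem C}$, I must show every $A \cstep{\alpha} A'$ is matched by $C$. After $\silentpi$-normalisation the only remaining transitions of $\sem C$ are genuine public inputs, outputs and communications, which descend to the same action of $C$ by the structure of the translation, and internal communications on the fresh encoding channels, which I must argue correspond to a single \eqref{rule:choice}, \eqref{rule:choose-0}/\eqref{rule:choose-1} or \eqref{rule:valuate} step of $C$ and can never leak an observable action. The delicate point is the circuit encoding: a cascade of gate communications must be shown to compute $\Gamma(\vec b\norm)$ deterministically, so that any partial interleaving the attacker initiates can be completed---using the strong confluence of $\silentstep$ against $\cstep{\alpha}$ (Lemma~\ref{lem:silent strong confluence}) and its iterate (Corollary~\ref{cor:silent confluence})---to the same value-passing outcome as rule \eqref{rule:valuate}, with every intermediate state related to a process of the form $\procnorm{\sem{C''}}$ up to $\silent$. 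Freshness of the edge channels, together with the simplifying assumption of Remark~\ref{rem:at least one gate}, rules out stuck or spurious behaviours; the choice encoding is handled analogously, the unchosen branch surviving only as a silent residual absorbed by \nameref{rule:s-sil}. Assembling the per-rule diagrams yields the transfer condition for both orientations, which completes the verification.
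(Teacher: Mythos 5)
Your overall strategy coincides with the paper's own proof: a direct verification of the two clauses of bisimulation up to simplification, with static equivalence trivial because the translation and $\silentpistep$ leave the frame untouched, base operators handled by homomorphy of $\sem{\cdot}$, the extended rules \eqref{rule:choice}, \eqref{rule:choose-0}/\eqref{rule:choose-1} and \eqref{rule:valuate} matched through internal communications on the fresh encoding channels whose residuals are erased by \nameref{rule:s-sil}, and Lemma~\ref{lem:silent strong confluence} and Corollary~\ref{cor:silent confluence} used to transport matches across $\silentpi$-normalisation. Your explicit helper fact $\procnorm{\sem{C'}}=\procnorm{\sem{\procnorm{C'}}}$ is indeed used (implicitly) throughout the paper's case analysis, and making it a standing lemma is a sound organisational choice; the paper instead runs a single case analysis on the transition of $A$ with symmetric subcases (2.a/2.b, etc.), which is the same content as your soundness/reflection split.

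There is, however, one concrete gap in your reflection direction, exactly at the circuit case. You claim that any partial cascade of gate communications ``can be completed to the same value-passing outcome as rule \eqref{rule:valuate}'', and that freshness together with Remark~\ref{rem:at least one gate} ``rules out stuck or spurious behaviours''. This is false when the argument sequence $\vec b$ contains a term $b$ with $\neg\msg(b)$ or $b\norm\notin\B$: in that situation rule \eqref{rule:valuate} is not enabled in $C$ at all, the instruction $\evalFormula{\vec x}{\Gamma(\vec b)}.P$ is itself a stuck \emph{silent} process, yet the translated circuit in $\procnorm{\sem C}$ still performs genuine $\tau$-transitions (the initial outputs $\OutP{c_{i_k}}{b_k}$ are always consumable by the first gates) before blocking. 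These transitions must be matched by $C$ with \emph{zero} steps, and the real burden is to show that the blocked residual is silent in the sense of Definition~\ref{def:simplification}, so that it can be absorbed by \nameref{rule:s-sil}. That is precisely what Remark~\ref{rem:at least one gate} buys in the paper's proof: since every input traverses at least one gate, the nested conditionals $\IfP\ x=b\ \ThenP\ \ldots$ of the gate encoding block on a non-boolean value, no output edge is ever fed, and hence the whole residual (remaining gates plus the terminal inputs on the $c_{o_j}$) produces no observable action. In other words the remark does not rule the stuck case \emph{out}; it makes the stuck case \emph{harmless}, and this case must be treated explicitly, since your determinate-completion argument does not apply to it. Your general mechanism of silent-residual absorption does accommodate it once stated, so this is a missing case rather than a wrong method; with that case added, your per-rule diagrams assemble into essentially the paper's argument.
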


  \begin{proof}
    \(\bisim\) is symmetric by definition and is trivially included in \(\StatEq\). Let then \((A,B) \in \bisim\) and \(A \cstep \alpha A'\) and let us exhibit \(B'\) such that \(B \Cstep \alpha B'\) and \(A' \bisim B'\). We perform a case analysis on the rule triggerring the transition \(A \cstep \alpha A'\):

    \begin{itemize}
      \item \case[(rules \textsc{Null}, \textsc{Par}, \textsc{Then}, \textsc{Else})] 1

      This case cannot arise as \(A\) is in normal form w.r.t. \(\silentpistep\) by definition of \(\bisim\).

      \item \case[(rule \textsc{In})] 2 \(\alpha = \InP \xi \zeta\) for some \(\xi, \zeta \in \termset(\sig, \sig_0 \cup \dom \Phi)\) and:
      \begin{align*}
        A & = \process {\P \cup \multi {\InP u x.P}} \Phi  & & \mbox{with \(\msg u\), \(\msg {\xi \Phi}\) and \(\xi \Phi \norm = u \norm\)}\\
        A' & = \process {\P \cup \multi {P\{ x \mapsto \zeta \Phi \norm \}}} \Phi & & \mbox{with \(\msg {\zeta \Phi}\)}
      \end{align*}
      Then, by a case analysis on the hypothesis \(A \bisim B\):
      \begin{itemize}
        \item \case {2.a} \(B = \procnorm {\sem A}\)

        Then we can write:
        \begin{align*}
          B & = \process {\procnorm {\sem \P} \cup \multi {\InP u x.\sem P}} \Phi
        \end{align*}
        and we conclude by remarking that \(B \cstep {\InP \xi \zeta} B' = \process {\procnorm {\sem \P} \cup \multi {\sem P\{ x \mapsto \zeta \Phi \norm \}}} \Phi\) and:
        \[A' \silent \procnorm {A'} ~\bisim~ \procnorm {\sem {\procnorm {A'}}}
        = \procnorm {\sem {A'}} = \process {\procnorm {\sem \P} \cup \procnorm {\multi {\sem P\{ x \mapsto \zeta \Phi \norm \}}}} \Phi \silentrev B'\]

        \item \case {2.b} \(A = \procnorm {\sem B}\) (and \(B = \procnorm B\))

        Note that \(u = \xi \Phi \norm\) cannot be one of the names introduced by the translation \(\sem \cdot\): these names can indeed not appear in \(\Phi\) since they are chosen private and fresh and since the semantics cannot introduce new private names. In particular we can write:
        \begin{align*}
          B & = \process {\Q \cup \multi {\InP u x.Q}} \Phi  & & \mbox{with \(\sem Q = P\) and \(\procnorm {\sem \Q} = \P\)}
        \end{align*}
        and we conclude by writing \(B \cstep {\InP \xi \zeta} B' = \process {\Q \cup \multi {Q\{ x \mapsto \zeta \Phi \norm \}}} \Phi\) and:
        \[ A' \silent \process {\procnorm {\sem \Q} \cup \procnorm {\multi {\sem Q \{ x \mapsto \zeta \Phi \norm \}}}} \Phi = \procnorm {\sem {B'}} = \procnorm{\sem {\procnorm {B'}}} ~\bisim~ \procnorm {B'} \silentrev B'\]
      \end{itemize}

      \item \case[(rule \textsc{Out})] 3 \(\alpha = \OutP \xi {\ax_n}\) for some \(\xi, \in \termset(\sig, \sig_0 \cup \dom \Phi)\), \(\ax_n \in \AX\) and:
      \begin{align*}
        A & = \process {\P \cup \multi {\OutP u t.P}} \Phi  & & \mbox{with \(\msg u\), \(\msg t\), \(\msg {\xi \Phi}\) and \(\xi \Phi \norm = u \norm\)}\\
        A' & = \process {\P \cup \multi P} {\Phi'} & & \mbox{where \(\Phi' = \Phi \cup \{ \ax_n \mapsto t \norm\}\) and \(n = |\Phi| + 1\)}
      \end{align*}
      Then, by a case analysis on the hypothesis \(A \bisim B\):
      \begin{itemize}
        \item \case {3.a} \(B = \procnorm {\sem A}\)

        Then we can write:
        \begin{align*}
          B & = \process {\procnorm {\sem \P} \cup \multi {\OutP u t.\sem P}} \Phi
        \end{align*}
        and we conclude by remarking that \(B \cstep {\OutP \xi {\ax_n}} B' = \process {\procnorm {\sem \P} \cup \multi {\sem P}} \Phi\) and:
        \[A' \silent \procnorm {A'} ~\bisim~ \procnorm {\sem {\procnorm {A'}}}  = \procnorm {\sem {A'}} = \process {\procnorm {\sem \P} \cup \procnorm {\multi {\sem P}}} \Phi \silentrev B'\]

        \item \case {3.b} \(A = \procnorm {\sem B}\) (and \(B = \procnorm B\))

        For the same reason as in case 2.b, we can write:
        \begin{align*}
          B & = \process {\Q \cup \multi {\OutP u t.Q}} \Phi  & & \mbox{with \(\sem Q = P\) and \(\procnorm {\sem \Q} = \P\)}
        \end{align*}
        and we conclude by writing \(B \cstep {\OutP \xi {\ax_n}} B' = \process {\Q \cup \multi Q} \Phi\) and:
        \[ A' \silent \process {\procnorm {\sem \Q} \cup \procnorm {\multi {\sem Q}}} \Phi = \procnorm {\sem {B'}} = \procnorm{\sem {\procnorm {B'}}} ~\bisim~ \procnorm {B'} \silentrev B'\]
      \end{itemize}

      \item \case[(rule (\textsc{Comm}))] 4 \(\alpha = \silent\) and:
      \begin{align*}
        A & = \process {\P \cup \multi {\OutP u t.P, \InP v x.Q}} \Phi & & \mbox{with \(\msg u\), \(\msg v\), \(\msg t\) and \(u \norm = v \norm\)}\\
        A' & = \process {\P \cup \multi {P,Q\{ x \mapsto t \}}} \Phi
      \end{align*}
      Then, by a case analysis on the hypothesis \(A \bisim B\):
      \begin{itemize}
        \item \case {4.a} \(B = \procnorm {\sem A}\)

        Then we can write:
        \begin{align*}
          B & = \process {\procnorm {\sem \P} \cup \multi {\OutP u t.\sem P, \InP v x.\sem Q}} \Phi
        \end{align*}
        and we conclude by remarking that \(B \cstep \tau B' = \process {\procnorm {\sem \P} \cup \multi {\sem P, \sem Q\{ x \mapsto t\}}} \Phi\) and:
        \[A' \silent \procnorm {A'} ~\bisim~ \procnorm {\sem {\procnorm {A'}}}  = \procnorm {\sem {A'}} =
        \process {\procnorm {\sem \P} \cup \multi {\procnorm {\sem P}, \procnorm {\sem Q\{ x \mapsto t\}}}} \Phi \silentrev B'\]

        \item \case {4.b} \(A = \procnorm {\sem B}\) (and \(B = \procnorm B\)) and a term \(w\) such that \(w \norm = u \norm\) appears in \(B\) (syntactically)

        In particular \(\norm u\) is not a fresh name introduced by the translation \(\sem \cdot\) and we can therefore write:
        \begin{align*}
          B & = \process {\P' \cup \multi {\OutP u t.P', \InP v x.Q'}} \Phi & & \mbox{with \(\procnorm {\sem {\P'}} = \P\), \(\sem {P'} = P\) and \(\sem {Q'} = Q\)}
        \end{align*}
        and we conclude by writing \(B \cstep \tau B' = \process {\P' \cup \multi {P', Q'\{ x \mapsto t\}}} \Phi\) and:
        \[A' \silent \process {\procnorm {\sem {\P'}} \cup \multi {\procnorm {\sem {P'}}, \procnorm {\sem {Q'}\{ x \mapsto t\}}}} \Phi =
        \procnorm {\sem {B'}} = \procnorm{\sem {\procnorm {B'}}} ~\bisim~ \procnorm {B'} \silentrev B'\]

        \item \case {4.c} \(A = \procnorm {\sem B}\) (and \(B = \procnorm B\)) and there exists no term \(w\) appearing in \(B\) such that \(w \norm = u \norm\) (syntactically)

        Then \(u\) is a fresh name introduced by \(\sem \cdot\). We consider the two disjoint cases where it was introduced for the translation of a sum or a circuit:

        \begin{itemize}
          \item If \(u \in \Nall\) is a fresh name generated in order to translate a sum of \(B\), or rephrased more formally:
          \begin{align*}
            B & = \process {\Q \cup \multi {P' + Q'}} \Phi\\
            A & = \process {\procnorm {\sem \Q} \cup \multi {\OutP u u, \InP u x.\sem {P'}, \InP u x.\sem {Q'}}} \Phi & & \mbox{where \(x \in \X[1]\) but \(x \notin \vars {P',Q'}\)}\\
            A' & = \process {\procnorm {\sem \Q} \cup \multi {\sem R, \InP u x.\sem S}} \Phi & & \mbox{where \(R,S \in \multi {P', Q'}\), \(R \neq S\)}
          \end{align*}
          We let \(B' = \process {\Q \cup \multi R} \Phi\) and remark that \(B \cstep \tau B'\) by the rule \ref{rule:choice} (if \(R = P'\) and \(S = Q'\), or \(R = Q'\) and \(S = P'\)).
          Besides, let us observe that the name \(u\) does not appear in \(\procnorm {\sem \Q}\), \(\sem R\) nor \(\Phi\) by construction of \(\sem \cdot\) and that \(\InP u x.\sem S\) is therefore easily seen to be silent in \(A'' = \process {\procnorm {\sem \Q} \cup \multi {\sem R}} \Phi\).
          In particular it entails that \(A' \silentstep A''\) by the rule \nameref{rule:s-sil} which gives the conclusion:
          \[A' \silentstep A'' \silent \process {\procnorm {\sem \Q} \cup \procnorm {\multi {\sem R}}} \Phi = \procnorm {\sem {B'}} = \procnorm {\sem {\procnorm {B'}}} ~\bisim~ \procnorm {B'} \silentrev B'\]

          \item \(u \in \Nall\) is a fresh name generated in order to translate a \(\guessBinary x\): this case can be handle analogously to the previous one.

          \item If \(u \in \Nall\) is a fresh name generated in order to translate a circuit of \(B\), or formally:
          \begin{align*}
            B & = \process {\Q \cup \multi {\evalFormula {\vec x} {\Gamma(\vec b)}. P'}} \Phi\\
            A & = \process {\procnorm {\sem \Q} \cup \multi {\sem {\evalFormula {\vec x} {\Gamma(\vec b)}.P'}}} \Phi
          \end{align*}
          We call \((c_i)_i\) the private fresh names introduced by the translation \(\sem {\evalFormula {\vec x} {\Gamma(\vec b)}.P'}\), stressing that none of them appears in \(\procnorm {\sem \Q}\), \(\sem {P'}\) nor \(\Phi\). Here \(u \in \{c_i\}_i\) and we can therefore write:
          \begin{align*}
            A' & = \process {\procnorm {\sem \Q} \cup \multi {Q'}} \Phi & &
            & \mbox{where \(\process {\multi {\sem {\evalFormula {\vec x} {\Gamma(\vec b)}.P'}}} \Phi \cstep \alpha \process {\multi {Q'}} \Phi\)}
          \end{align*}
          If the sequence \(\vec b\) contains a term which is not a message or does not reduce to a boolean, then one easily obtain that \(\process {\multi {Q'}} \Phi \silentpi \process \S \Phi\) where \(\S\) is silent in \(\process {\procnorm {\sem \Q}} \Phi\) (for that we assume, w.l.o.g. that each input of \(\Gamma\) goes through at least one gate).
          Hence since \(\multi {\evalFormula {\vec x} {\Gamma(\vec b)}. P'}\) is also silent in \(\process \Q \Phi\), we conclude with \(B' = \process \Q \Phi\).

          Otherwise assume that \(\msg {\vec b}\) and \(\vec b \norm \subseteq \B\). Then one easily obtain by induction on the number of gates of \(\Gamma\) that \(\process {\multi {Q'}} \Phi \silentpi \process {\multi {\sem {P'}\{\vec x \mapsto \Gamma(\vec b)\}}} \Phi\) and we conclude by choosing
          \(B' = \process {\Q \cup \multi {P'\{\vec x \mapsto \Gamma(\vec b)\}}} \Phi\).
        \end{itemize}

        \item \case[(rules \ref{rule:choice}, \ref{rule:choose-0}, \ref{rule:choose-1} or \ref{rule:valuate})] 5

        The arguments of each of these cases are analogous to priorly-met subcases. \qedhere
      \end{itemize}
    \end{itemize}
  \end{proof}

  In particular note that \(A \silent \procnorm A ~\bisim~ \procnorm{ \sem {\procnorm A}} = \procnorm{\sem A} \silentrev \sem A\) for all extended processes \(A\), hence Lemma \ref{lem:encodings correct}.

  \subsection{Reductions in the pure calculus} \label{app:lower-pure}

    We now formalise and prove the correctness of the reductions intuited in Section~\ref{sec:lower-pure}.

    \paragraph{For trace equivalence}
    We define the processes \(P(t)\), \(A\) and \(B\) as follows.
    \begin{align*}
      P(t) & \eqdef \InP c{\vec x}.~\guessBinary{\vec{y}}.~\evalFormula{v}{\varphi(\vec x ,\vec y)}.~\OutP c t\\
      A & \eqdef P(v)+P(\1)\\
      B & \eqdef P(\0)+P(\1)
    \end{align*}

    \begin{proposition}[Reduction for trace equivalence]
      \(A \TraceEq B\) \textit{iff} \(\forall \vec{x}.\exists \vec{y}. \varphi(\vec x,\vec y) = \0\).
    \end{proposition}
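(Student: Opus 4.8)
The plan is to reduce the statement to an analysis of the (finitely many, up to the attacker's input) maximal traces of $A$ and $B$, working in the extended calculus and transferring the result to the pure fragment only at the very end. First I would invoke Proposition~\ref{prop:encodings correct}: since $A$ and $B$ are built solely from sums, $\guessBinary$ and circuit evaluations, it suffices to establish $A \TraceEq B$ in the extended calculus, as this is equivalent to the pure equivalence $\sem A \TraceEq \sem B$ via $A \TraceEq \sem A$ and $B \TraceEq \sem B$. I would then characterise the shape of every maximal trace of $P(t)$: the attacker feeds a valuation $\vec x$ of public constants through $\InP c {\vec x}$, a sequence of $\tau$-transitions resolves the internal choice of $\vec y$ and the evaluation of $v = \varphi(\vec x, \vec y)$, and finally a single output $\OutP c {\ax_1}$ records $t\norm$ in the frame. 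By rule~\ref{rule:valuate}, this output is reachable only when $\vec x\norm \subseteq \B$; otherwise the circuit evaluation blocks and no output occurs, symmetrically in all branches.

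The key static observation is that the only non-trivial reachable frames are singletons $\{\ax_1 \mapsto b\}$ with $b \in \{\0,\1\}$, and that $\{\ax_1 \mapsto \0\} \not\StatEq \{\ax_1 \mapsto \1\}$ (the test $\ax_1 \eqs \0$ distinguishes them). Consequently a trace of one process is matched by a trace of the other exactly when they share the same input recipe $\vec x$ and produce the same output value $b$. I would then read off the reachable outputs after a boolean input $\vec x$: the process $A = P(v) + P(\1)$ can output $\1$ via $P(\1)$, and can output $\0$ iff $\exists \vec y.\, \varphi(\vec x, \vec y) = \0$ via $P(v)$ (exploiting the nondeterminism of $\guessBinary$), whereas $B = P(\0) + P(\1)$ can output both $\0$ and $\1$ unconditionally.

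From this the two inclusions follow. I would argue that $A \TraceIncl B$ holds unconditionally, since every output of $A$ is a boolean and $B$ realises every boolean output after the same input. For the converse, the only potentially unmatched traces of $B$ are those producing $\0$: such a trace is matched by $A$ iff $A$ can also output $\0$ after the same $\vec x$, i.e. iff $\exists \vec y.\, \varphi(\vec x, \vec y) = \0$. Hence $B \TraceIncl A$ holds iff $\forall \vec x.\, \exists \vec y.\, \varphi(\vec x, \vec y) = \0$, and combining the two inclusions with $\TraceEq\, =\, \TraceIncl \cap \Simuinv$-style symmetry yields the claimed equivalence.

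I expect the main obstacle to be the careful, symmetric treatment of the internal $\tau$-reductions introduced by the encodings: in particular checking that the circuit evaluation indeed computes $\varphi(\vec x, \vec y)$ and that non-boolean inputs block identically across the branches $P(v)$, $P(\0)$ and $P(\1)$, so that no spurious distinguishing trace is created. The simplification machinery of Appendix~\ref{app:strategies} (the relation $\silentstep$ and the normal forms $\procnorm{\cdot}$) can be used to discard these deterministic silent steps cleanly, reducing the analysis to the single observable output, although a direct bookkeeping of the reachable frames also suffices here.
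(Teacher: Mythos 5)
Your proposal is correct and takes essentially the same route as the paper: both arguments reduce to a case analysis of the (input, single output) traces of \(P(v)+P(\1)\) versus \(P(\0)+P(\1)\), use that non-boolean inputs block the circuit evaluation identically in every branch, and hinge on the fact that \(\{\ax_1 \mapsto \0\} \not\StatEq \{\ax_1 \mapsto \1\}\). Your explicit split into ``\(A \TraceIncl B\) unconditionally'' and ``\(B \TraceIncl A\) \textit{iff} \(\forall \vec x. \exists \vec y.\, \varphi(\vec x,\vec y)=\0\)'' is just a cleaner organisation of the paper's terser double implication by exhibited (un)matchable traces.
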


    \begin{proof}
      We do the proof by double implication.
      \begin{itemize}
        \item[\(\Rightarrow\)] Suppose that \(A \not \TraceEq B\). By a quick case analysis, we obtain \(B \cstep \epsilon \process {\multi {P(0)}} \emptyset \Cstep \tr \process \emptyset {\{\ax_1 \mapsto \0\}}\)
        where \(\tr = \InP c {\vec t}. \OutP c {\ax_1}\) for some messages \(\vec t\), such that for all reduction \(A \Cstep \tr \process \C \Phi\) the frames \({\{\ax_1 \mapsto \0\}}\) and \(\Phi\) are not statically equivalent. In particular, for all \(\vec y \subseteq \B\),
        by choosing \(\Phi = \{\ax_1 \mapsto \varphi(\vec t,\vec y)\}\) reachable from \(A\), we obtain \(\varphi(\vec t, \vec y) \neq \0\), hence the result.
    
        \item[\(\Leftarrow\)] Conversely, suppose that exists exists \(\vec x \subseteq \B\) such that \(\varphi(\vec x, \vec y) = \1\) for all \(\vec y \subseteq \B\). Then the trace \(B \Cstep \epsilon \process \emptyset {\{\ax_1 \mapsto \0\}}\) cannot be matched in \(A\) and therefore \(A \not \TraceEq B\). \qedhere
      \end{itemize}
    \end{proof}

    \paragraph{For simulations}
    We recall that a graphical depiction of the processes has been provided in Section~\ref{sec:lower-pure}.
    We fix a family of private channels \((c_P)_P\subseteq\Nall\) indexed by processes \(P\) which will be used to simulate instructions \(\call P\). We use a shortcut \(\OutP d {\args t p}\) for an indexed sequence of terms \((t_i)_i\) to denote the sequence of outputs:
    \[\OutP d {\args t p}\eqdef\OutP d{t_1}\ldots\OutP d{t_p}\]
    and a similar notation for sequences of inputs. Then the {\tt Goto} feature is implemented as follows, allowing for passing and receiving program states through parallel processes:
    \begin{align*}
      \call{A_i}\eqdef~ &\OutP{c_{A_i}}{\args x i,\args y i} &
      \call{B_i}\eqdef~ &\OutP{c_{B_i}}{\args x i,\args y i} \\
      \getEnv{A_i}.P\eqdef~ &\InP{c_{A_i}}{\args x i,\args y i} &
      \getEnv{B_i}.P\eqdef~ &\InP{c_{B_i}}{\args x i,\args y i} \\
    \end{align*}

    Formally the processes \(A_i\), \(B_i\) and \(D_i\) are defined below. We stress out that \(A\) and \(B\) are closed (as required) but that \(A_i\), \(B_i\) and \(D_i\) are not in general. Fixing a public channel \(c\in\sig_0\), we write:
    \begin{align*}
      \forall i\leqslant n,~A_i\eqdef~&\InP c {x_i}.~\evalFormula{x_i}{x_i}.~D_i\\
      \forall i\leqslant n,~B_i\eqdef~&\InP c {x_i}.~\evalFormula{x_i}{x_i}.~(D_i+(\InP c {y_i}.~\evalFormula{y_i}{y_i}.~\call{B_{i+1}}))\\
      A_{n+1}\eqdef~&\evalFormula v {\varphi(\vec x,\vec y)}.~\OutP c v\\
      B_{n+1}\eqdef~&\OutP c \0\\
      D_i \eqdef~& \guessBinary{z_i}.~\InP c {y_i}.~\evalFormula{r_i}{(y_i=z_i)}.\\
      & ((\IfP~r_i=1~\ThenP~\call{A_{i+1}})\\
      & \quad\mid(\IfP~r_i=0~\ThenP~\InP c {y_i}.~\evalFormula{y_i}{y_i}.~\call{B_{i+1}}))
    \end{align*}

    As in the reduction for trace equivalence, the \(\guessBinary\alpha\) simulates non-deterministic choice among \(\B\); the construction \(\evalFormula\alpha\alpha\), which may seem useless, encodes the test \(\alpha \in \B\). Finally, we define \(A\) and \(B\) by putting the auxiliary processes in parallel and connecting the {\tt Goto}'s to the {\tt getEnv}'s: 
    \begin{align*}
      A \eqdef~& A_1\mid C & B\eqdef~&B_1\mid C & C\eqdef~&\displaystyle\prod_{i=2}^{n+1}(\getEnv{A_i}.A_i)~\mid~\prod_{i=2}^{n+1}(\getEnv{B_i}.B_i)
    \end{align*}
    \(A\) and \(B\) can be computed in time \(\mathcal O(n^2+|\varphi|)\) in a straightforward way.
    The formal proof of the reduction is detailed below.

    \begin{proposition}[Correctness of the reduction]
      The following statements are equivalent:
      \begin{enumerate}
        \item \label{it:lower-pure-bisim} \(A \LabBis B\)
        \item \label{it:lower-pure-simil} \(A \Simi B\)
        \item \label{it:lower-pure-formula} \(\forall x_1 \exists y_1 \ldots \forall x_n \exists y_n.~\varphi(x_1,\ldots, x_n,y_1,\ldots, y_n) = \0\)
      \end{enumerate}
    \end{proposition}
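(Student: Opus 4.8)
The plan is to prove the cycle \ref{it:lower-pure-bisim}$\Rightarrow$\ref{it:lower-pure-simil}$\Rightarrow$\ref{it:lower-pure-formula}$\Rightarrow$\ref{it:lower-pure-bisim}. The implication \ref{it:lower-pure-bisim}$\Rightarrow$\ref{it:lower-pure-simil} is free from the inclusion $\LabBis \subset \Simi$ recalled after Definition~\ref{def:(bi)simulation}. The two substantial steps are therefore \ref{it:lower-pure-formula}$\Rightarrow$\ref{it:lower-pure-bisim} (building a bisimulation when the $\qbf$ sentence holds) and the contrapositive $\neg$\ref{it:lower-pure-formula}$\Rightarrow A \not\Simi B$ of \ref{it:lower-pure-simil}$\Rightarrow$\ref{it:lower-pure-formula}. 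In both, I would first discard the purely administrative $\tau$-steps introduced by the encodings of Section~\ref{sec:tools encoding} — the internal communications realising $\call{\cdot}$/$\getEnv{\cdot}$, the sums, and the circuit evaluations — by working up to the simplification relation $\silentpistep$, which is convergent (Corollary~\ref{cor:npte convergence}) and commutes with visible transitions (Corollary~\ref{cor:silent confluence}). Concretely, I would reason on $\silentpistep$-normal forms $\procnorm{\cdot}$ and use Proposition~\ref{prop:bisim up to} (bisimulation up to simplification) for the positive direction and Proposition~\ref{prop:eqobs labelled disim} (labelled attacks) together with Proposition~\ref{prop:concrete-witness} (witnesses of non-simulation) for the negative one. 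The point of this reduction is that, after normalisation, the only genuinely branching transitions are the adversarial inputs $\InP c {x_i},\InP c {y_i}$, the nondeterministic guesses $\guessBinary{z_i}$, and the choice $+$ distinguishing $B_i$ from $A_i$.

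For \ref{it:lower-pure-formula}$\Rightarrow$\ref{it:lower-pure-bisim}, assuming the defender wins the $\forall\exists$ game, I would fix a winning defender strategy $f$ (choosing $y_i$ as a function of $x_1,\ldots,x_i$ so that $\varphi(\vec x,\vec y)=\0$) and define a relation $\bisim$ pairing, for every partial play $(a_1,b_1,\ldots,a_{i-1},b_{i-1})$ consistent with $f$, the normal forms of the corresponding $A$-state and $B$-state. The invariant is that at each level the two sides are either both running $D_i$ (honest continuation) or have both been diverted to the $B_{i+1}$-track whose eventual output is the common value $\0$. I would check the transfer property of Proposition~\ref{prop:bisim up to}: to an attacker input $\InP c {a_i}$ the defender replies identically; to the attacker committing a value $b_i$ for $y_i$, the defender uses its $\guessBinary{z_i}$ move to set $z_i=f(\ldots)$, and the construction of $D_i$ guarantees that whether $b_i=z_i$ or not, the resulting pair stays in $\bisim$ (a mismatch $b_i\neq z_i$ diverts both sides symmetrically to a $B_{i+1}$-track). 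At the leaves, static equivalence holds because $A_{n+1}$ outputs $\varphi(\vec x,\vec y)=\0$, matching $B_{n+1}$'s output $\0$, so every reachable pair of frames is statically equivalent.

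For $\neg$\ref{it:lower-pure-formula}$\Rightarrow A\not\Simi B$, assuming $\exists x_1\forall y_1\ldots\exists x_n\forall y_n.\,\varphi(\vec x,\vec y)=\1$, I would fix a winning attacker strategy choosing the $x_i$'s and exhibit a witness of non-simulation (in the sense of Proposition~\ref{prop:concrete-witness}), i.e.\ a labelled attack in which the disprover always plays in the single process able to reach the distinguishing output $\1$. The attack follows the quantifier prefix level by level: the disprover plays $\exists x_i$ via $\InP c {a_i}$, and the key \emph{forcing lemma} states that the defender's $\guessBinary{z_i}$ realises $\forall y_i$ — any defender response either commits $z_i$ and is then compelled to plug that very value into $\varphi$, or is diverted to the $B$-track and can only output $\0$; since the chosen $x_i$ makes $\varphi=\1$ for all such $y_i$, the disprover reaches a leaf where the two frames map $\ax_1$ to the distinct constants $\1$ and $\0$, refuting static equivalence. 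Because $\LabBis\subset\Simi$, refuting the relevant simulation preorder refutes $\Simi$, closing $\neg$\ref{it:lower-pure-formula}$\Rightarrow\neg$\ref{it:lower-pure-simil}.

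The main obstacle is the faithful encoding of \emph{quantifier alternation} and the proof of the forcing lemma: one must verify that the defender truly controls $y_i$ through $z_i$ and that defection grants the defender a trivial win, so that the bisimulation game on $(A,B)$ is equivalent — not merely sound in one direction — to the $\qbf$ game, and that this equivalence \emph{composes across the $n$ levels} through the $\call{A_i}$/$\call{B_i}$ value-passing without the exponential blow-up that naive nesting of conditionals would incur. The delicate bookkeeping is to show that the administrative $\tau$-steps really are confluent in context (so that Propositions~\ref{prop:bisim up to} and~\ref{prop:eqobs labelled disim} apply) and that the pairing relation $\bisim$ is preserved under \emph{all} defender moves, including the ``defecting'' ones that switch tracks.
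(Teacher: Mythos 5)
Correct, and essentially the paper's own proof: you use the inclusion \(\LabBis \subset \Simi\) for (1)\(\Rightarrow\)(2), exhibit a bisimulation up to simplification and invoke Proposition~\ref{prop:bisim up to} for (3)\(\Rightarrow\)(1), and prove (2)\(\Rightarrow\)(3) contrapositively via a one-sided labelled attack, i.e.\ a witness of non-simulation (Propositions~\ref{prop:eqobs labelled disim} and~\ref{prop:concrete-witness}), which is exactly the decomposition of the proof in Appendix~\ref{app:lower-pure}, including the forcing role of \(\guessBinary{z_i}\) and the symmetric diversion to the \(B_{i+1}\)-track. The only immaterial differences are that your relation is indexed by a fixed winning defender strategy \(f\), whereas the paper's relation conditions each pair on the truth of the residual quantified subformula (equivalently, on the existence of some winning continuation), and that your phrase ``the single process able to reach the distinguishing output \(\1\)'' is slightly off---both \(A\) and \(B\) can reach \(A_{n+1}\) through \(D_i\)---though this does not affect the one-sidedness of the attack.
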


    \begin{proof}
      We recall that, again, our proof uses the advanced winning strategy framework presented in Section~\ref{app:strategies}.
      \begin{itemize}
        \item[\ref{it:lower-pure-bisim}\(\Rightarrow\)\ref{it:lower-pure-simil}]
        Follows from the inclusion \({\LabBis} \subset {\Simi}\).
        \item[\ref{it:lower-pure-formula}\(\Rightarrow\)\ref{it:lower-pure-bisim}]
        Suppose that \(\forall x_1\exists y_1 \ldots \forall x_n \exists y_n.~\varphi(x_1,\ldots, x_n,y_1,\ldots, y_n) = \1\). For convinience we use a notation for subprocess extraction: is \(\ell\) is a position of a process \(C\), then the subprocess of \(C\) at position \(\ell\) (which may not be closed) is denoted by \(C_{|\ell}\). Then writing in addition:
        \[
          C_i = \prod_{j = i + 1}^{n+1} (\getEnv {A_j}. A_j) \mid \prod_{j = i + 1}^{n+1} (\getEnv {B_j}.B_j)
        \]
        we define \(\bisim\) the smallest reflexive symmetric relation on closed extended processes such that:
        \begin{enumerate}[label=\emph{\arabic*.}]
          \item \({(A_i \mid C_i)}(\args x {i-1}, \args y {i - 1}) \bisim {(B_i \mid C_i)}(\args x {i-1}, \args y {i - 1})\)

          if
          \(\forall x_i \exists y_i \ldots \forall x_n \exists y_n, \varphi(\vec x, \vec y)\).

          \item \(({A_i}_{|\ell} \mid C_i)(\args x i, \args y {i - 1}) \bisim ({B_i}_{|\ell} \mid C_i)(\args x {i-1}, \args y {i - 1})\)

          if \(\ell \in \{\0,\0.\0\}\) and
          \(\exists y_i \ldots \forall x_n \exists y_n, \varphi(\vec x, \vec y)\).

          \item \(({B_i}_{|\0.\0.\1.\ell} \mid C_i)(\args x i, \args y i) \bisim ({D_i}_{|\0.\ell} \mid C_i)(\args x i, \args y i)\)

          if \(\ell \in \{\epsilon,\0\}\) and \(\forall x_{i+1} \exists y_{i+1} \ldots \forall x_n \exists y_n, \varphi(\vec x, \vec y)\).
        \end{enumerate}
        Then one can verify that \(\bisim\) is a bisimulation up to \(\silentstep\), and \(A \bisim B\) by hypothesis, hence the \(A \LabBis B\).
        \item[\ref{it:lower-pure-simil}\(\Rightarrow\)\ref{it:lower-pure-formula}]
        By contraposition, if we suppose that \(\exists x_1\forall y_1\ldots\exists x_n \forall y_n.~\varphi(x_1,\ldots, x_n,y_1,\ldots, y_n) = \1\), then one can define a labelled attack \(\disim\) (valid against simulation) such that \(B \disim A\).
        We omit the concrete construction as it is analogous to that of \(\bisim\) above; 
        all in all this gives the conclusion \(A \not \Simi B\). \qedhere
      \end{itemize}
    \end{proof}

  \subsection{Reduction in the full calculus}
  Before concretely proving the pending lemmas, let us introduce some notations and prove intermediary results about static equivalence.
  We fix a private nonce \(s\in\Nall\) and define the following frames given a protocol term \(t\):
  \begin{align*}
    \frameh t=~&\{\ax_1\mapsto\hfun(t,s),~\ax_2\mapsto\hfun(\1,s)\}\\
    \framen t=~&\{\ax_1\mapsto\hNode(t,s),~\ax_2\mapsto\hfun(\1,s)\}\\
    \frameb t=~&\{\ax_1\mapsto\hBool(t,s),~\ax_2\mapsto\hfun(\1,s)\}
  \end{align*}

  One shall observe that the whole deal with our reduction is about which instances of these three frames are reachable in which conditions. Hence first we prove a lemma investigating the static equivalence between some of them:

  \begin{lemma}
    \label{lem:nexp st-equiv}
    Let \(t\) be a message in normal form (\(t=t\norm\)). The following properties hold:
    \begin{itemize}
      \item[\((i)\)] \(\frameh t\StatEq\frameh\0\) \textit{iff} \(t\neq\1\)
      \item[\((ii)\)] \(\framen t\StatEq\frameh\0\) \textit{iff} \(\rootf(t)\neq\Node\)
      \item[\((iii)\)] \(\frameb t\StatEq\frameh\0\) \textit{iff} \(t\notin\B\)
    \end{itemize}
  \end{lemma}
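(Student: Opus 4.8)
The plan is to prove each equivalence by treating its two directions separately, and to handle the three cases $(i)$--$(iii)$ uniformly. For the ``only if'' (non\-equivalence) directions it suffices to exhibit a distinguishing recipe in the sense of Definition~\ref{def:static-equivalence}. In case $(i)$, if $t = \1$ then $\frameh\1$ satisfies the equality test $\ax_1 \eqs \ax_2$ whereas $\frameh\0$ does not, since $\hfun(\0,s)\norm \neq \hfun(\1,s)\norm$. In case $(ii)$, if $t = \Node(t_1,t_2)$ then $\invN(\ax_1)$ yields a message in $\framen t$ but fails in $\frameh\0$. In case $(iii)$, if $t \in \B$ then $\invB(\ax_1)$ yields a message in $\frameb t$ but fails in $\frameh\0$. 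Each of these witnesses a failure of static equivalence.

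For the ``if'' directions I would argue uniformly. Write $\Phi = \frameh\0$ and let $\Psi$ denote $\frameh t$, $\framen t$, or $\frameb t$ under the respective hypothesis. The key observation is that, under each hypothesis, both images of $\Psi$ are constructor terms in normal form whose roots are hash symbols and which are \emph{inert} under every destructor: $\pi(\cdot,\cdot)$ demands a $\Node$-rooted first argument, $\invN$ a term of shape $\hNode(\Node(\cdot,\cdot),\cdot)$, and $\invB$ a term $\hBool(b,\cdot)$ with $b\in\B$; none of these patterns is matched by $\hfun(\cdot,s)$, by $\hNode(t,s)$ when $\rootf(t)\neq\Node$, nor by $\hBool(t,s)$ when $t\notin\B$. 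The same inertness holds for the images of $\Phi$. Moreover $\Phi(\ax_1)\neq\Phi(\ax_2)$, and likewise $\Psi(\ax_1)\neq\Psi(\ax_2)$ under the hypotheses (since $t\neq\1$ in case $(i)$, and the roots differ in cases $(ii)$ and $(iii)$).

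On this basis I would establish, by induction on the structure of a recipe $\xi$, the invariant: either $\neg\msg(\xi\Phi)$ and $\neg\msg(\xi\Psi)$ both hold, or there is a context $\hat\xi$ built from $\sigc\cup\sig_0$ and two holes such that $\xi\Phi\norm = \hat\xi[\Phi(\ax_1),\Phi(\ax_2)]$ and $\xi\Psi\norm = \hat\xi[\Psi(\ax_1),\Psi(\ax_2)]$, with one and the same $\hat\xi$ for both frames. The constructor step is immediate, since constructors create no redexes. The destructor step $\xi = g(\xi_1,\dots)$ is the crux: by the inertness property a hole never supplies the redex pattern for $g\in\sigd$ in either frame, so whether the outermost $g$ reduces is determined solely by the roots of the $\hat\xi_j$, i.e. by attacker-built structure, and is therefore identical over $\Phi$ and $\Psi$; when $g$ reduces the result is again captured by a common context, and when it is stuck $\msg$ fails in both frames.

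Given this invariant, both clauses of Definition~\ref{def:static-equivalence} follow. Clause~1 is exactly its first alternative. For clause~2, since the private name $s$ occurs only inside the pairwise-distinct atoms $\Phi(\ax_i)$ (resp.\ $\Psi(\ax_i)$) and never in a context, a short sub\-induction on contexts shows that $\hat\xi[\Phi(\ax_1),\Phi(\ax_2)] = \hat\zeta[\Phi(\ax_1),\Phi(\ax_2)]$ iff $\hat\xi = \hat\zeta$ iff $\hat\xi[\Psi(\ax_1),\Psi(\ax_2)] = \hat\zeta[\Psi(\ax_1),\Psi(\ax_2)]$, using only the distinctness of the two atoms in each frame. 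I expect the main obstacle to be the bookkeeping in the destructor case of the recipe induction, which is precisely where the three syntactic hypotheses on $t$ are invoked to exclude spurious reductions of $\invN$, $\invB$, and $\pi$ on the frame images.
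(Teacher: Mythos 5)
Your proposal is correct and follows essentially the same route as the paper: identical distinguishing recipes ($\ax_1 \eqs \ax_2$, $\invN(\ax_1)$, $\invB(\ax_1)$) for the three ``only if'' directions, and for the ``if'' directions your inertness-of-frame-images observation and common-context invariant are a repackaging of the paper's critical-pair factorization $(\xi\Phi)\norm = (\xi\norm)\Phi$ followed by its induction on pairs of normalized recipes. Your final sub-induction on contexts, using that the private name $s$ sits directly under the root of each pairwise-distinct atom and is not producible by any recipe, is exactly the paper's argument in the axiom-versus-constructor case.
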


  \begin{proof}
    We prove the three equivalences together by double implication.
    \begin{itemize}
      \item[(\(\Rightarrow\))] We prove the three properties by contraposition. We naturally proceed by exhibiting ground recipes \(\xi,\zeta\) witnessing the non-static-equivalence goal:

      \item[\sbt] \((i)\) We assume \(t=\1\) and we choose \(\xi=\ax_1\) and \(\zeta=\ax_2\): the conclusion follows from \(\xi\frameh t\norm=\hfun(\1,s)=\zeta\frameh t\norm\) and \(\xi\frameh\0\norm=\hfun(\0,s)\neq\hfun(\1,s)=\zeta\frameh\0\norm\).
      \item[\sbt] \((ii)\) We assume \(t=\Node(t_1,t_2)\) and we choose \(\xi = \TestNode (\ax_1)\): the conclusion follows from \(\msg{\xi\framen t}\) and \(\neg\msg{\xi\framen\0}\).
      \item[\sbt] \((iii)\) We assume \(t\in\{\0,\1\}\) and we choose \(\xi=\TestBool (\ax_1)\): the conclusion follows from \(\msg{\xi\frameb t}\) and \(\neg\msg{\xi\frameb\0}\).

      \item[(\(\Leftarrow\))] The key point is an observation about rewriting critical pairs (not specific to \(\R\)):
      \begin{itemize}
        \item[if:] \(u\) is a term;
        \item[if:] \(\sigma\) is a substitution such that for any \(m\in\im(\sigma)\) there exists no rule \(\ell\rightarrow r\) of \(\R\) such that \(m\) is unifiable with a subterm \(u\in\subterms\ell-\X\);
        \item[then:] for any rewriting sequence \(u\sigma\rightarrow^\star s\), it holds that \(s=u'\sigma\) for some \(u\rightarrow^\star u'\) (where \(u'\) is in normal form \textit{iff} \(s\) is in normal form. In particular \((u\sigma)\norm=(u\norm)\sigma\)).
      \end{itemize}

      One shall note that any frame \(\Phi\) investigated by the lemma (\(\frameh t\) when \(t\neq\1\),
      \(\framen t\) when \(\rootf t\neq\Node\) and \(\frameb t\) when \(t\notin\B\)) verifies the second hypothesis.
      As a consequence, if \(\xi\) is a ground recipe such that \(\axioms\xi\subseteq\dom(\Phi)\), then \(\msg{\xi\Phi}\) \textit{iff} \(\msg{\xi\frameh\0}\)
      (\textit{iff} \(\forall \zeta\in\subterms\xi,~\zeta\norm\in\termset(\sigc,\sig_0\cup\AX)\)). This settles the first item of the definition of static equivalence. As for the second item, let us fix two ground recipes \(\xi,\zeta\) such that \(\msg{\xi\Phi}\) and \(\msg{\xi\frameh\0}\).
      Let us then prove that \(\xi\Phi\norm= \zeta\Phi\norm\) \textit{iff} \(\xi\frameh\0\norm= \zeta\frameh\0\norm\) by induction on \((\xi\norm,\zeta\norm)\). Note that we will intensively (and implicitly) use the fact that \(\xi\Phi\norm=(\xi\norm)\Phi\) (same for \(\zeta\) and/or \(\frameh\0\)).

      \item[\sbt] \case 1 \(\xi\norm=f(\xi_1,\ldots,\xi_n)\) and \(\zeta\norm=f(\zeta_1,\ldots,\zeta_p)\) with \(f,g\in\sigc\).

      If \(f=g\) then the result follows from induction hypothesis and if \(f\neq g\) the conclusion is immediate (\(\xi\Phi\norm\neq\zeta\Phi\norm\) and \(\xi\frameh\0\norm\neq\zeta\frameh\0\norm\)).

      \item[\sbt] \case 2 \(\xi\norm\in\AX\) and \(\zeta\norm\in\AX\).

      If \(\xi=\zeta\) the conclusion is immediate and so is it when \(\xi\neq\zeta\) since \(\Phi(\ax_1)\neq\Phi(\ax_2)\) and \(\frameh\0(\ax_1)\neq\frameh\0(\ax_2)\).

      \item[\sbt] \case 3 \(\xi\norm\in\AX\) and \(\zeta\norm=f(\zeta_1,\ldots,\zeta_p)\) with \(f\in\sigc\).

      We argue that \(\xi\Phi\norm\neq\zeta\Phi\norm\) and \(\xi\frameh\0\norm\neq\zeta\frameh\0\norm\). Either of the two equalities being verified would indeed imply that \(s\in\{\zeta_2\Phi,\zeta_2\frameh\0\}\): this is impossible as \(\zeta_2\) is a ground recipe in normal form and \(s\in\Nall\) (one easily shows that \(\zeta_2\Phi\) and \(\zeta_2\frameh\0\) are either public names, constants, or termes of height 1 or more).

      As \(\msg{\xi\Phi}\) and \(\msg{\xi\frameh\0}\), the preliminary observation justifies that \(\xi\norm\) and \(\zeta\norm\) function symbols are all constructor. In particular no other cases than the three above need to be considered, which concludes the proof.\qedhere
    \end{itemize}
  \end{proof}

  With this lemma in mind, the proofs of Propositions \ref{prop:correctness checktree} and \ref{prop:correctness checksat} become quite straightforward:

  \propCorrectnessChecktree*

  \begin{proof}
    Let \(x\) be a message which is not a complete binary tree of height \(n\) whose leaves are booleans.

    \begin{itemize}[leftmargin=*]
      \item[] \case 1 there exists a position \(\vec\pi\in\B^\star\) such that \(|\vec \pi|=i\in\eint 0 {n-1}\) and \(\rootf{\recpos x{\vec\pi}}\neq\Node\).

      The result follows from Lemma \ref{lem:nexp st-equiv} after writing the following sequence of transitions:
      \begin{align*}
        \CheckTree x & \Cstep\epsilon \guessBinary{p_1,\ldots,p_i}.~\OutP c {\hNode(\recpos x{p_1\cdots p_i},s)}.~\OutP c {\hfun(\1,s)}\\
        & \Cstep\epsilon \OutP c {\hNode(\recpos x{\vec \pi},s)}.~\OutP c {\hfun(\1,s)}
      \end{align*}

      \item[] \case 2 there exists a position \(\vec\pi\in\B^n\) such that \(\recpos x{\vec\pi}\notin\B\).

      The result follows from Lemma \ref{lem:nexp st-equiv} after writing the following sequence of transitions:
      \begin{align*}
        \CheckTree x & \Cstep\epsilon \guessBinary{p_1,\ldots,p_n}.~\OutP c {\hNode(\recpos x{p_1\cdots p_n},s)}.~\OutP c {\hfun(\1,s)}\\
        & \Cstep\epsilon \OutP c {\hBool(\recpos x{\vec \pi},s)}.~\OutP c {\hfun(\1,s)} \qedhere
      \end{align*}
    \end{itemize}
  \end{proof}

  \propCorrectnessChacksat*

  \begin{proof}
    Let \(x\) be a complete binary tree of height \(n\) whose leaves are booleans, that is to say, a message such that \(\recpos x{\vec p}\in\B\) for all \(\vec p\in\B^n\). Naming \(x_0,\ldots,x_{2^n-1}\) the variables of \(\sem\Gamma_\varphi\) in this order, \(\val_x\) refers to the valuation mapping \(x_i\) to \(\recpos x {\vec p}\) where \(\vec p\) is the binary representation of \(i\) (of size \(n\) with padding head 0's).

    Let us now assume that \(\val_x\) does not satisfy \(\sem\Gamma_\varphi\). In particular there exists a clause of \(\sem\Gamma_\varphi\), say the \(i\)\textsuperscript{th} clause with \(i=\sum_{k=1}^m\pi_k2^{k-1}\), which is falsified by \(\val_x\). In particular, if the three variable of this clause are called \(x_1\), \(x_2\), \(x_3\) with respective negation bits \(b_1\), \(b_2\), \(b_3\), the following formula is evaluated to false (i.e. \(\0\)):
    \[\bigvee_{i=1}^3(~b_j=\val_x(x_j)~)\]
    Therefore, by choosing the sequence \(\pi_1,\ldots,\pi_m\) to instanciate the initial \(\guessBinary{p_1,\ldots,p_m}\) of \(\CheckSat x\), we obtain the following sequence of transitions, which concludes the proof:
    \begin{align*}
      \CheckSat x & \Cstep\epsilon \OutP c {\hfun(\0,s)}.~\OutP c {\hfun(\1,s)} \qedhere
    \end{align*}
  \end{proof}


  We finally gathered all the ingredients needed to prove the main lemma:

  \propCorrectnessCheckall*

  \begin{proof}
    We prove the result by double implication.
    \begin{itemize}[leftmargin=*]
      \item[\(\Rightarrow\)]
      Let us consider a valuation satisfying \(\varphi\) and let \(t\) be a message such that \(\val_t\) is equal to this valuation. 
      Then since the trace \(B \Cstep {\InP c t. \OutP c {\ax_1}. \OutP c {\ax_2}} \process \emptyset {\frameh \0}\) cannot be matched in \(A\)
      by Lemma \ref{lem:nexp st-equiv}, we obtain \(B \not \TraceIncl A\). Hence, since trace equivalence is the coarsest of the considered equivalence relations, we obtain the desired result.
      \item[\(\Leftarrow\)]
      By contraposition, let us suppose that \(\varphi\) is unsatisfiable, and let us prove that \(A \LabBis B\) (which implies all other equivalences).
      Let us consider \(\bisim\) the smallest reflexive symmetric relation on extended processes such that:
      \begin{enumerate}[label=\emph{\arabic*.}]
        \item \(A \bisim B\)
        \item \(A'(x) \bisim B'(x)\) for all message \(x\), where \(A = \InP c x. A'(x)\) and \(B = \InP c x. B'(x)\)
        \item \(P_i \bisim P_i'\), where:
        \begin{align*}
          P_i & = \process {\multi {\OutP c {t_{i+1}} \ldots \OutP c {t_p}}} {\{\ax_1 \mapsto t_1, \ldots, \ax_i \mapsto t_i\}}\\
          P_i' & = \process {\multi {\OutP c {t_{i+1}'} \ldots \OutP c {t_p'}}} {\{\ax_1 \mapsto t_1', \ldots, \ax_i \mapsto t_i'\}}
        \end{align*}
        and where $\{\ax_1 \mapsto t_1, \ldots, \ax_p \mapsto t_p\}
        \StatEq
        \{\ax_1 \mapsto t_1', \ldots, \ax_p \mapsto t_p'\}$
      \end{enumerate}
      It easily follows from Lemma \ref{lem:nexp st-equiv},\ref{prop:correctness checktree},\ref{prop:correctness checksat} that \(\bisim\) is a bisimulation up to \(\silentstep\), hence the conclusion. \qedhere
    \end{itemize}
  \end{proof}

\end{document}